\theoremstyle{plain}
\newtheorem{theorem}{Theorem}[section]
\newtheorem{lemma}[theorem]{Lemma}
\newtheorem{fact}[theorem]{Fact}
\newtheorem{observation}[theorem]{Observation}
\newtheorem{prop}[theorem]{Proposition}
\newtheorem{cor}[theorem]{Corollary}
\theoremstyle{definition}
\newtheorem{claim}{Claim}
\newtheorem{definition}[theorem]{Definition}
\newtheorem{remark}[theorem]{Remark}
\newcommand{\highlight}[1]{\medskip\noindent\textbf{#1}}
\newenvironment{proofof}[1]{\begin{proof}[Proof of #1]}{\end{proof}}
\newenvironment{reminder}[1]{\bigskip
	\noindent {\bf Reminder of #1.  }\em}{\smallskip}
\def\ShowAuthNotes{1}
\newcommand{\authnote}[2]{\ \\ \textcolor{red}{\parbox{0.9\linewidth}{[{\footnotesize {\bf #1:} { {#2}}}]}}\newline}
\newcommand{\authnote}[2]{}
\newcommand{\cnote}[1]{\authnote{Ce}{#1}}
\newcommand{\wnote}[1]{\authnote{Hongxun}{#1}}
\newcommand{\lnote}[1]{\authnote{Lijie}{#1}}
\newcommand{\rnote}[1]{\authnote{Ryan}{#1}}
\renewcommand{\epsilon}{\varepsilon}
\newcommand{\eps}{\varepsilon}
\renewcommand{\Pr}{\operatorname*{\mathbf{Pr}}}
\newcommand{\E}{\operatorname*{\mathbf{E}}}
\newcommand{\poly}{\operatorname{\mathrm{poly}}}
\newcommand{\polylog}{\poly\log}
\newcommand{\N}{\mathbbm{N}}
\newcommand{\caH}{\mathcal{H}}
\newcommand{\vc}{\vec{c}}
\newcommand{\calX}{\mathcal{X}}
\newcommand{\calY}{\mathcal{Y}}
\newcommand{\calZ}{\mathcal{Z}}
\newcommand{\getsR}{\in_{\sf R}}
\newcommand{\calB}{\mathcal{B}}
\newcommand{\calA}{\mathcal{A}}
\newcommand{\calG}{\mathcal{G}}
\newcommand{\calF}{\mathcal{F}}
\newcommand{\calK}{\mathcal{K}}
\newcommand{\Kshort}{\calK^{\sf short}}
\newcommand{\elong}{\event_{\sf long}}
\newcommand{\bh}{\bm{h}}
\newcommand{\bw}{\bm{w}}
\newcommand{\bs}{\bm{s}}
\newcommand{\level}{\mathsf{level}}
\newcommand{\vk}{\vec{k}}
\newcommand{\vb}{\vec{b}}
\newcommand{\event}{\mathcal{E}}
\newcommand{\bT}{\bm{T}}
\newcommand{\br}{\bm{r}}
\newcommand{\bg}{\bm{g}}
\newcommand{\parent}{\mathsf{par}}
\newcommand{\walk}{\mathsf{walk}}
\newcommand{\nex}{\mathsf{next}}
\newcommand{\x}{\mathsf{x}}
\renewcommand{\a}{\mathsf{a}}
\newcommand{\extwalk}{\mathsf{Walk}}
\newcommand{\rig}{\mathsf{right}}
\newcommand{\vidx}{\mu}
\DeclarePairedDelimiter{\tvdbasic}{\lVert}{\rVert}
\newcommand{\@tvdstar}[2]{\tvdbasic*{#1 - #2}_{\mathrm{TV}}}
\newcommand{\@tvdnostar}[3][]{\tvdbasic[#1]{#2 - #3}_{\mathrm{TV}}}
\newcommand{\tvd}{\@ifstar\@tvdstar\@tvdnostar}
\newtcolorbox{construction}[2][]
{
	breakable,
	colframe = gray!50,
	colback  = gray!10,
	coltitle = gray!10!black,
	before skip = 10pt,
	after skip = 10pt,
	title    = \textbf{#2},
	#1,
}
\newtcolorbox{graphview}[2][]
{
	breakable,
	colframe = black!30,
	colback  = black!0,
	coltitle = gray!10!black,
	before skip = 10pt,
	after skip = 10pt,
	title    = \textbf{#2},
	#1,
}
\newtcolorbox{notation}[2][]
{
	breakable,
	colframe = black!30,
	colback  = black!0,
	coltitle = gray!10!black,
	before skip = 10pt,
	after skip = 10pt,
	title    = \textbf{#2},
	#1,
}
\newcommand{\ie}{\textit{i}.\textit{e}.\@\xspace}
\newcommand{\eg}{\textit{e}.\textit{g}.\@\xspace}
\newcommand{\pa}{\operatorname*{\mathsf{par}}}
\newcommand{\vK}{\vec{K}}
\newcommand{\vidxb}[1]{\vidx\left[#1\right]}
\newcommand{\vstart}{\vspace{-1mm}}
\newcommand{\vadj}{\vspace{-2mm}}
\newcommand{\vmid}{\vspace{-8mm}}
\newcommand{\twopropsshort}[4]{
	\vstart
	\begin{flalign}\label{#2}
	\bullet~~\text{#1}&&
	\end{flalign}
	\vmid
	\begin{flalign}\label{#4}
	\bullet~~\text{#3}&&
	\end{flalign}
}
\newcommand{\twoprops}[4]{
	\vstart
	\begin{flalign}\label{#2}
	\qquad\bullet\qquad\text{#1}&&
	\end{flalign}
	\vmid
	\begin{flalign}\label{#4}
	\qquad\bullet\qquad\text{#3}&&
	\end{flalign}
}
\newcommand{\threeprops}[6]{
	\vstart
	\begin{flalign}\label{#2}
	\qquad\bullet\qquad\text{#1}&&
	\end{flalign}
	\vmid
	\begin{flalign}\label{#4}
	\qquad\bullet\qquad\text{#3}&&
	\end{flalign}
	\vmid
	\begin{flalign}\label{#6}
	\qquad\bullet\qquad\text{#5}&&
	\end{flalign}
}
\newcommand{\fourprops}[8]{
	\vstart
	\begin{flalign}\label{#2}
	\qquad\bullet\qquad\text{#1}&&
	\end{flalign}
	\vmid
	\begin{flalign}\label{#4}
	\qquad\bullet\qquad\text{#3}&&
	\end{flalign}
	\vmid
	\begin{flalign}\label{#6}
	\qquad\bullet\qquad\text{#5}&&
	\end{flalign}
	\vmid
	\begin{flalign}\label{#8}
	\qquad\bullet\qquad\text{#7}&&
	\end{flalign}
}
\begin{document}

\author{Lijie Chen\\MIT\\ \texttt{lijieche@mit.edu} \and Ce Jin\\MIT \\ \texttt{cejin@mit.edu}\and R. Ryan Williams\\MIT \\ \texttt{rrw@mit.edu}\and Hongxun Wu \\ Tsinghua University \\ \texttt{wuhx18@mails.tsinghua.edu.cn} }

\title{Truly Low-Space Element Distinctness and Subset Sum\\ via Pseudorandom Hash Functions\thanks{Supported by NSF CCF-1909429 and NSF CCF-2127597. Lijie Chen is also supported by an IBM Fellowship.}}
	\setcounter{page}{0} \clearpage
	\maketitle
	\thispagestyle{empty}
	\begin{abstract}
	We consider low-space algorithms for the classic {\sc Element Distinctness} problem: given an array of $n$ input integers with $O(\log n)$ bit-length, decide whether or not all elements are pairwise distinct. 
	Beame, Clifford, and Machmouchi [FOCS 2013] gave an $\tilde O(n^{1.5})$-time randomized algorithm for {\sc Element Distinctness} using only $O(\log n)$ bits of working space. However, their algorithm assumes a random oracle (in particular, read-only random access to polynomially many random bits), and it was asked as an open question whether this assumption can be removed.

In this paper, we positively answer this question by giving an $\tilde O(n^{1.5})$-time randomized algorithm using $O(\log ^3 n\log \log n)$ bits of space, \emph{with one-way access to random bits}. As a corollary, we also obtain a $\poly(n)$-space $O^*(2^{0.86n})$-time randomized algorithm for the \emph{Subset Sum} problem, removing the random oracles required in the algorithm of Bansal, Garg, Nederlof, and Vyas [STOC 2017].    

The main technique underlying our results is a pseudorandom hash family based on iterative restrictions, which can fool the cycle-finding procedure in the algorithms of Beame et al.\ and Bansal et al.
\end{abstract} \newpage

\maketitle

\renewcommand{\tilde}{\widetilde}

\setcounter{page}{0} \clearpage
\thispagestyle{empty}
\setcounter{tocdepth}{2}
\tableofcontents
\newpage

\section{Introduction}
What problems can be solved simultaneously in low time and low space? When we restrict the space usage for solving a problem, how does this affect the possible running time of algorithms? The area of time-space tradeoffs has studied such questions for decades, beginning with Cobham~\cite{conf/focs/Cobham66}. 
A central problem studied in time-space tradeoffs is {\sc Element Distinctness}:
\begin{quote}
    {\sc Element Distinctness:} Given an array of $n$ positive integers $a_1,a_2,\dots,a_n$ with $a_i \leq \poly(n)$ for all $i$, decide whether all $a_i$'s are distinct.
\end{quote}
The problem is extremely basic and useful: thinking of the array as describing a function from $[n]$ to $[\poly(n)]$, we are asking if the function is injective. The obvious algorithm that checks all pairs of elements takes $O(n^2)$ time and uses $O(\log n)$ bits of workspace. If we allow $\tilde{O}(n)$ bits of workspace, {\sc Element Distinctness} can be solved in near-linear time by sorting the input array. 
Applying low-space sorting algorithms directly~\cite{DBLP:journals/tcs/MunroP80,DBLP:conf/focs/PagterR98}, one can interpolate between these two algorithms and solve {\sc Element Distinctness} in time $T(n)$ and space $S(n)$ for all $T(n),S(n)$ such that $T(n) \cdot S(n) \leq \tilde{O}(n^2)$. 
For \emph{comparison-based} algorithms, in which the only operation on elements allowed are pairwise comparisons, this time-space tradeoff was shown to be near-optimal in the 1980s~\cite{DBLP:journals/siamcomp/BorodinFHUW87,DBLP:conf/focs/Yao88}.

In 2013, Beame, Clifford, and Machmouchi~\cite{beame2013element} surprisingly bypassed this longstanding lower bound, by giving a non-comparison-based algorithm for {\sc Element Distinctness} with the time-space tradeoff $T(n)\leq \tilde O(n^{3/2}/S(n)^{1/2})$. In particular, their algorithm can run in $\tilde O(n^{1.5})$ time using only $O(\log n)$ bits of space. For brevity, we call this the \emph{BCM algorithm}. A major disadvantage of the BCM algorithm is that it requires a \emph{random oracle}: read-only random access to polynomially many uniform random bits (which do not count towards the space complexity). In the BCM algorithm, these random bits are used to specify the outgoing edges of a random 1-out digraph, on which Floyd's cycle-finding algorithm~\cite{donald1969art} is performed to look for a pair of equal elements. Due to complicated dependencies on the paths in this random digraph, it looks difficult to reduce the number of random bits using pseudorandomness. It was asked as an open question \cite{beame2013element,twonikhils} whether the BCM algorithm can be modified to work with only ``one-way access'' to random bits, where we may toss up to $O(t)$ coins in time $t$, but cannot randomly access arbitrary coins tossed in the past. In particular, \cite{beame2013element} stated it ``seems plausible'' that the random oracle in the BCM algorithm could be replaced by some family of $\polylog(n)$-wise independent hash functions in the analysis.

\subsection{Our Results}

Our main result in this paper proves that one-way access to randomness is sufficient for implementing the BCM algorithm. 
We design a pseudorandom hash family with $O(\log^3 n\log \log n)$-bit seed length based on iterative restrictions of $O(\log n\log \log n)$-wise independent generators, and show that the analysis of the 
BCM algorithm still works when the random oracle is replaced by our pseudorandom generator. In fact, our proofs use a careful coupling-based analysis of an infinite tree generated from our pseudorandom generator. Hence we have the following result.

\begin{theorem}
\label{thm:main-ed}
{\sc Element Distinctness} can be decided by a Monte Carlo algorithm in $\tilde{O}(n^{1.5})$ time, with $O(\log^3 n\log \log n)$ bits of workspace and no random oracle. Moreover, when there is a colliding pair, the algorithm reports one.
\end{theorem}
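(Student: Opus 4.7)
The plan is to keep the BCM algorithmic skeleton unchanged and replace its only use of the random oracle---the hash function $g$ defining the rho-style walk $f(v) = g(a_v)$ on which Floyd's cycle-finding is run---with a pseudorandom hash family admissible under one-way access to the seed. So first I would abstract out exactly what property of $g$ the BCM analysis uses: for a typical start $v_0$, the walk $v_0, f(v_0), f^2(v_0), \ldots$ has rho-tail plus cycle of length $\tilde O(\sqrt{n})$, and within that many steps the walk discovers a collision $f(u)=f(v)$ which, with good probability, is a \emph{real} collision $a_u = a_v$ in the input. As long as these two statistics are preserved up to the first $T=\tilde O(\sqrt{n})$ walk steps with inverse-polynomial error, BCM's correctness and time guarantees transfer verbatim.

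For the PRG, I would use iterative restrictions of $k$-wise independent hash generators with $k = O(\log n\log\log n)$. The seed is structured into $O(\log n)$ ``layers,'' each of size $O(k\log n) = O(\log^2 n\log\log n)$ bits, where layer $i$ specifies a fresh $k$-wise independent hash used to extend the walk by the next chunk of steps; because the algorithm consumes layer $i$ strictly after layer $i-1$, one-way access to the seed is enough to generate each layer on demand and discard it. The total seed length is $O(\log^3 n\log\log n)$, matching the stated bound; the workspace at any point only has to hold the Floyd pointers and the current layer's hash, so $O(\log^3 n\log\log n)$ space really does suffice during execution. When a colliding pair is reported, the algorithm re-runs the same walk prefix using the stored seed to recover $u,v$ with $a_u=a_v$.

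The heart of the argument is showing that this iterated-restriction hash fools the walk. The main obstacle is that the walk is highly adaptive: the set of inputs at which $g$ is queried depends on previous outputs, so the textbook ``$k$-wise independence fools any test reading at most $k$ inputs'' bound cannot be invoked over the full length $T\gg k$ at once. I would couple the pseudorandom walk with a truly random one layer by layer, viewing the set of possible walks from a random start as an infinite tree and bounding the total-variation distance level by level. Within a single layer at most $k$ new inputs to $g$ are queried, so conditioned on the walk state entering the layer, $k$-wise independence of the layer's hash makes its outputs statistically indistinguishable from uniform; a union bound over the $O(\log n)$ layers absorbs the accumulated error, while the rare event that a single layer tries to make more than $k$ queries is truncated and charged via the tail bounds of the truly random walk. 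The delicate step, and the reason one wants the coupling/tree viewpoint rather than a flat hybrid argument, is that the conditioning information carried across layer boundaries has to be short enough that $k$-wise independence inside each layer remains meaningful, which is precisely what dictates the extra $\log\log n$ factor in the choice of $k$.
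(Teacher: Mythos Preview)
Your proposal has a genuine gap at the level of the construction itself. You describe the $O(\log n)$ layers as temporal: ``layer $i$ specifies a fresh $k$-wise independent hash used to extend the walk by the next chunk of steps,'' and you plan to generate and then discard each layer as the walk progresses. But Floyd's cycle finding requires a \emph{fixed} function $f$: the whole point is that iterating a single $f$ eventually produces a rho, and the two-speed pointers detect it; Phase~2 then restarts from the beginning and re-evaluates $f$ at earlier positions. If $f$ varies with the step index (or if you have discarded earlier layers), there is no rho to find, and you cannot replay the prefix. In the paper's construction the layers are not temporal at all: the $\ell$ pairs $(g_i,r_i)$ jointly define \emph{one} function $h$, with $h(x)=r_j(x)$ for the smallest $j$ with $g_j(x)=1$. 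Evaluating $h$ at any single $x$ may touch every layer, so the entire $O(\log^3 n\log\log n)$-bit seed is kept in workspace for the whole run; nothing is discarded.

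Once the construction is fixed this way, the analysis you sketch---a layer-by-layer total-variation hybrid where ``within a single layer at most $k$ new inputs are queried''---no longer lines up with what is happening, and this is exactly where the paper's real work lies. The walk queries $h$ at $\Theta(\sqrt n)$ adaptively chosen points, and for a generic step it is not true that only $k$ inputs per layer have been touched. The paper sidesteps a TV argument entirely: it couples the actual walk with an \emph{extended walk} that never stops, organizes the steps into a \emph{dependency tree} where the level of a node is the layer that determined its outgoing edge, and then proves the two quantitative facts the BCM shell needs (the expected reachable-set size $\E|f^*(s)|=O(n/\sqrt{F_2(a)})$ and the two-vertex hitting bound $\Pr[u,v\in f^*(s)]\ge\Omega(1/F_2(a))$) by an $E_{\mathrm{total}}-E_{\mathrm{bad}}$ counting argument over tree indices. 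The reason $k$-wise independence with $k=O(\log n\log\log n)$ suffices is that each root-to-node path in this tree has at most $\tau/4$ nodes \emph{per level} with high probability---a structural fact about the tree, not about time chunks of the walk---and the $E_{\mathrm{bad}}$ term handles the over- and under-counting caused by collisions between paths. Your proposal does not contain these ingredients, and the ``conditioning information carried across layer boundaries has to be short'' intuition, while pointing in the right direction, is not a substitute for them.
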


A closely related problem is the {\sc List Disjointness} problem (which is equivalent to the {\sc 2-Sum} problem).
\begin{quote}
{\sc List Disjointness:} Given two integer arrays $(a_1,a_2,\dots,a_n)$ and $(b_1,b_2,\dots,b_n)$ with entries in $[\poly(n)]$, decide whether there are $i,j \in [n]$ such that $a_i=b_j$.
\end{quote}
This problem is harder than {\sc Element Distinctness}, since the latter problem can be easily reduced to the former with only $O(\log n)$-factor overhead. The BCM algorithm for {\sc Element Distinctness} does not straightforwardly extend to {\sc List Disjointness}, and it is still open whether {\sc List Disjointness} can be solved in $n^{o(1)}$-space and $n^{2-\Omega(1)}$ time, even allowing random oracles.  Recently, Bansal, Garg, Nederlof, and Vyas~\cite{twonikhils} showed that a variant of the BCM algorithm can be applied to solve {\sc List Disjointness} with an improved running time, provided that the input arrays have small second frequency moment (\ie, there are few collision pairs within each arrays).  Formally, define \[F_2(a) = \sum_{i=1}^n\sum_{j=1}^n\mathbf{1}[a_i=a_j],\] and assume an upper bound $p$ on $F_2(a)+F_2(b)$ is known. Then their algorithm solves the {\sc List Disjointness} problem in $\tilde O(n\sqrt{p/s})$ time and $O(s\log n)$ space (with random oracle), for any $s\le n^2/p$.  In this paper, we show that our pseudorandom family designed for the BCM algorithm also applies to this setting for $s=1$. 
\begin{theorem} 
\label{thm:listdisj} There is a Monte Carlo algorithm for {\sc List Disjointness} such that, given input arrays $a=(a_1,\dots,a_n),b=(b_1,\dots,b_n)$ and an upper bound $p\ge F_2(a)+F_2(b)$, runs in $\tilde O(n\sqrt{p})$ time and uses $O(\log^3 n\log \log n)$ bits of workspace and no random oracle.
\end{theorem}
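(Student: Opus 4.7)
The plan is to prove \Cref{thm:listdisj} by applying the same pseudorandom hash family constructed in the proof of \Cref{thm:main-ed} to the algorithm of Bansal, Garg, Nederlof, and Vyas~\cite{twonikhils}, and to verify that the analysis in that paper goes through in the one-way randomness setting essentially verbatim. The BGNV algorithm for {\sc List Disjointness} is, like BCM, a cycle-finding procedure on a functional graph; its key combinatorial object is a walk in a graph whose transitions are defined by a hash function $h$. Conditioned on $h$ being uniformly random, the expected ``$\rho$-length'' of the walk is $\tilde O(\sqrt{p})$ under the assumption $F_2(a)+F_2(b)\le p$, and a closing of the $\rho$-walk reveals either an \emph{intra-array} collision (which is benign, and contributes only an $\tilde O(\sqrt{p})$-factor overhead by the $F_2$ bound) or an \emph{inter-array} collision $a_i=b_j$ (which is exactly what we want to report).

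The first step is to make explicit how the correctness analysis of BGNV depends on the random oracle: it requires only that, for the particular (polylogarithmic-length) cycle-finding trajectories produced by Brent's/Floyd's procedure, the distribution of the visited vertices, the first repetition, and the identities of the colliding elements be close to what a uniformly random $h$ would produce. Crucially, this is the \emph{same} kind of ``fool the cycle-finding procedure on a random functional graph'' guarantee that the hash family designed for \Cref{thm:main-ed} already provides. I would then plug in our $O(\log^3 n\log \log n)$-seed pseudorandom family in place of the random oracle, and re-run the coupling-based analysis on the infinite tree of pseudorandom trajectories exactly as in \Cref{thm:main-ed}: the tree structure and the coupling depend only on the hash family and the transition rule, not on whether the underlying elements come from one or two arrays.

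The remaining steps are routine verifications. Evaluating our hash function takes $\polylog(n)$ time and $O(\log^3 n\log\log n)$ space, so the total running time remains $\tilde O(n\sqrt{p})$ and the total space remains $O(\log^3 n\log \log n)$ as claimed. One also has to check that the algorithm's ``termination'' events (detecting a cycle, reading off the two elements that caused it, checking whether they lie in $a$ vs.\ $b$) are simple functions of the walk trajectory, and hence inherit pseudorandom guarantees from the walk itself. The reduction from finding \emph{any} closing of the $\rho$-walk to finding an \emph{inter-array} closing within $\tilde O(\sqrt{p})$ restarts proceeds as in BGNV, since the ratio of inter- to intra-array collisions is controlled by $F_2$ and is independent of the pseudorandomness substitution.

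The main obstacle I expect is a bookkeeping one: carefully matching the events analyzed by BGNV (which are slightly more intricate than in BCM because of the two-array structure and the $F_2$ accounting) to the events handled by our coupling framework. In particular, one must be sure that the ``bad'' events one wants to union-bound over in BGNV are all of the form whose pseudorandom closeness is established by our infinite-tree coupling argument, rather than events that implicitly require two-way access to $h$. Since BGNV's procedure, like BCM's, only ever queries $h$ along a single evolving walk, this should indeed hold, and no new pseudorandomness machinery beyond that of \Cref{thm:main-ed} is needed.
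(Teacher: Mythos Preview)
Your proposal is on the right track but takes a much more circuitous route than the paper, and in places asks for more than is actually proved. The paper does \emph{not} go back to BGNV's analysis and re-verify it event by event against the pseudorandom family. Instead it observes that Lemmas~\ref{lem:warmup-hit-lower-bound} and~\ref{lem:hit-lower-bound}---the two properties that already drive the Element Distinctness proof---are stated for an \emph{arbitrary} input array. So the proof simply concatenates $a$ and $b$ into a single array $c$, notes $F_2(c)\le 2(F_2(a)+F_2(b))\le 2p$, and runs exactly the same $\textsf{COLLIDE}$ procedure on $f_{c,h}$ with a fresh $h\in\calH$ each trial. Lemma~\ref{lem:warmup-hit-lower-bound} bounds the expected trial length by $\tilde O(n/\sqrt{F_2(c)})$; Lemma~\ref{lem:hit-lower-bound}, applied to a \emph{single} inter-array pair $(u,v)$ with $c_u=c_v$, gives per-trial success probability $\Omega(1/F_2(c))$. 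Running $O(p\log n)$ trials with a global time cutoff of $\tilde O(n\sqrt{p})$ finishes the argument. No new coupling analysis, no new tree, no re-examination of BGNV's probabilistic events is needed.

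Two specific points where your proposal overshoots or misstates what is available. First, you frame the required guarantee as ``the distribution of the visited vertices, the first repetition, and the identities of the colliding elements be close to what a uniformly random $h$ would produce.'' The paper never establishes (and does not need) any such distributional closeness; it only proves the two pointwise probability bounds in Lemmas~\ref{lem:warmup-hit-lower-bound} and~\ref{lem:hit-lower-bound}. Second, the cycle-finding trajectories are not polylogarithmic in length; they have length $\Theta(n/\sqrt{F_2(c)})$, which is what makes the $\tilde O(n\sqrt{p})$ total time come out. Ironically, you already identify the key fact that makes the simple proof work---``the tree structure and the coupling depend only on the hash family and the transition rule, not on whether the underlying elements come from one or two arrays''---but do not follow it to its conclusion: just concatenate and reuse the lemmas.
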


Combining the above {\sc List Disjointness} algorithm with additive-combinatorial techniques, Bansal et al.\ gave a $\poly(n)$-space $O^*(2^{0.86n})$-time algorithm for \emph{Subset Sum}: Given positive input integers $a_1,a_2\dots,a_n$ and a target integer $t$, find a subset of the input integers that sums to exactly $t$. They also solved the harder \emph{Knapsack} problem with essentially the same time and space complexity.
Replacing their {\sc List Disjointness} subroutine with our \Cref{thm:listdisj}, we immediately remove the assumption of random oracles in these algorithms as well. 

\begin{theorem}[{Follows from \Cref{thm:listdisj} and \cite{twonikhils}}]
\label{thm:subsetsum}
Subset Sum and Knapsack can be solved by a Monte Carlo algorithm in $O^*(2^{0.86 n})$ time, with $O(\poly(n))$ working space and no random oracle.
\end{theorem}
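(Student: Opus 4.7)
Since the theorem is stated as a black-box corollary of \Cref{thm:listdisj} and the work of Bansal, Garg, Nederlof, and Vyas (henceforth BGNV), the plan is to verify that the \emph{only} place a random oracle appears in the BGNV subset sum and knapsack algorithms is inside their list disjointness subroutine, and then to substitute \Cref{thm:listdisj} for that subroutine. First I would recall the structure of the BGNV algorithm: using additive-combinatorial tools they reduce a subset sum instance to a sequence of list disjointness calls on arrays whose sizes and second-moment bounds $p$ satisfy $n\sqrt{p}=O^*(2^{0.86n})$, wrapped with bookkeeping that runs in $\poly(n)$ time and $\poly(n)$ space.

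The next step is to inspect the BGNV algorithm for random oracle usage outside the list disjointness calls. All other randomized steps (random shifts, bucketing, and the limited-independence hashing used in the representative-set arguments) consume only $\poly(n)$ random bits in total. Because the outer algorithm already has $\poly(n)$ workspace, these bits can simply be sampled once and stored explicitly; no random oracle is needed. Hence the random oracle assumption is entirely localized to the inner list disjointness subroutine.

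Having localized the issue, the final step is to plug in \Cref{thm:listdisj}: each list disjointness call is replaced by our randomized algorithm, which matches the $\tilde O(n\sqrt{p})$ running time of the BGNV subroutine (up to $\polylog$ factors absorbed by $O^*$), uses only $O(\log^3 n\log\log n)$ workspace bits, and crucially requires only one-way access to randomness. Composing this with the outer reduction immediately yields an $O^*(2^{0.86n})$-time, $\poly(n)$-space Monte Carlo algorithm for both subset sum and knapsack with no random oracle.

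The one point I expect to need careful verification is that BGNV use the list disjointness routine in a purely black-box, input-output manner; in particular, that the outer algorithm does not rely on reusing the same internal random bits across distinct list disjointness calls, or on any structural property of its random choices beyond correctness of the returned answer. A careful reading of their proof should confirm this. In the worst case one simply draws fresh randomness per call, which multiplies the total random-bit count by at most an $O^*(1)$ factor, still comfortably within the $\poly(n)$ space budget, and the claimed complexity bounds are unaffected.
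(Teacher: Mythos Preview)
Your proposal is correct and matches the paper's approach: the paper treats this theorem as an immediate corollary, simply stating that ``replacing their {\sc List Disjointness} subroutine with our \Cref{thm:listdisj}, we immediately remove the assumption of random oracles'' and giving no further argument. Your write-up is in fact more careful than the paper's, explicitly isolating the two points that need checking (that the random oracle in BGNV is confined to the list disjointness calls, and that the subroutine is invoked black-box), whereas the paper leaves these implicit.
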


In our {\sc Element Distinctness} algorithm (\Cref{thm:main-ed}), the $1.5$ exponent in the time complexity seems hard to improve using current techniques. However, it is also difficult to prove a matching lower bound for such a decision problem. Hence we are motivated to look at a closely related multi-output problem for which our techniques still apply, and for which stronger time-space lower bounds are known. We consider  the \emph{Set Intersection} problem:
\begin{quote}
    {\sc Set Intersection:} Given two integer sets $A,B$ represented as two (not necessarily sorted) input arrays $(a_1,\dots,a_n),(b_1,\dots,b_n)$ which are promised to not contain duplicates, print all the elements in their intersection $A\cap B$. 
\end{quote}
Patt-Shamir and Peleg \cite{DBLP:journals/tcs/Patt-ShamirP93} showed that any $\polylog(n)$-space algorithm for this problem must have time complexity $\tilde \Omega(n^{1.5})$,
even if the printed elements can be in any order, and each element in $A\cap B$ is allowed to be printed multiple times. (The recent work of Dinur  \cite{DBLP:conf/eurocrypt/Dinur20} also implies the same lower bound.) 
We observe that our techniques imply a \emph{nearly-matching time upper bound} for this problem, up to polylogarithmic factors.

\begin{theorem}[Set Intersection]
\label{thm:setintersection}
There is a randomized algorithm that, given input arrays $A=(a_1,\dots,a_n),B=(b_1,\dots,b_n)$ where $A$ and $B$ are both YES instances of {\sc Element Distinctness}, prints all elements in $\{a_1,\dots,a_n\}\cap \{b_1,\dots,b_n\}$ in $\tilde O(n^{1.5})$ time, with $O(\log^3 n\log \log n)$ bits of workspace and no random oracle. The algorithm prints elements in no particular order, and the same element may be printed multiple times.
\end{theorem}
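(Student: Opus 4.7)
The plan is to use the List Disjointness algorithm of \Cref{thm:listdisj} as a subroutine and extend it to enumerate all intersection elements. Because both $A$ and $B$ are promised to be YES instances of {\sc Element Distinctness}, we have $F_2(a)+F_2(b)=2n$, so a single call of \Cref{thm:listdisj} with $p=2n$ produces one witness in $\tilde O(n^{1.5})$ time using $O(\log^3 n\log\log n)$ space. The remaining challenge is to enumerate all $k:=|A\cap B|$ elements within the same overall budget, with $k$ unknown. I would organize the algorithm into $O(\log n)$ scales $\ell=0,1,\dots,\lceil\log n\rceil$. At scale $\ell$ I draw an $O(\log n)$-wise independent hash $h_\ell:[\poly(n)]\to[2^\ell]$ (storable in $O(\log^2 n)$ bits), and for each bucket $c\in[2^\ell]$ I launch a class-restricted copy of the \Cref{thm:listdisj} procedure whose effective input is $A_c:=\{a_i:h_\ell(a_i)=c\}$ and $B_c:=\{b_i:h_\ell(b_i)=c\}$. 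The class restriction is folded into the BCM-style walks so that only bucket-$c$ indices are eligible targets; a successful invocation prints an intersection element in bucket $c$.

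For the time budget, bucket $c$ contains $|A_c|+|B_c|=\tilde O(n/2^\ell)$ effective elements in expectation, and thus the per-bucket invocation costs $\tilde O((n/2^\ell)^{1.5})$; summed over $2^\ell$ buckets the scale-$\ell$ cost is $\tilde O(n^{1.5}/2^{\ell/2})$, and the geometric sum over $\ell$ remains $\tilde O(n^{1.5})$. For correctness, take $\ell^\star=\lceil\log k\rceil+O(1)$ so that the expected number of intersection elements per bucket at scale $\ell^\star$ is $O(1)$. By the limited independence of $h_{\ell^\star}$, each fixed $v\in A\cap B$ is alone in its bucket with constant probability, in which case the class-restricted call on that bucket must output $v$. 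Performing $O(\log n)$ independent repetitions per scale (absorbed into the polylog factors) boosts per-element success to $1-n^{-\omega(1)}$, and a union bound over $|A\cap B|\le n$ elements gives full coverage with high probability. Printing is allowed in arbitrary order with repetitions, so no deduplication work is needed.

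The main obstacle is the class-restricted invocation: one must show that restricting the BCM-style cycle-finding walks to bucket $c$ still runs in $\tilde O((|A_c|+|B_c|)^{1.5})$ time (avoiding a multiplicative $2^\ell$ factor from naive rejection sampling) while leaving the pseudorandom-hash analysis of \Cref{thm:listdisj} intact. I would tackle this by redefining the walk's target-selection hash to land directly in the bucket-$c$ index set rather than in $[2n]$, and by exploiting the independence of $h_\ell$ from the BCM-level pseudorandom hash of \Cref{thm:main-ed}: since $h_\ell$ can be viewed as an oblivious preprocessing of the input values, the coupling argument underlying \Cref{thm:main-ed} should transfer to the restricted sub-arrays $(A_c,B_c)$ with essentially the same parameters, yielding the desired per-bucket guarantee. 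Carrying this reduction through cleanly within the $O(\log^3 n\log\log n)$ space budget (accounting for the additional $h_\ell$ seed across scales, plus the recursion state) is the most technically delicate step of the proposed proof.
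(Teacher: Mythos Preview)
Your approach is far more complicated than necessary, and the paper's proof is a one-paragraph argument that sidesteps the ``class-restricted walk'' issue entirely. The key observation you are missing is that \cref{lem:hit-lower-bound} is a \emph{per-pair} lower bound: for every fixed collision pair $(u,v)$ with $a_u=a_v$, a single trial of $\mathsf{COLLIDE}$ on the concatenated array $c=a\circ b$ reaches both $u$ and $v$ (and hence outputs exactly that element) with probability $\Omega(1/F_2(c))=\Omega(1/n)$. Since $A$ and $B$ are YES instances of {\sc Element Distinctness}, every collision in $c$ is an intersection element, and $F_2(c)=\Theta(n)$. So the paper simply runs $n\log^2 n$ independent trials of $\mathsf{COLLIDE}$ on $c$ with fresh pseudorandom $h$'s and prints whatever collisions are found; each intersection element is missed with probability at most $(1-\Omega(1/n))^{n\log^2 n}=n^{-\omega(1)}$, and a union bound finishes. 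No bucketing, no restricted walks.

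Your mistake is treating \Cref{thm:listdisj} as a black box that returns ``one witness,'' forcing you to invent a hashing-to-buckets layer to isolate the remaining witnesses. Once you open the box and use \cref{lem:hit-lower-bound} directly, the many-witness case is free: different trials will, with the right probability, land on different pairs. Your proposed fix---redefining the target hash to land in the bucket-$c$ index set---is also genuinely problematic in low space: you cannot store that index set, and computing ranks within it on the fly costs $\Theta(n)$ per step of the walk, which wrecks the time bound. Even if this could be patched, it is unnecessary.
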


\subsection{Related Work}
In the following, we discuss several related works from various areas.

\paragraph*{Element Distinctness and Collision Finding.} In cryptography there has been intensive study on finding collisions in \emph{random-like} functions using attacks based on the birthday paradox. Floyd's cycle-finding algorithm \cite{donald1969art,1975-pollard} has been used in memoryless birthday attacks \cite{DBLP:journals/joc/OorschotW99}, which can be seen as low-space algorithms for {\sc Element Distinctness} (or {\sc List Disjointness}) with \emph{random-like} input. In contrast, we consider worst-case input and do not rely on any heuristic assumptions.

Ambainis \cite{ambainis2007quantum} gave a quantum algorithm for {\sc Element Distinctness} (as well as {\sc List Disjointness}) with optimal $O(n^{2/3})$ query complexity \cite{DBLP:journals/jacm/AaronsonS04}. The space complexity of Ambainis' algorithm is $\tilde O(n^{2/3})$. In the $\polylog(n)$-space setting, there are no known quantum algorithms that can significantly beat the simple $O(n)$-query algorithm obtainable from Grover Search~\cite{tqc}.

\paragraph*{Time-Space Tradeoff Lower Bounds.}

Borodin and Cook \cite{DBLP:journals/siamcomp/BorodinC82} proved nearly-optimal time-space tradeoff lower bounds for the sorting problem against (multi-way) branching programs. Their techniques were extended to prove time-space lower bounds for many other multi-output functions 
\cite{DBLP:journals/jcss/Yesha84,DBLP:journals/siamcomp/Abrahamson87,DBLP:journals/jcss/Abrahamson91,DBLP:journals/siamcomp/Beame91,DBLP:journals/tcs/MansourNT93,DBLP:journals/tcs/Patt-ShamirP93}. Recently, McKay and Williams   \cite{DBLP:conf/innovations/McKayW19} generalized techniques of Beame~\cite{DBLP:journals/siamcomp/Beame91} to show quadratic time-space product lower bounds against branching programs armed with random oracles. However, these techniques cannot prove nontrivial time-space lower bounds for \emph{decision} problems such as {\sc Element Distinctness}. For decision problems, the current best known time-space lower bound states that $\mathsf{SAT}$ cannot be solved in $n^{1.801}$ time and $n^{o(1)}$ space (\cite{DBLP:journals/cc/Williams08, DBLP:journals/cc/BussW15}, building on \cite{DBLP:journals/jacm/FortnowLMV05}). For {\sc Element Distinctness}, Ajtai~\cite{Ajtai05} proved that for every $k \ge 1$, there exists an $\eps > 0$ such that it cannot be solved by $kn$-time $\eps n$-space algorithms in the RAM model. Other time-space tradeoff lower bounds for decision problems are proved in~\cite{DBLP:journals/tcs/Karchmer86,Ajtai02,BeameV02,BeameSSV03}.

\paragraph*{Random oracles.}
In the usual notion of randomized space-bounded computation, the outcomes of previous coin tosses cannot be recalled unless they are stored in working memory: this is typically called \emph{one-way} access to randomness. 
The stronger model where all previous coin tosses can be recalled (\ie, \emph{two-way} access to randomness) has also been studied in the computational complexity literature. For example, Nisan \cite{DBLP:journals/tcs/Nisan93} showed that bounded two-sided error log-space machines with one-way access to randomness can be simulated by \emph{zero-error} randomized log-space machines with two-way access to randomness ($\mathsf{BPL}\subseteq \mathsf{2wayZPL}$). 

In the streaming literature, it is common to first design an streaming algorithm assuming access to a random oracle, then to use pseudorandom generators to remove this assumption, sometimes incurring a blowup in space complexity. Nisan's pseudorandom generator \cite{DBLP:journals/combinatorica/Nisan92} offers a generic way to derandomize many streaming algorithms (\eg, \cite{DBLP:journals/jacm/Indyk06}). In our case, it is entirely unclear whether any off-the-shelf pseudorandom generators (such as \cite{DBLP:journals/combinatorica/Nisan92} or \cite{DBLP:conf/focs/ForbesK18}) can be directly applied to replace the random oracle, since the queries made to the random oracle by the cycle detection algorithm are highly adaptive, dependent on the outcomes of previous queries. 

\paragraph*{(Pseudo-)random graphs}
The {\sc Element Distinctness} algorithm of Beame et al. \cite{beame2013element} (and related work) uses versions of the following basic fact about random mappings (a.k.a.\ random 1-out digraphs): starting from any vertex, the expected number of reachable vertices is $\Theta(\sqrt{n})$.  The statistical properties (such as cycle lengths and component sizes) of random mappings have been extensively studied, see \eg, \cite[Chapter 16]{frieze2016introduction} and the references therein. However,  most of these studies crucially assume the random graphs are generated with full independence, and generally do not imply useful results about pseudorandomly generated graphs. One exception is the work of Alon and Nussboim \cite{DBLP:conf/focs/AlonN08} on $k$-wise independent Erd\H{o}s-R\'enyi graphs, but it is very different from our setting of 1-out digraphs.

\paragraph*{Subset Sum and Related Problems.}
The best known time complexity for Subset Sum is $O^*(2^{n/2})$ based on a meet-in-middle approach, first given by Horowitz and Sahni \cite{horowitz1974computing} in 1974. The space complexity of this algorithm was later improved from $O^*(2^{n/2})$ to $O^*(2^{n/4})$ by Schroeppel and Shamir \cite{schroeppel1981t}.
Very recently, Nederlof and W\k{e}grzycki gave an $O^*(2^{n/2})$-time $O^*(2^{0.249999n})$-space algorithm \cite{nederlof2020improving}.
This algorithm (as well as the $O^*(2^{0.86n})$-time $\poly(n)$-space algorithm \cite{twonikhils}) used the techniques developed in  \cite{DBLP:conf/stacs/AustrinKKN15,AKKN16}, which were inspired by advances on \emph{average-case} Subset Sum algorithms \cite{hgj10}.

The low-space {\sc List Disjointness} algorithm of \cite{twonikhils} also has implications for average-case {\sc $k$-Sum} algorithms in low space \cite{twonikhils,DBLP:conf/esa/GoldsteinLP18}. See also \cite{DBLP:conf/esa/Wang14,DBLP:conf/icalp/LincolnWWW16}.

There is also a long line of research on low-space \emph{pseudopolynomial-time} algorithms (\ie, with running time $\poly(n,t)$) for Subset Sum \cite{LokshtanovN10,ElberfeldJT10,Kane10,karl,JinVW21}, culminating in an $\tilde O(nt)$-time $O(\log n\log \log n +\log t)$-space algorithm \cite{JinVW21}.

\subsection{Open Questions} \label{sec:conclusion}
We conclude by discussing several interesting questions left open by our work.
\paragraph*{Time-space Tradeoffs?}
Beame et al. \cite{beame2013element} (and Bansal et al. \cite{twonikhils}) not only gave efficient log-space algorithms for {\sc Element Distinctness} (and {\sc List Disjointness}), but also provided a smooth time-space trade-off interpolating between the log-space algorithms and the linear-space algorithms.
These algorithms, when given $S$ memory, perform the cycle-finding procedure from $S$ starting vertices, and use a redirection idea (which requires $S$ space to store the redirected edges) to nicely handle the collisions among all these $S$ walks. Our analysis of the pseudorandom family only considers the case with a single starting vertex, corresponding to the $\polylog(n)$-space algorithm.
It would be interesting to see whether the analysis can be generalized to the case of multiple starting vertices, and hence remove the random oracle assumption for these time-space trade-off algorithms as well.

\paragraph*{Shorter Seed Length?}
Our algorithm needs $O(\log^3 n\log \log n)$ bits of space to store the ``seed'': the description of the pseudorandom mapping. An interesting question is whether we can reduce this seed length to $O(\log n)$. It seems plausible that our $k$-wise generators could be replaced by almost $k$-wise generators (\eg, \cite{AlonGHP90}) which have shorter seed length. However, to get $O(\log n)$ seed length, one might need to significantly modify our $O(\log n)$-level iterative restriction approach, which already incurs an $O(\log n)$ multiplicative factor.

\paragraph*{Faster List Disjointness Algorithm?}
We reiterate the question raised by Bansal et al. \cite{twonikhils}: can {\sc List Disjointness} be decided in $n^{2-\Omega(1)}$ time and $n^{o(1)}$ space, even allowing random oracles? In hard instances for the current algorithms, there is only one ``real'' collision between the two arrays, but many ``pseudo-collisions'' coming from the same array, and it is not clear how to filter these pseudo-collisions without affecting the real collision. As to the question of whether {\sc List Disjointness} does not have such an algorithm, the current lower bound techniques do not seem to distinguish between {\sc Element Distinctness} and {\sc List Disjointness}, and 
it is entirely unclear how to prove an $n^{1.5+\Omega(1)}$-time lower bound for $n^{o(1)}$-space algorithms solving such decision problems (for example, the best known time lower bound for {\sc Element Distinctness} in the small-space setting is barely superlinear~\cite{Ajtai05}).

\subsection{Organization}
In~\Cref{sec:tech-overview}, we provide an overview of the intuitions behind the proof of Theorem~\ref{thm:main-ed}. In \Cref{sec:prelim} we give useful definitions and notations. In \Cref{sec:property} we give the construction of our pseudorandom family, formally state the properties satisfied by the pseudorandom family, and show how to use them to obtain algorithms for {\sc Element Distinctness} and {\sc List Disjointness}. Then in \Cref{sec:construction} we define the extended random walk and dependency tree. Finally, in \Cref{sec:one,sec:two} we prove that our pseudorandom family satisfies the desired properties.

\section{Overview of Techniques}\label{sec:tech-overview}

Now we give an informal overview of the techniques behind the proof of~\Cref{thm:main-ed}.

\highlight{Notation.} Let $a=(a_1,\dots,a_n) \in [m]^n$ be the input array to {\sc Element Distinctness}.
Throughout this overview, we will assume our instances are NO instances (note the YES case is simply the absence of a collision pair), and for simplicity we assume our NO instances have at most one collision pair $a_u=a_v$ where $u\neq v$. (It turns out that the hardest NO instances are those with exactly one collision pair.) We will always use $(u,v)$ to denote the unique collision pair in the NO instance that our algorithm needs to find.

Let $\caH_{\sf full}$ be the collection of all functions from $[m]$ to $[n]$, and $h \getsR \caH_{\sf full}$ be a truly random  function (implemented using a random oracle in the BCM algorithm). We define a 1-out digraph (\ie, each node has at most one outgoing edge) $G_{a,h}$ on the vertex set $[n]$ with the edge set $\{(x, h(a_x)) \mid x \in [n]\} \subseteq [n] \times [n]$. 
For a collision pair $(u,v)$, note that vertices $u, v \in [n]$ point to the same vertex $h(a_u) = h(a_v)$ since $a_u = a_v$. We also use $f^*_{a,h}(s)$ to denote the set of all vertices reachable from $s$ in the digraph $G_{a,h}$.

In the following, we often use bold letters (\eg, $\mathbf{x}$ and $\mathbf{y}$) to denote random variables.

\subsection{Review of the BCM Algorithm}

It is instructive to first review the $O(\log n)$-space $\tilde O(n^{1.5})$-time BCM algorithm for {\sc Element Distinctness}, and understand why it requires a random oracle. 

\highlight{The BCM algorithm.} The BCM algorithm first chooses a random vertex $s\in [n]$ and performs Floyd's cycle-finding algorithm on digraph $G_{a,h}$ starting from $s$. This will successfully detect $u,v$ if both $u$ and $v$ are reachable from $s$, since $u$ and $v$ point to the same vertex. To bound the running time, the following two properties are established, using a birthday-paradox-style argument.
\begin{align}
    &\E_{\bh \getsR \caH_{\sf full},\bs \getsR [n]}[|f^*_{a,\bh}(\bs)|] \le O(\sqrt{n}), \label{exp-upb}\\
    &\Pr_{\bh \getsR \caH_{\sf full}, \bs \getsR [n]} [u,v \in f^*_{a,\bh}(\bs)] \ge \Omega(1/n).\label{hit-lowb}
\end{align}
Condition~\eqref{hit-lowb} says the probability that both $u$ and $v$ are reachable from $s$ is at least $\Omega(1/n)$. Thus, running $\tilde O(n)$ independent trials of cycle detection (each using a different $h$) will lead to at least one trial with $u$ and $v$ reachable, with high probability. Condition~\eqref{exp-upb} says  we expect $O(\sqrt{n})$ vertices to be reachable from $s$. Together, these imply the running time can be bounded by $\widetilde{O}(n^{1.5})$. See \Cref{sec:property} for a more formal description. 

\highlight{Why the BCM algorithm needs a high degree of independence.} Let us see why the birthday argument mentioned above apparently needs the values of $h$ to be fully independent (or close to that). For simplicity, we consider how one proves that the probability of reaching $v$ from a random starting vertex $\bs$ is $\Theta(1/\sqrt{n})$ (the probability of reaching both $u$ and $v$ can be analyzed similarly). Let $\bs_0=\bs,\bs_1,\bs_2,\dots$ be the vertices on the walk starting from $\bs$. Conditioning on $\bs_0=s_0,\bs_1=s_1,\dots,\bs_k=s_k$, where $a_{s_0},a_{s_1},\dots,a_{s_k}$ are distinct, the distribution of the next vertex $\bs_{k+1}$ is uniform over $[n]$, due to the full independence of $h$. Once the elements are not distinct (a collision has occurred), the walk will follow the formed cycle (which is completely determined by the walk history) and no new vertices will be reached. 
From there, a standard birthday argument can be applied, yielding the desired $\Theta(1/\sqrt{n})$ probability bound of reaching $v$, and $\Theta(1/n)$ of reaching both $u$ and $v$.

Note in the argument above we have to condition on \emph{all} previous $(k+1)$ random choices, because determining the value of $\bs_{k+1}$ involves the $(k+1)$ compositions of the $h$ function. Since $k$ is typically as large as $\sqrt{n}$, it appears that one needs at least $\Omega(\sqrt{n})$-wise independence of the values of $h$.

\subsection{Overcoming the $\Omega(\sqrt{n})$-wise Independence Barrier}

\newcommand{\Htoy}{\caH_{\sf toy}}

We first show how to overcome the need of $\Omega(\sqrt{n})$-wise independence with a toy pseudorandom hash function family $\Htoy$ based on a simple two-level iterative restriction. In particular, $\Htoy$ is constructed from three $\tilde{\Theta}(n^{1/4})$-wise independent hash functions, so that $\bh \getsR \Htoy$ can be sampled using $\tilde{\Theta}(n^{1/4})$ random bits.

\begin{construction}{Drawing a sample $\bh$ from the toy pseudorandom hash function family $\Htoy$}
\begin{itemize}
	\item Set a parameter $\tau = \Theta(n^{1/4} \log n)$, and independently draw two $\tau$-wise independent uniform hash functions $\br_1, \br_2\colon [m] \to [n]$. Independently draw a $\tau$-wise independent hash function $\bg_1\colon [m] \to \{0,1\}$ such that for every $x \in [m]$, $\bg_1(x) = \begin{cases} 0 & \text{with probability $n^{-1/4}$,} \\ 1 & \text{otherwise.}\end{cases}$ 
	
	\item Finally, $\bh\colon [m] \to [n]$ is defined by $
	\bh(x) = \begin{cases} \br_1(x) & \text{when } \bg_1(x) = 1, \\ \br_2(x) & \text{otherwise.} \end{cases}
	$	
\end{itemize}
\end{construction}

Now we instantiate the BCM algorithm with the hash function $\bh \getsR \Htoy$. We can also view $f^*_{a, \bh}(\bs)$ as the following random walk on the vertex set $[n]$. 

\begin{construction}{The random walk corresponding to $f^*_{a, \bh}(\bs)$ for $\bh \getsR \Htoy$}
$f^*_{a, \bh}(\bs)$ contains the vertices on the following random walk:
\begin{compactitem}
	\item $\bw_1 = \bs$.
	\item For each integer $j \ge 2$, set $\bw_{j} = \bh(a_{\bw_{j-1}})$ if there is no $ k \in \{2,\dotsc,j-1\}$ satisfying $a_{\bw_{k-1}} = a_{\bw_{j-1}}$; otherwise the walk is terminated. \\
\end{compactitem}

Since $\bh \getsR \Htoy$, the following is an equivalent view of the walk above, in terms of $\bg_1, \br_1$ and $\br_2$:
\begin{compactitem}
	\item Initially, $\bs_0 = \bs$ and $\bw$ is empty. 
	\item For each integer $i \geq 1$:
	\begin{compactenum}
		\item We start the $i$-th \emph{subwalk} from $\bs_{i - 1}$ following the edges defined by $x \mapsto \br_1(a_x)$. 
		
		\item Each time we visit a new vertex $x$ (including $\bs_{i - 1}$), suppose there are already $j - 1$ vertices in $\bw$. We set $\bw_j = x$ if there is no $ k \in \{2,\dotsc,j-1\}$ satisfying $a_{\bw_{k-1}} = a_{\bw_{j-1}}$; otherwise we terminate the whole walk.
		
		\item Then we check whether $\bg_1(a_x) = 0$. If this happens (with probability $n^{-1/4}$), we stop this subwalk, and let $\bs_{i} = \br_2(a_x)$. Namely, we follow the edge $x\mapsto \br_2(a_x)$ for one step. Then we move to Step (1) to continue with the $(i+1)$-th subwalk, starting from $\bs_i$.
	\end{compactenum}
\end{compactitem}
\end{construction}

Roughly speaking, when $\bh \getsR \Htoy$, the random walk generated above alternates between subwalks of typical length $O(n^{1/4})$ defined by $\br_1$, and single steps defined by $\br_2$. In the below, we provide some intuition about why such a random walk suffices for analyzing the BCM algorithm. For simplicity, we will make a unrealistic assumption, which we will mark by \ul{underlining} it. Later, we will explain how to remove the assumption.

\paragraph*{Intuition.} We first argue that each subwalk has length less than $\tau / 2$ with high probability. Fix an integer $i \ge 1$, and $\bs_{i - 1} = s_{i - 1}$ for some $s_{i - 1} \in [n]$. From $s_{i - 1}$, suppose the subwalk has visited $t + 1$ vertices $x_1,x_2,\dotsc,x_{t+1}$ before termination. From the definition of our subwalk, we have $\bg_1(a_{x_k}) = 1$ for every $k \in [t]$. \ul{Assuming the walk does not stop before $\bs_{i}$}, the elements $a_{x_{1}},\dotsc,a_{x_{t+1}}$ must be distinct. By the $\tau$-wise independence of $\bg_1$, such an event happens with probability at most $(1 - n^{-1/4})^{-\min(t,\tau)}$. Applying a union bound over all possible $s_{i - 1} \in [n]$, we can conclude that all subwalks have length at most $\tau / 2$, with at probability at least \[1 - n (1 - n^{-1/4})^{-\tau/2} = 1 - n^{-\Theta(1)}.\] 

From now on, we will condition on the event that all subwalks have length at most $\tau /2$.

In each subwalk, we follow the edges defined by $\br_1$ for at most $\tau / 2$ steps. By the $\tau$-wise independence of $\br_1$, each subwalk has the same distribution as a truly random walk with the same length, as long as its starting point $\bs_{i - 1}$ is independent of $\br_1$. However, we also note that different subwalks are \emph{not} independent. Therefore, our analysis has to overcome the following two challenges:
\begin{enumerate}[label=(\roman*)]
    \item Remove the dependency of $\bs_{i - 1}$ on $\br_1$. \label{challenge-1}
    \item Handle correlations between subwalks. \label{challenge-2}
\end{enumerate}

First, we show how to handle challenge \ref{challenge-1}.
If $\bs_{i-1}$ is the random starting point $\bs$, it is independent of $\br_1$. Otherwise, $\bs_{i - 1}$ is the vertex reached by a subwalk started from $\bs_{i - 2}$ (which depends on $\br_1$) together with a single step defined by $\br_2$.  We wish to remove this dependency on $\br_1$ using the single step following $\br_2$. 

The key observation is the following. A truly random walk has typical length $\Theta(\sqrt{n})$, while each subwalk has a typical length $\Theta(n^{1/4})$. So to mimic a truly random walk, our analysis only needs to handle $O(n^{1/4})$ queries to the hash function $\br_2$ (each query represents one step following $\br_2$). \ul{Assuming that the walk does not stop before $\bs_{i}$ for all $i \in [n^{1/4}]$}, these $O(n^{1/4})$ queries are distinct. Then by the $\tau$-wise independence of $\br_2$ and the fact that $n^{1/4} \ll \tau$, each $\bs_i$ can indeed be replaced by a truly uniformly random variable over $[n]$ without changing the distribution of the generated random walk. Therefore, $\bs_{i-1}$ is independent of $\br_1$ and $\bg_1$ as desired. 

To handle challenge \ref{challenge-2} (\ie, the correlation across subwalks), the key idea is that in a standard birthday paradox argument, we do not require complete independence of all items; in fact, pairwise independence already suffices. Since each subwalk has length at most $\tau / 2$ and $\br_1$ is $\tau$-wise independent, such subwalks are also pairwise independent, which enables us to perform a birthday-paradox-style analysis. Of course, this is an oversimplification, and our actual analysis framework will be clarified in~\Cref{sec:tech-alternative-analysis}. 

Here we made the (unrealistic) assumption that the walk does not stop before reaching each $\bs_i$. (In reality, the walk \emph{has to} stop during some subwalk.) Note that whether the walk stops at the $j$-th step is equivalent to whether $j$ is no greater than the length of the walk $|\bw|$. Since $|\bw|$ is a random variable depending on all of $\br_1, \br_2, \bg_1$, we have to carefully ensure that our analysis does not involve $|\bw|$, to keep $\bs_{i-1}$ and $\br_1$ independent. We will explain how we overcome such difficulty in \Cref{sec:tech-alternative-analysis}, and in \Cref{dependency-tree} we will extend the two-level structure above into a $O(\log n)$-level tree (using signficantly less randomness in our hash functions).

\subsection{An Alternative Analysis of the BCM Algorithm}\label{sec:tech-alternative-analysis}

The starting point of our work is a coupling-based proof of Condition~\eqref{hit-lowb}, based on what we call \emph{extended} random walks.\footnote{Condition \eqref{exp-upb} is easier to establish. We will focus on Condition~\eqref{hit-lowb} since it is more difficult.} 
This proof will introduce the key strategy of our later analysis, when we replace the random oracle by a pseudorandom hash function.

\highlight{The random walk corresponding to $f^*_{a,\bh}(\bs)$.} Note for $\bh \getsR \caH_{\sf full}$ and $\bs \getsR [n]$,  $f^*_{a,\bh}(\bs)$ can be seen as a random walk on the vertex set $[n]$ in a straightforward way. 
\begin{construction}{The random walk $\bw$ corresponding to $f^*_{a,\bh}(\bs)$}
	\begin{compactitem}
		\item $\bw_1 = \bs$.
		\item For each integer $j \ge 2$, set $\bw_{j} = \bh(a_{\bw_{j-1}})$ if there is no $2 \le k \le j - 1$ such that $a_{\bw_{j-1}} = a_{\bw_{k-1}}$; otherwise stop the the walk.
	\end{compactitem}
\end{construction}
Since the walk stops immediately after a collision occurs, one can see that $f^*_{a,\bh}(\bs)$ is exactly the set of all vertices in the walk $\bw=(\bw_1,\ldots,\bw_{|\bw|})$.\footnote{Note it is possible that for some $j \ge 2$, $a_{w_{j-1}}$ is distinct from all $a_{w_{k-1}}$ for $k < j$, but $\bh(a_{w_{j-1}})$ has a collision with a previous $\bh(a_{w_{k-1}})$. In this case, the walk moves to $\bw_j = \bw_k$ (which was already visited before) and stops at step $j + 1$.}

Recall $(u,v)$ is the unique collision pair.
In order to prove Condition \eqref{hit-lowb}, our goal now is to lower bound the probability 
\begin{align}
  &\Pr[(\exists~ (i,j) \in [|\bw|]^2)[\bw_i = u \wedge \bw_j = v]] \label{eq:prob-exist} \\
= &\sum_{(i,j) \in \N^2} \Pr[ (i \le |\bw| \wedge \bw_i = u) \wedge (j \le |\bw| \wedge \bw_j = v)]. \label{eq:exp-first}
\end{align}
The equality of \eqref{eq:prob-exist} and \eqref{eq:exp-first} holds
since, by definition, if there is an $(i,j)$ such that $(\bw_i = u) \wedge (\bw_j = v)$, then the walk would immediately stop at step $\max(i,j) + 1$ (\ie, $|\bw| = \max(i,j)$). So $\bw$ contains at most one pair $(i,j)$ such that $(\bw_i,\bw_j) = (u,v)$, and hence we can decompose~\eqref{eq:prob-exist} into~\eqref{eq:exp-first}.

Our initial hope is that~\eqref{eq:exp-first} may be simpler to analyze, as it is a sum of many simpler terms, each of which only depends on two entries $\bw_i$ and $\bw_j$. However, the condition $(i \le |\bw| \wedge \bw_i = u)$ is still difficult to analyze, as it depends on the length $|\bw|$.

\newcommand{\bbw}{\bar{\bw}}

\highlight{Coupling with the basic extended walk.}  To move forward, we wish to find a way to lower bound~\eqref{eq:prob-exist} by a sum of many simpler probabilities that do not involve $|\bw|$. The first idea is to extend the random walk $\bw$ to an \emph{infinite} extended random walk $\bbw$. We stress that the walk $\bbw$ defined below is only used in the analysis, and not in the algorithm.

\begin{construction}{Basic extended walk $\bbw$}
\begin{compactitem}
	\item Extend the domain of $\bh$ from $[m]$ to $[m] \cup \{\star_0, \star_1, \dots\}$ as follows: for each $t \in \N$, sample $\bh(\star_t) \getsR [n]$, where all samples are independent.
	\item Perform the random walk $\bw$. After $\bw$ ends, set $\bbw = \bw$ and for every $t \in \N$ append $\bh(\star_t)$ to the end of $\bbw$. 
\end{compactitem}
\end{construction}
Note that $\bbw$ and $\bw$ are both defined over the joint probability space $(\bh,\bs)$ (for the extended $\bh$), and $\bw$ is always a prefix of $\bbw$. From the definition of $\bbw$, we have the following nice properties:
\twoprops{All entries of $\bbw$ are i.i.d. samples from $[n]$.}{bbw-all-ind}{For all $i$, if $a_{\bbw_{j}} \ne a_{\bbw_{k}}$ for all $1 \le j < k < i$, then $\bw_i = \bbw_i$.}{bbw-same-cond}

\highlight{Proof strategy: subtracting the overcount.}
By~\eqref{bbw-same-cond}, we know that $u,v \in f^*_{a,\bh}(\bs)$ if there are $i,j \in \N$ such that (1) $\bbw_i = u$ and $\bbw_j = v$, and (2) for all $1 \le t < q < \max(i,j)$, $a_{\bbw_{t}} \ne a_{\bbw_{q}}$. In this way, we have reformulated the success condition $u,v \in f^*_{a,\bh}(\bs)$ as a statement that does {\bf not} involve the length $|\bw|$ of the original random walk $\bw$, and can be analyzed more easily.
Fixing a length parameter $L =  c \sqrt{n}$ for some small constant $c > 0$ to be determined later, we have
\begin{align}
    \Pr[u,v \in f^*_{a,\bh}(\bs)] &\ge \Pr[\text{$\exists i,j \in [L]$ s.t. $(\bbw_i,\bbw_j) = (u,v)$ and for all $1 \le t < q \le L$, $a_{\bbw_{t}} \ne a_{\bbw_{q}}$}] \notag\\
&= \sum_{i,j \in [L]} \Pr[\text{$(\bbw_i,\bbw_j) = (u,v)$ and for all $1 \le t < q \le L$, $a_{\bbw_{t}} \ne a_{\bbw_{q}}$}]. \label{eq:exp-second}
\end{align}

The last equality above holds because if for all $1 \le t < q \le L$, we have $a_{\bbw_{t}} \ne a_{\bbw_{q}}$, then there can only be one pair $(i,j) \in [L]^2$ satisfying $(\bbw_i,\bbw_j) = (u,v)$.

To further lower bound~\eqref{eq:exp-second}, we define the following two quantities:
\[
E_{\sf total} = \sum_{(i,j) \in [L]^2}\Pr[\bbw_i = u \wedge \bbw_j = v]\quad\text{and}\quad E_{\sf bad} = \sum_{\substack{(i,j) \in [L]^2\\1 \le t < q \le L} } \Pr[\bbw_i = u \wedge \bbw_j = v \wedge a_{\bbw_{t}} = a_{\bbw_{q}}].
\]
We claim that $\Pr[u,v \in f^*_{a,\bh}(\bs)] \ge E_{\sf total} - E_{\sf bad}$: note that $E_{\sf total}$ counts the total expected number of pairs $(i,j)$ with $(\bbw_i,\bbw_j) = (u,v)$, and $E_{\sf total} - E_{\sf bad}$ subtracts all the ``bad pairs'' from the total count.\footnote{We call such an $(i,j)$ a ``bad pair'' because it should not be counted in~\eqref{eq:exp-second}, and has to be subtracted from the total count. Also, we remark that is possible that a bad pair is subtracted more than once in $E_{\sf bad}$. This is not an issue for us, as we are trying to lower bound $\Pr[u,v \in f^*_{a,\bh}(\bs)]$.} 

The rest of the analysis is a straightforward calculation using the property~\eqref{bbw-all-ind}. We can see that $E_{\sf total}=\Theta(L^2/n^2) = \Theta(c^2/n)$, and $E_{\sf bad} = \Theta ( L^4/n^3) = \Theta(c^4/n)$.
Setting $c$ to be small enough, we have $E_{\sf total} - E_{\sf bad} \ge \Omega(1/n)$, which concludes the proof.

\highlight{Remark.} Setting
\begin{equation}
E_{\sf total} = \sum_{i \in [L]}\Pr[\bbw_i = u]\quad\text{and}\quad E_{\sf bad} = \sum_{\substack{j \in [L]\\1 \le t < q \le L} } \Pr[\bbw_i = u \wedge a_{\bbw_{t-1}} = a_{\bbw_{q-1}}], \label{eq:Etotal-Ebad-simple}
\end{equation}
one can also show $\Pr[v \in f^*_{a,\bh}(\bs)] \ge \Omega(1/\sqrt{n})$ for all possible vertices $v$, by showing $E_{\sf total} - E_{\sf bad} \ge \Omega(1/\sqrt{n})$ (for $L = c \sqrt{n}$ and appropriately small $c > 0$). Later in this overview, we will explain how to get an $\Omega(1/\sqrt{n})$ lower bound for this single-vertex case when we replace the random oracle $h$ by a pseudorandom function, and discuss additional challenges that arise for the two-vertex case (with $u,v$).

\subsection{Pseudorandom Hash Functions, the Dependency Tree, and the Indexing Scheme} \label{dependency-tree}

Next we describe our construction of pseudorandom hash functions $\bh$ based on \emph{iterative restrictions}. In particular, we use a small number of independent partial functions defined by random restrictions to form a full hash function. 
By considering how the hash values of vertices on the random walk 
are determined by the iterative restriction, we can naturally organize these vertices into a  hierarchical structure we call the \emph{dependency tree}, which will play an crucial role in our later analysis.

\highlight{Pseudorandom hashing by iterative restrictions.} Instead of using full randomness, we will implement the hash function $h\colon [m]\to [n]$ by the following iterative pseudorandom restriction process, using only $\polylog(n)$ seed length.
Initially, all values of $h(x)$ are undefined. The values are defined over $\ell \le \log n$ iterations. In the $i$-th iteration, we sample $O(\log n\log \log n)$-wise random functions $g_i\colon [m] \to \{0,1\}, r_i\colon [m] \to [n]$, and for every $x\in[m]$ such that $g_i(x)=1$ and $h(x)$ is still undefined, we define $h(x)$ to be $r_i(x)$. See \Cref{sec:construct-family} for details. Informally, in each iteration we independently use $O(\log n\log \log n)$-wise generators to fix about half of the remaining undefined values in $h$: the $g_i$ selects which half, and the $r_i$ selects the values. (It is possible that a tiny number of hash values $h(x)$ may still be undefined after $\log(n)$ iterations, but this is not a significant issue for us and we ignore it in this overview.)

Let $\caH$ denote the above family of pseudorandom functions. In the following, $\bh$ will denote the random variable for a function randomly drawn from $\caH$. Analogously to Section~\ref{sec:tech-alternative-analysis}, one can define a random walk $\bw$ on the random graph $G_{a,\bh}$.

\highlight{Tree structure of pseudorandom walks.} We now describe a \emph{dependency tree} $\bT$ for a walk $\bw$ on $G_{a,\bh}$. We use non-negative integers to denote the nodes of $\bT$: node $0$ is a ``dummy'' node representing the root, and for $\mu \ge 1$, node $\mu$ corresponds to the $\mu$-th node of walk $\bw$ if it exists (\ie, node $\mu$ is associated with vertex $\bw_\mu$). 
We will use Greek letters $\alpha,\beta,\mu,\dotsc$ to refer to nodes in the dependency tree $\bT$. 

The tree $\bT$ has one ``level'' for each iteration $1,\ldots,\ell$ of the process defining $\bh$. For each node $\mu$ of $\bT$, we define $\level(\mu)$ (the ``level of $\mu$'') to be the smallest integer $j$ such that $g_j(a_{\bw_\mu}) = 1$ (note that this $j$ corresponds to the iteration in which the hash value of $a_{\bw_\mu}$ is defined). If no such $j$ exists, then we set $\level(\mu) = \ell + 1$. We also set $\level(0) = \ell + 1$, and define $\nex(\mu) = \bh(a_{\bw_{\mu}}) = r_{\level(\mu)}(a_{\bw_\mu})$. Informally, $\nex(\mu)$ corresponds to the ``next'' vertex on the walk after $\bw_\mu$.

\begin{construction}{Dependency tree $\bT$ based on $\bw$}
	\begin{compactitem}
		\item Node $0$ is the root of $\bT$.
		\item For each node $\mu$ of $\bT$, its parent $\pa(\mu)$ is defined as the largest node $\nu < \mu$ with level at least $\level(\mu)$. 
	\end{compactitem}
\end{construction}

Observe that the walk $\bw$ is simply the pre-order traversal of $\bT$. Also, observe that every root-to-node path of $\bT$ has non-increasing node levels.

\highlight{Indexing a tree node.} Recall $\ell \le \log n$ is the number of iterations, which bounds the number of levels of $\bT$. Each node $x$ of $\bT$ can be assigned a unique ``index'' in a natural way, via a sequence $\vk=(k_1,k_2,\dots,k_{\ell})$ of non-negative integers, where $k_i$ specifies the number of level-$i$ nodes on the path from the root to the node $x$. See Figure~\ref{fig:dependency-intro} for an illustration of a tree and the index scheme. We will explain why such indexing scheme helps our analysis at the end of the next subsection.

\begin{figure}
	\centering
	\scalebox{0.9}{
\begin{tikzpicture}[node distance={14mm}, thick, vertex/.style = {draw, thick, circle,inner sep=0pt,minimum size=15pt,outer sep=3pt}] 
  \node (l1) {level $1$};
  \node [above of = l1, yshift = -0.4cm] (l2) {level $2$};
  \node [above of = l2, yshift = -0.4cm] (l3) {level $3$};
  \node [above of = l3, yshift = -0.4cm] (l4) {level $4$};
  \node [above of = l4, yshift = -0.4cm] (l5) {level $5$};
  \node [above of = l4, yshift = -0.8cm] (ll) {($\ell = 4$)};
  \node [vertex, right of = l5] (n0) {$0$};
  \node [vertex, right of = l1, xshift = 1cm] (n1) {$1$} edge [<-] (n0);
  \node [vertex, right of = l4, xshift = 2cm] (n2) {$2$} edge [<-] (n0);
  \node [vertex, right of = l2, xshift = 3cm] (n3) {$3$} edge [<-] (n2);
  \node [vertex, right of = l3, xshift = 4cm] (n4) {$4$} edge [<-] (n2);
  \node [vertex, right of = l1, xshift = 5cm] (n5) {$5$} edge [<-] (n4);
  \node [vertex, right of = l2, xshift = 6cm] (n6) {$6$} edge [<-] (n4);
  \node [vertex, right of = l3, xshift = 7cm] (n7) {$7$} edge [<-] (n4);
  \node [vertex, right of = l2, xshift = 8cm] (n8) {$8$} edge [<-] (n7);
  \node [vertex, right of = l4, xshift = 9cm] (n9) {$9$} edge [<-] (n2);
  \node [vertex, right of = l3, xshift = 10cm] (n10) {$10$} edge [<-] (n9);
  \node [vertex, right of = l3, xshift = 11cm] (n11) {$11$} edge [<-] (n10);
  \node [vertex, right of = l5, xshift = 12cm] (n12) {$12$} edge [<-] (n0);

\end{tikzpicture}
}
	\caption{An example of a dependency tree $\bT$. For example, the index of $7$ is $(0, 0, 2, 1)$, since the path $0\gets 2 \gets 4\gets 7$ has two level-$3$ nodes (node $4$ and node $7$), and one level-$4$ node (node $2$). }
	\label{fig:dependency-intro}
\end{figure}
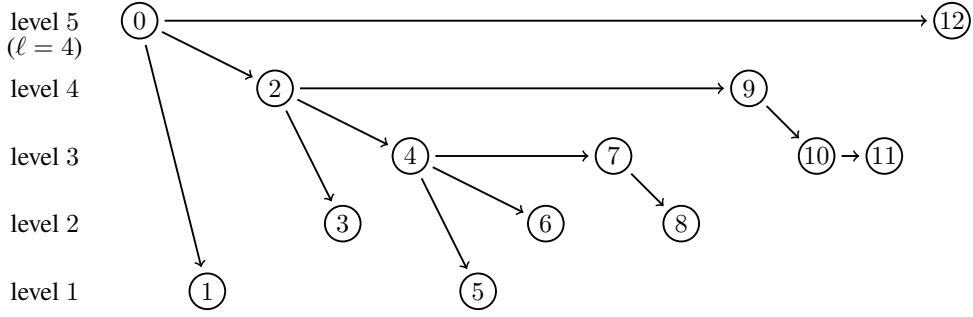

\subsection{A Coupling-based Approach Based on the Dependency Tree}
\label{sec:2.5}

We wish to mimic the strategy of the coupling-based proof in Section~\ref{sec:tech-alternative-analysis}. Instead of proving an $\Omega(1/n)$ lower bound for $\Pr[u,v \in f^*_{a,\bh}(\bs)]$, we will first consider how to prove an $\Omega(1/\sqrt{n})$ lower bound for $\Pr[u \in f^*_{a,\bh}(\bs)]$, which already contains all the important ideas. Then, we will briefly discuss additional technical challenges that arise for the analysis of the two-vertex case (computing $\Pr[u,v \in f^*_{a,\bh}(\bs)]$).

\newcommand{\bbT}{\bar{\bT}}

As in \Cref{sec:tech-alternative-analysis}, our strategy is again to carefully design an extended random walk $\bbw$ which is coupled with $\bw$, so that $\bw$ is always a prefix of $\bbw$. We will also build a corresponding extended dependency tree (``extended tree'' for short) $\bbT$ on $\bbw$. Note that $\bT$ would be a subtree of $\bbT$ as $\bw$ is a prefix of $\bbw$. We will similarly define $\nex$ and $\level$ values for nodes on extended tree $\bbT$, and these values would be consistent with $\bT$ on the corresponding subtree. We will sometimes use $\nex_{\bT}$ or $\nex_{\bbT}$ when there is a chance of confusion on which tree $\nex$ is referring to.

We hope to define an extended walk $\bbw$ that  maintains Condition~\eqref{bbw-same-cond} as before. For notational convenience, we slightly change Condition~\eqref{bbw-same-cond} to
\begin{equation}\label{eq:bbw-same-cond}
\text{For all $i$, if $a_{\nex_{\bbT}(\alpha)} \ne a_{\nex_{\bbT}(\beta)}$ for all $0 \le \alpha < \beta < i-1$, then $\bw_i = \bbw_i$.} 
\end{equation}
Note that since $\nex_{\bbT}(\alpha) = h(a_{\bbw_\alpha}) = \bbw_{\alpha + 1}$, the above is equivalent to~\eqref{bbw-same-cond}.

For an index $\vk \in \N^\ell$, we also let $\vidx^{\vk}$ denote the node indexed by $\vk$ in the dependency tree $\bbT$. 
Note that such a node may not exist in the tree; we use $\calF^{\vk}$ to denote the event that $\vidx^{\vk}$ exists in $\bbT$.
To lower bound $\Pr[u \in f^*_{a,\bh}(\bs)]$, we define the following two quantities analogous to~\eqref{eq:Etotal-Ebad-simple}:
\begin{equation}
E_{\sf total} = \sum_{\vk \in \N^\ell} \Pr\left[\calF^{\vk} \wedge \nex(\vidx^{\vk}) = u \right],
\label{eqn:etotalnew}
\end{equation}
and
\begin{equation}
E_{\sf bad} = \sum_{\substack{\vk \in \N^\ell\\ \vk^1 < \vk^2 \in \N^\ell}} \Pr[\calF^{\vk} \wedge \nex(\vidx^{\vk}) = u \wedge \calF^{\vk^1} \wedge \calF^{\vk^2} \wedge a_{\nex(\vidx^{\vk^1})} = a_{\nex(\vidx^{\vk^2})}]. \label{eqn:ebadnew}
\end{equation}
Note that our choice of $E_{\sf bad}$ in \eqref{eqn:ebadnew} is a bit different from that in 
Section~\ref{sec:tech-alternative-analysis},
as we consider a ``bad occurrence'' to happen whenever there is a collision in $\bbw$ (while in~\eqref{eq:Etotal-Ebad-simple} we restricted $t,q$ to the interval $[\ell]$). This will not be a problem if we choose $\ell$ carefully.

By an argument similar to that of Section~\ref{sec:tech-alternative-analysis}, we have that $\Pr[u \in f^*_{a,\bh}(\bs)] \ge E_{\sf total} - E_{\sf bad}$. Hence, the goal is to design $\bbw$ and $\bbT$ such that~\eqref{eq:bbw-same-cond} holds and the summands in $E_{\sf total}$ and $E_{\sf bad}$ can be bounded.

\highlight{Quick estimate: a sanity check.} To better understand the summands in $E_{\sf total}$ and $E_{\sf bad}$, let us first calculate these summands under the unrealistic assumption that all involved events are independent. 
Note that $\calF^{\vk}$ asserts the existence of node $\mu^{\vk}$ in the tree $\bbT$, which requires that there is a tree path starting from the root, and extending down the levels in a way that is consistent with the vector $\vk$, which specifies the number of level-$i$ nodes on this path for every $i\in [\ell]$.  
Observe that,  for every node $\beta$ of level $i$ on this path, we must have $g_i(a_{\bw_\beta})=1$, since otherwise $\beta$ would not have been on level $i$, and the path would not extend to reach $\beta$. 
Hence, the event $(\calF^{\vk} \wedge \nex(\vidx^{\vk}) = u)$ is equivalent to the conjunction of the two conditions:
\begin{compactitem}
\item[(1)] Let $\alpha = \vidx^{\vk}$. For all the $k_i$ level-$i$ nodes $\beta$ on the path from root to node $\alpha$, we have $g_i(a_{\bw_\beta}) = 1$, and 
\item[(2)] $r_{\level(\alpha)}(a_{\bw_{\alpha}}) = u$, 
\end{compactitem}
where Item (2) directly follows from our definition of $\nex(\cdot)$. Observe that the event in Item (2) 
happens with $1/n$ probability, and for each $\beta$ the event in 
Item (1) happens with $1/2$ probability.
Pretending that all these events are independent, we would have 
\begin{equation}\label{eq:total-count}
\Pr\left[\calF^{\vk} \wedge \nex(\vidx^{\vk}) = u \right] = \left(2^{|\vk|_1} \cdot n\right)^{-1},
\end{equation}
where $|\vk|_1$ is the $\ell_1$-norm of $\vk$. Similarly, pretending all events are independent, we would have 
\begin{equation}\label{eq:bad-count}
\Pr\left[\calF^{\vk} \wedge \nex(\vidx^{\vk}) = u \wedge \calF^{\vk^1} \wedge \calF^{\vk^2} \wedge a_{\nex(\vidx^{\vk^1})} = a_{\nex(\vidx^{\vk^2})}\right] = \left(2^{|\vk|_1 + |\vk^1|_1 + |\vk^2|_1} \cdot n^2\right)^{-1}.
\end{equation}
Observe that $\sum_{\vk \in \N^\ell} 2^{-|\vk|_1} = \sum_{\vk \in \N^\ell} 2^{-k_1} \cdot 2^{-k_2} \cdot \cdots \cdot 2^{-k_\ell} = ( \sum_{i \in \N} 2^{-i} )^\ell = 2^\ell$. Then, plugging~\eqref{eq:total-count} and~\eqref{eq:bad-count} into~\eqref{eqn:etotalnew} and~\eqref{eqn:ebadnew}, we would have $E_{\sf total}  =\Omega(2^{\ell} / n)$, and $E_{\sf bad}=O(2^{3\ell} /n^2)$. Setting $\ell = \frac{1}{2}\cdot \log(n) - c$ for a large enough constant $c$, we would have \begin{equation}
    E_{\sf total} - E_{\sf bad} = \Omega\left (\frac{1}{2^c \sqrt{n}} \right ) - O \left (\frac{1}{2^{3c} \sqrt{n}} \right )\ge \Omega(1/\sqrt{n}).
    \label{eq:good}
\end{equation}

Now we can explain why we chose such an indexing scheme: the existence of $\mu^{\vk}$ and the value of $\nex(\mu^{\vk})$ only depends on the ancestors of $\mu^{\vk}$ in the dependency tree. Since typically there are at most $\polylog(n)$ many ancestors, we can use the $\tau$-wise independence of $\bg_i$ and $\br_i$ to analyze the event $\calF^{\vk} \wedge \nex(\vidx^{\vk}) = u$.

\subsection{Designing the Extended Random Walk}
\label{sec:2.6}
Finally we explain how to design the extended random walk $\bbw$, by constructing an extended tree $\bbT$. We first aim to ensure Condition~\eqref{eq:total-count} holds, leading to a desired lower bound on $E_{\sf total}$. Handling $E_{\sf bad}$ is more challenging; we will discuss that later.

Specifically, we will ensure that~\eqref{eq:total-count} holds for all ``short'' vectors $\vk \in [\tau/4]^\ell$ and $u \in [n]$, where $\tau = O(\log n \log\log n)$ is the independence parameter of our pseudorandom hash function.\footnote{This is already enough for lower bounding $E_{\sf total}$, as the contribution of ``long'' (non-short) $\vk$ is negligible. Intuitively this is true because for a ``long'' $\vk$, we have $|\vk|_1 \ge \max_{i \in[\ell]} k_i > \tau/4$, the probability that $\mu^{\vk}$ exists in the tree is quite small ($2^{-|\vk|_1}$) assuming~\eqref{eq:total-count}. See Lemma~\ref{lem:elong-small} for a formal proof.}

\highlight{Establishing~\eqref{eq:total-count} by induction.} To show~\eqref{eq:total-count}, we wish to prove the following claim.

\begin{claim}\label{claim:induction-intro}
Fix an index $\vk$ corresponding to a level-$i$ node ($\vk = (0,\dots,k_i,k_{i+1},\dots,k_{\ell})$ and $k_i > 0$). Conditioned on the event $\calF^{\vk}$, with $1/2$ probability $\vidx^{\vk}$ has a level-$i$ child $\nu$ (\ie, for $\vk' = (0,\dots,k_i + 1,k_{i+1},\dots,k_{\ell})$, $\calF^{\vk'}$ holds) and $\nex(\nu)$ is distributed uniformly in $[n]$.
\end{claim}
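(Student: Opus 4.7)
The plan is to reduce the claim to a clean geometric calculation, enabled by a careful design of the extended tree $\bar{\bT}$. I would first reformulate the event $\calF^{\vk'}$. Because every root-to-node path in the dependency tree has non-increasing levels, $\alpha := \vidx^{\vk}$ has a level-$i$ child iff, walking forward from $\bar{\bw}_{\alpha+1}$ in the extended walk, the first node whose level is at least $i$ has level exactly $i$: any intermediate node automatically has level $<i$ and lies inside the subtree of $\alpha$, while encountering level $>i$ would mean we have exited $\alpha$'s subtree without ever producing a level-$i$ child. So the claim reduces to analyzing a sequential probing process that halts the first time $\level\ge i$, together with a uniformity statement about $\nex(\nu)$ at that stopping time.

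Second, I would invoke the design of $\bar{\bT}$ in the spirit of the basic extended walk of Section~\ref{sec:tech-alternative-analysis}: whenever the extension beyond $\alpha$ would force reuse of a previously queried hash input (either because $a_{\bar{\bw}_{\beta}}$ has already been evaluated on some $\bg_j$/$\br_j$ during the formation of the ancestors encoded by $\vk$, or because the original walk $\bw$ has already terminated), we substitute a fresh placeholder input, analogous to the $\star_t$ symbols, whose $\bg_j$- and $\br_j$-values are drawn independently. Combined with the $\tau$-wise independence of each $\bg_j$ and $\br_j$, this guarantees that for every new vertex probed after $\alpha$, the values $\bg_1,\dots,\bg_i$ and $\br_i$ evaluated at $a_{\bar{\bw}_{\cdot}}$ behave as fresh independent uniform samples conditioned on $\calF^{\vk}$, as long as the cumulative number of queries remains well within $\tau$.

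With freshness in hand the calculation is short. At each newly probed vertex $v$, independently of everything prior: $\level(v)=i$ iff $\bg_1(a_v)=\cdots=\bg_{i-1}(a_v)=0$ and $\bg_i(a_v)=1$, an event of probability $2^{-i}$; and $\level(v)>i$ iff $\bg_1(a_v)=\cdots=\bg_i(a_v)=0$, also of probability $2^{-i}$. Summing the geometric series across indices until the first one with $\level\ge i$, the conditional probability that the halting level equals $i$ is exactly $1/2$, which yields the existence half of the claim. For the uniformity half, on $\calF^{\vk'}$ we have $\nex(\nu)=\br_i(a_{\bar{\bw}_{\nu}})$, and since this is a fresh $\br_i$-query by construction, the $\tau$-wise independence of $\br_i$ hands us the uniform distribution on $[n]$.

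The main obstacle I foresee is not the geometric argument itself but the bookkeeping behind the substitution mechanism in $\bar{\bT}$: one has to spell out precisely which $\bg_j$- and $\br_j$-queries have been ``spent'' in certifying $\calF^{\vk}$, verify that substituting placeholder inputs preserves the parent relation so that the reformulation via first-hitting-level remains correct, and bound the total number of live queries along each realization so that $\tau$-wise independence really produces genuinely independent uniform samples. Once this ledger is pinned down---using $\tau=\Theta(\log n\log\log n)$ to comfortably absorb the $O(\log n)$ levels and the $O(\log n)$ depth of the paths involved in a short index $\vk$---the proof reduces to the clean geometric computation outlined above.
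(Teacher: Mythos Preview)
Your geometric calculation---that, given genuinely fresh samples, the first vertex of level $\ge i$ has level exactly $i$ with probability $1/2$---is correct in isolation, but the plan for supplying that freshness does not go through. The sub-walk from $\alpha$ to the next node of level $\ge i$ is not short: each step has probability $2^{-(i-1)}$ of reaching level $\ge i$, so its expected length is $2^{i-1}$, which for $i$ near $\ell$ is on the order of $\sqrt{n}$. Your per-vertex argument therefore asks for freshness of $\bg_1,\dots,\bg_{i-1}$ at $\sim 2^{i-1}$ distinct inputs, while each $\bg_j$ is only $\tau$-wise independent with $\tau=O(\log n\log\log n)$. The ``$O(\log n)$ depth'' you invoke bounds the number of ancestors of $\alpha$ (the queries already spent certifying $\calF^{\vk}$), not the length of the forward sub-walk; these are very different quantities, and no placeholder substitution that preserves the coupling with the true walk can reduce the number of genuinely distinct lower-level queries along that sub-walk.

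The paper's fix is to avoid analyzing the lower levels pointwise at all. One conditions on the \emph{entire} functions $\bg_{<i},\br_{<i}$; the sub-walk $\walk(\nex(\alpha),i-1)$ is then deterministic, with a deterministic last vertex $x=\extfindnode(\nex(\alpha),i-1)$, and the event $\calF^{\vk'}$ collapses to the \emph{single} query $\bg_i(\a_i(x))=1$ (with $\nex(\nu)=\br_i(\a_i(x))$). The $\star$-substitution in the extended walk is engineered level by level precisely so that this one input to $\bg_i,\br_i$ is distinct from the at most $k_i\le\tau/4$ inputs already revealed by $\calF^{\vk}$; then $\tau$-wise independence of $\bg_i,\br_i$ alone delivers the $1/2$ and the uniformity---no geometric series is needed. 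The induction across levels is closed by observing that the resulting probability does not depend on $(g_{<i},r_{<i})$, so $\calF^{\vk,\vb}_i$ is independent of $(\bg_{\le i-1},\br_{\le i-1})$ and one may descend to level $i-1$; this is exactly Lemma~\ref{lem:single-level} together with Lemma~\ref{lem:induction}.
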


Assuming that~\Cref{claim:induction-intro} holds, then~\eqref{eq:total-count} follows by a simple induction.\footnote{One also needs to show that with probability $1/2$, $\mu$ has a level-$j$ child with a uniformly random $\nex$-value, for all $j < i$. We ignore this part in the technical overview.} However, it is not hard to see that~\Cref{claim:induction-intro} does not hold for the original tree $\bT$. To understand the issue, let $\vk,\vk'$ be as in~\Cref{claim:induction-intro} and assume $\vidx^{\vk}$ exists (\ie, $\calF^{\vk}$ holds). We wish to better understand the conditions under which $\vidx^{\vk'}$ exists. Letting $r_{<i}$ and $g_{<i}$ denote $(r_1,\dotsc,r_{i-1})$ and $(g_1,\dotsc,g_{i-1})$ respectively, we additionally fix $(\br_{<i},\bg_{<i}) = (r_{<i},g_{<i})$ (we use $r_{<i} \wedge g_{<i}$ to denote this event for simplicity). 

\highlight{The existence condition of $\mu^{\vk'}$ in $\bT$.} Let $\alpha$ be the smallest-numbered node such that $\alpha >\mu^{\vk}$ and the level of $\alpha$ is greater than $i - 1$. Then $\mu^{\vk'}$ exists if and only if $\alpha$ exists and $\level(\alpha) = i$. Hence, our goal is to determine $\alpha$. By definition, to move from $\mu^{\vk}$ to $\alpha$ in the random walk $\bw$, one first move to the node corresponding to vertex $\nex(\mu^{\vk})$, and then keep going to the next node, until reaching a node with level at least $i$. 
The following algorithm implements this procedure and returns the simulated random walk, and we observe that it only uses the values of $(r_{\le i},g_{\le i})$. Note that we use $(\cdots)$ to denote a sequence of vertices, and use $\circ$ to denote the concatenation of two sequences.

\newcommand{\simalgo}{\mathsf{sim}}
\newcommand{\findnode}{\mathsf{Find}}
\newcommand{\extfindnode}{\mathsf{ExtFind}}

\begin{algorithm}[H]\label{algo:find}
	\DontPrintSemicolon
	\caption{Simulating the random walk from $s'$ until reaching a level greater than $i$}
	\SetKwProg{Fn}{Function}{}{}
	\Fn{$\simalgo(s',i)$}{
		\If{$i = 0$}{
			\Return $(s')$\tcp*{stop here since all nodes have levels at least $1$}
		}
		$s_0 \leftarrow s', j \leftarrow 0, w \leftarrow ()$ \tcp*{start from $s_0=s'$}
		\Repeat{$g_i(a_{x_j}) = 0$ }{
			$w \leftarrow w \circ \simalgo(s_{j},i-1)$ \tcp*{simulate from $s_{j}$ until hitting a node with level at least $i$}
			$x_{j+1} \leftarrow w_{|w|}$ \tcp*{vertex $x_{j+1}$ corresponds to the next node after $s_{j}$ with level $\ge i$}
			\If{$g_i(a_{x_{j+1}}) = 1$}{
				$s_{j+1} \leftarrow r_i(a_{x_{j+1}})$ \tcp{move to the next node since the node corresponding to $x_{j+1}$ has level $i$}
				$j\leftarrow j + 1$\\
			}
		}
		\Return $x_j$\tcp{stop here since the node corresponding to $x_j$ has level $> i$}
	}
	\Fn{$\findnode(s',i)$}{\Return the last vertex in the sequence returned by  $\simalgo(s',i)$}
\end{algorithm}

One can see that $\simalgo(\nex(\mu),i-1)$ generates the entire \emph{sub-walk} after $\mu$ until reaching the next node with level at least $i$. 
Now, the hope is to argue that, conditioning on $\calF^{\vk} \wedge r_{<i} \wedge g_{<i}$, we have \[\bg_i(\findnode(\nex(\mu),i-1)) = 1\] with probability $1/2$. 

\highlight{Two issues with the original random walk $\bw$.} There are two important issues with the argument above:
\begin{enumerate}
	\item We need to argue $\bg_i(\findnode(\nex(\mu),i-1))$ is independent from the event $\calF^{\vk} \wedge r_{<i} \wedge g_{<i}$.
	
	\item Even if $\bg_i(\findnode(\nex(\mu),i-1)) = 1$, it could be the case that $\bw$ stops during the simulation of $\simalgo(\nex(\mu),i-1)$ due to a collision\footnote{Indeed, if the simulation $\simalgo(\nex(\mu),i-1)$ detects a pair of collision (two nodes $\alpha,\beta$ such that $a_{\bw_\alpha} = a_{\bw_\beta}$), it would loop forever.}, and in that case $\vidx^{\vk'}$ also does not exist. 
\end{enumerate}

The second issue is fundamental, as it reveals the ``global dependency nature'' of the original random walk $\bw$: the event that $\bw$ stops depends on \emph{all} entries in $\bw$. 

\highlight{A locally simulatable extended random walk.} To circumvent the second issue, we wish for our extended random walk $\bbw$ to be \emph{locally simulatable}. That is, knowing that node $\mu$ exists and knowing the value of $\nex(\mu)$, together with fixed $r_{<i}$ and $g_{<i}$, one should be able to simulate the extended random walk $\bbw$ after $\mu$ until reaching a node with level at least $i$.
The second issue above amounts to the fact that $\simalgo(\mu,i)$ fails to locally simulate the walk $\bw$, since it does not have enough information to determine whether $\bw$ has already terminated during its simulation (it cannot determine whether there is a collision between the encountered node and the nodes before in $\bw$). 

Similar to the basic extended random walk in Section~\ref{sec:tech-alternative-analysis}, for each $i \in [\ell]$, we extend the domain of $\bg_i$ and $\br_i$ from $[m]$ to $[m] \cup \{\star_0, \star_1, \dots\}$ as follows: for each $t \in \N$, we sample $\bg_i(\star_t) \getsR \{0,1\}$ and $\br_i(\star_t) \getsR [n]$, where all samples are independent. 

Since the ``local'' simulation with respect to node $0$, $\nex(0) = \bs$ and fixed $r_{\le \ell}$ and $g_{\le \ell}$ is just the entire random walk, we will define our extended random walk by giving its local simulation in Algorithm~\ref{algo:extwalk-intro}, and we set $\bbw \leftarrow \walk(\bs,\ell,0)$.\footnote{see Section~\ref{sec:extendedwalk} for a detailed explanation of Algorithm~\ref{algo:extwalk-intro}.} Note that $\walk(\bs,\ell,0)$ also gives the extended tree $\bbT$ by specifying $\level$ and $\nex$.

\begin{algorithm}[h]\label{algo:extwalk-intro}
	\DontPrintSemicolon
	\caption{Algorithm for extended walk}
	\SetKwProg{Fn}{Function}{}{}
	\Fn{$\walk(s',i,\mu_0)$ \hspace{0.5cm}(where $s'\in [n], 0\le i \le \ell$)}{
		\lIf{$i = 0$}{\Return  $(s')$  \label{line-intro:corner}}
		$C_0 \gets \emptyset$, $\mathsf{star} \gets \text{false}$\\
		$j \gets 0, s_0 \gets s', w \gets ()$ 
		
		\Repeat{$g_i(y) = 0$}{
			$w \gets w \circ \walk(s_j, i - 1, \mu_0 + |w|)$ \label{walk-intro:w}   \\
			$x_{j + 1} \gets w_{|w|}$ \label{walk-intro:xj}\\
			$y,\mathsf{star} \gets \begin{cases} a_{x_{j + 1}}, \text{false} & \text{ if } a_{x_{j + 1}} \not\in C_j \land \lnot \mathsf{star} \\ \star_t, \text{true} & \text{ otherwise (where $t := \min \{t \in \N  \ \vert \ \star_t \not\in C_j\}$)}\end{cases}$\label{walk-intro:y}\\
			$\mu_{j + 1} \gets \mu_0 + |w|$ \label{walk-intro:a}\\
			\If{$g_i(y) = 1$ \label{walk-intro:if}}{
				$C_{j + 1} \gets C_j \cup \{y\}$, $s_{j + 1} \gets r_i(y)$ \label{walk-intro:Cj}\\
				$\level(\mu_{j + 1}) \gets i, \nex(\mu_{j + 1}) \gets r_i(y)$ \label{walk-intro:level}\\
				$j \gets j + 1$
			}
		}
		\Return $w$ \label{line-intro:return}
	}
	\Fn{$\extfindnode(s',i)$}{\Return the last vertex in the sequence returned by  $\walk(s',i,0)$}
\end{algorithm}

\highlight{Establishing Claim~\ref{claim:induction-intro} for $\bbT$.} One can inspect that the algorithm $\walk$ behaves the same as $\simalgo$ until a collision occurs at Line~\ref{walk-intro:y} (that is, there is a collision in $\{a_{x_1},a_{x_2},\dotsc,a_{x_{j+1}}\}$). That is, $\simalgo(\bs,\ell)$ and $\walk(\bs,\ell,0)$ behave the same until reaching a collision $a_{\bw_{j}} = a_{\bw_{k}}$ for $j\ne k$. This implies that~\eqref{eq:bbw-same-cond} holds.

To show Claim~\ref{claim:induction-intro} holds for $\bbw$ and $\bbT$, we still have to argue that $\bg_i(\extfindnode(\nex(\mu),i-1))$ is independent from the event $\calF^{\vk} \wedge r_{<i} \wedge g_{<i}$. Formally proving this requires a delicate induction, but the intuition is that $\calF^{\vk}$ depends on at most $k_{i}$ values in $\bg_{i}$ and $\br_{i}$, and the procedure $\walk$ carefully ensures that $\bg_i(\extfindnode(\nex(\mu),i-1))$ is never one of them. Hence, since $k_{i} \le \tau / 4$ and $\bg_{i}$ is $\tau$-wise independent, we have the desired independence.

\paragraph*{Handling $E_{\sf bad}$ and the two-vertex case.} We have just established Condition~\eqref{eq:total-count} which gives a lower bound for $E_{\sf total}$; now we briefly discuss how to obtain an upper bound on $E_{\sf bad}$ sufficient for proving the desired lower bound on $\Pr[u \in f^*_{a,\bh}(\bs)]$ using \eqref{eq:good}. One can first observe that~\eqref{eq:bad-count} cannot hold for all possible $\vk,\vk^1,\vk^2$, as there could be a collision between these three paths. In fact, let $K$ be the total number of nodes in the union of the paths corresponding to $\vk,\vk^1,\vk^2$. Then a revised estimate for $\Pr[\calF^{\vk} \wedge \nex(\vidx^{\vk}) = u \wedge \calF^{\vk^1} \wedge \calF^{\vk^2} \wedge a_{\nex(\vidx^{\vk^1})} = a_{\nex(\vidx^{\vk^2})}]$ should be $\left(2^{K} \cdot n^2\right)^{-1}$. By a careful calculation, one can show that this revised estimate is still enough to show $E_{\sf bad}$ is upper bounded by $O(2^{3\ell}/n^2)$, which is good enough for our purposes.

However, even establishing this revised estimate is quite challenging. Recall that $\calF^{\vk} \land \calF^{\vk^1} \land \calF^{\vk^2}$ is equivalent to the condition that, for every level-$i$ node $\beta$ on the paths from root to $\mu^{\vk},\mu^{\vk^1}$ or $\mu^{\vk^2}$, it holds that $\bg_i(a_{\bw_{\beta}}) = 1$. This amounts to $K$
events and we hope to show they are all independent. However, this is not true in general, as there can be a collision of $a_{\bw_{\beta}}$ between two different paths among these three paths. 
We overcome this issue by showing that for each ``bad node'' $\mu^{\vk}$, there must exist a 
``bad'' collision pair $\vk^1$ and $\vk^2$ on the extended walk without this issue. In such case one can establish a revised estimate; subtracting all these revised estimates from $E_{\sf good}$ would still yield a good lower bound on $\Pr[u \in f^*_{a,\bh}(\bs)]$.

Our proof for lower-bounding  $\Pr[u,v \in f^*_{a,\bh}(\bs)]$ follows the same template above, while using a more involved analysis to handle the dependency issues across the paths (we have to consider four paths now: two corresponding to $u$ and $v$, and the other two corresponding to the ``bad'' collision pair).

\newcommand{\supp}{\mathrm{supp}}
\newcommand{\zeroTon}[1]{\{0,1,\dotsc,#1\}}
\newcommand{\posN}{\mathbb{N}_{\ge 1}}
\newcommand{\vh}{\vec{h}}

\section{Preliminaries}
\label{sec:prelim}
Let $[n]$ denote $\{1,2,\dots,n\}$. We use $\N$ to denote the set of non-negative integers. We use $\tilde O(f)$ to denote $O(f\cdot \polylog f)$ in the usual way; $\tilde \Omega, \tilde \Theta$ are defined similarly.

We measure the space complexity of an algorithm by the maximum number of bits in its working memory: the read-only input is not counted. We measure the time complexity by the number of word operations (with word length $\Theta(\log n)$) in the word RAM model. 

For {\sc Element Distinctness} and {\sc List Disjointness}, we always assume the input arrays of length $n$ consist of positive integers bounded from above by $m=n^c+c$, where $c$ is a fixed constant independent of $n$. (We often abbrievate this by saying $m=\poly(n)$.) For an array $a\in [m]^n$, define the second frequency moment $F_2(a) = \sum_{i=1}^n \sum_{j=1}^n \mathbf{1}[a_i=a_j]$ as the number of colliding pairs $(i,j)$ (including the case where $i=j$). Note that $n\le F_2(a)\le n^2$.

We will use the following standard pseudorandomness construction.
	\begin{theorem}[Explicit $k$-wise independent hash family, \cite{carter1979universal}; see also {\cite[Corollary 3.34]{DBLP:journals/fttcs/Vadhan12}}]
		\label{thm:k-wise-indep}
		For $n,m,k$, there is a family of $k$-wise independent functions  $\caH \subseteq \{h \mid h\colon \{0,1\}^n\to \{0,1\}^m \}$ such that every function from $\caH$ can be described in $k\cdot \max\{n,m\}$ random bits, and evaluating a function from $\caH$ (given its description, and given an input $x \in \{0,1\}^n$) takes time $\poly(n,m,k)$.
	\end{theorem}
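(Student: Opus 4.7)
The plan is to use the classical polynomial-based construction of $k$-wise independent hash families over a finite field. Set $N = \max\{n,m\}$, and work in the finite field $\F_{2^N}$ (which has exactly $2^N$ elements, so we can canonically identify its elements with strings in $\{0,1\}^N$). To sample $\bh \getsR \caH$, I would draw $k$ independent, uniformly random coefficients $\ba_0,\ba_1,\dots,\ba_{k-1} \in \F_{2^N}$, and let the associated polynomial be $p(x) = \sum_{i=0}^{k-1} \ba_i x^i \in \F_{2^N}[x]$. To evaluate $\bh$ on an input $x \in \{0,1\}^n$, first pad $x$ with zeros (if $n < N$) to get an element of $\F_{2^N}$, compute $p(x)$ by Horner's rule, and finally truncate the output from $N$ bits down to $m$ bits. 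The seed length is then $k \cdot N = k \cdot \max\{n,m\}$ random bits, as required.

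The key step is verifying $k$-wise independence. For any $k$ distinct inputs $x_1,\dots,x_k \in \{0,1\}^n$ (which correspond to $k$ distinct elements of $\F_{2^N}$ under the above embedding) and any target values $y_1,\dots,y_k \in \F_{2^N}$, the Lagrange interpolation formula gives a unique polynomial of degree at most $k-1$ that maps each $x_i$ to $y_i$. Hence the map from the coefficient vector $(\ba_0,\dots,\ba_{k-1})$ to the evaluation vector $(p(x_1),\dots,p(x_k))$ is a bijection of $\F_{2^N}^k$ onto itself, so the evaluations are uniformly distributed on $\F_{2^N}^k$. Truncating each evaluation to $m$ bits preserves this uniformity, because each coordinate of a uniformly random vector is still uniform after a coordinate projection, and independence across the $k$ coordinates is preserved.

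For the efficiency claim, I would show that arithmetic in $\F_{2^N}$ can be done in $\poly(N)$ time once an irreducible polynomial of degree $N$ over $\F_2$ is fixed; such polynomials can be found deterministically in $\poly(N)$ time (e.g., by Shoup's algorithm), or one can simply hardwire one into the construction. Horner's rule then computes $p(x)$ using $k-1$ multiplications and additions in $\F_{2^N}$, giving total evaluation time $\poly(n,m,k)$.

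The main obstacle — really the only subtle point — is the case where the desired output length $m$ exceeds the input length $n$: then working in $\F_{2^n}$ would yield outputs that are too short, so one must inflate to $\F_{2^N}$ with $N = \max\{n,m\}$, which is the reason for the $k \cdot \max\{n,m\}$ (rather than $k \cdot n$) factor in the seed length. Everything else is a routine verification using the interpolation argument above.
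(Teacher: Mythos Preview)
Your proof is correct and is exactly the standard polynomial-based construction one finds in the cited references. Note, however, that the paper does not actually prove this theorem: it is stated in the preliminaries as a known result, with citations to Carter--Wegman and Vadhan's survey, and is used as a black box throughout. So there is no ``paper's own proof'' to compare against; your write-up simply supplies the standard argument that the paper takes for granted.
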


We often use bold font letters (\eg, $\bm{X}$) to denote random variables. We also use $\supp(\bm{X})$ to denote the support of random variable $\bm{X}$. 

For a set $U$, we often use $x \in_{\mathsf{R}} U$ to denote the process of selecting an element $x$ from $U$ uniformly at random.

\section{Properties of the Pseudorandom Family and their Implications}
\label{sec:property}

We will first define our pseudorandom hash family in~\cref{sec:construct-family}, and then give the proofs of our main theorems in \cref{sec:main-proof}, assuming some key technical lemmas that will be proved in subsequent sections.

\subsection{Construction of the Pseudorandom Family}
\label{sec:construct-family}

We first introduce some handy notation. For two functions $a,b\colon [m] \to ([n]\cup \{\star\})$, we naturally view them as ``restrictions'' (where $\star$ means ``unrestricted''), and define their \emph{composition} as
\[(a\bullet b)(x):= \begin{cases} b(x) & \text{ $b(x)	\neq \star$,} \\ a(x) & \text{otherwise.} \end{cases}\]
Observe that $(a\bullet b) \bullet c = a\bullet (b \bullet c)$.

\newcommand{\calH}{\mathcal{H}}
Let $\ell \le  \log n$ and $\tau = O(\log n\log \log n)$ be two positive integer parameters to be determined later. A sample $\bh \colon [m] \to ([n] \cup \{\star\})$ from $\calH_{\ell,m,n}$ is generated by an $\ell$-level iterative restriction process, defined as follows.

\begin{construction}{Drawing a sample $\bh$ from the pseudorandom hash function family $\calH_{\ell,m,n}$}
	
	\begin{enumerate}
		\item For each $i \in [\ell]$, independently draw two random functions $\bg_i \colon [m] \rightarrow \{0,1\}$ and $\br_i \colon [m] \rightarrow [n]$ from $\tau$-wise independent hash families (\Cref{thm:k-wise-indep}). Define $\bh_i\colon [m] \to [n] \cup \{\star\}$ to be 
		$$
		\bh_i(x) \coloneqq \begin{cases} \star &\text{if $\bg_i(x) = 0$,} \\ \br_i(x) &\text{if $\bg_i(x) = 1$.}\end{cases}$$ 
		
		\item Define $\bh$ to be $\bh_\ell \bullet \dots \bullet \bh_2 \bullet \bh_1$.
	\end{enumerate}
\end{construction}
Intuitively, the functions $g_i\colon [m] \to \{0,1\}$ control whether the value of $h(x)$ should be restricted at the $i$-th level, while the functions $r_i \colon [m] \to [n]$ determine the value that $h(x)$ is restricted to, at the $i$-th level. Note that $h(x) = \star$ if $g_1(x)=\cdots =g_\ell(x) = 0$, and $h(x) = r_j(x)$ if $g_1(x) = \cdots = g_{j-1}(x)=0$ and $g_j(x) =1$.

Since $m = \poly(n)$, the seed length for each $i\in [\ell]$ is $O(\log^2 n \log \log n)$ bits (\Cref{thm:k-wise-indep}), and hence the total seed length for describing the hash function $h$ is $O(\ell \log^2 n \log \log n) = O(\log^3 n \log \log n)$. Slightly abusing notation, we also use $\bh \getsR \caH_{\ell,m,n}$ to denote that $\bh$ is a hash function generated as above.

\paragraph{Digraph $G_{a,h}$ and reachable set $f^*_{a,h}(s)$.} Next we set up some notation. Recall that $a\in [m]^n$ is the input array. 
For a hash function $h\colon [m] \to [n]$, we define a mapping $f_{a,h} \colon [n] \to ([n] \cup \{\star\})$ by $f_{a,h}(x):= h(a_x)$.
This mapping naturally defines a $n$-vertex digraph $G_{a,h}$, where each vertex $x\in [n]$  has one outgoing edge $x \mapsto h(a_x)$ if $h(a_x)\neq \star$, and no outgoing edge if $h(a_x) = \star$.

 We use $f^*_{a,h}(s)$ to denote the set of vertices reachable in $G_{a,h}$ from $s$. When $a$ and $h$ are clear from context, we will simply write $f^*_{a,h}(s)$ as $f^*(s)$. Since each vertex in $G_{a,h}$ has at most one outgoing edge, note that the vertices in $f^*(s)$ form either a path or a ``rho-shaped'' component.

\subsection{Proofs of the Main Results}\label{sec:main-proof}

Let $a = (a_1,\dots,a_n)\in [m]^{n} $ be the read-only input array. The BCM Element Distinctness algorithm \cite{beame2013element} uses the following version of Floyd's cycle-finding algorithm performed on the digraph specified by $f_{a,h}$.  

\begin{lemma}[{\cite[Theorem 2.1]{beame2013element}}] \label{lem:cycle-finding}
Assuming oracle access to $f_{a,h}\colon [n] \to ([n] \cup \{\star\})$, there is a deterministic algorithm $\mathsf{COLLIDE}(s)$ which finds the pair $(u, v)\in [n]\times [n]$ (if it exists) such that $u, v \in f_{a,h}^*(s), u \neq v$ and $a_u = a_v$, in $O(|f^*_{a,h}(s)|)$ time and $O(\log n)$ space.\footnote{The original BCM algorithm works for $f_{a,h} \colon [n] \to [n]$. But it works equally well when some vertices $v$ may have no outgoing edges (\ie, $f_{a,h}(v) = \star$).}
\end{lemma}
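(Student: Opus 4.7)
The plan is to instantiate Floyd's classical tortoise-and-hare cycle-finding algorithm on the $1$-out digraph $G_{a,h}$, augmented with a final phase that identifies a pair of distinct vertices ``merging'' into the cycle-entry.

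First I would establish the structural picture. Since every vertex in $G_{a,h}$ has at most one outgoing edge, the forward walk $s = v_0, v_1, v_2, \ldots$ from $s$ is deterministic. Either (i) it eventually hits a vertex $v_i$ with $h(a_{v_i}) = \star$, in which case $f^*_{a,h}(s)$ is a simple path and, by uniqueness of successors, no pair $u \ne v$ in it can satisfy $h(a_u) = h(a_v)$; or (ii) the walk forms a rho-shape with tail length $T \ge 0$ and cycle length $C \ge 1$. Moreover, any colliding pair $(u,v) \subseteq f^*_{a,h}(s)$ with $a_u = a_v$ gives two distinct predecessors of the common successor $w := h(a_u) = h(a_v)$ on the walk, which forces $w$ to be the cycle-entry, one of $u,v$ to be the last tail vertex (at position $T-1$, in particular $T \ge 1$), and the other to be the last cycle vertex (at position $T + C - 1$). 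This observation is what ties the collision-finding task to locating the cycle-entry of the rho.

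The algorithm $\mathsf{COLLIDE}(s)$ then runs four phases, each using $O(T+C) = O(|f^*_{a,h}(s)|)$ oracle queries to $f_{a,h}$ and $O(\log n)$ bits of state (a constant number of vertex indices in $[n]$ and counters bounded by $|f^*_{a,h}(s)| \le n$): (1) a tortoise-and-hare walk from $s$ that either detects a $\star$ vertex and returns failure, or produces a meeting point inside the cycle; (2) from that meeting point, a single pointer looped back to itself to compute $C$; (3) two pointers at $s$ with one first advanced $C$ steps, then both advanced in lock-step, meeting exactly at the cycle-entry $w$ and simultaneously yielding $T$; (4) a re-walk of $T-1$ steps from $s$ to obtain the last tail vertex $u$, and a walk of $C-1$ steps from $w$ around the cycle to obtain the last cycle vertex $v$, followed by verifying $u \ne v$ and $a_u = a_v$ and returning $(u,v)$ if so, else failure.

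Correctness of phases (1)--(3) is the standard analysis of Floyd's algorithm. The step I would write out most carefully is phase (4): combined with the structural picture, when $f^*_{a,h}(s)$ actually contains a colliding pair, the vertices produced are exactly the two distinct predecessors of $w$ on the walk, so $a_u = a_v$ holds automatically; and when no colliding pair exists in $f^*_{a,h}(s)$, either phase (1) exits through a $\star$, or phase (4)'s verification fails (e.g., the degenerate case $T = 0$ yields $u = v$), so the algorithm returns failure. There is no real obstacle here beyond bookkeeping; the proof is a careful but routine implementation of Floyd's classical cycle-detection routine, which is why \cite{beame2013element} invoked it as a self-contained subroutine.
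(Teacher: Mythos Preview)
Your proposal is correct and complete. The paper does not give its own proof of this lemma; it is quoted directly from \cite{beame2013element} (their Theorem 2.1) and used as a black box, so there is no in-paper argument to compare against. Your write-up is exactly the standard instantiation of Floyd's tortoise-and-hare on the $1$-out digraph $G_{a,h}$, together with the key structural observation that any colliding pair in $f^*_{a,h}(s)$ must consist of the two predecessors of the cycle-entry, which is precisely what the BCM subroutine does. The only cosmetic point is that in Phase~(4) you should explicitly branch on $T=0$ and return failure before attempting the ``walk $T-1$ steps'' (rather than saying it ``yields $u=v$''), but this is pure bookkeeping and you already flag it.
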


In the BCM algorithm, $h$ was chosen from a truly random hash family. Our goal is to show that sampling $h$ from our pseudorandom hash family $\caH_{\ell,m,n}$ also suffices. 
To do this, we need the following two properties of our hash family $\caH_{\ell,m,n}$.

\begin{lemma}[Bounding the visit probability for a single vertex]\label{lem:warmup-hit-lower-bound}
	Suppose $\ell = \log n - \frac{\log F_2(a)}{2} - 10$.\footnote{We ignore all floors and ceilings for simplicity.} For every vertex $v \in [n]$, we have 
	\[
	\Pr_{\bh \getsR \calH_{\ell,m,n},\bs \getsR [n]}[v\in f^*_{a,\bh}(\bs)] = \Theta\left (\frac{1}{\sqrt{F_2(a)}}\right ).
	\]
\end{lemma}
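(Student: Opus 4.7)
The plan is to adapt the coupling / inclusion-exclusion argument sketched in Sections~\ref{sec:tech-alternative-analysis}--\ref{sec:2.6} to the present single-vertex case, using the extended random walk $\bbw$ built on the extended dependency tree $\bbT$ by the locally simulatable procedure $\walk$ (Algorithm~\ref{algo:extwalk-intro}). Setting
$$E_{\sf total} \;=\; \sum_{\vk \in \N^\ell} \Pr[\calF^{\vk} \wedge \nex(\vidx^{\vk}) = v]$$
and
$$E_{\sf bad} \;=\; \sum_{\substack{\vk \in \N^\ell \\ \vk^1 < \vk^2 \in \N^\ell}} \Pr[\calF^{\vk} \wedge \nex(\vidx^{\vk}) = v \wedge \calF^{\vk^1} \wedge \calF^{\vk^2} \wedge a_{\nex(\vidx^{\vk^1})} = a_{\nex(\vidx^{\vk^2})}],$$
the coupling property~\eqref{eq:bbw-same-cond} together with the inclusion-exclusion identity from Section~\ref{sec:tech-alternative-analysis} yields the sandwich
$$E_{\sf total} - E_{\sf bad} \;\le\; \Pr_{\bh \getsR \calH_{\ell,m,n},\, \bs \getsR [n]}[v \in f^*_{a,\bh}(\bs)] \;\le\; E_{\sf total} + O(1/n),$$
where the $O(1/n)$ on the right accounts for the trivial case $v = \bs$.

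The first main step is to prove, for every short index $\vk \in \N^\ell$ with all coordinates at most $\tau/4$ and every $v \in [n]$,
$$\Pr[\calF^{\vk} \wedge \nex(\vidx^{\vk}) = v] \;=\; (1 \pm o(1)) \cdot (2^{|\vk|_1}\, n)^{-1},$$
by induction on $|\vk|_1$ via Claim~\ref{claim:induction-intro}. The inductive step asserts that, after conditioning on $\calF^{\vk}$ and on a fixed choice of $(\br_{<i}, \bg_{<i})$, the probability that $\vidx^{\vk}$ has a level-$i$ child with any prescribed $\nex$-value $y \in [n]$ is $(1\pm o(1))/(2n)$. Its heart is that the ``$\star$-substitution'' on Line~\ref{walk-intro:y} of $\walk$ guarantees that the at most $k_i + 1 \le \tau/4 + 1$ inputs at which $\bg_i, \br_i$ are queried along the relevant subwalk are pairwise distinct, so the $\tau$-wise independence of $\bg_i, \br_i$ makes those queries look fully uniform. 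Summing over $\vk$ using $\sum_{\vk \in \N^\ell} 2^{-|\vk|_1} = 2^\ell$, together with a separate tail bound showing that ``long'' indices ($\max_i k_i > \tau/4$) contribute negligibly by a $\tau$-wise moment bound, gives $E_{\sf total} = (1\pm o(1))\cdot 2^\ell/n$. Since $2^\ell = n\cdot 2^{-10}/\sqrt{F_2(a)}$, this is $\Theta(1/\sqrt{F_2(a)})$; the matching upper bound of the lemma follows directly from the sandwich above.

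The main obstacle will be the bound $E_{\sf bad} = O(2^{3\ell}\, F_2(a) / n^3) = O(2^{-30}/\sqrt{F_2(a)})$; since this is at most $E_{\sf total}/2$, subtracting yields the desired lower bound $\Omega(1/\sqrt{F_2(a)})$. Naively one hopes for the estimate
$$\Pr[\calF^{\vk} \wedge \calF^{\vk^1} \wedge \calF^{\vk^2} \wedge \nex(\vidx^{\vk}) = v \wedge a_{\nex(\vidx^{\vk^1})} = a_{\nex(\vidx^{\vk^2})}] \;\approx\; \frac{F_2(a)}{2^{|\vk|_1 + |\vk^1|_1 + |\vk^2|_1}\cdot n^3},$$
but this fails whenever the three root-to-node paths share nodes (creating correlated $\bg_i/\br_i$ evaluations) or whenever the underlying vertices $\bw_\beta$ on different paths coincide. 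The plan is to refine the estimate to $F_2(a)/(2^K\, n^3)$, where $K$ is the number of distinct nodes in the union of the three paths, and then, as anticipated in Section~\ref{sec:2.6}, to exhibit for each ``bad'' $\vk$ a canonical collision pair $(\vk^1, \vk^2)$ whose path union avoids these degenerate overlaps. Summing the refined estimate using $\sum_{\vk} 2^{-|\vk|_1} = 2^\ell$ three times then delivers the required $O(2^{3\ell}\, F_2(a)/n^3)$ bound, completing the proof.
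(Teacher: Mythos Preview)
Your approach is essentially the paper's: the same extended walk, the same dependency-tree indexing, the same $E_{\sf total}-E_{\sf bad}$ decomposition, and the same structural lemma that replaces an arbitrary collision pair by one whose paths avoid overlaps with $p(\vk)$. Two small but real points need fixing.

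First, your justification for the $+O(1/n)$ in the upper sandwich is wrong. The event $v=\bs$ is already captured by $E_{\sf total}$ via the index $\vk=(0,\dots,0)$ (so $\nex(\vidx^{\vk})=\bs$). What actually needs to be added on the right is $\Pr[\elong]$, the probability that the tree contains a node with a long index; the paper bounds this by $n\ell/2^{\tau/4}$, which is indeed negligible. The upper bound goes through, just not for the reason you give.

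Second, your target estimate $E_{\sf bad}=O(2^{3\ell}F_2(a)/n^3)$ is incomplete: it silently assumes $\vk,\vk^1,\vk^2$ are pairwise distinct. When $\vk\in\{\vk^1,\vk^2\}$ the union of paths has only two leaves, so the ``$/n^3$'' becomes ``$/n^2$'' and the count of admissible $\nex$-values becomes $C_v:=\#\{i:a_i=a_v\}$ rather than $F_2(a)$. The paper handles this case separately and picks up an extra term $O(4^\ell C_v/n^2)$ in the $E_{\sf bad}$ bound. Since $C_v\le\sqrt{F_2(a)}$ and $4^\ell=\Theta(n^2/F_2(a))$, this term is still $O(1/\sqrt{F_2(a)})$ with a small constant, so the final conclusion survives; but your stated bound on $E_{\sf bad}$ as written is false without this additional term.
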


\begin{lemma}[Lower bound for collision probability]\label{lem:hit-lower-bound}
Suppose $\ell = \log n - \frac{\log F_2(a)}{2} - 10$. For every $u,v\in [n]$ such that $u \not= v$ and $a_u = a_v$, we have 
\[
\Pr_{\bh \getsR \mathcal{H}_{\ell,m,n}, \bs \getsR [n]} [u, v \in f^*_{a, \bh}(\bs)] \ge \Omega\left (\frac{1}{F_2(a)}\right ).
\]
\end{lemma}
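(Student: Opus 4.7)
The plan is to follow the coupling-based framework developed in \Cref{sec:tech-alternative-analysis,sec:2.5,sec:2.6}, generalizing the single-vertex analysis sketched there to handle two specified vertices $u, v$ with $a_u = a_v$. I would work with the extended random walk $\bbw$ and the extended dependency tree $\bbT$ produced by $\walk(\bs, \ell, 0)$, which by design satisfy the coupling property that $\bw$ is a prefix of $\bbw$ and $\bw_i = \bbw_i$ whenever no pair of earlier nodes in $\bbT$ has colliding $a$-values. Under this coupling, the event $u, v \in f^*_{a,\bh}(\bs)$ is implied by the existence of distinct tree nodes $\mu^{\vk^u}, \mu^{\vk^v} \in \bbT$ with $\nex(\mu^{\vk^u}) = u$, $\nex(\mu^{\vk^v}) = v$, and no $a$-collision among earlier $\bbw$-entries apart from the legitimate $(u, v)$ one.

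Following the inclusion-exclusion template of \Cref{sec:2.5}, I would define
\[
E_{\sf total} = \sum_{\vk^u, \vk^v \in \N^\ell} \Pr\bigl[\calF^{\vk^u} \wedge \calF^{\vk^v} \wedge \nex(\mu^{\vk^u}) = u \wedge \nex(\mu^{\vk^v}) = v\bigr]
\]
and, analogously to~\eqref{eqn:ebadnew}, an $E_{\sf bad}$ summing, over all four-tuples $(\vk^u, \vk^v, \vk^1, \vk^2) \in (\N^\ell)^4$ with $\{\mu^{\vk^1}, \mu^{\vk^2}\} \neq \{\mu^{\vk^u}, \mu^{\vk^v}\}$, the probability of the same event conjoined with $\calF^{\vk^1} \wedge \calF^{\vk^2} \wedge a_{\nex(\mu^{\vk^1})} = a_{\nex(\mu^{\vk^2})}$. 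An inclusion-exclusion argument then yields $\Pr[u, v \in f^*_{a,\bh}(\bs)] \ge E_{\sf total} - E_{\sf bad}$. For the lower bound on $E_{\sf total}$, I would establish a two-path variant of \Cref{claim:induction-intro}: conditioning on $\calF^{\vk^u}$ and the restriction data $(\br_{<i}, \bg_{<i})$ determining the ancestors of $\mu^{\vk^u}$, the local simulatability of $\walk$ together with the $\tau$-wise independence of $\bg_i, \br_i$ (queried at only $O(\tau)$ places along each path) implies that with probability $1/2$ the next indexed child exists and its $\nex$-value is uniform in $[n]$. Inducting along two disjoint paths — one terminating at $u$, the other at $v$ — gives each short pair $\vk^u, \vk^v \in [\tau/4]^\ell$ a contribution of $\bigl(2^{|\vk^u|_1 + |\vk^v|_1} n^2\bigr)^{-1}$, summing to $E_{\sf total} = \Omega(2^{2\ell}/n^2)$; the contribution of long indices is negligible by a tail estimate analogous to the one referenced in \Cref{sec:2.6}. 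With $\ell = \log n - \tfrac{1}{2}\log F_2(a) - 10$, this equals $\Omega(1/F_2(a))$ up to an absolute constant.

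The main obstacle will be the upper bound on $E_{\sf bad}$. As flagged in \Cref{sec:2.6}, the naive per-summand estimate $\bigl(2^{|\vk^u|_1 + |\vk^v|_1 + |\vk^1|_1 + |\vk^2|_1} \cdot n^3\bigr)^{-1}$ breaks down whenever two of the four root-to-leaf paths share a node in $\bbT$, because then two of the $\bg_i(a_{\bw_\beta}) = 1$ level-events coincide rather than being independent. My plan is to replace the exponent with $K$, the number of distinct nodes in the union of the four paths, yielding an upper bound $\bigl(2^K n^3\bigr)^{-1}$ on each summand, and to argue that every bad configuration admits a ``clean'' witness collision pair $(\vk^1, \vk^2)$ whose paths are disjoint from those of $\vk^u, \vk^v$ — so that the level events are genuinely independent and the $a$-collision contributes an independent factor of $n^{-1}$. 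A careful combinatorial enumeration of the ways the four paths embed in $\bbT$, together with the identity $\sum_{\vk} 2^{-|\vk|_1} = 2^\ell$, should then yield $E_{\sf bad} = O(2^{4\ell}/n^3) = O\bigl(n / (F_2(a)^2 \cdot 2^{40})\bigr)$, which is less than $\tfrac{1}{2} E_{\sf total}$ whenever $F_2(a) \ge n$ (which always holds). Combined with the $E_{\sf total}$ lower bound this completes the proof.
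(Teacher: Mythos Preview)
Your plan has a genuine gap in the $E_{\sf total}$ lower bound, and it is precisely the obstacle the paper identifies at the start of \Cref{sec:two}. You assert that ``inducting along two disjoint paths'' gives each short pair $(\vk^u,\vk^v)$ a contribution of $\bigl(2^{|\vk^u|_1+|\vk^v|_1} n^2\bigr)^{-1}$. But in the (single) extended tree $\bbT$ the two root-to-node paths $p(\vk^u)$ and $p(\vk^v)$ are neither disjoint nor independent: beyond their shared prefix, they can \emph{collide} in the sense of Definition~\ref{def:collision}, meaning there exist $\alpha\in p(\vk^u)\setminus p(\vk^v)$ and $\beta\in p(\vk^v)\setminus p(\vk^u)$ with $(\a(\alpha),\level(\alpha))=(\a(\beta),\level(\beta))$. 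When this happens the events $g_i(\a(\alpha))=1$ and $g_i(\a(\beta))=1$ are the \emph{same} event, not independent ones, so the inductive step of the two-path version of Claim~\ref{claim:induction-intro} fails and the per-term value is not what you claim. For $E_{\sf bad}$ this would only hurt an upper bound, but for $E_{\sf total}$ you need an exact (or lower) count, and the combinatorial trick of Lemma~\ref{lem:structure} only yields upper bounds; see the paper's discussion opening \Cref{sec:two}.

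The paper's fix is not a refinement of your argument but a genuinely different construction: for \emph{each} pair $(\vk^1,\vk^2)$ it defines a separate \emph{relaxed extended walk} $\bw^{\vk^1,\vk^2}$ (Algorithm~\ref{algo:relaxedextwalk}) in which the level-$i$ portion of $p^*(\vk^2)$ inherits the set $C^{i,\vk^1}$ from $p(\vk^1)$, so that by construction there is \emph{no} collision between $p(\vk^1)$ and $p(\vk^2)$ (\cref{obs:no-colision-k1-k2}). Only with this guarantee does the analogue of your inductive step go through --- see the Remark at the end of the proof of \cref{lem:single-level-2}, which explicitly says the last step ``crucially used the fact that we are working with $\extwalk_{\ell,m,n,a}^{\vk^1,\vk^2}$ instead of $\extwalk_{\ell,m,n,a}$.'' One then checks (\cref{lem:subset-of-extwalk-2}) that the relaxed walk still lower-bounds $\Pr[u,v\in f^*_{a,\bh}(\bs)]$, and the bad-case structural lemma (\cref{lem:structure-2}) must likewise be re-proved for the relaxed walk, which is substantially more delicate than in the one-vertex case. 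Your proposal is missing this entire mechanism.
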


\cref{lem:warmup-hit-lower-bound} is proved in~\cref{sec:one} and~\cref{lem:hit-lower-bound} is proved in~\cref{sec:two}.

\begin{remark}
	In~\cref{lem:warmup-hit-lower-bound}, we obtain both a lower bound and an upper bound for $\Pr_{\bh,\bs}[v\in f^*_{a,\bh}(\bs)]$, and we will see shortly that only the upper bound will be useful in the proof of~\cref{thm:main-ed};  the lower bound part of~\cref{lem:warmup-hit-lower-bound} can be seen as a warm-up for the proof of~\cref{lem:hit-lower-bound}, which requires to prove a lower bound for the more involved two-vertex case (see~\cref{sec:two}).
\end{remark}

Since $\ell \le \log n$, each hash function $h$ from our hash family $\caH_{\ell,m,n}$ can be described with a seed of $O(\log^3 n\log \log n)$ bits and can be evaluated in $\polylog(n)$ time and $O(\log^3 n\log \log n)$ space. Armed with the two lemmas above, we can prove our main theorems.

\begin{reminder}{\Cref{thm:main-ed}}
{\sc Element Distinctness} can be decided by a Monte Carlo algorithm in $\tilde{O}(n^{1.5})$ time, with $O(\log^3 n \log \log n)$ bits of workspace and no random oracle. Moreover, when there is a colliding pair, the algorithm reports one.
\end{reminder}

\begin{proof}
Given input $a\in [m]^n$, we first assume that we know the correct parameter $1\le \ell \le \log n$ required in \cref{lem:warmup-hit-lower-bound} and \cref{lem:hit-lower-bound}, and let $\mathcal{H}$ be the pseudorandom hash family $\mathcal{H}_{\ell,m,n}$.
 We run $O(n\log n)$ trials of the $\textsf{COLLIDE}(s)$ algorithm (\Cref{lem:cycle-finding}) on $f_{a,h}$, where each trial uses a fresh random $h\in \mathcal{H}$.
We return YES if no collisions are found, and return NO otherwise. It is evident that this algorithm only requires one-way access to randomness, and the description of each $h$ can be stored in low space.

We first analyze the running time of this algorithm. By \Cref{lem:cycle-finding}, the running time of each trial is $O(|f^*_{a,h}(s)|)$. By~\cref{lem:warmup-hit-lower-bound}, the expected running time of each trial is 
\[\E_{\bh \in \mathcal{H}, \bs \in [n]}[|f^*_{a,\bh}(\bs)|]\cdot \polylog(n)= \sum_{v\in [n]} \Pr_{\bh\in \caH,\bs\in [n]}[v\in f_{a,\bh}^*(\bs)]\cdot \polylog(n) \le  \frac{n\cdot \polylog n}{\sqrt{F_2(a)}},\]  
where the $\polylog(n)$ factor comes from the time complexity of evaluating $h(\cdot)$.  Hence, the expected total running time of $O(n\log n)$ trials is $\tilde O(n^2/\sqrt{F_2(a)}) \le \tilde O(n^{1.5})$.
By Markov's inequality, with at least $1-o(1)$ probability, the total running time is bounded by $\tilde O(n^{1.5})$.

To analyze the success probability, note that in a ``NO'' instance (\ie, the elements are not distinct) there are $F_2(a)-n>0$ pairs of $u,v\in [n]$ such that $u\neq v$ and $a_u=a_v$. By linearity of expectation,  \cref{lem:hit-lower-bound} implies that the success probability of each trial is 
\[\Omega\left (\frac{F_2(a)-n}{F_2(a)}\right ) \ge \Omega \left ( 1/n\right ).\]
Since the samples of $h\in \mathcal{H}$ are independent across the trials, the probability of not finding any collisions is at most $\left (1-\Omega(1/n)\right )^{n\log n}\le n^{-\Omega(1)}$. 
The proof then follows from a simple union bound.

Recall at the beginning of the proof, we assumed $\ell$ was known. To remove this assumption, our actual algorithm simply tries all possible $\ell \in \{1,2,\dots,\log n\}$ one by one 
	(and terminates a trial if the running time is already too long for a specific $\ell$), which only increases the overall running time by an $O(\log n)$ multiplicative factor.
\end{proof}

Now we similarly prove the performance of the List Disjointness algorithm.

\begin{reminder}{\Cref{thm:listdisj}}
There is a Monte Carlo algorithm for {\sc List Disjointness} such that, given input arrays $a=(a_1,\dots,a_n),b=(b_1,\dots,b_n)$ and an upper bound $p\ge F_2(a)+F_2(b)$, runs in $\tilde O(n\sqrt{p})$ time and uses $O(\log^3 n\log \log n)$ bits of workspace and no random oracle.
\end{reminder}

\begin{proof}
Similar to the proof of \cref{thm:main-ed}, we can assume that the correct $\ell$ required in \cref{lem:warmup-hit-lower-bound} and \cref{lem:hit-lower-bound} is known.
Let array $c$ be the concatenation of $a$ and $b$, which must satisfy $F_2(c) \le 2(F_2(a)+F_2(b))\le 2p$. We run $2p\log n$ trials of the $\textsf{COLLIDE}(s)$ algorithm (\Cref{lem:cycle-finding}) on $f_{c,h}$, each time using a fresh random $h\in \mathcal{H}$.
We return NO if we find a collision in $c$ where the two items come from $a$ and $b$ respectively. We return YES if the total time spent by the algorithm exceeds $\tilde O(n\sqrt{p})$ while no such collisions have been found.

To analyze the running time, we focus on the first $F_2(c)\log n$ trials executed by the algorithm.
By a similar argument in the previous proof, with at least $1-o(1)$ probability, the total running time of these $F_2(c)\log n$ trials is at most \[\tilde O\left (F_2(c) \cdot \frac{n}{\sqrt{F_2(c)}}\right )\le \tilde O(n\cdot \sqrt{F_2(c)}). \]

By \Cref{lem:hit-lower-bound}, the success probability of each trial is $\Omega(1/F_2(c))$ (note that in  the previous proof we had $F_2(a)-n$ pairs of ``good'' collisions $(u,v)$, while here it is possible that we have only one ``good'' pair, along with many ``bad'' pairs coming from the same input array). Then, the probability of finding a collision during the first $F_2(c)\log n$ trials is at least $1-n^{\Omega(1)}$.

By a union bound, we can show that, on a ``NO'' input, with at least $1-o(1)$ probability the algorithm will terminate in one of the first $F_2(c)\log n$ trials, without exceeding the time limit $\tilde O(n\sqrt{p})$.
\end{proof}

Now we similarly give a low-space algorithm for {\sc Set Intersection}, with near-optimal time complexity.

\begin{reminder}{\Cref{thm:setintersection}}
There is a randomized algorithm that, given input arrays $A=(a_1,\dots,a_n),B=(b_1,\dots,b_n)$ where $A$ and $B$ are both YES instances of {\sc Element Distinctness},  prints all elements in $\{a_1,\dots,a_n\}\cap \{b_1,\dots,b_n\}$ in $\tilde O(n^{1.5})$ time, with $O(\log^3 n\log \log n)$ bits of workspace and no random oracle. The algorithm prints elements in no particular order, and the same element may be printed multiple times.\end{reminder}

\begin{proof}
Similar to the proof of \cref{thm:main-ed}, we can assume that the correct $\ell$ required in \cref{lem:warmup-hit-lower-bound} and \cref{lem:hit-lower-bound} is known.

As before, we define $c$ to be the concatenation of $a$ and $b$. We run $n\log^2 n$ trials of the  $\textsf{COLLIDE}(s)$ algorithm (\Cref{lem:cycle-finding}) on $f_{c,h}$, each using a fresh random $h\in \mathcal{H}$. We print all the collisions found. Note these must be elements in $\{a_1,\dots,a_n\}\cap \{b_1,\dots,b_n\}$, by our assumption on the input: since $A$ and $B$ are YES instances of {\sc Element Distinctness}, all colliding pairs must have one element from $A$ and one element from $B$. 

By a similar argument as in the proof of \Cref{thm:main-ed}, with $1-o(1)$ probability the total running time is bounded by $\tilde O(n^{1.5})$. And for every element in the intersection, the probability that it is never printed is at most 
\[ \left (1-\Omega\left (\frac{1}{F_2(c)}\right )\right )^{n\log^2 n} \leq n^{-\omega(1)},\]
where we used $F_2(c) = \Theta(n)$ implied by the input assumption.
 The proof then follows from a simple union bound.
\end{proof}


\section{The Extended Walk and the Dependency Tree}
\label{sec:construction}

In this section, we present the definitions of the extended walk and the dependency tree along with several useful properties of them, which will play an important role in our proof in \cref{sec:one,sec:two}.

\subsection{The Extended Walk}
\label{sec:extendedwalk}

Letting $\bh \getsR \calH_{\ell,m,n}$ and $\bs \getsR [n]$, recall that the reachable set $f^*_{a,\bh}(\bs)\subseteq [n]$ consists of the vertices on the  following pseudorandom walk: starting from a random vertex $\bs$, we repeatedly move from the current vertex $x$ to $\bh(a_x)$, until $\bh(a_x) = \star$, in which case the walk ends. In the case when $f_{a,\bh}^*(\bs)$ contains a cycle, this walk has infinite length, which complicates our analysis.

To facilitate the analysis, we instead define an auxiliary walk $\bw$ that is \emph{jointly distributed} with $f_{a,\bh}^*(\bs)$.
The auxiliary walk $\bw$ starts from $\bs$, terminates with probability 1 (\Cref{lem:terminate}), and has several other nice properties that make it easier to analyze.
We will also see that $\bw$ is related to the reachable set $f^*_{a,\bh}(\bs)$ that we care about. In particular, it includes all the vertices in $f_{a,\bh}^*(\bs)$ (\Cref{lem:subset-of-extwalk}) as a subset, and for this reason we call $\bw$ an \emph{extended walk}. 

Let us formally define the extended walk. The extended walk $\bw$ is a sequence of vertices generated by the recursive process $\walk$ specified by~Algorithm~\ref{algo:extwalk}, which depends on the input array $a$ and the random variables $\bh, \bs$, as well as some additionally sampled random variables.
We summarize them in the following box.

\begin{construction}{The extended walk probability space $\extwalk_{\ell,m,n,a}$}
	\begin{itemize}
	    \item 
		 \textbf{Setup.} We sample the random variables as follows:
		 \begin{itemize}
		     \item Draw the starting vertex $\bs \getsR [n]$.
		     \item  Sample $ \{\bg_i\}_{i\in [\ell]}$ and $ \{\br_i \}_{i\in [\ell]}$, which together determine a sample $\bh \getsR \calH_{\ell,m,n}$ from the pseudorandom hash family, as described in \cref{sec:construct-family}. 
		     \item Then, for each $i \in [\ell]$, we extend the domain of $\bg_i$ and $\br_i$  from $[m]$ to $[m] \cup \{\star_0, \star_1, \dots\}$ as follows: for every $t\in \N$,  we sample $\bg_i(\star_t)\getsR \{0,1\},\br_i(\star_t)\getsR[n]$, where the samples are independent across all $\star_t$ and all levels $i\in [\ell]$.   
		 \end{itemize}

		\item \textbf{Generating the walk.} After fixing $\{g_i\}_{i\in [\ell]},\{r_i\}_{i\in [\ell]}$, we define a function $\walk(s',i, \mu_0)$  (where $s'\in [n]$ and $i \in \zeroTon{\ell}$) by the pseudocode in Algorithm~\ref{algo:extwalk}, which returns a sequence of vertices.\footnote{The sequence returned by the function $\walk(s',i, \mu_0)$ actually depends on the sampled $\{g_i\}_{i\in [\ell]},\{r_i\}_{i\in [\ell]}$ as well, but we choose not to make it explicit in the notation $\walk(s',i, \mu_0)$ for simplicity.} 
		
		Then, the extended walk $\bw$ is defined as $\walk(\bs,\ell,0)$.
		
	\end{itemize}
\end{construction}

\begin{algorithm}[h]\label{algo:extwalk}
\DontPrintSemicolon
	\caption{Algorithm for extended walk}
	\SetKwProg{Fn}{Function}{}{}
	\Fn{$\walk(s',i,\underline{\mu_0})$ \hspace{0.5cm}(where $s'\in [n], 0\le i \le \ell$)}{
	
	\lIf{$i = 0$}{\Return sequence $(s')$ \label{line:corner}}
	$C_0 \gets \emptyset$, $\mathsf{star} \gets \text{false}$\\
	$j \gets 0, s_0 \gets s', w \gets ()$ \tcc{ () means an empty sequence }
	
	\Repeat{$g_i(y) = 0$}{
		$w \gets w \circ \walk(s_j, i - 1, \underline{\mu_0 + |w|})$ \label{walk:w}  \tcc{ $\circ$ means concatenation of two sequences} 
		$x_{j + 1} \gets w_{|w|}$ \label{walk:xj}\tcc{We use 1-based indexing, so $w_{|w|}$ means the last vertex in $w$}
		$y,\mathsf{star} \gets \begin{cases} a_{x_{j + 1}}, \text{false} & \text{ if } a_{x_{j + 1}} \not\in C_j \land \lnot \mathsf{star} \\ \star_t, \text{true} & \text{ otherwise (where $t := \min \{t \in \N  \ \vert \ \star_t \not\in C_j\}$)}\end{cases}$\label{walk:y}\\
		\underline{$\mu_{j + 1} \gets \mu_0 + |w|$, $\a_i(\mu_{j + 1}) \gets y$, $\x_i(\mu_{j + 1}) \gets x_{j + 1}$} \label{walk:a}\\
		\lIf{$j > 0$}{\underline{$\rig(\mu_j) \gets \mu_{j + 1}$}} \label{walk:rig}
		\If{$g_i(y) = 1$ \label{walk:if}}{
			$C_{j + 1} \gets C_j \cup \{y\}$, $s_{j + 1} \gets r_i(y)$ \label{walk:Cj}\\
			\underline{$\level(\mu_{j + 1}) \gets i, \nex(\mu_{j + 1}) \gets r_i(y)$} \label{walk:level}\\
			$j \gets j + 1$
		}
	}
	\Return $w$ \label{line:return}
	}
\end{algorithm}

We remark that in the pseudocode of Algorithm~\ref{algo:extwalk}, all the underlined parts are used for assigning some additional attributes that are helpful for analysis, and have \emph{no effect on the return value} of the function $\walk(s',i, \underline{\mu_0})$. Therefore, when we only need the return value of it, we will simply write $\walk(s', i)$ and ignore all the underlined parts. 
The meanings and properties of these additionally assigned values will be explained in detail later in this section, and they will also be summarized in Table \ref{table:summary-one-vertex} in Section \ref{sec:one}.

Intuitively, in Algorithm~\ref{algo:extwalk}, $\walk(s',i)$ generates a walk starting from vertex $s'$, which travels along the outgoing edges specified by $\{g_{i'}\}_{1\le i'
\le i},\{r_{i'}\}_{1\le i'\le i}$, and stops upon encountering a vertex of level higher than $i$ (\ie, a vertex $x$ with $g_1(x)=g_2(x)= \dots = g_i(x)=0$). 
As depicted in Figure \ref{fig:structure},
the $\walk(s',i)$ process is implemented by recursive calls to $\walk(s_j,i-1)$ generating walks of levels up to $i-1$, which are to be concatenated together using edges $(x_{j+1} \to s_{j+1})$ on level $i$.
More importantly, the extended walk $\walk(s',i)$ uses some mechanism to avoid the infinite cycling that would occur in the actual walk $f_{a,\bh}^*(\bs)$: if a recursive call to $\walk(s_j,i-1)$ ends at some vertex $x_{j+1}$ whose value $a_{x_{j+1}}$ has already appeared for some previous $j'<j$, then we will not reuse this value when generating its outgoing level-$i$ edge (moreover, we will also disregard the $a_{x_{j''+1}}$ values for all future $j''$ during $\walk(s',i)$).

The third parameter $\mu_0$ of $\walk(s',i, \underline{\mu_0})$ simply keeps track of the current position relative to the start of the entire extended walk $\walk(s,\ell,0)$, and is useful for indexing the nodes on the walk.  Note that we stick to the convention of using Greek letters (\eg, $\alpha,\beta,\mu$) for indexing the walk.

\begin{figure}[h]
    \centering
\scalebox{1.8}{
\begin{tikzpicture}[thick]
  \node [draw,circle,fill, minimum size=3,inner sep=0pt, outer sep=3pt] (s0) {};
  \node [above of = s0, yshift = -20] {\tiny$s_0 = s'$}; 

  \node [right of = s0, draw,circle,fill, minimum size=3,inner sep=0pt, outer sep=3pt, xshift = 20] (x1) {} edge [<-, dotted] (s0);
  \node [above of = x1, yshift = -20] {\tiny$x_j$}; 
  \node [below of = x1, yshift = 20] {\tiny$\mu_j$}; 

  \node [right of = x1, draw,circle,fill, minimum size=3,inner sep=0pt, outer sep=3pt, xshift = -5] (s1) {} edge [<-] (x1);
  \node [above of = s1, yshift = -20] {\tiny$s_j$}; 
  \node [right of = s1, draw,circle,fill, minimum size=3,inner sep=0pt, outer sep=3pt, xshift = 10] (x2) {} edge [<-, dashed] (s1);
  \node [above of = x2, yshift = -20] {\tiny$x_{j + 1}$}; 
  \node [below of = x2, yshift = 20] {\tiny$\mu_{j + 1}$}; 
  \node [right of = x2, draw,circle,fill, minimum size=3,inner sep=0pt, outer sep=3pt, xshift = -5] (s2) {} edge [<-] (x2);
  \node [above of = s2, yshift = -20] {\tiny$s_{j + 1}$}; 
  
  \node [right of = s2, draw,circle,fill, minimum size=3,inner sep=0pt, outer sep=3pt, xshift = 20] (xx) {} edge [<-, dotted] (s2);
  
  \draw [blue, draw, dashed] (2.2,-0.6) rectangle (4.25,0.6);
  \node [blue] at (3.5, -0.8)  {\tiny $\mathsf{walk}(s_j, i - 1)$};
\end{tikzpicture}
}
    \caption{The structure of $\walk(s',i)$. Note that $x_{j + 1}$ is the last vertex of $\walk(s_j, i - 1)$. }
    \label{fig:structure}
\end{figure}
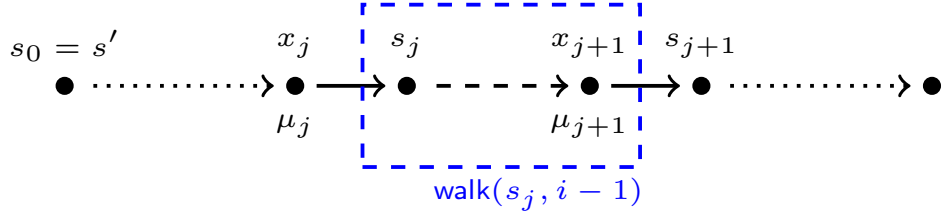

 To better understand Algorithm~\ref{algo:extwalk}, we start with several simple observations.
\begin{observation}
The return value of $\walk(s',i)$ must be a sequence of vertices starting with $s'$.
\label{lem:starts}
\end{observation}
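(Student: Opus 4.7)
The proof will go by straightforward induction on $i \in \{0, 1, \dots, \ell\}$.

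For the base case $i = 0$, Line~\ref{line:corner} of Algorithm~\ref{algo:extwalk} directly returns the sequence $(s')$, which trivially begins with $s'$.

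For the inductive step, assume the claim holds for $i-1$. Inspecting Algorithm~\ref{algo:extwalk} for parameter $i \geq 1$, the local variable $w$ is initialized to the empty sequence and is only ever modified on Line~\ref{walk:w}, where we append the output of a recursive call. In the first execution of the repeat-loop, we have $j = 0$, $s_0 = s'$, and $|w| = 0$, so this line evaluates to $w \gets \walk(s', i-1, \mu_0)$. By the inductive hypothesis, the returned sequence begins with $s'$, so after this assignment $w$ begins with $s'$. All subsequent executions of Line~\ref{walk:w} only concatenate additional vertices onto the end of $w$, so the first element of $w$ never changes. When the loop eventually exits and Line~\ref{line:return} returns $w$, the returned sequence still starts with $s'$.

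The only potential subtlety is arguing that the first iteration of the repeat-loop is indeed executed (so that $w$ is actually assigned something starting with $s'$ rather than being returned as the empty sequence). This is immediate because the loop is a \emph{repeat-until}, so at least one iteration always runs before the termination condition $g_i(y) = 0$ is tested. No obstacles are expected; the statement is essentially a sanity check on the recursive structure of $\walk$, and its role is to serve as a building block for the more substantive structural observations to follow.
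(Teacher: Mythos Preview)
Your proof is correct and follows exactly the approach the paper takes: a simple induction on $i$ using Line~\ref{line:corner}, Line~\ref{walk:w}, and Line~\ref{line:return}. The paper compresses this into a single sentence, but your spelled-out version (including the remark that the repeat-until loop executes at least once) is just an explicit rendering of the same argument.
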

\begin{proof}
This immediately follows from Line~\ref{line:corner}, Line~\ref{walk:w} and Line~\ref{line:return} by a simple induction on $i$.
\end{proof}

\begin{observation}
\label{lem:only-depend}
For every $s'\in [n], i \in \zeroTon{\ell}$, the return value of 
$\walk(s',i, \underline{\mu_0})$
 and the additional values assigned by 
 $\walk(s',i, \underline{\mu_0})$
  only depend on $s',\mu_0$, $\{g_{i'}\}_{1\le i'\le i},\{r_{i'}\}_{1\le i'\le i}$ and the input array $a$.
\end{observation}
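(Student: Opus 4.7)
The plan is to prove this by straightforward structural induction on the recursion depth $i \in \zeroTon{\ell}$, tracing through the pseudocode of Algorithm~\ref{algo:extwalk} line by line to verify that each quantity consumed or produced in one call to $\walk$ depends only on the claimed inputs.

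For the base case $i=0$, Line~\ref{line:corner} immediately returns the singleton sequence $(s')$ and performs no underlined assignments, so the claim is trivial: the output depends only on $s'$.

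For the inductive step, assume the statement for $i-1$ and examine one execution of $\walk(s',i,\underline{\mu_0})$. The initial values $j=0$, $s_0 = s'$, $w = ()$, $C_0 = \emptyset$, and $\mathsf{star} = \text{false}$ are determined by $s'$ alone. Each loop iteration involves: (a) a recursive call $\walk(s_j, i-1, \underline{\mu_0 + |w|})$ whose output and side effects, by the inductive hypothesis, depend only on $s_j$, $\mu_0+|w|$, $\{g_{i'}\}_{i'\le i-1}$, $\{r_{i'}\}_{i'\le i-1}$, and $a$; (b) the extraction $x_{j+1} = w_{|w|}$, which depends only on the current $w$; (c) the computation of $y$ at Line~\ref{walk:y}, which reads $a_{x_{j+1}}$ (or a fresh $\star_t$ determined from the locally-maintained $C_j$), depending only on $a$ and prior local state; (d) the underlined assignments at Lines~\ref{walk:a}--\ref{walk:rig}, which are functions of $\mu_0$, $|w|$, $y$, and $x_{j+1}$; (e) the test $g_i(y)$ on Line~\ref{walk:if} and the updates $s_{j+1} \leftarrow r_i(y)$, $\level(\mu_{j+1}) \leftarrow i$, $\nex(\mu_{j+1}) \leftarrow r_i(y)$, which only consult $g_i$ and $r_i$; and (f) the loop-termination check, which again depends only on $g_i(y)$. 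Hence every value read or written during the iteration is a function of the claimed inputs, and by induction so is the entire return value and the collection of underlined attributes.

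Chaining these observations together by induction on $i$ yields the conclusion. The only minor subtlety is ensuring that no random variable from a level $i' > i$ sneaks in through the recursion or the auxiliary attributes, but the inductive hypothesis directly rules this out: the recursive call is to level $i-1$, and all lines referencing randomness inside the body of the level-$i$ call explicitly invoke $g_i$ or $r_i$ only. There is no substantive obstacle here; the main care needed is just to enumerate each underlined assignment and each randomness query to confirm that none reaches above level $i$.
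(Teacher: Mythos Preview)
Your proof is correct and takes essentially the same approach as the paper: induction on $i$, with the trivial base case $i=0$ and the inductive step observing that the body of $\walk(s',i,\mu_0)$ only queries $g_i,r_i$ directly while the recursive calls are at level $i-1$. The paper's proof is simply a terser version of your argument.
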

\begin{proof}
The observation is trivial when $i = 0$. When $i \ge 1$, in $\walk(s',i)$, the algorithm only examined the values of $g_i(\cdot)$ and $r_i(\cdot)$. The recursive calls $\walk(\cdot,i')$ made by $\walk(s',i)$ can only have lower levels $i'\le i$, and hence only depends on the values of $g_{i'}(\cdot)$ and $r_{i'}(\cdot)$.
\end{proof}

The following lemmas says that with probability $1$, $\walk(\bs,\ell)$ terminates.

\begin{lemma}
\label{lem:terminate}
With probability $1$, $\bw = \walk(\bs,\ell)$ has finite length.
\end{lemma}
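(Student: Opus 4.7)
The plan is to induct on the level parameter $i$, showing that for every $s' \in [n]$ and $\mu_0 \in \N$, the call $\walk(s', i, \mu_0)$ returns a finite-length sequence with probability $1$. The final claim then follows by taking $i = \ell$, $s' = \bs$ and $\mu_0 = 0$.

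The base case $i = 0$ is immediate from Line~\ref{line:corner}, which returns a singleton. For the inductive step, fix $i \ge 1$ and assume the claim for $i-1$. The body of $\walk(s', i, \mu_0)$ is a single \textsf{Repeat} loop, so it suffices to show (a) every recursive call $\walk(s_j, i-1, \cdot)$ terminates almost surely, and (b) the outer loop executes a.s. finitely many iterations. Part (a) is exactly the inductive hypothesis. For (b), I will partition the loop into a ``pre-star'' phase (where $\mathsf{star} = \text{false}$) and a ``star'' phase (where $\mathsf{star} = \text{true}$), and show each is finite a.s.

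For the pre-star phase, the key combinatorial observation is that the values $a_{x_{1}}, a_{x_{2}}, \ldots$ used as $y$ must all be distinct: if the loop survived iteration $j'$ with $y_{j'} = a_{x_{j'+1}}$, then $g_i(y_{j'}) = 1$, so $y_{j'}$ was inserted into $C_{j'+1} \subseteq C_j$ for all later $j$; hence a repetition $a_{x_{j+1}} = a_{x_{j'+1}}$ would immediately flip $\mathsf{star}$ to true by Line~\ref{walk:y}. Since there are only $m$ possible values in $[m]$, within $m$ iterations the loop either terminates (because $g_i(a_{x_{j+1}}) = 0$) or enters star mode. Once in star mode, the chosen index $t = \min\{t : \star_t \notin C_j\}$ is a fresh symbol whose hash value $\bg_i(\star_t)$ has never been queried before and is by construction an independent uniform bit in $\{0,1\}$, so each star-mode iteration independently terminates the loop with probability $1/2$. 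Hence the number of outer-loop iterations $N$ is dominated by $m$ plus a geometric random variable, and in particular $N < \infty$ a.s.

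To conclude, let $E$ be the almost-sure event that every recursive call invoked by $\walk(s', i, \mu_0)$ returns in finite length (applying the inductive hypothesis, and a countable union bound over the at most countably many potential calls). On $E \cap \{N < \infty\}$, the returned sequence is a finite concatenation of finite sequences and therefore finite. The main obstacle I anticipate is being careful about independence: the outer-loop randomness $(\bg_i, \br_i)$ is independent of the random variables governing the recursive calls (which depend only on levels $< i$ by \Cref{lem:only-depend}), so conditioning on a realization of the level-$<i$ randomness within $E$ does not affect the freshness argument for the $\star_t$ symbols used at level $i$. This independence is what legitimizes treating the geometric tail and the ``fresh $\star_t$'' argument as independent of the inductive analysis; once that is in place, the rest is a routine bookkeeping.
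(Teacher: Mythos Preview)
Your proposal is correct and follows essentially the same approach as the paper: induct on $i$, use the base case from Line~\ref{line:corner}, observe that the pre-star phase lasts at most $m$ iterations (since the $a_{x_{j+1}}$ values placed into $C_j$ are distinct), and then use the independence of the fresh $\bg_i(\star_t)$ bits to show the star phase terminates almost surely. The paper's write-up is slightly terser---it simply notes that with probability $1$ some $t$ has $\bg_i(\star_t)=0$, and it handles the recursive calls via a union bound over the finitely many possible starting points $s'\in[n]$ rather than your countable union over potential calls---but the substance is the same.
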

\begin{proof}
We will prove a stronger statement that for every $s' \in [n]$ and $i \in \zeroTon{\ell}$, $\walk(s', i)$ has finite length with probability $1$, by an induction on $i$. 

For the base case $i=0$, this clearly holds due to Line~\ref{line:corner}. Now, suppose the inductive hypothesis holds for $i - 1$. We fix an $s' \in [n]$ and consider $\walk(s',i)$, and it follows from the inductive hypothesis that all recursive calls to $\walk(s_j, i - 1)$ terminate with probability $1$.

Next, we consider the following two cases: (1) the repeat loop in $\walk(s', i)$ terminates within $m$ rounds or (2) it executes more than $m$ rounds. In Case (1), $\walk(s',i)$ terminates with probability $1$, so from now on we focus on Case (2). In this case, $y$ eventually becomes $\star_t$ for some $t \in \N$ at Line~\ref{walk:y} since $C_j\subseteq [m]$ when $\textsf{star} = \text{false}$. After that, since $g_i(\star_t)\getsR \{0,1\}$ are independently sampled across all $t\in \N$, with probability $1$ there is $t \in \N$ for which $g_i(\star_t) = 0$. Hence, the \textbf{repeat} loop terminates with probability $1$. Finally, we simply apply a union bound over all starting points $s' \in [n]$, which proves our induction hypothesis for $i$.
\end{proof}

\paragraph*{Assigned values.}  Now, let us elaborate on the values $\a_i(\vidx), \x_i(\vidx), \level(\vidx), \nex(\vidx), \rig(\vidx)$ assigned in the underlined lines in Algorithm~\ref{algo:extwalk}. To begin with, we first explain the role of $\mu_0$. Roughly speaking, $\mu_0$ is the number of vertices before $\walk(s', i)$ in the final extended walk $\bw$. Formally, we have the following lemma. 

\begin{lemma} \label{lem:mu0}
Fix $i \in \zeroTon{\ell}$. Consider all function calls $\walk(\cdot, i, \cdot)$ during the generation of $\bw$. Suppose they are $\walk(s^1, i, \mu^1_0), \walk(s^2, i, \mu^2_0), \dots, \walk(s^t, i, \mu^t_0)$ sorted by increasing order of $\mu^j_0$ for $j \in [t]$. The following hold:

\begin{enumerate}[label=(\arabic*)]
	\item $\bw = \walk(s^1, i, \mu^1_0) \circ \walk(s^2, i, \mu^2_0) \circ \cdots \circ \walk(s^t, i, \mu^t_0)$, and $\mu^j_0 = \sum_{t' = 1}^{j - 1} |\walk(s^{t'}, i)|$.
	
	\item For each $\mu \in  [|\bw|]$, there is a unique function call $\walk(s^j, i, \mu^j_0)$ such that $\mu^j_0 < \mu \leq \mu^j_0 + |\walk(s^j, i)|$.
\end{enumerate}
\end{lemma}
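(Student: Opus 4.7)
The plan is to prove both parts together by downward induction on $i$, from $i=\ell$ down to $i=0$. The guiding invariant, which I would state explicitly and then verify, is that whenever the algorithm issues a recursive call $\walk(\cdot, i, \mu_0)$ during the generation of $\bw$, the third argument $\mu_0$ equals the number of vertices in $\bw$ appearing before the output of that call. Once this invariant is established, both claims drop out essentially for free.

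For the base case $i = \ell$, the only level-$\ell$ call is $\walk(\bs, \ell, 0)$ made at the top, and by definition $\bw = \walk(\bs, \ell, 0)$; hence $t = 1$, $s^1 = \bs$, $\mu^1_0 = 0$, and the two parts hold trivially. Finiteness of the enumeration is guaranteed by \Cref{lem:terminate}, so it makes sense to order the calls by increasing $\mu_0$.

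For the inductive step, assume the lemma at level $i$ and consider level $i-1$. Every level-$(i-1)$ call is issued inside some level-$i$ call $\walk(s^j, i, \mu^j_0)$ at Line~\ref{walk:w}. I would first analyze a single execution of $\walk(s^j, i, \mu^j_0)$: since the only line that ever modifies the local variable $w$ is Line~\ref{walk:w}, a straightforward induction on the number of repeat-loop iterations shows that after $k$ iterations $w$ equals the concatenation $\walk(s_1, i-1) \circ \cdots \circ \walk(s_k, i-1)$ of the first $k$ recursive outputs. Consequently, the third argument of the next recursive call, namely $\mu^j_0 + |w|$, equals $\mu^j_0$ plus the total length of the sub-walks returned so far — which is precisely the invariant in local form — and the returned value of $\walk(s^j, i, \mu^j_0)$ is itself the concatenation of all its level-$(i-1)$ sub-calls.

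To pass from local to global, I would invoke the inductive hypothesis $\bw = \walk(s^1, i, \mu^1_0) \circ \cdots \circ \walk(s^t, i, \mu^t_0)$ with $\mu^j_0 = \sum_{t' < j} |\walk(s^{t'}, i)|$, and substitute each factor by its own level-$(i-1)$ decomposition just derived. Chaining the local offsets (using $\mu^{j+1}_0 = \mu^j_0 + |\walk(s^j, i)|$) yields a factorization of $\bw$ as a concatenation of all level-$(i-1)$ outputs, with each third argument equal to the partial sum of the lengths of the previous level-$(i-1)$ sub-walks; this is Part~(1) at level $i-1$. Part~(2) is then immediate from Part~(1): the factorization covers $[|\bw|]$ by pairwise disjoint half-open intervals $(\mu^j_0,\, \mu^j_0 + |\walk(s^j, i-1)|]$, so each index $\mu$ belongs to exactly one. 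The only bookkeeping subtlety I expect — and the main obstacle, though a small one — is the final iteration of the repeat loop: when $g_i(y) = 0$ the loop exits without issuing a further recursive call, and one must verify that nothing is appended to $w$ past the last $\walk(s_k, i-1)$. Since Line~\ref{walk:w} is the sole modifier of $w$, this check is routine.
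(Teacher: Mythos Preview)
Your proposal is correct and follows essentially the same approach as the paper: downward induction on $i$, with the inductive step decomposing each level-$i$ call into the concatenation of its level-$(i-1)$ sub-calls (the paper's equations \eqref{walk-eq1}--\eqref{walk-eq2}) and chaining the offsets via the inductive hypothesis. Your added remarks about the loop invariant and the final iteration are sound bookkeeping that the paper elides; the only quibble is that the sub-walk indexing in Algorithm~\ref{algo:extwalk} starts at $s_0$, so ``after $k$ iterations'' the concatenation is $\walk(s_0,i-1)\circ\cdots\circ\walk(s_{k-1},i-1)$, but this is a harmless notational slip.
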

\begin{proof}
We prove Item~(1) by an induction on $i$. By definition, when $i = \ell$, we have $\bw = \walk(\bs, \ell, 0)$. This proves the base case. Now, suppose the statement holds for $i$. We prove it also holds for $i - 1$.

For each $\walk(s^j, i, \mu^j_0)$, by Line \ref{walk:w}, we have
\begin{equation} \label{walk-eq1}
\walk(s^j, i, \mu^j_0) = \walk(s^j_1, i - 1, \mu^{j, 1}_0) \circ \walk(s^j_2, i - 1, \mu^{j, 2}_0) \circ \cdots \circ \walk(s^{j}_{t_j}, i - 1, \mu^{j, t_j}_0),
\end{equation}
and for every $t' \in [t_j]$, it holds that
\begin{equation} \label{walk-eq2}
\mu^{j, t'}_0 = \mu^j_0 + \sum_{q = 1}^{t' - 1} |\walk(s^{j}_{q}, i - 1)|.
\end{equation}

From the induction hypothesis, it follows that $$\bw = \walk(s^1, i, \mu^1_0) \circ \walk(s^2, i, \mu^2_0) \circ \cdots \circ \walk(s^t, i, \mu^t_0),$$ and $\mu^j_0 = \sum_{t' = 1}^{j - 1} |\walk(s^{t'}, i)|$. It also holds for $i - 1$ by expanding each $\walk(s^j, i, \mu^j_0)$ using \eqref{walk-eq1} and \eqref{walk-eq2}. 

Item~(2) then follows directly from the definition of $\mu^j_0$ and Item~(1). 
\end{proof}

The following lemma explains the role of $\nex(\cdot)$. We additionally define $\nex(0)$ to be the starting vertex $w_1 = s$ of the extended walk $w$.
\begin{lemma} \label{obs:w-nex}
For every $w \in \supp(\bw)$, we have $w_{\mu + 1} = \nex(\mu)$ for all $\mu \in [0, |w| - 1]$. 
\end{lemma}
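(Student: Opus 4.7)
The plan is to read off both halves of the lemma---definedness of $\nex(\mu)$ for $\mu \in [1,|w|-1]$ and the equality $\nex(\mu) = w_{\mu+1}$---directly from Algorithm~\ref{algo:extwalk}, essentially by induction on its recursive structure. The boundary case $\mu = 0$ is immediate: by Observation~\ref{lem:starts}, the sequence $\bw = \walk(\bs,\ell,0)$ begins with $\bs$, and by convention $\nex(0) = \bs$, so $w_1 = \nex(0)$.

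For $\mu \geq 1$, the first step is to pin down exactly when $\nex(\mu)$ gets assigned. Inspecting Line~\ref{walk:level}, this can happen only inside some invocation $\walk(s',i,\mu_0)$ at an iteration $j$ with $\mu_{j+1} = \mu$ and $g_i(y) = 1$; the assigned value is then $\nex(\mu) = r_i(y) = s_{j+1}$. Because $g_i(y)=1$, the repeat-loop proceeds to iteration $j+1$, whose opening action at Line~\ref{walk:w} is $w \gets w \circ \walk(s_{j+1}, i-1, \mu_0 + |w|)$, and at that moment the value $\mu_0 + |w|$ is exactly $\mu_{j+1} = \mu$. By Observation~\ref{lem:starts} the appended sub-call's output starts with $s_{j+1}$, and by Lemma~\ref{lem:mu0} this appended sub-call contributes positions $\mu+1, \mu+2, \ldots$ to the final walk $\bw$. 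Hence whenever $\nex(\mu)$ is set, we get $w_{\mu+1} = s_{j+1} = \nex(\mu)$, which is the equality we need.

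It remains to verify that $\nex(\mu)$ is in fact assigned for every $\mu \in [1,|w|-1]$. The plan is to prove, by induction on $i \in \{0,1,\ldots,\ell\}$, the following strengthened claim: every invocation $\walk(s',i,\mu_0)$ occurring during the execution, with return value $w$, assigns $\nex(\mu)$ for every $\mu \in [\mu_0+1, \mu_0+|w|-1]$. The base case $i=0$ is vacuous since $|w|=1$. For the inductive step, I will classify such a $\mu$ relative to the sub-calls $\walk(s_j,i-1,\cdot)$ aggregated by the repeat-loop: if $\mu$ equals $\mu_{j+1}$ for some $j$ strictly less than the terminating iteration index $J$, then $g_i(y) = 1$ at iteration $j$ (otherwise the loop would end there, forcing $\mu = \mu_0 + |w|$ and violating $\mu \leq \mu_0+|w|-1$), and Line~\ref{walk:level} sets $\nex(\mu)$; otherwise $\mu$ lies strictly inside a single sub-call at a non-last position, and the induction hypothesis at level $i-1$ finishes the job. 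The main (mild) obstacle is precisely this boundary argument---ruling out the terminating iteration $j=J$---which is handled exactly by the upper bound $\mu \leq \mu_0 + |w|-1$ in the strengthened claim; the rest is a routine unfolding of the pseudocode.
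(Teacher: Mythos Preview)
Your proposal is correct and follows essentially the same approach as the paper: trace the moment $\nex(\mu)$ is assigned at Line~\ref{walk:level}, observe that the next loop iteration begins by appending $\walk(s_{j+1}, i-1, \cdot)$ whose first vertex is $s_{j+1} = \nex(\mu)$, and conclude via Observation~\ref{lem:starts}. Your argument is more thorough in that you also explicitly verify, by induction on the recursion level, that $\nex(\mu)$ is indeed assigned for every $\mu \in [1,|w|-1]$, a point the paper leaves implicit.
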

\begin{proof}
Consider the moment when $\nex(\mu)$ is assigned a value (Line~\ref{walk:level} in Algorithm~\ref{algo:extwalk}), which happens inside the if-body of $g_i(y)=1$.
At this point, we have $\mu = \mu_{j+1} = \mu_0 + |w|$,  and we assign $r_i(y)$ to both $\nex(\mu)$ and $s_{j + 1}$. After that, since $g_i(y) = 1$, the repeat loop must execute another round with $j_{\mathsf{new}} \gets j+1$. At the beginning of the new round, we concatenate 
 $w$ with $\walk(s_{j_{\mathsf{new}}}, i - 1)$, which starts with $s_{j_{\mathsf{new}}} = s_{j+1} = \nex(\mu)$ by~\Cref{lem:starts}. Hence, $w_{\mu + 1}$ must equal $\nex(\mu)$. 
\end{proof}

Now, let us look at the properties of $\a_i(\cdot)$ and $\level(\cdot)$. 
From Algorithm~\ref{algo:extwalk} we can see that $\a_i(\mu)$ is the argument we pass to functions $g_i(\cdot)$ and $r_i(\cdot)$ for determining whether (and what) to assign to $\nex(\mu)$ at the current level. 
\begin{observation}
\label{lem:giai}
    Let $(w,g)\in \supp(\bw,\bg)$. For every $\mu \in [|w|-1]$,  we have $g_i(\a_i(\mu))= 0$ for all $i\in [\level(\mu)-1]$, and $g_{\level(\mu)}(\a_{\level(\mu)}(\mu)) = 1$. In addition, for $\mu = |w|$, we have $g_i(\a_i(\mu))= 0$ for all $i\in [\ell]$, and $\level(\mu)$ is undefined\footnote{In Section~\ref{sec:tree} we will specially define its level to be $\ell+1$}.
\end{observation}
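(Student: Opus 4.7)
My plan is to establish the observation by proving a strengthened inductive claim covering every recursive call to $\walk$. Specifically, for each $i \in \{0, 1, \dots, \ell\}$ and each invocation $\walk(s', i, \mu_0)$ arising in the computation of $\bw = \walk(\bs, \ell, 0)$, letting $w' = \walk(s', i)$, I will show that the exit position $\mu^* := \mu_0 + |w'|$ has $\a_{i'}(\mu^*)$ defined with $g_{i'}(\a_{i'}(\mu^*)) = 0$ for every $i' \in [i]$, while every other position $\mu \in \{\mu_0 + 1, \dots, \mu^* - 1\}$ has a unique level $\level(\mu) \in [i]$ satisfying $g_{\level(\mu)}(\a_{\level(\mu)}(\mu)) = 1$ and $g_{i'}(\a_{i'}(\mu)) = 0$ for all $i' < \level(\mu)$. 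Specializing this claim to the outermost call $\walk(\bs, \ell, 0)$ yields the observation, since $\mu^* = |\bw| = |w|$ and the interior positions are exactly $\{1, \dots, |w|-1\}$.

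The induction is on $i$. The base case $i = 0$ is vacuous: $\walk(s', 0, \mu_0)$ returns the singleton $(s')$, so there are no interior positions and the condition on the exit is empty. For the inductive step, I would parse the repeat loop of $\walk(s', i, \mu_0)$ into $k$ successful iterations (those entering the $g_i(y) = 1$ branch) followed by one exiting iteration with $g_i(y) = 0$. The returned sequence decomposes as $\walk(s_0, i-1) \circ \walk(s_1, i-1) \circ \cdots \circ \walk(s_k, i-1)$, and the markers $\mu_1 < \mu_2 < \cdots < \mu_{k+1}$ assigned at Line~\ref{walk:a} are precisely the exit positions of the successive sub-calls. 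At iteration $q \in [k]$, $\a_i(\mu_q) = y$ is written with $g_i(y) = 1$ and then $\level(\mu_q) \gets i$; at iteration $k+1$, $\a_i(\mu_{k+1}) = y$ is written with $g_i(y) = 0$ and no level is assigned at this call. Applying the inductive hypothesis to each $\walk(s_{q-1}, i-1)$ supplies the lower-level data: the exit $\mu_q$ of that sub-call satisfies $g_{i'}(\a_{i'}(\mu_q)) = 0$ for all $i' \in [i-1]$, and strictly interior positions of that sub-call already satisfy the required condition with some level in $[i-1]$. Concatenating these with what the level-$i$ loop writes verifies both halves of the claim at level $i$.

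The only real subtlety is bookkeeping: one must check that $\level(\mu)$ is written at most once per position across the entire execution. This follows because, in the decomposition above, the positions $\mu$ at which the level-$i$ call assigns $\level(\mu) = i$ are precisely the strict-interior markers $\mu_1, \dots, \mu_k$, and these are not the exit position of the enclosing call, hence will never be touched again by any higher-level call; only exit positions get re-examined at the next level up. Lemma~\ref{lem:mu0} makes this unique-call-per-position statement rigorous. Once this is nailed down, the rest of the argument is a direct reading of Lines~\ref{walk:a}, \ref{walk:if}, and \ref{walk:level} of Algorithm~\ref{algo:extwalk}, so I expect no significant obstacle beyond careful indexing across the recursion.
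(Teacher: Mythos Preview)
Your proposal is correct and follows essentially the same approach as the paper: both arguments trace the recursion and observe that at each level $i$ where $\mu$ appears as some $\mu_{j+1}$, the value $\a_i(\mu)$ is written at Line~\ref{walk:a}, and then either $g_i(\a_i(\mu))=1$ (so $\level(\mu)\gets i$ at Line~\ref{walk:level}) or $g_i(\a_i(\mu))=0$ (so $\mu$ becomes the exit node passed up to the next level). The paper presents this as a two-sentence sketch without an explicit inductive invariant, whereas you package the same content into a formal induction on $i$ with a strengthened hypothesis about exit versus interior positions; the extra structure is sound but not strictly needed.
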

\begin{proof}
 Suppose during $\walk(s',i,\mu_0)$ when $\mu_{j+1}=\mu$, the value of $\level(\mu)$ is not yet assigned. If $g_i(\a_i(\mu))= 0$, then the \textbf{if}-test at Line~\ref{walk:if} is not passed and hence Line~\ref{walk:level} is not reached, which means $\level(\mu)$ can only be assigned later at a higher level of the recursion with $\level(\mu)>i$. On the other hand, if $g_i(\a_i(\mu))= 1$ at this point, then we assign $\level(\mu) = i$ at Line~\ref{walk:level}. Hence, we must have $g_{\level(\mu)}(\a_{\level(\mu)}(\mu)) = 1$, and $g_{i'}(\a_{i'}(\mu))= 0$ for all $i'<i=\level(\mu)$.
 
 The ``in addition'' part follows from a similar argument.
\end{proof}
Hence, we introduce the following shorthand.
\begin{definition} \label{def:short-hand-for-a}
We denote $\x(\mu) = \x_{\level(\mu)}(\mu)$ and $\a(\mu) = \a_{\level(\mu)}(\mu)$. 
\end{definition}

Next, we have several simple observations.
\begin{observation}
For every $(w,r)\in \supp(\bw,\br)$ and $\mu \in [|w| - 1]$, we have $\nex(\mu) = r_{\level(\mu)}(\a(\mu))$.
\end{observation}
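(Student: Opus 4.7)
The plan is to prove this Observation by directly tracing through Algorithm~\ref{algo:extwalk} at the moment $\nex(\mu)$ receives its value. Since $\mu \in [|w|-1]$, the node $\mu$ is not the last node of $\bw$, so by Observation~\ref{lem:giai} the level $\level(\mu) = i$ is well-defined, meaning $\nex(\mu)$ must have been assigned at some execution of Line~\ref{walk:level}. I will argue that this assignment happens inside a call $\walk(\cdot, i, \cdot)$ at level exactly $i = \level(\mu)$, and that the value written to $\nex(\mu)$ is $r_i(y)$ for the same $y$ that is simultaneously recorded as $\a_i(\mu)$.

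Concretely, I would first fix a realization $(w,r) \in \supp(\bw, \br)$ and consider the unique round (say, iteration $j$) of the repeat-loop inside some call $\walk(s',i,\mu_0)$ at which $\mu_{j+1} = \mu$ and the if-test $g_i(y)=1$ at Line~\ref{walk:if} is passed. Uniqueness here follows from the fact that $\nex(\mu)$ is only written on Line~\ref{walk:level} and that, by Lemma~\ref{lem:mu0}, each $\mu \in [|\bw|]$ belongs to the range of a unique call at each level; combined with Observation~\ref{lem:giai} this pins down the level of the call to be $\level(\mu)$. Lines~\ref{walk:a} and~\ref{walk:level} are executed in this order on the same value of $y$ (no intervening update can change $y$ within the iteration), so we get $\a_i(\mu) = y$ and $\nex(\mu) = r_i(y)$ simultaneously.

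Finally, I would close the argument by applying Definition~\ref{def:short-hand-for-a}: since the assigning call is at level $i = \level(\mu)$, we have $\a(\mu) = \a_{\level(\mu)}(\mu) = \a_i(\mu) = y$, hence $\nex(\mu) = r_i(y) = r_{\level(\mu)}(\a(\mu))$, as claimed.

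The main obstacle is not mathematical depth but bookkeeping: making sure I invoke the right earlier results (Lemma~\ref{lem:mu0} to isolate the unique generating call, Observation~\ref{lem:giai} to identify its level with $\level(\mu)$) so that the reader sees why the $i$ appearing in the assignment is precisely $\level(\mu)$ and the $y$ appearing there is precisely $\a(\mu)$. Everything else is a direct reading of Algorithm~\ref{algo:extwalk}.
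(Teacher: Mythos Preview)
Your proposal is correct and follows essentially the same approach as the paper: both arguments simply observe that $\level(\mu)$ and $\nex(\mu)$ are assigned together at Line~\ref{walk:level} with $\nex(\mu) \gets r_i(y)$ and $\level(\mu) \gets i$, while the same $y$ was just recorded as $\a_i(\mu)$ at Line~\ref{walk:a}, yielding $\nex(\mu) = r_i(\a_i(\mu)) = r_{\level(\mu)}(\a(\mu))$. The paper's proof is a single sentence and does not bother invoking Lemma~\ref{lem:mu0} or Observation~\ref{lem:giai} for the bookkeeping you mention, but your extra care is harmless.
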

\begin{proof}
When $\level(\mu)$ and $\nex(\mu)$ are assigned together in Line~\ref{walk:level}, we have $y = \a_i(\mu)$, and hence $\nex(\mu) = r_i(y) = r_i(\a_i(\mu)) = r_{\level(\mu)}(\a_{\level(\mu)}(\mu))$.
\end{proof}

In the following, we write $y = \star_*$ to denote that $y = \star_t$ for some $t \in \N$, and write $y \ne \star_*$ otherwise.

\begin{observation}
\label{lem:amuawmu}
For every $w\in \supp(\bw)$, $\mu \in [|w|-1]$, and $i \in [\level(\mu)]$, if $\a_i(\mu) \neq \star_*$, then $\a_i(\mu) = a_{w_\mu}$.
\end{observation}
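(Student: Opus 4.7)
The goal is to unfold the definition of $\a_i(\mu)$ as assigned by Algorithm~\ref{algo:extwalk} and identify the vertex whose $a$-value it records. The plan is to locate the unique recursive call at level $i$ in which $\a_i(\mu)$ is assigned, read off the assigned value directly from Line~\ref{walk:a}, and then use Lemma~\ref{lem:mu0} to match the local index in that call with the global index $\mu$ in $\bw$. No induction on $i$ is needed; a single inspection of the code suffices.

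First I would observe that the only line that assigns $\a_i$ is Line~\ref{walk:a}, which is executed inside the \textbf{repeat}-loop of a call $\walk(s', i, \mu_0)$ of the form appearing somewhere in the generation of $\bw$. By Lemma~\ref{lem:mu0}(2) applied at level $i$, the node $\mu$ lies in a unique such call, and inside that call $\a_i(\mu)$ can be set only when $\mu_{j+1} = \mu$ for some iteration $j$ of the \textbf{repeat}-loop; since $\mu_{j+1} = \mu_0 + |w|$ and $|w|$ strictly increases with $j$, this happens at most once. At that iteration, Line~\ref{walk:y} sets $y$ to be either $a_{x_{j+1}}$ (when the if-branch is taken) or some $\star_t$ (when the else-branch is taken), and then Line~\ref{walk:a} sets $\a_i(\mu_{j+1}) = y$. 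Consequently, the hypothesis $\a_i(\mu) \neq \star_*$ forces the if-branch, so $\a_i(\mu) = a_{x_{j+1}}$ where $x_{j+1} = \x_i(\mu)$ is the last vertex of the sub-walk returned by $\walk(s_j, i-1, \mu_0 + |w_{\text{prev}}|)$ (with $w_{\text{prev}}$ denoting $w$ just before the concatenation at Line~\ref{walk:w}).

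Next I would identify $x_{j+1}$ with $w_\mu$ in the final extended walk $w = \bw$. Just after Line~\ref{walk:w} the updated local $w$ has length $|w| = \mu - \mu_0$, and $x_{j+1}$ sits at its last position; by Lemma~\ref{lem:mu0}(1) applied at level $i-1$, concatenating the level-$(i-1)$ sub-walks in increasing order of their $\mu_0$-parameter produces $\bw$, so the vertex at position $\mu_0 + |w| = \mu$ of $\bw$ is exactly $x_{j+1}$. Therefore $w_\mu = x_{j+1}$ and hence $\a_i(\mu) = a_{x_{j+1}} = a_{w_\mu}$, as required. The only point that needs a little care is that the claim must cover every $i \in [\level(\mu)]$, not just $i = \level(\mu)$; but the argument above makes no use of whether the if-test at Line~\ref{walk:if} succeeds, so it applies uniformly whenever $\a_i(\mu)$ is assigned at all, which by Observation~\ref{lem:giai} is exactly for $i \in [\level(\mu)]$. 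The main (minor) obstacle is pedantically tracking the two different uses of the symbol $w$ — the local accumulator inside the recursive call versus the final extended walk $\bw$ — and it is resolved cleanly by Lemma~\ref{lem:mu0}.
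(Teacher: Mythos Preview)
Your proposal is correct and follows essentially the same approach as the paper's proof: both trace the assignment at Line~\ref{walk:a}, use Line~\ref{walk:y} to conclude $y = a_{x_{j+1}}$ when $y\neq\star_*$, and then identify $x_{j+1}$ with $w_\mu$. The only difference is that you spell out the local-versus-global $w$ identification via Lemma~\ref{lem:mu0}, whereas the paper compresses this into a direct appeal to Lines~\ref{walk:xj} and~\ref{walk:a} (using $x_{j+1}=w_{|w|}$ and $\mu_{j+1}=\mu_0+|w|$).
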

\begin{proof}
Note that when we assign $\a_i(\mu)=y$ at Line~\ref{walk:a}, we have $\mu_{j+1} = \mu$, and by Line~\ref{walk:y} we must have $y = a_{x_{j+1}}$ if $y\neq \star_*$. Then we simply note that $x_{j+1} = w_{\mu_{j+1}} = w_\mu$ by Line~\ref{walk:xj} and Line~\ref{walk:a}.
\end{proof}

Then, we examine how the values of $\a_i(\cdot)$ are determined in the \textbf{repeat} loop of  $\walk(s',i)$. 
Observe that, by our definition at Line~\ref{walk:y}, we never assign the same $y$ value to $\a_i(\cdot)$ twice: when the value $a_{x_{j+1}}$ appears for the second time, we will set $\mathsf{star} \gets \text{true}$ and replace this value with $\star_*$.
In more detail, this is formalized in the following lemma.

\begin{lemma} \label{obs:determine-a-from-x}
In $\walk(s',i, \mu_0)$, $\a_i(\mu_j)$ is uniquely determined from $x_1, x_2, \dots, x_j$ as follows:
\begin{enumerate}
	\item Let $j' = \min \{j' \ \vert \ \exists  j'' \text{ s.t. } 1 \leq j'' < j' \leq j,  a_{x_{j''}} = a_{x_{j'}}\}$. 
	\item If no such $j'$ exists, then $\a_i(\mu_j) = a_{x_j}$. Otherwise, $\a_i(\mu_j) = \star_{j - j'}$. 
\end{enumerate}

In particular, $\a_i(\mu_j) \neq \a_i(\mu_{j'})$ holds for all $j\neq j'$.
\end{lemma}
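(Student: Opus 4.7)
The plan is to prove this lemma by induction on $j$, tracking how the auxiliary state $\mathsf{star}$ and the set $C_j$ evolve through the \textbf{repeat} loop of $\walk(s', i, \mu_0)$. The key point is that, whenever iteration $j$ of the loop is reached (i.e.\ $\mu_j$ is defined), the values of $\mathsf{star}$ and $C_{j-1}$ at the top of that iteration depend only on $a_{x_1}, \ldots, a_{x_{j-1}}$ (not on the random choices $g_i, r_i$), and take an explicit form dictated by whether a collision has already occurred among these $a$-values.

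First I will handle the case where $a_{x_1}, \ldots, a_{x_j}$ are pairwise distinct. A short induction on $k \le j$ will show that $\mathsf{star}$ remains $\text{false}$ and that $C_{k-1} = \{a_{x_1}, \ldots, a_{x_{k-1}}\}$ at the start of iteration $k$. Under this invariant, the conditional at Line~\ref{walk:y} always selects the first branch, producing $y = a_{x_k}$, so $\a_i(\mu_k) = a_{x_k}$; in particular $\a_i(\mu_j) = a_{x_j}$ as claimed.

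Next I will handle the collision case, where $j'\in [j]$ is the smallest index at which $a_{x_{j'}}$ duplicates some earlier value. For $k < j'$ the previous case applies verbatim and gives $\a_i(\mu_k) = a_{x_k}$, $\mathsf{star} = \text{false}$, and $C_{j'-1} = \{a_{x_1}, \ldots, a_{x_{j'-1}}\}$. At iteration $j'$, the definition of $j'$ forces $a_{x_{j'}} \in C_{j'-1}$, selecting the second branch of Line~\ref{walk:y}: since $C_{j'-1}$ contains no $\star_t$, the minimum is $t=0$, giving $\a_i(\mu_{j'}) = \star_0 = \star_{j'-j'}$, and flipping $\mathsf{star}$ to $\text{true}$. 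A secondary induction on $k \ge j'$ then shows that at the start of iteration $j' + t + 1$, the set $C$ has absorbed exactly the symbols $\star_0, \ldots, \star_t$ (one per iteration), so the next computed $y$ is $\star_{t+1}$. This gives $\a_i(\mu_k) = \star_{k-j'}$ for every $k \in \{j', j'+1,\ldots, j\}$, matching the lemma's formula.

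Finally, for the ``in particular'' claim I can read distinctness directly off the explicit formulas: the non-star values $\a_i(\mu_k)=a_{x_k}$ (for $k<j'$ in the collision case, or all $k\le j$ in the non-collision case) are pairwise distinct by the minimality of $j'$, the star values $\star_{k-j'}$ have pairwise distinct subscripts, and no element of $[m]$ coincides with any $\star_t$. I do not anticipate a serious obstacle, as the proof is essentially a deterministic unravelling of the assignment rule in Line~\ref{walk:y}; the only delicate point is verifying that, once $\mathsf{star}$ becomes $\text{true}$, the rule ``$t := \min\{t : \star_t \notin C_j\}$'' enumerates $\star_0, \star_1, \ldots$ in order, which holds because no $\star$ symbol is added to $C$ before the flip and exactly one new $\star_t$ is added per iteration thereafter.
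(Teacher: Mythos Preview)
The proposal is correct and takes essentially the same approach as the paper's own proof: both track the evolution of $\mathsf{star}$ and $C_j$ through the \textbf{repeat} loop, identify the first collision index $j'$ as the moment $\mathsf{star}$ flips, and then read off $\a_i(\mu_k)=a_{x_k}$ before the flip and $\a_i(\mu_k)=\star_{k-j'}$ afterwards. Your version is simply more explicit about the induction invariants (in particular the content of $C_{k-1}$ and the enumeration of $\star_0,\star_1,\ldots$), whereas the paper compresses this into a couple of sentences.
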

\begin{proof}
By Line \ref{walk:a} and Line \ref{walk:Cj}, we know $C_j = \{\a_i(\mu_1), \a_i(\mu_2), \dots, \a_i(\mu_j)\}$. By Line \ref{walk:y}, we know $\mathsf{star}$ switches from $\text{false}$ to $\text{true}$ when $a_{x_{j + 1}} \in C_j$. For those $j$ before $\mathsf{star}$ switches, $\a_i(\mu_j) = a_{x_j}$, and for those $j$ after the switch, $\a_i(\mu_j) = \star_*$. 

Hence, $\mathsf{star}$ switches at the first $j'$ such that there exists $1 \leq j'' < j'$ with $a_{x_{j''}} = a_{x_{j'}}$. If such $j'$ does not exist, $\mathsf{star}$ is still false at $j$, and we know $\a_i(\mu_j) = a_{x_j}$. Otherwise, $\mathsf{star}$ switches at $j'$, and by Line \ref{walk:y} we have $\a_i(\mu_{j'}) = \star_0, \a_i(\mu_{j' + 1}) = \star_1,\dots$, and $\a_i(\mu_j) = \star_{j - j'}$. 
\end{proof}

Finally, we show the connection between the extended walk $\bw$ and the reachable set $f^*_{a,\bh}(\bs)$.
\begin{lemma}\label{lem:subset-of-extwalk}
Let $(w,h,s) \in \supp(\bw,\bh,\bs)$, where $w=\walk(s,\ell)$ is the extended walk, and $h$ is the hash function. The following hold:
\begin{enumerate}
	\item The reachable set $f^*_{a,h}(s)$ is a subset of the vertices in $w$.
	
	\item For every $\mu \in [|w|]$, if for every two distinct $\alpha,  \beta \in [\mu]$, it holds that $a_{w_{\alpha}}\neq a_{w_{\beta}}$, then $w_{\mu} \in f^*_{a,h}(s)$. 
	In particular, if there are no two distinct $\alpha, \beta \in [|w|]$ such that $a_{w_{\alpha}}=a_{w_{\beta}}$, then $f^*_{a,h}(s)$ contains exactly the same vertices as $w$. 
\end{enumerate}
\end{lemma}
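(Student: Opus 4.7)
The plan is to prove a single structural lemma from which both items follow: \emph{if $a_{w_1},\dots,a_{w_\mu}$ are pairwise distinct, then at every level $i$ for which $\a_i(\mu)$ is assigned we have $\a_i(\mu) = a_{w_\mu}$ (so no $\star_*$ substitution was performed)}. Granting this, two consequences fall out from the definition $h = h_\ell \bullet \cdots \bullet h_1$ together with Observation~\ref{lem:giai}: when $\mu < |w|$, we get $h(a_{w_\mu}) = r_{\level(\mu)}(a_{w_\mu}) = r_{\level(\mu)}(\a(\mu)) = \nex(\mu) = w_{\mu+1}$, using Lemma~\ref{obs:w-nex} for the last equality; and when $\mu = |w|$, the level-$i$ loop terminates at $\mu$ for every $i \in [\ell]$, so $g_i(\a_i(|w|)) = 0$ for every $i$, whence $g_i(a_{w_{|w|}}) = 0$ for all $i$ and $h(a_{w_{|w|}}) = \star$.

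To establish the structural lemma I would invoke Lemma~\ref{obs:determine-a-from-x}, which describes $\a_i(\mu)$ in terms of the $a$-values at the endpoints $x_1,\dots,x_{j^*+1}$ of the level-$(i-1)$ sub-calls inside the level-$i$ call containing $\mu$ (where $\mu = \mu_{j^*+1}$). A $\star_*$ outcome requires indices $j'' < j' \le j^*+1$ with $a_{x_{j''}} = a_{x_{j'}}$; since each $x_j$ equals $w_{\mu_j}$ with $\mu_j \le \mu$, such a collision would witness two distinct positions in $[\mu]$ with equal $a$-values, contradicting the hypothesis. Observation~\ref{lem:amuawmu} then forces $\a_i(\mu) = a_{w_\mu}$.

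Item~(2) is then immediate: under its hypothesis, the consequence above gives $w_{\alpha+1} = h(a_{w_\alpha})$ for every $\alpha \in [\mu - 1]$, exhibiting a valid directed path $s = w_1 \to w_2 \to \cdots \to w_\mu$ in $G_{a,h}$, and hence $w_\mu \in f^*_{a,h}(s)$. The ``in particular'' clause combines this for every $\mu$ with Item~(1). For Item~(1), I would track the canonical walk $v_1 := s$, $v_{t+1} := h(a_{v_t})$ (halting once $h(a_{v_t}) = \star$) and argue that $v_t \in \{w_1,\dots,w_{|w|}\}$ whenever $v_t$ is defined. Let $\beta^*$ be the smallest index in $[|w|]$ at which an $a$-collision with an earlier position of $w$ occurs, or $+\infty$ if none exists. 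Repeated application of the structural lemma gives $v_t = w_t$ for $t \le \beta^*$; in the case $\beta^* = +\infty$ the canonical walk terminates at step $|w| + 1$ because $h(a_{w_{|w|}}) = \star$, while if $\beta^* < \infty$ we pick some $\alpha^* < \beta^*$ with $a_{w_{\alpha^*}} = a_{w_{\beta^*}}$ and apply the structural lemma at $\alpha^*$ to obtain $v_{\beta^* + 1} = h(a_{w_{\alpha^*}}) = w_{\alpha^* + 1}$, whereupon an easy induction shows that the $v$-walk cycles inside $\{w_{\alpha^* + 1},\dots,w_{\beta^*}\}$.

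The main obstacle is the structural lemma itself: one must carefully unpack Algorithm~\ref{algo:extwalk}'s star mechanism to verify that a $\star_*$ value of $\a_i(\mu)$ is always caused by a genuine $a$-collision between two positions of $w$ lying in $[\mu]$, rather than by some indirect interaction with the $\star_t$ sentinels introduced purely for analysis. Every other step reduces to routine unrolling of the recursive definition of $\walk$ through the $\ell$ levels.
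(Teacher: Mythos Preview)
Your proposal is correct and follows essentially the same approach as the paper: establish that whenever $a_{w_1},\dots,a_{w_\mu}$ are distinct one has $\a_i(\mu)=a_{w_\mu}$ at every relevant level, deduce $w_{\mu+1}=h(a_{w_\mu})$, and then track the canonical walk up to the first collision and observe the subsequent cycling. The one cosmetic difference is that you invoke Lemma~\ref{obs:determine-a-from-x} as a black box to rule out a $\star_*$ value (reducing it directly to an $a$-collision among the $x_j=w_{\mu_j}$), whereas the paper re-derives this by an explicit induction on $\mu$, handling separately the cases $a_{w_\mu}\in C_j$ and $\mathsf{star}=\text{true}$; your packaging is slightly cleaner but amounts to the same argument.
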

\begin{proof}
We first prove that, for every $\mu \in [|w|]$, if there are no two distinct $\alpha,  \beta \in [\mu]$ such that $a_{w_{\alpha}}=a_{w_{\beta}}$, then $\a_i(\mu) = a_{w_{\mu}}$ for every $i\in [\level(\mu)]$. 

We will use induction on $\mu$.  Suppose the inductive hypothesis holds for $1, 2, \dots, \mu - 1$. Now we show that $\a_i(\mu)\neq \star_*$ for every $i \in [\level(\mu)]$, which immediately implies that $\a_i(\mu) = a_{w_\mu}$ by \cref{lem:amuawmu} and finishes the inductive step.

Suppose for contradiction that we assigned $\a_i(\mu)= \star_*$ at Line~\ref{walk:a} for some $i \in [\level(\mu)]$. Then, by the definition of $y$ at Line~\ref{walk:y}, the only two cases are (1) $a_{w_\mu}\in C_j$, or (2) $\mathsf{star}=\text{true}$ (which implies $\star_0 \in C_j$). In either case, there is an earlier $\eta < \mu$ such that either (1) $a_{w_{\eta}} = a_{w_{\mu}}$ (which follows from the way we update $C_j$ at Line~\ref{walk:Cj}) or (2) $\a_i(\eta) = \star_*$ and $\level(\eta) = i$ (because $\a_i(\mu_{j+1}) = y$ is added to $C_{j + 1}$ at Line~\ref{walk:Cj} only when $\level(\mu_{j+1}) = i$). Case (1) contradicts our assumption that $a_{w_\alpha}\neq a_{w_\beta}$ for every two distinct $\alpha,\beta \in [\mu]$. Case (2) contradicts the inductive hypothesis that $\a_i(\eta) = a_{w_\eta} \neq \star_*$. Therefore we have $\a_i(\mu) = a_{w_\mu} \neq \star_*$.

Hence for every $\mu \in [|w|]$, for all $i \in [\level(\mu)]$, $g_i(\a_i(\mu))$ and $r_i(\a_i(\mu))$ have the same values as the pseudorandom functions $g_i(a_{w_{\mu}})$ and $r_i(a_{w_{\mu}})$ that were used to define $h(a_{w_{\mu}})$ for $h \in \mathcal{H}_{\ell,m,n}$. Then, by \cref{lem:giai} and our definition of $h$, it is evident that $ \nex(\mu) = h(a_{w_\mu})$, and hence $w_{\mu+1} = h(a_{w_\mu})$  by \cref{obs:w-nex}.

The actual reachable set $f^*_{a,h}(s)$ has vertices $\{w'_1,w'_2,\dots\}$ where $w'_1=s$ and $w'_{\mu+1} = h(a_{w'_{\mu}})$ for every $\mu \ge 1$. Note that  $w_1=w'_1 = s$ by \cref{lem:starts}. We have proved that for every $\mu$ such that no two distinct $\alpha, \beta \in [\mu]$ satisfy $a_{w_{\alpha}}=a_{w_{\beta}}$, we have $w_{\mu + 1}= h(a_{w_\mu})$ and $w'_{\mu + 1}=h(a_{w_\mu'})$.
Then, let $\mu_0$ be the smallest $\mu\in [|w|]$ such that there exists $\alpha<\mu$ with $a_{w'_\alpha}=a_{w'_\mu}$. 
If such $\mu_0$ does not exist, then a simple induction shows $w'_{\eta} = w_{\eta}$ holds for all $\eta$. Otherwise, we can similarly show $(w_1, w_2, \dots, w_{\mu_0})= (w'_1, w'_2, \dots, w'_{\mu_0})$. This proves Item~(2).

On the other hand, from $w'_{\mu_0 + 1} = h(a_{w'_{\mu_0}}) = h(a_{w'_{\alpha}}) = w'_{\alpha + 1}$, it follows that $\{w'_{\mu_0 + 1},w'_{\mu_0+2},\dots\} \subseteq \{w'_1,w'_2,\dots,w'_{\mu_0}\}$. Hence, $\{w'_1,w'_2,\dots\}$ is be a subset of $\{w_1,w_2,\dots\}$, which proves Item~(1).
\end{proof}

\subsection{Dependency Tree and Node Indexing} \label{sec:tree}

Using the information of $\level(\cdot)$ determined by the recursive process $w = \walk(s,\ell)$, we can define a natural tree structure that we call the \emph{dependency tree}.
The tree consists of $|w| + 1$ nodes, labeled by integers from $0$ to $|w|$. We will also use Greek letters (\eg, $\alpha,\beta,\mu$) to refer to the nodes on the dependency tree. 
The $\mu$-th vertex $w_\mu$ in the extended walk  corresponds to node $\mu$ on the tree. Node $0$ is the root of the tree, and we define $\level(0)=\ell+1$ and $\nex(0) = s$. Moreover, for the last node $|w|$, its level is not assigned by $\walk(s,\ell)$,  so we define $\level(|w|) = \ell + 1$ as well.

In the rest of the paper,  we will reserve the term ``\emph{node}'' for nodes (referred to using Greek letters) on the dependency tree, and use the term ``\emph{vertex}'' to refer to the vertices in the walks on the digraph $G_{a,h}$, \ie, a ``vertex'' is always in the set $[n]$.

To define the dependency tree, we specify the parent of each node $\mu$ as follows.
\begin{definition}
In the dependency tree, the parent node of node $\mu$ is  defined as
\[
\parent(\mu):= \max_{\mu' < \mu} \{\mu' \ \vert \ \level(\mu') \geq \level(\mu)\},
\]
\ie,the last node before $\mu$ that has level at least $\level(\mu)$. Note that such $\mu'$ always exists as we have $\level(0) = \ell + 1$. 
\end{definition}

\begin{figure}
	\centering
	\scalebox{0.9}{
\begin{tikzpicture}[node distance={14mm}, thick, vertex/.style = {draw, thick, circle,inner sep=0pt,minimum size=15pt,outer sep=3pt}] 
  \node (l1) {level $1$};
  \node [above of = l1, yshift = -0.4cm] (l2) {level $2$};
  \node [above of = l2, yshift = -0.4cm] (l3) {level $3$};
  \node [above of = l3, yshift = -0.4cm] (l4) {level $4$};
  \node [above of = l4, yshift = -0.4cm] (l5) {level $5$};
  \node [above of = l4, yshift = -0.8cm] (ll) {($\ell = 4$)};
  \node [vertex, right of = l5] (n0) {$0$};
  \node [vertex, right of = l1, xshift = 1cm] (n1) {$1$} edge [<-] (n0);
  \node [vertex, right of = l4, xshift = 2cm] (n2) {$2$} edge [<-] (n0);
  \node [vertex, right of = l2, xshift = 3cm] (n3) {$3$} edge [<-] (n2);
  \node [vertex, right of = l3, xshift = 4cm] (n4) {$4$} edge [<-] (n2);
  \node [vertex, right of = l1, xshift = 5cm] (n5) {$5$} edge [<-] (n4);
  \node [vertex, right of = l2, xshift = 6cm] (n6) {$6$} edge [<-] (n4);
  \node [vertex, right of = l3, xshift = 7cm] (n7) {$7$} edge [<-] (n4);
  \node [vertex, right of = l2, xshift = 8cm] (n8) {$8$} edge [<-] (n7);
  \node [vertex, right of = l4, xshift = 9cm] (n9) {$9$} edge [<-] (n2);
  \node [vertex, right of = l3, xshift = 10cm] (n10) {$10$} edge [<-] (n9);
  \node [vertex, right of = l3, xshift = 11cm] (n11) {$11$} edge [<-] (n10);
  \node [vertex, right of = l5, xshift = 12cm] (n12) {$12$} edge [<-] (n0);

\end{tikzpicture}
}
	\caption{The dependency tree $T$. The index of $7$ is $(0, 0, 2, 1)$, since the path $0\to 2 \to 4\to 7$ has two level-$3$ nodes (node $4$ and node $7$), and one level-$4$ node (node $2$). }
	\label{fig:dependency}
\end{figure}
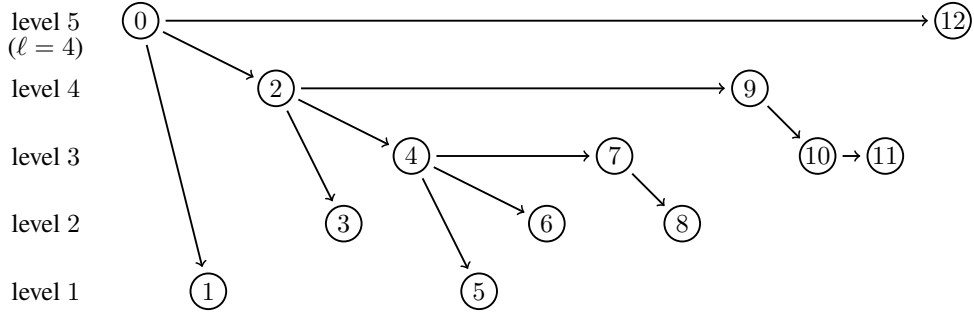

From this definition, we immediately see that the pre-order traversal of the tree is the node sequence $0,1,2,\dots,|w|$ (see example in \cref{fig:dependency}). Since we let tree node $\mu$ correspond to the $\mu$-th vertex $w_\mu$ in $w$, we get a natural correspondence between the extended walk $w$ and the pre-order traversal of the dependency tree.
In \cref{fig:walk-tree}, we illustrate how the  dependency tree is consistent with the recursive structure of  $\walk(s',i)$.

\begin{figure}
    \centering
\begin{tikzpicture}[xscale = 0.7, yscale = 0.5, thick]
  \draw (0,0) -- (1,-1);
  \draw (1,-1) -- (20, -1);
  \draw (1,-1) -- (2, -9);
  \node [draw,circle,fill,minimum size=2.5,inner sep=0pt, outer sep=0pt] at (2,-9) {};
  \node at (2.25,-10) {\small $s_0 = s'$};
  \draw (1.5,-5) -- (3,-5);
  \draw (3,-5) -- (3.5,-9);

  \draw (1,-1) -- (4, -3);
  \node [draw,circle,fill,minimum size=2.5,inner sep=0pt, outer sep=0pt] at (4,-3) {};
  \node at (4.1,-2.15) {\small $x_1$};
  \node at (3.7,-3.65) {\small $\mu_1$};
  \node at (0.4,-1) {\small $\mu_0$};
  \node at (5.3,-2.15) {\small $\cdots$};
  \draw (4,-3) -- (4.75, -9);
  \draw (4.5,-7) -- (6,-7);
  \draw (6,-7) -- (6.25,-9);

  \draw (4,-3) -- (17, -3);
  \draw (6.75,-3) -- (7.5, -9);
  \node [draw,circle,fill,minimum size=2.5,inner sep=0pt, outer sep=0pt] at (6.75,-3) {};
  \node at (6.75,-2.15) {\small $x_j$};
  \node at (6.5,-3.65) {\small $\mu_j$};
  \node [draw,circle,fill,minimum size=2.5,inner sep=0pt, outer sep=0pt] at (7.5,-9) {};
  \node at (7.5,-10) {\small $s_j$};
  \draw (7.25,-7) -- (8.75,-7);
  \draw (8.75,-7) -- (9,-9);

  \draw (6.75,-3) -- (9.75, -5);
  \draw (9.75,-5) -- (12.75,-5);
  \draw (9.75,-5) -- (10.25,-9);
  \draw (10,-7) -- (11.5,-7);
  \draw (11.5,-7) -- (11.75,-9);
  \draw (12.75, -5) -- (13.25, -9);

  \draw (14.25, -3) -- (15,-9);
  \node [draw,circle,fill,minimum size=2.5,inner sep=0pt, outer sep=0pt] at (15,-9) {};
  \node at (15,-10) {\small $s_{j + 1}$};
  \node [draw,circle,fill,minimum size=2.5,inner sep=0pt, outer sep=0pt] at (14.25,-3) {};
  \node at (14.25,-2.15) {\small $x_{j + 1}$};
  \node at (13.6,-3.65) {\small $\mu_{j + 1}$};
  \node at (17.35,-0.2) {\small $x_t$};
  \node at (17,-1.7) {\small $\mu_t$};
  \node at (16,-2.15) {\small $\cdots$};

  \draw [red, draw, dashed] (1.35,-0.5) rectangle (18,-10.75);
  \node [red] at (16, -12) {\small $\mathsf{walk}(s', i, \mu_0)$};
  \node [draw,circle,fill,minimum size=2.5,inner sep=0pt, outer sep=0pt] at (1,-1) {};
  \node [draw,circle,fill,minimum size=2.5,inner sep=0pt, outer sep=0pt] at (17.35,-1) {};

  \draw [blue, draw, dashed] (7.1,-2.6) rectangle (14.55,-9.3);
  \node [blue] at (11, -10) {\small $\mathsf{walk}(s_j, i - 1, \mu_j)$};
\end{tikzpicture}
    \caption{The dependency tree and $\walk(s',i, \mu_0)$. Starting from $s'=s_0$, the function $\walk(s',i,\mu_0)$ first calls $\walk(s_0, i - 1,\mu_0)$ which generates the subtrees of $\mu_0$ of level $\leq i - 1$. Then, $\walk(s_0, i - 1, \mu_0)$ stops at the first vertex $x_1$ of level $\geq i$, where $x_1$ corresponds to node $\mu_1$.
    Then, in $\walk(s',i,\mu_0)$ we find that the level of $\mu_1$ equals $i$, and hence let $s_1 = \nex(\mu_1)$ and explicitly handle the edge $x_1 \rightarrow \nex(\mu_1)$. Then we similarly continue with the recursive call $\walk(s_1, i - 1,\mu_1)$, and so on.
    Finally, $\walk(s',i,\mu_0)$ terminates when it meets a node $\mu_t$ so that $g_i(y) = 0$ at Line~\ref{walk:if} in Algorithm~\ref{algo:extwalk}, implying that $\level(\mu_t)>i$. The vertex $x_t$ will be the last vertex in the walk returned by $\walk(s',i,\mu_0)$.}
    \label{fig:walk-tree}
\end{figure}
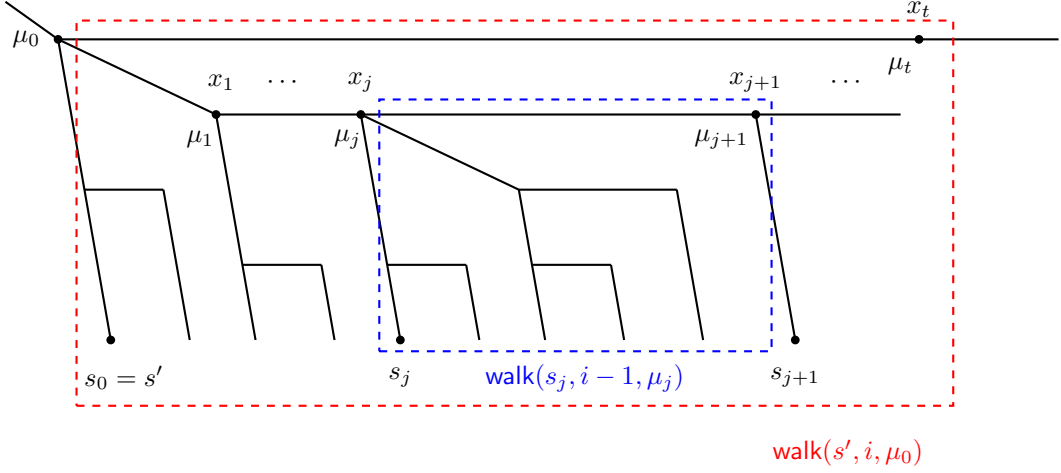

\begin{definition}
We use $p(\mu)$ to denote the \emph{path from the root to the node $\mu$} on the dependency  tree.
\end{definition}
We also observe that the levels of nodes on $p(\mu)$ are non-increasing. Having the tree structure, we introduce a way to \emph{index the nodes}, which will play a crucial role in our proofs in \cref{sec:one,sec:two}.
\begin{definition}[Indexing tree nodes using $\vk$]
We index a node $\vidx$ in the dependency tree by a sequence $\vk = (k_1, k_2, \dots, k_\ell) \in \N^{\ell}$,  where $k_i$ specifies the number of level-$i$ nodes on the path $p(\vidx)$.
We use $\vidx^{\vk}$ or $\vidx[\vk]$ to denote the unique node in the dependency tree indexed by $\vk$. Note that the node $\vidx^{\vk}$ may not exist in the tree.\footnote{One example of node indexing is shown in Figure \ref{fig:dependency}.}
\end{definition}

In our analysis in \cref{sec:one,sec:two}, we will use the strategy of fixing a particular sequence $\vk \in \N^{\ell}$, and letting $\vidx^{\vk}, w_{\vidx^{\vk}}, \a(\vidx^{\vk}),\nex(\vidx^{\vk}),\level(\vidx^{\vk})$ be random variables (provided that $\vidx^{\vk}$ exists in the tree).

We naturally define the ordering of indices as follows.
\begin{definition}
For two sequences $\vk^1, \vk^2 \in \N^\ell$, we say that $\vk^1 < \vk^2$ (or $\vk^1$ is to the left of $\vk^2$), if there is $ i \in [\ell]$ such that, $k^1_i < k^2_i$ and $k^1_t = k^2_t$ for all $t \in \{i+1,\dotsc,\ell\}$. This can be seen as the lexicographical order on the reversed sequences.
\end{definition}

The definition above is justified by the following observation: for two nodes $\mu^{\vk^1},\mu^{\vk^2}$ indexed by $\vk^1,\vk^2$, $\mu^{\vk^1}<\mu^{\vk^2}$ if and only if $\vk^1<\vk^2$.

Finally, we introduce some notation for indexing ancestors in the tree.
\begin{definition}
Given $\vk\in \N^\ell$, $i\in [\ell]$, and $j\in [k_i]$, the $j$-th level-$i$ ancestor of $\vidx^{\vk}$ corresponds to index $\vk^{i,j} = (0, \dots, 0, j, k_{i + 1}, \dots, k_\ell)$, and we use $\vidx^{\vk}_{i,j}$ to denote this ancestor. We also define $\mu^{\vk}_{\ell,0}=0$, namely the root node, and we recursively define $\vidx^{\vk}_{i,0} = \vidx^{\vk}_{i + 1, k_{i + 1}}$,
which is the last ancestor of $\vidx^{\vk}$ with level greater than $i$.
 
 We clarify that, according to our definition, the existence of node $\vidx^{\vk}_{i,j}$ in the dependency tree does not necessarily require the existence of $\vidx^{\vk}$. 
\end{definition}

\subsection{Existence Condition of a Root-to-node Path} \label{sec:path}
As mentioned in \cref{sec:2.5} and \cref{sec:2.6}, our main proof will involve analyzing root-to-node paths on the dependency tree. In this section we will prove several useful lemmas for such analysis.

We first prove a few lemmas on the values of $\rig(\cdot)$ assigned at Line~\ref{walk:rig} in $\walk(s,\ell)$. 
\begin{lemma}
For every node $\mu\in [|w|-1]$, we have $\rig(\mu) = \min_{\alpha>\mu} \{ \alpha \ \vert \ \level(\alpha)\ge \level(\mu) \}$, \ie,
$\rig(\mu)$ is the next node with level at least as high as $\mu$. 
\end{lemma}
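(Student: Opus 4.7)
The plan is to trace exactly how Algorithm~\ref{algo:extwalk} assigns $\rig(\mu)$, then check the assigned value against the claimed formula. Because $\mu\in[|w|-1]$ is neither the root nor the final node, its level $\level(\mu)=i$ lies in $[\ell]$, and this level must have been set at Line~\ref{walk:level} inside some call $\walk(s',i,\mu_0)$ at the moment $\mu$ appears as $\mu_{j^\star}$ (with $j^\star\ge 1$) for which $g_i(y_{j^\star})=1$. The counter is then incremented to $j^\star$ and the repeat loop executes one more iteration; in that iteration, Line~\ref{walk:rig} fires (since $j>0$) and assigns $\rig(\mu)=\mu_{j^\star+1}$, i.e., the next main node of the same call.

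What remains is to verify (a) $\level(\mu_{j^\star+1})\ge i$ and (b) every node $\alpha$ with $\mu<\alpha<\mu_{j^\star+1}$ satisfies $\level(\alpha)<i$. Both follow from the following structural invariant of $\walk$ calls, which I would prove by induction on the level parameter:

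\smallskip
\noindent\textbf{Structural Lemma.} In the walk $w'$ returned by $\walk(s'',i',\mu'')$ with $i'\ge 1$, the main nodes $\mu_1,\dots,\mu_{J+1}$ (the ends of its sub-walks $\walk(\cdot,i'-1,\cdot)$) satisfy $\level(\mu_k)=i'$ for $k\le J$, while $\level(\mu_{J+1})\ge i'+1$ (assigned either by a higher-level recursion, or by the convention $\level(|w|)=\ell+1$ if it is the global last node); every other node of $w'$ is strictly interior to some sub-walk $\walk(\cdot,i'-1,\cdot)$ and, by the induction hypothesis, has level at most $i'-1$.
\smallskip

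The base case $i'=1$ is immediate because $\walk(\cdot,0,\cdot)$ returns a single vertex, so every node of $w'$ is a main node. The inductive step applies the hypothesis to each sub-call $\walk(s_j,i'-1,\mu_j)$: its interior nodes have level $\le i'-1$, while its final vertex $x_{j+1}=\mu_{j+1}$ is the precise point at which the parent call then examines $g_{i'}(y)$ and decides either to upgrade $\mu_{j+1}$'s level to $i'$ (continuing the loop) or to leave the level $\ge i'+1$ (exiting the loop).

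Applying the Structural Lemma with $i'=i-1$ to the sub-walk $\walk(s_{j^\star},i-1,\mu_{j^\star})$ (which contains every node strictly between $\mu_{j^\star}$ and $\mu_{j^\star+1}$, together with $\mu_{j^\star+1}$ itself as its last vertex) yields (b); applying it with $i'=i$ to the enclosing call yields (a). Combining these, $\rig(\mu)=\mu_{j^\star+1}=\min_{\alpha>\mu}\{\alpha\mid\level(\alpha)\ge\level(\mu)\}$, as claimed. The main obstacle I expect is phrasing the Structural Lemma carefully enough that the induction goes through at the ``exit'' main node $\mu_{J+1}$, whose level is never assigned inside its own call but inherited from a higher recursion (or from the convention on $|w|$); once that bookkeeping is handled, the rest is a direct unwinding of the recursive definition of $\walk$.
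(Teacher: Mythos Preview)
Your proposal is correct and follows essentially the same approach as the paper's proof: both trace the assignment of $\rig(\mu)$ inside the call $\walk(s',i,\mu_0)$ where $\level(\mu)=i$ is set, identify $\rig(\mu)=\mu_{j^\star+1}$, and then argue that (a) $\level(\mu_{j^\star+1})\ge i$ by case-splitting on $g_i(y)$, and (b) all intermediate nodes lie in the recursive sub-call $\walk(s_{j^\star},i-1,\cdot)$ and hence have level $\le i-1$. Your packaging of (a) and (b) into an explicit inductive ``Structural Lemma'' is a bit more formal than the paper's direct appeal to the recursive structure, but the content is the same.
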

\begin{proof}
First, by inspecting the structure of the dependency tree resulted from Algorithm~\ref{algo:extwalk}, we observe that every node $\mu \in [|w|-1]$ must get assigned a value for $\rig(\mu)$ (provided that the walk terminates, which happens with probability $1$ by \cref{lem:terminate}). 

When $\rig(\mu)$ gets assigned at Line~\ref{walk:rig} (where $\mu_j = \mu$), we have $\level(\mu) = i$, and $\rig(\mu) = \mu_{j+1}$. By the definition of $\mu_{j+1}$ at Line~\ref{walk:a}, $\rig(\mu)$ corresponds the last vertex returned by the recursive call $\walk(s_j,i-1)$, and  all nodes $\mu'\in [\mu+1, \rig(\mu)-1]$ must have $\level(\mu')\le i-1$ which were assigned during $\walk(s_j,i-1)$. 

At this point, if $g_i(y)=1$, then we will assign $\level(\rig(\mu)) = \level(\mu_{j+1}) = i$. Otherwise, $g_i(y)=0$, and we will return to the parent level of recursion, where again $\rig(\mu)$ corresponds to the last vertex of the returned walk, and will eventually get assigned a higher level $\level(\rig(\mu))=i'$ during $\walk(s'',i')$ for some $i'>i$. In any case, $\rig(\mu)$ is the first node after $\mu$ that has level at least $\level(\mu)$.
\end{proof}

Moreover, by the definition of $\walk(s, \ell)$, we know $\rig(\mu) = \mu + |\walk(\nex(\mu),\level(\mu)-1)|$. 

For technical reason, we need to extend the definition of $\rig$ as follows. 
\begin{definition}
For $w\in \supp(\bw)$, $\mu \in [|w|-1]$, and $i \in \zeroTon{\ell}$, we define \[\rig_i(\mu) = \min_{\alpha>\mu} \{ \alpha \ \vert \ \level(\alpha)\ge i \},\] namely, the next node with level at least $i$.  
\end{definition}

We also need the following definition to denote the last vertex returned by $\walk(s',i)$.

\newcommand{\last}{\mathsf{last}}
\begin{definition} \label{def:last}
Let $\last(s', i)$ be the last vertex in the sequence returned by $\walk(s',i)$. 
\end{definition}

Now we are ready to prove the following lemma, which determines when $\mu_{i,j}^{\vk}$ exists.

\begin{lemma} \label{lem:existence}
For $(w,g,r) \in \supp(\bw,\bg,\br)$ and $\vk \in \N^{\ell}$, the following hold:
\begin{enumerate}[label=(\alph*)]
	\item Suppose $\mu^{\vk}_{i,j - 1}$ exists. Then, $\mu^{\vk}_{i,j}$ exists if and only if $g_i(\a_i(\rig_i(\mu^{\vk}_{i,j - 1}))) = 1$. \label{item:exist}
	\item $\x_i(\rig_i(\mu^{\vk}_{i,j - 1})) = \last(\nex(\mu^{\vk}_{i,j - 1}), i - 1)$ \label{item:xi}
\end{enumerate}
\end{lemma}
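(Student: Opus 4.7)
The plan is to identify, for fixed $\vk$ and $i$, the specific recursive call of $\walk$ that creates the level-$i$ nodes on the path $p(\mu^{\vk})$, and then read off both items from the behavior of this single call. Let $i^* = \level(\mu^{\vk}_{i,0}) > i$. The node $\mu^{\vk}_{i,0}$ is created inside some call $\walk(\cdot, i^*, \cdot)$ at the moment when $g_{i^*}(y) = 1$. Immediately afterwards the algorithm enters the nested recursive descent
$\walk(\nex(\mu^{\vk}_{i,0}), i^*-1, \mu^{\vk}_{i,0}) \to \walk(\nex(\mu^{\vk}_{i,0}), i^*-2, \mu^{\vk}_{i,0}) \to \cdots \to \walk(\nex(\mu^{\vk}_{i,0}), i, \mu^{\vk}_{i,0})$, where the third argument remains $\mu^{\vk}_{i,0}$ throughout because each of these calls issues its first sub-call before appending any vertex (tracked using \cref{lem:mu0}). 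This pinpoints the single function invocation inside which all of $\mu^{\vk}_{i,1},\dots,\mu^{\vk}_{i,k_i}$ are produced.

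Within this call $\walk(\nex(\mu^{\vk}_{i,0}), i, \mu^{\vk}_{i,0})$, I would perform a short induction on $t$ showing that iteration $t$ of the repeat loop has $s_t = \nex(\mu^{\vk}_{i,t})$ (with the convention that $\nex(\mu^{\vk}_{i,0})$ denotes the starting vertex of the call), and that the position $\mu_{t+1}$ produced in iteration $t$ equals $\rig_i(\mu^{\vk}_{i,t})$. For this latter identification, the positions strictly between $\mu^{\vk}_{i,t}$ and $\mu_{t+1}$ all live inside the sub-call $\walk(s_t, i-1, \cdot)$, so by \cref{lem:only-depend} their levels can only be assigned by calls $\walk(\cdot, i', \cdot)$ with $i' \le i - 1$ and hence are strictly less than $i$; the position $\mu_{t+1}$ itself has level at least $i$, since it is either assigned level $i$ here (if $g_i(\a_i(\mu_{t+1})) = 1$) or a larger level by an outer call.

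Setting $t = j - 1$ and unwinding definitions then yields both conclusions. For part (b), $\x_i(\rig_i(\mu^{\vk}_{i,j-1})) = \x_i(\mu_j) = x_j$, which by Lines~\ref{walk:xj} and~\ref{walk:a} of \cref{algo:extwalk} is the last vertex of $\walk(s_{j-1}, i-1, \cdot) = \walk(\nex(\mu^{\vk}_{i,j-1}), i-1, \cdot)$, i.e., $\last(\nex(\mu^{\vk}_{i,j-1}), i-1)$. For part (a), $\mu^{\vk}_{i,j}$ exists iff the node $\mu_j = \rig_i(\mu^{\vk}_{i,j-1})$ is assigned level $i$ in this call, and by the \textbf{if}-test in Line~\ref{walk:if} together with the assignment in Line~\ref{walk:a} this is equivalent to $g_i(\a_i(\rig_i(\mu^{\vk}_{i,j-1}))) = 1$.

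The main obstacle is the base case $t = 0$ (equivalently $j = 1$): here $\mu^{\vk}_{i,0}$ has level $i^* > i$ rather than $i$, so it is not the $\mu_t$ of the level-$i$ call in any direct way, and one has to chase through the nested descent above to identify the unique call $\walk(\nex(\mu^{\vk}_{i,0}), i, \mu^{\vk}_{i,0})$ and verify that no level-$\ge i$ node is created strictly between $\mu^{\vk}_{i,0}$ and $\mu_1$. Once this base case is settled, the inductive step is essentially bookkeeping about the repeat loop at level $i$.
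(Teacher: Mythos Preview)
Your proposal is correct and follows essentially the same approach as the paper. The paper organizes the argument as a two-case split ($j>1$ versus $j=1$), whereas you frame it as a single induction on $t$ inside the one call $\walk(\nex(\mu^{\vk}_{i,0}),i,\mu^{\vk}_{i,0})$ with the base case $t=0$; but the underlying identification of the relevant level-$i$ call via the nested descent from level $i^*$, and the reasoning that $\mu_{t+1}=\rig_i(\mu^{\vk}_{i,t})$ because all intermediate nodes receive levels $<i$, are exactly what the paper does.
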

\begin{proof}

We first consider the easier case of $j>1$.
\paragraph*{Case 1: $j>1$.}  
Consider the function call $\walk(s', i, \mu_0)$ such that $\mu_0 < \vidx^{\vk}_{i,j - 1} \leq \mu_0 + |\walk(s', i, \mu_0)|$. It exists and is unique by \Cref{lem:mu0}. Then $\mu_{j - 1} = \vidx^{\vk}_{i,j - 1}$ and $\mu_j = \rig(\mu_{j - 1}) = \rig(\vidx^{\vk}_{i,j - 1})$. The node $\vidx^{\vk}_{i,j}$ exists if and only if $\mu_j$ is of level $i$, which is equivalent to $g_i(\a_i(\mu_j)) = g_i(\a_i(\rig(\vidx^{\vk}_{i,j - 1}))) = 1$ by \cref{lem:giai}. Since in this case $\mu_{i,j - 1}$ is of level $i$, $\rig_i(\vidx^{\vk}_{i,j - 1})$ is the same as $\rig(\vidx^{\vk}_{i,j - 1})$. Thus \ref{item:exist} holds when $j > 1$, 

Moreover, in this case, by Line \ref{walk:a}, $\x_i(\mu_j) = x_j$ where $x_j = \last(s_{j - 1}, i - 1)$ (by Line \ref{walk:w}, \ref{walk:xj}). From Line \ref{walk:Cj}, \ref{walk:level}, we know $s_{j - 1} = \nex(\mu_{j - 1}) = \nex(\vidx^{\vk}_{i,j - 1})$. Together, we get $\x_i(\rig(\vidx^{\vk}_{i,j - 1})) = \x_i(\mu_j) = \last(\nex(\vidx^{\vk}_{i,j - 1}), i - 1)$. This proves \ref{item:xi} when $j > 1$. \\

Now we consider the corner case of $j=1$, which is slightly more technical and makes the $\rig_i$ notation in the lemma statement necessary.
\paragraph*{Case 2: $j= 1$.} In this case, $\vidx^{\vk}_{i,0}$ is of level higher than $ i$. Specifically, by our recursive definition of $\vidx^{\vk}_{i,0}=\vidx^{\vk}_{i + 1, k_{i + 1}}$,
we ultimately have  $\vidx^{\vk}_{i,0} = \vidx^{\vk}_{i', j'}$ where $i' = \min\{i' \in [i + 1, \ell] \ \vert \ k_{i'} > 0\}$ and $j' = k_{i'}$, provided that $i'$ exists; if such $i'$ does not exist, then we set $i'= \ell+1$, and in this case $\mu_{i,0}^{\vk}=0$, namely the root of the dependency tree.

Consider the function call $\walk(\nex(\vidx^{\vk}_{i',j'}), i' - 1, \vidx^{\vk}_{i',j'})$\footnote{In the corner case $i'=\ell+1$, this is $\walk(s,\ell,0)$.}, which  
recursively calls $\walk(\nex(\vidx^{\vk}_{i',j'}), i' - 2, \vidx^{\vk}_{i',j'})$, and so on, until we reach the recursive call $\walk(\nex(\vidx^{\vk}_{i',j'}), i, \vidx^{\vk}_{i',j'})$, in which we have the following:
\begin{enumerate}[label=(\roman*)]
	\item $s_0 = \nex(\vidx^{\vk}_{i',j'}) = \nex(\vidx^{\vk}_{i,0})$. \label{item:s0}
	\item $\mu_1 = \vidx^{\vk}_{i',j'} + |\walk(\nex(\vidx^{\vk}_{i',j'}), i - 1)|$. \label{item:mu1}
\end{enumerate}

Note that $\walk(\nex(\vidx^{\vk}_{i',j'}), i, \vidx^{\vk}_{i',j'})$ first calls $\walk(\nex(\vidx^{\vk}_{i',j'}), i - 1, \vidx^{\vk}_{i',j'})$, which returns a sequence with last vertex corresponding to node $\mu_1$ defined in \ref{item:mu1}. Hence, $\mu_1$ is the first node after $\vidx^{\vk}_{i',j'}$ of level at least $i$.
  Namely, $\mu_1 = \rig_i(\vidx^{\vk}_{i',j'}) = \rig_i(\vidx^{\vk}_{i,0})$.
  Then, note $\vidx^{\vk}_{i,1}$ exists if and only if $\level(\mu_1) = i$, or equivalently, $g_i(\a_i(\mu_1)) = 1$ by \cref{lem:giai}.
Together with $\mu_1 = \rig_i(\vidx^{\vk}_{i,0})$, this proves \ref{item:exist} when $j = 1$.

By \ref{item:s0}, we know $\x_i(\mu_1) = \last(s_0, i - 1) = \last(\nex(\vidx^{\vk}_{i,0}), i - 1)$. Together with $\mu_1 = \rig_i(\vidx^{\vk}_{i,0})$, this proves \ref{item:xi} when $j = 1$.
\end{proof}

Moreover, we remark that the values of $\x_i(\vidx^{\vk}_{i,1}), \x_i(\vidx^{\vk}_{i,2}), \dots, \x_i(\vidx^{\vk}_{i,j - 1}), \x_i(\rig_i(\vidx^{\vk}_{i,j - 1}))$ are enough to uniquely determine $\a_i(\rig_i(\vidx^{\vk}_{i,j - 1}))$. 

\begin{observation} \label{obs:determine-a-from-xi}
Fix $(w, g, r) \in \supp(\bw, \bg, \br)$ and $\vk \in \N^\ell$. Suppose $\vidx^{\vk}_{i,1}, \dots, \vidx^{\vk}_{i,j - 1}$ exist. From $\bar{x}_1 = \x_i(\vidx^{\vk}_{i,1}), \bar{x}_2 = \x_i(\vidx^{\vk}_{i,2}), \dots, \bar{x}_{j - 1} = \ \x_i(\vidx^{\vk}_{i,j - 1}), \bar{x}_j = \x_i(\rig_i(\vidx^{\vk}_{i,j - 1}))$, the value of  $\a_i(\rig_i(\vidx^{\vk}_{i,j - 1}))$ can be uniquely determined as follows:

\begin{enumerate}
	\item Let $j' = \min \{j' \ \vert \ \exists  j'' \text{ s.t. } 1 \leq j'' < j' \leq j,  a_{\bar{x}_{j''}} = a_{\bar{x}_{j'}}\}$. 
	\item If no such $j'$ exists, then $\a_i(\rig_i(\vidx^{\vk}_{i,j - 1})) = a_{\bar{x}_j}$. Otherwise, $\a_i(\rig_i(\vidx^{\vk}_{i,j - 1})) = \star_{j - j'}$.
\end{enumerate}
\end{observation}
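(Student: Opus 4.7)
The plan is to reduce Observation~\ref{obs:determine-a-from-xi} directly to the earlier Lemma~\ref{obs:determine-a-from-x}, by identifying a single function call $\walk(s', i, \mu_0)$ whose repeat-loop produces exactly the nodes named in the statement. The existing Lemma~\ref{obs:determine-a-from-x} already tells us, within such a call, how $\a_i(\mu_j)$ is determined from the sequence $x_1, \dots, x_j$. So the work is entirely in establishing the correspondence $\mu_t \leftrightarrow \mu^{\vk}_{i,t}$ (for $t < j$), $\mu_j \leftrightarrow \rig_i(\mu^{\vk}_{i,j-1})$, and $x_t \leftrightarrow \bar{x}_t$.

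First I would locate the right call. Let $\mu^{\vk}_{i,0}$ denote the recursively-defined ancestor of index $(0,\dots,0,k_{i+1},\dots,k_\ell)$, which always exists (worst case: the root $0$, whose level is $\ell+1$). By the same tracing argument used in Case~2 of the proof of Lemma~\ref{lem:existence}, the processing that handles $\mu^{\vk}_{i,0}$ triggers a chain of nested recursive invocations that ultimately calls $\walk(\nex(\mu^{\vk}_{i,0}), i, \mu^{\vk}_{i,0})$. This is the function call we work in for the remainder of the proof; its existence is the main bookkeeping obstacle, but it follows by induction on $\ell - i$ exactly as in Lemma~\ref{lem:existence}.

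Next I would match indices. Inside this call, the variables $\mu_1, \mu_2, \dots$ assigned at Line~\ref{walk:a} are precisely the successive nodes of level $\ge i$ reached after $\mu^{\vk}_{i,0}$, in order. Under the hypothesis that $\mu^{\vk}_{i,1}, \dots, \mu^{\vk}_{i,j-1}$ all exist (and hence have level exactly $i$), the repeat-loop cannot have exited earlier, so $\mu_t = \mu^{\vk}_{i,t}$ for $t = 1, \dots, j-1$. The next iteration then produces $\mu_j = \rig(\mu_{j-1}) = \rig(\mu^{\vk}_{i,j-1}) = \rig_i(\mu^{\vk}_{i,j-1})$, where the last equality uses $\level(\mu^{\vk}_{i,j-1}) = i$. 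By Line~\ref{walk:a} again, $x_t = \x_i(\mu_t)$ for each $t$, giving $x_t = \bar{x}_t$ for all $t \in [j]$.

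With these identifications in hand, applying Lemma~\ref{obs:determine-a-from-x} to this particular call at index $j$ yields that $\a_i(\mu_j) = \a_i(\rig_i(\mu^{\vk}_{i,j-1}))$ is the value given by the displayed rule applied to $x_1, \dots, x_j = \bar{x}_1, \dots, \bar{x}_j$, which is exactly the claim of Observation~\ref{obs:determine-a-from-xi}. The degenerate case $j=1$ (where no $\mu^{\vk}_{i,t}$ are present and only $\bar x_1$ is defined) reduces to the same function call; Lemma~\ref{obs:determine-a-from-x} with $j=1$ immediately gives $\a_i(\mu_1) = a_{x_1} = a_{\bar x_1}$, as required. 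The only step that needs genuine care is Step~1 (locating the enclosing call), because it must handle the chain of ``degenerate levels'' $k_{i+1} = \cdots = k_{i'-1} = 0$ between $\mu^{\vk}_{i,0}$ and the first level-$i$ node; everything else is a mechanical translation of symbols.
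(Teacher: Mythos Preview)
Your proposal is correct and follows essentially the same approach as the paper: identify the enclosing call $\walk(\nex(\mu^{\vk}_{i,0}), i, \mu^{\vk}_{i,0})$, match $\mu_t$ with $\mu^{\vk}_{i,t}$ and $x_t$ with $\bar{x}_t$, and invoke Lemma~\ref{obs:determine-a-from-x}. The paper's proof is a terse two-sentence version of exactly this argument, so your more explicit treatment (including the $j=1$ corner case and the reference to Case~2 of Lemma~\ref{lem:existence}) is a faithful expansion.
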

\begin{proof}
In the function call $\walk(\nex(\mu_{i,0}^{\vk}), i,\mu_{i,0}^{\vk})$, we have
$\mu_{j - 1} = \vidx^{\vk}_{i, j - 1}$,
and hence $x_{j'}$ in this function call equals $\bar{x}_{j'}$.
Then this observation follows directly from \Cref{obs:determine-a-from-x}. 
\end{proof}


\section{Warm Up Analysis for One Target Vertex}
\label{sec:one}

In this section we prove \cref{lem:warmup-hit-lower-bound}. 

In \cref{sec:one-vertex-notation}, we introduce the important conventions and notation used in this section.
In \cref{sec:one-vertex-main-proof}, we prove \cref{lem:warmup-hit-lower-bound}, assuming the technical lemmas \cref{lem:elong-small}, \cref{lem:good-case}, and  \cref{lem:bad-case}. These technical lemmas will be proved in   \cref{sec:cut-off}, \cref{sec:count_good}, and \cref{sec:count_bad} respectively.

\subsection{Notation}\label{sec:one-vertex-notation}

Throughout this section, we fix $\ell,m,n \in \N$ and $a \in [m]^n$ such that $\ell \le \log n$, and we will always work with (the probability space of) the extended walk $\extwalk_{\ell,m,n,a}$. We use $\bw,\bs,\bg,\br,\bh,\a,\level,\nex$ to denote the corresponding random variables in the extended walk. We also use $\bT$ to denote the dependency tree build on the extended walk.
Note that $\bw,\a,\level,\nex,\bh,\bT$ are all determined by $(\bs,\bg,\br)$. 

For every $i \in \zeroTon{\ell}$, we use $\bg_{\le i}$ to denote the collection $(\bg_{1},\dotsc,\bg_{i})$. Similarly, we use $\br_{\le i}$ to denote the collection $(\br_{1},\dotsc,\br_{i})$. For notational convenience, throughout this section, for $(g_{\le t},r_{\le t}) \in \supp((\bg_{\le t},\br_{\le t}))$, we will always use $g_{\le t} \wedge r_{\le t}$ to denote the event $\left[ \bg_{\le t}=g_{\le t} \wedge \br_{\le t} = r_{\le t} \right]$.

We now set $\tau = 20 \log n \log \log n$. We say $\vk \in \N^\ell$ is \emph{short}, when $k_i \le \tau/4$ for all $i \in [\ell]$. Otherwise, we say $\vk$ is \emph{long}. We use $\Kshort$ to denote the set of all short $\vk \in \N^\ell$, that is, $\Kshort = \zeroTon{\tau/4}^{\ell}$. For $\vk \in \N^\ell$, we let $\calB_{\vk}$ be the collection of all two-dimensional sequences $\vb = \{ b_{i,j} \}_{i \in [\ell], j\in [k_i]}$ with $b_{i,j}\in [n]$ for every $i \in [\ell]$ and $j \in [k_i]$.

Let $\elong$ be the event that the dependency tree has a node whose index is not a short sequence, \ie,
\[
\elong \coloneqq \left[ \text{$\exists \vk \in \N^{\ell}$ s.t. $\max_{i=1}^{\ell} k_i > \tau / 4$ and $\vidx^{\vk}$ exists} \right].
\]
The following lemma, which will be proved in \cref{sec:cut-off}, states that the probability of $\elong$ is small. 
\begin{lemma}\label{lem:elong-small}
	In probability space $(\bw, \bT)$, it holds that
    \[
    \Pr[\elong] \le n\ell/2^{\tau/4}.
    \]
\end{lemma}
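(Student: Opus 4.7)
The plan is to bound $\Pr[\elong]$ by a union bound over pairs $(i,s') \in [\ell] \times [n]$, viewing $s'$ as a potential starting vertex of a hypothetical recursive call $\walk(s', i, \cdot)$ and showing that, for each such pair, the probability that this call's repeat loop survives more than $\tau/4$ iterations (i.e., has $g_i(y)=1$ more than $\tau/4$ times in a row) is at most $2^{-(\tau/4+1)}$.

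First I will argue that $\elong$ implies the existence of a pair $(i,s') \in [\ell] \times [n]$ satisfying the above. If some $\mu^{\vk}$ exists in $\bT$ with $k_i > \tau/4$, then by \Cref{lem:mu0} and \Cref{lem:existence}(a) the level-$i$ ancestors $\mu^{\vk}_{i,1}, \ldots, \mu^{\vk}_{i,k_i}$ all lie inside a single invocation $\walk(s', i, \cdot)$ whose starting vertex is $s' = \nex(\mu^{\vk}_{i,0}) \in [n]$ (using the convention $\nex(0) = \bs$ when $\mu^{\vk}_{i,0}$ is the root), and the repeat loop of this invocation has executed at least $k_i > \tau/4$ iterations with $g_i(y) = 1$.

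Second, for each fixed $(i,s')$, I will bound the single-pair probability by $2^{-(\tau/4+1)}$. I condition on $(\bg_{<i}, \br_{<i}, \br_i)$. By \Cref{lem:terminate}, all lower-level recursive calls terminate almost surely. Under this conditioning, the sequence of arguments $y_1, y_2, \ldots$ that $g_i$ would be queried on inside $\walk(s', i, \cdot)$ \emph{assuming the chain continues} (i.e., pretending $g_i(y_j) = 1$ every time) is a deterministic function of $(g_{<i}, r_{<i}, r_i)$: each next $y_j$ is produced by concatenating a sub-walk generated by $\walk(s_{j-1}, i-1, \cdot)$, which only reads $g_{<i}, r_{<i}$, with one level-$i$ transition $s_{j-1} := r_i(y_{j-1})$, so $g_i$ is never consulted at a fresh argument along the way. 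By \Cref{obs:determine-a-from-xi}, the $y_j$'s are pairwise distinct, because Line~\ref{walk:y} replaces any repeated $a_{x_{j+1}}$ with a fresh $\star_t$. Since $\bg_i$ is $\tau$-wise independent on its whole extended domain and independent of $(\bg_{<i}, \br_{<i}, \br_i)$, and since $\tau/4+1 \le \tau$, the conditional probability that $g_i(y_1) = \cdots = g_i(y_{\tau/4+1}) = 1$ is exactly $2^{-(\tau/4+1)}$.

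The main technical obstacle will be making rigorous the claim that the hypothetical-chain sequence $y_1, y_2, \ldots$ is indeed a function of $(g_{<i}, r_{<i}, r_i)$ alone; in particular, one must check that even if the real chain has stopped at some $j^\star$ with $g_i(y_{j^\star}) = 0$, we still have a well-defined pseudo-continuation to feed into the $\tau$-wise independence argument, which amounts to an induction on $j$ closely mirroring the structure of Algorithm~\ref{algo:extwalk}. Once this is in place, the final calculation is immediate: the union bound yields $\Pr[\elong] \le \ell \cdot n \cdot 2^{-(\tau/4+1)} \le n\ell/2^{\tau/4}$, as desired.
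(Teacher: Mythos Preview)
Your proposal is correct and follows essentially the same approach as the paper's proof: both decompose via a union bound over $(i,s_0)\in[\ell]\times[n]$, condition on $(\bg_{<i},\br_{<i},\br_i)$, observe that the sequence of arguments $y_j$ fed to $g_i$ in $\walk(s_0,i)$ is then deterministically determined and pairwise distinct (the paper phrases this as ``by simple induction, each $\a_i(\mu^{\vk}_{i,j})$ is independent of $\bg_i$''), and then invoke the $\tau$-wise independence of $\bg_i$. Your ``hypothetical chain'' discussion is a more explicit articulation of exactly the subtlety the paper dispatches in one sentence, and your care about the extended domain $[m]\cup\{\star_t\}$ is a detail the paper leaves implicit.
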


Now we formally define the events $\mathcal{F}^{\vk, \vb}_{i, j}$ 
over the probability space of $\extwalk_{\ell,m,n,a}$, which will be used throughout the section.
Let $\vk \in \N^\ell$ be a sequence. For $1\le I\le \ell, 0\le J\le k_I$ and $\vb \in \calB_{\vk}$, we define $\mathcal{F}^{\vk,\vb}_{I,J}$ as the event that the following two hold:
\begin{itemize}
    \item For every $I<i\le \ell$ and every $1\le j\le k_i$, node $\vidx^{\vk}_{i,j}$ exists and $\nex(\vidx^{\vk}_{i,j}) = b_{i,j}$.
    \item For every $1\le j\le J$, node $\vidx^{\vk}_{I,j}$ exists and $\nex(\vidx^{\vk}_{I,j}) = b_{I,j}$.
\end{itemize}
We also use $\mathcal{F}^{\vk, \vb}_i$ as shorthand for $\mathcal{F}^{\vk, \vb}_{i, k_i}$. Specifically, we define $\mathcal{F}^{\vk, \vb}_{\ell + 1}$ to be always true. 
For simplicity, we sometimes use $p(\vk)$ to denote $p(\mu^{\vk})$, \ie, the path from the root to the node $\mu^{\vk}$ on the dependency  tree.

In Table~\ref{table:summary-one-vertex} we provide a summary of all the notations defined and used in this section, as well as the notations for $\textsf{Walk}_{\ell,m,n,a}$ defined in Section~\ref{sec:extendedwalk}. 

\begin{table}[H]\label{table:summary-one-vertex}
	\renewcommand{\arraystretch}{1.2}
	\begin{center}
		\begin{tabular}{|l l|}
			\hline
			\textbf{Notation}  & \textbf{Meaning}  \\ \hline
			$\bw, \bT$ & random variables; the extended walk and the dependency tree\\
			Greek letters ($\alpha,\beta,\gamma$) & tree nodes \\
			$p(\alpha)$ & the path on $T$ from root to $\alpha$\\	
			$\pa(\alpha)$ & the parent of node $\alpha$ on $T$ \\
			$(g_i,r_i)$ & components of hash function in $\calH_{\ell,m,n}$ \\
			$r_{\le i}, g_{\le i}$ & the sequence $(r_{1},\dotsc,r_{i})$ and $(g_{1},\dotsc,g_{i})$ \\
			$r_{\le t} \wedge g_{\le t}$ & the event $\left[ \br_{\le t}=r_{\le t} \wedge \bg_{\le t} = g_{\le t} \right]$ \\
			$\ell$ & number of components (sub-restrictions, levels) in $\calH_{\ell,m,n}$; number of levels; $\ell \le \log n$ \\
			$\tau$ & independence parameter in $\calH_{\ell,m,n}$; $\tau = 20\log n \log\log n$\\
			$\nex(\alpha)$ & $w_{\alpha + 1}$, \ie, next vertex after node $\alpha$\\
			$\a_i(\alpha)$ & the parameter we passed to $g_i, r_i$ trying to determine the next vertex after node $\alpha$\\
			$\level(\alpha)$ & the smallest $i$ such that $g_i(\a_i(\alpha)) = 1$; $\nex(\alpha)$ is determined by $r_{\level(\alpha)}$\\
			$\vidx^{\vk}$ or $\vidx[\vk]$ & the tree node determined by $\vk$ \\
			$\vidx^{\vk}_{i,j}$ & the $j$-th level $i$ ancestor of $\vidx^{\vk}$; equals the parent of $\vidx^{\vk}_{i,1}$ if $j = 0$ (roughtly speaking)\\
			$\Kshort$ & $\zeroTon{\tau/4}^{\ell}$ \\
			$\calB_{\vk}$ & set of two-dimensional sequence $\vb$ with values in $[n]$ and shape $\vk$ \\
			$\rig_i(\alpha)$ & the first node after $\alpha$ of level $\geq i$\\
			$\rig(\alpha)$ & the first node after $\alpha$ of level $\geq \level(\alpha)$ \\
			$\elong$ & the event that $\vidx^{\vk}$ exists for any long $\vk$ \\
			$\mathcal{F}^{\vk, \vb}_{i, j}$ & the event that for all $(i',j')$ before or equal to $(i,j)$, $\mu_{i',j'}^{\vk}$ exists and $\nex(\mu_{i',j'}^{\vk}) = \vb_{i',j'}$\\
			$\star_*$ & $\star_t$ for any $t \in \N$\\
			$p(\vk)$ & the path $p(\mu^{\vk})$ from root to $\mu^{\vk}$ \\
			$\walk(s',i)$ & the level $\leq i$ extended walk from $s'$ \\
			$\last(s',i)$ & the last vertex of $\walk(s',i)$ \\
			\hline
		\end{tabular}
		\caption{Summary of Notation}
	\end{center}
\end{table}

\subsection{Proof of Lemma~\ref{lem:warmup-hit-lower-bound}}\label{sec:one-vertex-main-proof}

\begin{reminder}{\cref{lem:warmup-hit-lower-bound}}
	Suppose $\ell = \log n - \frac{\log F_2(a)}{2} - 10$. For every vertex $v \in [n]$, we have 
	\[
	\Pr_{\bh \getsR \calH_{\ell,m,n},\bs \getsR [n]}[v\in f^*_{a,\bh}(\bs)] = \Theta\left (\frac{1}{\sqrt{F_2(a)}}\right).
	\]
\end{reminder}

Our proof strategy is to utilize Lemma \ref{lem:subset-of-extwalk}. We first count the number of times that vertex $v$ occurs in walk $\bw$, and then subtract the ``bad occurrences'' of $v$, \ie, the occurrences of $v$ in those $\bw$ where there exist $\alpha \not= \beta \in [|\bw|], a_{\bw_{\alpha}} = a_{\bw_{\beta}}$.  

By \Cref{obs:w-nex}, we always have $w_{\mu + 1} = \nex(\mu)$. Thus we have\footnote{We also use $\#A$ to denote the size $|A|$ of a set $A$.} 
$$\#\left\{\mu \ \vert \ \bw_\mu = v \right\} = \#\left\{\vk \in \N^\ell \  \middle\vert \ \nex(\mu^{\vk}) = v\right\}.$$
Note that when we write $\nex(\vidx^{\vk})$, we implicitly require that the node $\vidx^{\vk}$ exists. We will follow this convention in the rest of the paper.  

Since $g_i$ and $r_i$ are only $\tau$-wise independent, our technique can only handle those $\vk \in \Kshort$. Fortunately, the contribution of those $\vk \in \Kshort$ will already be sufficient to provide a good lower bound. Namely, we only count 

$$\#\left\{\vk \in \Kshort \  \middle\vert \ \nex(\mu^{\vk}) = v\right\}.$$

The occurrence $\bw_\mu = v$ is a bad occurence only when there exist\footnote{We use the shorthand $x\neq y\in A$ to mean $x,y\in A$ and $x\neq y$.} $\alpha \not= \beta \in [|\bw|], a_{\bw_{\alpha}} = a_{\bw_{\beta}}$. Hence if we let $\vk'$ be the index of $\alpha - 1$ and $\vk''$ be the index of $\beta - 1$, then $\bw_\mu = v$ is a bad occurrence only when \[\exists \vk' \not= \vk'' \in \N^{\ell}, a_{\nex(\vk')} = a_{\nex(\vk'')}.\]
Note here $\vk'$ and $\vk''$ may not belong in $\Kshort$. But by Lemma \ref{lem:elong-small}, this cannot happen too often. Therefore we can still get our desired bound. 

Formally, we will first prove the following two lemmas.

\begin{lemma}[Counting the number of occurrence of $v$] \label{lem:good-case}
	For every vertex $v \in [n]$, it holds that
	\[
	\frac{2^{\ell}}{n} - \frac{1}{n^3} \leq 
	\E_{\bw, \bT} \Big[\#\left\{\vk \in \Kshort \  \middle\vert \ \nex(\mu^{\vk}) = v\right\} \Big] \leq \frac{2^\ell}{n}.
	\]
\end{lemma}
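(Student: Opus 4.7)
The plan is to reduce the expectation to a sum of single-node probabilities via linearity, establish an exact per-index formula using the iterative structure of the hash family, and then bound the resulting geometric sum.

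\textbf{Step 1 (Linearity).} Since each index $\vk \in \Kshort$ corresponds to at most one tree node $\mu^{\vk}$, and by Lemma~4.2 (Observation on $w_{\mu+1}=\nex(\mu)$) we have $\nex(\mu^{\vk}) = v$ only if $\mu^{\vk}$ exists, I would write
\[
\E_{\bw,\bT}\!\left[\#\{\vk \in \Kshort : \nex(\mu^{\vk}) = v\}\right] \;=\; \sum_{\vk \in \Kshort} \sum_{\vb \in \calB_{\vk} : \vb \text{ ends in } v} \Pr\!\left[\calF^{\vk,\vb}\right],
\]
where ``$\vb$ ends in $v$'' means the coordinate $b_{\level(\mu^{\vk}), k_{\level(\mu^{\vk})}}$ of $\vb$ equals $v$. (For the root index $\vk = \vec 0$ one simply uses $\Pr[\nex(0)=v] = \Pr[\bs = v] = 1/n$.)

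\textbf{Step 2 (Key exact identity).} The main technical claim I would prove is: for every short $\vk \in \Kshort$ and every $\vb \in \calB_{\vk}$,
\[
\Pr\!\left[\calF^{\vk,\vb}\right] \;=\; \frac{1}{(2n)^{|\vk|_1}}.
\]
I would prove this by induction along the coordinates $(i,j)$ of the partial events $\calF^{\vk,\vb}_{i,j}$, going from the root down to the leaf $\mu^{\vk}$. Given $\calF^{\vk,\vb}_{i,j-1}$ (or its appropriately defined predecessor when $j=1$), the new event adds two atomic conditions at node $\mu^{\vk}_{i,j} = \rig_i(\mu^{\vk}_{i,j-1})$: first, $\bg_i(\a_i(\rig_i(\mu^{\vk}_{i,j-1}))) = 1$, and second, $\br_i(\a_i(\rig_i(\mu^{\vk}_{i,j-1}))) = b_{i,j}$, using Lemma~4.11(a) and the definition of $\nex$. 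The crux is that, conditioned on the history, the argument passed to $\bg_i,\br_i$ here is \emph{fresh}, i.e.\ distinct from all previous arguments of $\bg_i$ and $\br_i$ that are constrained by $\calF^{\vk,\vb}_{i,j-1}$; and the total number of such queries made to $\bg_i$ (resp.\ $\br_i$) during the full specification of $\calF^{\vk,\vb}$ is exactly $k_i \le \tau/4$. Given this freshness, $\tau$-wise independence of $\bg_i$ and $\br_i$ implies $\bg_i(\cdot)$ is an independent uniform bit and $\br_i(\cdot)$ an independent uniform element of $[n]$, so the conditional probability of the two new conditions is exactly $\tfrac{1}{2}\cdot\tfrac{1}{n} = \tfrac{1}{2n}$; multiplying through yields $1/(2n)^{|\vk|_1}$.

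\textbf{Step 3 (Freshness argument — the main obstacle).} The hard part is verifying the ``fresh argument'' claim rigorously. Here I would leverage Observation~4.14 and Lemma~4.11(b), which together express $\a_i(\rig_i(\mu^{\vk}_{i,j-1}))$ as a function of $\last(\nex(\mu^{\vk}_{i,j'-1}), i-1)$ for $j' \le j$ and $a$ only; importantly, by the algorithm $\walk$ (the $\star_t$-substitution at Line~\ref{walk:y}) the value $\a_i(\rig_i(\mu^{\vk}_{i,j-1}))$ is designed never to coincide with any earlier $\a_i(\mu^{\vk}_{i,j'})$ for $j' < j$ on the same level-$i$ chain. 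A careful induction exploiting that $\last(\cdot, i-1)$ is determined entirely by $(\bg_{\le i-1}, \br_{\le i-1})$, and hence independent of $(\bg_i,\br_i)$, lets me commute the conditioning so that the only uses of $(\bg_i,\br_i)$ in $\calF^{\vk,\vb}_{i,j-1}$ are at the (at most $j-1 \le \tau/4-1$) previously fixed arguments, none of which equal the new one. The short index bound $k_i \le \tau/4$ is used here twice: to keep below the $\tau$-wise independence threshold, and to ensure the step counts in $\walk$ never exceed the independence budget.

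\textbf{Step 4 (Summation).} Given Step~2, summing over $\vb$ with fixed last coordinate equal to $v$ gives $\Pr[\nex(\mu^{\vk}) = v] = n^{|\vk|_1-1}/(2n)^{|\vk|_1} = 1/(2^{|\vk|_1} n)$, so
\[
\E_{\bw,\bT}\!\left[\#\{\vk \in \Kshort : \nex(\mu^{\vk}) = v\}\right] \;=\; \frac{1}{n} \sum_{\vk \in \Kshort} 2^{-|\vk|_1} \;=\; \frac{1}{n}\prod_{i=1}^{\ell}\!\left(2 - 2^{-\tau/4}\right).
\]
The upper bound $\le 2^\ell/n$ is immediate. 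For the lower bound I would use $\prod_{i=1}^{\ell}(2 - 2^{-\tau/4}) = 2^\ell \prod_i(1 - 2^{-\tau/4-1}) \ge 2^\ell(1 - \ell\cdot 2^{-\tau/4-1})$, and then observe that for $\ell \le \log n$ and $\tau = 20\log n\log\log n$, the correction $2^\ell \cdot \ell \cdot 2^{-\tau/4-1}/n$ is far smaller than $1/n^3$, finishing the proof.
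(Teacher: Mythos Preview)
Your proposal is correct and follows essentially the same approach as the paper: linearity reduces to computing $\Pr[\calF^{\vk,\vb}_1]$, which the paper establishes as exactly $(2n)^{-|\vk|_1}$ via the same two-level induction (Lemma~5.4 for the per-step $\tfrac{1}{2n}$ using the freshness argument and $\tau$-wise independence, Corollary~5.5 and Lemma~5.6 for the aggregation over $j$ and then $i$), followed by the identical product-of-geometric-sums computation. The one structural point the paper makes more explicit than your plan is that the outer induction hypothesis is not just ``$\Pr[\calF^{\vk,\vb}_{i+1}]$ equals such-and-such'' but the stronger statement that $\calF^{\vk,\vb}_{i+1}$ is \emph{independent of} $(\bg_{\le i},\br_{\le i})$; this independence is what licenses your ``commute the conditioning'' step when you peel off the predicate $P(\br_i,\bg_i)$ encoding the level-$i$ constraints.
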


\begin{lemma}[Upper bounding the bad occurrence of $v$] \label{lem:bad-case}
	For every vertex $v \in [n]$, let $C_v = \#\{i \mid a_i = a_v\}$ be the number of occurrences of $a_v$ in the input $a$. It holds that
	\[
    \E_{\bw, \bT} \left[ \#\left\{\vk \in \Kshort \ \middle\vert \ \nex(\mu^{\vk}) = v \land \exists \vk' \not= \vk'' \in \N^{\ell}, a_{\nex(\mu^{\vk'})} = a_{\nex(\mu^{\vk''})}\right\} \right] \leq 48\frac{8^\ell F_2(a)}{n^3} + 16 \frac{4^\ell C_v}{n^2} + \frac{1}{n^3}.
	\]
\end{lemma}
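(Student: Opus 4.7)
Let $N_v$ denote the random variable being bounded in the statement. The plan is to split $\E[N_v]$ based on whether the existential witness pair $(\vk', \vk'')$ can be chosen short. Formally,
\[
\E[N_v] \le \E\bigl[N_v \cdot \mathbf{1}[\elong]\bigr] + \E\bigl[N_v \cdot \mathbf{1}[\neg \elong]\bigr],
\]
where the first term captures the case in which every witness pair must use a long index, and the second restricts attention to short witnesses. The long case is handled bluntly: $N_v$ is trivially bounded by $\#\{\vk \in \Kshort : \mu^{\vk} \text{ exists}\} \le |\Kshort| = (\tau/4)^\ell$, and by Lemma \ref{lem:elong-small}, $\Pr[\elong] \le n\ell / 2^{\tau/4}$. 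With $\tau = 20 \log n \log\log n$ and $\ell \le \log n$, the product $(\tau/4)^\ell \cdot n\ell/2^{\tau/4}$ is superpolynomially small and absorbable into the $1/n^3$ error term.

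For the short-witness regime, I union bound over the choice of short $\vk' \ne \vk''$:
\[
\E\bigl[N_v \cdot \mathbf{1}[\neg\elong]\bigr] \le \sum_{\vk \in \Kshort} \sum_{\substack{\vk',\vk'' \in \Kshort \\ \vk' \ne \vk''}} \Pr\!\Bigl[\nex(\mu^{\vk}) = v \,\land\, \mathcal{F}^{\vk'} \,\land\, \mathcal{F}^{\vk''} \,\land\, a_{\nex(\mu^{\vk'})} = a_{\nex(\mu^{\vk''})}\Bigr].
\]
Then I split by whether $\vk$ coincides with one of $\vk',\vk''$. In the coincident case (say $\vk = \vk'$), the collision condition forces $a_{\nex(\mu^{\vk''})} = a_v$, so $\nex(\mu^{\vk''})$ lies in a set of size $C_v$. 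I expect the single-event probability to factor (up to constants) as
\[
2^{-|\vk|_1 - |\vk''|_1} \cdot \frac{1}{n} \cdot \frac{C_v}{n},
\]
mirroring the calculation behind Lemma \ref{lem:good-case}; the outer sums $\sum_{\vk \in \Kshort} 2^{-|\vk|_1} \le 2^\ell$ yield a contribution of $O(4^\ell C_v / n^2)$, accounting for the second summand. In the distinct case, marginalizing the collision over values $t \in [m]$ introduces a factor $\sum_t (c_t/n)^2 = F_2(a)/n^2$ with $c_t = \#\{i : a_i = t\}$, so each summand becomes $O\bigl(2^{-|\vk|_1 - |\vk'|_1 - |\vk''|_1} \cdot n^{-1} \cdot F_2(a)/n^2\bigr)$ and summing gives $O(8^\ell F_2(a)/n^3)$.

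\textbf{Main obstacle.} The hard step is establishing the per-triple probability estimate when $\vk,\vk',\vk''$ are distinct, because the three root-to-node paths $p(\vk), p(\vk'), p(\vk'')$ generally share ancestors in $\bT$, and cross-path collisions $a_{\bw_\beta} = a_{\bw_{\beta'}}$ with $\beta,\beta'$ on different paths can destroy the independence of the level conditions $g_i(\a_i(\beta)) = 1$ that underpin the ``$2^{-|\vk|_1}$-per-path'' heuristic. The plan is to induct on levels from $\ell$ down to $1$, at each level $i$ conditioning on $(g_{>i}, r_{>i})$ (which determines the shape of $p(\vk) \cup p(\vk') \cup p(\vk'')$ at levels $> i$), and then exposing the level-$i$ behaviour using the $\tau$-wise independence of $g_i, r_i$. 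Since the three indices are short, the number of level-$i$ ancestors on the union of paths is at most $3\tau/4 < \tau$, so $\tau$-wise independence applies. To bypass the cross-path collision issue, I will follow the strategy alluded to in the overview: rather than controlling all triples directly, for each short $\vk$ with $\nex(\mu^{\vk}) = v$ participating in a bad event, I choose a canonical witness pair $(\vk',\vk'')$ along the extended walk whose path union has no cross-path vertex collisions, so the level-wise factorization goes through cleanly and yields a summand of the form $2^{-K}/n^3$ where $K$ is the number of distinct nodes on the union of paths; this revised estimate then sums to the claimed $O(8^\ell F_2(a)/n^3)$ bound.
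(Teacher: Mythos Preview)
Your overall architecture matches the paper's proof: split off the $\elong$ contribution using Lemma~\ref{lem:elong-small} and the trivial bound $N_v\le|\Kshort|$; invoke a structural lemma to replace the arbitrary witness pair $(\vk',\vk'')$ by one with no cross-path collisions relative to $p(\vk)$; bound the per-triple probability by $2^{-K}\cdot n^{-d}$ where $K$ is the number of distinct nodes on the union of the paths; and sum. This is exactly what the paper does (the structural lemma is its Lemma~\ref{lem:structure}, the per-$\vK$ bound is Lemma~\ref{induction-K}, and the summation is Lemma~\ref{lem:count-path}).

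There is one real slip. Your treatment of the coincident case $\vk=\vk'$ claims the single-event probability ``factors (up to constants) as $2^{-|\vk|_1-|\vk''|_1}\cdot n^{-1}\cdot C_v/n$, mirroring the calculation behind Lemma~\ref{lem:good-case}.'' This is not an upper bound: when $p(\vk)$ and $p(\vk'')$ share ancestors, the correct exponent is $\sum_j K_j\le |\vk|_1+|\vk''|_1$, so $2^{-\sum_j K_j}\ge 2^{-|\vk|_1-|\vk''|_1}$ and your heuristic points the wrong way. The two-path case has \emph{exactly} the same cross-path collision issue you identify for the three-path case; Lemma~\ref{lem:good-case} avoided it only because there was a single path. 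The fix is uniform: apply the structural lemma \emph{before} the case split (this is how the paper orders things), so that in both the coincident and the distinct case one is bounding $\Pr[\calF^{\vK,\vb}_1\wedge\calG^{\vK}_1]\le 2^{-\sum_j K_j}/n^{\sum_j K_j}$ with $\vK=\{\vk,\vk''\}$ or $\{\vk,\vk',\vk''\}$ respectively. Finally, to turn $\sum_{\vK}2^{-\sum_j K_j}$ into $O(4^\ell)$ or $O(8^\ell)$ you need the combinatorial estimate $\sum_{(\vk^1,\dots,\vk^t)}\prod_i 2^{-K_i}\le t!\,2^{t(\ell+1)}$ (the paper's Lemma~\ref{lem:count-path}); this is where the constants $16$ and $48$ come from, and your plan should name this step explicitly rather than folding it into ``summing gives''.
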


Based on  \cref{lem:good-case} and \cref{lem:bad-case}, we are ready to prove \cref{lem:warmup-hit-lower-bound}.

\begin{proofof}{Lemma~\ref{lem:warmup-hit-lower-bound}}
	From \cref{obs:w-nex} and \cref{lem:subset-of-extwalk}, we have
	\begin{align*}
	\Pr_{\bh,\bs}[v \in f_{a,\bh}^*(\bs)] 
	\geq&~ \E_{\bw,\bT}\Big[ \#\left\{\vk \in \Kshort \ \middle\vert \ \nex(\mu^{\vk}) = v \land \forall \vk' \not= \vk'' \in \N^{\ell}, a_{\nex(\mu^{\vk'})} \neq a_{\nex(\mu^{\vk''})}\right\} \Big] \\
	=&~ \E_{\bw, \bT} \Big[\#\left\{\vk \in \Kshort \  \middle\vert \ \nex(\mu^{\vk}) = v\right\} \Big] \\
	 & - \E_{\bw, \bT} \left[ \#\left\{\vk \in \Kshort \ \middle\vert \ \nex(\mu^{\vk}) = v \land \exists \vk' \not= \vk'' \in \N^{\ell}, a_{\nex(\mu^{\vk'})} = a_{\nex(\mu^{\vk''})}\right\} \right] \\ 
	\geq&~ \frac{2^\ell}{n} - 48\frac{8^\ell F_2(a)}{n^3} - 16 \frac{4^\ell C_v}{n^2} - \frac{2}{n^3} \tag{Lemma \ref{lem:good-case} and \ref{lem:bad-case}} \\
	=&~ \frac{1}{2^{10}\sqrt{F_2(a)} } - \frac{48}{2^{30}\sqrt{F_2(a)} } - \frac{16 C_v}{2^{20}F_2(a) } - \frac{2}{n^3} \tag{$\ell = \log n - \frac{\log F_2(a)}{2} - 10$}\\
	\ge&~ \Omega\left(\frac{1}{\sqrt{F_2(a)} }\right),
	\end{align*}
	where the last step follows from $C_v= \#\{i \mid a_i = a_v\} \leq \sqrt{F_2(a)} \le n$.

For the other direction, we have 
\begin{align*}
\Pr_{\bh,\bs}[v \in f_{a,\bh}^*(\bs)] 
\leq &~ \Pr_{\bw,\bT}\Big[ \exists \vk \in \Kshort \text{ s.t. }\nex(\mu^{\vk}) = v  \Big] + \Pr[\elong] \\
\leq &~ \E_{\bw,\bT}\Big[ \#\left\{\vk \in \Kshort \ \middle\vert \ \nex(\mu^{\vk}) = v \right\} \Big] + \Pr[\elong] \\
\leq &~\frac{2^\ell}{n} + n\ell/2^{\tau} \tag{Lemma \ref{lem:good-case} and Lemma \ref{lem:elong-small}} \\
\le &~O\left(\frac{1}{\sqrt{F_2(a) }}\right), \tag{$\ell = \log n - \frac{\log F_2(a)}{2} - 10$}
\end{align*}
where the last step follows from the fact that $\ell \leq \log n$, $\tau \geq 20\log n \log \log n$, so that $n \ell / 2^{\tau / 4} \leq \frac{1}{n^3}$. 
\end{proofof}

~\\
The rest of the section is devoted to the proofs of Lemma~\ref{lem:good-case} (Section~\ref{sec:count_good}) and Lemma~\ref{lem:bad-case} (Section~\ref{sec:count_bad}).

\subsection{Counting All Occurrences of a Vertex}\label{sec:count_good}

Now we count the expected number of occurrences of $v$, namely, 
\[
\E \Big[\#\left\{\vk \in \Kshort \  \middle\vert \ \nex(\mu^{\vk}) = v\right\}\Big].
\]
We first enumerate and fix the sequence $\vk$. Then by linearity of expectation, what we want is the summation of the probability
\[
\Pr \Big[\text{$\vidx^{\vk}$ exists} \land \nex(\vidx^{\vk}) = v \Big].
\]
over all $\vk \in \Kshort$. To compute this probability, we will use an  induction between the levels on the dependency tree.

Let us first look at the case within a single level.
Intuitively, when conditioning on $r_{\le i - 1} \land g_{\le i - 1}$ and $\calF^{\vk, \vb}_{i, j - 1}$ (which asserts $\nex(\vidx^{\vk}_{i,j - 1}) = b^{\vk}_{i,j - 1}$) for some fixed $\vb$,  the vertex $x_j$ (in $\walk(s', i)$) is simply the last vertex of $\walk(s_{j - 1}, i -1) = \walk(\nex(\vidx^{\vk}_{i,j-1}),i-1)$, which is determined by $r_{\le i - 1}, g_{\le i-1}$ by \Cref{lem:only-depend}.
Hence $a_{x_j}$ is fixed by $\vb, r_{\le i - 1}, g_{\le i - 1}$, and is independent of $r_i, g_i$. 
Due to the fact $j \leq \tau$ (since $\vk \in \Kshort$) and $g_i, r_i$ are $ \tau$-wise independent, this allow us to argue that $g_i(a_{x_j}) = 1$ with $1/2$ probability and $r_i(a_{x_j}) \getsR [n]$ is \emph{uniformly random}.

Then we can prove $\level(\rig(\mu^{\vk}_{i,j - 1}))=i$ holds with $1/2$ probability, and that $\nex(\mu^{\vk}_{i,j})$ is \emph{uniformly random} (when $\level(\rig(\mu^{\vk}_{i,j - 1}))=i$ and hence $\mu^{\vk}_{i,j}$ exists). This argument is formalized in the following important lemma, which functions as the inductive step in our  whole induction proof.

\begin{lemma} \label{lem:single-level}
	Fix $\vk \in \Kshort$ and $\vb \in \calB_{\vk}$. In probability space $(\bs, \bh, \bw)$, suppose (as induction hypothesis) that the event $\calF^{\vk,\vb}_{i + 1}$ is independent of the joint random variable $(\bg_{\le i},\br_{\le i})$.
	Then, for all $i \in [\ell]$ and $j \in [k_i]$, and all $g_{\le i - 1} \in \supp(\bg_{\le i - 1}),r_{\le i - 1} \in \supp(\br_{\le i - 1})$, it holds that
	\[
	\Pr \left[\calF^{\vk, \vb}_{i,j} \ \middle\vert \ \calF^{\vk, \vb}_{i, j - 1} \wedge g_{\le i - 1} \wedge r_{\le i - 1}\right] = \frac{1}{2n}.
	\]
\end{lemma}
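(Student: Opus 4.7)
The plan is to reduce the event $\calF^{\vk,\vb}_{i,j}$ to a pair of conditions on $\bg_i,\br_i$ at a single fresh ``query point'' $q \coloneqq \a_i(\rig_i(\mu^{\vk}_{i,j-1}))$, and then to invoke the $\tau$-wise independence of $\bg_i,\br_i$. Concretely, by part (a) of \Cref{lem:existence} (noting that $\mu^{\vk}_{i,j-1}$ exists under $\calF^{\vk,\vb}_{i,j-1}$), the node $\mu^{\vk}_{i,j}$ exists iff $g_i(q) = 1$, in which case $\mu^{\vk}_{i,j} = \rig_i(\mu^{\vk}_{i,j-1})$ and so $\nex(\mu^{\vk}_{i,j}) = r_i(q)$. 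Hence
\[
\calF^{\vk,\vb}_{i,j} \;=\; \calF^{\vk,\vb}_{i,j-1} \wedge [g_i(q) = 1] \wedge [r_i(q) = b_{i,j}],
\]
and it suffices to show that the conditional probability of the last two events is $\tfrac{1}{2n}$.

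First I would argue that $q$, together with the prior level-$i$ queries $q_{j'} \coloneqq \a_i(\mu^{\vk}_{i,j'})$ for $j' \in [j-1]$, are deterministic functions of $(\bs, \br_{\le i-1}, \bg_{\le i-1})$ once we condition on $\calF^{\vk,\vb}_{i,j-1}$. By part (b) of \Cref{lem:existence}, each required $\x_i$-value equals $\last(\nex(\mu^{\vk}_{i,j'-1}), i-1)$: for $j' \ge 2$ this $\nex$-value is the prescribed constant $b_{i,j'-1}$, and for $j' = 1$ it is $\nex(\mu^{\vk}_{i,0})$, which is either prescribed by $\vb$ through $\calF^{\vk,\vb}_{i+1}$ (after unwinding the recursive definition of $\mu^{\vk}_{i,0}$) or equals the starting vertex $\bs$ in the edge case $\mu^{\vk}_{i,0} = 0$. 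Since $\last(\cdot, i-1)$ depends only on $\br_{\le i-1}, \bg_{\le i-1}$ by \Cref{lem:only-depend}, all $\x_i$-values, and hence all $\a_i$-values via \Cref{obs:determine-a-from-xi}, are pinned down. Moreover, the final clause of \Cref{obs:determine-a-from-x} guarantees that $q, q_1, \dots, q_{j-1}$ are pairwise distinct.

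Next, the inductive hypothesis that $\calF^{\vk,\vb}_{i+1}$ is independent of $(\bg_{\le i}, \br_{\le i})$, together with independence across levels, ensures that conditioning on $\calF^{\vk,\vb}_{i+1} \wedge g_{\le i-1} \wedge r_{\le i-1}$ leaves the joint distribution of $(\bg_i, \br_i)$ unchanged. The only additional constraint imposed by $\calF^{\vk,\vb}_{i,j-1}$ is $g_i(q_{j'}) = 1 \wedge r_i(q_{j'}) = b_{i,j'}$ for $j' \in [j-1]$. Since $q, q_1, \dots, q_{j-1}$ are $j$ distinct points and $j \le k_i + 1 \le \tau/4 + 1 \le \tau$, the $\tau$-wise independence of $\bg_i$ and $\br_i$ implies that $(g_i(q), r_i(q))$ is uniform on $\{0,1\} \times [n]$ independent of those constraints, giving $\tfrac{1}{2}\cdot \tfrac{1}{n} = \tfrac{1}{2n}$. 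Averaging over any residual randomness in $\bs$ (which arises only in the edge case $\mu^{\vk}_{i,0} = 0$) is harmless since $\bs$ is independent of $(\bg_i, \br_i)$ and the above probability holds pointwise in $\bs$.

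The main obstacle will be handling the $j = 1$ corner case in which $\mu^{\vk}_{i,0}$ unwinds all the way to the root $0$: then $\nex(\mu^{\vk}_{i,0}) = \bs$ is not prescribed by $\vb$, so $q$ becomes $\bs$-dependent and one must justify that ``determined by the conditioning'' still suffices via the pointwise-in-$\bs$ argument above. A secondary subtlety is verifying that the only $(\bg_i, \br_i)$-constraints imposed by $\calF^{\vk,\vb}_{i,j-1}$ are at the $j-1$ points $q_{j'}$, so that no hidden constraints arise from non-ancestor nodes of $\mu^{\vk}$; this follows because $\walk(\cdot, i-1, \cdot)$ never reads $\bg_i$ or $\br_i$ by \Cref{lem:only-depend}. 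Everything else is a routine unpacking of the extended-walk definitions.
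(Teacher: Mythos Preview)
Your proposal is correct and follows essentially the same route as the paper: reduce $\calF^{\vk,\vb}_{i,j}$ to the two conditions $g_i(q)=1$ and $r_i(q)=b_{i,j}$ at the fresh point $q=\a_i(\rig_i(\mu^{\vk}_{i,j-1}))$, argue via \Cref{lem:existence}\ref{item:xi}, \Cref{lem:only-depend}, and \Cref{obs:determine-a-from-xi} that $q$ and the $j-1$ earlier level-$i$ query points are determined by $(\vb,g_{\le i-1},r_{\le i-1})$ (and possibly $\bs$), rewrite the conditioning event as $\calF^{\vk,\vb}_{i+1}\wedge\event_{\le i-1}$ together with a predicate $P(\br_i,\bg_i)$ depending only on those $j-1$ points, and then invoke the induction hypothesis plus $\tau$-wise independence. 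Your explicit treatment of the root edge case $\mu^{\vk}_{i,0}=0$ (where $\nex(\mu^{\vk}_{i,0})=\bs$) via a pointwise-in-$\bs$ argument is in fact a bit more careful than the paper, which absorbs this into the recursive convention for $b_{i,0}$ without comment.
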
 

\begin{proof}
	Fix $i \in [\ell]$ and $j \in [k_i]$ and let $g_{\le i - 1} \in \supp(\bg_{\le i -1})$ and $r_{\le i - 1} \in \supp(\br_{\le i -1})$. We let $\event_{\le i - 1}$ denote the event $\Big[g_{\le i - 1} \wedge r_{\le i - 1}\Big]$ for convenience. Our goal is to show that
	\[
	\Pr\left[\calF^{\vk, \vb}_{i,j} \ \middle\vert \ \calF^{\vk, \vb}_{i, j - 1} \wedge \event_{\le i - 1} \right] = \frac{1}{2n}.
	\]

	By \Cref{lem:existence} \ref{item:exist}, $\vidx^{\vk}_{i,j}$ exists if and only if $g_i(\a_i(\rig_i(\vidx^{\vk}_{i, j - 1}))) = 1$. Then let us inspect how $\a_i(\rig_i(\vidx^{\vk}_{i, j - 1}))$ is determined. Let $\walk(s', i, \mu_0)$ be the function call such that $\mu_0 < \rig_i(\vidx^{\vk}_{i, j - 1}) \leq \mu_0 + |\walk(s', i, \mu_0)|$ which exists and is unique by \Cref{lem:mu0}. In this function call, $\mu_j = \rig_i(\vidx^{\vk}_{i, j - 1})$. 
	
	Conditioning on $\calF^{\vk, \vb}_{i,j - 1}$, we know that $\vidx^{\vk}_{i,1}, \dots, \vidx^{\vk}_{i,j - 1}$ exist.  
	Then by \Cref{obs:determine-a-from-xi}, $\a_i(\mu_j)$ is determined by $\x_i(\vidx^{\vk}_{i,1}), \dots, \x_i(\vidx^{\vk}_{i,j - 1}), \x_i(\rig_i(\vidx^{\vk}_{i,j - 1}))$, namely, by all $\x_i(\rig_i(\vidx^{\vk}_{i,j' - 1}))$ for $j' \in [1, j]$. 

	By \Cref{lem:existence} \ref{item:xi}, $\x_i(\rig_i(\vidx^{\vk}_{i,j' - 1})) = \last(\nex(\vidx^{\vk}_{i,j' - 1}), i - 1)$. Conditioning on $\calF^{\vk, \vb}_{i,j - 1}$, for $j'\le j$ we have $\nex(\vidx^{\vk}_{i,j'-1}) = b_{i, j' - 1}$  (for the corner case of $j' = 1$, we recursively define $b_{i,0} = b_{i - 1, k_{i - 1}}$).

	Moreover, by \Cref{lem:only-depend}, $\last(\cdot, i - 1)$ only depends on $\br_{\leq i - 1}$ and $\bg_{\leq i - 1}$. Therefore, conditioning on $\calF^{\vk, \vb}_{i,j - 1} \land \event_{\le i - 1}$, each $\x_i(\rig_i(\vidx^{\vk}_{i,j' - 1}))$ ($1 \leq j' \leq j$) is uniquely determined from $\vk, \vb, g_{\leq i - 1}, r_{\leq i - 1}$. Hence, they also uniquely determine  $\a_i(\rig_i(\vidx^{\vk}_{i,j' - 1}))$  for $1 \leq j' \leq j$. 

	 Recall that $\mu^{\vk}_{i,j}$ exists if and only if $\bg_i(\a_i(\mu_j)) = 1$.  If $\mu^{\vk}_{i,j}$ exists, we know $\mu^{\vk}_{i,j} = \mu_j$ and $\nex(\mu_j) = \br_i(\a_i(\mu_j))$. So our goal is to show that $\bg_i(\a_i(\mu_j))=1 \land \br_i(\a_i(\mu_j)) = b_{i,j}$ indeed happens with $\frac{1}{2n}$ probability. We prove this using the fact that $\br_i(\cdot)$ and $\bg_i(\cdot)$ are $\tau$-wise independent, and our condition $\calF^{\vk, \vb}_{i, j - 1}$ has only revealed the values of $\br_i(\cdot),\bg_i(\cdot)$ at no more than $j\le \tau$ many points.

	By $\calF^{\vk, \vb}_{i,j - 1}$, for all $j' \in [j - 1]$, we have $\br_i(\a(\mu_{j'})) = \nex(\mu_{j'}) = b_{i,j'}$ and $\bg_i(\a(\mu_{j'}))=1$. For $r_i\in \supp(\br_{i})$ and $g_i \in \supp(\bg_{i})$, define an predicate 
	\[
	P(r_i, g_i) \coloneqq \left[ \forall j' \in [j-1], g_i(\a_i(\mu_{j'})) = 1 \land r_i(\a_i(\mu_{j'})) = b_{i,j'} \right].
	\]

	By definition, $\calF^{\vk, \vb}_{i,j - 1}$ is true if and only if $\calF^{\vk, \vb}_{i + 1}$ is true and for each $j' \leq j - 1$, $\level(\mu_{j'}) = i \land \nex(\mu_{j'}) = b_{i,j'}$, which is equivalent to $\bg_i(\a_i(\mu_{j'})) = 1 \land \br_i(\a_i(\mu_{j'})) = b_{i,j'}$. Thus, 

	\[
	\calF^{\vk, \vb}_{i,j - 1} \land \event_{\leq i - 1}  = \calF^{\vk, \vb}_{i + 1} \land \event_{\leq i - 1} \land  P(\br_i, \bg_i).
	\]
	
	We have shown that each $\a_i(\mu_{j'})$ is uniquely determined by $r_{\le i -1}, g_{\le i -1}$ and $\vk, \vb$, so $P(r_i, g_i)$ is a predicate of $r_i, g_i$ only, and hence $P(\br_i, \bg_i)$ only depends on the randomness of $\br_i, \bg_i$.
	To prevent confusion, we stress that $P$ is defined using the particular $r_{\leq i - 1}, g_{\leq i - 1}$ that we have fixed at the beginning of the proof, and does not depend on the random variables $\br_{\leq i - 1}, \bg_{\leq i - 1}$. 

	From our assumption that $\calF^{\vk,\vb}_{i + 1}$ is independent of the joint random variable $(\br_{\le i}, \bg_{\le i})$, we know that the event $\calF^{\vk,\vb}_{i + 1} \land \event_{\le i - 1}$ is independent of $\br_i, \bg_i$. Since $P(r_i, g_i)$ is a predicate of $r_i, g_i$, we know $\calF^{\vk,\vb}_{i + 1} \land \event_{\le i - 1}$ is still independent of $\br_i, \bg_i$ when conditioning on $P(\br_i, \bg_i)$. Namely, since \[\Pr\left[\br_i = r_i, \bg_i = g_i \ \middle\vert \ \calF^{\vk,\vb}_{i + 1} \land \event_{\le i - 1} \right] = \Pr\left[\br_i = r_i, \bg_i = g_i\right],\]
	and $P(r_i, g_i)$ is a predicate of only $r_i, g_i$, we know that 
	$$\Pr\left[\br_i = r_i, \bg_i = g_i \ \middle\vert \ \calF^{\vk,\vb}_{i + 1} \land \event_{\le i - 1}, P(r_i, g_i) \right] = \Pr\left[\br_i = r_i, \bg_i = g_i \ \middle\vert \ P(r_i, g_i) \right],$$

	Hence, we can derive 
	\begin{align*}
	&\Pr\left[\calF^{\vk, \vb}_{i,j} \ \middle\vert \ \calF^{\vk, \vb}_{i, j - 1} \land \event_{\le i - 1}\right] \\ 
	= \ &\Pr\left[\bg_i(\a_i(\mu_j)) = 1, \br_i(\a_i(\mu_j)) = b_{i,j} \ \middle\vert \ \calF^{\vk, \vb}_{i, j - 1} \land \event_{\le i - 1}\right]\\
	= \ &\Pr\left[\bg_{i}(\a_i(\mu_j)) = 1, \br_{i}(\a_i(\mu_j)) = b_{i,j}\ \middle\vert \ \calF^{\vk, \vb}_{i + 1} \wedge \event_{\le i - 1} \wedge  P(\br_i, \bg_i)\right] \\
	= \ &\Pr\left[\bg_{i}(\a_i(\mu_j)) = 1, \br_{i}(\a_i(\mu_j)) = b_{i,j}\ \middle\vert \ P(\br_i, \bg_i)\right] \\
	= \ &\frac{1}{2n},
	\end{align*}
	where the last step follows from the fact that $\bg_i(\cdot)$ and $\br_i(\cdot)$ are $\tau$-wise independent, $j \le \tau$, and $\a_i(\mu_j)$ is different from all other $\a_i(\mu_{j'})$ by 
	definition (see the ``in particular'' part of \cref{obs:determine-a-from-x})
	 and uniquely determined by $\vb, r_{\leq i - 1}, g_{\leq i - 1}$. 
\end{proof}

Iterative application of Lemma \ref{lem:single-level} leads to the following corollary. 

\begin{cor} \label{cor:single-level-whole}
	Fix $\vk \in \Kshort$ and $\vb \in \calB_{\vk}$. 
	In probability space $(\bs, \bh, \bw)$, suppose (as induction hypothesis) that the event $\calF^{\vk,\vb}_{i + 1}$ is independent of the joint random variable $(\bg_{\le i},\br_{\le i})$. Then, for all $i \in [\ell]$, $g_{\le i - 1} \in \supp(\bg_{\le i - 1})$ and $r_{\le i - 1} \in \supp(\br_{\le i - 1})$, it holds that
	\[
	\Pr\left[\calF^{\vk, \vb}_i \ \middle\vert \ \calF^{\vk, \vb}_{i + 1} \wedge g_{\le i - 1} \wedge r_{\le i - 1} \right] = \frac{2^{-k_i}}{n^{k_i}}.
	\]
\end{cor}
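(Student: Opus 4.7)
The plan is to apply Lemma~\ref{lem:single-level} iteratively for $j = 1, 2, \dots, k_i$ and telescope via the chain rule. First I would handle the corner case $k_i = 0$: by definition $\calF^{\vk,\vb}_{i,0} = \calF^{\vk,\vb}_{i+1}$ (the second clause in the definition of $\calF^{\vk,\vb}_{i,J}$ is vacuous when $J = 0$), so the conditional probability on the left is trivially $1$, matching $2^{0}/n^{0} = 1$.

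Next, for $k_i \ge 1$, I would write the identification $\calF^{\vk,\vb}_{i,0} = \calF^{\vk,\vb}_{i+1}$ and $\calF^{\vk,\vb}_i = \calF^{\vk,\vb}_{i,k_i}$, and then use the chain rule to decompose
\[
\Pr\left[\calF^{\vk, \vb}_i \ \middle\vert \ \calF^{\vk, \vb}_{i+1} \wedge g_{\le i - 1} \wedge r_{\le i - 1} \right] = \prod_{j=1}^{k_i} \Pr\left[\calF^{\vk, \vb}_{i,j} \ \middle\vert \ \calF^{\vk, \vb}_{i,j-1} \wedge g_{\le i - 1} \wedge r_{\le i - 1} \right].
\]
Each factor on the right is well-defined because conditioning on $\calF^{\vk,\vb}_{i,j-1}$ (which implies $\calF^{\vk,\vb}_{i,j'-1}$ for all $j' \le j$) together with the fixed $g_{\le i-1} \wedge r_{\le i-1}$ has positive probability, as otherwise Lemma~\ref{lem:single-level} would not apply meaningfully and the claim would be vacuous.

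The only thing I need to check to invoke Lemma~\ref{lem:single-level} at each $j$ is that its hypothesis holds, namely that $\calF^{\vk,\vb}_{i+1}$ is independent of $(\bg_{\le i}, \br_{\le i})$. But this is exactly the induction hypothesis that the corollary itself assumes, and it is the same statement for every $j$, so it propagates trivially across the $k_i$ applications. Hence each factor equals $\frac{1}{2n}$, and the product is $\left(\frac{1}{2n}\right)^{k_i} = \frac{2^{-k_i}}{n^{k_i}}$, which is the claimed identity.

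There is no real obstacle here; the work is all done inside Lemma~\ref{lem:single-level}. The only minor subtlety is being careful about how the conditioning accumulates: $\calF^{\vk,\vb}_{i,j}$ implies $\calF^{\vk,\vb}_{i,j-1}$, so telescoping the chain-rule product is unambiguous, and the conditioning $g_{\le i-1} \wedge r_{\le i-1}$ is fixed once and for all and does not interact with the iteration over $j$.
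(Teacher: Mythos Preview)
Your proposal is correct and follows essentially the same approach as the paper's proof: both identify $\calF^{\vk,\vb}_{i,0}=\calF^{\vk,\vb}_{i+1}$ and $\calF^{\vk,\vb}_{i}=\calF^{\vk,\vb}_{i,k_i}$, telescope via the chain rule, and apply Lemma~\ref{lem:single-level} to each of the $k_i$ factors. Your additional remarks on the $k_i=0$ corner case and on the hypothesis of Lemma~\ref{lem:single-level} being the same for all $j$ are accurate and merely make explicit what the paper leaves implicit.
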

\begin{proof}
	We let $\event_{\le i - 1}$ to denote the event $\Big[g_{\le i - 1} \wedge r_{\le i - 1}\Big]$ for convenience. From the definition of the $\calF^{\vk,\vb}_{i,j}$ and Lemma~\ref{lem:single-level}, we have
	\begin{align*}
	\Pr\left[\calF^{\vk, \vb}_i \ \middle\vert \ \calF^{\vk, \vb}_{i + 1} \wedge \event_{\le i - 1} \right] &= \Pr\left[\calF^{\vk, \vb}_{i,k_i} \ \middle\vert \ \calF^{\vk, \vb}_{i,0} \wedge \event_{\le i - 1} \right]\\
	&=\prod_{j=1}^{k_i} \Pr\left[\calF^{\vk, \vb}_{i,j} \ \middle\vert \ \calF^{\vk, \vb}_{i,j - 1} \wedge \event_{\le i - 1} \right] \\
	&=\frac{2^{-k_i}}{n^{k_i}}. \qedhere
	\end{align*}
\end{proof}

Then we iteratively use Corollary \ref{cor:single-level-whole} to obtain the probability of $\calF_1^{\vk, \vb}$. 

\begin{lemma}\label{lem:induction}
	Fix $\vk \in \Kshort$ and $\vb \in \calB_{\vk}$. For all $i \in [\ell+1]$, letting $g_{\le i - 1} \in \supp(\bg_{\le i - 1})$ and $r_{\le i - 1} \in \supp(\br_{\le i - 1})$. In probability space $(\bs, \bh, \bw)$, we have
	\[
	\Pr\left[\calF^{\vk, \vb}_i \ \middle\vert \ g_{\le i - 1} \wedge r_{\le i - 1}\right] = \frac{2^{-\sum_{j=i}^\ell k_j}}{n^{\sum_{j=i}^\ell k_j}}.
	\]
\end{lemma}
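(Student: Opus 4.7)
The plan is to prove Lemma~\ref{lem:induction} by \emph{downward induction} on $i$, starting from the base case $i=\ell+1$ and descending to $i=1$. The base case is immediate: by convention $\calF^{\vk,\vb}_{\ell+1}$ always holds, while the right-hand side is an empty product equal to $1$, and the ``conditioning'' event $g_{\le 0}\wedge r_{\le 0}$ is vacuous.

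For the inductive step, assume the claim holds at level $i+1$. First I would extract a key \textbf{independence} statement from the inductive hypothesis: since the value
\[
\Pr\left[\calF^{\vk,\vb}_{i+1} \;\middle|\; g_{\le i}\wedge r_{\le i}\right] \;=\; \frac{2^{-\sum_{j=i+1}^\ell k_j}}{n^{\sum_{j=i+1}^\ell k_j}}
\]
does not depend on the particular choice of $g_{\le i}\in \supp(\bg_{\le i})$, $r_{\le i}\in \supp(\br_{\le i})$, the event $\calF^{\vk,\vb}_{i+1}$ is independent of the joint random variable $(\bg_{\le i},\br_{\le i})$. This is precisely the hypothesis required to invoke Corollary~\ref{cor:single-level-whole}, which then yields
\[
\Pr\left[\calF^{\vk,\vb}_i \;\middle|\; \calF^{\vk,\vb}_{i+1} \wedge g_{\le i-1}\wedge r_{\le i-1}\right] \;=\; \frac{2^{-k_i}}{n^{k_i}}.
\]

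Next I would apply the chain rule. Since $\calF^{\vk,\vb}_i$ enforces strictly more conditions than $\calF^{\vk,\vb}_{i+1}$, we have $\calF^{\vk,\vb}_i\Rightarrow \calF^{\vk,\vb}_{i+1}$, hence
\[
\Pr\left[\calF^{\vk,\vb}_i \;\middle|\; g_{\le i-1}\wedge r_{\le i-1}\right] = \Pr\left[\calF^{\vk,\vb}_i \;\middle|\; \calF^{\vk,\vb}_{i+1}\wedge g_{\le i-1}\wedge r_{\le i-1}\right] \cdot \Pr\left[\calF^{\vk,\vb}_{i+1}\;\middle|\; g_{\le i-1}\wedge r_{\le i-1}\right].
\]
The first factor equals $2^{-k_i}/n^{k_i}$ by the corollary. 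For the second factor, the independence of $\calF^{\vk,\vb}_{i+1}$ from $(\bg_{\le i},\br_{\le i})$ implies independence from its marginal $(\bg_{\le i-1},\br_{\le i-1})$, so this probability equals the unconditional value $2^{-\sum_{j=i+1}^\ell k_j}/n^{\sum_{j=i+1}^\ell k_j}$ supplied by the inductive hypothesis. Multiplying gives exactly $2^{-\sum_{j=i}^\ell k_j}/n^{\sum_{j=i}^\ell k_j}$, closing the induction.

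There is no real obstacle: the proof is essentially bookkeeping once Corollary~\ref{cor:single-level-whole} is available. The only subtlety worth noting is that all conditioning events have positive probability (the supports of the $\tau$-wise independent families are well-defined and $\calF^{\vk,\vb}_{i+1}$ has positive probability whenever $\vk \in \Kshort$, as the inductive hypothesis itself certifies), so the applications of the chain rule are valid.
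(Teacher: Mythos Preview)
Your proposal is correct and follows essentially the same approach as the paper: downward induction on $i$, extracting the independence of $\calF^{\vk,\vb}_{i+1}$ from $(\bg_{\le i},\br_{\le i})$ from the inductive hypothesis, invoking Corollary~\ref{cor:single-level-whole}, and then multiplying via the chain rule. Your write-up is in fact slightly more explicit than the paper's in justifying why the chain rule applies (via $\calF^{\vk,\vb}_i\Rightarrow\calF^{\vk,\vb}_{i+1}$) and why conditioning on $g_{\le i-1}\wedge r_{\le i-1}$ can be dropped for the second factor.
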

\begin{proof}
	We prove this by induction. For the base case $i = \ell + 1$, $\calF^{\vk,\vb}_i$ is always true by definition, and hence $\Pr\left[\calF^{\vk, \vb}_i \ \middle\vert\ g_{\le i - 1} \wedge r_{\le i - 1}\right] = 1$. 
	
	Suppose the induction hypothesis holds for $i + 1$. Note this implies 
	$\Pr\left[\calF_{i+1}^{\vk, \vb} \middle\vert g_{\le i} \wedge r_{\le i}\right] = \Pr\left[\calF_{i+1}^{\vk, \vb}\right]$ for every possible $g_{\le i}$ and $r_{\le i}$, meaning that $\calF_{i+1}^{\vk, \vb}$ is independent of the joint variable $(\bg_{\le i}, \br_{\le i})$. Hence, it satisfies the premise of Corollary \ref{cor:single-level-whole}. 
	
	From the definition of $\calF^{\vk, \vb}_i$, we have
	\[
	\Pr\left[ \calF^{\vk, \vb}_i \ \middle\vert \ g_{\le i - 1} \wedge r_{\le i - 1}\right] = \Pr\left[ \calF^{\vk, \vb}_i \ \middle\vert \  \calF^{\vk, \vb}_{i + 1} \land g_{\le i - 1} \wedge r_{\le i - 1}\right] \cdot \Pr\left[ \calF^{\vk, \vb}_{i + 1}\  \middle \vert \  g_{\le i - 1} \wedge r_{\le i - 1}\right].
	\]
	
	From induction hypothesis, we have
	\[
	\Pr\left[ \calF^{\vk, \vb}_{i + 1} \ \middle\vert \ g_{\le i - 1} \wedge r_{\le i - 1}\right] = \E_{(g_i,r_i) \getsR (\bg_i,\br_i)}\left[ \Pr\left[ \calF^{\vk, \vb}_{i + 1} \ \middle\vert \ g_{i} \wedge r_{i} \ \land  \ g_{\le i - 1} \wedge r_{\le i - 1} \right] \right]= \frac{2^{-\sum_{j=i + 1}^\ell k_j}}{n^{\sum_{j=i + 1}^\ell k_j}}.
	\]
	
	From \cref{cor:single-level-whole}, it follows that
	\[
	\Pr\left[ \calF^{\vk, \vb}_i \ \middle\vert \  \calF^{\vk, \vb}_{i + 1} \land g_{\le i - 1} \wedge r_{\le i - 1}\right] = \frac{2^{-k_i}}{n^{k_i}}.
	\]
	
	Putting everything together, we have
	\[
	\Pr\left[ \calF^{\vk, \vb}_i \ \middle\vert \ g_{\le i - 1} \wedge r_{\le i - 1}\right] = \frac{2^{-k_i}}{n^{k_i}} \cdot \frac{2^{-\sum_{j=i + 1}^\ell k_j}}{n^{\sum_{j=i + 1}^\ell k_j}} = \frac{2^{-\sum_{j=i}^\ell k_j}}{n^{\sum_{j=i}^\ell k_j}}. \qedhere
	\]
\end{proof}

Finally, we are ready to count the number of occurrences of $v$, and prove Lemma~\ref{lem:good-case}.

\begin{reminder}{Lemma~\ref{lem:good-case}}
	For every vertex $v \in [n]$, it holds that
	\[
	\frac{2^{\ell}}{n} - \frac{1}{n^3} \leq 
	\E_{\bw, \bT} \Big[\#\left\{\vk \in \Kshort \  \middle\vert \ \nex(\mu^{\vk}) = v\right\} \Big] \leq \frac{2^\ell}{n}.
	\]
\end{reminder}
\begin{proof}
	For each $\vk \in \N^\ell$ and $\vb \in \calB_{\vk}$, we say $(\vk,\vb)$ is \emph{good}, if $\nex(\vidx^{\vk}) = v$ holds in the event $\calF^{\vk,\vb}_1$. Recall that $\vk$ is \emph{short}, when $k_i \le \tau/4$ for all $i \in [\ell]$; and otherwise $\vk$ is \emph{long}.
	
	We first break the expectation into the sum of contribution of all possible index $\vk$ and $\vb$. By linearity of expectation, we have
	\[
	\E_{\bw, \bT} \Big[\#\left\{\vk \in \Kshort \  \middle\vert \ \nex(\mu^{\vk}) = v\right\} \Big] = \sum_{\vk \in \Kshort} \sum_{\vb \in \calB_{\vk}} \Pr\left[\calF^{\vk,\vb}_{1} \wedge \text{$(\vk,\vb)$ is good}\right].
	\]

	Then, by Lemma \ref{lem:induction}, for all $\vk\in \Kshort, \vb \in \calB_{\vk}$, it holds that
	\[
	\Pr\left[\calF^{\vk, \vb}_1 \right] = \frac{2^{-\sum_{j=1}^\ell k_j}}{n^{\sum_{j=1}^\ell k_j}}.
	\]
	
	There are $n^{\sum_{j = 1}^\ell k_j}$ many sequences $\vb \in \calB_{\vk}$, and $n^{(\sum_{j = 1}^\ell k_j) - 1}$ of them satisfy $\nex(\vidx^{\vk}) = v$. Thus for all short $\vk$,
	\[
	\sum_{\vb \in \calB_{\vk}} \Pr\left[\calF^{\vk,\vb}_{1} \wedge \text{$(\vk,\vb)$ is good}\right] = \frac{2^{-\sum_{j = 1}^\ell k_j}}{n}.
	\]
	Then, we have
	\begin{align*}
	&\sum_{\vk \in \Kshort} \sum_{\vb \in \calB_{\vk}} \Pr\left[\calF^{\vk,\vb}_{1} \wedge \text{$(\vk,\vb)$ is good}\right]\\
	=&\sum_{\vk \in \Kshort} \frac{2^{-(k_1 + k_2 + \dots + k_\ell)}}{n}\\
	=&\left( \sum_{i=0}^{\tau/4} 2^{-i} \right)^\ell /n\\
	=&\left( 2 - 2^{-\tau/4} \right)^\ell /n.
	\end{align*}
	
	Putting everything together, we have
	\[
	\frac{2^{\ell}}{n} \ge \E_{\bw, \bT} \Big[\#\left\{\vk \in \Kshort \  \middle\vert \ \nex(\mu^{\vk}) = v\right\} \Big] = \left( 2 - 2^{-\tau/4} \right)^\ell /n \ge \frac{2^\ell}{n} - \frac{1}{n^3},
	\]
	which completes the proof.
\end{proof}


\newcommand{\oldlevel}{\textsf{old-level}}
\subsection{Counting All Bad Occurrences of a Vertex}\label{sec:count_bad}

Now we move on to prove Lemma \ref{lem:bad-case} which upper bounds the number of ``bad'' occurrences of $v$ as follows,
\[
    \E_{\bw, \bT} \left[ \#\left\{\vk \in \Kshort \ \middle\vert \ \nex(\mu^{\vk}) = v \land \exists \vk' \not= \vk'' \in \N^{\ell}, a_{\nex(\mu^{\vk'})} = a_{\nex(\mu^{\vk''})}\right\} \right] \leq 12\frac{8^\ell F_2(a)}{n^3} + 8 \frac{4^\ell C_v}{n^2} + \frac{1}{n^3}.
	\]

We first apply a union bound on $\vk'$ and $\vk''$, and 
similar to the proof of Lemma \ref{lem:good-case}, we enumerate three sequences $\vk^1 \in \Kshort, \vk^2 \in \N^{\ell}, \vk^3 \in \N^{\ell}$, and sum up the contribution of \[\Pr_{\bw, \bT}\left[\vidx^{\vk^1}, \vidx^{\vk^2}, \vidx^{\vk^3} \text{ exist } \land \nex(\vidx^{\vk^1}) = v \land a_{\nex(\vidx^{\vk^2})} = a_{\nex(\vidx^{\vk^3})}\right]\]
over all $\vk^1,\vk^2,\vk^3$.
Intuitively, one would expect this to give the desired upper bound.
However there is some subtlety due to possible \emph{collisions} between the   paths from root to the nodes
$\vidx^{\vk^1}, \vidx^{\vk^2}, \vidx^{\vk^3} $, which we formally define below. Recall that $p(\mu)$ denote the path on $T$ from root to node $\mu$.

\begin{definition} \label{def:collision}
 We say a pair of nodes $(\alpha, \beta)$ is a \emph{collision between two paths} $p_1$ and $p_2$, if $\alpha \in p_1 \setminus p_2, \beta \in p_2 \setminus p_1$, and $(\a(\alpha),\level(\alpha)) =(\a(\beta),\level(\beta))$ (where $\a(\cdot)$ is defined in Definition \ref{def:short-hand-for-a}). The \emph{level} of a collision $(\alpha, \beta)$ is defined as $\level(\alpha)$ (which equals $\level(\beta)$). 
\end{definition}

For a collision $(\alpha,\beta)$,  the values of $\nex(\alpha)$ and $\nex(\beta)$ are actually the same random variable $\br_{\level(\alpha)}(\a(\alpha))$.
When collisions appear between $p(\vidx^{\vk^1}), p(\vidx^{\vk^2})$, and $p(\vidx^{\vk^3})$, the correlations caused by these collisions would make our analysis difficult.

\subsubsection{Structure of Bad Cases}\label{sec:bad_structure}

\renewcommand{\i}{\phi}
\newcommand{\bi}{\bm{\phi}}

To get around such difficulty, we need to exploit the combinatorial structure of the dependency trees.
We let $\i = \vidx^{\vk^1} + 1$, $\bar{\alpha} = \vidx^{\vk^2} + 1$ and $\bar{\beta} = \vidx^{\vk^3} + 1$. Then by Observation \ref{obs:w-nex}, $\nex(\vidx^{\vk^1}) = v \land a_{\nex(\vidx^{\vk^2})} = a_{\nex(\vidx^{\vk^3})}$ is equivalent to $w_{\i} = v \land a_{w_{\bar \alpha}} = a_{w_{\bar \beta}}$. 

The following lemma asserts that, fixing any  $(w, T) \in \supp\left((\bw, \bT)\right)$, whenever there exist collisions between $p(\bar \alpha - 1)$ and $p(\bar \beta - 1)$, there always exists another pair of $\alpha$ and $\beta$ satisfying $a_{w_\alpha} = a_{w_{\beta}}$ as well, such that there are no such problematic collisions between paths $p(\alpha - 1)$ and $p(\beta - 1)$.

\begin{lemma} \label{lem:structure}
Fix $(w,T) \in \supp((\bw,\bT))$. Suppose $T$ contains two nodes $ \bar \alpha \not= \bar \beta$ such that $a_{w_{\bar \alpha}} = a_{w_{\bar \beta}}$. Then for every node $\i$ on $T$, there must exist nodes $\alpha$ and $\beta$, such that $\alpha \not= \beta,a_{w_{\alpha}} = a_{w_{\beta}}$, and for any two of the paths $p(\i - 1), p(\alpha - 1), p(\beta - 1)$ there is no collision (as in Definition \ref{def:collision}) between them. 
\end{lemma}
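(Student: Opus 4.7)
The plan is an extremal-element argument. Let $S$ denote the set of pairs $(\alpha, \beta)$ with $\alpha \ne \beta$ and $a_{w_\alpha} = a_{w_\beta}$, which is nonempty by the hypothesis $(\bar\alpha, \bar\beta) \in S$. I would define the measure $M(\alpha, \beta) = |p(\alpha-1) \setminus p(\phi-1)| + |p(\beta-1) \setminus p(\phi-1)|$, pick a pair in $S$ minimizing $M$ (breaking ties by $\max(\alpha,\beta)$), and claim this pair satisfies the conclusion. The proof proceeds by contradiction: if some collision persists between two of the three paths, I produce a pair in $S$ with strictly smaller measure.

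The key structural tool is a subtree-isomorphism observation. If $(\gamma_1, \gamma_2)$ is any collision with $\level(\gamma_1) = \level(\gamma_2) = i$ and $\a(\gamma_1) = \a(\gamma_2)$, then $\nex(\gamma_1) = \nex(\gamma_2) = r_i(\a(\gamma_1))$. By Observation~\ref{lem:only-depend}, the sequence returned by $\walk(s', i-1)$ depends only on $s'$ and the hash functions, so the sub-walks issued just after $\gamma_1$ and $\gamma_2$ produce identical vertex sequences; consequently the subtrees rooted at $\gamma_1, \gamma_2$ in the dependency tree are isomorphic, and for each offset $k$ with $\gamma_2 + k$ in $\gamma_2$'s subtree, $\gamma_1 + k$ lies in $\gamma_1$'s subtree with $w_{\gamma_1 + k} = w_{\gamma_2 + k}$. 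Moreover, if $\a(\gamma_1) \ne \star_*$ then $a_{w_{\gamma_1}} = a_{w_{\gamma_2}}$ by Observation~\ref{lem:amuawmu}, while if $\a(\gamma_1) = \star_*$ then Lemma~\ref{obs:determine-a-from-x} traces the $\star_*$ assignment back to indices $j'' < j'$ in the containing $\walk$ call with $a_{x_{j''}} = a_{x_{j'}}$, yielding an earlier pair of tree nodes with equal $a$-values and strictly smaller positions.

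Using these tools I case on where the collision lies. If it lies between $p(\alpha-1)$ and $p(\beta-1)$ and $\a \ne \star_*$, then $(\gamma_1, \gamma_2) \in S$, with components that are strict ancestors of $\alpha-1, \beta-1$ respectively, so $M$ does not increase and $\max$ strictly decreases. If the collision lies between $p(\phi-1)$ and $p(\alpha-1)$ (the symmetric case for $p(\beta-1)$ is analogous), with $\gamma_1 \in p(\phi-1)$ and $\gamma_2 \in p(\alpha-1)$, I set $\alpha'' = \gamma_1 + (\alpha - \gamma_2)$. The isomorphism gives $w_{\alpha''} = w_\alpha$, hence $(\alpha'', \beta) \in S$. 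On the new path $p(\alpha''-1)$, the strict ancestors of $\gamma_1$ together with $\gamma_1$ itself lie inside $p(\phi-1)$, and the descendants of $\gamma_1$ on $p(\alpha''-1)$ are in bijection with the descendants of $\gamma_2$ on $p(\alpha-1)$; since $\gamma_2 \notin p(\phi-1)$ forces $\gamma_2$'s subtree to be disjoint from $p(\phi-1)$, a direct count yields $|p(\alpha''-1) \setminus p(\phi-1)| \le |p(\alpha-1) \setminus p(\phi-1)| - 1$, so $M$ strictly decreases. The $\a = \star_*$ sub-cases are settled by the earlier pair produced by Lemma~\ref{obs:determine-a-from-x}.

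The hardest part will be the corner cases where the newly constructed pair degenerates. In the first case, $(\gamma_1, \gamma_2)$ could coincide with $(\alpha-1, \beta-1)$, leaving $M$ unchanged; here the $\max$ tiebreak helps, but one may need to iterate the $\star_*$-tracing of Lemma~\ref{obs:determine-a-from-x} to ensure termination. In the second case, $\alpha''$ could equal $\beta$ or fall just past the end of the walk; for these I would fall back to the symmetric twin inside $\gamma_2$'s subtree, or to the pair $(\alpha, \alpha'')$ together with one further step of the isomorphism using $w_{\alpha''} = w_\alpha$. I expect the bulk of the technical work to be in carefully accounting for the descendant-below-$\gamma_1$ contribution to $M$ and verifying the strict drop in every edge configuration.
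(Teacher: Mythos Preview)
Your central structural tool—the claim that if $(\gamma_1,\gamma_2)$ is a collision at level $i$ then the subtrees rooted at $\gamma_1$ and $\gamma_2$ are isomorphic, with $w_{\gamma_1+k}=w_{\gamma_2+k}$ for all $k$ in the subtree—is false, and this breaks the construction $\alpha''=\gamma_1+(\alpha-\gamma_2)$. From $\nex(\gamma_1)=\nex(\gamma_2)$ and Observation~\ref{lem:only-depend} you correctly get that $\walk(\nex(\gamma_1),i-1)=\walk(\nex(\gamma_2),i-1)$, but this only covers the nodes from $\gamma_j+1$ through $\rig(\gamma_j)$. The subtree of $\gamma_j$ in the dependency tree continues with $\rig(\gamma_j)$ (when it has level $i$) and \emph{its} subtree, and so on. At $\rig(\gamma_j)$ the value $\a_i(\rig(\gamma_j))$ is determined by Lemma~\ref{obs:determine-a-from-x} from the entire history $C_j$ of the enclosing $\walk(\cdot,i)$ call, not just from $\a_i(\gamma_j)$. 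Since $\gamma_1$ and $\gamma_2$ sit in different $\walk(\cdot,i)$ calls with different histories, one of $\a_i(\rig(\gamma_1)),\a_i(\rig(\gamma_2))$ may become $\star_0$ while the other stays $a_{x}$; then $\nex(\rig(\gamma_1))\neq\nex(\rig(\gamma_2))$ and the two walks diverge. Concretely, your $\gamma_2\in p(\alpha-1)$ need not be the \emph{last} level-$i$ ancestor $\mu^{\vk}_{i,k_i}$ of $\alpha-1$; when it is not, $\alpha-1$ lies beyond $\rig(\gamma_2)$ in $\gamma_2$'s subtree, precisely where the isomorphism fails, and there is no reason $w_{\alpha''}=w_\alpha$.

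This is exactly the obstacle the paper's proof is built around. The paper first takes $(\bar\alpha,\bar\beta)$ with minimum $\bar\beta$ to kill collisions between $p(\bar\alpha-1)$ and $p(\bar\beta-1)$ for free, and then removes collisions with $p(\phi-1)$ via Corollary~\ref{cor:moving-2-paths}. The heart of that corollary is Lemma~\ref{lem:level_i_collision}: rather than assume an isomorphism, Algorithm~\ref{algo:find-pi3} advances $\alpha$ and $\beta$ in lockstep along level $i$, and every time $\a_i(\rig(\alpha))\neq\a_i(\rig(\beta))$ it does a case analysis on which side turned to $\star_*$ and \emph{relocates} $\beta$ to a node $\eta\in p^*(\vk^1)$ with the matching $\a_i$-value, proving such $\eta$ exists. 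Only after this tracking does one obtain $\pi_3$ with $\a(\pi_3)=\a(\mu^{\vk}_{i,k_i})$, at which point the level-$<i$ isomorphism (Corollary~\ref{cor:moving-1-path}) legitimately applies. Your extremal framework could in principle be wrapped around this machinery, but as written the ``isomorphism gives $w_{\alpha''}=w_\alpha$'' step is the missing idea, not a corner case.
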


The intuition for Lemma \ref{lem:structure} is as following: We call a pair $(\bar \alpha, \bar \beta)$ a duplicate if and only if $\bar \alpha < \bar \beta$ and $a_{w_{\bar \alpha}} = a_{w_{\bar \beta}}$. Take the duplicate $(\bar \alpha, \bar \beta)$ with the smallest $\bar \beta$. Since it is the first pair of duplicate on $w$ (in the sense that $w_1, w_2, \dots, w_{\bar \beta - 1}$ contains no duplicate), we can prove there is no collision between $p(\bar \alpha - 1)$ and $p(\bar \beta - 1)$. So the only problem left is the possible collisions with $p(\i - 1)$. 

Suppose there is a collision between $p(\bar \alpha - 1)$ and $p(\i - 1)$, namely there are two nodes $\pi_1 \in p(\bar \alpha - 1) \setminus p(\bar \beta - 1)$ and $\pi_2 \in  p(\bar \beta - 1) \setminus p(\bar \alpha - 1)$ that $(\a(\pi_1), \level(\pi_1)) = (\a(\pi_2), \level(\pi_2))$. Then, intuitively, by the way how subtrees of $\pi_1$ and $\pi_2$ are generated, these two subtrees should be the same. Note $\bar \alpha - 1$ is in the subtree of $\pi_1$. We can move it to the corresponding node in the subtree of $\pi_2$. See Figure \ref{fig:moving}. After moving, the original collision $\pi_1, \pi_2$ between path $p(\bar \alpha - 1)$ and $p(\i - 1)$ becomes a common ancestor $\pi_2$. Thus we can eliminate collisions with $p(\i - 1)$ in this way, and find desired $\alpha, \beta$. 

\begin{figure}
    \centering
\scalebox{0.8}{
\begin{tikzpicture}[node distance={23mm}, thick, vertex/.style = {draw, thick, circle,inner sep=0pt,minimum size=26pt,outer sep=0pt}]  
  \node [vertex] (rt1) {};
  \node [vertex, below of = rt1, xshift = -40, yshift = -30] (pi1) {$\pi_1$} edge [-] (rt1);
  \node [vertex, below of = rt1, xshift = 40, yshift = -30] (pi2) {$\pi_2$} edge [-] (rt1);
  \node [vertex, below of = pi1, xshift = -30, yshift = -20] (a) {$\bar \alpha - 1$} edge [-] (pi1);
  \node [vertex, below of = pi2, xshift = 30, yshift = -20] (b) {$\i - 1$} edge [-] (pi2);
  \node [right of = pi2] {\huge $\Longrightarrow$};

  \node [vertex, right of = rt1,xshift = 140] (r2) {};
  \node [dashed, vertex, below of = r2, xshift = -40, yshift = -30] (p1) {$\pi_1$} edge [-, dashed] (r2);
  \node [vertex, below of = r2, xshift = 40, yshift = -30] (p2) {$\pi_2$} edge [-] (r2);
  \node [red, vertex, below of = p2, xshift = -30, yshift = -20] (a) {$\bar \alpha - 1$} edge [-,red] (p2);
  \node [vertex, below of = p2, xshift = 30, yshift = -20] (b) {$\i - 1$} edge [-] (p2);
\end{tikzpicture}
}
    \caption{Moving $\bar \alpha - 1$ to the corresponding node in the subtree of $\pi_2$}
    \label{fig:moving}
\end{figure}
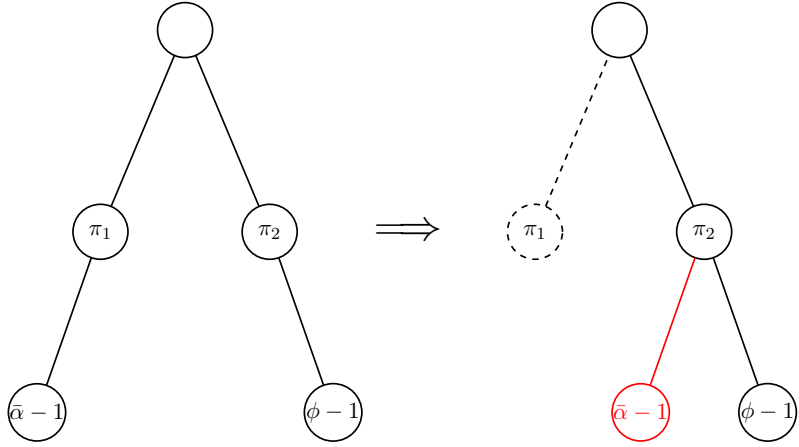

Before proving Lemma \ref{lem:structure}, we first prove a few technical lemmas which formalize the intuition that the subtrees of $\pi_1$ and $\pi_2$ should be the same. We will need a few extra notations.  

\newcommand{\suc}{\mathsf{suc}}

\begin{definition}\label{def:extention}
We define $\suc(\eta)$ to be the set of nodes within the subtree of $\eta$ that has the same level as $\mu$. Then we define $p^*(\mu) = \cup_{\eta \in p(\mu)} \suc(\eta)$ which is an extension of $p(\mu)$. (See Figure \ref{fig:ext})
\end{definition}

\begin{figure}
    \centering
\begin{tikzpicture}[scale = 0.37, very thick] 
  \draw[red] (0,0) -- (2,-3);
  \draw[red] (2,-3) -- (4,-3);
  \draw[red] (4,-3) -- (6, -6);
  \draw[red] (6, -6) -- (10, -6);
  \draw[red] (10, -6) -- (12, -9);
  \draw (12, -9) -- (14, -12);

  \draw[red] (12, -9) -- (14, -9);
  \draw[red] (14, -9) -- (15, -10.5);
  \draw[red] (15, -10.5) -- (16, -10.5);
  \draw[red] (16, -10.5) -- (17, -12);

  \draw[red] (14, -9) -- (17, -9);
  \draw (17, -9) -- (19, -12);

  \draw[red] (10, -6) -- (19, -6);
  \draw (19, -6) -- (23, -12);

  \draw[red] (4,-3) -- (22, -3);
  \draw (22, -3) -- (24, -6);
  \draw (24, -6) -- (28, -12);

  \draw (24, -6) -- (28, -6);
  \draw (28, -6) -- (32, -12);

  \node [draw,circle,red,fill=red,minimum size=4,inner sep=0pt, outer sep=0pt] at (17,-12) {};
  \node at (17,-13) {$\mu$};
\end{tikzpicture}
    \caption{The dependency tree $T$ and extension $p^*(\mu)$ (marked in red)}
    \label{fig:ext}
\end{figure}
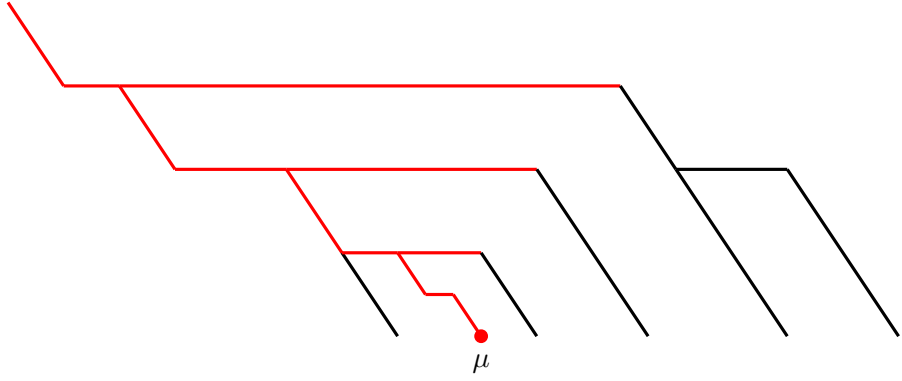

For simplicity, we let $p(\vk)$ denote $p(\mu^{\vk})$, \ie,the path from root to the node with index $\vk$. Similarly, we use $p^*(\vk)$ to denote $p^*(\mu^{\vk})$.

\begin{lemma} \label{lem:level_i_collision}
Fix sequence $\vk^1$ and $w, T \in \supp(\bw, \bT)$. Let $\sigma$ be a node in $T$ and $\vk$ be the index of $\sigma$. Suppose there is a collision between $p(\sigma)$ and $p^*(\vk^1)$. 

Let $i$ be the lowest level that contains such collision. Then there must be a node $\pi_3 \in p^*(\vk^1)$ of level $i$ such that $\a(\pi_3) = \a(\mu^{\vk}_{i, k_i})$. 
\end{lemma}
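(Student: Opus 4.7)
The plan is to start from the given collision pair $(\alpha,\beta)$ at level $i$ and push the $\a$-value equality forward along the level-$i$ ancestors of $\sigma$, leveraging the determinism of $\walk$ once $\a$-values agree. First, I would let $\alpha\in p(\sigma)\setminus p^*(\vk^1)$ and $\beta\in p^*(\vk^1)\setminus p(\sigma)$ be a pair witnessing the collision, so $\level(\alpha)=\level(\beta)=i$ and $\a(\alpha)=\a(\beta)$. Since $\alpha$ lies on $p(\sigma)$ and has level $i$, it is one of the level-$i$ ancestors of $\sigma$, say $\alpha=\mu^{\vk}_{i,j}$ for some $j\in [k_i]$. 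In the easy case $j=k_i$, taking $\pi_3:=\beta$ finishes the proof.

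Otherwise, I would induct on $j'=j,j+1,\dots,k_i$ to build nodes $\beta_{j'}\in p^*(\vk^1)$ of level $i$ with $\a(\beta_{j'})=\a(\mu^{\vk}_{i,j'})$, and set $\pi_3:=\beta_{k_i}$ at the end. The base $\beta_j:=\beta$ is immediate. For the inductive step from $\beta_{j'}$ to $\beta_{j'+1}$, the equality $\a(\beta_{j'})=\a(\mu^{\vk}_{i,j'})$ gives $\nex(\beta_{j'})=\nex(\mu^{\vk}_{i,j'})$ by applying $r_i$. Then \Cref{lem:existence}(b) together with \Cref{lem:only-depend} yields
\[
\x_i\bigl(\rig_i(\beta_{j'})\bigr)=\last(\nex(\beta_{j'}),i-1)=\last(\nex(\mu^{\vk}_{i,j'}),i-1)=\x_i\bigl(\rig_i(\mu^{\vk}_{i,j'})\bigr).
\]
Next, I would apply \Cref{obs:determine-a-from-xi} on each side to express $\a_i(\rig_i(\beta_{j'}))$ and $\a_i(\rig_i(\mu^{\vk}_{i,j'}))$ as the same deterministic function of their respective $\x_i$-histories, and argue these two outputs coincide. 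Because $\mu^{\vk}_{i,j'+1}$ has level $i$, $g_i$ evaluates to $1$ on this common $\a$-value, so $\rig_i(\beta_{j'})$ also has level $i$, and I would set $\beta_{j'+1}:=\rig_i(\beta_{j'})$. Membership $\beta_{j'+1}\in p^*(\vk^1)$ should follow from the closure of $p^*(\vk^1)$ under the $\rig_i$ operation, as can be unpacked from \Cref{def:extention}.

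The hard part will be checking that the $\a_i$-value computations on the two sides agree. Here I plan to use the hypothesis that $i$ is the \emph{lowest} collision level: no collision exists at any level $i'<i$, which forces the level-$<i$ sub-walks emanating from $\nex(\mu^{\vk}_{i,j''})$ and $\nex(\beta_{j''})$ to traverse identical vertex sequences for every inductive step $j''\le j'$ (they share the starting vertex by induction and the restrictions $r_{<i},g_{<i}$, and no low-level $\a$-collision can divert one from the other). Consequently the two $\x_i$-sequences feeding \Cref{obs:determine-a-from-xi} exhibit the same pattern of $a$-coincidences, so the recipe in that observation returns the same $\a_i$-value on both sides, closing the induction and producing $\pi_3=\beta_{k_i}\in p^*(\vk^1)$ of level $i$ with $\a(\pi_3)=\a(\mu^{\vk}_{i,k_i})$.
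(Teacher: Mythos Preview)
Your overall strategy---start from a level-$i$ collision and push the $\a$-equality forward along the level-$i$ ancestors of $\sigma$---is exactly the paper's approach. The gap is in the inductive step, at the point you yourself flag as ``the hard part.''

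You correctly argue that $\a(\beta_{j'})=\a(\mu^{\vk}_{i,j'})$ forces $\nex(\beta_{j'})=\nex(\mu^{\vk}_{i,j'})$ and hence $\x_i(\rig(\beta_{j'}))=\x_i(\rig(\mu^{\vk}_{i,j'}))$. But the conclusion $\a_i(\rig(\beta_{j'}))=\a_i(\rig(\mu^{\vk}_{i,j'}))$ does \emph{not} follow from \Cref{obs:determine-a-from-xi}. That observation determines $\a_i$ from the \emph{entire} $\x_i$-history along the respective path, and the two histories agree only from index $j$ onward; their prefixes (the level-$i$ ancestors preceding $\pi_1$ on $p(\sigma)$, respectively preceding $\pi_2$ on $p^*(\vk^1)$) can differ arbitrarily. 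Concretely, $a_{\x_i(\rig(\beta_{j'}))}$ may coincide with some $a_{\x_i(\cdot)}$ appearing earlier on $p^*(\vk^1)$ (before $\pi_2$), forcing $\a_i(\rig(\beta_{j'}))=\star_0$, while on the $\sigma$-side no such repeat occurs and $\a_i(\rig(\mu^{\vk}_{i,j'}))=a_{\x_i(\rig(\mu^{\vk}_{i,j'}))}\neq\star_*$. The symmetric situation is equally possible. Your appeal to the minimality of $i$ is a red herring: that hypothesis controls collisions \emph{below} level $i$, whereas the divergence here is caused by the level-$i$ histories themselves.

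The paper closes this gap with two ingredients you are missing. First, when $\a_i(\rig(\beta))=\star_*$ but $\a_i(\rig(\alpha))\neq\star_*$, it does \emph{not} move $\beta$ to $\rig(\beta)$; instead it jumps $\beta$ to the node $\eta\in p^*(\vk^1)$ with $\a_i(\eta)=\a_i(\rig(\alpha))$ (such $\eta$ exists precisely because the repeat that turned $\rig(\beta)$ into a star came from the $p^*(\vk^1)$-history). Second, to handle the reverse case $\a_i(\rig(\alpha))=\star_0$ while $\a_i(\rig(\beta))\neq\star_*$, the paper selects the initial collision $(\pi_1,\pi_2)$ with $\pi_1=\mu^{\vk}_{i,j}$ of \emph{minimal} $j$, and shows this case would produce a level-$i$ collision with a strictly smaller index, contradicting minimality. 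Your proposal fixes neither the choice of minimal $j$ nor the ``jump to $\eta$'' mechanism, so the induction breaks the first time one side becomes a star while the other does not.
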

\begin{proof}
Let $(\pi_1, \pi_2)$ be the collision of level $i$. Formally, $\pi_1 = \mu^{\vk}_{i,j}$ is the node on $p(\sigma)$ with smallest $i$ such that $\exists \pi_2 \in p^*(\vk^1)$ satisfying $(a(\pi_2), \level(\pi_2)) = (a(\pi_1), i)$. If there are multiple such $(\pi_1, \pi_2)$, we choose the one that minimizes $j$. 

Recall the definition of $\rig$ (which can be found in Table \ref{table:summary-one-vertex}). We prove the existence of such $\pi_3$ by the following algorithm. Note that $\level(\pi_1) = \level(\pi_2) = i$. 

\begin{algorithm}[H] \label{algo:find-pi3}
$\alpha \gets \pi_1, \beta \gets \pi_2$

\For{$t \in [j + 1, k_i]$}{
	\eIf{$\a_i(\rig(\alpha)) = \a_i(\rig(\beta))$} {
		$\alpha \gets \rig(\alpha), \beta \gets \rig(\beta)$ \label{code:eq}
	}
	{
		Let $\eta \in p^*(\vk^1)$ be the node such that $\a_i(\eta) = \a_i(\rig(\alpha))$ and $\level(\eta) = i$. \label{code:exist}

		$\alpha \gets \rig(\alpha), \beta \gets \eta$	
	}
}
$\pi_3 \gets \beta$
\caption{Finding $\pi_3$}
\end{algorithm}

To prove the lemma, we need to prove following two facts about the algorithm. 

\begin{enumerate}
	\item Throughout Algorithm \ref{algo:find-pi3}, $\level(\alpha) = \level(\beta) = i$. \label{fact:level-i}
	\item At Line \ref{code:exist}, such node $\eta$ always exists.  \label{fact:exist}
\end{enumerate}

We first show that these two facts are sufficient to prove the lemma. Suppose these facts are true. By Fact~\ref{fact:exist}, the algorithm will not abort by error. When the algorithm terminates, since initially $\alpha = \pi_1 = \mu^{\vk}_{i,j}$ and $\alpha \gets \rig(\alpha)$ is executed for $k_i - j$ steps, we have $\alpha = \mu^{\vk}_{i,k_i}$. Observe that the invariant $\a_i(\alpha) = \a_i(\beta)$ is preserved through the algorithm. Then by Fact \ref{fact:level-i} and the invariant, we know $(\a_i(\beta), \level(\beta)) = (\a_i(\alpha), i)$. Moreover, since $\beta$ only move to $\rig(\beta)$ and $\eta \in p^*(\vk^1)$, we know $\beta \in p^*(\vk^1)$. As a result, we can let $\pi_3 \gets \beta$ and it satisfies the lemma requirements. 

\paragraph*{Proof of Fact \ref{fact:level-i}.} Since initially $\alpha = \pi_1 = \mu^{\vk}_{i,j}$, and for each $t$ it moves to $\rig(\alpha)$, we know that at the beginning of the each loop $\alpha = \mu^{\vk^1}_{i,t - 1}$, and at the end of the each loop $\alpha = \rig(\mu^{\vk^1}_{i,t - 1}) = \mu^{\vk^1}_{i,t}$ (since $t \in [j + 1, k_i]$). We know $\alpha$ is always of level $i$. 

For $\beta$, each time it either move to a level $i$ node $\eta$ or move to $\rig(\beta)$. Since at Line \ref{code:eq}, $g_i(\a_i(\rig(\beta))) = g_i(\a_i(\rig(\alpha))) = 1$, and by our discussion above $\level(\rig(\alpha)) = i$, we know $\rig(\beta)$ must also be of level $i$. 

\paragraph*{Proof of Fact \ref{fact:exist}.} Now we inspect how $\a_i(\rig(\alpha))$ and $\a_i(\rig(\beta))$ are determined. For $\alpha = \mu^{\vk}_{i,t - 1}$, in the function $\walk$, from $s_{t - 1} = \nex(\alpha)$, we first perform $\walk(s_{t - 1}, i - 1)$. $x_t$ is the last vertex of this walk, and therefore $x_t = w_{\rig(\alpha)}$. Then $\a_i(\rig(\alpha))$ is determined by $a_{x_t} = a_{w_{\rig(\alpha)}}$ and $C_{t - 1}$. If $\mathsf{star} = \text{true}$, $\a_i(\rig(\alpha)) = \star_t$ ($t > 0$). Otherwise, either $\a_i(\rig(\alpha)) = a_{w_{\rig(\alpha)}}$, or it equals $\star_0$ because $a_{w_{\rig(\alpha)}} \in C_{t - 1}$. Note $a_{w_{\rig(\alpha)}} \in C_{t - 1}$ if and only if there is a node $\eta \in p(\vk)$ of level $i$ before $\alpha$, such that $\a_i(\eta) = a_{w_{\rig(\alpha)}}$. $\a_i(\beta)$ is determined in the same way. In conclusion, we have following propositions:
\threeprops{
$x_t = w_{\rig(\alpha)}$ is the last vertex of $\walk(\nex(\alpha), i - 1)$.
}{prop:determine-x}{
$\a_i(\rig(\alpha)) = \star_0$ only when $\exists \eta \in p(\vk)$ of level $i$, such that $\a_i(\eta) = a_{w_{\rig(\alpha)}}, \eta < \alpha$.
}{prop:become-star}{
These two also holds for $\beta$ by replacing $\alpha$ with $\beta$, and $p(\vk)$ with $p^*(\vk^1)$. 
}{prop:beta}

Observe that we kept the invariant $\a_i(\alpha) = \a_i(\beta)$ throughout the algorithm. Therefore, since $\nex(\alpha) = r_i(\a_i(\alpha))$ and $\nex(\beta) = r_i(\a_i(\beta))$. We know $\nex(\alpha) = \nex(\beta)$. Thus both $\walk(\nex(\alpha), i - 1)$ and $\walk(\nex(\beta), i - 1)$ are the same walk. Hence, by \eqref{prop:determine-x}, $w_{\rig(\alpha)} = w_{\rig(\beta)}$. If both $\a_i(\rig(\alpha)) \neq \star_*$ and $\a_i(\rig(\beta)) \neq \star_*$ holds, we would have $\a_i(\rig(\alpha)) = a_{w_{\rig(\alpha)}} = a_{w_{\rig(\beta)}} = \a_i(\rig(\beta))$. 

Note if $\a_i(\rig(\alpha)) = \star_{t'}$ for $t' > 0$, this would imply $\a_i(\alpha) = \star_{t' - 1}$. Since $\a_i(\alpha) = \a_i(\beta)$, we must have $\a_i(\rig(\beta)) = \a_i(\rig(\beta))$ then. The same holds for $\beta$. 

Thus the only possibility of entering Line \ref{code:exist} is when exactly one of $\a_i(\rig(\alpha)) = \star_0$ and $\a_i(\rig(\beta)) = \star_0$ happens. Note here $\alpha \in p(\vk)$ and $\beta \in p^*(\vk^1)$, and $w_{\rig(\alpha)} = w_{\rig(\beta)}$ implies $a_{w_{\rig(\alpha)}} = a_{w_{\rig(\beta)}}$. 

\highlight{Case 1: $\a_i(\rig(\alpha)) \not= \star_*$ and $\a_i(\rig(\beta)) = \star_0$.} By \eqref{prop:become-star} and \eqref{prop:beta}, this happens only when there is a node $\eta \in p^*(\vk^1)$ such that $(\a_i(\eta), \level(\eta)) = (a_{w_{\rig(\beta)}}, i)$. On the other side, since $\a_i(\rig(\alpha)) \not= \star_*$, $\a_i(\rig(\alpha)) = a_{w_{\rig(\alpha)}} = a_{w_{\rig(\beta)}} = \a_i(\eta)$. Therefore such $\eta$ exists.
	
\highlight{Case 2: $\a_i(\rig(\alpha)) = \star_0$ and $\a_i(\rig(\beta)) \not= \star_*$.} By \eqref{prop:become-star}, there must be $\eta' \in p(\vk)$ before $\rig(\alpha)$ that $(\a_i(\eta'), \level(\eta')) = (a_{w_{\rig(\alpha)}}, i) = (\a_i(\rig(\beta)), i)$. Note $(\eta', \beta)$ is a collision of level $i$. We then prove $\eta' < \pi_1$ to reach a contradiction with the minimality of $\pi_1$.

Since $\a_i(\rig(\alpha)) = \star_0$, we have not entered Line \ref{code:exist} in Case 2 before since $\alpha$ is always moving to $\rig(\alpha)$, and $\a_i(\rig(\alpha))$ becomes $\star_0$ for at most once. If we have entered Line \ref{code:exist} before in Case 1, there must be a node $\beta' \in p^*(\vk)$ such that $\a_i(\rig(\beta')) = \star_0$. Since $g_i(\rig(\beta')) = g_i(\rig(\alpha)) = 1$, $\rig(\beta')$ is of level $i$. We let $\eta \gets \rig(\beta')$. This proves the existence of $\eta$. 

Otherwise, we have not entered Line \ref{code:exist} before. We know that $(\a_i(\pi_1), \dots, \a_i(\alpha)) = (\a_i(\pi_2), \dots, \a_i(\beta))$, which implies $\eta' < \pi_1$, since otherwise $\a_i(\rig(\beta))$ would also be $\star_0$. However, the way we pick $\pi_1 = \mu^{\vk}_{i,j}$ minimizes $j$. Therefore, there cannot be such $\eta'$, a contradiction. Hence desired $\eta$ always exists.
\end{proof}

Suppose a path $p(\sigma)$ has collision with $p^*(\vk^1)$, and let the lowest such collision be of level $i$. Then Lemma \ref{lem:level_i_collision} states that there exists $\pi_3$ on $p^*(\vk^1)$ such that $\a_i(\pi_3)$ equals that of the last level $i$ node on $p(\vk)$, \ie,$\a_i(\mu^{\vk^1}_{i,k_i})$. This leads to the following corollary saying there must be a node $\sigma'$ within the subtree of $\pi_3$ such that the path from $\pi_3$ to $\sigma'$ is the same as that from $\mu^{\vk}_{i,k_i}$ to $\sigma$. 

We first define what does it mean for two paths to be the same. Roughly speaking, two paths are the same if they have the same shape and same $\a(\mu)$ at each node $\mu$. 

\begin{definition} \label{def:same-below}
We say two paths $p(\vk^1)$ and $p(\vk^2)$ are the same below level $i$ if
\begin{itemize}
	\item $\forall$ $1 \leq j < i$, $k^1_j = k^2_j$. 
	\item $\forall$ $1 \leq j < i, 1 \leq t \leq k^1_j$, $\a(\vidx^{\vk^1}_{j,t}) = \a(\vidx^{\vk^2}_{j,t})$.
\end{itemize}
\end{definition}

\begin{cor} \label{cor:moving-1-path}
Fix sequence $\vk^1$ and $w, T \in \supp(\bw, \bT)$. Let $\sigma$ be a node in $T$ and $\vk$ be the index of $\sigma$. If there is a collision between $p(\sigma)$ and $p^*(\vk^1)$, let $\pi_3$ be defined as Lemma \ref{lem:level_i_collision}. Otherwise, let $\pi_3$ be the lowest common ancestor of them, \ie,the last node on $p(\sigma)$ that is also on $p^*(\vk^1)$. Let $i = \level(\pi_3)$. There must be a descendant $\sigma'$ of $\pi_3$ such that the followings hold:
\fourprops{
$\pi_3$ is the last level $i$ node on $p(\sigma')$. 
}{prop:last-node}{
$p(\sigma')$ is the same as $p(\sigma)$ below level $i$
}{prop:same-below-i}{
$(\a(\sigma'), \level(\sigma')) = (\a(\sigma), \level(\sigma))$	
}{prop:same-a}{
There is no collision between $p(\sigma')$ and $p^*(\vk^1)$.
}{prop:no-collision}
\end{cor}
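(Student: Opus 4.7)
The plan is to construct $\sigma'$ by transplanting $\sigma$ from the subtree rooted at $\vidx^{\vk}_{i, k_i}$ to the subtree rooted at $\pi_3$ via a canonical isomorphism of sub-walks. The key observation is that, since $\level(\vidx^{\vk}_{i, k_i}) = \level(\pi_3) = i$ and $\a(\vidx^{\vk}_{i, k_i}) = \a(\pi_3)$, we have
\[
\nex(\vidx^{\vk}_{i, k_i}) = r_i(\a(\vidx^{\vk}_{i, k_i})) = r_i(\a(\pi_3)) = \nex(\pi_3).
\]
By \Cref{lem:only-depend}, the two function calls $\walk(\nex(\vidx^{\vk}_{i, k_i}), i - 1, \vidx^{\vk}_{i, k_i})$ and $\walk(\nex(\pi_3), i - 1, \pi_3)$ return identical vertex sequences and assign identical $\a, \level, \nex, \x$ values at corresponding positions, up to the shift induced by their third arguments. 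This yields a canonical bijection $\phi$ from the descendants of $\vidx^{\vk}_{i, k_i}$ at level $<i$ onto the descendants of $\pi_3$ at level $<i$, preserving $\a$, $\level$, and the tree parent relation.

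Define $\sigma'$ as follows: if $\sigma = \vidx^{\vk}_{i, k_i}$, set $\sigma' \coloneqq \pi_3$; otherwise $\sigma$ lies strictly below $\vidx^{\vk}_{i, k_i}$ at some level $<i$ within its sub-walk, and we set $\sigma' \coloneqq \phi(\sigma)$. Properties \eqref{prop:last-node}, \eqref{prop:same-below-i}, and \eqref{prop:same-a} are then immediate: $\pi_3$ sits on $p(\sigma')$ at level $i$ and every node on $p(\sigma')$ strictly below $\pi_3$ has level $<i$; the portion of $p(\sigma')$ strictly below $\pi_3$ is the $\phi$-image of the portion of $p(\sigma)$ strictly below $\vidx^{\vk}_{i, k_i}$ and so matches in the multiplicities $k_j$ ($j<i$) and in $(\a, \level)$ at each node; and $(\a(\sigma'), \level(\sigma')) = (\a(\sigma), \level(\sigma))$ because $\phi$ preserves these quantities.

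The main obstacle will be Property \eqref{prop:no-collision}, which I would prove by contradiction. Suppose $(\alpha, \beta)$ is a collision between $p(\sigma')$ and $p^*(\vk^1)$ at some level $j$. When $j < i$, $\alpha$ is a strict descendant of $\pi_3$ on $p(\sigma')$, and $\alpha^\circ \coloneqq \phi^{-1}(\alpha) \in p(\sigma)$ lies at level $j$ with $\a(\alpha^\circ) = \a(\alpha)$; modulo boundary cases (namely $\alpha^\circ \in p^*(\vk^1)$ or $\beta \in p(\sigma)$, which are resolved by transferring the membership through $\phi$ together with the fact $\pi_3 \in p^*(\vk^1)$), the pair $(\alpha^\circ, \beta)$ forms a level-$j$ collision between $p(\sigma)$ and $p^*(\vk^1)$, contradicting the minimality of the collision level $i$ from \Cref{lem:level_i_collision}. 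When $j \ge i$, $\alpha$ is either $\pi_3$ itself or a strict ancestor of $\pi_3$; the former is impossible because $\pi_3 \in p^*(\vk^1)$, and the latter is ruled out by unpacking the union structure $p^*(\vk^1) = \bigcup_{\eta \in p(\vk^1)} \suc(\eta)$: the witness $\eta_0 \in p(\vk^1)$ with $\pi_3 \in \suc(\eta_0)$ forces the strict ancestors of $\pi_3$ above $\eta_0$ to lie in $p^*(\vk^1)$, and between $\eta_0$ and $\pi_3$ any colliding $\alpha$ can be traced back to a level-$j$ collision on $p(\sigma)$, again contradicting the choice of $\pi_3$ in Algorithm~\ref{algo:find-pi3} or the minimality of $i$. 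The bulk of the technical work will be the careful bookkeeping of the boundary cases in the $j<i$ sub-case, which is exactly the kind of transport-through-$\phi$ argument that the explicit isomorphism makes manageable.
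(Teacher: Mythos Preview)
Your proposal is correct and follows essentially the same route as the paper: construct $\sigma'$ by transplanting the below-level-$i$ portion of $p(\sigma)$ onto $\pi_3$ (the paper does this via the index $k'_j = k^3_j$ for $j \ge i$ and $k'_j = k_j$ for $j < i$, which is exactly your bijection $\phi$), and then verify \eqref{prop:no-collision} by splitting on whether the putative collision sits below level $i$ or not. One simplification worth noting for your $j \ge i$ sub-case: since $\pi_3 \in p^*(\vk^1)$, \emph{every} ancestor of $\pi_3$ already lies in $p^*(\vk^1)$ (ancestors at level $>i$ are ancestors of the witness $\eta_0 \in p(\vk^1)$, and level-$i$ ancestors are either level-$i$ ancestors of $\eta_0$ or members of $\suc(\eta_0)$), so no node of $p(\sigma')$ at level $\ge i$ can lie outside $p^*(\vk^1)$ and there is nothing to ``trace back'' to $p(\sigma)$ in that range; the paper dispatches this case in one line.
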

\begin{proof}
When there is no collision between $p(\sigma)$ and $p^*(\vk^1)$, we let $\sigma' = \sigma$ which satisfies all the requirements. When there is collision between $p(\sigma)$ and $p^*(\vk^1)$, let $\vk$ be the index of $\sigma$ and $\vk^3$ be the index of $\pi_3$. We define $\vk'$ to be \[k'_j = \begin{cases} k^3_j & j \geq i \\ k_j & j < i \end{cases}.\] Let $\sigma'$ be the node $\vidx^{\vk'}$ which a priori may not exist. We will later prove that $\sigma'$ does exist. 

If $\sigma' = \mu^{\vk'}$ exists, since $ k'_j = k^3_j$ for all $j \geq \level(\pi_3)$, it must be a descendant of $\pi_3$, and $\pi_3$ is the last level $i$ node on $p(\sigma')$. 
This proves \eqref{prop:last-node}.

Let $\gamma = \vidx^{\vk}_{i,k_i}$. By the definition of $\walk$, $w_{\gamma + 1}, \dots, w_{\rig(\gamma)}$ is generated by $\walk(\nex(\gamma),i - 1)$. Similarly, $w_{\pi_3 + 1}, \dots, w_{\rig(\pi_3)}$ is generated by the return value of $\walk(\nex(\pi_3), i - 1)$. 

Since $\a_i(\gamma) = \a_i(\pi_3)$ (by \Cref{lem:level_i_collision}), we know $\nex(\gamma) = r_i(\a_i(\gamma)) = r_i(\a_i(\pi_3)) = \nex(\pi_3)$. This implies that $\walk(\nex(\gamma),i - 1)$ is the same walk as $\walk(\nex(\pi_3), i - 1)$. Since $\vk_j$ and $\vk'_j$ are the same for all $j < i$, and their last level $i$ nodes are $\gamma$ and $\pi_3$ respectively. $\vidx^{\vk'}$ exists if and only if $\vidx^{\vk}$ exists. Since $\sigma = \vidx^{\vk}$ exists, we know $\sigma' = \vidx^{\vk'}$ also exists. Besides, below level $i$, these two paths ($p(\sigma)$ and $p(\sigma')$) are generated by the same walk (since $\walk(\nex(\gamma),i - 1)$ is the same as $\walk(\nex(\pi_3), i - 1)$), so they are the same below level $i$. This proves \eqref{prop:same-below-i}. Note if $i > \level(\sigma)$, \eqref{prop:same-below-i} implies  \eqref{prop:same-a}. If $i = \level(\sigma)$, \eqref{prop:last-node} implies $\sigma = \gamma$ and $\pi_3 = \sigma'$. Therefore \eqref{prop:same-a} also holds. 

Finally, we prove \eqref{prop:no-collision}. By definition of $\pi_3$ in \Cref{lem:level_i_collision}, there is no collision between $p(\sigma)$ and $p^*(\vk^1)$. Since $p(\sigma')$ is the same as $p(\sigma)$ below level $i$, there is also no collision between $p(\sigma')$ and $p^*(\vk^1)$ below level $i$. Moreover, by \eqref{prop:last-node}, the last level $i$ node of $p(\sigma')$ is $\pi_3 \in p^*(\vk^1)$. Thus there is also no collision between them above or equal level $i$. This concludes the proof of \eqref{prop:no-collision}. 
\end{proof}

Roughly speaking, we need to apply \Cref{cor:moving-1-path} to $p(\bar \alpha - 1)$ and $p(\bar \beta - 1)$ respectively. Before applying \Cref{cor:moving-1-path}, there is no collision between them. We need to prove that this property is preserved after applying \Cref{cor:moving-1-path}. This gives the following corollary which will be used in the proof of \Cref{lem:structure}. We will later apply it with $\mu = \bar \alpha - 1$ and $\eta = \bar \beta - 1$. 

\begin{cor}\label{cor:moving-2-paths}
Fix sequence $\vk^1$ and $(w,T) \in \supp((\bw,\bT))$. Suppose there are two nodes $\mu$ and $\eta$ such that there is no collision between $p(\mu)$ and $p(\eta)$. Then there must exist two nodes $\mu', \eta'$ such that $(\a(\mu), \level(\mu)) = (\a(\mu'), \level(\mu'))$ and $(\a(\eta), \level(\eta)) = (\a(\eta'), \level(\eta'))$, and there is no collision between $p(\mu'), p(\eta')$ and $p^*(\vk^1)$. 
\end{cor}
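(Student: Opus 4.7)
The plan is to apply \Cref{cor:moving-1-path} twice in sequence: first to $\mu$ with respect to the given $\vk^1$, and then to $\eta$ with respect to a reference index that encodes the image $\mu'$ of the first move.

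First, I would apply \Cref{cor:moving-1-path} with $\sigma := \mu$ and the given $\vk^1$. This produces $\mu'$ with $(\a(\mu'), \level(\mu')) = (\a(\mu), \level(\mu))$ and no collision between $p(\mu')$ and $p^*(\vk^1)$. Let $\vk^{\mu'}$ denote the index of $\mu'$, and note that $\mu' \in p^*(\vk^{\mu'})$ by construction.

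Next, I would apply \Cref{cor:moving-1-path} again with $\sigma := \eta$, but with $\vk^1$ replaced by $\vk^{\mu'}$. This yields $\eta'$ with $(\a(\eta'), \level(\eta')) = (\a(\eta), \level(\eta))$ and no collision between $p(\eta')$ and $p^*(\vk^{\mu'})$. In particular, any potential collision between $p(\eta')$ and $p(\mu')$ at level $\level(\mu')$ is ruled out, since the level-$\level(\mu')$ nodes of $p(\mu')$ lie in $p^*(\vk^{\mu'})$. For collisions at strictly higher levels, I would iterate the argument: one application of \Cref{cor:moving-1-path} per remaining level appearing in $p(\mu')$, each time using the index of the corresponding ancestor of $\mu'$ as the reference, so that the appropriate level of $p(\mu')$ is absorbed into the relevant $p^*$.

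The main obstacle will be to verify that these successive moves do not reintroduce a collision with references handled in earlier steps, in particular with $p^*(\vk^1)$. I expect this to follow from property~\eqref{prop:same-below-i} of \Cref{cor:moving-1-path}: each move preserves the path strictly below the rebasing level, while above that level the path is forced to lie inside the chosen reference subtree. Combined with the hypothesis that $p(\mu)$ and $p(\eta)$ carry no collision, and with the non-collisions already established by previous applications, a careful level-by-level invariant-maintenance argument should give that the final $\eta'$ satisfies all the required non-collision properties simultaneously.
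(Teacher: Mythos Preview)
Your sequential strategy has a real gap that you yourself flag but do not resolve. After you apply \Cref{cor:moving-1-path} to $\eta$ with reference $\vk^{\mu'}$, the resulting $\eta'$ is a descendant of some $\pi_3^\eta \in p^*(\vk^{\mu'})$, and above level $\level(\pi_3^\eta)$ the path $p(\eta')$ equals $p(\pi_3^\eta)$. But $\pi_3^\eta$ need not lie on $p(\mu')$; it may lie in $\mathsf{suc}(\gamma)\setminus\{\gamma\}$ for some $\gamma\in p(\mu')$, and then $p(\pi_3^\eta)$ contains nodes outside $p(\mu')$ about which you know nothing with respect to $p^*(\vk^1)$. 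Your proposed ``iterate per level of $p(\mu')$'' fix is not a well-defined procedure: each further application of \Cref{cor:moving-1-path} again rebases the top of the path into a new $p^*(\cdot)$, potentially undoing previous non-collision guarantees. There is no convergence argument, and you never say concretely where the hypothesis ``no collision between $p(\mu)$ and $p(\eta)$'' enters.

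The paper's proof avoids all of this by applying \Cref{cor:moving-1-path} \emph{in parallel}, to both $\mu$ and $\eta$, with the \emph{same} reference $\vk^1$. This immediately gives $\mu',\eta'$ each having no collision with $p^*(\vk^1)$; the only thing left is to rule out a collision between $p(\mu')$ and $p(\eta')$ themselves. That is handled by a short contradiction argument: a hypothetical collision $(\alpha',\beta')$ must occur either at a level $\ge i_\mu$ (or $\ge i_\eta$), in which case one of the two nodes lies in $p^*(\vk^1)$ and contradicts the other path's guaranteed non-collision with $p^*(\vk^1)$; or at a level below both $i_\mu$ and $i_\eta$, in which case property~\eqref{prop:same-below-i} pulls $(\alpha',\beta')$ back to nodes $(\alpha,\beta)$ on the original $p(\mu),p(\eta)$, contradicting the hypothesis (with a separate check that $\alpha=\beta$ forces $\alpha'=\beta'$). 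This is where the no-collision assumption on $p(\mu),p(\eta)$ is actually used, and it is the key idea your proposal is missing.
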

\begin{proof}
We apply \Cref{cor:moving-1-path} to $\sigma = \mu$ (resp. $\eta$) and get $\sigma' = \mu'$ (resp. $\eta'$) and $\pi^{\mu}_3$ (resp. $\pi^{\eta}_3$). Let $i_{\mu} = \level(\pi^{\mu}_3)$ and $i_{\eta} = \level(\pi^{\eta}_3)$. 

Our proof is by contradiction. Suppose there are $\alpha' \in p(\mu') \setminus p(\eta')$ and $\beta' \in p(\eta') \setminus p(\mu')$ such that $(\a(\alpha'), \level(\alpha')) = (\a(\beta'), \level(\beta'))$. If $\level(\alpha') \geq i_{\mu}$, we know $\alpha' \in p(\pi^{\mu}_3)$ by \Cref{cor:moving-1-path} \eqref{prop:last-node}. However, since $\pi^{\mu}_3 \in p^*(\vk^1)$, we know $\alpha' \in p^*(\vk^1)$. This contradicts \Cref{cor:moving-1-path} \eqref{prop:no-collision} saying that there is no collision between $p(\eta')$ and $p^*(\vk^1)$. Same contradiction follows if $\level(\beta') \geq i_{\eta}$. 

Therefore, we must have $\level(\alpha') < i_{\mu}$ and $\level(\beta') < i_{\eta}$. By \Cref{cor:moving-1-path} \eqref{prop:same-below-i}, there is a node $\alpha \in p(\mu)$ corresponding to $\alpha'$ such that $(\a_i(\alpha), \level(\alpha)) = (\a_i(\alpha'), \level(\alpha'))$. Similarly, there is also a node $\beta \in p(\eta)$ corresponding to $\beta'$ such that $(\a_i(\beta), \level(\beta)) = (\a_i(\beta'), \level(\beta'))$. 

If $\alpha \not= \beta$, we reach a contradiction with the assumption that there is no collision between $p(\mu)$ and $p(\eta)$. 

If $\alpha = \beta$, let $i$ be the level of the lowest common ancestor of $\mu$ and $\eta$ (\ie,the last node on $p(\mu)$ that is also on $p(\eta)$). Since $\alpha \in p(\mu)$ and $\beta \in p(\eta)$, we know $i \leq \level(\alpha)$. On the other hand, $\level(\alpha) = \level(\alpha') < i_{\mu}$ and $\level(\beta) = \level(\beta') < i_{\eta}$. Thus $i_{\mu}$ and $i_{\eta}$ are strictly higher than $i$. Since $p(\mu), p(\eta)$ overlaps above level $i$, and $i_{\mu}$ is the lowest level that contains a collision (or common ancestor when there is no collision) between $p(\mu)$ and $p^*(\vk^1)$ (resp. $i_{\eta}$ is the lowest level that contains a collision (or common ancestor when there is no collision) between $p(\eta)$ and $p^*(\vk^1)$), we know $i_{\mu} = i_{\eta}$. Moreover, if we let $\gamma$ be the last level $i_{\mu}$ node on $p(\mu)$, it is also the last node on $p(\eta)$ of that level (again because these two paths overlaps above level $i$). 

Hence since $\pi_3^\mu, \pi_3^\eta \in p^*(\vk^1)$ satisfies $(\a(\pi_3^\mu), \level(\pi_3^\mu)) = (\a(\gamma), \level(\gamma)) = (\a(\pi_3^\eta), \level(\pi_3^\eta))$ (by \Cref{lem:level_i_collision}). We must have $\pi_3^\mu = \pi_3^\eta$. Thus from $\alpha = \beta$ and \Cref{cor:moving-1-path} \eqref{prop:same-below-i}, we know $\alpha' = \beta'$. Then it cannot be a collision. 
\end{proof}

We are finally ready to prove \Cref{lem:structure}. 

\begin{reminder}{\Cref{lem:structure}}
Fix $(w,T) \in \supp((\bw,\bT))$. Suppose $T$ contains two nodes $ \bar \alpha \not= \bar \beta$ such that $a_{w_{\bar \alpha}} = a_{w_{\bar \beta}}$. Then for every node $\i$ on $T$, there must exist nodes $\alpha$ and $\beta$, such that $\alpha \not= \beta,a_{w_{\alpha}} = a_{w_{\beta}}$, and for any two of the paths $p(\i - 1), p(\alpha - 1), p(\beta - 1)$ there is no collision (as in Definition \ref{def:collision}) between them. 
\end{reminder}
\vspace{-1.5em}
\begin{proof}[Proof of \Cref{lem:structure}]

Let $(\bar \alpha, \bar \beta) = \arg\min_{(\alpha, \beta)} \{\beta \ \vert \ a_{w_{\alpha}} = a_{w_{\beta}}, \alpha < \beta\}$, \ie, it is the first duplicate in $w$ in the sense that $\bar \beta$ is minimized. We first prove that there is no node $\gamma \in p(\bar \alpha - 1)$ such that $\a(\gamma) = \star_*$. Suppose there is. Take the first such $\gamma$, then we have $\a(\gamma) = \star_0$, and there must be a node $\gamma'$ before $\gamma$ such that $a_{w_{\gamma'}} = a_{w_{\gamma}}$. Since $\gamma < \bar \alpha < \bar \beta$, this contradicts the fact that $(\bar  \alpha, \bar \beta)$ is the first duplicate. 

Suppose there is $\pi_1 \in p(\bar \alpha - 1) \setminus p(\bar \beta - 1)$ and $\pi_2 \in p(\bar \beta - 1) \setminus p(\bar \alpha - 1)$ such that $\a(\pi_1) = \a(\pi_2)$. By the discussion above, we know $\a(\pi_1)$ cannot be $\star_*$. Therefore $a_{w_{\pi_1}} = a_{w_{\pi_2}}$. Again since $\pi_2 < \bar \beta$, this contradicts the fact that $(\bar  \alpha, \bar \beta)$ is the first duplicate. 

Thus there is no collision between $p(\bar \alpha - 1)$ and $p(\bar \beta - 1)$. The remaining problem is that there might be collisions between $p(\i-1)$ and these two paths.

Let $\vk^1$ be the index of node $\i - 1$. Then we apply \Cref{cor:moving-2-paths} with $\mu = \bar \alpha - 1$ and $\eta = \bar \beta - 1$. We get $\mu'$ and $\eta'$ such that there is no collision between $p(\mu'), p(\eta'), p(\vk^1)$. Besides by \Cref{cor:moving-2-paths} \eqref{prop:same-a}, $(\a(\mu'), \level(\mu')) = (\a(\mu), \level(\mu))$ and $(\a(\eta'), \level(\eta')) = (\a(\eta), \level(\eta))$. Therefore $w_{\mu' + 1} = r_{\level_{\mu'}} (\a(\mu')) = r_{\level_{\mu}} (\a(\mu)) =w_{\mu + 1}$ (\ie,$w_{\alpha} = w_{\bar \alpha}$). Similarly, $w_{\eta' + 1} = w_{\eta + 1}$ (\ie,$w_{\beta} = w_{\bar \beta}$).

We can let simply $\alpha = \mu' + 1$ and $\beta = \eta' + 1$. Then we have $a_{w_{\alpha}} = a_{w_{\bar \alpha}} = a_{w_{\bar \beta}} = a_{w_{\beta}}$. Since there is no collision between $p(\bar \alpha - 1), p(\bar \beta - 1)$ and $\bar \alpha < \bar \beta$, we know $(\a(\mu), \level(\mu)) \neq (\a(\eta), \level(\eta))$. Hence, $(\a(\mu'), \level(\mu')) \neq (\a(\eta'), \level(\eta'))$. We get $\mu' \neq \eta'$ and thereby $\alpha \neq \beta$. 
\end{proof}

\subsubsection{Upper Bounding the Bad Occurrences} \label{bad-case}

Now we are finally ready to upper bound those bad occurrences of $v$. In order to handle such case with three paths, we need to extend our definition of $\vidx^{\vk}_{i,j}$ and $\calF^{\vk,\vb}_{i,j}$ to the union of $t$ paths. 

\paragraph{Notation.} We first define the following notion of ancestor.

\begin{definition} \label{def:ancestor}
Given two vectors $\vk^1, \vk^2 \in \N^\ell$, we say that $\vk^1$ is an ancestor of $\vk^2$, if the following holds
\[
\exists i \in [\ell] \text{ s.t. } [\forall t \in [i + 1, \ell], k^1_t = k^2_t] \land [0 < k^1_i \leq k^2_i] \land [\forall t \in [i - 1], k^1_t = 0] \land [\vk^1 \neq \vk^2].
\]
\end{definition}

Note when both $\vidx^{\vk^1}$ and $\vidx^{\vk^2}$ exists, $\vidx^{\vk^1}$ is an ancestor of $\vidx^{\vk^2}$ on $T$ if and only if $\vk^1$ is an ancestor of $\vk^2$. 

In Section \ref{sec:construction}, we defined $\vk^{i,j} = (0,\dots,0,j,k_{i+1},\dots,k_{\ell})$, and we also defined how to compare two indices. We say $\vk^1 < \vk^2$ if and only if 
\[
\exists i \in [\ell] \text{ s.t. } [\forall t \in [i + 1,\ell], k^1_t = k^2_t] \land [k^1_i < k^2_i]
\]

Recall $p(v)$ denotes the path on $T$ from root to node $v$. We use $P^{\vk}$ to denote the set of all indices that are either $\vk$ or an ancestor of $\vk$. Note that if $\vidx^{\vk}$ exists on $T$, then the set of indices of all nodes in $p(v)$ is exactly $P^{\vk}$.

Let $\vK = \{\vk^1, \vk^2, \dots, \vk^t\}$ be a set of $t$ paths. We define $\vK_{i,j}$ to be the $j$-th index in level $i$ in the union of $P^{\vk^1}, P^{\vk^2}, \dots, P^{\vk^t}$. Namely, we take out all the distinct indices in $\{(\vk^1)^{i, j}\}_{j \in [\vk^1_i]} \cup \{(\vk^2)^{i, j}\}_{j \in [\vk^2_i]} \cup \dots \cup \{(\vk^t)^{i, j}\}_{j \in [\vk^t_i]}$ and sort them in increasing order (by the comparsion we defined above). $\vK_{i,j}$ is the $j$-th one among them. Note it is uniquely determined by $\vK, i$, and $j$. 

For an index $\vk$, we use $\pa(\vk)$ be the largest $\vk'$ such that $\vk'$ is an ancestor of $\vk$. We also use $\pa_{i,j}^{\vK}$ to denote $\pa(\vK_{i,j})$. Therefore, suppose $\vk = \vK_{i,j}$, by this definition $\vk' = (0, \dots, 0, k_i - 1, k_{i + 1}, \dots, k_{\ell})$. Note here $k_i \ge 1$ since $\vK_{i,j}$ is of level $i$. 

We make the following observation about existence of $\mu^{\vK}_{i,j}$. 

\begin{observation} \label{obs:existence-K}
Fix $(w,g) \in (\bw, \bg)$ and $\vK$. Recall $\vidx[\vk]$ has the same meaning as $\vidx^{\vk}$. The following holds:
\begin{enumerate}[label=(\alph*)]
	\item Suppose $\vidxb{\pa^{\vK}_{i,j}}$ exists, $\vidx^{\vK}_{i,j}$ exists if and only if $g_i\left(\a_i\left(\rig_i\left(\vidxb{\pa^{\vK}_{i,j}}\right)\right)\right) = 1$. \label{item:exist-K}
	\item $\x_i\left(\rig_i\left(\vidxb{\pa^{\vK}_{i,j}}\right)\right) = \last\left(\nex\left(\vidxb{\pa^{\vK}_{i,j}}\right), i - 1\right)$ \label{item:xi-K}
	\item Suppose $\vidx^{\vK}_{i,1}, \dots, \vidx^{\vK}_{i,j - 1}$ and $\vidxb{\pa^{\vK}_{i,j}}$ exist. From $\tilde{x}_1 = \x_i\left(\vidx^{\vK}_{i,1}\right), \tilde{x}_2 = \x_i\left(\vidx^{\vK}_{i,2}\right), \dots, \tilde{x}_{j - 1} = \x_i\left(\vidx^{\vK}_{i,j - 1}\right), \tilde{x}_j = \x_i\left(\rig_i\left(\vidxb{\pa^{\vK}_{i,j}}\right)\right)$, one can uniquely determine $\a_i\left(\rig_i\left(\vidxb{\pa^{\vK}_{i,j}}\right)\right)$. 
	
	Specifically, let $j_0 = \max\{j_0 \ \vert \ \vK_{i,j_0} \text{ is an ancestor of } \vK_{i,j}\}$. Then let $j' = \min \{j' \ \vert \ \exists  j'' \text{ s.t. }  j_0 \leq j'' < j' \leq j,  a_{\bar{x}_{j''}} = a_{\bar{x}_{j'}}\}$. If such $j'$ does not exist,  then $\a_i\left(\rig_i\left(\vidxb{\pa^{\vK}_{i,j}}\right)\right) = a_{\bar{x}_j}$. Otherwise, $\a_i\left(\rig_i\left(\vidxb{\pa^{\vK}_{i,j}}\right)\right) = \star_{j - j'}$.
	\label{item:determine-a-from-xi-K}
	\item If $\vidx^{\vK}_{i,j}$ exists, then $\vidx^{\vK}_{i,j} = \rig_i\left(\vidxb{\pa^{\vK}_{i,j}}\right)$. \label{item:exist-equal}
\end{enumerate}
\end{observation}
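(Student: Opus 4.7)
The plan is to reduce each of the four items to the single-path analogues already established (\Cref{lem:existence} and \Cref{obs:determine-a-from-xi}): parts (a), (b), and (d) are essentially direct translations, while (c) additionally requires a short structural observation about how the sorted level-$i$ indices of $\vK$ organize into walk calls. To handle (a), (b), and (d), I would set $\vk := \vK_{i,j}$ and verify, by unpacking \Cref{def:ancestor}, that $\pa^{\vK}_{i,j}$ coincides with the single-path index $\vk^{i,k_i-1}$ when $k_i \ge 2$, and with $\vk^{i,0}$ (the recursively defined higher-level parent) when $k_i = 1$. In either case $\vidxb{\pa^{\vK}_{i,j}}$ plays exactly the role of $\mu^{\vk}_{i,k_i-1}$ in \Cref{lem:existence}: it is the predecessor of $\mu^{\vK}_{i,j}$ in the walk call that would generate the latter. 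Parts (a) and (b) then follow directly from the two items of \Cref{lem:existence} with $j$ replaced by $k_i$, and (d) is immediate from the fact that when $\mu^{\vK}_{i,j}$ exists it is, by definition, the first node after $\vidxb{\pa^{\vK}_{i,j}}$ of level at least $i$, i.e.\ $\rig_i(\vidxb{\pa^{\vK}_{i,j}})$.

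For (c), I would pin down the unique walk call $\walk(s',i,\mu_0)$ that contains $\vidxb{\pa^{\vK}_{i,j}}$ (and $\mu^{\vK}_{i,j}$ when the latter exists) and analyze its level-$i$ siblings. These are generated in order as $\mu_1, \mu_2, \dots, \mu_{k_i}$ with indices $(0,\dots,0,t,k_{i+1},\dots,k_\ell)$ for $t = 1, \dots, k_i$; each such index with $t < k_i$ is an ancestor of $\vK_{i,j}$ in the sense of \Cref{def:ancestor} and hence lies in $\vK$ whenever $\vK_{i,j}$ does. Because the ordering on $\N^\ell$ (lexicographic on the reversed coordinate sequence) keeps all level-$i$ indices sharing the tail $(k_{i+1},\dots,k_\ell)$ in a single contiguous block, these $k_i$ siblings occupy consecutive positions $\vK_{i,j_0}, \vK_{i,j_0+1}, \dots, \vK_{i,j}$ in the sorted level-$i$ list of $\vK$ for the appropriate $j_0$. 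The formula in (c) then follows by applying \Cref{obs:determine-a-from-xi} inside this single walk call to the sequence $\x_i(\mu_1), \dots, \x_i(\mu_{k_i})$, and using parts (a), (b), (d) to translate the within-call indices back to the global indices $j_0, j_0+1, \dots, j$.

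I expect the main obstacle to be the structural claim underlying (c): verifying that in the sorted level-$i$ list of $\vK$ there is no level-$i$ index from a different walk call lying strictly between $\vK_{i,j_0}$ and $\vK_{i,j}$. This reduces to a short property of the ordering---any two level-$i$ indices with a common tail $(k_{i+1},\dots,k_\ell)$ are separated only by further indices sharing that same tail---combined with the closure of $\vK$ under taking ancestors, which guarantees that every intermediate sibling is actually present in $\vK$. Once this block structure is in place, (c) reduces to a clean rewriting of \Cref{obs:determine-a-from-xi}.
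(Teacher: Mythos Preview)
Your proposal is correct and follows essentially the same approach as the paper: both reduce (a), (b), (d) to \Cref{lem:existence} by identifying $\pa^{\vK}_{i,j}$ with the single-path predecessor index, and reduce (c) to \Cref{obs:determine-a-from-xi} via the observation that the level-$i$ siblings sharing the tail $(k_{i+1},\dots,k_\ell)$ occupy a contiguous block $\vK_{i,j_0},\dots,\vK_{i,j}$ in the sorted list. One small imprecision: it is not $\vK$ itself that is closed under ancestors but rather $\bigcup_{\vk\in\vK} P^{\vk}$, from which the $\vK_{i,j}$ are drawn; this is what guarantees the intermediate siblings are present.
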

\begin{proof}
Suppose $\vK_{i,j} \in P^{\vk}$ for $\vk \in \vK$ and $\vK_{i,j} = \vk^{i,t} = \vk'$. 

By definition of $\vidx^{\vk}_{i,t - 1}$, its index is $(0, \dots, 0, k'_i - 1, k'_{i + 1}, \dots, k'_{\ell})$. This is exactly $\pa^{\vK}_{i,j}$. Thus \ref{item:exist-K} and \ref{item:xi-K} follows directly from \Cref{lem:existence}. 

Note $j_0 = j - t + 1$. If we let $\bar{x}_{j'} = \tilde{x}_{j' + j_0 - 1} = \x_i(\vidx^{\vk}_{i,j'})$ for $j' \in [1,t-1]$ and $\bar{x}_t = \tilde{x}_j$. We can directly apply \Cref{obs:determine-a-from-xi} and get \ref{item:determine-a-from-xi-K}. For \ref{item:exist-equal}, note if $\vidx^{\vK}_{i,j}$ exists, then $\vidx^{\vK}_{i,j} = \vidx^{\vk}_{i,t}$, and $\rig_i\left(\vidxb{\pa^{\vK}_{i,j}}\right) = \rig_i\left(\vidx^{\vk}_{i,t - 1}\right) = \vidx^{\vk}_{i,t}$.
\end{proof}
For the ease of notation, we let $\zeta^{i,j} = \rig_i\left(\vidxb{\pa^{\vK}_{i,j}}\right)$. 

Let $K_i$ be the number of distinct level $i$ indices in the union of these $t$ paths $P^{\vk^1}, P^{\vk^2}, \dots, P^{\vk^t}$.
We use $\calB_{\vK}$ to denote the collection of all two-dimensional sequence $\vb = \{ b_{i,j} \}_{i \in [\ell], j\in [K_i]}$ with $b_{i,j}\in [n]$ for every $1\le i\le \ell, 1\le j\le K_i$.

Fix $\vK$ and $\vb \in \calB_{\vK}$. For $1 \leq I \leq \ell, 0 \leq J \leq K_I$, we define $\calF^{\vK, \vb}_{I, J}$ to be the following event:

\begin{itemize}
	\item For every $I < i \leq \ell$ and every $1 \leq j \leq K_i$, node $\mu^{\vK}_{i,j}$ exists and $\nex(\mu^{\vK}_{i,j}) = b_{i,j}$.
	\item For every $1 \leq j \leq J$, node $\mu^{\vK}_{I,j}$ exists and $\nex(\mu^{\vK}_{I,j}) = b_{I, j}$. 
\end{itemize}

Same as before, we use $\calF^{\vK, \vb}_i$ to denote $\calF^{\vK, \vb}_{i, K_i}$. Specifically, $\calF^{\vK, \vb}_{\ell+1}$ is always true.

Furthermore, we define the following event capturing collisions between these $t$ paths in $\vK$. Let $\tilde{\calG}^{\vK}_{i,j}$ denote the event that for all $1 \leq t_1 < t_2 \leq j$, $\a_i(\vidx^{\vK}_{i, t_1}) \not= \a_i(\vidx^{\vK}_{i, t_2})$. We also define $\tilde{\calG}^{\vK}_i = \tilde{\calG}^{\vK}_{i, K_i} \land \tilde{\calG}^{\vK}_{i + 1}$ and let $\tilde{\calG}^{\vK}_{\ell + 1}$ to be always true. Note $\tilde{\calG}^{\vK}_i$, $\calF^{\vK, \vb}_i$, and $\calF^{\vK, \vb}_{i, K_i}$ involves different levels while $\tilde{\calG}^{\vK}_{i,j}$ only involves level $i$. 

$\tilde{\calG}^{\vK}_{i,j}$ captures level $i$ collisions between these paths for the following reason: By how our extended walk assign $\a_i(\mu)$ to each node $\mu$, for level $i$ nodes $\alpha, \beta$ on the same path, we always have $\a_i(\alpha) \not= \a_i(\beta)$. Therefore if $\a_i(\vidx^{\vK}_{i, t_1}) \not= \a_i(\vidx^{\vK}_{i, t_2})$ they must belong to different paths. 

We summarize the notation in Table~\ref{table:Summary-2}. 

\begin{table}[H] \label{table:Summary-2}
    \renewcommand{\arraystretch}{1.3}
    \begin{center}
    	\begin{tabular}{|l l|}
        \hline
        \textbf{Notation}  & \textbf{Meaning}  \\ \hline
		$\vidx^{\vk}$ or $\vidx[\vk]$ & the tree node determined by $\vk$ \\
		$\ell$ & number of components (sub-restrictions, levels) in $\calH_{\ell,m,n}$; number of levels; $\ell \le \log n$ \\
		$\tau$ & independence parameter in $\calH_{\ell,m,n}$; $\tau = 20\log n \log\log n$\\
		$(r_i,g_i)$ & components of hash function in $\calH_{\ell,m,n}$ \\
		$r_{\le i}, g_{\le i}$ & the sequence $(r_{1},\dotsc,r_{i})$ and $(g_{1},\dotsc,g_{i})$ \\
		$r_{\le t} \wedge g_{\le t}$ & the event $\left[ \br_{\le t}=r_{\le t} \wedge \bg_{\le t} = g_{\le t} \right]$ \\
    	$\vK$ & a set of indices; subset of $\N^{\ell}$ \\
    	$P^{\vk}$ & the set of indices of all ancestors of $\vk$\\			
    	$\Kshort$ & $\zeroTon{\tau/4}^{\ell}$ \\
    	$\calB_{\vk}$ & set of two-dimensional sequence $\vb$ with values in $[n]$ and shape $\vk$ \\
    	$\vK_{i,j}$ & the $j$-th index among all level $i$ indices in $\cup_{\vk \in \vK} P^{\vk}$\\
    	$\vidx^{\vK}_{i,j}$ & the node $\mu^{\vK_{i,j}}$ \\
    	$K_i$ & the number of distinct level $i$ indices in $\cup_{\vk \in \vK} P^{\vk}$\\
    	$\pa(\mu)$ & a node; the parent of $\mu$\\
    	$\pa(\vk)$ & an index; the parent of $\vk$\\
    	$\pa^{\vK}_{i,j}$ & an index; the parent of $\vK_{i,j}$ \\
    	$\calF^{\vK,\vb}_{i,j}$ & the event that for all $(i',j')$ before or equal to $(i,j)$, $\mu_{i',j'}^{\vK}$ exists and $\nex(\mu_{i',j'}^{\vK}) = \vb_{i',j'}$\\
    	$\calF^{\vK,\vb}_{i}$ & the event $\calF^{\vK,\vb}_{i,K_i}$\\
    	$\tilde{\calG}^{\vK,\vb}_{i,j}$ & the event that for all $\a_i(\mu^{\vK}_{i,j'})$ are distinct for $1 \leq j' \leq j$\\
    	$\calG^{\vK,\vb}_{i}$ & the event $\calG^{\vK,\vb}_{i, K_i} \land \calG^{\vK,\vb}_{i+1, K_{i+1}} \land \cdots \land \calG^{\vK,\vb}_{\ell, K_\ell}$
    	\\$p(\vk)$ & the path $p(\mu^{\vk})$ from root to $\mu^{\vk}$ \\
    	$\zeta^{i,j}$ & the node $\rig_i\left(\vidxb{\pa^{\vK}_{i,j}}\right)$\\
    	\hline
    	\end{tabular}
    \caption{Summary of Notation}
    \end{center}
\end{table}

\paragraph*{Proof idea.}

We have the following lemmas which are extensions of \Cref{lem:single-level} and \Cref{cor:single-level-whole}. Similar as \Cref{lem:single-level}, let us first look at the case within a single level. Recall by our definition of $\vidxb{{\pa}^{\vK}_{i,j}}$ above, it is well-defined even when $\vidx^{\vK}_{i,j}$ does not exist. For some fixed $\vK, \vb$, conditioning on $r_{\le i - 1} \land g_{\le i - 1}$ and $\calF^{\vK, \vb}_{i, j - 1}$, $\vidxb{{\pa}^{\vK}_{i,j}}$ is guaranteed to exist.

The vertex $w_{\zeta^{i,j}}$ corresponding to node $\zeta^{i,j} = \rig_i\left(\vidxb{{\pa}^{\vK}_{i,j}}\right)$ is simply $\last\left(\nex\left(\vidxb{{\pa}^{\vK}_{i,j}}, i - 1\right)\right)$ which is determined by $r_{\le i-1}, g_{\le i - 1}$ by Lemma \ref{lem:only-depend}. Assume $j \leq  \tau$. Then by the $ \tau$-wise independence of $\br_i, \bg_i$, since our condition $\calF^{\vK, \vb}_{i, j - 1}$ only reveal $\br_i(\cdot)$ and $\bg_i(\cdot)$ no more than $j - 1 \leq  \tau - 1$ many points. Then we divide into two cases to upper bound the probability of $\tilde{\calG}^{\vK}_i \land \calF^{\vK, \vb}_i$. When $\a_i(\zeta^{i,j}) \not= \a_i(\vidx^{\vk}_{i,j'})$ for all $j' \leq j - 1$, we know $\level(\zeta^{i,j}) = i$ holds with $1/2$ probability and $\nex(\zeta^{i,j})$ is \emph{uniformly random}. Otherwise, $g_i(\a_i(\zeta^{i,j})) = g_i(\a_i(\vidx^{\vk}_{i,j'})) = 1$ so that $\level(\zeta^{i,j}) = i$ . Then we know $(\vidx^{\vk}_{i,j'}, \zeta^{i,j})$ is a collision of level $i$ and so that $\tilde{\calG}^{\vK}_i = 0$. Therefore, in both cases, we are able to upper bound $\tilde{\calG}^{\vK}_i \land \calF^{\vK, \vb}_i$. 

In Lemma \ref{lem:single-level}, we were able to exactly compute probability of $\calF^{\vk, \vb}_i$ which is independent of $\br_{\le i-1}, \bg_{\le i - 1}$. Therefore we can argue that conditioning on $\calF^{\vk, \vb}_i$, $\br_{\le i - 1}, \bg_{\le i - 1}$ is still uniformly distributed. This is necessary for the induction step in \Cref{cor:single-level-whole} and \Cref{lem:induction}. However, here we can only get an upper bound of $\calF^{\vK, \vb}_i \land \tilde{\calG}^{\vK}_i$. Neither $\calF_i^{\vK, \vb}$ nor $\calF_i^{\vK, \vb} \land \tilde{\calG}^{\vK}_i$ is independent of $\br_{\le i - 1}, \bg_{\le i - 1}$. 

To remedy this, we add auxiliary events $\calA^{\vK, \vb}_i$ so that the probability of event $(\calF_i^{\vK, \vb} \land \tilde{\calG}^{\vK}_i) \lor \calA^{\vK, \vb}_i$ exactly matches the upper bound for $\calF_i^{\vK, \vb} \land \tilde{\calG}^{\vK}_i$. This guarantees that $(\calF_i^{\vK, \vb} \land \tilde{\calG}^{\vK}_i) \lor \calA^{\vK, \vb}_i$ is independent of $\br_{\le i - 1}, \bg_{\le i - 1}$.

Below we explicitly define the sequence $\vc$ to emphasize that it only depends on $\vK, \vb, i,  r_{\leq i - 1}, g_{\leq i - 1})$. One may compare it with \Cref{obs:existence-K} \ref{item:determine-a-from-xi-K}. 

\begin{definition} \label{def:vc}
Recall $\last(s', i)$ is defined in \Cref{def:last}. Let $\vc(\vK, \vb, i, r_{\leq i - 1}, g_{\leq i - 1})$ to be a sequence defined as following:

For each $j \in K_i$, let $i^{\pa}, j^{\pa}$ be two integers such that $\pa(\vK_{i,j}) = \vK_{i^{\pa}, j^{\pa}}$. Note here $i^{\pa}, j^{\pa}$ can be determined from $\vK, i, j$. We let $x_j = \last(b_{i^{\pa},j^{\pa}}, i - 1)$. This is well-defined since by Lemma \ref{lem:only-depend}, $\last(\cdot, i - 1)$ only depends on $r_{\leq i - 1}, g_{\leq i - 1}$.

Then let $j_0 = \max\{j_0 \ \vert \ \vK_{i,j_0} \text{ is an ancestor of } \vK_{i,j}\}$. Then we know the level $i$ ancestors of $\vK_{i,j}$ are exactly $\vK_{i,j_0}, \dots, \vK_{i,j - 1}$. Similar to \Cref{obs:existence-K} \ref{item:determine-a-from-xi-K}, we let $j' = \min\{j' \ \vert \ \exists j'' \text{ s.t. } j_0 \leq j'' < j' \leq j, a_{x_{j''}} = a_{x_{j'}}\}$. Finally, we let $$\left[\vc(\vK, \vb, i, r_{\leq i - 1}, g_{\leq i - 1})\right]_{j} = \begin{cases} a_{x_j} & \text{$j'$ does not exist} \\ \star_{j - j'} & \text{Otherwise} \end{cases}$$
 
When $\vK, \vb, i, r_{\leq i - 1}, g_{\leq i - 1}$ are clear from context, we drop them and simply write $\vc$. 
\end{definition}

\begin{observation} \label{obs:determined-K}
Fix a level $i \in [\ell]$, and fix $\vK = \{\vk^1, \vk^2, \dots, \vk^t\}$, $j \in [K_i]$, $\vb \in \calB_{\vK}$, $r_{\leq i - 1}$ and $g_{\leq i - 1}$. Recall we defined $\zeta^{i,j} = \rig_i\left(\vidxb{\pa^{\vK}_{i,j}}\right)$. Let $\vc$ be defined in \Cref{def:vc}. Assuming $\calF^{\vK, \vb}_{i,j - 1}$ holds, we have $\a_i(\zeta^{i,j}) = c_j$.

\end{observation}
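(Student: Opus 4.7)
My plan is to reduce Observation~\ref{obs:determined-K} directly to parts (b), (c), and (d) of Observation~\ref{obs:existence-K}, by identifying the $\x_i$-values that appear in Observation~\ref{obs:existence-K}(c) with the vertices $x_{j'}$ built into Definition~\ref{def:vc}. The whole statement is, in effect, a bookkeeping check that Definition~\ref{def:vc} was set up to match the explicit formula of Observation~\ref{obs:existence-K}(c) on the nose; once the correspondence is spelled out, no further probabilistic or combinatorial argument is needed.

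First, I will verify that the hypothesis $\calF^{\vK,\vb}_{i,j-1}$ makes Observation~\ref{obs:existence-K}(c) applicable. By definition of $\calF^{\vK,\vb}_{i,j-1}$, each of the nodes $\vidx^{\vK}_{i,1},\dotsc,\vidx^{\vK}_{i,j-1}$ exists in $\bT$, and each has $\nex$-value equal to the corresponding entry of $\vb$. The node $\vidxb{\pa^{\vK}_{i,j}}$ is either $\vidx^{\vK}_{i,j-1}$ (when the parent of $\vK_{i,j}$ lies in level $i$) or some $\vidx^{\vK}_{i',j''}$ with $i' > i$; in either case $\calF^{\vK,\vb}_{i,j-1}$ forces its existence and fixes its $\nex$-value to some $b_{i',j''}$. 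Hence the precondition of Observation~\ref{obs:existence-K}(c) is met, and it expresses $\a_i(\zeta^{i,j})$ uniquely in terms of the tuple $\tilde x_{j'} := \x_i(\vidx^{\vK}_{i,j'})$ for $j' \in [j-1]$ together with $\tilde x_j := \x_i(\zeta^{i,j})$.

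Next, for each $j' \in [j]$ I will show $\tilde x_{j'} = x_{j'}$, where $x_{j'}$ is as in Definition~\ref{def:vc}. For $j' < j$, Observation~\ref{obs:existence-K}(d) gives $\vidx^{\vK}_{i,j'} = \rig_i\!\bigl(\vidxb{\pa^{\vK}_{i,j'}}\bigr)$, and then Observation~\ref{obs:existence-K}(b), applied at index $j'$, yields
\[
\tilde x_{j'} \;=\; \x_i\!\bigl(\rig_i(\vidxb{\pa^{\vK}_{i,j'}})\bigr) \;=\; \last\!\bigl(\nex(\vidxb{\pa^{\vK}_{i,j'}}),\, i-1\bigr).
\]
Writing $\pa(\vK_{i,j'}) = \vK_{i^{\pa},j^{\pa}}$ and using that $\calF^{\vK,\vb}_{i,j-1}$ fixes $\nex(\vidxb{\pa^{\vK}_{i,j'}}) = b_{i^{\pa},j^{\pa}}$, we obtain $\tilde x_{j'} = \last(b_{i^{\pa},j^{\pa}}, i-1)$, which is precisely the definition of $x_{j'}$ in Definition~\ref{def:vc}. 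For $j' = j$, Observation~\ref{obs:existence-K}(b) applied to index $j$ gives $\tilde x_j = \last(\nex(\vidxb{\pa^{\vK}_{i,j}}), i-1)$, and the same substitution via $\calF^{\vK,\vb}_{i,j-1}$ matches it with $x_j$.

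Finally, I will observe that with $\tilde x_{j'} = x_{j'}$ for all $j' \in [j]$, the rule of Observation~\ref{obs:existence-K}(c) for producing $\a_i(\zeta^{i,j})$ becomes bit-for-bit identical to the rule of Definition~\ref{def:vc} for producing $c_j$: both fix the same $j_0 = \max\{j_0 : \vK_{i,j_0} \text{ is an ancestor of } \vK_{i,j}\}$, both search for the smallest $j' \in (j_0, j]$ such that $a_{x_{j''}} = a_{x_{j'}}$ for some $j_0 \leq j'' < j'$, and both output $a_{x_j}$ when no such $j'$ exists and $\star_{j-j'}$ otherwise. Hence $\a_i(\zeta^{i,j}) = c_j$. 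There is no substantive obstacle in this proof; the only point that requires care is the corner case handling of $\vidxb{\pa^{\vK}_{i,j}}$ when $\vK_{i,j}$'s parent is not at level $i$, which is resolved above by appealing to the ``$i' > i$ part'' of $\calF^{\vK,\vb}_{i,j-1}$.
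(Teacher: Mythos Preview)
Your proposal is correct and follows essentially the same route as the paper's proof: use $\calF^{\vK,\vb}_{i,j-1}$ to pin down the $\nex$-values of all relevant parent nodes, invoke parts (b) and (d) of Observation~\ref{obs:existence-K} to identify each $\tilde x_{j'}$ with the $x_{j'}$ of Definition~\ref{def:vc}, and then observe that the rule in part~(c) is literally the rule defining $c_j$. The paper's argument is terser but makes the same three moves in the same order.
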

\begin{proof}
Below we use the same definition of $x_j, i^{\pa}, j^{\pa}$ as \Cref{def:vc}. 

By $\calF^{\vK, \vb}_{i,j - 1}$, we know $\nex\left(\vidxb{\vK_{i^{\pa}, j^{\pa}}}\right) = b_{i^{\pa}, j^{\pa}}$. Then from \Cref{obs:existence-K} \ref{item:xi-K}, we know that $x_j = \last\left(\nex\left(\vidxb{\vK_{i^{\pa}, j^{\pa}}}\right) , i - 1\right) = \x_i\left(\rig_i\left(\vidxb{\vK_{i^{\pa}, j^{\pa}}}\right)\right) = \x_i\left(\rig_i\left(\vidxb{\pa^{\vK}_{i,j}}\right)\right)$. This also holds for all $1 \leq j' < j$ for the same reason. 

By $\calF^{\vK, \vb}_{i,j - 1}$, $\vidx^{\vK}_{i,j'}$ exists for all $1 \leq j' < j$. Therefore $\rig_i\left(\vidxb{\pa^{\vK}_{i,j'}}\right) = \vidx^{\vK}_{i,j'}$. Thus for all $1 \leq j' < j$, we have $x_{j'} = \x_i(\vidx^{\vK}_{i,j'})$. Thus from \Cref{obs:existence-K} \ref{item:determine-a-from-xi-K} and the definition of $\vc$, we know that $\a_i(\zeta^{i,j}) = c_j$. 
\end{proof}

\begin{definition} \label{def:defPK}
$P^{\vK, \vb, r_{\leq i - 1}, g_{\leq i - 1}}_{i,j}(r_i, g_i)$ is a predicate of $r_i, g_i$ defined as following:

Let $\vc$ be the sequence defined in \Cref{def:vc}. For $r_i \in \supp(\br_i)$ and $g_i \in \supp(\bg_i)$, 
\[
P^{\vK, \vb, r_{\leq i - 1}, g_{\leq i - 1}}_{i,j}(r_i, g_i) \coloneqq \left[ \forall j' \in [j], g_i(c_{j'}) = 1 \land r_i(c_{j'}) = b_{i,j'} \right].
\]

When $\vK, \vb, r_{\leq i - 1}, g_{\leq i - 1}$ are clear from the context, we simply write $P_{i,j}(r_i, g_i)$. We also view $P_{i,j}(\br_i, \bg_i)$ as an event in probability space $(\br_i, \bg_i)$. 
\end{definition}

Let $\event_{\leq i - 1}$ denote the event $[r_{\leq i - 1} \land g_{\leq i - 1}]$. We have the following observation. 

\begin{observation} \label{obs:F-and-P}
Fix $\vK, \vb$.

$$\calF^{\vK, \vb}_{i,j} \land \event_{\leq i - 1} = \calF^{\vK, \vb}_{i + 1} \land \event_{\leq i - 1} \land P_{i,j}(r_i, g_i)$$
Specifically, we have
$$\calF^{\vK, \vb}_i \land \event_{\leq i - 1} = \calF^{\vK, \vb}_{i + 1} \land \event_{\leq i - 1} \land P_i(r_i, g_i)$$
\end{observation}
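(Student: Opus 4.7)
The plan is to prove the first (more general) identity by induction on $j$; the special case ``$\calF_i^{\vK,\vb} \land \event_{\le i-1} = \calF_{i+1}^{\vK,\vb} \land \event_{\le i-1} \land P_i(r_i,g_i)$'' is just the instance $j = K_i$. Throughout, fix $\vK, \vb, i$, and fix $(r_{\le i-1}, g_{\le i-1}) \in \supp(\bg_{\le i-1}, \br_{\le i-1})$, so the sequence $\vc = \vc(\vK, \vb, i, r_{\le i-1}, g_{\le i-1})$ from \Cref{def:vc} is a concrete object independent of $\br_i, \bg_i$.

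The base case $j = 0$ is immediate, since by definition $\calF^{\vK,\vb}_{i,0} = \calF^{\vK,\vb}_{i+1}$, and $P_{i,0}(r_i, g_i)$ is a vacuous conjunction, hence always true. For the inductive step, I would unfold
\[
\calF^{\vK,\vb}_{i,j} \land \event_{\le i-1} \;=\; \calF^{\vK,\vb}_{i,j-1} \land \event_{\le i-1} \land \bigl[\,\mu^{\vK}_{i,j}\text{ exists}\land \nex(\mu^{\vK}_{i,j}) = b_{i,j}\,\bigr],
\]
use the inductive hypothesis to replace $\calF^{\vK,\vb}_{i,j-1} \land \event_{\le i-1}$ by $\calF^{\vK,\vb}_{i+1} \land \event_{\le i-1} \land P_{i,j-1}(r_i, g_i)$, and then show that under the latter event the bracketed clause is exactly $\bigl[g_i(c_j) = 1 \land r_i(c_j) = b_{i,j}\bigr]$, which together with $P_{i,j-1}(r_i,g_i)$ gives $P_{i,j}(r_i, g_i)$.

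The core of the inductive step is to invoke the observations from \Cref{sec:bad_structure}: conditioning on $\calF^{\vK,\vb}_{i,j-1} \land \event_{\le i-1}$ guarantees that $\mu[\pa^{\vK}_{i,j}]$ exists (it is either a higher-level ancestor witnessed by $\calF^{\vK,\vb}_{i+1}$, or the earlier level-$i$ node $\mu^{\vK}_{i,j-1}$ witnessed by $\calF^{\vK,\vb}_{i,j-1}$ as per \Cref{obs:existence-K}\ref{item:exist-equal}). Then \Cref{obs:determined-K} pins down $\a_i(\zeta^{i,j}) = c_j$ as a deterministic function of the data already fixed. By \Cref{obs:existence-K}\ref{item:exist-K} and \ref{item:exist-equal}, $\mu^{\vK}_{i,j}$ exists iff $g_i(c_j) = 1$, and in that case $\mu^{\vK}_{i,j} = \zeta^{i,j}$, so $\nex(\mu^{\vK}_{i,j}) = r_i(\a_i(\zeta^{i,j})) = r_i(c_j)$. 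Hence the bracketed clause $[\mu^{\vK}_{i,j}\text{ exists} \land \nex(\mu^{\vK}_{i,j}) = b_{i,j}]$ coincides with $[g_i(c_j) = 1 \land r_i(c_j) = b_{i,j}]$, completing the induction.

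I do not expect any serious obstacle here: the statement is essentially a bookkeeping restatement, and the two key observations have already been established. The only mild subtlety is keeping track of the fact that the sequence $\vc$ depends on the fixed $(r_{\le i-1}, g_{\le i-1})$ but not on $(r_i, g_i)$, which is needed to justify that $P_{i,j}(\cdot, \cdot)$ is a genuine predicate of $(r_i, g_i)$ alone once the lower-level randomness has been revealed. Since the identity is stated for each fixed $(r_{\le i-1}, g_{\le i-1})$ via the indicator event $\event_{\le i-1}$, and finally quantified over all such tuples by taking the union of events, this subtlety causes no difficulty.
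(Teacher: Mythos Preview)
Your proposal is correct and follows essentially the same approach as the paper's proof: induction on $j$ with the trivial base case $j=0$, unfolding the definition of $\calF^{\vK,\vb}_{i,j}$, invoking \Cref{obs:existence-K} and \Cref{obs:determined-K} to rewrite the new clause as $[g_i(c_j)=1 \land r_i(c_j)=b_{i,j}]$, and applying the inductive hypothesis. The paper's proof is line-by-line the same computation, so there is nothing to add.
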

\begin{proof}
We prove by induction. The base case is when $j = 0$, both sides of the equation are exactly the same. Assume this holds for $j - 1$. 

Let $\vc$ be define as \Cref{def:vc}. By definition,
\begin{align*}
	\calF^{\vK, \vb}_{i,j} \land \event_{\leq i - 1} &= \calF^{\vK, \vb}_{i,j - 1} \land \event_{\leq i - 1} \land [\vidx^{\vK}_{i,j} \text{ exists } \land \nex(\vidx^{\vK}_{i,j}) = b_{i,j}] \\
	&= \calF^{\vK, \vb}_{i,j - 1} \land \event_{\leq i - 1} \land [\bg_i(\a_i(\zeta^{i,j}) = 1 \land \br_i(\a_i(\zeta^{i,j})) =b_{i,j}] \tag{\Cref{obs:existence-K} \ref{item:exist-K}} \\
	&= \calF^{\vK, \vb}_{i,j - 1} \land \event_{\leq i - 1} \land [\bg_i(c_j) = 1 \land \br_i(c_j) =b_{i,j}] \tag{\Cref{obs:determined-K}} \\
	&= \calF^{\vK, \vb}_{i + 1} \land \event_{\leq i - 1} \land P_{i,j - 1}(r_i, g_i)\land [\bg_i(c_j) = 1 \land \br_i(c_j) =b_{i,j}] \tag{\text{Inductive hypothesis}}\\
	&= \calF^{\vK, \vb}_{i + 1} \land \event_{\leq i - 1} \land P_{i,j}(r_i, g_i)
\end{align*}
\end{proof}

Recall the definition of $\tilde{\calG}^{\vK}_{i,j}$ depends on $\vidx^{\vK}_{i,t_1}, \vidx^{\vK}_{i,t_2}$ for $1 \leq t_1 < t_2 \leq j$. Therefore it is not well-defined when $\calF^{\vK, \vb}_{i,j}$ is not true since these nodes may not exist. For this technical reason, we extend it to the following definition. 

\begin{definition} \label{def:defGK}
$\calG^{\vK, \vb}_{i,j}$ is the event defined as following. Let $\vc$ be the sequence $\vc(\vK, \vb, \br_{\leq i - 1}, \bg_{i - 1})$ defined in \Cref{def:vc}. We let

\[
\calG^{\vK, \vb}_{i,j} \coloneqq \left[ \forall 1 \leq t_1 < t_2 \leq j, c_{t_1} \not= c_{t_2}\right].
\]

We also define $\calG^{\vK}_i = \calG^{\vK}_{i, K_i} \land\calG^{\vK}_{i + 1}$. 
\end{definition}

When $\calF^{\vK, \vb}_{i,j}$ holds, $\calG^{\vK}_{i,j}$ is the same as $\tilde{\calG}^{\vK}_{i,j}$. So it also captures the collision between paths. But it has the nice property that even when $\calF^{\vK, \vb}_{i,j}$ does not hold, $\calG^{\vK}_{i,j}$ is still well-defined.

Now we are ready to state the following lemma which is an extension of \Cref{lem:single-level}.

\begin{lemma}\label{lem:single-level-K}
Fix $\vK = \{\vk^1, \vk^2, \dots, \vk^t\}\subseteq \Kshort$  ($t\le 4$) and $\vb \in \calB_{\vK}$. In probability space $(\bs, \bh, \bw)$, for any event $\calA^{\vK, \vb}_{i + 1}$ such that $\left(\calF^{\vK, \vb}_{i + 1} \land \calG^{\vK}_{i + 1}\right) \lor \calA^{\vK, \vb}_{i + 1}$ is independent of $\br_{\leq i}, \bg_{\leq i}$. Fix $r_{\leq i-1}, g_{\leq i-1} \in \supp(\br_{\leq i-1}, \bg_{\leq i-1})$. For $j \leq \tau$, we have 

$$\Pr\left[P_{i,j}(r_i, g_i) \land \calG^{\vK}_{i,j} \ \middle\vert \ \left(\left(\calF^{\vK, \vb}_{i + 1} \land \calG^{\vK}_{i + 1}\right) \lor \calA^{\vK, \vb}_{i + 1}\right) \land \event_{\leq i - 1} \land P_{i, j - 1}(r_i, g_i) \land \calG^{\vK}_{i, j - 1}\right] \leq \frac{1}{2n}.$$
\end{lemma}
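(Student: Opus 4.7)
The plan is to mirror the single-vertex analysis of \cref{lem:single-level} while handling the three new features present here: the disjunction with the auxiliary event $\calA^{\vK,\vb}_{i+1}$, the collision-tracking event $\calG^{\vK}$, and the possibility of up to $t\le 4$ paths encoded in $\vK$. The key observation is that the sequence $\vc = \vc(\vK, \vb, i, r_{\le i-1}, g_{\le i-1})$ from \cref{def:vc} depends only on quantities that are fixed in the conditioning, and consequently once we condition on $\event_{\le i-1}$, whether $\calG^{\vK}_{i,j}$ holds becomes a deterministic statement about $\vc$.

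I will case-split on whether $c_j$ equals some earlier entry $c_t$ with $t<j$. In Case~1, where some such $t$ exists, $\calG^{\vK}_{i,j}$ fails under the conditioning, so the conditional probability in question is zero and the bound $1/(2n)$ is immediate. In Case~2, where $c_1,\dotsc,c_j$ are pairwise distinct, $\calG^{\vK}_{i,j}$ holds automatically given the conditioned $\calG^{\vK}_{i,j-1}$, and the quantity to bound collapses to
\[
\Pr\left[P_{i,j}(\br_i, \bg_i) \,\middle|\, \event_{\ge i+1} \land \event_{\le i - 1} \land P_{i, j - 1}(\br_i, \bg_i)\right],
\]
where I abbreviate $\event_{\ge i+1} := (\calF^{\vK, \vb}_{i + 1} \land \calG^{\vK}_{i + 1}) \lor \calA^{\vK, \vb}_{i + 1}$.

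The remainder is an independence-peeling argument. By the lemma's hypothesis, $\event_{\ge i+1}$ is independent of $(\br_{\le i}, \bg_{\le i})$, while $\event_{\le i-1}$ and $P_{i,j-1}(\br_i, \bg_i)$ are measurable in $(\br_{\le i}, \bg_{\le i})$; a short manipulation of conditional probabilities then shows that conditioning on $\event_{\ge i+1}$ does not affect the probability of $P_{i,j}(\br_i, \bg_i)$ given the other two events. Next, since $(\br_{\le i-1}, \bg_{\le i-1})$ is independent of $(\br_i, \bg_i)$ by construction of $\calH_{\ell,m,n}$, conditioning on $\event_{\le i-1}$ can be dropped as well, reducing the quantity to $\Pr[P_{i,j}(\br_i,\bg_i) \mid P_{i,j-1}(\br_i,\bg_i)]$. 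In Case~2 the points $c_1,\dotsc,c_j$ are pairwise distinct and $j \le \tau$, so by $\tau$-wise independence of $\bg_i$ and $\br_i$ together with their mutual independence, the pair $(\bg_i(c_j), \br_i(c_j))$ is uniform on $\{0,1\}\times[n]$ independently of $(\bg_i(c_t), \br_i(c_t))_{t<j}$, and the final probability evaluates to $\tfrac{1}{2}\cdot\tfrac{1}{n}=\tfrac{1}{2n}$. The main subtlety to guard against will be in this peeling step: $\event_{\ge i+1}$ has a complicated joint dependence on $(\br_{\le \ell}, \bg_{\le \ell})$, so the independence hypothesis must be applied at the level of the full tuple $(\br_{\le i}, \bg_{\le i})$ rather than coordinatewise, and one must verify that the rewriting $\Pr[A \mid B \land C]=\Pr[A \mid C]$ is legitimate when $B$ is independent of the joint random variable generating $A$ and $C$.
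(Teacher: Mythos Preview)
Your proposal is correct and follows essentially the same approach as the paper's proof. The paper also observes that $\vc$ is determined by the fixed $(\vK,\vb,i,r_{\le i-1},g_{\le i-1})$, replaces $\calG^{\vK}_{i,j}$ by the (deterministic, under the conditioning) event $[\forall j'<j,\,c_{j'}\ne c_j]$, peels off the $\big((\calF^{\vK,\vb}_{i+1}\land\calG^{\vK}_{i+1})\lor\calA^{\vK,\vb}_{i+1}\big)\land\event_{\le i-1}$ conditioning via the independence hypothesis, and finishes with $\tau$-wise independence; your explicit case split on whether $c_j$ repeats an earlier entry is just a slightly more verbose packaging of the same argument.
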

\begin{proof}

Let $\vc$ be defined as in \Cref{def:vc}. Since $\vc$ is fixed by $(\vK, \vb, i,  r_{\leq i - 1}, g_{\leq i - 1})$, the value of $\calG^{\vK}_{i,j - 1}$ is also uniquely determined. Therefore we can drop $\calG^{\vK}_{i,j - 1}$ in the condition. Namely,
\begin{align*}
&\Pr\left[P_{i,j}(r_i, g_i) \land \calG^{\vK}_{i,j} \ \middle\vert \ \left(\left(\calF^{\vK, \vb}_{i + 1} \land \calG^{\vK}_{i + 1}\right) \lor \calA^{\vK, \vb}_{i + 1}\right) \land \event_{\leq i - 1} \land P_{i,j - 1}(r_i, g_i) \land \calG^{\vK}_{i, j - 1}\right] \\ = &\Pr\left[P_{i,j}(r_i, g_i) \land \calG^{\vK}_{i,j} \ \middle\vert \ \left(\left(\calF^{\vK, \vb}_{i + 1} \land \calG^{\vK}_{i + 1}\right) \lor \calA^{\vK, \vb}_{i + 1}\right) \land \event_{\leq i - 1} \land P_{i,j - 1}(r_i, g_i)\right]
\end{align*}
By definition, 
\begin{align*}
&\Pr\left[P_{i,j}(r_i, g_i)  \land \calG^{\vK}_{i,j} \ \middle\vert \  \left(\left(\calF^{\vK, \vb}_{i + 1} \land \calG^{\vK}_{i + 1}\right) \lor \calA^{\vK, \vb}_{i + 1}\right) \land \event_{\leq i - 1} \land P_{i,j - 1}(r_i, g_i)\right] \\ 
= & \Pr\left[g_i(c_j) = 1 \land r_i(c_j) = b_{i,j} \land \calG^{\vK}_{i,j} \ \middle\vert \ \left(\left(\calF^{\vK, \vb}_{i + 1} \land \calG^{\vK}_{i + 1}\right) \lor \calA^{\vK, \vb}_{i + 1}\right) \land \event_{\leq i - 1} \land P_{i,j - 1}(r_i, g_i)\right]
\end{align*}

The main difference with Lemma \ref{lem:single-level} is that $c_j$ may not be different from all other $c_{j'}$ ($j' < j$) now. However we know either $c_j$ is different from all other $c_{j'}$ or we have $\calG^{\vK}_{i,j} = 0$. 

Then we have
\begin{align*}
&\Pr\left[g_i(c_j) = 1 \land r_i(c_j) = b_{i,j}\land \calG^{\vK}_{i,j} \ \middle\vert \ \left(\left(\calF^{\vK, \vb}_{i + 1} \land \calG^{\vK}_{i + 1}\right) \lor \calA^{\vK, \vb}_{i + 1}\right) \land \event_{\leq i - 1} \land P_{i, j - 1}(r_i, g_i)\right] \\
\leq&\Pr\left[g_i(c_j) = 1 \land r_i(c_j) = b_{i,j}\land [\forall j' < j, c_{j'} \not= c_j] \ \middle\vert \ \left(\left(\calF^{\vK, \vb}_{i + 1} \land \calG^{\vK}_{i + 1}\right) \lor \calA^{\vK, \vb}_{i + 1}\right) \land \event_{\leq i - 1} \land P_{i, j - 1}(r_i, g_i)\right] \\
= &\Pr\left[g_i(c_j) = 1 \land r_i(c_j) = b_{i,j}\land[\forall j' < j, c_{j'} \not= c_j] \ \middle\vert \ P_{i, j - 1}(r_i, g_i)\right]\\
\leq &\frac{1}{2n}.
\end{align*}

The second last step follows from our assumption that $\left(\calF^{\vK, \vb}_{i + 1} \land \calG^{\vK}_{i + 1}\right) \lor \calA^{\vK, \vb}_{i + 1}$ is independent of $\event_{\leq i}$, which implies its joint event with $\event_{\leq i - 1}$ is independent of $r_i, g_i$. Since $P_{i, j - 1}(r_i, g_i)$ is merely a predicate of $r_i$ and $g_i$, we know such independence is still true conditioning on $P_{i, j - 1}(r_i, g_i)$. 

The last step follows from the fact that $j \leq \tau$ and $\tau$-wise independence of $r_i$ and $g_i$.
\end{proof}

\begin{lemma}\label{lem:single-level-whole-K}
Fix $\vK = \{\vk^1, \vk^2, \dots, \vk^t\}$ such that $\vk^j \in \Kshort$ for every $j \in [t]$ and $\vb \in \calB_{\vK}$. Suppose $\calA_{i+1}^{\vK, \vb}$ is an event such that $(\calF^{\vK, \vb}_{i + 1} \land \calG^{\vK}_{i + 1}) \lor \calA^{\vK, \vb}_{i + 1}$ is independent of $\br_{\leq i}, \bg_{\leq i}$. Let $r_{\leq i-1}, g_{\leq i-1} \in \supp(\br_{\leq i-1}, \bg_{\leq i-1})$.

There is an event $\calA^{\vK, \vb}_i$ such that 
\[
\Pr\left[(\calF^{\vK, \vb}_i \land \calG^{\vK}_i) \lor \calA^{\vK, \vb}_i \ \middle\vert \ \left(\left(\calF^{\vK, \vb}_{i + 1} \land \calG^{\vK}_{i + 1}\right) \lor \calA^{\vK, \vb}_{i + 1}\right) \land r_{\leq i-1} \land g_{\leq i-1}\right] = \frac{2^{-K_i}}{n^{K_i}}.
\]

Moreover, $\calA^{\vK, \vb}_i$ is true only when $\left(\calF^{\vK, \vb}_{i + 1} \land \calG^{\vK}_{i + 1}\right) \lor \calA^{\vK, \vb}_{i + 1}$ is true. 
\end{lemma}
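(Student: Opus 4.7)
The plan is to prove this lemma by telescoping $K_i$ applications of Lemma \ref{lem:single-level-K} to get an upper bound of $2^{-K_i}/n^{K_i}$ on the conditional probability of $\calF^{\vK,\vb}_i \wedge \calG^{\vK}_i$, and then defining $\calA^{\vK,\vb}_i$ as a measurable padding set that brings this probability up to exact equality.

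I first verify the hypothesis $j \le \tau$ of Lemma \ref{lem:single-level-K}. Since each $\vk^m \in \vK$ lies in $\Kshort$, it contributes at most $\tau/4$ many level-$i$ entries to the union $\bigcup_m P^{\vk^m}$, and because $|\vK|=t\le 4$ we get $K_i \le t\cdot\tau/4 \le \tau$. Writing $\mathcal{E}^*_{i+1} := (\calF^{\vK,\vb}_{i+1} \wedge \calG^{\vK}_{i+1}) \vee \calA^{\vK,\vb}_{i+1}$ and abbreviating $\event_{\le i-1} := [r_{\le i-1} \wedge g_{\le i-1}]$, Observation \ref{obs:F-and-P} combined with the identity $\calG^{\vK}_i = \calG^{\vK}_{i,K_i} \wedge \calG^{\vK}_{i+1}$ yields the set identity
$$(\calF^{\vK,\vb}_i \wedge \calG^{\vK}_i) \cap \event_{\le i-1} \;=\; (\calF^{\vK,\vb}_{i+1} \wedge \calG^{\vK}_{i+1}) \cap \event_{\le i-1} \cap P_i(\br_i,\bg_i) \cap \calG^{\vK}_{i,K_i}.$$
Conditioning on $\mathcal{E}^*_{i+1} \wedge \event_{\le i-1}$ and telescoping via $K_i$ applications of Lemma \ref{lem:single-level-K} then gives
$$\Pr\!\left[\calF^{\vK,\vb}_i \wedge \calG^{\vK}_i \,\middle|\, \mathcal{E}^*_{i+1} \wedge \event_{\le i-1}\right] \;\le\; \prod_{j=1}^{K_i}\Pr\!\left[P_{i,j} \wedge \calG^{\vK}_{i,j} \,\middle|\, \mathcal{E}^*_{i+1} \wedge \event_{\le i-1} \wedge P_{i,j-1} \wedge \calG^{\vK}_{i,j-1}\right] \;\le\; \frac{2^{-K_i}}{n^{K_i}}.$$

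Next I define $\calA^{\vK,\vb}_i \subseteq \mathcal{E}^*_{i+1} \setminus (\calF^{\vK,\vb}_i \wedge \calG^{\vK}_i)$ so that its fiber-wise conditional probability given $\mathcal{E}^*_{i+1} \wedge \event_{\le i-1}$ equals the deficit $\frac{2^{-K_i}}{n^{K_i}} - p(r_{\le i-1},g_{\le i-1})$, where $p(\cdot,\cdot)$ denotes the conditional probability on the left-hand side of the chain above. The deficit is a nonnegative measurable function bounded by $1$, and by the $\tau$-wise independence of $\br_i,\bg_i$ together with the independent sampling of $(\br_i(\star_t),\bg_i(\star_t))$ at the $\star_t$ arguments, the probability space contains an abundant supply of randomness that is jointly independent of $(\br_{\le i-1},\bg_{\le i-1})$ and of the event $\calF^{\vK,\vb}_i \wedge \calG^{\vK}_i$; this lets us carve out such a measurable $\calA^{\vK,\vb}_i$ by a standard selection. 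The containment $\calA^{\vK,\vb}_i \subseteq \mathcal{E}^*_{i+1}$ is built into the construction, which settles the ``moreover'' clause.

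The main obstacle will be packaging this fiber-by-fiber padding into a single measurable event whose conditional probability equals the same constant $2^{-K_i}/n^{K_i}$ in every fiber $(r_{\le i-1},g_{\le i-1})$, while keeping $\calA^{\vK,\vb}_i$ disjoint from $\calF^{\vK,\vb}_i \wedge \calG^{\vK}_i$ and contained in $\mathcal{E}^*_{i+1}$. Once this is in place, the fiber-wise exactness yields the implicit but crucial consequence that $\mathcal{E}^*_i := (\calF^{\vK,\vb}_i \wedge \calG^{\vK}_i) \vee \calA^{\vK,\vb}_i$ is independent of $(\br_{\le i-1},\bg_{\le i-1})$---by summing the fiber-wise identity and using the independence of $\mathcal{E}^*_{i+1}$ from $(\br_{\le i},\bg_{\le i})$ that is assumed in the hypothesis---which is precisely the inductive property needed to apply the lemma again at level $i-1$.
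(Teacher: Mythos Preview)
Your proposal is correct and follows essentially the same approach as the paper: both proofs use Observation~\ref{obs:F-and-P} to reduce to $P_i \wedge \calG^{\vK}_{i,K_i}$, telescope via $K_i$ applications of Lemma~\ref{lem:single-level-K} to obtain the upper bound $2^{-K_i}/n^{K_i}$, and then construct $\calA^{\vK,\vb}_i$ fiber-by-fiber over $(r_{\le i-1},g_{\le i-1})$ as a padding event inside $\mathcal{E}^*_{i+1}$ to achieve exact equality. Your explicit verification that $K_i \le t\cdot\tau/4 \le \tau$ (using $t\le 4$, which is the standing assumption from Lemma~\ref{lem:single-level-K}) and your remark that the resulting fiber-wise constancy yields the independence needed for the next induction step are both worth keeping.
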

\begin{proof}

Let $\event_{\leq i -1}$ be the event $r_{\leq i-1} \land g_{\leq i-1}$. Since by \Cref{obs:F-and-P}, $$\calF^{\vK, \vb}_i \land \event_{\leq i - 1} = \calF^{\vK, \vb}_{i + 1} \land \event_{\leq i - 1} \land P_i(r_i, g_i)$$
We know that 

\begin{align*}
&\Pr\left[\calF^{\vK, \vb}_i \land \calG^{\vK}_i \ \middle\vert \ \left(\left(\calF^{\vK, \vb}_{i + 1} \land \calG^{\vK}_{i + 1}\right) \lor \calA^{\vK, \vb}_{i + 1}\right) \land \event_{\leq i - 1}\right] \\
= &\Pr\left[\calF^{\vK, \vb}_{i + 1} \land P^{\vb, \vc}_i \land \calG^{\vK}_i \ \middle\vert \ \left(\left(\calF^{\vK, \vb}_{i + 1} \land \calG^{\vK}_{i + 1}\right) \lor \calA^{\vK, \vb}_{i + 1}\right) \land \event_{\leq i - 1}\right] \\
\leq &\Pr\left[P^{\vb, \vc}_i \land \calG^{\vK}_i \ \middle\vert \ \left(\left(\calF^{\vK, \vb}_{i + 1} \land \calG^{\vK}_{i + 1}\right) \lor \calA^{\vK, \vb}_{i + 1}\right) \land \event_{\leq i - 1}\right]
\end{align*}

 By Lemma~\ref{lem:single-level-K}, we have
\begin{align*}
&\Pr\left[\calF^{\vK, \vb}_i \land \calG^{\vK}_i \ \middle\vert \ \left(\left(\calF^{\vK, \vb}_{i + 1} \land \calG^{\vK}_{i + 1}\right) \lor \calA^{\vK, \vb}_{i + 1}\right) \land \event_{\leq i - 1}\right] \\
\leq &\Pr\left[P^{\vb, \vc}_i \land \calG^{\vK}_i \ \middle\vert \ \left(\left(\calF^{\vK, \vb}_{i + 1} \land \calG^{\vK}_{i + 1}\right) \lor \calA^{\vK, \vb}_{i + 1}\right) \land \event_{\leq i - 1}\right] \\
=&\prod_{j=1}^{k_i} \Pr\left[P^{\vb, \vc}_{i,j} \land \calG^{\vK}_{i,j} \ \middle\vert \ \left(\left(\calF^{\vK, \vb}_{i + 1} \land \calG^{\vK}_{i + 1}\right) \lor \calA^{\vK, \vb}_{i + 1}\right) \land \event_{\leq i - 1} \land P^{\vb, \vc}_{i, j - 1} \land \calG^{\vK}_{i, j - 1}\right] \\
\leq&\frac{2^{-K_i}}{n^{K_i}}.
\end{align*}

For each $r_{\leq i - 1}, g_{\leq i - 1} \in \supp(\br_{\leq i - 1}, \bg_{\leq i - 1})$, we choose an arbitrary event $$\calA^{\vK, \vb, r_{\leq i}, g_{\leq i - 1}}_{i} \subset \left(\left(\calF^{\vK, \vb}_{i + 1} \land \calG^{\vK}_{i + 1}\right) \lor \calA^{\vK, \vb}_{i + 1}\right) \land \event_{\leq i - 1}$$ to increase the probability and make 

$$\Pr\left[\left(\calF^{\vK, \vb}_i \land \calG^{\vK}_i\right) \lor \calA^{\vK, \vb, r_{\leq i - 1}, g_{\leq i - 1}}_{i} \ \middle\vert \ \left(\left(\calF^{\vK, \vb}_{i + 1} \land \calG^{\vK}_{i + 1}\right) \lor \calA^{\vK, \vb}_{i + 1}\right) \land \event_{\leq i - 1}\right] = \frac{2^{-K_i}}{n^{K_i}}$$

Then we take the disjoint union of them and let

$$\calA^{\vK, \vb}_{i} = \bigsqcup_{r_{\leq i - 1}, g_{\leq i - 1}} \calA^{\vK, \vb, r_{\leq i - 1}, g_{\leq i - 1}}_{i}$$

Therefore for all $r_{\leq i - 1}, g_{\leq i - 1} \in \supp(\br_{\leq i - 1}, \bg_{\leq i - 1})$, 
\[
\Pr\left[\left(\calF^{\vK, \vb}_i \land \calG^{\vK}_i\right) \lor \calA^{\vK, \vb}_i \ \middle\vert \ \left(\left(\calF^{\vK, \vb}_{i + 1} \land \calG^{\vK}_{i + 1}\right) \lor \calA^{\vK, \vb}_{i + 1}\right) \land r_{\leq i-1} \land g_{\leq i-1}\right] = \frac{2^{-K_i}}{n^{K_i}}.
\]

At the same time, $\calA^{\vK, \vb}_{i} \subset (\calF^{\vK, \vb}_{i + 1} \land \calG^{\vK}_{i + 1}) \lor \calA^{\vK, \vb}_{i + 1}$. 
\end{proof}
The last piece is the lemma extending Lemma \ref{lem:induction}. We need the following proposition first.

\begin{prop}\label{prop:cond-fact}
    Let $\calX,\calY,\calZ$ be three events. We have
    \[
    \Pr[\calX|\calZ] = \Pr[\calX|\calY \land \calZ] \cdot \Pr[\calY|\calZ] + \Pr[\calX|\neg\calY \land \calZ] \cdot \Pr[\neg\calY|\calZ].
    \]
    
    In particular, when $\calX$ is a subset event of $\calY$, we have $\Pr[\calX|\neg\calY \land \calZ] = 0$ and hence
    \[
    \Pr[\calX|\calZ] = \Pr[\calX|\calY \land \calZ] \cdot \Pr[\calY|\calZ].
    \]
\end{prop}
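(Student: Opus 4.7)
The plan is to apply the law of total probability, partitioning the event $\calX$ according to whether $\calY$ holds. First I would observe that $\calX \cap \calZ$ decomposes as the disjoint union
\[
(\calX \cap \calY \cap \calZ) \sqcup (\calX \cap \neg\calY \cap \calZ),
\]
so that $\Pr[\calX \cap \calZ] = \Pr[\calX \cap \calY \cap \calZ] + \Pr[\calX \cap \neg\calY \cap \calZ]$. Dividing by $\Pr[\calZ]$ (which we may assume is positive, since otherwise all the conditional probabilities in the statement are undefined and there is nothing to prove) yields
\[
\Pr[\calX \mid \calZ] = \frac{\Pr[\calX \cap \calY \cap \calZ]}{\Pr[\calZ]} + \frac{\Pr[\calX \cap \neg\calY \cap \calZ]}{\Pr[\calZ]}.
\]

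Next I would rewrite each of the two fractions as a product of conditional probabilities by inserting an intermediate factor. For the first term, multiply and divide by $\Pr[\calY \cap \calZ]$ to get $\Pr[\calX \mid \calY \land \calZ] \cdot \Pr[\calY \mid \calZ]$; for the second term, do the analogous thing with $\Pr[\neg\calY \cap \calZ]$ to obtain $\Pr[\calX \mid \neg\calY \land \calZ] \cdot \Pr[\neg\calY \mid \calZ]$. If either $\Pr[\calY \cap \calZ]$ or $\Pr[\neg\calY \cap \calZ]$ is zero, the corresponding term in the numerator is also zero and may be dropped (with the usual convention that the undefined conditional probability is multiplied by $0$), so the identity still holds. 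This gives exactly the first claimed equation.

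For the ``in particular'' part, I would simply note that the hypothesis $\calX \subseteq \calY$ implies $\calX \cap \neg\calY = \emptyset$, hence $\Pr[\calX \cap \neg\calY \land \calZ] = 0$, which gives $\Pr[\calX \mid \neg\calY \land \calZ] = 0$ whenever this conditional probability is defined. Substituting into the first equation makes the second term vanish and produces the simplified identity.

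There is no real obstacle here; the only subtlety is bookkeeping around degenerate cases where $\Pr[\calY \cap \calZ]$ or $\Pr[\neg\calY \cap \calZ]$ vanishes, which is handled by the standard convention above.
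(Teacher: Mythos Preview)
Your argument is correct and is exactly the standard law-of-total-probability derivation one would expect. The paper in fact states this proposition without proof (treating it as an elementary fact), so there is no ``paper's own proof'' to compare against; your write-up, including the careful handling of the degenerate cases, is entirely appropriate.
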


\begin{lemma}\label{induction-K}
Fix $\vK = \{\vk^1, \vk^2, \dots, \vk^t\} (t \leq 4)$ such that $\vk^j \in \Kshort$ for every $j \in [t]$, and fix $\vb \in \calB_{\vK}$. For all $i  \in [\ell]$, let $r_{\leq i-1}, g_{\leq i-1} \in \supp(\br_{\leq i-1}, \bg_{\leq i-1})$. Then there is a sequence of events $\left\{\calA^{\vK, \vb}_i\right\}_{i \in [\ell]}$ such that:
\[
\Pr\left[\left(\calF_i^{\vK, \vb} \land \calG^{\vK}_i\right) \lor \calA^{\vK, \vb}_i \ \middle\vert \ r_{\leq i - 1} \land g_{\leq i - 1}\right] = \frac{2^{-\sum_{j = i}^\ell K_j}}{n^{\sum_{j = i}^\ell K_j}}, \quad \forall i \in [\ell]
\]

In particular,
\[
\Pr\left[\calF_1^{\vK, \vb} \land \calG^{\vK}_1 \right] \leq \frac{2^{-\sum_{j = 1}^\ell K_j}}{n^{\sum_{j = 1}^\ell K_j}}.
\]
\end{lemma}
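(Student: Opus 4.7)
The plan is a downward induction on $i$ from $\ell+1$ to $1$, chaining together the single-level statements of \Cref{lem:single-level-whole-K}. The key observation is that the conclusion of the target lemma --- that the probability equals a fixed constant independent of the conditioning $r_{\leq i-1} \land g_{\leq i-1}$ --- is precisely the independence hypothesis required to invoke \Cref{lem:single-level-whole-K} at the next level down. This mirrors the structure of the proof of \Cref{lem:induction}, with the extra $\calG^{\vK}_i$ factors and the auxiliary ``padding'' events $\calA^{\vK,\vb}_i$ simply carried along.

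For the base case $i = \ell+1$, set $\calA^{\vK,\vb}_{\ell+1} = \emptyset$; since $\calF_{\ell+1}^{\vK,\vb}$ and $\calG^{\vK}_{\ell+1}$ are vacuously true, the probability is $1$, matching $2^{-0}/n^{0}$. For the inductive step, assume the claim holds for $i+1$. Because the stated probability is constant in every fixing of $(r_{\leq i}, g_{\leq i})$, the event $(\calF_{i+1}^{\vK,\vb}\land\calG^{\vK}_{i+1}) \lor \calA^{\vK,\vb}_{i+1}$ is independent of the joint variable $(\br_{\leq i},\bg_{\leq i})$, so \Cref{lem:single-level-whole-K} produces an event $\calA^{\vK,\vb}_i$ with
\[
\Pr\!\left[(\calF_i^{\vK,\vb}\land\calG^{\vK}_i)\lor\calA^{\vK,\vb}_i \;\middle\vert\; ((\calF_{i+1}^{\vK,\vb}\land\calG^{\vK}_{i+1})\lor\calA^{\vK,\vb}_{i+1})\land r_{\leq i-1}\land g_{\leq i-1}\right] = \frac{2^{-K_i}}{n^{K_i}}.
\]

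The ``moreover'' clause of \Cref{lem:single-level-whole-K} gives $\calA^{\vK,\vb}_i \subseteq (\calF_{i+1}^{\vK,\vb}\land\calG^{\vK}_{i+1})\lor\calA^{\vK,\vb}_{i+1}$, and $\calF_i^{\vK,\vb}\land\calG^{\vK}_i$ trivially implies $\calF_{i+1}^{\vK,\vb}\land\calG^{\vK}_{i+1}$, making $(\calF_i^{\vK,\vb}\land\calG^{\vK}_i)\lor\calA^{\vK,\vb}_i$ a subevent of $(\calF_{i+1}^{\vK,\vb}\land\calG^{\vK}_{i+1})\lor\calA^{\vK,\vb}_{i+1}$. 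Applying \Cref{prop:cond-fact} factors $\Pr[\cdots \mid r_{\leq i-1}\land g_{\leq i-1}]$ as the product of the displayed $\frac{2^{-K_i}}{n^{K_i}}$ and the term $\Pr[(\calF_{i+1}^{\vK,\vb}\land\calG^{\vK}_{i+1})\lor\calA^{\vK,\vb}_{i+1}\mid r_{\leq i-1}\land g_{\leq i-1}]$. The latter equals $\frac{2^{-\sum_{j=i+1}^\ell K_j}}{n^{\sum_{j=i+1}^\ell K_j}}$ since, by the inductive hypothesis, this conditional probability is that same constant for every fixing of $(r_i,g_i)$, and averaging over $(\br_i,\bg_i)$ preserves the value. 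Multiplying yields $\frac{2^{-\sum_{j=i}^\ell K_j}}{n^{\sum_{j=i}^\ell K_j}}$, closing the induction.

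The ``in particular'' statement follows by taking $i=1$ (so there is no conditioning) and using $\Pr[\calF_1^{\vK,\vb}\land\calG^{\vK}_1] \le \Pr[(\calF_1^{\vK,\vb}\land\calG^{\vK}_1)\lor\calA^{\vK,\vb}_1]$. I don't anticipate any real obstacle: all the delicate single-level work and the construction of the padding events has already been absorbed into \Cref{lem:single-level-whole-K}, and the only subtlety here is recognising that the constant-probability formulation (as opposed to a mere upper bound) is precisely the independence property needed to bootstrap the induction one level at a time.
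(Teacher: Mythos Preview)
Your proposal is correct and follows essentially the same approach as the paper's proof: downward induction from $i=\ell+1$, using the constant-probability formulation to verify the independence hypothesis of \Cref{lem:single-level-whole-K}, factoring via \Cref{prop:cond-fact} using the subset containments, and averaging the inductive hypothesis over $(\br_i,\bg_i)$ to evaluate the second factor.
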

\begin{proof}
We prove this by induction. The base case is when $i = \ell + 1$. $\calF^{\vK, \vb}_i$ and $\calG^{\vK}_i$ are always true by definition. Therefore $\Pr\left[\calF_i^{\vK, \vb} \land \calG^{\vK}_i \ \middle\vert \ \event_{i - 1}\right] = 1$. The event $\calA^{\vK, \vb}_i$ is set to be an empty event.

Otherwise, suppose the induction hypothesis holds for $i + 1$. Note this implies for all $r_{\leq i}, g_{\leq i} \in \supp(\br_{\leq i}, \bg_{\leq i})$, it holds that $\Pr\left[\left(\calF_{i + 1}^{\vK, \vb} \land \calG^{\vK}_{i + 1}\right) \lor \calA^{\vK, \vb}_{i + 1}  \ \middle\vert \ r_{\leq i} \land g_{\leq i}\right] = \Pr\left[\left(\calF_{i + 1}^{\vK, \vb} \land \calG^{\vK}_{i + 1}\right) \lor \calA^{\vK, \vb}_{i + 1} \right]$. This shows that $(\calF^{\vK, \vb}_{i + 1} \land \calG^{\vK}_{i + 1}) \lor \calA^{\vK, \vb}_{i + 1}$ is independent of $\br_{\leq i}, \bg_{\leq i}$, meaning that it satisfies the premise of Lemma~\ref{lem:single-level-whole-K}. 

Let $\event_{\leq i-1}$ be the event that $\br_{\leq i - 1} = r_{\leq i-1} \land \bg_{\leq i - 1} = g_{\leq i-1}$. We have
\begin{align*}
& \Pr\left[\left(\calF_i^{\vK, \vb} \land \calG^{\vK}_i\right) \lor \calA^{\vK, \vb}_i \ \middle\vert \ \event_{i - 1}\right] \\
= &\Pr\left[\left(\calF_i^{\vK, \vb} \land \calG^{\vK}_i\right) \lor \calA^{\vK, \vb}_i\ \middle\vert \ \left(\left(\calF_{i + 1}^{\vK, \vb} \land \calG^{\vK}_{i + 1}\right) \lor \calA^{\vK, \vb}_{i + 1}\right) \land \event_{i - 1}\right] \Pr\left[\left(\calF_{i + 1}^{\vK, \vb} \land \calG^{\vK}_{i + 1}\right) \lor \calA^{\vK, \vb}_{i + 1} \ \middle\vert \ \event_{i - 1}\right].
\end{align*}

The last equality follows from Proposition~\ref{prop:cond-fact}. To check the premise of Proposition~\ref{prop:cond-fact}, we need to prove that 
$$\Pr\left[(\calF_i^{\vK, \vb} \land \calG^{\vK}_i) \lor \calA^{\vK, \vb}_i \ \middle\vert \ \neg \left(\left(\calF_{i + 1}^{\vK, \vb} \land \calG^{\vK}_{i + 1}\right) \lor \calA^{\vK, \vb}_{i + 1}\right) \land \event_{i - 1}\right] = 0$$

This follows from the fact that $\calF_i^{\vK, \vb} \land \calG^{\vK}_i \subset \calF_{i + 1}^{\vK, \vb} \land \calG^{\vK}_{i + 1}$ and $\calA^{\vK, \vb}_i \subset (\calF_{i + 1}^{\vK, \vb} \land \calG^{\vK}_{i + 1}) \lor \calA^{\vK, \vb}_{i + 1}$.

Then from induction hypothesis, we have
\begin{align*}
\Pr\left[\left(\calF_{i + 1}^{\vK, \vb} \land \calG^{\vK}_{i + 1}\right) \lor \calA^{\vK, \vb}_{i + 1}  \ \middle\vert \ \event_{i - 1}\right] &= \E_{(r_i, g_i) \getsR (\br_i,\bg_i)}\left[ \Pr\left[\left(\calF_{i + 1}^{\vK, \vb} \land \calG^{\vK}_{i + 1}\right) \lor \calA^{\vK, \vb}_{i + 1} \ \middle\vert \ r_{\leq i} \land g_{\leq i}\right] \ \middle\vert \ \event_{i - 1} \right] \\
&= \frac{2^{-\sum_{j=i + 1}^\ell K_j}}{n^{\sum_{j=i + 1}^\ell K_j}}.
\end{align*}

From Lemma~\ref{lem:single-level-whole-K},
\[
\Pr\left[\left(\calF_i^{\vK, \vb} \land \calG^{\vK}_i\right) \lor \calA^{\vK, \vb}_i \ \middle\vert \ \left(\left(\calF_{i + 1}^{\vK, \vb} \land \calG^{\vK}_{i + 1}\right) \lor \calA^{\vK, \vb}_{i + 1}\right) \land \event_{i - 1}\right] = \frac{2^{-K_i}}{n^{K_i}}.
\]

Putting everything together,
\[
\Pr\left[\left(\calF_i^{\vK, \vb} \land \calG^{\vK}_i\right) \lor \calA^{\vK, \vb}_i \ \middle\vert \ \event_{i - 1}\right] = \frac{2^{-K_i}}{n^{K_i}} \cdot \frac{2^{-\sum_{j=i + 1}^\ell K_j}}{n^{\sum_{j=i + 1}^\ell K_j}} = \frac{2^{-\sum_{j=i}^\ell K_j}}{n^{\sum_{j=i}^\ell K_j}}. \qedhere
\]
\end{proof}

Since $K_i$ only counts the number of level $i$ nodes in the union of $P^{\vk^1},P^{\vk^2}, \dots, P^{\vk^t}$, it maybe smaller than $\vk^1_i + \vk^2_i + \dots + \vk_i$. Therefore, we need a lemma to account for that. 

\begin{lemma} \label{lem:count-path}
We have
\[
2^{t \ell} \le \sum_{\substack{(\vk^1, \vk^2, \dots, \vk^t) \in (\N^\ell)^t\\ K = \{\vk^1, \vk^2, \dots, \vk^t\}}}\prod_{i = 1}^\ell 2^{-K_i} \leq t! \cdot 2^{t(\ell + 1)}
\]
\end{lemma}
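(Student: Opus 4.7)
The plan is to prove the two bounds separately.

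For the lower bound, I will use the simple inequality $K_i \le \sum_{j=1}^t k^j_i$ (the union of level-$i$ ancestors is at most the total counted with multiplicity), which gives $\prod_i 2^{-K_i} \ge \prod_j 2^{-|\vk^j|_1}$. Summing over ordered tuples, the sum factorizes:
\[
\sum_{(\vk^1,\dots,\vk^t)} \prod_i 2^{-K_i} \ \ge\ \Big(\sum_{\vk \in \N^\ell} 2^{-|\vk|_1}\Big)^t \ =\ (2^\ell)^t \ =\ 2^{t\ell},
\]
since $\sum_{\vk \in \N^\ell} 2^{-|\vk|_1} = \prod_{i=1}^\ell \sum_{k_i \ge 0} 2^{-k_i} = 2^\ell$.

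For the upper bound, I will induct on $\ell$, partitioning the tuple according to the top-level coordinate $k^j_\ell$. Let $f_t(\ell)$ denote the sum in the lemma, set $S_v = \{j : k^j_\ell = v\}$, and $t_v = |S_v|$. The key structural observation is that at level $\ell$, the ancestors of $\vk^j \in S_v$ are exactly $\{(0,\dots,0,s) : 1 \le s \le v\}$, so these are shared maximally within a group and nested across groups, yielding $K_\ell = \max\{v : t_v > 0\}$. For $i < \ell$, the level-$i$ ancestors from different groups are disjoint (their level-$\ell$ coordinates differ), so $K_i = \sum_v K_i^{(v)}$, where $K_i^{(v)}$ is computed from the level-$<\ell$ projections of $S_v$ (an ordered tuple in $(\N^{\ell-1})^{t_v}$). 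This gives the recursion
\[
f_t(\ell) \ =\ \sum_{(t_v) : \sum_v t_v = t} \binom{t}{t_0, t_1, \dots} \cdot 2^{-\max\{v : t_v > 0\}} \prod_v f_{t_v}(\ell - 1).
\]
Plugging in the inductive hypothesis $f_s(\ell - 1) \le s! \cdot 2^{s\ell}$ and using $\binom{t}{t_0, t_1, \dots} \prod_v t_v! = t!$, the product $t_v!$ factors cancel, leaving
\[
f_t(\ell) \ \le\ t! \cdot 2^{t\ell} \cdot \sum_{(t_v) : \sum_v t_v = t} 2^{-\max\{v : t_v > 0\}}.
\]

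To finish, I evaluate the combinatorial sum by stratifying on $v^* := \max\{v : t_v > 0\}$. The number of compositions with $t_v = 0$ for $v > v^*$ and $t_{v^*} \ge 1$ is $\binom{t-1+v^*}{v^*}$ (stars-and-bars after substituting $t'_{v^*} = t_{v^*} - 1$), and the identity $\sum_{n \ge 0} \binom{t-1+n}{n} x^n = (1-x)^{-t}$ at $x = 1/2$ yields $\sum_{v^* \ge 0} 2^{-v^*} \binom{t-1+v^*}{v^*} = 2^t$. Therefore $f_t(\ell) \le t! \cdot 2^{t(\ell+1)}$. The base case $\ell = 0$ holds since $f_t(0) = 1 \le t! \cdot 2^t$.

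The main obstacle is setting up the level-$\ell$ decomposition carefully enough to ensure that (i) level-$\ell$ ancestors within a fixed group coincide maximally, and (ii) ancestors at every level from different groups are genuinely disjoint (because their tails, fixed to include the distinct level-$\ell$ coordinates, prevent any collision). Once these are established, the induction telescopes cleanly and the final generating-function evaluation gives exactly the factor $2^t$ needed to land on the claimed bound $t! \cdot 2^{t(\ell+1)}$.
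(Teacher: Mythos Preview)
Your proof is correct but takes a genuinely different route from the paper.

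\textbf{Lower bound.} Your argument is identical to the paper's.

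\textbf{Upper bound.} The paper inducts on $t$ (the number of paths): it peels off one path $\vk^t$, writes $\sum_i K_i = \sum_i K'_i + |\vk^t|_1 - |\vk^t \cap \vK'|$, and controls the overlap by conditioning on $\vk^t$ being the leftmost path (so its shared prefix with $\vK'$ is pinned down); a union bound over which of the $t$ paths is leftmost supplies the factor $t$, and a geometric sum over the overlap length supplies the factor $2^{\ell+1}$. You instead induct on $\ell$ (the number of levels): you group the tuple by the top coordinate $k^j_\ell$, observe that level-$\ell$ ancestors are nested across groups (giving $K_\ell = \max\{v:t_v>0\}$) while lower-level ancestors are disjoint across groups (giving $K_i = \sum_v K_i^{(v)}$), and obtain a clean multiplicative recursion. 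After plugging in the hypothesis and using $\binom{t}{t_0,t_1,\dots}\prod_v t_v! = t!$, you are left with the combinatorial sum $\sum_{(t_v)} 2^{-\max\{v:t_v>0\}}$, which you evaluate exactly to $2^t$ via the negative-binomial identity $\sum_{n\ge 0}\binom{t-1+n}{n}2^{-n}=2^t$.

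Both arguments are short; yours is arguably more systematic (the recursion is exact, not just an inequality, until the final inductive plug-in) and has the pleasant feature that the bound $t!\,2^{t(\ell+1)}$ drops out with no slack at the inductive step. The paper's approach is a bit more ad hoc but avoids the generating-function computation. One small point worth making explicit in your write-up: the recursion also uses $f_0(\ell-1)=1$ for empty groups, and the inductive bound $0!\cdot 2^{0\cdot\ell}=1$ covers this.
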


\begin{proof}

We first show the first inequality. Note that for $K = \{\vk^1, \vk^2, \dots, \vk^t\}$, we have
\[
\sum_{i=1}^{\ell} K_i \le \sum_{i=1}^{\ell} \sum_{j=1}^{t} k^{j}_i.
\]

Hence we have
\[
\sum_{\substack{(\vk^1, \vk^2, \dots, \vk^t) \in (\N^\ell)^t\\ K = \{\vk^1, \vk^2, \dots, \vk^t\}}}\prod_{i = 1}^\ell 2^{-K_i} \ge \sum_{(\vk^1, \vk^2, \dots, \vk^t) \in (\N^\ell)^t} \prod_{j=1}^{t} \prod_{i=1}^{\ell} 2^{-k^j_i} \ge 2^{t\ell}.
\]

Next we show the second inequality, and we will prove it by induction. For the base case $t = 1$, the expression simplifies to
\[
\sum_{\vk \in \N^\ell} \prod_{i=1}^{\ell} 2^{-k_i} = 2^{\ell},
\]
which proves the base case.

Then suppose the statement holds for $t - 1$. And our goal is to upper bound
\[
\sum_{\substack{(\vk^1, \vk^2, \dots, \vk^t) \in (\N^\ell)^t\\ K = \{\vk^1, \vk^2, \dots, \vk^t\}}}\prod_{i = 1}^\ell 2^{-K_i}.
\]
Let $\vK' = \{\vk^1, \dots, \vk^{t-1} \}$ be the union of first $t - 1$ indices, and let $\vK = \{\vk^1, \vk^2, \dots, \vk^t\}$. We have that
\[
\sum_{i=1}^{\ell} K_i = \sum_{i=1}^{\ell} K'_i + \sum_{i=1}^{\ell} k^t_i - |\vk^t \cap \vK'|,
\]
where $|\vk^t \cap \vK'|$ denote the common length of path $\vk^t$ and the union $\vK'$.

Now we fix $\vK'$ and try to calculate its contribution together with all possible $\vk^t \in \N^\ell$ such that $\vk^t$ is (one of) the left-most vertices among $\vk^1, \dots, \vk^{t-1},\vk^{t}$.

Suppose $|\vk^t \cap \vK'| = j$, then we know the $j$-length prefix of $\vk^t$ has at most one possibility (the left-most depth-$j$ node on the sub-tree formed by $\vK'$, if depth-$j$ nodes exist in $\vK'$). And we can bound the contribution of this case by
\[
2^{\ell - j} \cdot \prod_{i = 1}^\ell 2^{-K'_i}.
\]
By a union bound, the contribution of this $\vK'$ together all possible $\vk^t \in \N^\ell$ such that $\vk^t$ is the left-most vertex can be bounded by
\[
\sum_{j=0}^{\ell} 2^{\ell - j} \cdot \prod_{i = 1}^\ell 2^{-K'_i} \le 2^{\ell + 1} \cdot \prod_{i = 1}^\ell 2^{-K'_i}.
\]

Summing up for all possible $\vK'$, we can bound the contribution when $\vk^t$ is the left-most vertex by
\[
\sum_{\substack{\vk^1, \vk^2, \dots, \vk^{t-1} \\ \vK' = \{\vk^1, \vk^2, \dots, \vk^{t-1}\} }} \prod_{i = 1}^\ell 2^{-K'_i} \cdot 2^{\ell + 1}.
\]
By the induction hypothesis, this can be further bounded by
\[
(t-1)! \cdot 2^{t \cdot (\ell + 1)}.
\]

By symmetry, for each $i \in [t]$, the contribution when $\vk^i$ is (one of) the left-most vertex can also be bounded by the above quantity. The lemma then follows from a union bound over the left-most vertex.
\end{proof}

Finally, we can obtain the desired upper bound, and prove Lemma~\ref{lem:bad-case}.

\begin{reminder}{Lemma~\ref{lem:bad-case}}
Let $C_v = \#\{i \mid a_i = a_v\}$ be the number of occurrences of $a_v$ in the input $a$. It holds that
\[
    \E_{\bw, \bT} \left[ \#\left\{\vk \in \Kshort \ \middle\vert \ \nex(\mu^{\vk}) = v \land \exists \vk' \not= \vk'' \in \N^{\ell}, a_{\nex(\mu^{\vk'})} = a_{\nex(\mu^{\vk''})}\right\} \right] \leq 48\frac{8^\ell F_2(a)}{n^3} + 16 \frac{4^\ell C_v}{n^2} + \frac{1}{n^3}.
	\]
\end{reminder}
\begin{proof}

We apply Lemma \ref{lem:structure} with $\phi = \vidx^{\vk} + 1$, $\alpha = \vidx^{\vk'} + 1$, and $\beta = \vidx^{\vk''} + 1$. Note by \Cref{obs:w-nex}, we know that $w_{\phi} = \nex(\vidx^{\vk})$. Similarly, $w_{\alpha} = \nex(\vidx^{\vk'})$ and $w_{\beta} = \nex(\vidx^{\vk''})$. 

\begin{align*}
& \E_{\bw, \bT} \left[ \#\left\{\vk \in \Kshort \ \middle\vert \ \nex(\mu^{\vk}) = v \land \exists \vk' \not= \vk'' \in \N^{\ell}, a_{\nex(\mu^{\vk'})} = a_{\nex(\mu^{\vk''})}\right\} \right]  \\
= &\E_{\bw, \bT} \Big[ \#\Big\{\vk^1 \in \Kshort \ \Big\vert \ \nex(\mu^{\vk^1}) = v \land \exists \vk^2 \not= \vk^3 \in \N^{\ell}, a_{\nex(\mu^{\vk^2})} = a_{\nex(\mu^{\vk^3})}, \\ 
&\hspace{2cm}\text{no collision between $p(\vk^1), p(\vk^2), p(\vk^3)$}\Big\} \Big] \\
\leq &\E_{\bw, \bT} \Big[ \#\Big\{\vk^1 \in \Kshort \ \Big\vert \ \nex(\mu^{\vk^1}) = v \land \exists \vk^2 \not= \vk^3 \in \Kshort, a_{\nex(\mu^{\vk^2})} = a_{\nex(\mu^{\vk^3})}, \\ 
&\hspace{2cm}\text{no collision between $p(\vk^1), p(\vk^2), p(\vk^3)$}\Big\} \Big] + \E[|\Kshort| \cdot \elong] \\
\leq &\sum_{\substack{(\vk^1, \vk^2,\vk^3) \in (\Kshort)^3\\ \text{$\vk^2 \ne \vk^3$}}} \Pr \left[ \calG^{\{\vk^1, \vk^2,\vk^3\}}_1 \land  \nex(\vidx^{\vk^1}) = v \land a_{\nex(\vidx^{\vk^2})} = a_{\nex(\vidx^{\vk^3})}\right] + \E[|\Kshort| \cdot \elong]. 
\end{align*}

First, by Lemma~\ref{lem:elong-small}, $\tau \geq 20\log n \log \log n$, and $\ell \leq \log n$, we have
\[
\E\left[|\Kshort| \cdot \elong\right] \le \tau^{\ell} \cdot n \ell / 2^{\tau/4} \leq n \ell \cdot 2^{\ell \log \tau - \tau/4} \leq n \log n \cdot 2^{2\log n \log \log n - 5 \log n \log \log n} \leq \frac{1}{n^3}
\]

Next we bound
\[
\sum_{(\vk^1, \vk^2,\vk^3) \in (\Kshort)^3 \text{s.t. $\vk^2 \ne \vk^3$}} \Pr \left[ \calG^{\{\vk^1, \vk^2,\vk^3\}}_1 \land  \nex(\vidx^{\vk^1}) = v \land a_{\nex(\vidx^{\vk^2})} = a_{\nex(\vidx^{\vk^3})}\right].
\]

There are two cases, first case is that $\vk^1 = \vk^2$ or $\vk^1 = \vk^3$. By symmetry, we only consider the case when $\vk^1 = \vk^3$ here. Let $\vK = \{\vk^1, \vk^2\}$. 

\paragraph{When $\vk^1 = \vk^3$.} By Lemma \ref{induction-K}, for any sequence $\vb \in \calB_{\vK}$, we have 
\[
\Pr\left[\calF_1^{\vK, \vb} \land \calG^{\vK}_1\right] = \frac{2^{-\sum_{j = 1}^\ell K_j}}{n^{\sum_{j = 1}^\ell K_j}}.
\]

Note there are $n^{\sum_{j = 1}^\ell K_j}$ many such sequence $\vb \in \calB_{\vK}$, and $n^{\sum_{j = 1}^\ell K_j - 2} \cdot C_v$ of them satisfy that $\nex(\vidx^{\vk^1}) = v$ and $a_{\nex(\vidx^{\vk^2})} = a_{\nex(\vidx^{\vk^3})} = a_{\nex(\vidx^{\vk^1})} = a_v$. 

We have
\begin{align*}
&\sum_{(\vk^1, \vk^2,\vk^3) \in (\Kshort)^3 \text{s.t. $\vk^2 \ne \vk^3 \land \vk^1 = \vk^3$}} \Pr \left[ \calG^{\{\vk^1, \vk^2,\vk^3\}}_1 \land  \nex(\vidx^{\vk^1}) = v \land a_{\nex(\vidx^{\vk^2})} = a_{\nex(\vidx^{\vk^3})}\right]\\
=& \sum_{(\vk^1, \vk^2) \in (\Kshort)^2 \text{s.t.~$\vk^1 \ne \vk^2$}} \Pr \left[ \calG^{ \{\vk^1, \vk^2\} }_1 \land \nex(\vidx^{\vk^1}) = v \land a_{\nex(\vidx^{\vk^2})} = a_v\right] \\
\leq& \sum_{\substack{(\vk^1, \vk^2) \in (\N^\ell)^2\\ K = \{\vk^1,\vk^2\}}} \frac{2^{-\sum_{j = 1}^\ell K_j}}{n^{\sum_{j = 1}^\ell K_j}} \cdot n^{\sum_{j = 1}^\ell K_j - 2} \cdot C_v \\ 
\leq& \frac{8C_v \cdot 4^{\ell} }{n^2} \tag{Lemma~\ref{lem:count-path}}.
\end{align*}

\paragraph{When $\vk^1, \vk^2, \vk^3$ are distinct.} Now we consider the other case when $\vk^1, \vk^2, \vk^3$ are distinct. Let $\vK’ = \{\vk^1, \vk^2, \vk^3\}$. 

Same as before, by Lemma \ref{induction-K}, for any sequence $\vb \in \calB_{\vK'}$, we have
\[
\Pr\left[\calF_1^{\vK', \vb} \land \calG^{\vK'}_1\right] = \frac{2^{-\sum_{j = 1}^\ell K'_j}}{n^{\sum_{j = 1}^\ell K'_j}}
\]

Note there are $n^{\sum_{j = 1}^\ell K_j}$ many such sequence $\vb \in \calB_{\vK'}$, and $n^{\sum_{j = 1}^\ell K_j - 3} \cdot F_2(a)$ of those satisfies that $\nex(\vidx^{\vk^1}) = v$ and $a_{\nex(\vidx^{\vk^2})} = a_{\nex(\vidx^{\vk^3})}$. 

Hence, we have
\begin{align*}
&\sum_{\substack{(\vk^1, \vk^2,\vk^3) \in (\Kshort)^3\\\text{s.t. $\vk^1,\vk^2,\vk^3$ are distinct}}} \Pr \left[ \calG^{\{\vk^1,\vk^2,\vk^3\}}_1 \land \nex(\vidx^{\vk^1}) = v \land a_{\nex(\vidx^{\vk^2})} = a_{\nex(\vidx^{\vk^3})}\right] \\
\leq &\sum_{\substack{(\vk^1, \vk^2,\vk^3) \in (\N^\ell)^3\\ K' = \{\vk^1,\vk^2,\vk^3\}}} \frac{2^{-\sum_{j = 1}^\ell K'_j}}{n^{\sum_{j = 1}^\ell K'_j}} \cdot n^{\sum_{j = 1}^\ell K'_j - 3} \cdot F_2(a)\\ 
\leq &\frac{48 \cdot 8^{\ell} F_2(a)}{n^3}. \tag{Lemma~\ref{lem:count-path}}
\end{align*}
Summing up everything proves the theorem.
\end{proof}

\subsection{Proof of Lemma~\ref{lem:elong-small}} \label{sec:cut-off}

\begin{reminder}{Lemma~\ref{lem:elong-small}}
	It holds that
	\[
	\Pr[\elong] \le \ell n/2^{\tau/4}.
	\]
\end{reminder}
\vspace{-1.5em}
\begin{proof}
	For every $i \in [\ell]$, we define event $\elong^i$ as
	\[
	\elong^i \coloneqq \left[ \text{$\exists k \in \N^{\ell}$ s.t. $ k_i > \tau/4$ and $\vidx^{\vk}$ exists} \right].
	\]
	Then we can see $\elong = \bigcup_{i=1}^{\ell} \elong^{i}$.
	
	In the following, we will show that for each $i \in [\ell]$, $\Pr[ \elong^i]$ is small. Now we fix $i \in [\ell]$, suppose there exists $\vk \in \N^{\ell}$ such that $\vidx^{\vk}$ exists and $k_i > \tau/4$. We are going to fix $(r_{\le i -1},g_{\le i -1}) \in \supp((\br_{\le i -1},\bg_{\le i -1}))$ and conditioning on the event $r_{\le i -1} \land g_{\le i -1}$. Moreover here we also fix $r_ \in \supp(\br_i)$.
	
	Now, $\vidx^{\vk}$ exists and $k_i > \tau/4$ in particular implies there exists a starting point $s_0 = w_{\vidx^{\vk}_{i,0}} \in [n]$ such that the walk $\walk(s_0,i)$ visits at least other $\tau/4$ level-$i$ nodes $\vidx^{\vk}_{i,1}, \vidx^{\vk}_{i,2} \dots, \vidx^{\vk}_{i,\tau/4}$. For these nodes, by \Cref{obs:determine-a-from-xi}, we know that $\a_i(\vidx^{\vk}_{i,j})$ is uniquely determined by $\x_i(\vidx^{\vk}_{i,1}), \x_i(\vidx^{\vk}_{i,2}), \dots, \x_i(\vidx^{\vk}_{i,j})$. On the other hand, each $\x_i(\vidx^{\vk}_{i,j}) = \last(\nex(\vidx^{\vk}_{i,j - 1}), i - 1)$ by \Cref{lem:existence} \ref{item:xi}. Here $\nex(\vidx^{\vk}_{i,j - 1}) = r_i(\a_i(\vidx^{\vk}_{i,j - 1}))$ (by \Cref{obs:w-nex}), and $\last(\cdot, i - 1)$ only depends on $r_{\leq i - 1}$ and $g_{\leq i - 1}$ (by \Cref{lem:only-depend}). Hence, by simple induction, each $\a_i(\vidx^{\vk}_{i,j})$ is independent of $\bg_i$. Moreover, by the definition of extended walk, they are all distinct. 
	
	Therefore for a fixed $s_0 \in [n]$, we have that 
	\[
	\Pr[\bg_{i}(\a_i(\mu^{\vk}_{i,j})) = 1 \text{ for all $j \in [\tau/4]$}] \le 2^{-\tau/4}.
	\]
	
	By a union bound over different $s_0 \in [n]$, we have that
	\[
	\Pr[\elong^i] \le n/2^{\tau/4}.
	\]
	The lemma follows from another union bound over $i \in [\ell]$.
\end{proof}


\section{The Case of Two Target Vertices}
\label{sec:two}

This section is devoted to proving~\cref{lem:hit-lower-bound}, which is restated below.

\begin{reminder}{\cref{lem:hit-lower-bound}}
Suppose $\ell = \log n - \frac{\log F_2(a)}{2} - 10$. For every $u,v\in [n]$ such that $u \not= v$ and $a_u = a_v$, we have 
\[
\Pr_{\bh \getsR \mathcal{H}_{\ell,m,n}, \bs \getsR [n]} [u, v \in f^*_{a, \bh}(\bs)] \ge \Omega\left (\frac{1}{F_2(a)}\right).
\]
\end{reminder}

We recall that the second frequency moment $F_2(a) = \sum_{i=1}^n \sum_{j=1}^n \mathbf{1}[a_i=a_j]$ (including the case where $i=j$). The main difficulty in extending the previous proof to the case of two vertices is handling the collisions between paths. Suppose we enumerate $\vk^1, \vk^2$ and compute $\Pr[\nex(\vidx^{\vk^1}) = u, \nex(\vidx^{\vk^2}) = v]$.
There may be collisions between paths $p(\vidx^{\vk^1})$ and $p(\vidx^{\vk^2})$, which cause the same problem we encountered in Section \ref{bad-case}.
However it is hard to exploit the structure of such two paths as we did in Section \ref{bad-case}, since now even estimating the total counts involves more than one path.
Note that we need a lower bound on the total counts, while our previous approach in~\cref{sec:one} that exploits the combinatorial structure (\ie, \cref{lem:structure}) only gives us an upper bound. Therefore we will take different approach. 

We will define a different walk $\bw^{\vk^1, \vk^2}$ (called a \emph{relaxed extended walk}) for each pair of $(\vk^1, \vk^2)$ separately, and bound the contribution of $(\vk^1,\vk^2)$ by analyzing this walk $\bw^{\vk^1,\vk^1}$. 
Roughly speaking, $\bw^{\vk^1, \vk^2}$ is obtained by adapting the extended walk $\bw$ so that whenever a collision happens between $p(\vk^1)$ and $p(\vk^2)$, we replace the later one with true randomness. 
In this way, before we visit any vertex twice, the walk $\bw^{\vk^1, \vk^2}$ behaves exactly as the original walk $f^*_{a,\bh}$. 
Therefore, we can sum up $\Pr\left[\bw^{\vk^1, \vk^2}_{\vidx^{\vk^1}} = u, \bw^{\vk^1, \vk^2}_{\vidx^{\vk^2}} = v\right]$, and subtract $\Pr\left[\bw^{\vk^1, \vk^2}_{\vidx^{\vk^1}} = u, \bw^{\vk^1, \vk^2}_{\vidx^{\vk^2}} = v \ \land \ \exists \alpha \not= \beta, a_{\bw^{\vk^1, \vk^2}_\alpha} = a_{\bw^{\vk^1, \vk^2}_\beta}\right]$. This naturally lower-bounds the contribution of $(\vk^1, \vk^2)$ to $\Pr[u,v \in f^*_{a,\bh}]$. Finally we conclude the proof by summing up the contribution over all $\vk^1, \vk^2$.

\subsection{The Relaxed Extended Walk $\extwalk^{\vk^1,\vk^2}_{\ell,m,n,a}$}

Now we define the relaxed extended walk $\extwalk^{\vk^1,\vk^2}_{\ell,m,n,a}$ for each $\vk^1, \vk^2 \in \N^\ell$ such that  $\vk^1 < \vk^2$.

\renewcommand{\index}{\mathsf{index}}

The main (and only) difference between $\extwalk^{\vk^1,\vk^2}_{\ell,m,n,a}$ and $\extwalk_{\ell,m,n,a}$ is how the nodes on $p^*(\vk^2)$ are handled. (Recall that the extension $p^*(\mu)$ of a node $\mu$ is defined in \Cref{def:extention}.)
For a node $\mu \in p^*(\vk^2)$ with $\level(\mu) = i$, $\extwalk^{\vk^1,\vk^2}_{\ell,m,n,a}$ replaces $\a_i(\mu)$ with $\star_*$ if it would otherwise become a collision with $p(\vk^1)$.
We do this by letting the initial $C_0$ on $p^*(\vk^2)$ ``inherit'' the set $C_{k^1_i}$ from $p(\vk^1)$. See Figure \ref{fig:inherit}. To implement this, here we pass an extra parameter $\vk$ to $\walk^{\vk^1, \vk^2}(s', i, \mu_0, \vk)$, where $\vk$ is the index of $\mu_0$. The index of node $\mu_j$ in $\walk^{\vk^1, \vk^2}(s', i, \mu_0, \vk)$ is just $(0, \dots, 0, j, k_{i + 1}, \dots, k_{\ell})$ assuming it is of level $i$. Hence we can tell whether node $\mu_j$ is on $p^*(\vk^2)$ by looking at $\vk$. 

\begin{figure}
	\centering
	\begin{tikzpicture}[scale = 0.45, thick]
  \draw (0,0) -- (1, -3);
  \draw (1, -3) -- (8, -3);
  \draw (5, -3) -- (7, -9);
  \node at (7,-9.5) {\small $p^*({\vec{k}}^1)$};

  \node [draw,circle,fill,minimum size=2.5,inner sep=0pt, outer sep=0pt, blue] at (5,-3) {};
  \node [blue] at (5,-2) {\small $C^{i,\vec{k}^1} \gets C_{k^1_i}$};

  \draw (9,0) -- (10, -3);
  \node [draw,circle,fill,minimum size=2.5,inner sep=0pt, outer sep=0pt, blue] at (10,-3) {};
  \node [blue] at (12.5,-2) {\small $C_0 \gets C^{i,\vec{k}^1}$};

  \draw [->, blue] (5.3,-3.2) to [out = 330, in = 200] (9.8, -3.2);

  \draw (10,-3) -- (17, -3);
  \draw (14, -3) --  (16, -9);
  \node at (16,-9.5) {\small $p^*({\vec{k}}^2)$};

\end{tikzpicture}
	\caption{The initial $C_0$ on $p^*(\vk^2)$ ``inherit'' the set $C_{k^1_i}$ from $p(\vk^1)$.}
	\label{fig:inherit}
\end{figure}
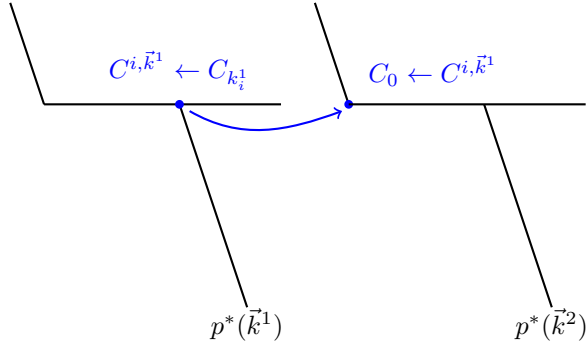

To minimize the effect of this change, instead of letting $x_{j + 1} \gets w_{|w|}$, we invoke $\last(s_j,i - 1)$ (defined in \cref{def:last}) and let $x_{j+1}$ be the vertex it returns. 
In this way, $x_{j + 1}$ is uniquely determined by $s_j, r_{\leq i - 1}, g_{\leq i - 1}$,  independent of the extra parameter $\vk$. 
We will explain the benefit of this later, after giving the formal definition of $\walk^{\vk^1, \vk^2}$. 

\renewcommand{\j}{\psi}

\newcommand{\bj}{\bm{\psi}}

\begin{construction}{The Relaxed Extended Random Walk Probability space $\extwalk^{\vk^1,\vk^2}_{\ell,m,n,a}$}
	\begin{itemize}
		\item \textbf{Setup.} We sample the random variables as follows:
		 \begin{itemize}
		     \item Draw the starting vertex $\bs \getsR [n]$.
		     \item  Sample $\{\bg_i\}_{i\in [\ell]}$ and $\{\br_i\}_{i\in [\ell]}$, which together determine a sample $\bh \getsR \calH_{\ell,m,n}$ from the pseudorandom hash family. 
		     \item Then, for each $i \in [\ell]$, we extend the domain of $\bg_i$ and $\br_i$  from $[m]$ to $[m] \cup \{\star_0, \star_1, \dots\}$ as follows: for every $\star_t$ we sample $\bg_i(\star_t)\getsR \{0,1\},\br_i(\star_t)\getsR[n]$, where the samples are independent across all $\star_t$.
		 \end{itemize}

		\item \textbf{Generating the walk.} The sampled $\{g_i\}_{i\in [\ell]},\{r_i\}_{i\in [\ell]}$ and $s$ uniquely determine a sequence $w^{\vk^1, \vk^2} = (w^{\vk^1, \vk^2}_1,w^{\vk^1, \vk^2}_2,\dots)$ of vertices returned by the function $\walk^{\vk^1, \vk^2}(s,\ell,0,(0,0,\dots,0))$ defined in Algorithm~\ref{algo:relaxedextwalk}. 

		\item \textbf{Assigned values.} Here, in additional to $\a_i(\mu), \nex(\mu), \rig(\mu), \level(\mu)$, we also we explicitly assign $\index(\mu)$ just to emphasize the index of $\mu$.
		We underline the parts where we assign these values, and note that they have no effect on the returned value of the function. We also write $\walk^{\vk^1, \vk^2}(s',i, \vk)$ and drop $\mu_0$ when we only need its return value. 

	\end{itemize}
\end{construction} 

\begin{algorithm}[H]\label{algo:relaxedextwalk}
\DontPrintSemicolon
	\caption{Generating a relaxed extended walk}
	
	\SetKwProg{Fn}{Function}{}{}
	\Fn{$\walk^{\vk^1, \vk^2}(s',i,\underline{\mu_0},\vk)$  (where $s'\in [n], 0\le i \le \ell$)}{
	\lIf{$i = 0$}{\Return sequence $(s')$ which contains a single vertex.}
	\eIf{$[\forall t \in [i + 1, \ell], k_t = k^2_t] \land [\exists t \in [i + 1, \ell], k^1_t < k^2_t] \land [k^2_i > 0]$ \label{line:when-inherit} }{
	\tcc{This condition says that $\mu_0 \not\in p(\vk^1)$ and is the last node of $p(\vk^2)$ above level $i$ and $p^*(\vk^2)$ is non-empty on level $i$. It is equivalent to $\mu_1, \mu_2, \dots \in p^*(\vk^2) \setminus p^*(\vk^1)$.} 
    	$C_0 \gets C^{i, \vk^1}$ \label{line:C_0-1}
	}{ 
    	$C_0 \gets \emptyset$ \label{line:C_0-2}
	}
	$\mathsf{star} \gets \text{false}, j \gets 0, s_0 \gets s', w = ()$.\\
	\Repeat{$g_i(y) = 0$}{
	$\vk' \gets (0, 0, \dots, 0, j, k_{i+1}, \dots, k_{\ell})$ \label{line:setjinkprime} \tcc{\underline{Here $\vk'$ equals $\index(\mu_j)$.}}
	$w = w \circ \walk^{\vk^1, \vk^2}(s_j, i - 1, \underline{\mu_0 + |w|}, \vk')$ \tcc{\underline{Here $\mu_0 + |w|$ equals $\mu_j$.}}
	$x_{j + 1} \gets \last(s_j,i - 1)$\\
	 $y, \mathsf{star} \gets \begin{cases} a_{x_{j + 1}}, \text{false} & \text{ if } a_{x_{j + 1}} \not\in C_j \land \lnot \mathsf{star} \\ \star_t, \text{true} & \text{ otherwise (where $t = \min \{t \in \N \ \vert \ \star_t \not\in C\}$)}\end{cases}$ \label{line:assign-y}\\
	 \underline{Let $\mu_{j + 1} = \mu_0 + |w|$, $\x_i(\mu_{j + 1}) \gets x_{j + 1}, \a_i(\mu_{j + 1}) \gets y$.} \label{line:assign-tag}\\
	 \lIf{$j > 0$}{\underline{$\rig(\mu_j) \gets \mu_{j + 1}$}}
		\If{$g_i(y) = 1$}{
		$C_{j + 1} \gets C_j \cup \{y\}$, $s_{j + 1} \gets r_i(y)$\\
		\underline{$\level(\mu_{j+1}) \gets i, \nex(\mu_{j+1}) \gets r_i(y)$}\\
		\underline{$\index(\mu_{j+1}) \gets (0,\cdots,0, j + 1,  k_{i+1}, \dots, k_{\ell})$}\label{line:assign-index-mu}\\
			$j \gets j + 1$.
		}
	}
\If {$[\forall t \in [i + 1, \ell], k_t = k^1_t] \land [\exists t \in [i + 1, \ell], k^1_t < k^2_t]$ \label{line:whenCk1}}{
\tcc{This condition says that $\mu_0$ is the last node of $p(\vk^1)$ above level $i$ and is not the last node of $p(\vk^2)$. It is equivalent to $\mu_1, \mu_2, \dots \in p^*(\vk^1) \setminus p^*(\vk^2)$ when $k^1_i > 0$.}
		    	 $C^{i, \vk^1} \gets C_{\min(j, k^1_i)}$} \label{line:assing-C1}
	\Return $w$.
	}
\end{algorithm}

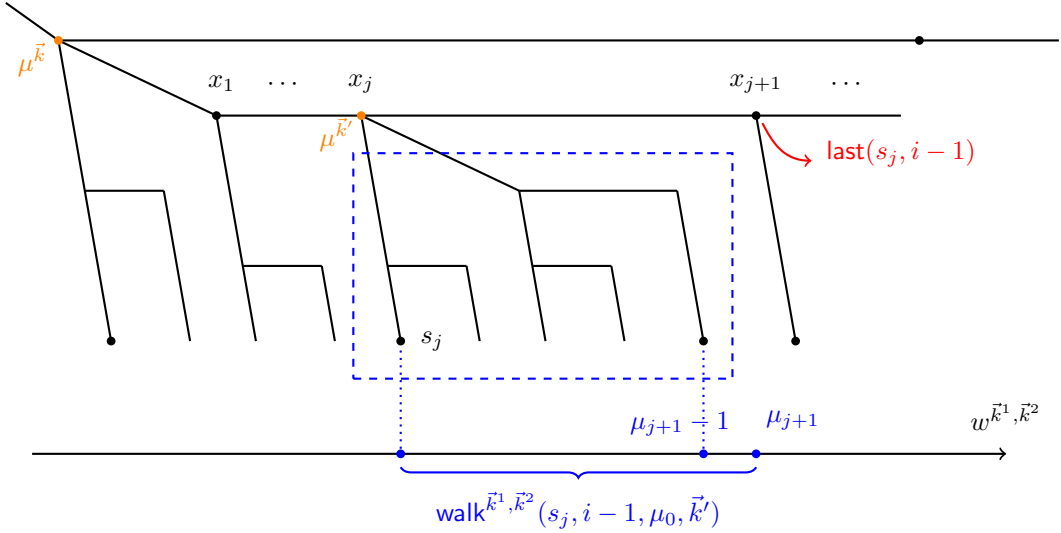
\begin{figure}
	\centering
\begin{tikzpicture}[xscale = 0.7, yscale = 0.5, thick]
  \draw (0,0) -- (1,-1);
  \draw (1,-1) -- (20, -1);
  \draw (1,-1) -- (2, -9);
  \node [draw,circle,fill,minimum size=2.5,inner sep=0pt, outer sep=0pt] at (2,-9) {};
  \draw (1.5,-5) -- (3,-5);
  \draw (3,-5) -- (3.5,-9);

  \draw (1,-1) -- (4, -3);
  \node [draw,circle,fill,minimum size=2.5,inner sep=0pt, outer sep=0pt] at (4,-3) {};
  \node at (4.1,-2.15) {\small $x_1$};
  \node at (5.3,-2.15) {\small $\cdots$};
  \draw (4,-3) -- (4.75, -9);
  \draw (4.5,-7) -- (6,-7);
  \draw (6,-7) -- (6.25,-9);

  \draw (4,-3) -- (17, -3);
  \draw (6.75,-3) -- (7.5, -9);
  \node at (6.75,-2.15) {\small $x_j$};
  \node [draw,circle,fill,minimum size=2.5,inner sep=0pt, outer sep=0pt] at (7.5,-9) {};
  \node at (8.1,-9) {\small $s_j$};
  \draw (7.25,-7) -- (8.75,-7);
  \draw (8.75,-7) -- (9,-9);

  \draw (6.75,-3) -- (9.75, -5);
  \draw (9.75,-5) -- (12.75,-5);
  \draw (9.75,-5) -- (10.25,-9);
  \draw (10,-7) -- (11.5,-7);
  \draw (11.5,-7) -- (11.75,-9);
  \draw (12.75, -5) -- (13.25, -9);
  \node [draw,circle,fill,minimum size=2.5,inner sep=0pt, outer sep=0pt] at (13.25,-9) {};

  \draw (14.25, -3) -- (15,-9);
  \node [draw,circle,fill,minimum size=2.5,inner sep=0pt, outer sep=0pt] at (15,-9) {};
  \node [draw,circle,fill,minimum size=2.5,inner sep=0pt, outer sep=0pt] at (14.25,-3) {};
  \node at (14.25,-2.15) {\small $x_{j + 1}$};
  \draw [red, ->] (14.37,-3.2) to [out=290,in=180] (15.3, -4.2);
  \node [red] at (17, -4) {\small $\mathsf{last}(s_j, i - 1)$};

  \node at (16,-2.15) {\small $\cdots$};

  \node [draw,circle,fill,minimum size=2.5,inner sep=0pt, outer sep=0pt] at (17.35,-1) {};

  \draw [->] (0.5, -12) -- (19, -12);
  \node at (19, -11) {\small $w^{\vec{k}^1, \vec{k}^2}$};
  \draw [dotted, blue] (7.5,-9.25) -- (7.5, -12);
  \node [draw,circle,blue,fill=blue,minimum size=2.5,inner sep=0pt, outer sep=0pt] at (7.5,-12) {};

  \draw [dotted, blue] (13.25,-9.25) -- (13.25, -12);
  \node [draw,circle,blue,fill=blue,minimum size=2.5,inner sep=0pt, outer sep=0pt] at (13.25,-12) {};
  \node [blue] at (12.8,-11.2) {\small $\mu_{j + 1} - 1$};
  \node [draw,circle,blue,fill=blue,minimum size=2.5,inner sep=0pt, outer sep=0pt] at (14.25,-12) {};
  \node [blue] at (14.95,-11.1) {\small $\mu_{j + 1}$};
  \draw [decorate,decoration={brace,amplitude=4pt},xshift=0pt,yshift=7pt, blue] (14.25,-12.6)--(7.5,-12.6);
  \draw [blue, draw, dashed] (6.6,-4) rectangle (13.8,-10);
  \node [blue] at (10.875, -13.5) {\small $\mathsf{walk}^{\vec{k}^1, \vec{k}^2}(s_j, i - 1, \mu_0, \vec{k}')$};

  \node [draw,circle,fill,minimum size=2.5,inner sep=0pt, outer sep=0pt, orange] at (6.75,-3) {};
  \node [orange] at (6.3, -3.5) {\small $\mu^{\vec{k}'}$};

  \node [draw,circle,fill,minimum size=2.5,inner sep=0pt, outer sep=0pt, orange] at (1,-1) {};
  \node [orange] at (0.5, -1.5) {\small $\mu^{\vec{k}}$};
\end{tikzpicture}
	\caption{$\walk^{\vk^1, \vk^2}(s, i, \mu_0, \vk)$. $\vk$ is the index of $\mu_0$. For each $j$, it recursively call $\walk^{\vk^1, \vk^2}(s_j, i - 1, \mu_j, \vk')$ to generate a subwalk of level $\leq i - 1$. $w^{\vk^1, \vk^2}_{\mu_{j + 1}}$ is also generated by this subwalk, and it is the last vertex of this subwalk. However, here $x_{j + 1}$ may not equal $w^{\vk^1, \vk^2}_{\mu_{j + 1}}$. This is due to the fact that it is now determined by $\last(s_j, i - 1)$ which is the last vertex of the original (not relaxed) subwalk. } 
	\label{fig:determine}
\end{figure}

The structure of $\walk^{\vk^1, \vk^2}(s, i, \mu_0, \vk)$ is illustrated in \Cref{fig:determine}. We first set up some notations. 

\paragraph*{Notation.} 
Throughout this section, we fix $\ell,m,n \in \N$, $a \in [m]^n$ and $\vk^1, \vk^2 \in \N^\ell, \vk^1<\vk^2$, and we will always work with the (relaxed) extended walk $\extwalk^{\vk^1, \vk^2}_{\ell,m,n,a}$. We use $\bw^{\vk^1, \vk^2},\bs,\bg,\br,\level,\nex,\bh$ to denote the corresponding random variables in the extended walk. We also use $\bT$ to denote the dependency tree built on the extended walk $\bw^{\vk^1, \vk^2}$.

Note that $\bw^{\vk^1, \vk^2},\level,\nex,\bh,\bT$ are all determined by $(\bs,\bg,\br)$. Then we recall the following shorthand. 

\begin{reminder} {\Cref{def:short-hand-for-a}}
We denote $\x(\mu) = \x_{\level(\mu)}(\mu)$ and $\a(\mu) = \a_{\level(\mu)}(\mu)$. 
\end{reminder}

Recall the definition of collision is as following.

\begin{reminder}{\Cref{def:collision}}
Let $p(\mu)$ denote the path on $T$ from root to node $\mu$. We say $(\alpha, \beta)$ is a \emph{collision between two paths} $p_1$ and $p_2$ if $\alpha \in p_1 \setminus p_2, \beta \in p_2 \setminus p_1$, and $(\a(\alpha),\level(\alpha)) =(\a(\beta),\level(\beta))$ (where $\a(\cdot)$ is defined in \Cref{def:short-hand-for-a}). The level of a collision $(\alpha, \beta)$ is the same as $\level(\alpha)$ (which equals $\level(\beta)$). 
\end{reminder}

We make the following definition according to the condition at Line \ref{line:when-inherit} and Line \ref{line:whenCk1}.

\begin{definition} \label{def:consistent_vk2}
We say an index $\vk$ is consistent with $\vk^2$ but not $\vk^1$ above level $i$ if and only if $[\forall t \in [i + 1, \ell], k_t = k^2_t] \land [\exists t \in [i + 1, \ell], k^1_t < k^2_t] \land [k^2_i > 0]$. This means $\vk$ is the same as $\vk^2$ above level $i$ and different from $\vk^1$ above level $i$. Besides, it also requires $k^2_i$ to be nonzero. 

Similarly, we say an index $\vk$ is consistent with $\vk^1$ but not $\vk^2$ above level $i$ if and only if $[\forall t \in [i + 1, \ell], k_t = k^1_t] \land [\exists t \in [i + 1, \ell], k^1_t < k^2_t]$. This means $\vk$ is the same as $\vk^1$ above level $i$ and different from $\vk^2$ above level $i$. Note here we do not require $k^1_i$ to be nonzero. 
\end{definition}

We have the following lemma about these conditions. 

\begin{lemma} \label{obs:conditions}
Consider a function call $\walk^{\vk^1, \vk^2}(s', i, \mu_0, \vk)$. Suppose it generates nodes $\mu_1, \mu_2, \dots, \mu_t$. 
\begin{enumerate}
    \item If $\vk$ is consistent with $\vk^2$ but not with $\vk^1$ above level $i$, then $\mu_1, \mu_2, \dots, \mu_t \in p^*(\vk^2) \setminus p^*(\vk^1)$. Moreover, $\mu_j = \vidx^{\vk^2}_{i,j}$ for $j \in [k^2_i]$ if $\vidx^{\vk^2}$ exists. \label{item:cond-k2}
    \item If $\vk$ is consistent with $\vk^1$ but not with $\vk^2$ above level $i$, then $\mu_1, \mu_2, \dots, \mu_t \in p^*(\vk^1) \setminus p^*(\vk^2)$ when $k^1_i > 0$. Moreover, $\mu_j = \vidx^{\vk^1}_{i,j}$ for $j \in [k^1_i]$ if $\vidx^{\vk^1}$ exists. \label{item:cond-k1}
\end{enumerate}
\end{lemma}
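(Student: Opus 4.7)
The plan is to trace how the index of each generated node $\mu_j$ is determined by Algorithm~\ref{algo:relaxedextwalk}, and then match these indices against those of the ancestors of $\vidx^{\vk^1}$ and $\vidx^{\vk^2}$. The starting observation is a direct unrolling of the code at Lines~\ref{line:setjinkprime}--\ref{line:assign-index-mu}: whenever the body of the repeat loop in $\walk^{\vk^1,\vk^2}(s',i,\mu_0,\vk)$ assigns $\level(\mu_j)\gets i$, it also assigns
\[
\index(\mu_j) \;=\; (0,\dots,0,j,\,k_{i+1},\dots,k_\ell).
\]

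For item (1), I assume $\vk$ is consistent with $\vk^2$ but not with $\vk^1$ above level $i$. By the clause $k_t = k^2_t$ for all $t>i$, the displayed index becomes $(0,\dots,0,j,k^2_{i+1},\dots,k^2_\ell)$, which is exactly the index of $\vidx^{\vk^2}_{i,j}$. Since the node-to-index correspondence is injective on the dependency tree, whenever $\vidx^{\vk^2}$ exists all of its ancestors $\vidx^{\vk^2}_{i,j}$ for $j\in[k^2_i]$ exist, forcing $\mu_j = \vidx^{\vk^2}_{i,j}$ as claimed. For the $p^*$ containment, I would note that $\mu_j$ lies in the subtree rooted at $\eta\in p(\vidx^{\vk^2})$, where $\eta$ is the smallest ancestor of level $>i$ on $p(\vidx^{\vk^2})$; hence $\mu_j\in\suc(\eta)\subseteq p^*(\vk^2)$. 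The non-membership $\mu_j\notin p^*(\vk^1)$ follows from the clause $\exists t>i$ with $k^1_t < k^2_t$: at level $t$, the ancestor of $\mu_j$ has index component $k^2_t$, which is strictly to the right of the corresponding ancestor of $\vidx^{\vk^1}$, so no ancestor of $\mu_j$ lies on $p(\vidx^{\vk^1})$, ruling out membership in any $\suc(\eta')$ for $\eta'\in p(\vidx^{\vk^1})$.

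Item (2) is essentially symmetric, swapping the roles of $\vk^1$ and $\vk^2$; the extra hypothesis $k^1_i>0$ compensates for the fact that ``$k^1_i>0$'' is absent from the definition of consistency with $\vk^1$ (unlike the ``$k^2_i>0$'' that appears in consistency with $\vk^2$), ensuring that level-$i$ nodes are actually produced so the index identification is non-vacuous. The main subtle point---and the step I expect to handle most carefully---is the bookkeeping at the recursive call $w\gets w\circ\walk^{\vk^1,\vk^2}(s_j,i-1,\mu_0+|w|,\vk')$: although the lemma as stated only concerns the direct nodes $\mu_j$ of level $\geq i$, subsequent uses of the lemma in Section~\ref{sec:two} will need the same containment for nodes generated inside these subcalls. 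This is why $\vk'$ is passed in: a parallel induction on $i$ shows that the recursive call's parameter $\vk'$ is still consistent with $\vk^2$ (resp.\ $\vk^1$) but not $\vk^1$ (resp.\ $\vk^2$) above level $i-1$, so the inductive hypothesis propagates $p^*(\vk^2)\setminus p^*(\vk^1)$ (resp.\ $p^*(\vk^1)\setminus p^*(\vk^2)$) containment down to every node in the subtree.
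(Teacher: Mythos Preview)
Your proof is correct and follows essentially the same approach as the paper: read off $\index(\mu_j)=(0,\dots,0,j,k_{i+1},\dots,k_\ell)$ from Line~\ref{line:assign-index-mu}, match it against the ancestor indices of $\vk^2$ (resp.\ $\vk^1$), and invoke the definition of $p^*$ via $\suc(\cdot)$. Your final paragraph about propagating the containment through recursive subcalls is extraneous---the lemma as stated (and as the paper uses it, e.g.\ in Lemma~\ref{obs:no-colision-k1-k2}) concerns only the direct nodes $\mu_j$ of this call, so no induction on $i$ is needed.
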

\begin{proof}
By Line \ref{line:assign-index-mu}, we know $\index(\mu_j) = (0,\cdots,0, j,  k_{i+1}, \dots, k_{\ell})$. 

Suppose $\vk$ is consistent with $\vk^2$ but not with $\vk^1$ above level $i$. From (1) $k^2_i \geq 1 = [\index(\mu_1)]_i$ since $k^2_i > 0$, and (2) $[\index(\mu_1)]_t = k^2_t$ for every $t \in [i + 1, \ell]$, it follows that $\index(\mu_1)$ is an ancestor of $\vk^2$ according to \Cref{def:ancestor}. Hence we have $\mu_1 \in p(\mu^{\vk^2})$. Since by \Cref{def:extention}, $\suc(\mu_1) \subset p^*(\mu^{\vk^2})$ where $\suc(\mu_1)$ contains the level $i$ descendants of $\mu_1$. Consequently, $\mu_1, \mu_2, \dots, \mu_t \in p^*(\mu^{\vk^2})$. 

On the other hand, since there exists ${t'} \in [i + 1, \ell], k^1_{t'} < k^2_{t'} = k_{t'}$, the same argument shows that $\mu_1 \not\in p(\mu^{\vk^1})$. Hence $\mu_1, \mu_2, \dots, \mu_t \not\in p^*(\mu^{\vk^1})$. Together, they imply $\mu_1, \mu_2, \dots, \mu_t \in p^*(\vk^2) \setminus p^*(\vk^1)$. Moreover, when $\vidx^{\vk^2}$ exists, we have $t \geq k^2_i$. Since $[\index(\mu_j)]_i = j$ and $\mu_j \in p(\vk^2)$, we know that $\mu_j = \vidx^{\vk^2}_{i,j}$. 

The case when $\vk$ is consistent with $\vk^1$ but not with $\vk^2$ above level $i$ and $k^1_i > 0$ follows from the same argument.
\end{proof}

The following lemma is the main purpose of this relaxation. 
\begin{lemma}\label{obs:no-colision-k1-k2}
Fix two sequences $\vk^1, \vk^2$ such that $\vk^1 < \vk^2$. For all $(w^{\vk^1, \vk^2}, T) \in \supp(\bw^{\vk^1, \vk^2}, \bT)$ such that $\vidx^{\vk^1}$ and $\vidx^{\vk^2}$ exist, we know there is no collision between $p(\vk^1)$ and $p(\vk^2)$.
\end{lemma}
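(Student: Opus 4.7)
The plan is to prove the lemma by contradiction, leveraging the $C^{i,\vk^1}$ save/inheritance mechanism that $\walk^{\vk^1,\vk^2}$ was explicitly designed to provide. Suppose there exist $\alpha \in p(\vk^1)\setminus p(\vk^2)$ and $\beta \in p(\vk^2)\setminus p(\vk^1)$ with $\level(\alpha) = \level(\beta) = i$ and $\a(\alpha) = \a(\beta)$. I will argue that the inheritance on the $\vk^2$-side forces $\a(\beta)$ to take a $\star$-value that cannot match $\a(\alpha)$.

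First, I will reduce to the structurally relevant case. Since $\vk^1 < \vk^2$, either $\vk^1$ is an ancestor of $\vk^2$ (in which case $p(\vk^1) \subseteq p(\vk^2)$ and $\alpha$ cannot exist), or $\vk^1,\vk^2$ diverge at some level $j^* = \max\{j : k^1_j \neq k^2_j\}$, where $k^1_{j^*} < k^2_{j^*}$ and $k^1_j = k^2_j$ for $j > j^*$. Since at levels $\ge j^*$ the level-$j$ ancestors of $\vk^1$ and $\vk^2$ coincide, a collision level $i$ must satisfy $i < j^*$. Consequently $\alpha = \vidx^{\vk^1}_{i,j^\alpha}$ and $\beta = \vidx^{\vk^2}_{i,j^\beta}$ with $j^\alpha \in [k^1_i]$, $j^\beta \in [k^2_i]$, and both $k^1_i, k^2_i \ge 1$.

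Next, I will pin down the two $\walk^{\vk^1,\vk^2}$ function calls that process the level-$i$ ancestors of $\vk^1$ and of $\vk^2$ respectively. The ``canonical'' level-$i$ call with $\mu_0 = \vidx^{\vk^1}_{i,0}$ has parameter $\vk$ whose index satisfies $k_t = k^1_t$ for all $t > i$; since $k^1_{j^*} < k^2_{j^*}$ and $j^* > i$, the check at Line~\ref{line:whenCk1} fires at the end of this call, assigning $C^{i,\vk^1} \gets C_{\min(j, k^1_i)}$. Because $\vidx^{\vk^1}$ exists, this call processes at least $k^1_i$ level-$i$ nodes, so $C^{i,\vk^1} \supseteq \{\a_i(\vidx^{\vk^1}_{i,1}),\dots,\a_i(\vidx^{\vk^1}_{i,k^1_i})\} \ni \a_i(\alpha)$. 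Symmetrically, the level-$i$ call with $\mu_0 = \vidx^{\vk^2}_{i,0}$ triggers the inheritance at Line~\ref{line:when-inherit} (by \Cref{def:consistent_vk2}, using $k^2_i \ge 1$ and the divergence at $j^* > i$), setting $C_0 \gets C^{i,\vk^1}$. A crucial sub-step is verifying that between the save and the inheritance no other level-$i$ call overwrites $C^{i,\vk^1}$: indeed, the index of any $\mu_0$ (at some level $> i$) satisfying the save condition is forced to be $(0,\dots,0,k^1_{i+1},\dots,k^1_\ell)$, which identifies $\mu_0 = \vidx^{\vk^1}_{i,0}$ uniquely, and this node is processed exactly once in pre-order, before $\vidx^{\vk^2}_{i,0}$.

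Finally, I will derive the contradiction by case analysis on $\a(\alpha)$, using that $\a_i(\alpha) \in C_0 \subseteq C_{j^\beta-1}$ on the $\vk^2$-side throughout. If $\a(\alpha) = a_{x_{j^\alpha}}$ is a ``vertex-valued'' tag, the equation $\a(\beta) = \a(\alpha)$ would require $\a(\beta) = a_{x_{j^\beta}}$ equal to a value already in $C_{j^\beta-1}$, which by Line~\ref{line:assign-y} forces $\a(\beta) = \star_{t'}$, a contradiction. If instead $\a(\alpha) = \star_t$, then $\star_t \in C_{j^\beta-1}$, so the choice $t' = \min\{t' : \star_{t'} \notin C_{j^\beta-1}\}$ in Line~\ref{line:assign-y} must yield $t' \ne t$, hence $\a(\beta) = \star_{t'} \ne \star_t$, again contradicting the collision hypothesis.

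The main obstacle I anticipate is the middle step: carefully auditing the chronology so as to guarantee that $C^{i,\vk^1}$ is in the state that contains $\a_i(\alpha)$ at the exact moment the $\vk^2$-side call inherits it. This requires checking (i) that the canonical call's final $j$ exceeds $k^1_i$, so that $C_{\min(j,k^1_i)} = C_{k^1_i}$ includes $\a_i(\alpha)$, and (ii) that no other level-$i$ call with the save condition sneaks in between $\vidx^{\vk^1}$ and $\vidx^{\vk^2}$ in pre-order --- both of which reduce to the uniqueness argument for the index matching $\vk^1$ above level $i$.
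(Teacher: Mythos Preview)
Your proof is correct and follows essentially the same approach as the paper: identify the two level-$i$ calls on the $\vk^1$- and $\vk^2$-sides, verify that the save/inheritance of $C^{i,\vk^1}$ places $\a_i(\alpha)$ into the initial $C_0$ on the $\vk^2$-side, and conclude $\a_i(\beta) \neq \a_i(\alpha)$. The paper streamlines your final step by observing directly that Line~\ref{line:assign-y} always yields $y \notin C_j \supseteq C_0$ (making the case split on whether $\a(\alpha)$ is vertex-valued or $\star_*$ unnecessary), whereas you are more explicit than the paper about the chronology guaranteeing the save precedes the inheritance.
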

\begin{proof}
For each level $i \in [\ell]$, we prove there is no collision of level $i$ between these two paths. If $p(\vk^1)$ and $p(\vk^2)$ intersects on level $i$, we know there is no collision between them since one must contain the other on level $i$. 

If they do not intersect on level $i$, since $\vk^1 < \vk^2$, there must be $t \in [i + 1, \ell]$ such that $k^1_t < k^2_t$. The corner case when $k^1_i = 0$ or $k^2_i = 0$ is evident since then there cannot be level $i$ collisions. Hence, without loss of generality, we can assume that $k^1_i > 0$ and $k^2_i > 0$. 

Since $\mu^{\vk^1}$ exists, consider the function call $\walk^{\vk^1, \vk^2}(s,i,\mu,\vk)$ in which $\vk$ is consistent with $\vk^1$ but not with $\vk^2$ above level $i$. By $k^1_i > 0$ and \Cref{obs:conditions}~\eqref{item:cond-k1}, $\mu_j = \vidx^{\vk^1}_{i,j}$ for $j \in [k^1_i]$ (\ie, the function call generates the level $i$ nodes on $p(\vk^1)$). From Line \ref{line:assing-C1} of Algorithm \ref{algo:relaxedextwalk}, we know $C^{i,k^1} = C_{k^1_i} = \{\a_i(\alpha) \ \vert \ \alpha \in p(\vk^1) \land \level(\alpha) = i\}$.

Then since $\vidx^{\vk^2}$ exists, consider the function call $\walk^{\vk^1, \vk^2}(s, i, \mu, \vk)$ in which $\vk$ is consistent with $\vk^2$ but not with $\vk^1$ above level $i$. By \Cref{obs:conditions}~\eqref{item:cond-k2}, we know $\mu_j = \vidx^{\vk^2}_{i,j}$ for $j \in [k^2_i]$ (\ie, the function call generates the level $i$ nodes on $p(\vidx^{\vk^2})$). In this function call, from Line \ref{line:C_0-1} of Algorithm \ref{algo:relaxedextwalk}, we have $C_0 = C^{i,k^1}$.  Therefore, by how $\a_i(\mu_j)$ is assigned at Line \ref{line:assign-tag} and how $y$ is assigned at Line \ref{line:assign-y}. We know $\a_i(\mu_j) \not\in C_0 = C^{i,k^1}$. Thus $\a_i(\mu_j) \not= \a_i(\alpha)$ for all level $i$ nodes $\alpha$ on $p(\vk^1)$. 
\end{proof}

We have the following lemma similar to \cref{obs:w-nex} whose proof is also the same as that of \cref{obs:w-nex}.
\begin{lemma} \label{obs:w-nex-2}
Fix $(w^{\vk^1, \vk^2}, T) \in \supp(\bw^{\vk^1, \vk^2}, \bT)$. We have $w^{\vk^1, \vk^2}_{\mu + 1} = \nex(\mu)$ for every $\mu \in [|w^{\vk^1, \vk^2}| - 1]$. 
\end{lemma}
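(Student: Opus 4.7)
The plan is to mimic essentially verbatim the proof of \Cref{obs:w-nex}, noting that the two substantive changes introduced in \Cref{algo:relaxedextwalk} (the inheritance of $C_0$ from $C^{i,\vk^1}$ versus initializing to $\emptyset$, and the redefinition $x_{j+1} \gets \last(s_j,i-1)$ in place of $w_{|w|}$) only affect how $y$ and consequently $\a_i(\mu_{j+1})$ are computed. They do not alter the semantics of the assignments $\nex(\mu_{j+1}) \gets r_i(y)$ and $s_{j+1} \gets r_i(y)$ inside the \textbf{if}-body guarded by $g_i(y)=1$, nor do they change the recursive concatenation pattern $w \gets w \circ \walk^{\vk^1,\vk^2}(s_j, i-1, \mu_0+|w|, \vk')$.

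First I would establish the analogue of \Cref{lem:starts}: namely, for every $s'\in [n]$, $i \in \zeroTon{\ell}$, $\mu_0$ and $\vk$, the sequence returned by $\walk^{\vk^1,\vk^2}(s', i, \mu_0, \vk)$ begins with $s'$. This follows by a straightforward induction on $i$: the base case $i=0$ is immediate from the corner-case return at Line~\ref{line:corner}, while for $i \ge 1$ we have $s_0 \gets s'$, and the first vertex appended to $w$ comes from the recursive call $\walk^{\vk^1,\vk^2}(s_0, i-1, \mu_0, \vk')$, which by the induction hypothesis starts with $s_0 = s'$.

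Next, fix any $\mu \in [|w^{\vk^1,\vk^2}|-1]$ and consider the unique moment when $\nex(\mu)$ is assigned a value in \Cref{algo:relaxedextwalk}. By inspection, this happens only inside the \textbf{if}-body triggered by $g_i(y)=1$, where we simultaneously set $\nex(\mu_{j+1}) \gets r_i(y)$ and $s_{j+1} \gets r_i(y)$, at the instant when $\mu = \mu_{j+1} = \mu_0 + |w|$ in the local variables of the current call. Thus $\nex(\mu) = s_{j+1}$. Since $g_i(y)=1$, the repeat loop proceeds with $j \gets j+1$ and its first action is the recursive call $\walk^{\vk^1,\vk^2}(s_{j+1}, i-1, \mu_0+|w|, \vk')$, whose return value is concatenated onto $w$. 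By the analogue of \Cref{lem:starts} established above, this returned sequence begins with $s_{j+1}$. Hence the vertex now placed at position $\mu+1$ of the overall walk is $s_{j+1} = \nex(\mu)$, giving $w^{\vk^1,\vk^2}_{\mu+1} = \nex(\mu)$.

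There is essentially no obstacle here; the only point requiring any care is to verify that the two branches at Lines~\ref{line:C_0-1} and \ref{line:C_0-2} setting $C_0$, and the use of $\last(s_j,i-1)$ in computing $x_{j+1}$, are all upstream of the \textbf{if}-body that performs the $\nex$- and $s_{j+1}$-assignments, so the causal chain $\nex(\mu)=s_{j+1}=$ first vertex of the next recursive return is unaffected. That observation is immediate from the pseudocode, so the proof reduces to the bookkeeping above.
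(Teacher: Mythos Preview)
Your proposal is correct and follows essentially the same approach as the paper, which simply states that the proof is identical to that of \Cref{obs:w-nex}. Your explicit verification that the changes in Algorithm~\ref{algo:relaxedextwalk} (the $C_0$ inheritance and the use of $\last(s_j,i-1)$) do not disturb the $\nex(\mu_{j+1}) = s_{j+1}$ assignment and the concatenation structure is exactly the bookkeeping the paper leaves implicit.
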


We also prove an analogue of \Cref{obs:determine-a-from-x}.

\begin{lemma}\label{obs:determine-a-from-x-2}
In $\walk^{\vk^1, \vk^2}(s',i, \mu_0)$, one can uniquely determine $\a_i(\mu_j)$ from $C_0, x_1, x_2, \dots, x_j$ as follows:
\begin{enumerate}
	\item Let $j' = \min \{j' \ \vert \ [\exists  j'' \text{ s.t. } 1 \leq j'' < j' \leq j,  a_{x_{j''}} = a_{x_{j'}}] \lor [a_{x_{j'}} \in C_0]\}$. 
	\item If no such $j'$ exists, then $\a_i(\mu_j) = a_{x_j}$. Otherwise, let $t_0 = \#\{t \geq 0 \ \vert \ \star_t \in C_0\}$. We have $\a_i(\mu_j) = \star_{t_0 + j - j'}$. 
\end{enumerate}
\end{lemma}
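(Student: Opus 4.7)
The plan is to mirror the proof of \Cref{obs:determine-a-from-x}, with careful bookkeeping to account for the fact that $C_0$ is no longer empty but may be inherited from $C^{i,\vk^1}$ at Line~\ref{line:C_0-1} of Algorithm~\ref{algo:relaxedextwalk}. First I would establish the invariant $C_j = C_0 \cup \{\a_i(\mu_1), \dots, \a_i(\mu_j)\}$ by tracking the updates at Line~\ref{line:assign-tag} together with the subsequent assignment $C_{j+1} \gets C_j \cup \{y\}$; this is a one-line induction directly from the pseudocode.

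Next I would pin down when $\mathsf{star}$ first switches from false to true. Inspecting Line~\ref{line:assign-y}, the switch occurs at the smallest $j'$ for which $a_{x_{j'}} \in C_{j'-1}$ (while $\mathsf{star}$ was false up to step $j'-1$, so every prior $\a_i(\mu_{j''})$ equals $a_{x_{j''}}$). Expanding $C_{j'-1} = C_0 \cup \{a_{x_1}, \dots, a_{x_{j'-1}}\}$, this trigger condition decomposes exactly into the disjunction in the lemma statement: either $a_{x_{j''}} = a_{x_{j'}}$ for some earlier $j'' < j'$, or $a_{x_{j'}} \in C_0$. If no such $j' \le j$ exists, then $\mathsf{star}$ is still false at step $j$ and $\a_i(\mu_j) = a_{x_j}$, matching case~(2) of the lemma.

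Finally, when such $j'$ exists, I would pin down the exact $\star$-subscripts. At step $j'$, $y$ is set to $\star_t$ for the minimum $t$ with $\star_t \notin C_{j'-1}$. Since every previously added tag $\a_i(\mu_1), \dots, \a_i(\mu_{j'-1})$ is of the form $a_{\cdot}$ (no $\star$-symbol was inserted before the switch), we have $C_{j'-1} \cap \{\star_0, \star_1, \dots\} = C_0 \cap \{\star_0, \star_1, \dots\}$, so the minimum such $t$ equals $t_0 = \#\{t \ge 0 : \star_t \in C_0\}$. A short induction on $j \ge j'$ then yields $\a_i(\mu_j) = \star_{t_0 + j - j'}$: once $\mathsf{star}$ is true, every subsequent step introduces a fresh $\star_t$ with the minimum available subscript, and that minimum increases by exactly one per step. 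I do not anticipate any real obstacle here; the whole lemma is essentially a bookkeeping extension of \Cref{obs:determine-a-from-x}, with the shift $t_0$ absorbing the initial contribution of $\star$-symbols already present in $C_0$.
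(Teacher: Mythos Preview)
Your proposal is correct and follows essentially the same approach as the paper's proof: establish the invariant $C_j = C_0 \cup \{\a_i(\mu_1),\dots,\a_i(\mu_j)\}$, identify the switching index $j'$ via the disjunction, and then track the $\star$-subscripts after the switch. Your argument is in fact slightly more explicit than the paper's in justifying why the first assigned star is $\star_{t_0}$ (by noting that $C_{j'-1}$ and $C_0$ contain the same $\star$-symbols), though both proofs tacitly rely on the fact that the stars already in $C_0$ form a contiguous prefix $\{\star_0,\dots,\star_{t_0-1}\}$, which follows from how $C^{i,\vk^1}$ is built.
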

\begin{proof}
By Line \ref{walk:a} and Line \ref{walk:Cj}, we know $C_j = C_0 \cup \{\a_i(x_1), \a_i(x_2), \dots, \a_i(x_j)\}$. By Line \ref{walk:y}, we know $\mathsf{star}$ switches from $\text{false}$ to $\text{true}$ when $a_{x_{j + 1}} \in C_j$. For those $j$ before $\mathsf{star}$ switches, $\a_i(x_j) = a_{x_j}$, and for those $j$ after switch, $\a_i(x_j) = \star_*$. 

Hence, $\mathsf{star}$ switches at the first $j'$ such that there either exists $1 \leq j'' < j'$ with $a_{x_{j''}} = a_{x_{j'}}$, or $a_{x_{j'}} \in C_0$. If no such $j'$ exists, then $\mathsf{star}$ is still false at $j$, and hence $\a_i(\mu_j) = a_{x_j}$. Otherwise, $\mathsf{star}$ switches at $j'$, and by Line \ref{walk:y} we have $\a_i(\mu_{j'}) = \star_{t_0}, \a_i(\mu_{j' + 1}) = \star_{t_0 + 1},\dots$, and $\a_i(\mu_j) = \star_{t_0 + j - j'}$. 
\end{proof}

The following lemma follows from essentially the same proof of \Cref{obs:determine-a-from-xi} by replacing \Cref{obs:determine-a-from-x} with \Cref{obs:determine-a-from-x-2} in the proof. 

\begin{lemma}\label{obs:determine-a-from-xi-2}
Fix $\vk^1, \vk^2 \in \N^\ell$ such that $\vk^1 < \vk^2$, $(w^{\vk^1, \vk^2}, T) \in (\bw^{\vk^1, \vk^2}, \bT)$, and $\vk \in \N^\ell$. Suppose $\vidx^{\vk}_{i,1}, \dots, \vidx^{\vk}_{i,j - 1}$ exist. If $\vk$ is consistent with $\vk^2$ but not $\vk^1$ above level $i$ (as defined in \Cref{def:consistent_vk2}), let $\bar{C}_0 = C^{i,\vk^1}$. Otherwise, let $\bar{C}_0 = \emptyset$. 

From $\bar{C}_0, \bar{x}_1 = \x_i(\vidx^{\vk}_{i,1}), \bar{x}_2 = \x_i(\vidx^{\vk}_{i,2}), \dots, \bar{x}_{j - 1} = \x_i(\vidx^{\vk}_{i,j - 1}), \bar{x}_j = \x_i(\rig_i(\vidx^{\vk}_{i,j - 1}))$, one can uniquely determine $\a_i(\rig_i(\vidx^{\vk}_{i,j - 1}))$ as follows:
\begin{enumerate}
	\item Let $j' = \min \{j' \ \vert \ [\exists  j'' \text{ s.t. } 1 \leq j'' < j' \leq j,  a_{\bar{x}_{j''}} = a_{\bar{x}_{j'}}] \lor [a_{\bar{x}_{j'}} \in \bar{C}_0]\}$. 
	
	\item If no such $j'$ exists, then $\a_i(\rig_i(\vidx^{\vk}_{i,j - 1})) = a_{\bar{x}_j}$. Otherwise, let $t_0 = \#\{t \geq 0 \ \vert \ \star_t \in \bar{C}_0\}$. We have $\a_i(\rig_i(\vidx^{\vk}_{i,j - 1})) = \star_{t_0 + j - j'}$. 
\end{enumerate}
\end{lemma}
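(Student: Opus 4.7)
The plan is to mimic the proof of \Cref{obs:determine-a-from-xi}, but replacing \Cref{obs:determine-a-from-x} with \Cref{obs:determine-a-from-x-2} to account for the possibility of a nontrivial initial collision set $C_0 = C^{i,\vk^1}$ set at Line~\ref{line:C_0-1} of Algorithm~\ref{algo:relaxedextwalk}. The key additional task, compared with the original one-vertex proof, is to verify that the $C_0$ used in the relevant function call coincides with the $\bar{C}_0$ prescribed by the lemma statement.

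First, I would locate the function call that generates the relevant level-$i$ nodes. Let $\vk_0$ be the index of the node $\vidx^{\vk}_{i,0}$; unfolding the recursion $\vidx^{\vk}_{i,0} = \vidx^{\vk}_{i+1,k_{i+1}}$ (cf.\ Case~2 of the proof of \Cref{lem:existence}), we find that $\vidx^{\vk}_{i,1},\ldots,\vidx^{\vk}_{i,j-1}$ together with $\rig_i(\vidx^{\vk}_{i,j-1})$ are exactly the nodes $\mu_1,\ldots,\mu_j$ produced by the call $\walk^{\vk^1,\vk^2}(\nex(\vidx^{\vk}_{i,0}), i, \vidx^{\vk}_{i,0}, \vk_0)$. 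Moreover, because Algorithm~\ref{algo:relaxedextwalk} sets $x_{j''+1} \gets \last(s_{j''}, i-1)$ with $s_{j''} = \nex(\mu_{j''}) = \nex(\vidx^{\vk}_{i,j''})$, we have $x_{j''} = \last(\nex(\vidx^{\vk}_{i,j''-1}), i-1) = \x_i(\vidx^{\vk}_{i,j''}) = \bar{x}_{j''}$ for $j'' \le j-1$, and analogously $x_j = \bar{x}_j$ for $\rig_i(\vidx^{\vk}_{i,j-1})$.

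Next I would match the initial collision set. By inspecting Line~\ref{line:when-inherit}, the chosen function call uses $C_0 = C^{i,\vk^1}$ precisely when $\vk_0$ satisfies $[\forall t > i,\,(\vk_0)_t = k^2_t] \wedge [\exists t > i,\,k^1_t < k^2_t] \wedge [k^2_i > 0]$, and uses $C_0 = \emptyset$ otherwise. Since $(\vk_0)_t = k_t$ for every $t > i$ by construction of $\vidx^{\vk}_{i,0}$, this condition on $\vk_0$ is identical to the condition that $\vk$ is consistent with $\vk^2$ but not $\vk^1$ above level $i$ in the sense of \Cref{def:consistent_vk2}. Thus $C_0 = \bar{C}_0$ in both branches. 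With the $\bar{x}$'s and $\bar{C}_0$ matched, a direct application of \Cref{obs:determine-a-from-x-2} to the function call identified above yields the claimed description of $\a_i(\rig_i(\vidx^{\vk}_{i,j-1})) = \a_i(\mu_j)$.

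The most delicate step is the bookkeeping around the corner cases where $\vidx^{\vk}_{i,0} = 0$ (the root) or where $k^1_i = 0$; in these situations one has to confirm that $C^{i,\vk^1}$ has actually been assigned at Line~\ref{line:assing-C1} during an earlier function call along $p^*(\vk^1)$ before it is read by the call considered above. This follows from $\vk^1 < \vk^2$ together with the left-to-right pre-order in which $\walk^{\vk^1,\vk^2}$ visits the tree, ensuring that the $p^*(\vk^1)$ leg is completed (and $C^{i,\vk^1}$ stored) prior to the $p^*(\vk^2)$ leg; once this ordering is checked, the rest is a routine reduction to \Cref{obs:determine-a-from-x-2}.
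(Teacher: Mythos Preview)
Your proposal is correct and follows essentially the same approach as the paper: locate the function call $\walk^{\vk^1,\vk^2}(\cdot,i,\cdot,\vk_0)$ that assigns $\a_i(\rig_i(\vidx^{\vk}_{i,j-1}))$, verify that the condition at Line~\ref{line:when-inherit} on $\vk_0$ coincides with the consistency condition on $\vk$ so that $C_0=\bar C_0$, and then apply \Cref{obs:determine-a-from-x-2}. Your write-up is in fact more explicit than the paper's (which is quite terse), particularly in checking $x_{j''}=\bar x_{j''}$ and in noting the ordering argument that $C^{i,\vk^1}$ is stored before it is read; these additions are welcome but not new ideas.
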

\begin{proof}
Consider the function call $\walk^{\vk^1, \vk^2}(s, i, \mu_0, \vk^0)$ in which we assign $a_i(\rig_i(\vidx_{i,j - 1}^{\vk}))$. 

In such function call, by Line \ref{line:setjinkprime} of Algorithm \ref{algo:relaxedextwalk}, we know $k'_t = k^0_t$ for all $t \in [i + 1, \ell]$. Since $\vk'$ is simply the index of $\mu_j$ (by Line  \ref{line:setjinkprime}, \ref{line:assign-index-mu}), when we assign $\a_i(\rig_i(\vidx^{\vk}_{i,j - 1}))$ (at Line \ref{line:assign-tag}), $\vk' = \vK_{i,j}$. Together, we know $k_t = k^0_t$ for all $t \in [i + 1, \ell]$. 
Noticing in this function call, Line \ref{line:when-inherit} only depends on $k^0_t$ for those $t \in [i + 1, \ell]$, therefore by definition of $\bar{C}_0$, we know $C_0 = \bar{C}_0$. Hence we can apply \Cref{obs:determine-a-from-x-2}, and this concludes the proof. 
\end{proof}

\begin{lemma}\label{lem:out-k2}
	Fix $\vk^1, \vk^2 \in \N^\ell$ such that $\vk^1 < \vk^2$ and $(w^{\vk^1, \vk^2}, T) \in (\bw^{\vk^1, \vk^2}, \bT)$. For every node $\mu \leq \vidx^{\vk^2}$, we have $w^{\vk^1,\vk^2}_{\mu} = \x_i(\mu)$ for every $i \in [\level(\mu)]$.  
\end{lemma}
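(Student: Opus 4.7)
The plan is to induct on $i$ from $1$ to $\level(\mu)$. For the base case $i = 1$, both sides reduce to $s_j$: by Line \ref{line:corner}, a level-$0$ recursive call always returns the singleton $(s_j)$, so $w^{\vk^1,\vk^2}_\mu = s_j$; and $\x_1(\mu) = \last(s_j, 0) = s_j$ as well. Thus $\x_1(\mu) = w^{\vk^1,\vk^2}_\mu$ holds for every node $\mu$, with no need for the assumption $\mu \le \vidx^{\vk^2}$.

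For the inductive step $i \ge 2$, suppose $\mu = \mu_{j+1}$ in a level-$i$ call $\walk^{\vk^1,\vk^2}(s',i,\mu_0,\vk)$. By Line \ref{line:assign-tag}, $\x_i(\mu) = x_{j+1} = \last(s_j, i-1)$, which is the last vertex of the \emph{non-relaxed} walk $\walk(s_j, i-1)$; by Line \ref{walk:w} together with \Cref{obs:w-nex-2}, $w^{\vk^1,\vk^2}_\mu$ equals the last vertex of the \emph{relaxed} recursive call $\walk^{\vk^1,\vk^2}(s_j, i-1, \mu_0+|w|, \vk')$. The goal thus reduces to showing that, under $\mu \le \vidx^{\vk^2}$, the relaxed and non-relaxed level-$(i-1)$ walks starting from the common vertex $s_j$ produce the same last vertex. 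Inspecting Algorithm \ref{algo:relaxedextwalk}, the only mechanism by which the two walks can disagree is the inheritance $C_0 \gets C^{i'',\vk^1}$ at Line \ref{line:C_0-1}, which is triggered in a nested sub-call at some level $i''$ whose parameter is consistent with $\vk^2$ but not $\vk^1$ above level $i''$. By \Cref{obs:conditions}, such a sub-call processes only nodes in $p^*(\vk^2) \setminus p^*(\vk^1)$ at level $i''$, and an actual divergence (the assignment $y \gets \star_*$ caused by $a_{x_{j+1}} \in C^{i'',\vk^1}$) can occur only at a vertex whose $a$-value equals the $\a_{i''}$-value of some level-$i''$ node on $p(\vk^1)$, i.e., exactly a would-be $p^*(\vk^2)$-vs-$p(\vk^1)$ collision. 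Invoking \Cref{obs:no-colision-k1-k2}, which rules out any such collision in the completed dependency tree once $\vidx^{\vk^1}$ and $\vidx^{\vk^2}$ exist, I would conclude that no divergence can occur before $\mu$ in pre-order; hence the relaxed and non-relaxed walks agree on their last returned vertex, giving $w^{\vk^1,\vk^2}_\mu = \last(s_j, i-1) = \x_i(\mu)$.

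The main obstacle is the intricate nesting of relaxations across multiple levels. Even when the outermost level-$(i-1)$ call itself fails to trigger relaxation (for instance because $k^2_{i-1}=0$ or $\vk'_t \ne k^2_t$ for some $t \ge i$), a deeper sub-call may still match the relaxation criterion whenever the iteration indices along the recursion line up with the corresponding coordinates of $\vk^2$. Verifying that every such triggered relaxation is either a no-op (when $C^{i'',\vk^1} = \emptyset$) or pushes the first divergence strictly past $\vidx^{\vk^2}$ in walk order requires a joint induction on the recursion level and the walk position. A related subtlety is that the argument must forbid a divergence occurring \emph{strictly before} $\mu$ in the pre-order traversal, not merely at $\mu$ itself, so the case analysis should track the pre-order position of the first possible divergence rather than just the endpoint.
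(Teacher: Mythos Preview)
Your reduction is correct: for each fixed $i$, $w^{\vk^1,\vk^2}_\mu$ is the last vertex of the relaxed call $\walk^{\vk^1,\vk^2}(s_j,i-1,\cdot,\vk')$ while $\x_i(\mu) = \last(s_j,i-1)$ is the last vertex of the non-relaxed call, both starting from the same $s_j$; these can differ only if some nested sub-call at level $i'' \le i-1$ receives a nonempty $C_0$. But your argument for ruling this out is circular. You invoke \cref{obs:no-colision-k1-k2}, which asserts there is no collision between $p(\vk^1)$ and $p(\vk^2)$ \emph{in the relaxed walk}. That statement holds precisely \emph{because} the relaxation relabels $\a_{i''}(\beta)$ to $\star_*$ whenever $a_{\x_{i''}(\beta)} \in C^{i'',\vk^1}$ --- i.e., precisely when a would-be collision is detected. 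So \cref{obs:no-colision-k1-k2} is a \emph{consequence} of the divergence mechanism, not evidence that the mechanism never fires; using it to conclude ``no divergence'' inverts cause and effect. (A smaller issue: \cref{obs:no-colision-k1-k2} also assumes $\vidx^{\vk^1}$ exists, which the present lemma does not.)

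The paper's proof avoids any no-collision statement and is a direct index chase. Suppose some sub-call at level $i'' \le i-1$ satisfies the condition at Line~\ref{line:when-inherit}; then its index $\vk''$ has $k''_t = k^2_t$ for all $t > i''$. Since recursive calls below level $i-1$ never modify coordinates $\ge i$, this forces $k'_t = k^2_t$ for all $t \ge i$, where $\vk'$ is the index passed to the enclosing level-$(i-1)$ call. But $\vk'$ was set at Line~\ref{line:setjinkprime} in the level-$i$ call with $k'_i = j$, so $j = k^2_i$ and $k_t = k^2_t$ for $t > i$; then the index of $\mu = \mu_{j+1}$ is $(0,\dots,0,k^2_{\level(\mu)}+1,k^2_{\level(\mu)+1},\dots,k^2_\ell)$ with $\level(\mu) \ge i$, which strictly exceeds $\vk^2$ --- contradicting $\mu \le \vidx^{\vk^2}$. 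This index-constraint observation (that the inheritance can only trigger inside the subtree hanging \emph{after} the last level-$i$ ancestor of $\vidx^{\vk^2}$) is what your proposal is missing; once you have it, no induction on $i$ and no appeal to \cref{obs:no-colision-k1-k2} is needed.
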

\begin{proof}
Consider the function call which assign $\x_i(\mu)$ at Line \ref{line:assign-tag}. Note that $w^{\vk^1,\vk^2}_{\mu}$ is the last vertex of $\walk^{\vk^1, \vk^2}(s_j, i - 1, \vk')$, while $\x_i(\mu)$ is the last vertex of  ${\walk}(s_j, i - 1)$. Here we have $j = k_i'$, since  in Algorithm~\ref{algo:relaxedextwalk} we let $k'_i \gets j$ at Line~\ref{line:setjinkprime} before calling $\walk^{\vk^1, \vk^2}(s_j, i - 1, \vk')$. Then, we know that the index of $\mu$ is $(0, 0, \dots 0, k'_{\level(\mu)} + 1,  k'_{\level(\mu) + 1}, \dots, k'_{\ell})$ with $\level(\mu) \geq i$. 

Suppose for contradiction that  $w^{\vk^1,\vk^2}_{\mu} \not= \x_i(\mu)$. Then it must be the case that, at the beginning of $\walk^{\vk^1, \vk^2}(s_j, i - 1, \vk')$ (or one of its recursive calls on lower levels), the set $C_0$ is initialized to an non-empty set
 (since otherwise the behavior of $\walk^{\vk^1, \vk^2}(s_j, i - 1, \vk')$ and ${\walk}(s_j, i - 1)$ would be exactly the same).
 
Let $\walk^{\vk^1, \vk^2}(\cdot, i', \vk'')$ ($i' \le i - 1$) be the recursive call where $C_0$ is not initialized empty.
Since it is an recursive call made by $\walk^{\vk^1, \vk^2}(s_j, i - 1, \vk')$, we have  $\forall t \geq i, k''_t = k'_t$. 
We also have
 $\forall t \geq i'+1, k''_t = k^2_t$, which follows from $C_0\neq \emptyset$ and the condition at Line~\ref{line:when-inherit} in Algorithm~\ref{algo:relaxedextwalk}.
 Together they imply $k'_t = k_t^2$ for all $t\ge i$.
 
Hence, the index of $\mu$ can be alternatively written as $(0, 0, \dots 0, k^2_{\level(\mu)} + 1,  k^2_{\level(\mu) + 1}, \dots, k^2_{\ell})$ where $\level(\mu) \geq i$. 
This contradicts $\mu \le \mu^{\vk^2}$. 
\end{proof}

The following lemma relates our relaxed extended walk $w^{\vk^1,\vk^2}$ to the actual reachable set $f^*_{a,h}(s)$ in the original walk, provided that they are defined using the same $\{g_i\}_{i\in [\ell]}$, $\{r_i\}_{i \in [\ell]}$ and $s\in [n]$.
 
\newcommand{\bbi}{\phi}
\newcommand{\bbj}{\psi}

\begin{lemma}\label{lem:subset-of-extwalk-2}
Fix $\vk^1, \vk^2 \in \N^\ell$ such that $\vk^1 < \vk^2$ and $(w^{\vk^1, \vk^2}, T,h,s) \in (\bw^{\vk^1, \vk^2},  \bT,\bh,\bs)$.

Suppose $\vidx^{\vk^1}$ and $\vidx^{\vk^2}$ exist. Let $\bbi = \vidx^{\vk^1} + 1, \bbj = \vidx^{\vk^2} + 1$. If there are no two distinct $\alpha,  \beta \in [\bbj - 1]$ such that $a_{w^{\vk^1, \vk^2}_{\alpha}}=a_{w^{\vk^1, \vk^2}_{\beta}}$, then $w^{\vk^1, \vk^2}_{\bbi}, w^{\vk^1, \vk^2}_{\bbj} \in f^*_{a,h}(s)$. 
\end{lemma}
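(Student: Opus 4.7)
The plan is to follow the template of \Cref{lem:subset-of-extwalk}, adapting it to the relaxed walk. By induction on $\mu \in [\bbj - 1]$ I will show that, under the no-collision hypothesis, $\a_i(\mu) = a_{w^{\vk^1,\vk^2}_\mu} \neq \star_*$ for every $i \in [\level(\mu)]$. Once this holds, the same reasoning as in \Cref{lem:subset-of-extwalk} (essentially \Cref{lem:giai}, which carries over to the relaxed walk) gives $\nex(\mu) = h(a_{w^{\vk^1,\vk^2}_\mu})$ for every $\mu \in [\bbj - 1]$. Combined with \Cref{obs:w-nex-2} and $w^{\vk^1,\vk^2}_1 = s$, this says the first $\bbj$ vertices of $w^{\vk^1,\vk^2}$ are precisely the first $\bbj$ iterates of the actual walk from $s$ in $G_{a,h}$, so in particular $w^{\vk^1,\vk^2}_{\bbi}$ and $w^{\vk^1,\vk^2}_{\bbj}$ both lie in $f^*_{a,h}(s)$.

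The inductive step mirrors \Cref{lem:subset-of-extwalk} wherever the relaxed walk coincides with the unrelaxed one. The only genuinely new way for $\a_i(\mu)$ to equal $\star_*$ is via the inherited initial set $C_0 = C^{i,\vk^1}$ assigned at Line~\ref{line:C_0-1}. Suppose the call producing $\mu$ is $\walk^{\vk^1,\vk^2}(s', i, \cdot, \vk)$ with $\vk$ consistent with $\vk^2$ but not $\vk^1$ above level $i$. By \Cref{obs:conditions}~\eqref{item:cond-k2} the level-$i$ nodes generated inside this call are precisely $\mu^{\vk^2}_{i,1}, \dots, \mu^{\vk^2}_{i,k^2_i}$, all at positions $\le \mu^{\vk^2} = \bbj - 1$, and by \Cref{lem:out-k2} we have $x_{j+1} = \x_i(\mu_{j+1}) = w^{\vk^1,\vk^2}_{\mu_{j+1}}$, so $a_{x_{j+1}} = a_{w^{\vk^1,\vk^2}_{\mu_{j+1}}}$.

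Meanwhile, $C^{i,\vk^1}$ was populated at Line~\ref{line:assing-C1} during the sibling call generating the level-$i$ nodes of $p(\vk^1)$, which by \Cref{obs:conditions}~\eqref{item:cond-k1} are exactly $\mu^{\vk^1}_{i,1}, \dots, \mu^{\vk^1}_{i,k^1_i}$. Because $\vk^1 < \vk^2$, that sibling is processed before the current call in the pre-order traversal, so the inductive hypothesis applies to those earlier positions and yields $C^{i,\vk^1} = \bigl\{ a_{w^{\vk^1,\vk^2}_{\mu^{\vk^1}_{i,j'}}} : j' \in [k^1_i] \bigr\}$. The positions $\mu_{j+1}$ (which lies on $p^*(\vk^2) \setminus p^*(\vk^1)$) and $\mu^{\vk^1}_{i,j'}$ (which lies on $p(\vk^1)$) are distinct and both contained in $[\bbj - 1]$, so the no-collision hypothesis forces $a_{x_{j+1}} \notin C^{i,\vk^1}$. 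Together with the unchanged bookkeeping for the contributions to $C_j$ from the current call itself, this rules out $\a_i(\mu) = \star_*$ and closes the induction. The main obstacle is verifying that every position encountered in the inductive step really lies within $[\bbj - 1]$ so that the hypothesis applies, and handling the $j = 1$ corner case where $\mathsf{star}$ could flip purely because of $C_0$; but once \Cref{lem:out-k2} and \Cref{obs:conditions} are in hand, this reduces to routine case analysis.
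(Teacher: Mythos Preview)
Your proposal is correct and follows essentially the same approach as the paper: an induction on $\mu \in [\bbj-1]$ showing $\a_i(\mu)=a_{w^{\vk^1,\vk^2}_\mu}\neq\star_*$, then concluding via \Cref{obs:w-nex-2} that the first $\bbj$ vertices coincide with the actual walk. The paper handles the inherited $C_0=C^{i,\vk^1}$ case more tersely—simply noting that whatever node $\eta$ contributed the offending element to $C_j$ still satisfies $\eta<\mu$, so either $a_{w_\eta}=a_{w_\mu}$ or $\a_i(\eta)=\star_*$, contradicting the hypothesis or the inductive step respectively—whereas you spell out explicitly via \Cref{obs:conditions} and \Cref{lem:out-k2} that $C^{i,\vk^1}$ consists of $a$-values at earlier positions on $p(\vk^1)$; the arguments are the same in substance.
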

\begin{proof}
To simplify notation, in this proof we drop the superscript $\vk^1,\vk^2$ on the variable $w^{\vk^1,\vk^2}$ representing the relaxed extended walk, and simply write $w$ instead.

The proof is similar to that of \cref{lem:subset-of-extwalk}. 
We will first prove that, for every $\mu \leq \bbj - 1$, if there are no $\alpha,  \beta \in [\mu], \alpha \neq \beta$ such that $a_{w_{\alpha}}=a_{w_{\beta}}$, then we must have $\a_i(\mu) = a_{w_{\mu}}$ for all $i \in [\level(\mu)]$. 

To prove the statement above, we again use induction on $\mu$. Suppose the hypothesis holds for $1, 2, \dots, \mu - 1$. Since $\mu \leq \bbj - 1$, by \cref{lem:out-k2}, we know $w_{\mu} = \x_i(\mu)$ for $i \in [\level(\mu)]$. Now we will show that $\a_i(\mu) \not= \star_*$ for $i \in [\level(\mu)]$, which will immediately imply $\a_i(\mu) = a_{\x_i(\mu)} = a_{w_{\mu}}$ and finish the inductive step. 

Suppose for contradiction that we assigned $\a_i(\mu) = \star_*$ in Algorithm~\ref{algo:relaxedextwalk}. Then, at this point, the only two cases are (1) $a_{w_{\mu}} \in C_j$, or (2) $\star_* \in C_j$.
The main difference with \cref{lem:subset-of-extwalk} is that now the initial value of $C_0$ may be a non-empty set $C^{i, \vk^1}$. 
But we can still see that there must be a node $\eta < \mu$ such that (1) $a_{w_{\eta}} = a_{w_{\mu}}$, or (2) $\a_i(\eta) = \star_*$ while $\level(\eta) = i$. Case (1) contradicts our assumption of $a_{w_\alpha} \neq a_{w_\beta}$ for all $\alpha,\beta\in [\psi - 1], \alpha \neq \beta$. Case (2) contradicts the inductive hypothesis of $\a_i(\eta) = a_{w_{\eta}} \not= \star_*$.
Therefore we must have $\a_i(\mu) = a_{w_{\mu}} \not= \star_*$ for all $i \in [\level(\mu)]$. 

Again, similar to \cref{lem:subset-of-extwalk}, for such $\mu$, for all $i \in \level(\mu)$, $g_i(\a_i(\mu)),r_i(\a_i(\mu))$ will have the same values as the pseudorandom functions $g_i(a_{w_{\mu}}),r_i(a_{w_{\mu}})$ that were used to define $h(a_{w_{\mu}})$ for $h \in \mathcal{H}_{\ell,m,n}$. Then, it is evident that $w_{\mu+1} = \nex(\mu) = h(a_{w_\mu})$ (where the first equality follows from  \Cref{obs:w-nex-2}).

Suppose the actual reachable set $f^*_{a,h}(s)$ has vertices $\{w'_1,w'_2,\dots\}$ where $w'_1=s$ and $w'_{\mu+1} = h(a_{w'_{\mu}})$. By our induction before, for every $\mu \leq \bbj - 1$ such that no $\alpha, \beta \in [\mu]$ satisfy $a_{w_{\alpha}}=a_{w_{\beta}}$, we must have $w'_{\mu + 1}=w_{\mu + 1}$. Hence, we know $w_{\eta} = w'_{\eta}$ for all $\eta \in [\bbj]$. In particular, we have $w_{\bbi}, w_{\bbj} \in f^*_{a,h}(s)$.
\end{proof}

We also observe that the following lemma holds with the same proof as \Cref{lem:existence}. 

\begin{lemma}\label{lem:existence-2}
	Fix $\vk^1, \vk^2 \in \N^\ell$ such that $\vk^1 < \vk^2$, $(w^{\vk^1, \vk^2}, g, r) \in (\bw^{\vk^1, \vk^2}, \bg, \br)$, and $\vk \in \N^\ell$, the following hold:
\begin{enumerate}[label=(\alph*)]
	\item Suppose $\mu^{\vk}_{i,j - 1}$ exists, $\mu^{\vk}_{i,j}$ exists if and only if $g_i(\a_i(\rig_i(\mu^{\vk}_{i,j - 1}))) = 1$. \label{item:exist-2}
	\item $\x_i(\rig_i(\mu^{\vk}_{i,j - 1})) = \last(\nex(\mu^{\vk}_{i,j - 1}), i - 1)$ \label{item:xi-2}
\end{enumerate}
\end{lemma}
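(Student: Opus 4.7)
The plan is to mirror the proof of Lemma~\ref{lem:existence} almost verbatim, with two small adjustments that reflect the only structural differences between $\walk$ (Algorithm~\ref{algo:extwalk}) and $\walk^{\vk^1,\vk^2}$ (Algorithm~\ref{algo:relaxedextwalk}). Before doing so, I would first note that the auxiliary facts the proof of Lemma~\ref{lem:existence} relies on all have direct analogues for the relaxed walk: Lemma~\ref{lem:mu0} (walks at level $i$ concatenate to form $\bw$) holds by the same induction, since the recursive decomposition at Line~\ref{walk:w} of Algorithm~\ref{algo:relaxedextwalk} is the same; Observation~\ref{lem:giai} ($g_{i'}(\a_{i'}(\mu))=0$ for $i'<\level(\mu)$ and $g_{\level(\mu)}(\a_{\level(\mu)}(\mu))=1$) also holds, since in Algorithm~\ref{algo:relaxedextwalk} the assignment $\level(\mu_{j+1})\gets i$ lies inside the branch $g_i(y)=1$ exactly as in Algorithm~\ref{algo:extwalk}. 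The only genuine differences—initializing $C_0$ to $C^{i,\vk^1}$ at Line~\ref{line:C_0-1}, updating $C^{i,\vk^1}$ at Line~\ref{line:assing-C1}, and defining $x_{j+1}\gets\last(s_j,i-1)$—affect only the selection rule at Line~\ref{line:assign-y} for $\a_i(\mu_{j+1})$, and neither the recursion structure nor the equality $\mu_j=\rig(\mu_{j-1})$ inside a single call.

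Next I would split into the same two cases, $j>1$ and $j=1$. In Case 1, I would apply the analogue of Lemma~\ref{lem:mu0} to pick the unique call $\walk^{\vk^1,\vk^2}(s',i,\mu_0,\vk')$ whose $\mu_0<\mu^{\vk}_{i,j-1}\le \mu_0+|\walk^{\vk^1,\vk^2}(s',i,\mu_0,\vk')|$. Inside this call, $\mu_{j-1}=\mu^{\vk}_{i,j-1}$ and $\mu_j=\rig(\mu^{\vk}_{i,j-1})$; since $\level(\mu^{\vk}_{i,j-1})=i$, we have $\rig(\mu^{\vk}_{i,j-1})=\rig_i(\mu^{\vk}_{i,j-1})$. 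By the analogue of Observation~\ref{lem:giai}, $\mu^{\vk}_{i,j}$ exists iff $\level(\mu_j)=i$ iff $g_i(\a_i(\mu_j))=1$, proving (a). For (b), the only change from the proof of Lemma~\ref{lem:existence} is that $x_{j+1}$ is defined directly as $\last(s_j,i-1)$ rather than as $w_{|w|}$; since $s_{j-1}=\nex(\mu_{j-1})=\nex(\mu^{\vk}_{i,j-1})$ from the underlined assignment, we get $\x_i(\rig_i(\mu^{\vk}_{i,j-1}))=\x_i(\mu_j)=x_j=\last(\nex(\mu^{\vk}_{i,j-1}),i-1)$ immediately.

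In Case 2, I would follow the corner-case argument of Lemma~\ref{lem:existence} verbatim: unwind $\mu^{\vk}_{i,0}$ as $\mu^{\vk}_{i',j'}$ where $i'=\min\{i''\in[i+1,\ell]:k_{i''}>0\}$ (or as the root, if no such $i'$ exists), then trace through the recursive calls $\walk^{\vk^1,\vk^2}(\nex(\mu^{\vk}_{i',j'}), i'', \mu^{\vk}_{i',j'}, \cdot)$ for $i''=i'-1,i'-2,\ldots,i$, arriving at the call $\walk^{\vk^1,\vk^2}(\nex(\mu^{\vk}_{i',j'}), i, \mu^{\vk}_{i',j'}, \cdot)$ inside which $s_0=\nex(\mu^{\vk}_{i,0})$ and $\mu_1=\rig_i(\mu^{\vk}_{i,0})$. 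Both (a) and (b) then follow by the same reasoning as in Case 1.

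The only conceptual point worth highlighting — and really the "main obstacle," although it is a trivial one — is that the lemma makes no reference to the set $C^{i,\vk^1}$ or to the distinction between the relaxed and original walks: it concerns only the existence of $\mu^{\vk}_{i,j}$ and the identity of $\x_i(\rig_i(\cdot))$, both of which depend only on the recursive call structure and the $g_i,\nex,\level$ assignments, not on how ties in $\a_i(\cdot)$ are resolved via $\star_t$'s or on the initial value of $C_0$. Because all of these aspects are identical between Algorithms~\ref{algo:extwalk} and~\ref{algo:relaxedextwalk}, the transport of the proof is immediate, which justifies the remark in the text that Lemma~\ref{lem:existence-2} holds ``with the same proof as Lemma~\ref{lem:existence}''.
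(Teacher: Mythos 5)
Your proposal is correct and matches the paper's intent exactly: the paper states that Lemma~\ref{lem:existence-2} "holds with the same proof as \Cref{lem:existence}," and you correctly identify that the only differences in Algorithm~\ref{algo:relaxedextwalk} (the $C_0$ initialization, the $C^{i,\vk^1}$ bookkeeping, and $x_{j+1}\gets\last(s_j,i-1)$) do not touch the recursive call structure, the equality $\rig(\mu_{j-1})=\mu_j$, or the $g_i/\level/\nex$ assignment logic that the proof of Lemma~\ref{lem:existence} actually depends on. The two-case analysis ($j>1$ and $j=1$) carries over verbatim once those points are noted, so there is nothing further to add.
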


Recall $\vidx[\vk]$ has the same meaning as $\vidx^{\vk}$. Its following generalization to multiple paths also holds. 

\begin{lemma} \label{obs:existence-K-2}
Fix $\vk^1, \vk^2 \in \N^\ell$ such that $\vk^1 < \vk^2$, $(w^{\vk^1, \vk^2}, g) \in \supp(\bw^{\vk^1, \vk^2}, \bg)$, and $\vK \subseteq \N^\ell$ such that $\vk^1, \vk^2 \in \vK$. The following holds:
\begin{enumerate}[label=(\alph*)]
	\item Suppose $\vidxb{\pa^{\vK}_{i,j}}$ exists, $\vidx^{\vK}_{i,j}$ exists if and only if $g_i\left(\a_i\left(\rig_i\left(\vidxb{\pa^{\vK}_{i,j}}\right)\right)\right) = 1$. \label{item:exist-K-2}
	\item $\x_i\left(\rig_i\left(\vidxb{\pa^{\vK}_{i,j}}\right)\right) = \last\left(\nex\left(\vidxb{\pa^{\vK}_{i,j}}\right), i - 1\right)$ \label{item:xi-K-2}
	\item Suppose $\vidx^{\vK}_{i,1}, \dots, \vidx^{\vK}_{i,j - 1}$ and $\vidxb{\pa^{\vK}_{i,j}}$ exist. From $\tilde{x}_1 = \x_i\left(\vidx^{\vK}_{i,1}\right), \tilde{x}_2 = \x_i\left(\vidx^{\vK}_{i,2}\right), \dots, \tilde{x}_{j - 1} = \x_i\left(\vidx^{\vK}_{i,j - 1}\right), \tilde{x}_j = \x_i\left(\rig_i\left(\vidxb{\pa^{\vK}_{i,j}}\right)\right)$, one can uniquely determine $\a_i\left(\rig_i\left(\vidxb{\pa^{\vK}_{i,j}}\right)\right)$. \label{item:determine-a-from-xi-K-2}
	\item If $\vidx^{\vK}_{i,j}$ exists, then $\vidx^{\vK}_{i,j} = \rig_i\left(\vidxb{\pa^{\vK}_{i,j}}\right)$. \label{item:exist-equal-2}
 \end{enumerate}
\end{lemma}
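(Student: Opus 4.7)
The plan is to mimic the proof of \Cref{obs:existence-K} essentially verbatim, substituting the relaxed-walk counterparts \Cref{lem:existence-2} (for parts \ref{item:exist-K-2}, \ref{item:xi-K-2}, \ref{item:exist-equal-2}) and \Cref{obs:determine-a-from-xi-2} (for part \ref{item:determine-a-from-xi-K-2}) in place of \Cref{lem:existence} and \Cref{obs:determine-a-from-xi}. First I would reduce all four parts to the single-path setting by picking any representative $\vk \in \vK$ such that $\vK_{i,j} \in P^{\vk}$, and writing $\vK_{i,j} = \vk^{i,t}$ for the corresponding level-$i$ rank $t$. By definition of $\pa^{\vK}_{i,j}$, the index $\pa^{\vK}_{i,j}$ equals the index of $\vidx^{\vk}_{i,t-1}$, so $\vidxb{\pa^{\vK}_{i,j}} = \vidx^{\vk}_{i,t-1}$ as tree nodes.

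Parts \ref{item:exist-K-2} and \ref{item:xi-K-2} then follow immediately: the existence criterion and the identity $\x_i(\rig_i(\vidx^{\vk}_{i,t-1})) = \last(\nex(\vidx^{\vk}_{i,t-1}), i-1)$ are exactly \Cref{lem:existence-2}\ref{item:exist-2} and \ref{item:xi-2} applied to $\vk$ at index $t$. Part \ref{item:exist-equal-2} is also straightforward: if $\vidx^{\vK}_{i,j}$ exists, then it must coincide with $\vidx^{\vk}_{i,t}$ (uniqueness of the node with a given index), and $\rig_i(\vidx^{\vk}_{i,t-1}) = \vidx^{\vk}_{i,t}$ by the definition of $\rig_i$ combined with \Cref{lem:existence-2}\ref{item:exist-2}. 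These three parts do not interact with the relaxed initial set $C_0$ at all, since \Cref{lem:existence-2} is a direct analogue of \Cref{lem:existence}.

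Part \ref{item:determine-a-from-xi-K-2} is where the main care is needed. Setting $j_0 = \max\{j_0 \mid \vK_{i,j_0} \text{ is an ancestor of } \vK_{i,j}\} = j - t + 1$, relabel $\bar{x}_{j'} = \tilde{x}_{j' + j_0 - 1} = \x_i(\vidx^{\vk}_{i,j'})$ for $j' \in [t-1]$ and $\bar{x}_t = \tilde{x}_j = \x_i(\rig_i(\vidx^{\vk}_{i,t-1}))$. I would then apply \Cref{obs:determine-a-from-xi-2} to the representative path $\vk$ at level $i$, step $t$. The anticipated obstacle is the initial set $\bar{C}_0$ appearing in \Cref{obs:determine-a-from-xi-2}: this equals $C^{i,\vk^1}$ when $\vk$ is consistent with $\vk^2$ but not $\vk^1$ above level $i$, and $\emptyset$ otherwise. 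Two checks are needed: first, that the consistency condition depends only on the coordinates of $\vk$ strictly above level $i$, which are common to every $\vk' \in \vK$ sharing the same $(i,t)$-ancestor $\vK_{i,j}$, so the rule is independent of the representative choice; and second, that $C^{i,\vk^1}$ itself is a deterministic function of the quantities already available (it is set at Line~\ref{line:assing-C1} of Algorithm~\ref{algo:relaxedextwalk} in the function call tracing $p(\vk^1)$ at level $i$, which depends only on $\vk^1$, $r_{\le i-1}, g_{\le i-1}$ and the level-$i$ $\a_i$ values on $p(\vk^1)$, each of which is either already among the $\tilde{x}_{\cdot}$ we are given or determined by them). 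With these two observations, \Cref{obs:determine-a-from-xi-2} directly yields the desired unique determination, completing the proof.
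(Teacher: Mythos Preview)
Your proposal is correct and matches the paper's proof almost exactly: parts \ref{item:exist-K-2}, \ref{item:xi-K-2}, \ref{item:exist-equal-2} are reduced to the single-path \Cref{lem:existence-2} via the same representative-path trick, and for part \ref{item:determine-a-from-xi-K-2} the paper also splits on whether $C_0=\emptyset$ or $C_0=C^{i,\vk^1}$ and, in the latter case, uses precisely the fact that $\vk^1\in\vK$ to recover $C^{i,\vk^1}$ from the given $\tilde{x}$'s. One small sharpening: in your second check, the clean way to see that $C^{i,\vk^1}$ is determined by the $\tilde{x}$'s alone (without invoking $r_{\le i-1},g_{\le i-1}$) is that the level-$i$ nodes on $p(\vk^1)$ are exactly $\vidx^{\vK}_{i,j_0},\dots,\vidx^{\vK}_{i,j_1}$ for some $j_0\le j_1<j$, so their $\x_i$-values are literally among the inputs $\tilde{x}_{j_0},\dots,\tilde{x}_{j_1}$, and the $\a_i$-values (hence $C^{i,\vk^1}$) follow from the $C_0=\emptyset$ case of \Cref{obs:determine-a-from-xi-2}.
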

\begin{proof}
\ref{item:exist-K-2} and \ref{item:xi-K-2} follows from the same proof as \Cref{obs:existence-K} by replacing \Cref{lem:existence} with \Cref{lem:existence-2}. \ref{item:exist-equal-2} follows from exact the same proof of  \Cref{obs:existence-K}~\ref{item:exist-equal}.

For \ref{item:determine-a-from-xi-K-2}, note there are two cases. From the index $\vK_{i,j}$, we can tell when we assign $\a_i\left(\rig_i\left(\vidxb{\pa^{\vK}_{i,j}}\right)\right)$, whether $C_0 = \emptyset$ or $C^{i, \vk^1}$. 

If $C_0 = \emptyset$, it follows the same proof as \Cref{obs:existence-K} \ref{item:determine-a-from-xi-K} by replacing \Cref{obs:determine-a-from-xi} with \Cref{obs:determine-a-from-xi-2}. 

If $C_0 = C^{i, \vk^1}$. Since $\vk^1 \in \vK$ and $\vk^1 < \vk^2$, we know there exists $1 \leq j_0 \leq j_1 < j$ such that level $i$ nodes on $p(\vk^1)$ are exactly $\vidx^{\vK}_{i,j_0}, \vidx^{\vK}_{i,j_0 + 1}, \dots, \vidx^{\vK}_{i,j_1}$. (Note when there is no level $i$ node on $p(\vk^1)$, we have $C_0 = C^{i, \vk^1} = \emptyset$ which belongs to the previous case.) When determine $\a_i$ of these nodes, we have $C_0 = \emptyset$. Thus by applying the previous case, we can determine $\a_i(\alpha)$ for all $\alpha \in p(\vk^1), \level(\alpha) = i$. Then $C^{i, \vk^1}$ contains exactly $\a_i(\alpha)$ for all such $\alpha$. Thus as $C^{i, \vk^1}$ is determined, we can apply \Cref{obs:determine-a-from-xi-2} to conclude the proof. 
\end{proof}

\begin{definition} \label{def:func-A}
Fix $\vk^1, \vk^2 \in \N^\ell$ such that $\vk^1 < \vk^2, \vK$, $\vK \subseteq \N^\ell$ such that $\vk^1, \vk^2 \in \vK$, and level $i \in [\ell]$. There exists a function family $\{\mathsf{A}^{\vk^1, \vk^2, \vK,i,j}\}_{j \in K_i}$ which maps $(x_1, x_2, \dots, x_j) \in [n]^j$ to $[m] \cup \{\star_t\}_{t \in \N}$ satisfying the following.

For any $(w^{\vk^1, \vk^2}, T) \in \supp(\bw^{\vk^1, \vk^2}, \bT)$, let $\tilde{x}_1 = \x_i\left(\vidx^{\vK}_{i,1}\right), \tilde{x}_2 = \x_i\left(\vidx^{\vK}_{i,2}\right), \dots, \tilde{x}_{j - 1} = \x_i\left(\vidx^{\vK}_{i,j - 1}\right), \tilde{x}_j = \x_i\left(\rig_i\left(\vidxb{\pa^{\vK}_{i,j}}\right)\right)$. 

If $\vidx^{\vK}_{i,1}, \dots, \vidx^{\vK}_{i,j - 1}$ and $\vidxb{\pa^{\vK}_{i,j}}$ exist, we always have $\a_i\left(\rig_i\left(\vidxb{\pa^{\vK}_{i,j}}\right)\right) = \mathsf{A}^{\vk^1, \vk^2, \vK,i,j}(\tilde{x}_1, \tilde{x}_2, \dots, \tilde{x}_j)$. 
\end{definition}

Note the existence of such function family is guaranteed by \Cref{obs:existence-K-2} \ref{item:determine-a-from-xi-K-2}. 

\subsection{Proof of~\cref{lem:hit-lower-bound}}
Similar to the case for one vertex, we will then prove our result using the following two lemmas. 
But note here that we separately consider the contribution of each pair $(\vk^1, \vk^2)$. Therefore in the rest of the paper, we will use the following notations. 

\paragraph{Notation.}
Recall $\Kshort = \{0, 1, \dots, \tau / 4\}^\ell$ and $\tau = 20 \log n \log \log n$.

We will always use $\mathcal{F}^{\vK, \vb}_{i, j}$ to denote the corresponding event under $\extwalk^{\vk^1, \vk^2}_{\ell,m,n,a}$. And recall that we use $\mathcal{F}^{\vK, \vb}_i$ to denote $\mathcal{F}^{\vK, \vb}_{i, K_i}$, and we define $\mathcal{F}^{\vK, \vb}_{\ell + 1}$ to be always true. 

For every $i \in \zeroTon{\ell}$, we use $\bg_{\le i}$ to denote the collection $(\bg_{1},\dotsc,\bg_{i})$. Similarly, we use $\br_{\le i}$ to denote the collection $(\br_{1},\dotsc,\br_{i})$.

For notation convenience, throughout this section, for $(g_{\le t},r_{\le t}) \in \supp((\bg_{\le t},\br_{\le t}))$, we will always use $g_{\le t} \wedge r_{\le t}$ to denote the event $\left[ \bg_{\le t}=g_{\le t} \wedge \br_{\le t} = r_{\le t} \right]$.

\begin{table}[h]
    \renewcommand{\arraystretch}{1.3}
    \begin{center}
    	\begin{tabular}{|l l|}
        \hline
        \textbf{Notation}  & \textbf{Meaning}  \\ \hline
		$\vidx^{\vk}$ or $\vidx[\vk]$ & the tree node determined by $\vk$ \\
		$\ell$ & number of components (sub-restrictions, levels) in $\calH_{\ell,m,n}$; number of levels; $\ell \le \log n$ \\
		$\tau$ & independence parameter in $\calH_{\ell,m,n}$; $\tau = 20\log n \log\log n$\\
		$(g_i,r_i)$ & components of hash function in $\calH_{\ell,m,n}$ \\
		$r_{\le i}, g_{\le i}$ & the sequence $(r_{1},\dotsc,r_{i})$ and $(g_{1},\dotsc,g_{i})$ \\
    	$\vK$ & a set of indices; subset of $\N^{\ell}$ \\
    	$P^{\vk}$ & the set of indices of all ancestors of $\vk$\\			
    	$\Kshort$ & $\zeroTon{\tau/4}^{\ell}$ \\
    	$\calB_{\vk}$ & set of two-dimensional sequence $\vb$ with values in $[n]$ and shape $\vk$ \\
    	$\vK_{i,j}$ & the $j$-th index among all level $i$ indices in $\cup_{\vk \in \vK} P^{\vk}$\\
    	$\vidx^{\vK}_{i,j}$ & the node $\mu^{\vK_{i,j}}$ \\
    	$K_i$ & the number of distinct level $i$ indices in $\cup_{\vk \in \vK} P^{\vk}$\\
    	$\pa^{\vK}_{i,j}$ & an index; the parent of $\vK_{i,j}$ \\
    	$\calF^{\vK,\vb}_{i,j}$ & the event that for all $(i',j')$ before or equal to $(i,j)$, $\mu_{i',j'}^{\vK}$ exists and $\nex(\mu_{i',j'}^{\vK}) = \vb_{i',j'}$\\
    	$\calF^{\vK,\vb}_{i}$ & the event $\calF^{\vK,\vb}_{i,K_i}$\\
    	$\tilde{\calG}^{\vK,\vb}_{i,j}$ & the event that for all $\a_i(\mu^{\vK}_{i,j'})$ are distinct for $1 \leq j' \leq j$\\
    	$\calG^{\vK,\vb}_{i}$ & the event $\calG^{\vK,\vb}_{i, K_i} \land \calG^{\vK,\vb}_{i+1, K_{i+1}} \land \cdots \land \calG^{\vK,\vb}_{\ell, K_\ell}$\\$p(\vk)$ & the path $p(\mu^{\vk})$ from root to $\mu^{\vk}$ \\
    	\hline
    	\end{tabular}
    \caption{Summary of Notation}
    \end{center}
\end{table}

We will need the following two lemmas that handle the total occurrences and the bad occurrences, respectively.

\begin{lemma}[Lower bounding the number of occurrences of $u, v$] \label{lem:good-case-2}
	Fix $\vk^1, \vk^2 \in \Kshort$ such that $\vk^1 < \vk^2$. Let $K = \{\vk^1, \vk^2\}$, $\bi = \vidx^{\vk^1} + 1$, and $\bj = \vidx^{\vk^2} + 1$. 

    For every two distinct vertices $u, v \in [n]$, it holds that
	\[
	\Pr\left[\bw^{\vk^1, \vk^2}_{\bi} = u \land \bw^{\vk^1, \vk^2}_{\bj} = v \right] =  \frac{2^{- \sum_{j = 1}^\ell K_j}}{n^2}.
	\]
\end{lemma}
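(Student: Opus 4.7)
By Lemma~\ref{obs:w-nex-2} the event $\bw^{\vk^1,\vk^2}_{\bi}=u\wedge \bw^{\vk^1,\vk^2}_{\bj}=v$ is equivalent to $\nex(\vidx^{\vk^1})=u\wedge \nex(\vidx^{\vk^2})=v$ (which implicitly forces both nodes to exist). Letting $(i_1,j_1),(i_2,j_2)$ denote the positions in $\vb$ corresponding to $\vk^1,\vk^2$ inside $\vK=\{\vk^1,\vk^2\}$, the plan is to decompose
\[
\Pr[\nex(\vidx^{\vk^1})=u\wedge \nex(\vidx^{\vk^2})=v] \;=\; \sum_{\vb\in\calB_\vK:\,b_{i_1,j_1}=u,\,b_{i_2,j_2}=v} \Pr[\calF^{\vK,\vb}_1].
\]
Since $\vk^1\ne\vk^2$ these two positions are distinct, so exactly $n^{\sum_i K_i-2}$ sequences survive the constraint. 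The task therefore reduces to proving, for every $\vb$, the \emph{exact equality}
\[
\Pr[\calF^{\vK,\vb}_1] \;=\; \frac{2^{-\sum_i K_i}}{n^{\sum_i K_i}},
\]
which is strictly stronger than the inequality of Lemma~\ref{induction-K} and is what the relaxed walk is designed to buy us.

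I will establish this equality by copying, almost verbatim, the inductive template of Lemma~\ref{lem:single-level}, Corollary~\ref{cor:single-level-whole} and Lemma~\ref{lem:induction}, now applied to the two-path union $\vK$. The single-level step reads: for every $i\in[\ell]$, $j\in[K_i]$, and every $g_{\le i-1}\in\supp(\bg_{\le i-1}),\, r_{\le i-1}\in\supp(\br_{\le i-1})$,
\[
\Pr\bigl[\calF^{\vK,\vb}_{i,j}\,\bigm|\,\calF^{\vK,\vb}_{i,j-1}\wedge g_{\le i-1}\wedge r_{\le i-1}\bigr] \;=\; \frac{1}{2n}.
\]
Once this is in hand, chaining across $j$ at fixed level and then across levels (using independence of $(\bg_i,\br_i)$ across levels) reproduces the factoring of Lemma~\ref{lem:induction} and yields the exact formula above. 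Summing over the admissible $\vb$'s then gives $n^{\sum_i K_i-2}\cdot 2^{-\sum_i K_i}/n^{\sum_i K_i}=2^{-\sum_i K_i}/n^2$, as required.

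The crux, and the main obstacle, is the key claim behind the single-level step: conditional on $\calF^{\vK,\vb}_{i,j-1}\wedge g_{\le i-1}\wedge r_{\le i-1}$, the values $\a_i(\vK_{i,1}),\ldots,\a_i(\vK_{i,j-1}),\a_i(\zeta^{i,j})$ must be deterministically fixed and \emph{pairwise distinct} as query points to $\bg_i,\br_i$. Determinacy follows from Lemma~\ref{obs:existence-K-2}\ref{item:determine-a-from-xi-K-2} together with Lemma~\ref{lem:only-depend}, since each $\tilde{x}_{j'}=\last(b_{\pa^{\vK}_{i,j'}},i-1)$ is a function of $(r_{\le i-1},g_{\le i-1},\vb)$ only. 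For distinctness I will split by which of $p^*(\vk^1),p^*(\vk^2)$ the index lies on: two indices on the same side give distinct $\a_i$ by the usual $C_j$-tracking / $\star_t$-replacement argument of Lemma~\ref{obs:determine-a-from-x-2}, while two indices on opposite sides are handled precisely by the inheritance step $C_0\gets C^{i,\vk^1}$ on Line~\ref{line:C_0-1} of Algorithm~\ref{algo:relaxedextwalk}, which is the entire point of the relaxation and underlies Lemma~\ref{obs:no-colision-k1-k2}. Some care is needed to formalize this cross-path distinctness in the \emph{conditional} setting, because Lemma~\ref{obs:no-colision-k1-k2} is phrased for outcomes in which $\vidx^{\vk^1},\vidx^{\vk^2}$ both exist; the event $\calF^{\vK,\vb}_{i,j-1}$ is the correct surrogate, since it forces all ancestors involved in defining $C^{i,\vk^1}$ already to exist. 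Granted the distinctness, the bound $K_i\le k^1_i+k^2_i\le\tau/2$ together with $\tau$-wise independence of $(\bg_i,\br_i)$ then finishes the single-level step exactly as in Lemma~\ref{lem:single-level}.
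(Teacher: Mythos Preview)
Your proposal is correct and follows essentially the same route as the paper: decompose over $\vb\in\calB_{\vK}$, prove the exact single-level step (the paper's Lemma~\ref{lem:single-level-2}) using that the $\a_i$-values are determined by $(\vK,\vb,g_{\le i-1},r_{\le i-1})$ and are pairwise distinct thanks to the $C_0\gets C^{i,\vk^1}$ inheritance (Lemma~\ref{obs:no-colision-k1-k2}), then chain across $j$ and across levels exactly as in Corollary~\ref{cor:single-level-whole-2} and Lemma~\ref{lem:induction-2}. Your remark that the distinctness needs to be argued in the conditional setting rather than literally via Lemma~\ref{obs:no-colision-k1-k2} is well taken; the paper invokes that lemma but the real content is, as you say, the deterministic $C_j$-tracking of Algorithm~\ref{algo:relaxedextwalk}.
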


\begin{lemma}[Upper bounding the bad occurrences of $u, v$] \label{lem:bad-case-2}
   Fix $\vk^1, \vk^2 \in \Kshort$ such that $\vk^1 < \vk^2$ and two distinct vertices $u, v \in [n]$ such that $a_u = a_v$. Let $C_{u} = \#\{i \in [n] \mid a_i = a_u\}$ denote the number of occurrences of $a_u$ in the input array $a$, $\bi = \vidx^{\vk^1} + 1$, and $\bj = \vidx^{\vk^2} + 1$.

	It holds that
	\begin{align*}
	&\Pr\left[\bw^{\vk^1, \vk^2}_{\bi} = u \land \bw^{\vk^1, \vk^2}_{\bj} = v \land \left[\exists (\alpha,\beta), 0 < \alpha < \beta < \bj \land a_{\bw^{\vk^1, \vk^2}_\alpha} = a_{\bw^{\vk^1, \vk^2}_\beta} \right] \right] \\ \le &\sum_{\substack{(\vk^3, \vk^4) \in (\N^\ell)^2\\ K = \{\vk^1, \vk^2, \vk^3, \vk^4\}}} \frac{2^{- \sum_{j=1}^\ell K_j} F_2(a)}{n^4} + 4\sum_{\substack{\vk^3 \in \N^\ell \\ K = \{\vk^1, \vk^2, \vk^3\}}} \frac{2^{- \sum_{j=1}^\ell K_j}  C_u}{n^3} + n^2\ell/2^{\tau/4}.
	\end{align*}
\end{lemma}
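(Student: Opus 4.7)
The proof will mirror the structure of the one-vertex bad-count bound (\Cref{lem:bad-case}), but now two ``anchor'' indices $\vk^1, \vk^2$ are already fixed by the statement and everything is done inside the relaxed walk $\bw^{\vk^1,\vk^2}$. Using \Cref{obs:w-nex-2}, the bad event rewrites as
\[
\nex(\vidx^{\vk^1})=u \;\land\; \nex(\vidx^{\vk^2})=v \;\land\; \bigl[\exists\,\vk^3\ne\vk^4\in\N^\ell:\ a_{\nex(\vidx^{\vk^3})}=a_{\nex(\vidx^{\vk^4})}\bigr],
\]
where $\vk^3,\vk^4$ are the indices of $\alpha-1,\beta-1$. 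The plan is then: (i) cut off long indices via \Cref{lem:elong-small}; (ii) union-bound over $(\vk^3,\vk^4)\in(\Kshort)^2$ together with a structural lemma enforcing no collisions among the four paths; (iii) apply an analogue of \Cref{induction-K} for the relaxed walk to bound each probability by $2^{-\sum K_j}/n^{\sum K_j}$; (iv) collect terms by case analysis on $\{\vk^3,\vk^4\}\cap\{\vk^1,\vk^2\}$. The $n^2\ell/2^{\tau/4}$ error absorbs the contribution of long indices (using $\Pr[\elong]\le n\ell/2^{\tau/4}$ together with the $1/n^2$ factor from pinning the values $u,v$, bounded trivially by $n^2\ell/2^{\tau/4}$ overall).

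\textbf{Structural step (main obstacle).} I will establish a four-path analogue of \Cref{lem:structure}: whenever the bad event holds in the relaxed walk, there exist $\vk^3,\vk^4$ witnessing the collision such that for any two of the paths $p(\vk^1),p(\vk^2),p(\vk^3),p(\vk^4)$ there is no collision (\Cref{def:collision}). Pairwise non-collision of $p(\vk^1),p(\vk^2)$ is free by \Cref{obs:no-colision-k1-k2}; the remaining six pairs require two iterated applications of the moving procedure of \Cref{cor:moving-1-path}/\Cref{cor:moving-2-paths}. First I would apply \Cref{cor:moving-2-paths} with the fixed path $p^*(\vk^1)$ to a witness $(\bar\alpha,\bar\beta)$, obtaining $\mu',\eta'$ free of collisions with $p^*(\vk^1)$ and with each other. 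Then I apply it again with the fixed path $p^*(\vk^2)$. The delicate point, and the main obstacle of the entire proof, is that the second application must not resurrect collisions with $p^*(\vk^1)$; I expect this to follow by noting that the ``moving'' operation in \Cref{cor:moving-1-path} only edits an ancestral segment above a common node $\pi_3\in p^*(\vk^2)$, and the absence of collision between $p(\vk^1)$ and $p(\vk^2)$ forces the moved ancestral segment to remain collision-free against $p^*(\vk^1)$ as well.

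\textbf{Probability estimate.} Extend \Cref{induction-K} to the relaxed walk. The only change in the proof lies in how $\a_i(\cdot)$ is computed at nodes that inherit $C_0=C^{i,\vk^1}$, which is already captured by \Cref{obs:determine-a-from-xi-2} and \Cref{obs:existence-K-2}. Consequently, for any $\vK\supseteq\{\vk^1,\vk^2\}$ with all its indices short, the proofs of \Cref{lem:single-level-K}, \Cref{lem:single-level-whole-K}, \Cref{induction-K} transport verbatim to the relaxed setting, yielding
\[
\Pr\bigl[\calF^{\vK,\vb}_{1}\land\calG^{\vK}_{1}\bigr]\le \frac{2^{-\sum_j K_j}}{n^{\sum_j K_j}}\qquad\text{for all }\vb\in\calB_{\vK}.
\]
The structural step guarantees we always land inside $\calG^{\vK}_1$, so the union bound after the cut-off becomes a sum over $(\vk^3,\vk^4)\in(\Kshort)^2$ of sums over admissible $\vb$'s multiplied by the above estimate.

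\textbf{Case analysis and counting.} Finally I split according to how $\{\vk^3,\vk^4\}$ interacts with $\{\vk^1,\vk^2\}$:
\begin{itemize}
\item If $\{\vk^3,\vk^4\}\cap\{\vk^1,\vk^2\}=\emptyset$, then $\vK=\{\vk^1,\vk^2,\vk^3,\vk^4\}$ has four distinct elements. The constraints $b_{\vk^1}=u$, $b_{\vk^2}=v$, $a_{b_{\vk^3}}=a_{b_{\vk^4}}$ leave at most $n^{\sum_j K_j-4}F_2(a)$ admissible $\vb$'s, producing the first term of the bound.
\item If exactly one of $\vk^3,\vk^4$ coincides with one of $\vk^1,\vk^2$, then $\vK$ has three elements and one of $\nex(\vidx^{\vk^3}),\nex(\vidx^{\vk^4})$ is pinned to $u$ or $v$; since $a_u=a_v$, the collision constraint forces the remaining index into the preimage of $a_u$, of size $C_u$, leaving at most $n^{\sum_j K_j-3}C_u$ admissible $\vb$'s. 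There are four such identifications ($\vk^3=\vk^1$, $\vk^3=\vk^2$, $\vk^4=\vk^1$, $\vk^4=\vk^2$), producing the second term with its prefactor $4$.
\item The cases $\vk^3=\vk^4$ or $\{\vk^3,\vk^4\}=\{\vk^1,\vk^2\}$ are either excluded by the hypothesis $\vk^3\ne\vk^4$ or by the constraint $\beta<\bj$ arising from the bad event.
\end{itemize}
Adding the three contributions (and the error $n^2\ell/2^{\tau/4}$ from long indices) yields exactly the bound claimed in the lemma.
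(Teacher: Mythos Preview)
Your overall scaffolding---rewrite via $\nex$, cut off long indices, structural lemma to force non-collision, then apply the $\vK$-induction and count admissible $\vb$'s---is exactly what the paper does. Two points are real gaps, however.

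\textbf{The structural step.} You propose to obtain collision-freeness against both $p^*(\vk^1)$ and $p^*(\vk^2)$ by applying \Cref{cor:moving-2-paths} twice, once per reference path. This does not work as stated: \Cref{cor:moving-1-path} and \Cref{cor:moving-2-paths} are proved for the \emph{original} extended walk $\bw$, and their engine---\Cref{lem:level_i_collision}, which says that subtrees rooted at colliding nodes coincide---relies on the fact that $\a_i(\cdot)$ is determined from $\x_i(\cdot)$ via \Cref{obs:determine-a-from-x} with empty initial $C_0$. In the relaxed walk $\bw^{\vk^1,\vk^2}$, nodes on $p^*(\vk^2)\setminus p^*(\vk^1)$ inherit a nonempty $C_0=C^{i,\vk^1}$, so the subtree-identity argument breaks there. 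The paper instead defines a combined reference set $\tilde p(\vk^1,\vk^2)$ (a carefully trimmed union of $p^*(\vk^1)$ and $p^*(\vk^2)$) and proves relaxed-walk analogues of the moving machinery (\Cref{lem:level_i_collision-2}, \Cref{lem:moving-1-path-2}, \Cref{cor:moving-2-paths-2}) that move both witness paths against $\tilde p(\vk^1,\vk^2)$ in one shot; these proofs are substantially more involved than their one-vertex counterparts.

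\textbf{Excluding $\{\vk^3,\vk^4\}=\{\vk^1,\vk^2\}$.} Your justification ``by the constraint $\beta<\bj$'' applies only to the \emph{original} witness $(\tilde\alpha,\tilde\beta)$. After moving, the new $\beta$ need not satisfy $\beta<\bj$; nothing in the moving corollary controls where the moved node lands. The paper's structural lemma \Cref{lem:structure-2} explicitly asserts $\{\alpha,\beta\}\neq\{\bi,\bj\}$ in its conclusion, and proving this is the hardest part of the argument: when the moved pair happens to satisfy $\beta=\bj$ (Case~3 in the proof), the paper locates a secondary witness $(\gamma_0,\gamma_1)$ inside the subtree between $\bar\beta-1$ and $\bj-1$ and moves that instead.

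A minor correction: the cut-off term $n^2\ell/2^{\tau/4}$ is exactly $\Pr[\elong]$ in the relaxed walk (\Cref{lem:elong-small-2}), not the original-walk bound from \Cref{lem:elong-small} combined with a $1/n^2$ factor. The extra factor of $n$ (compared to \Cref{lem:elong-small}) arises because the relaxed $\elong$ analysis must union-bound over an additional starting vertex.
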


We also need the following simple lemma to accompany the use of Lemma \ref{lem:good-case-2}. 

\begin{lemma}\label{lem:count-path-2}
If $2 \leq \ell < \log n$, we have
\[
\sum_{\substack{(\vk^1, \vk^2) \in (\Kshort)^2 \\ \vk^1 \neq \vk^2, K = \{\vk^1, \vk^2\}}}\prod_{i = 1}^\ell 2^{-K_i} \geq 2^{2 \ell - 1}.
\]
\end{lemma}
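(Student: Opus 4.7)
The plan is to parametrize the ordered pairs $(\vk^1, \vk^2) \in (\Kshort)^2$ with $\vk^1 \ne \vk^2$ by their \emph{split level}
\[
i^* := \max\{i \in [\ell] : k^1_i \ne k^2_i\},
\]
i.e., the highest coordinate at which the two indices disagree. Since $K_i$ is symmetric in $\vk^1$ and $\vk^2$, I may restrict the sum to pairs with $\vk^1 < \vk^2$ and multiply by $2$ at the end; by the definition of the ordering, this is equivalent to $k^1_{i^*} < k^2_{i^*}$ together with $k^1_i = k^2_i$ for all $i > i^*$.

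The key combinatorial step is to determine $K_i$ level by level. The level-$i$ indices in $P^{\vk}$ are exactly the sequences $(0,\dots,0,j,k_{i+1},\dots,k_\ell)$ for $j \in [k_i]$, so two such indices (one from $P^{\vk^1}$, one from $P^{\vk^2}$) coincide if and only if the tails $(k_{i+1},\dots,k_\ell)$ agree, in which case there are exactly $\min(k^1_i, k^2_i)$ coincident indices. For a pair with split level $i^*$ this yields three regimes: for $i > i^*$ the tails agree and $k^1_i = k^2_i$, so $K_i = k^1_i$; for $i = i^*$ the tails still agree but $k^1_{i^*} < k^2_{i^*}$, giving $K_{i^*} = k^2_{i^*}$; and for $i < i^*$ the tails differ (they disagree at coordinate $i^*$), so $K_i = k^1_i + k^2_i$. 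Consequently the restricted sum factors cleanly across levels:
\[
2 \sum_{\vk^1 < \vk^2} \prod_{i=1}^\ell 2^{-K_i} \;=\; 2 \sum_{i^*=1}^{\ell} S^{\ell - i^*} \cdot T \cdot S^{2(i^* - 1)} \;=\; 2T \sum_{i^*=1}^{\ell} S^{\ell + i^* - 2},
\]
where $S := \sum_{k=0}^{\tau/4} 2^{-k} = 2 - 2^{-\tau/4}$ and $T := \sum_{k=1}^{\tau/4} k \cdot 2^{-k} = 2 - (\tau/4 + 2) \cdot 2^{-\tau/4}$ are both exponentially close to $2$.

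To conclude, I would simply keep the $i^* = \ell$ term, giving the lower bound $2T \cdot S^{2\ell - 2} = 2^{2\ell - 1} \cdot T \cdot (S/2)^{2\ell - 2}$, so it suffices to verify $T \cdot (S/2)^{2\ell - 2} \ge 1$. By Bernoulli's inequality,
\[
T \cdot (S/2)^{2\ell - 2} \;\ge\; \bigl(2 - (\tau/4 + 2)\cdot 2^{-\tau/4}\bigr)\bigl(1 - (\ell - 1)\cdot 2^{-\tau/4}\bigr) \;\ge\; 2 - (\tau/4 + 2\ell) \cdot 2^{-\tau/4},
\]
and substituting $\tau = 20 \log n \log \log n$ and $\ell < \log n$ makes the subtracted term of order $O(\log n \log\log n)/n^{5 \log \log n} = o(1)$, so the right-hand side is $\ge 1$ with enormous room to spare. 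The only subtle step is the case analysis for $K_i$ using the combinatorial structure of $P^{\vk^1} \cup P^{\vk^2}$; everything after that is routine error tracking.
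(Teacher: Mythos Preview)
Your proof is correct, but it takes a genuinely different route from the paper's. The paper uses the crude bound $K_i \le k^1_i + k^2_i$ (valid since $K_i$ counts the union of the level-$i$ indices), which immediately gives $\prod_i 2^{-K_i} \ge \prod_i 2^{-k^1_i - k^2_i}$; summing this over all ordered pairs in $(\Kshort)^2$ (including the diagonal) gives $S^{2\ell}$, and then the diagonal contribution (at most $S^\ell \le 2^\ell$) is subtracted. This yields $S^{2\ell} - 2^\ell \ge 2^{2\ell}(1 - \ell \cdot 2^{-\tau/4} - 2^{-\ell}) \ge 2^{2\ell-1}$ using $\ell \ge 2$.

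Your approach instead computes $K_i$ \emph{exactly} by stratifying on the split level $i^*$, and your case analysis of $K_i$ in the three regimes $i > i^*$, $i = i^*$, $i < i^*$ is correct. This yields a closed-form expression for the full sum rather than just a lower bound, at the cost of the extra combinatorial step identifying $K_{i^*} = \max(k^1_{i^*}, k^2_{i^*})$ and the tail-disagreement argument for $i < i^*$. The paper's proof is shorter because it never needs to understand the structure of $P^{\vk^1} \cap P^{\vk^2}$; your proof buys you an exact evaluation (and in particular shows the sum is $\Theta(4^\ell)$ rather than just $\ge 2^{2\ell-1}$), which is more than what is needed here.
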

\begin{proof}

For $K = \{\vk^1, \vk^2\}$, we have

\[
\sum_{i=1}^{\ell} K_i \le \sum_{i=1}^{\ell} \sum_{j=1}^2 k^{j}_i.
\]

Hence we have
\begin{align*}
\sum_{\substack{(\vk^1, \vk^2) \in (\Kshort)^2 \\ \vk^1 \neq \vk^2, K = \{\vk^1, \vk^2\}}}\prod_{i = 1}^\ell 2^{-K_i} \geq &\sum_{\substack{(\vk^1, \vk^2) \in (\Kshort)^2 \\ K = \{\vk^1, \vk^2\}}}\prod_{i = 1}^\ell 2^{-K_i} - \sum_{\substack{\vk^1 \in \Kshort}} \prod_{i = 1}^\ell 2^{-k^1_i}\\
\geq &\sum_{\substack{(\vk^1, \vk^2) \in (\Kshort)^2 \\ K = \{\vk^1, \vk^2\}}}\prod_{j = 1}^{2} \prod_{i = 1}^\ell 2^{-k^j_i} - 2^{\ell} \\
= & ~ (2 - 2^{-\tau/4})^{2\ell} - 2^{\ell} \\
\geq & ~ 2^{2 \ell} (1 - \ell 2^{-\tau/4} - 2^{-\ell}) \tag{by $(1 - x)^n \geq 1 - nx$}\\
\geq & ~ 2^{2 \ell - 1},
\end{align*}
where the last step follows from the fact that $\ell 2^{-\tau/4} \leq \frac{\log n}{(\log n)^{\log n}} \leq \frac{1}{4}$ and $2^{-\ell} \leq \frac{1}{4}$. 
\end{proof}

Now we are ready to prove~\cref{lem:hit-lower-bound}.

\begin{proofof}{\cref{lem:hit-lower-bound}}

We let $\bi = \vidx^{\vk^1} + 1$ and $\bj = \vidx^{\vk^2} + 1$. By~\cref{lem:subset-of-extwalk-2}, we have 
\begin{align}
\Pr_{\bh,\bs}[u, v \in f_{a,\bh}^*(\bs)] &\geq \sum_{\substack{\vk^1 < \vk^2 \\ (\vk^1, \vk^2) \in (\Kshort)^2}} \Pr\left[\bw^{\vk^1, \vk^2}_{\bi} = u \land \bw^{\vk^1, \vk^2}_{\bj} = v \land \left[\forall \alpha \neq \beta \in [\bj - 1], a_{\bw^{\vk^1, \vk^2}_{\alpha}} \not= a_{\bw^{\vk^1, \vk^2}_{\beta}}\right]\right] \nonumber\\
&= \sum_{\substack{\vk^1 < \vk^2 \\ (\vk^1, \vk^2) \in (\Kshort)^2}} \bigg( \Pr\left[\bw^{\vk^1, \vk^2}_{\bi} = u \land \bw^{\vk^1, \vk^2}_{\bj} = v \right] \nonumber\\ &\hspace{0.2cm} - \Pr\left[\bw^{\vk^1, \vk^2}_{\bi} = u \land \bw^{\vk^1, \vk^2}_{\bj} = v \land \left[\exists (\alpha,\beta), 0 < \alpha < \beta < \bj \land a_{\bw^{\vk^1, \vk^2}_\alpha} = a_{\bw^{\vk^1, \vk^2}_\beta} \right] \right] \bigg) \label{equ:two-vertex-eq}
\end{align}

Plugging Lemma \ref{lem:good-case-2} and \ref{lem:bad-case-2} in \eqref{equ:two-vertex-eq}, we have
\begin{align}
\eqref{equ:two-vertex-eq} &\geq \frac{1}{2} \sum_{\substack{(\vk^1, \vk^2) \in (\Kshort)^2 \\ \vk^1 \neq \vk^2, K = \{\vk^1, \vk^2\}}} \frac{2^{-\sum_{j=1}^{\ell} K_j}}{n^2} -  \frac{1}{2}\sum_{\substack{(\vk^1, \vk^2, \vk^3, \vk^4) \in (\N^\ell)^4\\ K = \{\vk^1, \vk^2, \vk^3, \vk^4\}}} \frac{2^{- \sum_{j=1}^\ell K_j} F_2(a)}{n^4} \\ & \hspace*{0.5cm} - \frac{1}{2} \cdot 4 \cdot \sum_{\substack{(\vk^1, \vk^2, \vk^3) \in (\N^\ell)^3 \\ \vK = \{\vk^1, \vk^2, \vk^3\}}} \frac{2^{- \sum_{j=1}^\ell K_j}  C_u}{n^3} - \frac{1}{2}|\Kshort|^2 n^2\ell/2^{\tau/4} \label{equ:two-vertex-sum}
\end{align}

Here the first summation can be lower bounded using Lemma \ref{lem:count-path-2}, while the other two summations can be upper bounded by Lemma \ref{lem:count-path}. So we have,

\begin{align*}
\eqref{equ:two-vertex-sum} &\geq \frac{1}{2}\left(\frac{4^\ell}{4 n^2} -  4! \cdot 2^4 \cdot \frac{16^\ell F_2(a)}{n^4} - 4 \cdot 3! \cdot 2^3 \cdot \frac{8^\ell C_u}{n^3} - \tau^{2 \ell} n^2 \ell 2^{-\tau/4} \right) \\
&= \frac{1}{2}\left( \frac{1}{2^{22}F_2(a) } -  \frac{384}{2^{40}F_2(a) } - \frac{192  C_u}{2^{30}F_2(a)^{1.5} } - n^2 \ell 2^{2\ell \log \tau - \tau/4} \tag{$\ell = \log n - \frac{\log F_2(a)}{2} - 10$ and $C_u \leq \sqrt{F_2(a)}$} \right)\\
&\ge \Omega\left(\frac{1}{F_2(a)}\right).
\end{align*}

In the last step, we bound $n^2 \ell 2^{2\ell \log \tau - \tau/4} $ using the fact that $\tau \geq 20 \log n \log \log n, \ell \leq \log n$. Hence $\log \tau \leq 2 \log \log n$ and 
\[
n^2 \ell 2^{2\ell \log \tau - \tau/4} \leq n^2 \log n 2^{4 \log n \log \log n - 5\log n \log \log n} \le O\left(\frac{1}{n^{\log n - 3}}\right).
\]
\end{proofof}

\subsection{Counting Total Occurrences}

The following lemma is analogous to Lemma~\ref{lem:single-level}. We essentially mimic the proof of Lemma~\ref{lem:single-level}, and remark one place where we crucially use the fact that we are working with $\extwalk_{\ell,m,n,a}^{\vk^1,\vk^2}$ instead of $\extwalk_{\ell,m,n,a}$.

\begin{lemma} \label{lem:single-level-2}
	Fix $\vk^1, \vk^2 \in (\Kshort)^2$ such that $\vk^1 < \vk^2$. Let $\vK = \{\vk^1, \vk^2\}$ and fix $\vb \in \calB_{\vK}$. 
	
	\underline{In the joint probability space $(\bw^{\vk^1, \vk^2}, \bh, \bT)$}, suppose (as induction hypothesis) that the event $\calF^{\vK,\vb}_{i + 1}$ is independent of the joint random variable $(\bg_{\le i},\br_{\le i})$.

    Then, for all $i \in [\ell]$ and $j \in [K_i]$, letting $g_{\le i - 1} \in \supp(\bg_{\le i - 1})$ and $r_{\le i - 1} \in \supp(\br_{\le i - 1})$, it holds that
	\[
	\Pr \left[\calF^{\vK, \vb}_{i,j} \ \middle\vert \ \calF^{\vK, \vb}_{i, j - 1} \wedge g_{\le i - 1} \wedge r_{\le i - 1}\right] = \frac{1}{2n}.
	\]
\end{lemma}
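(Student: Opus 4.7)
The plan is to mirror the structure of the proof of \Cref{lem:single-level} from the one-vertex case, adapting it to the relaxed walk $\extwalk^{\vk^1,\vk^2}_{\ell,m,n,a}$. Let $\event_{\le i-1}$ denote $g_{\le i-1}\wedge r_{\le i-1}$, write $\zeta = \rig_i(\vidxb{\pa^{\vK}_{i,j}})$, and observe that by \Cref{obs:existence-K-2}\ref{item:exist-K-2} the event ``$\vidx^{\vK}_{i,j}$ exists'' is equivalent to $\bg_i(\a_i(\zeta))=1$, and when this holds $\nex(\vidx^{\vK}_{i,j}) = \br_i(\a_i(\zeta))$. Thus the target conditional probability equals
\[
\Pr\!\left[\bg_i(\a_i(\zeta))=1 \wedge \br_i(\a_i(\zeta))=b_{i,j}\ \middle\vert\ \calF^{\vK, \vb}_{i, j - 1} \wedge \event_{\le i-1}\right].
\]

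First, I would show that, conditioned on $\calF^{\vK,\vb}_{i,j-1}\wedge \event_{\le i-1}$, the quantities $\a_i(\vidx^{\vK}_{i,j'})$ for $j'<j$ and $\a_i(\zeta)$ are all determined by $(\vK,\vb,g_{\le i-1},r_{\le i-1})$. For each such index, \Cref{obs:existence-K-2}\ref{item:xi-K-2} gives $\x_i(\cdot) = \last(\nex(\cdot),i-1)$, and $\nex$-values along $\vK$'s ancestors at level~$i$ are exactly the corresponding entries of $\vb$ by $\calF^{\vK,\vb}_{i,j-1}$; by \Cref{lem:only-depend}, $\last(\cdot,i-1)$ is a function of $(g_{\le i-1},r_{\le i-1})$ alone. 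Combining these with the function $\mathsf{A}^{\vk^1,\vk^2,\vK,i,j}$ of \Cref{def:func-A} yields the claimed determinism. Hence, letting $c_{j'} = \a_i(\vidx^{\vK}_{i,j'})$ for $j'<j$ and $c_j = \a_i(\zeta)$, each $c_{j'}$ is a fixed element of $[m]\cup\{\star_*\}$ depending only on $(\vK,\vb,g_{\le i-1},r_{\le i-1})$.

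Next, I would rewrite $\calF^{\vK,\vb}_{i,j-1}\wedge\event_{\le i-1}$ as $\calF^{\vK,\vb}_{i+1}\wedge\event_{\le i-1}\wedge P(\br_i,\bg_i)$, where $P(r_i,g_i)$ is the predicate ``$g_i(c_{j'})=1$ and $r_i(c_{j'})=b_{i,j'}$ for every $j'<j$'' (the analogue of the step in \Cref{lem:single-level} that uses \Cref{obs:F-and-P}). Because $P$ depends only on $(r_i,g_i)$ and the fixed data $(\vK,\vb,g_{\le i-1},r_{\le i-1})$, and because the inductive hypothesis makes $\calF^{\vK,\vb}_{i+1}$ independent of $(\br_{\le i},\bg_{\le i})$, conditioning on $\calF^{\vK,\vb}_{i+1}\wedge\event_{\le i-1}$ does not change the distribution of $(\br_i,\bg_i)$, so the target probability reduces to
\[
\Pr[\bg_i(c_j)=1\wedge \br_i(c_j)=b_{i,j}\mid P(\br_i,\bg_i)].
\]

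The crux is then to certify that $c_j$ is \emph{distinct} from $c_1,\dots,c_{j-1}$, for otherwise $P$ would already fix $\br_i(c_j)$ and we could not get the clean $1/(2n)$ factor. Within each path this distinctness is automatic from \Cref{obs:determine-a-from-xi-2}, and across the two paths it follows from the relaxed walk's $C^{i,\vk^1}$-inheritance mechanism, formalized in \Cref{obs:no-colision-k1-k2}: the $C_0\gets C^{i,\vk^1}$ assignment at Line~\ref{line:C_0-1} of Algorithm~\ref{algo:relaxedextwalk} is precisely designed to force any would-be collision on the $\vk^2$-side to be replaced by a fresh $\star_t$, guaranteeing that all of $c_1,\dots,c_j$ are pairwise distinct. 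I expect this step---careful bookkeeping of how the inherited $C^{i,\vk^1}$ rules out collisions while still leaving $c_j$ a ``fresh'' query to $(\br_i,\bg_i)$ that is outside the at most $j-1\le K_i-1 \le 2\cdot(\tau/4)<\tau$ queries fixed by $P$---to be the main obstacle, since it requires tracking the precise form of $c_j$ produced by \Cref{obs:existence-K-2}\ref{item:determine-a-from-xi-K-2} in both branches of the $\bar C_0$ case analysis. Once distinctness is established, the $\tau$-wise independence of $\bg_i,\br_i$ delivers the value $1/(2n)$ exactly, completing the proof.
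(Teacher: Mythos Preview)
Your proposal is correct and follows essentially the same approach as the paper's proof: reduce the conditional probability to $\Pr[\bg_i(c_j)=1\wedge\br_i(c_j)=b_{i,j}\mid P(\br_i,\bg_i)]$ via \Cref{obs:existence-K-2}, determinism of the $c_{j'}$ from $(\vK,\vb,g_{\le i-1},r_{\le i-1})$, and the independence hypothesis; then invoke $\tau$-wise independence together with the distinctness $c_j\notin\{c_1,\dots,c_{j-1}\}$, which the paper (like you) gets from \Cref{obs:no-colision-k1-k2}. Your separation of within-path versus cross-path distinctness is slightly more explicit than the paper's single citation, but the argument is the same.
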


\begin{proof}
	Fix $i \in [\ell]$ and $j \in [K_i]$ and let $g_{\le i - 1} \in \supp(\bg_{\le i -1})$ and $r_{\le i - 1} \in \supp(\br_{\le i -1})$. We let $\event_{\le i - 1}$ denote the event $\Big[g_{\le i - 1} \land r_{\le i - 1}\Big]$ for convenience. Our goal is to show that
	\[
	\Pr\left[\calF^{\vK, \vb}_{i,j} \ \middle\vert \ \calF^{\vK, \vb}_{i, j - 1} \land \event_{\le i - 1} \right] = \frac{1}{2n}.
	\]

	By \Cref{obs:existence-K-2} \ref{item:exist-K-2}, $\vidx^{\vK}_{i,j}$ exists if and only if $g_i\left(\a_i\left(\rig_i\left(\vidxb{\pa^{\vK}_{i,j}}\right)\right)\right) = 1$. Then let us inspect how $\a_i\left(\rig_i\left(\vidxb{\pa^{\vK}_{i,j}}\right)\right)$ is determined. 

	By \Cref{obs:existence-K-2}~\ref{item:determine-a-from-xi-K-2}, $\a_i\left(\rig_i\left(\vidxb{\pa^{\vK}_{i,j}}\right)\right)$ is determined by $\x_i(\vidx^{\vK}_{i,1}), \dots, \x_i(\vidx^{\vK}_{i,j - 1}), \x_i\left(\rig_i\left(\vidxb{\pa^{\vK}_{i,j}}\right)\right)$. Note since $\calF^{\vK, \vb}_{i,j - 1}$ holds, by \Cref{obs:existence-K-2}~\ref{item:exist-equal-2}, $\rig_i\left(\vidxb{\pa^{\vK}_{i,j'}}\right) = \vidx^{\vK}_{i,j'}$ for $j' \in [j - 1]$. 
	
	Namely, $\a_i\left(\rig_i\left(\vidxb{\pa^{\vK}_{i,j}}\right)\right)$ is determined by all $\x_i\left(\rig_i\left(\vidxb{\pa^{\vK}_{i,j'}}\right)\right)$ for $j' \in [j]$.

	By \Cref{obs:existence-K-2} \ref{item:xi-K-2}, $\x_i\left(\rig_i\left(\vidxb{\pa^{\vK}_{i,j'}}\right)\right) = \last\left(\nex\left(\vidxb{\pa^{\vK}_{i,j'}}\right), i - 1\right)$. Conditioning on $\calF^{\vK, \vb}_{i,j' - 1}$ is true, let $i^{\pa}$ and $j^{\pa}$ be such that $\pa^{\vK}_{i,j'} = \vK_{i^{\pa}, j^{\pa}}$, we know $\nex\left(\vidxb{\pa^{\vK}_{i,j'}}\right) = b_{i^{\pa}, j^{\pa}}$. Thus $\nex\left(\vidxb{\pa^{\vK}_{i,j'}}\right)$ can be determined from $\vK, \vb, i, j'$.

	Moreover, by \Cref{lem:only-depend}, $\last(\cdot, i - 1)$ only depends on $\br_{\leq i - 1}$ and $\bg_{\leq i - 1}$.\footnote{Note $\walk^{\vk^1, \vk^2}(\cdot, i - 1, \cdot, \cdot)$ does not have such nice property. It not only depends on $\br_{\leq i - 1}$ and $\bg_{\leq i - 1}$ but also on $C^{i', \vk^1}$ for $i' \in [i - 1]$. This is why we are using $\last(\cdot, i - 1)$ instead of $\walk^{\vk^1, \vk^2}(\cdot, i - 1, \cdot, \cdot)$ in the relaxed extended walk (Algorithm~\ref{algo:relaxedextwalk}).} Therefore, conditioning on $\calF^{\vK, \vb}_{i,j - 1} \land \event_{\le i - 1}$, each $\x_i\left(\rig_i\left(\vidxb{\pa^{\vK}_{i,j'}}\right)\right)$ ($1 \leq j' \leq j$) is uniquely determined from $\vK, \vb, g_{\leq i - 1}, r_{\leq i - 1},i,j'$. Hence $\a_i\left(\rig_i\left(\vidxb{\pa^{\vK}_{i,j'}}\right)\right)$ is also uniquely determined for $1 \leq j' \leq j$. 

	Formally, for every $j' \in [j]$, let $\mu_{j'} = \rig_i\left(\vidxb{\pa^{\vK}_{i,j'}}\right)$. Note that $\vidx^{\vK}_{i,j}$ exists if and only if $\bg_i(\a_i(\mu_j)) = 1$. If $\mu^{\vK}_{i,j}$ exists, we know $\mu^{\vK}_{i,j} = \mu_j$ and $\nex(\mu_j) = \br_i(\a_i(\mu_j))$. So our goal is to show that $\bg_i(\a_i(\mu_j))=1 \land \br_i(\a_i(\mu_j)) = b_{i,j}$ indeed happens with probability $\frac{1}{2n}$. 
	
	For $r_i\in \supp(\br_{i})$ and $g_i \in \supp(\bg_{i})$, define a predicate
	\[
	P(r_i, g_i) \coloneqq \left[ \forall j' \in [j-1], g_i(\a(\mu_{j'})) = 1 \land r_i(\a(\mu_{j'})) = b_{i,j'} \right].
	\]
	By the discussion above, we have
	\[
	\calF^{\vK,\vb}_{i,j - 1} \land \event_{\leq i - 1}  = \calF^{\vK, \vb}_{i + 1} \land \event_{\leq i - 1} \land  P(r_i, g_i).
	\]
	
	As we have shown, for every $j' \in [j-1]$, $\a(\mu_{j'})$ is determined by $r_{\le i -1}, g_{\le i -1}$ and $\vK, \vb, i, j'$, so  $P(\br_i, \bg_i)$ only depends on the randomness of $(\br_i, \bg_i)$. (Note since it is defined using $r_{\le i - 1}, g_{\le i - 1}$ which we have fixed, it does not depend on the randomness of $\br_{\le i - 1}, \bg_{\le i - 1}$.)
	
	Together with our assumption, we know that the event $\calF^{\vK, \vb}_{i + 1} \land \event_{\le i - 1}$ is still independent of $\br_i, \bg_i$ when conditioning on $P(\br_i, \bg_i)$. 

	Hence,  \begin{align*}
	&\Pr\left[\calF^{\vK, \vb}_{i,j} \ \middle\vert \ \calF^{\vK, \vb}_{i, j - 1} \land \event_{\le i - 1}\right] \\ 
	= &\Pr\left[\bg_i(\a(\mu_j)) = 1 \wedge \br_i(\a(\mu_j)) = b_{i,j} \ \middle\vert \ \calF^{\vK, \vb}_{i, j - 1} \land \event_{\le i - 1}\right]\\
	= &\Pr\left[\bg_{i}(\a_i(\mu_j)) = 1 \wedge \br_{i}(\a_i(\mu_j)) = b_{i,j}\ \middle\vert \ \calF^{\vK, \vb}_{i + 1} \wedge \event_{\le i - 1} \wedge  P(\br_i, \bg_i)\right] \\
	= &\Pr\left[\bg_{i}(\a_i(\mu_j)) = 1 \wedge \br_{i}(\a_i(\mu_j)) = b_{i,j}\ \middle\vert \ P(\br_i, \bg_i)\right] \\
	= & ~\frac{1}{2n}.
	\end{align*}

	The last step follows from the fact that $\bg_i(\cdot)$ and $\br_i(\cdot)$ are $ \tau$-wise independent, $j \le K_i \le  \tau$, and since $\vK = \{\vk^1, \vk^2\}$, by \Cref{obs:no-colision-k1-k2}, $\a_i(\mu_j) \ne \a_i(\mu_{j'})$ for every $j' \in [j-1]$.

	\paragraph*{Remark.} In the last step, we crucially used the fact that we are working with $\extwalk_{\ell,m,n,a}^{\vk^1,\vk^2}$ instead of $\extwalk_{\ell,m,n,a}$, since for $\extwalk_{\ell,m,n,a}$, there may be collisions between $p(\vk^1)$ and $p(\vk^2)$, but for $\extwalk^{\vk^1, \vk^2}_{\ell,m,n,a}$, \Cref{obs:no-colision-k1-k2} guarantees that there is no such collision. 
\end{proof}

Repeated applications of Lemma \ref{lem:single-level-2} lead to the following corollary. We omit the proof of Corollary~\ref{cor:single-level-whole-2} since it is identical to that of Corollary~\ref{cor:single-level-whole}.

\begin{cor} \label{cor:single-level-whole-2}
    Fix $\vk^1, \vk^2 \in \N^\ell$ such that $\vk^1 < \vk^2$. Let $\vK = \{\vk^1, \vk^2\}$ and fix $\vb \in \calB_{\vK}$. 
	
	\underline{In the joint probability space $(\bw^{\vk^1, \vk^2}, \bh, \bT)$}, suppose (as induction hypothesis) that the event $\calF^{\vK,\vb}_{i + 1}$ is independent of the joint random variable $(\bg_{\le i},\br_{\le i})$.
	Then, for all $i \in [\ell]$, letting $g_{\le i - 1} \in \supp(\bg_{\le i - 1})$ and $r_{\le i - 1} \in \supp(\br_{\le i - 1})$, it holds that
	\[
	\Pr\left[\calF^{\vK, \vb}_i \ \middle\vert \ \calF^{\vK, \vb}_{i + 1} \wedge g_{\le i - 1} \wedge r_{\le i - 1} \right] = \frac{2^{-K_i}}{n^{K_i}}.
	\]
\end{cor}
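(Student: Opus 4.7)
My plan is to replay the proof of Corollary~\ref{cor:single-level-whole} verbatim, substituting Lemma~\ref{lem:single-level-2} for Lemma~\ref{lem:single-level} as the single-step probability estimate. The first step would be to unpack the definitions: by the definition of $\calF^{\vK,\vb}_{i,j}$, we have $\calF^{\vK,\vb}_i = \calF^{\vK,\vb}_{i,K_i}$, while $\calF^{\vK,\vb}_{i,0}$ imposes no constraints at level $i$ and the same constraints strictly above level $i$ as $\calF^{\vK,\vb}_{i+1}$, so $\calF^{\vK,\vb}_{i,0} = \calF^{\vK,\vb}_{i+1}$.

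Next, I would telescope via the chain rule, letting $\event_{\le i-1}$ abbreviate $g_{\le i-1} \wedge r_{\le i-1}$:
\[
\Pr\!\left[\calF^{\vK,\vb}_i \,\middle\vert\, \calF^{\vK,\vb}_{i+1} \wedge \event_{\le i-1}\right]
= \Pr\!\left[\calF^{\vK,\vb}_{i,K_i} \,\middle\vert\, \calF^{\vK,\vb}_{i,0} \wedge \event_{\le i-1}\right]
= \prod_{j=1}^{K_i} \Pr\!\left[\calF^{\vK,\vb}_{i,j} \,\middle\vert\, \calF^{\vK,\vb}_{i,j-1} \wedge \event_{\le i-1}\right].
\]
Then I would apply Lemma~\ref{lem:single-level-2} to each factor on the right, which requires exactly the induction hypothesis that $\calF^{\vK,\vb}_{i+1}$ is independent of $(\bg_{\le i}, \br_{\le i})$. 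Each factor equals $\frac{1}{2n}$, and multiplying $K_i$ copies yields $\left(\frac{1}{2n}\right)^{K_i} = \frac{2^{-K_i}}{n^{K_i}}$, which is precisely the claim.

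There is essentially no obstacle here beyond careful bookkeeping: the independence hypothesis carries over unchanged from the corollary statement to each invocation of Lemma~\ref{lem:single-level-2}, and the range $j \in [K_i]$ in the telescoping product lines up with the range $j \in [K_i]$ in the lemma. The nontrivial work of handling the collision structure between the two paths $p(\vk^1)$ and $p(\vk^2)$ has already been absorbed into Lemma~\ref{lem:single-level-2} (via \Cref{obs:no-colision-k1-k2}), so passing from the single-path Corollary~\ref{cor:single-level-whole} to the two-path version here is purely mechanical.
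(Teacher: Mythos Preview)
Your proposal is correct and takes essentially the same approach as the paper: the paper explicitly omits the proof, stating it is identical to that of Corollary~\ref{cor:single-level-whole}, and your telescoping via Lemma~\ref{lem:single-level-2} is precisely that identical argument with $K_i$ in place of $k_i$.
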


Iteratively applying \cref{cor:single-level-whole-2}, we can obtain the following lemma. We omit its proof since it can be proved in the same way as 
\cref{lem:induction}.

\begin{lemma}\label{lem:induction-2}
    Fix $\vk^1, \vk^2 \in \N^\ell$ such that $\vk^1 < \vk^2$. Let $\vK = \{\vk^1, \vk^2\}$ and fix $\vb \in \calB_{\vK}$. 
    
    \underline{In the joint probability space $(\bw^{\vk^1, \vk^2}, \bh, \bT)$}, for all $i \in [\ell+1]$, letting $g_{\le i - 1} \in \supp(\bg_{\le i - 1})$ and $r_{\le i - 1} \in \supp(\br_{\le i - 1})$, it holds that
	\[
	\Pr\left[\calF^{\vK, \vb}_i \ \middle\vert \ g_{\le i - 1} \wedge r_{\le i - 1}\right] = \frac{2^{-\sum_{j=i}^\ell K_j}}{n^{\sum_{j=i}^\ell K_j}}.
	\]
\end{lemma}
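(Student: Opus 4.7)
The plan is to mimic the proof of Lemma~\ref{lem:induction} essentially verbatim, carrying out a downward induction on $i$ starting from $i = \ell+1$. The base case $i = \ell+1$ is immediate: by convention $\calF^{\vK,\vb}_{\ell+1}$ is always true, so the conditional probability equals $1$, matching the empty product $2^{-0}/n^0 = 1$.

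For the inductive step, suppose the identity holds at level $i+1$. The key preliminary observation is that the hypothesis, applied with any $g_{\le i} \in \supp(\bg_{\le i})$ and $r_{\le i} \in \supp(\br_{\le i})$, gives
\[
\Pr\left[\calF^{\vK,\vb}_{i+1} \ \middle\vert \ g_{\le i} \wedge r_{\le i}\right] = \frac{2^{-\sum_{j=i+1}^\ell K_j}}{n^{\sum_{j=i+1}^\ell K_j}},
\]
a value that does not depend on the specific choice of $(g_{\le i}, r_{\le i})$. This shows that $\calF^{\vK,\vb}_{i+1}$ is independent of the joint random variable $(\bg_{\le i}, \br_{\le i})$, which is exactly the premise required by Corollary~\ref{cor:single-level-whole-2}. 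Next, I decompose the target probability by conditioning on $\calF^{\vK,\vb}_{i+1}$. Since $\calF^{\vK,\vb}_i$ is a sub-event of $\calF^{\vK,\vb}_{i+1}$, Proposition~\ref{prop:cond-fact} yields
\[
\Pr\left[\calF^{\vK,\vb}_i \ \middle\vert \ g_{\le i-1} \wedge r_{\le i-1}\right] = \Pr\left[\calF^{\vK,\vb}_i \ \middle\vert \ \calF^{\vK,\vb}_{i+1} \wedge g_{\le i-1} \wedge r_{\le i-1}\right] \cdot \Pr\left[\calF^{\vK,\vb}_{i+1} \ \middle\vert \ g_{\le i-1} \wedge r_{\le i-1}\right].
\]

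The first factor equals $2^{-K_i}/n^{K_i}$ by Corollary~\ref{cor:single-level-whole-2} (whose premise I just verified). For the second factor, averaging the induction hypothesis over $(g_i, r_i) \getsR (\bg_i, \br_i)$ gives $2^{-\sum_{j=i+1}^\ell K_j}/n^{\sum_{j=i+1}^\ell K_j}$. Multiplying the two factors produces $2^{-\sum_{j=i}^\ell K_j}/n^{\sum_{j=i}^\ell K_j}$, closing the induction.

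Since each ingredient (Corollary~\ref{cor:single-level-whole-2}, Proposition~\ref{prop:cond-fact}, and the convention that $\calF^{\vK,\vb}_{\ell+1}$ is trivially true) has already been established in this section, I do not anticipate any real obstacle. The only piece of bookkeeping deserving care is to verify that the independence of $\calF^{\vK,\vb}_{i+1}$ from $(\bg_{\le i}, \br_{\le i})$ indeed follows from the induction hypothesis holding for every fixing of $(g_{\le i}, r_{\le i})$, which is exactly the argument used in the proof of Lemma~\ref{lem:induction}.
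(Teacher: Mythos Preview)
Your proposal is correct and takes essentially the same approach as the paper; indeed, the paper explicitly omits the proof of Lemma~\ref{lem:induction-2} because it is identical to that of Lemma~\ref{lem:induction}, and your write-up faithfully reproduces that argument with Corollary~\ref{cor:single-level-whole-2} in place of Corollary~\ref{cor:single-level-whole}.
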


Finally we can prove \cref{lem:good-case-2}, which give a lower bound on the number of good occurrences. 

\begin{reminder}{Lemma~\ref{lem:good-case-2}}
	Fix $\vk^1, \vk^2 \in \Kshort$. Let $K = \{\vk^1, \vk^2\}$, $\bi = \vidx^{\vk^1} + 1$, and $\bj = \vidx^{\vk^2} + 1$. 

    For every two distinct vertices $u, v \in [n]$, it holds that
	\[ \Pr\left[\bw^{\vk^1, \vk^2}_{\bi} = u \land \bw^{\vk^1, \vk^2}_{\bj} = v \right] =  \frac{2^{- \sum_{j = 1}^\ell K_j}}{n^2}. \]
\end{reminder}

\begin{proofof}{\Cref{lem:good-case-2}}
The proof is similar to the first half of the proof of \cref{lem:good-case}. 
	For $\vb \in \calB_{\vK}$, we say that $\vb$ is \emph{good}, if $\nex(\vidx^{\vk^1}) = u$ and $\nex(\vidx^{\vk^2}) = v$ according to $\vb$. 
	
	We first break the probability into the contributions of all possible $\vb \in \calB_{\vK}$,
	\[
	\Pr\left[\bw^{\vk^1, \vk^2}_{\vidx^{\vk^1} + 1} = u \land \bw^{\vk^1, \vk^2}_{\vidx^{\vk^2} + 1} = v \right] = \sum_{\vb \in \calB_{\vK}} \Pr[\calF^{\vK,\vb}_{1} \wedge \text{$\vb$ is good}].
	\]
	By Lemma \ref{lem:induction-2}, for every sequence $\vb \in \calB_{\vK}$, it holds that
	\[
	\Pr\left[\calF^{\vK, \vb}_1 \right] = \frac{2^{-\sum_{j=1}^\ell K_j}}{n^{\sum_{j=1}^\ell K_j}}.
	\]
	
	There are $n^{\sum_{j = 1}^\ell K_j}$ many sequences $\vb \in \calB_{\vK}$, and $n^{\sum_{j = 1}^\ell K_j - 2}$ of them satisfy $\nex(\vidx^{\vk^1}) = u$ and $\nex(\vidx^{\vk^2}) = v$ (\ie, they are good). Thus, we have
	\[
	\sum_{\vb \in \calB_{\vK}} \Pr\left[\calF^{\vk,\vb}_{1} \wedge \text{$\vb$ is good}\right] = \frac{2^{-\sum_{j = 1}^\ell K_j}}{n^2}. \qedhere
	\]
\end{proofof}

\subsection{Upper Bounding the Bad Occurrences} 

Now we generalize Lemma \ref{lem:structure} to two vertices. Its proof is defered to Appendix \ref{sec:appendix-structure}. We first recall the definition of collisions between two paths.

\begin{reminder}{Definition~\ref{def:collision}}
Let $p(\mu)$ denote the path on $T$ from root to node $\mu$. We say there is a \emph{collision between two paths} $p_1$ and $p_2$ if there are two nodes $\alpha$ and $\beta$ such that $\alpha \in p_1 \setminus p_2, \beta \in p_2 \setminus p_1$, and $(\a_{\level(\alpha)}(\alpha),\level(\alpha)) = (\a_{\level(\beta)}(\beta),\level(\beta))$.
\end{reminder}

We also say that there is no collision between a set of paths if there is no collision between any two of the paths in the set.

\begin{lemma}\label{lem:structure-2}
Let $u,v\in [n]$ be such that $u \not= v$ and $a_u = a_v$. Fix $(w^{\vk^1, \vk^2},T) \in \supp((\bw^{\vk^1, \vk^2},\bT))$. Let $\i = \vidx^{\vk^1} + 1$, $\j = \vidx^{\vk^2} + 1$ and assume that $w_\i = u$ and $w_\j = v$.  If there is a pair of nodes $(\tilde{\alpha},\tilde{\beta})$ such that $0 < \tilde{\alpha} < \tilde{\beta} < \j$ and $a_{w_{\tilde{\alpha}}} = a_{w_{\tilde{\beta}}}$, then the following holds:
\begin{itemize}
	\item There are two nodes $\alpha$ and $\beta$ such that $\alpha \not= \beta$, $a_{w_\alpha} = a_{w_\beta}$, $\{\alpha, \beta \} \neq \{\i, \j\}$, and there is no collision between $p(\i - 1), p(\j - 1), p(\alpha - 1)$, and $p(\beta - 1)$.
\end{itemize} 
\end{lemma}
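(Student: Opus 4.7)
The plan is to mirror the proof of Lemma~\ref{lem:structure} while adapting to the relaxed walk $\bw^{\vk^1,\vk^2}$ and to the presence of two reference paths $p(\vk^1) = p(\phi - 1)$ and $p(\vk^2) = p(\psi - 1)$ instead of a single one. The crucial new ingredient is Lemma~\ref{obs:no-colision-k1-k2}, which guarantees that $p(\vk^1)$ and $p(\vk^2)$ themselves do not collide in the relaxed walk; this lets us treat the combined extension $p^*(\vk^1)\cup p^*(\vk^2)$ essentially as the single reference structure to which Corollaries~\ref{cor:moving-1-path} and~\ref{cor:moving-2-paths} should be generalized.

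First I would pick the first duplicate $(\bar\alpha,\bar\beta) = \arg\min_{(\alpha,\beta)}\{\beta : 0<\alpha<\beta,\ a_{w_\alpha}=a_{w_\beta}\}$ in $w^{\vk^1,\vk^2}$. Because the hypothesized pair $(\tilde\alpha,\tilde\beta)$ satisfies $\tilde\beta<\psi$, we get $\bar\beta\le \tilde\beta<\psi$, so already $\{\bar\alpha,\bar\beta\}\ne\{\phi,\psi\}$. Following the first part of the proof of Lemma~\ref{lem:structure}, I would use the minimality of $\bar\beta$ to rule out any $\star_*$ among the $\a(\cdot)$-values along $p(\bar\alpha - 1)\cup p(\bar\beta - 1)$, and thereby conclude there is no collision between $p(\bar\alpha - 1)$ and $p(\bar\beta - 1)$. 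The only new subtlety is the relaxation step $C_0 \gets C^{i,\vk^1}$ at Line~\ref{line:C_0-1} of Algorithm~\ref{algo:relaxedextwalk}, which can inject an additional $\star_*$ on $p^*(\vk^2)$; such a $\star$ would however reflect a level-$i$ node on $p(\vk^1)$ sharing its $\a_i$-value, which together with the first-duplicate property would produce a smaller duplicate than $\bar\beta$, contradicting minimality.

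The main work is then to remove collisions between $\{p(\bar\alpha - 1),p(\bar\beta - 1)\}$ and $\{p(\vk^1),p(\vk^2)\}$. I would establish generalizations of Lemma~\ref{lem:level_i_collision} and Corollaries~\ref{cor:moving-1-path},~\ref{cor:moving-2-paths} in which the reference extension is $p^*(\vk^1)\cup p^*(\vk^2)$. The core combinatorial step, that Algorithm~\ref{algo:find-pi3} can track a unique ``matching level-$i$ ancestor'' in the reference, still goes through, because by Lemma~\ref{obs:no-colision-k1-k2} any level-$i$ node of the union has a distinct $\a_i$-value and hence the candidate $\pi_3$ is unique. Applying the generalized moving argument once to $\bar\alpha - 1$ and once to $\bar\beta - 1$ yields $\alpha - 1,\beta - 1$ with no collision against $p(\vk^1)$ or $p(\vk^2)$; the analog of Corollary~\ref{cor:moving-2-paths} ensures they do not collide with each other either; and the analog of property~\eqref{prop:same-a} preserves $a_{w_\alpha} = a_{w_\beta}$. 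Finally, since the analog of property~\eqref{prop:same-below-i} preserves the prefix of the index above the moving level and $\bar\beta<\psi$, the moved $\beta$ still satisfies $\beta<\psi$, giving $\{\alpha,\beta\}\ne\{\phi,\psi\}$.

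The main obstacle I anticipate is the bookkeeping introduced by the relaxation: in the original Lemma~\ref{lem:level_i_collision} the proof uses the literal equality of the subwalks $\walk(\nex(\alpha),i-1)$ and $\walk(\nex(\beta),i-1)$, which relies on both being launched with empty $C_0$. In the relaxed walk this equality fails whenever one of the two nodes sits on $p^*(\vk^2)\setminus p^*(\vk^1)$ while the other does not, because then one subwalk is launched with $C_0 = C^{i,\vk^1}$ and the other with $C_0 = \emptyset$. The fix is to choose the moved target $\sigma'$ on the same ``side'' of the union $p^*(\vk^1)\cup p^*(\vk^2)$ as the witness $\pi_3$, and to argue that under this side-consistency the two subwalks agree modulo the $\star$-substitutions forced by $C^{i,\vk^1}$; once this is established, Lemma~\ref{obs:no-colision-k1-k2} prevents the two sides from interacting in any way that could spoil the generalized moving argument, and the combinatorial core of the original proof of Lemma~\ref{lem:structure} transfers without essential change.
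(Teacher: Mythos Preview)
Your overall plan—pick a minimal duplicate, show it is collision-free, then ``move'' it into the reference structure—matches the paper's strategy.  However, the final step, where you conclude $\{\alpha,\beta\}\ne\{\phi,\psi\}$, does not go through, and this is exactly the place where the paper has to do substantial extra work.

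The claim ``property~\eqref{prop:same-below-i} preserves the prefix of the index above the moving level'' is backwards.  The moving step (Corollary~\ref{cor:moving-1-path} / its two-path analog) replaces the part of the index \emph{above} level $i$ by that of $\pi_3$, and only preserves the part \emph{below} level $i$.  Since $\pi_3$ can sit anywhere in the reference structure, in particular on $p^*(\vk^2)$, the moved node $\beta-1$ can land on or beyond $\mu^{\vk^2}=\psi-1$.  So $\bar\beta<\psi$ gives you nothing about $\beta$.  Concretely, the paper's proof shows that when $a_{w_{\bar\alpha}}=a_{w_\phi}$ one can have $\bar\alpha=\phi$ and, after moving, $\beta=\psi$; this is precisely their ``Case~3'' and it requires a genuinely new idea: locating a separating node $\tau$ between $\bar\beta-1$ and $\psi-1$, extracting a \emph{different} duplicate $(\gamma_0,\gamma_1)$ localized between $\pa(\tau)$ and $\psi-1$, and only then applying the moving lemma to $(\gamma_0,\gamma_1)$.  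Without this step the lemma fails in that case.

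A secondary issue is the choice of reference structure.  The simple union $p^*(\vk^1)\cup p^*(\vk^2)$ is not what the paper uses; they work with a carefully truncated object $\tilde p(\vk^1,\vk^2)$ (Definition~\ref{def:p_tilde}) precisely so that when the walking argument in the analog of Algorithm~\ref{algo:find-pi3} falls off $p^*(\vk^1)$ it can always jump back into $p^*(\vk^2)$.  Relatedly, in the relaxed walk the case $\a_i(\mu^{\vk}_{i,k_i})=\star_*$ can occur (unlike in Lemma~\ref{lem:level_i_collision}), and the paper needs a second, different version of the $\pi_3$-finding algorithm to handle it.  Your proposal acknowledges the $C_0$ mismatch but does not address either of these points.
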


Now we extend \Cref{def:vc} to our relaxed walk $\walk^{\vk^1, \vk^2}$.

\begin{definition} \label{def:vc-2}
Fix $\vk^1, \vk^2 \in \N^\ell$ such that $\vk^1 < \vk^2$ and $\vK \subseteq \N^\ell$ such that $\vk^1, \vk^2 \in \vK$. Fix $i \in [\ell]$, $\vb \in \calB_{\vK}$, and $(r_{\leq i - 1}, g_{\leq i - 1}) \in \supp(\br_{\leq i - 1}, \bg_{\leq i - 1})$. Let $\{\mathsf{A}^{\vk^1, \vk^2, \vK,i,j}\}_{j \in [K_i]}$ be the family of functions defined in \Cref{def:func-A}. $\vc(\vK, \vb, i, r_{\leq i - 1}, g_{\leq i - 1})$ is a vector of length $K_i$ defined as follows:

For each $j \in [K_i]$, let $i^{\pa}$ and $j^{\pa}$ be such that $\pa(\vK_{i,j}) = \vK_{i^{\pa}, j^{\pa}}$. Note here $i^{\pa}, j^{\pa}$ can be determined from $\vK, i$, and $j$. We let $x_j = \last(b_{i^{\pa},j^{\pa}}, i - 1)$. This is well-defined since by Lemma \ref{lem:only-depend}, $\last(\cdot, i - 1)$ only depends on $r_{\leq i - 1}$ and $g_{\leq i - 1}$.

Then for each $j \in [K_i]$, we let 

$$\left[\vc(\vK, \vb, r_{\leq i - 1}, g_{\leq i - 1})\right]_{j} = \mathsf{A}^{\vk^1, \vk^2, \vK,i,j}(x_1, x_2, \dots, x_j).$$
 
When $\vK, \vb, i, r_{\leq i - 1}$, and $g_{\leq i - 1}$ are clear from context, we drop them and simply write $\vc$. 
\end{definition}

\begin{lemma}\label{obs:determined-K-2}
Fix $(w^{\vk^1, \vk^2}, T) \in \supp(\bw^{\vk^1, \vk^2}, \bT)$ and level $i \in [\ell]$, and fix $\vK = \{\vk^1, \vk^2, \dots, \vk^t\}$, $j \in [K_i]$, $\vb \in \calB_{\vK}$, $r_{\leq i - 1}$ and $g_{\leq i - 1}$. We defined $\zeta^{i,j} = \rig_i\left(\vidxb{\pa^{\vK}_{i,j}}\right)$. Let $\vc$ be defined in \Cref{def:vc-2}. Assuming $\calF^{\vK, \vb}_{i,j - 1}$ holds, we have $\a_i(\zeta^{i,j}) = c_j$.

\end{lemma}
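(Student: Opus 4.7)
My plan is to follow essentially the same template as the proof of \Cref{obs:determined-K} for the non-relaxed walk, but substituting the two-path analogues of the intermediate lemmas. Concretely, the goal is to show that each coordinate $x_{j'}$ appearing in \Cref{def:vc-2} literally equals $\x_i$ evaluated at the appropriate node on the extended walk, so that applying the function family $\mathsf{A}^{\vk^1,\vk^2,\vK,i,j}$ from \Cref{def:func-A} yields exactly $\a_i(\zeta^{i,j})$.

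First, I would import the notation $x_{j'}, i^{\pa}, j^{\pa}$ from \Cref{def:vc-2}. Conditioning on $\calF^{\vK,\vb}_{i,j-1}$, the node $\vidxb{\pa^{\vK}_{i,j}} = \vidxb{\vK_{i^{\pa},j^{\pa}}}$ exists and satisfies $\nex(\vidxb{\vK_{i^{\pa},j^{\pa}}}) = b_{i^{\pa},j^{\pa}}$ by the definition of the event. Then \Cref{obs:existence-K-2}~\ref{item:xi-K-2} gives
\[
\x_i(\rig_i(\vidxb{\pa^{\vK}_{i,j}})) \;=\; \last(\nex(\vidxb{\pa^{\vK}_{i,j}}),\, i-1) \;=\; \last(b_{i^{\pa},j^{\pa}},\, i-1) \;=\; x_j,
\]
where the last equality is \Cref{def:vc-2}.

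Next, I would apply the same reasoning for every $j' \in [j-1]$: since $\calF^{\vK,\vb}_{i,j-1}$ asserts that $\vidx^{\vK}_{i,j'}$ exists for all such $j'$, \Cref{obs:existence-K-2}~\ref{item:exist-equal-2} gives $\rig_i(\vidxb{\pa^{\vK}_{i,j'}}) = \vidx^{\vK}_{i,j'}$, so repeating the calculation above yields $x_{j'} = \x_i(\vidx^{\vK}_{i,j'})$. This matches exactly the $\tilde{x}_{j'}$ values appearing as the input to $\mathsf{A}^{\vk^1,\vk^2,\vK,i,j}$ in \Cref{def:func-A}. Invoking that definition,
\[
\a_i(\zeta^{i,j}) \;=\; \mathsf{A}^{\vk^1,\vk^2,\vK,i,j}(x_1,x_2,\dots,x_j) \;=\; c_j,
\]
where the second equality is simply the defining formula for $\vc$ in \Cref{def:vc-2}.

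I do not anticipate a real obstacle; the lemma is a bookkeeping statement that bundles \Cref{obs:existence-K-2}, \Cref{def:func-A}, and \Cref{def:vc-2} together. The only thing worth being careful about is that the function family in \Cref{def:func-A} was defined to act on the $\x_i$-values of ancestors and of $\rig_i(\vidxb{\pa^{\vK}_{i,j}})$, while \Cref{def:vc-2} feeds it the deterministically computed $x_{j'} = \last(b_{i^{\pa},j^{\pa}},i-1)$ values; the step above verifies these two sequences coincide under the conditioning event, which is exactly what allows the substitution.
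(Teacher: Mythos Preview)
Your proposal is correct and follows essentially the same route as the paper's proof: both arguments verify that the deterministically computed values $x_{j'} = \last(b_{i^{\pa},j^{\pa}}, i-1)$ coincide with $\x_i(\vidx^{\vK}_{i,j'})$ (for $j'<j$) and with $\x_i(\rig_i(\vidxb{\pa^{\vK}_{i,j}}))$ (for $j'=j$) via \Cref{obs:existence-K-2}~\ref{item:xi-K-2} and~\ref{item:exist-equal-2}, and then invoke \Cref{def:func-A} and \Cref{def:vc-2} to conclude.
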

\begin{proof}
We use the same notation as \Cref{def:vc}. For each $j \in [K_i]$, we let $i^{\pa}$ and $j^{\pa}$ be such that $\pa(\vK_{i,j}) = \vK_{i^{\pa}, j^{\pa}}$. Note here $i^{\pa}, j^{\pa}$ can be determined from $\vK, i$, and $j$. We let $x_j = \last(b_{i^{\pa},j^{\pa}}, i - 1)$. This is well-defined since by Lemma \ref{lem:only-depend}, $\last(\cdot, i - 1)$ only depends on $r_{\leq i - 1}$ and $g_{\leq i - 1}$.

By $\calF^{\vK, \vb}_{i,j - 1}$, we know $\nex\left(\vidxb{\vK_{i^{\pa}, j^{\pa}}}\right) = b_{i^{\pa}, j^{\pa}}$. Then from \Cref{obs:existence-K-2} \ref{item:xi-K-2}, we know that $x_j = \last\left(\nex\left(\vidxb{\vK_{i^{\pa}, j^{\pa}}}\right) , i - 1\right) = \x_i\left(\rig_i\left(\vidxb{\vK_{i^{\pa}, j^{\pa}}}\right)\right) = \x_i\left(\rig_i\left(\vidxb{\pa^{\vK}_{i,j}}\right)\right)$. For the same reason, for all $1 \leq j' < j$, we have $x_{j'} = \x_i\left(\rig_i\left(\vidxb{\pa^{\vK}_{i,j'}}\right)\right)$.

Since $\calF^{\vK, \vb}_{i,j - 1}$ holds, for every $j' \in [j-1]$, we know $\vidx^{\vK}_{i,j'}$ exists and therefore $\rig_i\left(\vidxb{\pa^{\vK}_{i,j'}}\right) = \vidx^{\vK}_{i,j'}$. Thus for all $1 \leq j' < j$, we have $x_{j'} = \x_i\left(\rig_i\left(\vidxb{\pa^{\vK}_{i,j'}}\right)\right) = \x_i(\vidx^{\vK}_{i,j'})$. Thus from \Cref{def:func-A} and the definition of $\vc$, we know that $\a_i(\zeta^{i,j}) = c_j$. 
\end{proof}

We then define $P^{\vK, \vb, r_{\leq i - 1}, g_{\leq i - 1}}_{i,j}(r_i, g_i)$ and $\calG^{\vK, \vb}_{i,j}$ in the same way as \Cref{def:defPK} and \Cref{def:defGK}. Except now we use $\vc$ defined in \Cref{def:vc-2}.

\begin{definition} \label{def:P}
    $P^{\vK, \vb, r_{\leq i - 1}, g_{\leq i - 1}}_{i,j}(r_i, g_i)$ is a predicate of $r_i, g_i$ defined as following:
Let $\vc$ be the sequence defined in \Cref{def:vc-2}. For $r_i \in \supp(\br_i)$ and $g_i \in \supp(\bg_i)$, 
\[
P^{\vK, \vb, r_{\leq i - 1}, g_{\leq i - 1}}_{i,j}(r_i, g_i) \coloneqq \left[ \forall j' \in [j], g_i(c_{j'}) = 1 \land r_i(c_{j'}) = b_{i,j'} \right].
\]

We also define $P^{\vK, \vb, r_{\leq i - 1}, g_{\leq i - 1}}_{i} (r_i, g_i) = P^{\vK, \vb, r_{\leq i - 1}, g_{\leq i - 1}}_{i,K_i}(r_i, g_i)$. When $\vK, \vb, r_{\leq i - 1}, g_{\leq i - 1}$ are clear from the context, we simply write $P_{i,j}(r_i, g_i)$ and $P_i(r_i, g_i)$. 
\end{definition}

\begin{definition}
$\calG^{\vK, \vb}_{i,j}$ is the event defined as following. Let $\vc$ be the sequence $\vc(\vK, \vb, \br_{\leq i - 1}, \bg_{\leq i - 1})$ defined in \Cref{def:vc-2}. We let

\[
\calG^{\vK, \vb}_{i,j} \coloneqq \left[ \forall 1 \leq t_1 < t_2 \leq j, c_{t_1} \not= c_{t_2}\right].
\]

We also define $\calG^{\vK}_i = \calG^{\vK}_{i, K_i} \land\calG^{\vK}_{i + 1}$. 
\end{definition}

The following observation holds with the same proof as \Cref{obs:F-and-P} except replacing \Cref{obs:existence-K} with \Cref{obs:existence-K-2} and \Cref{obs:determined-K} with \Cref{obs:determined-K-2} in the proof. 

\begin{observation} \label{obs:F-and-P-2}
Fix $\vK, \vb$. For $P$ defined in \Cref{def:P}. 

$$\calF^{\vK, \vb}_{i,j} \land \event_{\leq i - 1} = \calF^{\vK, \vb}_{i + 1} \land \event_{\leq i - 1} \land P_{i,j}(r_i, g_i)$$
Specifically, we have
$$\calF^{\vK, \vb}_i \land \event_{\leq i - 1} = \calF^{\vK, \vb}_{i + 1} \land \event_{\leq i - 1} \land P_i(r_i, g_i)$$
\end{observation}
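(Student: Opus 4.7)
The plan is to mirror the inductive argument used for Observation~\ref{obs:F-and-P}, with the analogues of the key structural statements (Observation~\ref{obs:existence-K-2} replacing Observation~\ref{obs:existence-K}, and Observation~\ref{obs:determined-K-2} replacing Observation~\ref{obs:determined-K}) carrying the weight. I would proceed by induction on $j$. In the base case $j = 0$, both sides coincide by definition of $\calF^{\vK, \vb}_{i,0}$ and $P_{i,0}$ (the latter being an empty conjunction, hence trivially true).

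For the inductive step, assuming the claim at $j-1$, I would unfold $\calF^{\vK, \vb}_{i,j}$ as $\calF^{\vK, \vb}_{i,j-1}$ together with the existence of $\mu^{\vK}_{i,j}$ and the requirement $\nex(\mu^{\vK}_{i,j}) = b_{i,j}$. Using Observation~\ref{obs:existence-K-2}\ref{item:exist-K-2}, the existence of $\mu^{\vK}_{i,j}$ (given $\mu[\pa^{\vK}_{i,j}]$ exists, which follows from $\calF^{\vK, \vb}_{i,j-1}$) is equivalent to $g_i(\a_i(\zeta^{i,j})) = 1$, and then the condition $\nex(\mu^{\vK}_{i,j}) = b_{i,j}$ is equivalent to $r_i(\a_i(\zeta^{i,j})) = b_{i,j}$. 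Next, I would invoke Observation~\ref{obs:determined-K-2} (equivalently, the conclusion that $\a_i(\zeta^{i,j}) = c_j$ conditioned on $\calF^{\vK, \vb}_{i,j-1}$) to replace $\a_i(\zeta^{i,j})$ by $c_j$, which is well-defined from $(\vK, \vb, i, r_{\leq i-1}, g_{\leq i-1})$ via Definition~\ref{def:vc-2}. Then applying the inductive hypothesis to rewrite $\calF^{\vK, \vb}_{i,j-1} \land \event_{\leq i-1}$ as $\calF^{\vK, \vb}_{i+1} \land \event_{\leq i-1} \land P_{i,j-1}(r_i, g_i)$ and absorbing the new conjunct into $P_{i,j}$ completes the step. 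The ``in particular'' line for $\calF^{\vK, \vb}_i$ is the special case $j = K_i$.

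The main subtlety, rather than an obstacle, is making sure that the invocation of Observation~\ref{obs:determined-K-2} is valid inside the chain of equalities: it requires $\mu^{\vK}_{i,1}, \dots, \mu^{\vK}_{i,j-1}$ and $\mu[\pa^{\vK}_{i,j}]$ to exist, which is precisely guaranteed by the conditioning on $\calF^{\vK, \vb}_{i,j-1}$, so the substitution $\a_i(\zeta^{i,j}) = c_j$ takes place only on the event where it is well-defined. A second point worth spelling out is that $c_j$ depends only on $\vK, \vb, r_{\leq i-1}, g_{\leq i-1}$ (by Definition~\ref{def:vc-2} and Lemma~\ref{lem:only-depend}) and is independent of $r_i, g_i$, so $P_{i,j}(r_i,g_i)$ is genuinely a predicate of $(r_i, g_i)$ once $r_{\leq i-1}, g_{\leq i-1}$ are fixed via $\event_{\leq i-1}$. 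With these points verified, the chain of equalities goes through verbatim as in Observation~\ref{obs:F-and-P}, and the final ``specifically'' statement follows by taking $j = K_i$.
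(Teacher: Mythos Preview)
Your proposal is correct and matches the paper's approach exactly: the paper simply states that Observation~\ref{obs:F-and-P-2} holds by the same induction as Observation~\ref{obs:F-and-P}, replacing \Cref{obs:existence-K} with \Cref{obs:existence-K-2} and \Cref{obs:determined-K} with \Cref{obs:determined-K-2}, which is precisely what you do. One cosmetic note: in the paper, the result labeled \texttt{obs:determined-K-2} is stated as a Lemma rather than an Observation, so you may want to adjust the reference accordingly.
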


Thus the following lemmas follows. 

\begin{lemma} \label{lem:single-level-K-2}
Fix $\vk^1, \vk^2 \in \Kshort$ such that $\vk^1 < \vk^2$,  $\vK = \{\vk^1, \vk^2, \dots, \vk^t\} \subseteq \Kshort$ such that $t \le 4$, and $\vb \in \calB_{\vK}$. \underline{In the joint probability space $(\bw^{\vk^1, \vk^2}, \bh, \bT)$}, fixing $r_{\leq i-1}, g_{\leq i-1} \in \supp(\br_{\leq i-1}, \bg_{\leq i-1})$, for any event $\calA^{\vK, \vb}_{i + 1}$ such that $(\calF^{\vK, \vb}_{i + 1} \land \calG^{\vK}_{i + 1}) \lor \calA^{\vK, \vb}_{i + 1}$ is independent of $\br_{\leq i}, \bg_{\leq i}$, we have 
$$\Pr\left[P_{i,j}(r_i, g_i) \land \calG^{\vK}_{i,j} \ \middle \vert \ \left(\left(\calF^{\vK, \vb}_{i + 1} \land \calG^{\vK}_{i + 1}\right) \lor \calA^{\vK, \vb}_{i + 1}\right) \land \event_{\leq i - 1} \land P_{i, j - 1}(r_i, g_i) \land \calG^{\vK}_{i, j - 1}\right] \leq \frac{1}{2n}.$$
\end{lemma}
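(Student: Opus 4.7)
The plan is to transpose the proof of Lemma~\ref{lem:single-level-K} to the relaxed-walk setting, with every supporting object replaced by its relaxed analogue. Specifically, I will let $\vc = \vc(\vK, \vb, i, r_{\le i-1}, g_{\le i-1})$ be the vector supplied by Definition~\ref{def:vc-2} (in place of Definition~\ref{def:vc}), and invoke Observation~\ref{obs:determined-K-2} and Observation~\ref{obs:F-and-P-2} where the original proof used Observation~\ref{obs:determined-K} and Observation~\ref{obs:F-and-P}. Because $\vc$ is a deterministic function of the fixed data $(\vK, \vb, i, r_{\le i-1}, g_{\le i-1})$, the event $\calG^{\vK}_{i,j-1}$ (defined purely in terms of $\vc$) is already determined under the conditioning and so can be dropped from the conditional without affecting the probability.

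The main calculation then proceeds in three moves. First, expand $P_{i,j}(r_i,g_i) = P_{i,j-1}(r_i,g_i) \land [g_i(c_j) = 1 \land r_i(c_j) = b_{i,j}]$, so that the quantity to bound splits into a local constraint at the single point $c_j$ and the global event $\calG^{\vK}_{i,j}$. Second, upper-bound $\calG^{\vK}_{i,j}$ by the necessary (and purely deterministic) condition $\forall j' < j,\ c_{j'} \ne c_j$. Third, invoke the hypothesized independence of $(\calF^{\vK,\vb}_{i+1} \land \calG^{\vK}_{i+1}) \lor \calA^{\vK,\vb}_{i+1}$ from $(\br_{\le i}, \bg_{\le i})$; since $P_{i,j-1}(\br_i, \bg_i)$ is a predicate of $(\br_i, \bg_i)$ alone, this independence is preserved after further conditioning on it, so the outer event drops out. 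The event $\event_{\le i-1}$ similarly drops out because $(\br_i, \bg_i)$ is sampled independently from $(\br_{\le i-1}, \bg_{\le i-1})$. What remains is
\[
\Pr\left[\bg_i(c_j) = 1 \land \br_i(c_j) = b_{i,j} \land [\forall j' < j,\ c_{j'} \ne c_j]\ \middle\vert\ P_{i,j-1}(\br_i, \bg_i)\right],
\]
and $\tau$-wise independence of $\bg_i, \br_i$ yields the bound $\tfrac{1}{2n}$: under the (deterministic) distinctness condition, the value $(\bg_i(c_j), \br_i(c_j))$ is jointly uniform over $\{0,1\}\times[n]$ conditional on the $j-1$ constraints imposed by $P_{i,j-1}$.

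The required independence parameter is $j \le \tau$, which holds because each $\vk^p \in \Kshort$ satisfies $k^p_i \le \tau/4$ and so, with $t \le 4$ paths, $j \le K_i \le \sum_{p=1}^t k^p_i \le \tau$. I do not expect a serious obstacle: the structural difficulty peculiar to the two-vertex case, namely potential collisions between $p(\vk^1)$ and $p(\vk^2)$, was already absorbed into the construction of $\walk^{\vk^1,\vk^2}$ itself via the inherited $C_0 = C^{i,\vk^1}$, and Observation~\ref{obs:no-colision-k1-k2} certifies that such collisions do not occur in the relaxed walk. The one subtlety to keep in mind is that entries of $\vc$ may lie in $[m] \cup \{\star_t\}_{t\in\N}$ rather than just $[m]$; this is harmless because $\br_i, \bg_i$ are defined on the extended domain and are $\tau$-wise independent there as well, so the distinctness condition $c_{j'} \ne c_j$ does license the final independence step regardless of whether the $c_{j'}$ are ordinary elements of $[m]$ or $\star$-symbols.
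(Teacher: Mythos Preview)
Your proposal is correct and takes essentially the same approach as the paper: the paper's proof of this lemma literally reads ``The proof is the same as that of Lemma~\ref{lem:single-level-K},'' and you have accurately spelled out that transplantation, swapping in the relaxed-walk analogues (Definition~\ref{def:vc-2}, Observation~\ref{obs:determined-K-2}, Observation~\ref{obs:F-and-P-2}) and verifying the $j \le K_i \le t\cdot\tau/4 \le \tau$ budget. Your remarks on the $\star$-symbols and on Observation~\ref{obs:no-colision-k1-k2} are accurate contextual notes but not needed for the argument itself.
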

\begin{proof}
The proof is the same as that of Lemma \ref{lem:single-level-K}. 
\end{proof}

We also generalize Lemma \ref{induction-K} as follows.
\begin{lemma} \label{induction-K-2}
	Fix $\vk^1, \vk^2 \in \Kshort$ such that $\vk^1 < \vk^2$,  $\vK = \{\vk^1, \vk^2, \dots, \vk^t\} \subseteq \Kshort$ such that $t \le 4$, and $\vb \in \calB_{\vK}$. Let $i \in [\ell]$, and $r_{\leq i-1}, g_{\leq i-1} \in \supp(\br_{\leq i-1}, \bg_{\leq i-1})$. \underline{In the joint probability space $(\bw^{\vk^1, \vk^2}, \bh, \bT)$}, there is a sequence of events $\calA^{\vK, \vb}_i$ such that:
\[
\Pr\left[\left(\calF_i^{\vK, \vb} \land \calG^{\vK}_i\right) \lor \calA^{\vK, \vb}_i \ \middle\vert \ r_{\leq i - 1} \land g_{\leq i - 1}\right] = \frac{2^{-\sum_{j = i}^\ell K_j}}{n^{\sum_{j = i}^\ell K_j}}.
\]

In particular,
\[
\Pr\left[\calF_1^{\vK, \vb} \land \calG^{\vK}_1 \right] \leq \frac{2^{-\sum_{j = 1}^\ell K_j}}{n^{\sum_{j = 1}^\ell K_j}}.
\]
\end{lemma}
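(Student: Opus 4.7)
The plan is to follow the exact template of the proof of \Cref{induction-K} from the one-vertex analysis, substituting the single-level estimate \Cref{lem:single-level-K-2} (valid in the joint space $(\bw^{\vk^1,\vk^2}, \bh, \bT)$) wherever the one-vertex proof invoked \Cref{lem:single-level-K}. The only genuinely new ingredient we need is a one-level composition statement analogous to \Cref{lem:single-level-whole-K}, after which the downward induction on $i$ is essentially mechanical.

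First, I would prove a single-level lemma: for any event $\calA^{\vK,\vb}_{i+1}$ such that $(\calF^{\vK,\vb}_{i+1} \land \calG^{\vK}_{i+1}) \lor \calA^{\vK,\vb}_{i+1}$ is independent of $(\br_{\le i}, \bg_{\le i})$, and any $r_{\le i-1}, g_{\le i-1}$, there is an event $\calA^{\vK,\vb}_i$ contained in $((\calF^{\vK,\vb}_{i+1} \land \calG^{\vK}_{i+1}) \lor \calA^{\vK,\vb}_{i+1}) \land r_{\le i-1} \land g_{\le i-1}$ with
\[
\Pr\!\left[(\calF^{\vK,\vb}_i \land \calG^{\vK}_i) \lor \calA^{\vK,\vb}_i \;\middle|\; \left((\calF^{\vK,\vb}_{i+1} \land \calG^{\vK}_{i+1}) \lor \calA^{\vK,\vb}_{i+1}\right) \land r_{\le i-1} \land g_{\le i-1}\right] = \frac{2^{-K_i}}{n^{K_i}}.
\]
The proof of this claim uses \Cref{obs:F-and-P-2} to rewrite $\calF^{\vK,\vb}_i \land \event_{\le i-1}$ as $\calF^{\vK,\vb}_{i+1} \land \event_{\le i-1} \land P_i(r_i,g_i)$, so that the probability above is at most $\prod_{j=1}^{K_i} \Pr[P_{i,j}(r_i,g_i) \land \calG^{\vK}_{i,j} \mid \cdots \land P_{i,j-1}(r_i,g_i) \land \calG^{\vK}_{i,j-1}]$; by iterating \Cref{lem:single-level-K-2} $K_i \le \tau$ times, this telescopes to at most $(2n)^{-K_i}$. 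Since this upper bound is strict a priori, I then define $\calA^{\vK,\vb}_i$ level-by-level in $(r_{\le i-1}, g_{\le i-1})$ by taking an arbitrary disjoint extension inside $((\calF^{\vK,\vb}_{i+1} \land \calG^{\vK}_{i+1}) \lor \calA^{\vK,\vb}_{i+1}) \land \event_{\le i-1}$ that raises the conditional probability to exactly $2^{-K_i}/n^{K_i}$; taking the disjoint union of these across all realizations of $(r_{\le i-1}, g_{\le i-1})$ yields a single global event $\calA^{\vK,\vb}_i$ with the desired property.

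With this in hand, I would carry out the main downward induction on $i$, starting from $i = \ell+1$ where $\calF^{\vK,\vb}_{\ell+1} \land \calG^{\vK}_{\ell+1}$ is always true and $\calA^{\vK,\vb}_{\ell+1} := \emptyset$ works trivially. For the inductive step from $i+1$ to $i$, the inductive hypothesis says that $\Pr[(\calF^{\vK,\vb}_{i+1} \land \calG^{\vK}_{i+1}) \lor \calA^{\vK,\vb}_{i+1} \mid r_{\le i} \land g_{\le i}]$ equals a fixed value independent of $(r_{\le i}, g_{\le i})$, which exactly verifies the independence hypothesis required for the single-level statement above. Applying \Cref{prop:cond-fact} with $\calX = (\calF^{\vK,\vb}_i \land \calG^{\vK}_i) \lor \calA^{\vK,\vb}_i$, $\calY = (\calF^{\vK,\vb}_{i+1} \land \calG^{\vK}_{i+1}) \lor \calA^{\vK,\vb}_{i+1}$, and $\calZ = r_{\le i-1} \land g_{\le i-1}$, I combine the single-level estimate with the inductive estimate to conclude
\[
\Pr\!\left[\calX \mid \calZ\right] = \frac{2^{-K_i}}{n^{K_i}} \cdot \frac{2^{-\sum_{j=i+1}^{\ell} K_j}}{n^{\sum_{j=i+1}^{\ell} K_j}} = \frac{2^{-\sum_{j=i}^{\ell} K_j}}{n^{\sum_{j=i}^{\ell} K_j}},
\]
completing the induction. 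The subset condition $\calX \subseteq \calY$ needed for \Cref{prop:cond-fact} follows from $\calF^{\vK,\vb}_i \land \calG^{\vK}_i \subseteq \calF^{\vK,\vb}_{i+1} \land \calG^{\vK}_{i+1}$ (by the definitions) and from the construction of $\calA^{\vK,\vb}_i$ as a subset of $\calY$.

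I do not expect any significant obstacle in this argument, since the structural work (the relaxed walk, the no-collision guarantee on $p(\vk^1)$ and $p(\vk^2)$ from \Cref{obs:no-colision-k1-k2}, and the $\tau$-wise independence bookkeeping) has already been carried out for \Cref{lem:single-level-K-2}. The only place to be careful is in the construction of $\calA^{\vK,\vb}_i$: one must build it as a \emph{disjoint} union over all $(r_{\le i-1}, g_{\le i-1})$ of subsets chosen within the corresponding slice, so that (a) conditioning on $r_{\le i-1} \land g_{\le i-1}$ recovers the local extension, and (b) the subset relation $\calA^{\vK,\vb}_i \subseteq (\calF^{\vK,\vb}_{i+1} \land \calG^{\vK}_{i+1}) \lor \calA^{\vK,\vb}_{i+1}$ holds globally; both are automatic from the construction. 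The particular statement $\Pr[\calF_1^{\vK,\vb} \land \calG^{\vK}_1] \le 2^{-\sum_{j=1}^\ell K_j}/n^{\sum_{j=1}^\ell K_j}$ then follows by taking $i = 1$ and dropping the (nonnegative) $\calA^{\vK,\vb}_1$ contribution.
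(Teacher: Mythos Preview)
Your proposal is correct and follows essentially the same approach as the paper: the paper's proof of \Cref{induction-K-2} simply says to repeat the proof of \Cref{induction-K}, replacing \Cref{lem:single-level-K} by \Cref{lem:single-level-K-2} and \Cref{obs:F-and-P} by \Cref{obs:F-and-P-2}, which is exactly the template you spell out (including the intermediate analogue of \Cref{lem:single-level-whole-K} and the disjoint-union construction of $\calA^{\vK,\vb}_i$).
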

\begin{proof}
This follows the same proof as Lemma \ref{lem:single-level-K} with the only difference is that the use of Lemma \ref{lem:single-level-K} is replaced by Lemma \ref{lem:single-level-K-2}, and the use of \Cref{obs:F-and-P} is replaced by \Cref{obs:F-and-P-2}. 
\end{proof}

Then Lemma \ref{lem:bad-case-2} follows the similar proof strategy as that of single vertex case. 

Recall we define $\elong$
\[
\elong \coloneqq \left[ \text{$\exists \vk \in \N^{\ell}$ s.t. $\max_{i=1}^{\ell} k_i > \tau/4$ and $\vidx^{\vk}$ exists} \right].
\]

We observe that a similar conclusion as Lemma \ref{lem:elong-small} also holds for $\walk^{\vk^1, \vk^2}$ whose proof is deferred to Appendix \ref{sec:appendix-elong}.

\begin{lemma}\label{lem:elong-small-2}
	Fix $\vk^1, \vk^2 \in \Kshort$ such that $\vk^1 < \vk^2$. \underline{In the probability space $(\bw^{\vk^1, \vk^2}, \bT)$}, it holds that
    \[
    \Pr[\elong] \le n^2 \ell/2^{\tau/4}.
    \]
\end{lemma}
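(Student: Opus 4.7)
The plan is to adapt the proof of \Cref{lem:elong-small} to the relaxed-walk setting. The only new feature is that some level-$i$ function calls in Algorithm~\ref{algo:relaxedextwalk} start with a non-empty inherited set $C_0 = C^{i,\vk^1}$ instead of $C_0 = \emptyset$ (namely those calls on $p^\ast(\vk^2)\setminus p^\ast(\vk^1)$, identified by the test at Line~\ref{line:when-inherit} together with \Cref{obs:conditions}). As in \Cref{lem:elong-small}, I will define, for each $i\in[\ell]$, the event
$\elong^i := \left[\exists\, \vk\in\N^\ell\text{ with }k_i>\tau/4\text{ such that }\mu^{\vk}\text{ exists}\right]$,
note that $\elong = \bigcup_i \elong^i$, and reduce the problem to showing $\Pr[\elong^i] \le n^2/2^{\tau/4}$ for each $i$.

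If $\elong^i$ holds then some call $\walk^{\vk^1,\vk^2}(s^\prime, i, \mu_0, \vk_0)$ produces at least $\tau/4$ level-$i$ nodes $\mu_1,\dotsc,\mu_{\tau/4}$ in its \textbf{repeat} loop, and I split on whether $C_0 = \emptyset$ or $C_0 = C^{i,\vk^1}$. In the first case the argument of \Cref{lem:elong-small} carries over almost verbatim: after conditioning on $\br_i, \br_{\le i-1}, \bg_{\le i-1}$ and the starting vertex $s^\prime\in[n]$, \Cref{obs:determine-a-from-xi-2} (with $\bar{C}_0=\emptyset$) guarantees that $\a_i(\mu_1),\dotsc,\a_i(\mu_{\tau/4})$ are deterministic and pairwise distinct, so $\tau$-wise independence of $\bg_i$ bounds the probability of $\bg_i=1$ at all of them by $2^{-\tau/4}$; a union bound over $s^\prime$ contributes $n/2^{\tau/4}$.

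In the second case, the set $C^{i,\vk^1}$ is itself produced at Line~\ref{line:assing-C1} during a prior level-$i$ call along $p^\ast(\vk^1)$ from some vertex $s^{(1)}\in[n]$. The key observation is that, after additionally conditioning on $s^{(1)}$, the at most $k^1_i\le\tau/4$ candidate members of $C^{i,\vk^1}$ are fixed: they are the first $k^1_i$ values $\a_i$ along the $p^\ast(\vk^1)$ walk, which by \Cref{obs:determine-a-from-xi-2} (applied with $\bar{C}_0=\emptyset$) are determined by the already-fixed randomness $s^{(1)}, \br_i, \br_{\le i-1}, \bg_{\le i-1}$ without any further use of $\bg_i$. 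Applying \Cref{obs:determine-a-from-xi-2} again, now with $\bar{C}_0$ equal to this candidate set, I can then also fix $\a_i(\mu_1),\dotsc,\a_i(\mu_{\tau/4})$ on the $p^\ast(\vk^2)$ side, and by construction they are pairwise distinct from each other and from the candidate set. Thus there are at most $\tau/4+\tau/4\le\tau$ distinct arguments to $\bg_i$, and $\tau$-wise independence yields probability at most $2^{-\tau/4}$ that $\bg_i$ equals $1$ on the $\tau/4$ new values, regardless of its behavior on the candidate set. A union bound over $(s^{(1)},s^\prime)\in[n]^2$ contributes $n^2/2^{\tau/4}$.

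The main point that requires care in the write-up is that $C^{i,\vk^1}$, viewed as an actual random set, depends on $\bg_i$ through the length of the $p^\ast(\vk^1)$ walk, so I cannot literally ``fix'' $C^{i,\vk^1}$ before reasoning about $\bg_i$. The resolution is that for an upper bound it suffices to union bound over the $\bg_i$-independent set of \emph{potential} members, and the $\tau$-wise independence argument on the $p^\ast(\vk^2)$ side uses only the disjointness of the new $\a_i$-values from this potential set, not the realized value of $C^{i,\vk^1}$. Combining the two cases gives $\Pr[\elong^i]\le (n+n^2)/2^{\tau/4}\le n^2/2^{\tau/4}$, and a final union bound over $i\in[\ell]$ yields the claimed $n^2\ell/2^{\tau/4}$.
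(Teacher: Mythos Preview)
Your overall approach matches the paper's: decompose $\elong = \bigcup_i \elong^i$, split on whether the offending level-$i$ call starts with $C_0 = \emptyset$ or $C_0 = C^{i,\vk^1}$, and in the latter case union-bound over both starting vertices. There is, however, a gap in your handling of the second case. You propose to apply \Cref{obs:determine-a-from-xi-2} with $\bar C_0$ equal to the full candidate set $\{y_1,\dots,y_{k^1_i}\}$ and read off a deterministic list of $\a_i$-values on the $p^*(\vk^2)$ side. But \Cref{obs:determine-a-from-xi-2} requires $\bar C_0$ to be the \emph{realized} $C^{i,\vk^1}$, which can be any prefix $\{y_1,\dots,y_J\}$ with $J\le k^1_i$; using a larger $\bar C_0$ produces a different sequence of $\a_i$-values (the first time some $a_{\x_i}$ lands in the set difference, the two sequences diverge). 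Hence the event you bound---``$\bg_i=1$ on your $\tau/4$ computed values''---is not the event whose probability controls $\elong^i$, and your claimed ``disjointness of the new $\a_i$-values from the potential set'' is false for the \emph{actual} $\a_i$-values, which are only guaranteed to avoid the realized $C^{i,\vk^1}$.

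The correct resolution, which the paper's ``adaptive queries'' phrasing gestures at, is to run the process honestly: first query $\bg_i$ at $y_1, y_2, \dots$ until a $0$ appears or $k^1_i$ queries have been made, which fixes the realized $C^{i,\vk^1}$; then query $\bg_i$ at the actual $z_1,\dots,z_{\tau/4}$ on the $\vk^2$ side, which are pairwise distinct and disjoint from $C^{i,\vk^1}$. The key observation is that every phase-one query that returned $1$ lies in $C^{i,\vk^1}$, so if some $z_j$ repeats a phase-one argument it must be one that returned $0$, and the $\vk^2$ walk halts short of $\tau/4$ steps. Thus each $z_j$ is either fatal or a fresh argument, and $\tau$-wise independence over the at most $k^1_i+1+\tau/4\le \tau$ distinct arguments yields the $2^{-\tau/4}$ bound. (Separately, $(n+n^2)/2^{\tau/4} \not\le n^2/2^{\tau/4}$; both you and the paper are loose here, but the slack is harmless downstream.)
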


Now we are ready to prove~\cref{lem:bad-case-2}, which is restated below.

\begin{reminder}{\cref{lem:bad-case-2}}
	Fix $\vk^1, \vk^2 \in \Kshort$ and two distinct vertices $u, v \in [n]$ such that $a_u = a_v$. Let $C_{u} = \#\{i \in [n] \mid a_i = a_u\}$ denote the number of occurrences of $a_u$ in the input array $a$, $\bi = \vidx^{\vk^1} + 1$, and $\bj = \vidx^{\vk^2} + 1$.
	
	It holds that
	\begin{align*}
	&\Pr\left[\bw^{\vk^1, \vk^2}_{\bi} = u, \bw^{\vk^1, \vk^2}_{\bj} = v \land \exists (\alpha,\beta), 0 < \alpha < \beta < \bj, a_{\bw^{\vk^1, \vk^2}_\alpha} = a_{\bw^{\vk^1, \vk^2}_\beta} \right] \\ \le &\sum_{\substack{(\vk^3, \vk^4) \in (\N^\ell)^2\\ K = \{\vk^1, \vk^2, \vk^3, \vk^4\}}} \frac{2^{- \sum_{j=1}^\ell K_j} F_2(a)}{n^4} + 4\sum_{\substack{\vk^3 \in \N^\ell \\ K = \{\vk^1, \vk^2, \vk^3\}}} \frac{2^{- \sum_{j=1}^\ell K_j}  C_u}{n^3} + n^2\ell/2^{\tau/4}.
	\end{align*}
\end{reminder}
\vspace{-1.5em}
\begin{proofof}{\Cref{lem:bad-case-2}}
By Lemma \ref{lem:structure-2}, we have
\begin{align*}
& \Pr\left[\bw^{\vk^1, \vk^2}_{\bi} = u, \bw^{\vk^1, \vk^2}_{\bj} = v \land \exists 0 < \alpha < \beta < \bj, a_{\bw^{\vk^1, \vk^2}_\alpha} = a_{\bw^{\vk^1, \vk^2}_\beta} \right] \\
\leq &\Pr\big[\bw^{\vk^1, \vk^2}_{\bi} = u, \bw^{\vk^1, \vk^2}_{\bj} = v \land \exists (\alpha,\beta), \alpha < \beta, a_{\bw^{\vk^1, \vk^2}_\alpha} = a_{\bw^{\vk^1, \vk^2}_\beta} \land \\ &\hspace*{5cm}\text{no collision between $p(\bi - 1)$, $p(\bj - 1)$, $p(\alpha - 1)$, $p(\beta - 1)$} \big] \\
\leq &\Pr\big[\bw^{\vk^1, \vk^2}_{\bi} = u, \bw^{\vk^1, \vk^2}_{\bj} = v \land \neg\elong \land \exists (\alpha,\beta), \alpha < \beta, a_{\bw^{\vk^1, \vk^2}_\alpha} = a_{\bw^{\vk^1, \vk^2}_\beta} \land \\ &\hspace*{5cm}\text{no collision between $p(\bi - 1)$, $p(\bj - 1)$, $p(\alpha - 1)$, $p(\beta - 1)$} \big] + \Pr[\elong]\\
\leq & \sum_{\substack{\vk^3, \vk^4 \in \Kshort \\\vk^3 \not= \vk^4}} \Pr\big[\nex(\vidx^{\vk^1}) = u \wedge \nex(\vidx^{\vk^2}) = v \land a_{\nex(\vidx^{\vk^3})} = a_{\nex(\vidx^{\vk^4})} \land \{\vk^3, \vk^4\} \neq \{\vk^1, \vk^2\} \land \\ &\hspace*{5cm}\text{no collision between $p(\vk^1)$, $p(\vk^2)$, $p(\vk^3)$, $p(\vk^4)$} \big] + \Pr[\elong]
\end{align*}

First, from Lemma \ref{lem:elong-small-2}, we have $\Pr[\elong] \le n^2 \ell/2^{\tau/4}$. 

Next, we will prove an upper bound on

\begin{align*}
&\sum_{\substack{\vk^3, \vk^4 \in \Kshort \\\vk^3 \not= \vk^4}} \Pr\big[\nex(\vidx^{\vk^1}) = u \wedge \nex(\vidx^{\vk^2}) = v \land a_{\nex(\vidx^{\vk^3})} = a_{\nex(\vidx^{\vk^4})} \land \{\vk^3, \vk^4\} \neq \{\vk^1, \vk^2\} \land \\ &\hspace*{5cm}\text{no collision between $p(\vk^1)$, $p(\vk^2)$, $p(\vk^3)$, $p(\vk^4)$} \big].
\end{align*}

There are two cases, the first case is that at least one of the four following equalities holds (1) $\vk^1 = \vk^3$, (2) $\vk^1 = \vk^4$, (3) $\vk^2 = \vk^3$, or (4) $\vk^2 = \vk^4$, and the second case is that $\vk^1,\vk^2,\vk^3,\vk^4$ are distinct. Now we consider the first case, and by symmetry, we only need to consider the case of $\vk^1 = \vk^4$. 

\paragraph{When $\vk^1 = \vk^4$.} 

Since $\vk^1 = \vk^4$, $\{\vk^3, \vk^4\} \neq \{\vk^1, \vk^2\}$ implies $\vk^3 \not\in \{\vk^1, \vk^2\}$. Therefore $\vk^1, \vk^2, \vk^3$ are distinct. Let $\vK = \{\vk^1, \vk^2, \vk^3\}$. Since $\vK \subseteq \Kshort$, by Lemma \ref{induction-K-2}, for every sequence $\vb \in \calB_{\vK}$, we have 
\[
\Pr\left[\calF_1^{\vK, \vb} \land \calG^{\vK}_1\right] = \frac{2^{-\sum_{j = 1}^\ell K_j}}{n^{\sum_{j = 1}^\ell K_j}}.
\]

There are $n^{\sum_{j = 1}^\ell K_j}$ many sequences $\vb \in \calB_{\vK}$, and $n^{\sum_{j = 1}^\ell K_j - 3} \cdot C_u$ of them satisfy that $\nex(\vidx^{\vk^1}) = u$, $\nex(\vidx^{\vk^2}) = v$, and $a_{\nex(\vidx^{\vk^3})} = a_u$. 

We have
\begin{align*}
&\sum_{\substack{\vk^3 \in \Kshort\\ \vk^3 \not\in \{\vk^1, \vk^2\}}} \Pr\big[\nex(\vidx^{\vk^1}) = u, \nex(\vidx^{\vk^2}) = v \land a_{\nex(\vidx^{\vk^3})} = a_{\nex(\vidx^{\vk^4})} \land \\ &\hspace*{5cm}  \text{no collision between $p(\vk^1)$, $p(\vk^2)$, $p(\vk^3)$} \big] \\
= &\sum_{\substack{\vk^3 \in \Kshort\\ \vk^3 \not\in \{\vk^1, \vk^2\}}} \Pr \left[ \calG^{\{\vk^1, \vk^2,\vk^3\}}_1 \land \nex(\vidx^{\vk^1}) = u \land \nex(\vidx^{\vk^2}) = v \land a_{\nex(\vidx^{\vk^3})} = a_u \right] \\
\leq &\sum_{\substack{\vk^3 \in \N^\ell \\ \vK = \{\vk^1, \vk^2, \vk^3\}}} \frac{2^{- \sum_{t=1}^\ell K_t}  C_u}{n^3}. 
\end{align*}

\paragraph{When $\vk^1, \vk^2, \vk^3, \vk^4$ are distinct.} 

Now we consider the other case when $\vk^1, \vk^2, \vk^3, \vk^4$ are distinct. Let $\vK = \{\vk^1, \vk^2, \vk^3, \vk^4\}$. 

Similar to the previous case, by Lemma \ref{induction-K}, for any sequence $\vb \in \calB_{\vK}$, we have
\[
\Pr\left[\calF_1^{\vK, \vb} \land \calG^{\vK}_1\right] = \frac{2^{-\sum_{j = 1}^\ell K_j}}{n^{\sum_{j = 1}^\ell K_j}}.
\]

Note there are $n^{\sum_{j = 1}^\ell K_j}$ many sequences $\vb \in \calB_{\vK'}$, and $n^{\sum_{j = 1}^\ell K_j - 4} \cdot F_2(a)$ of them satisfy that $\nex(\vidx^{\vk^1}) = u$, $\nex(\vidx^{\vk^2}) = v$ and $a_{\nex(\vidx^{\vk^3})} = a_{\nex(\vidx^{\vk^4})}$. 

Hence, we have 

\begin{align*}
&\sum_{\substack{\vk^3, \vk^4 \in \Kshort \\ |\{\vk^1, \vk^2, \vk^3, \vk^4\}| = 4 }} \Pr\big[\bw^{\vk^1, \vk^2}_{\bi} = u \land \bw^{\vk^1, \vk^2}_{\bj} = v \land a_{\nex(\vidx^{\vk^3})} = a_{\nex(\vidx^{\vk^4})} \land \\ &\hspace*{5cm} \text{no collision between $p(\vk^1)$, $p(\vk^2)$, $p(\vk^3)$, $p(\vk^4)$} \big] \\
= &\sum_{\substack{\vk^3, \vk^4 \in \Kshort \\ |\{\vk^1, \vk^2, \vk^3, \vk^4\}| = 4 }} \Pr \left[ \calG^{\{\vk^1, \vk^2,\vk^3, \vk^4\}}_1 \land \nex(\vidx^{\vk^1}) = u \land \nex(\vidx^{\vk^2}) = v \land a_{\nex(\vidx^{\vk^3})} = a_{\nex(\vidx^{\vk^4})} \right] \\
\leq &\sum_{\substack{\vk^3, \vk^4 \in \N^\ell \\ \vK = \{\vk^1, \vk^2, \vk^3, \vk^4\}}} \frac{2^{- \sum_{j=1}^\ell K_j}  F_2(a)}{n^4}.
\end{align*}

Summing up these two cases proves the theorem. 
\end{proofof}

\bibliographystyle{alphaurl} 
\bibliography{main}

\appendix
\section{Proof of~\cref{lem:structure-2}} \label{sec:appendix-structure}

In this appendix we prove~\cref{lem:structure-2}. In~\cref{sec:useful-facts}, we prove many useful facts about the relaxed extended random walk $\walk^{\vk^1,\vk^2}$, which will be useful for later proofs. In~\cref{sec:important-lemmas}, we prove several lemmas that are crucial for our proof of~\cref{lem:structure-2}. Finally, we prove~\cref{lem:structure-2} in~\cref{sec:bad-structure-final-proof}.

\highlight{Notation.} In this appendix, since we always refer to $w^{\vk^1, \vk^2}$, we drop the superscript and simply write $w$. 

\subsection{Useful Facts about the Relaxed Extended Walk $\walk^{\vk^1, \vk^2}$}\label{sec:useful-facts}

\begin{algorithm}[H]\label{algo:relaxedextwalk-rem}
	\DontPrintSemicolon
	\caption{Generating relaxed extended walk $\walk^{\vk^1, \vk^2}(s',i, \underline{\mu_0}, \vk)$:  (where $s'\in [n], 0\le i \le \ell$)}
	\lIf{$i = 0$}{\Return sequence $(s')$ which contains a single vertex.}
	\eIf{$[\forall t \in [i + 1, \ell], k_t = k^2_t] \land [\exists t \in [i + 1, \ell], k^1_t < k^2_t] \land [k^2_i > 0]$ \label{line:when-inherit-rem}}{
	\tcc{Here the condition says that $\mu_0 \not\in p(\vk^1)$ and is the last node of $p(\vk^2)$ above level $i$ and $p^*(\vk^2)$ is non-empty on level $i$. It is equivalent to $\mu_1, \mu_2, \dots \in p^*(\vk^2) \setminus p^*(\vk^1)$.} 
    	$C_0 \gets C^{i, \vk^1}$ \label{line:C_0-1-rem}
	}{ 
    	$C_0 \gets \emptyset$ \label{line:C_0-2-rem}
	}
	$\mathsf{star} \gets \text{false}, j \gets 0, s_0 \gets s', w = ()$.\\
	\Repeat{$g_i(y) = 0$}{
	$\vk' \gets (0, 0, \dots, 0, j, k_{i+1}, \dots, k_{\ell})$ \label{line:setjinkprime-rem} \tcc{\underline{Here $\vk'$ equals $\index(\mu_j)$.}}
	$w = w \circ \walk^{\vk^1, \vk^2}(s_j, i - 1, \underline{\mu_0 + |w|}, \vk')$ \tcc{\underline{Here $\mu_0 + |w|$ equals $\mu_j$.}}
	$x_{j + 1} \gets \last(s_j,i - 1)$\label{line:assign-x-rem}\\
	 $y, \mathsf{star} \gets \begin{cases} a_{x_{j + 1}}, \text{false} & \text{ if } a_{x_{j + 1}} \not\in C_j \land \lnot \mathsf{star} \\ \star_t, \text{true} & \text{ otherwise (where $t = \min \{t \in \N \ \vert \ \star_t \not\in C\}$)}\end{cases}$ \label{line:assign y-rem}\\
	 \underline{Let $\mu_{j + 1} = \mu_0 + |w|$, $\x_i(\mu_{j + 1}) \gets x_{j + 1}, \a_i(\mu_{j + 1}) \gets y$.}\label{line:assign xa-rem}\\
	 \lIf{$j > 0$}{\underline{$\rig(\mu_j) \gets \mu_{j + 1}$}\label{line:assign rig-rem}}
		\If{$g_i(y) = 1$\label{line:if-rem}}{
		$C_{j + 1} \gets C_j \cup \{y\}$, $s_{j + 1} \gets r_i(y)$\label{line:assign cs-rem}\\
		\underline{$\level(\mu_{j+1}) \gets i, \nex(\mu_{j+1}) \gets r_i(y)$}\\
		\underline{$\index(\mu_{j+1}) \gets (0,\cdots,0, j + 1,  k_{i+1}, \dots, k_{\ell})$}\label{line:assign index}\\
			$j \gets j + 1$.
		}
	}
\If {$[\forall t \in [i + 1, \ell], k_t = k^1_t] \land [\exists t \in [i + 1, \ell], k^1_t < k^2_t]$}{
\tcc{Here the condition says that $\mu_0$ is the last node of $p(\vk^1)$ above level $i$ and is not the last node of $p(\vk^2)$. It is equivalent to $\mu_1, \mu_2, \dots \in p^*(\vk^1) \setminus p^*(\vk^2)$.}
		    	 $C^{i, \vk^1} \gets C_{\min(j, k^1_i)}$} \label{line:C-k1-rem}
	\Return $w$.
\end{algorithm}

We will need the following facts about $\walk^{\vk^1, \vk^2}$. For convenience, we also recall the code of function $\walk^{\vk^1,\vk^2}$ in Algorithm~\ref{algo:relaxedextwalk-rem}.

\begin{fact} \label{observations}
Fix $(w, T) \in (\bw^{\vk^1, \vk^2}, \bT)$ and assume that $\vidx^{\vk^1}, \vidx^{\vk^2}$ exist.
\begin{enumerate}[label=(\alph*)]

	\item Within $\walk^{\vk^1, \vk^2}(s', i, \vk)$, $\forall \vk^* \in \N^{\ell}$, if $\level(\mu_j) = i$, $\mu_j \in p^*(\vk^*)$ if and only if $\forall t \in [i + 1, \ell], [k_t = k^*_t] \land [k^*_i > 0]$. Moreover $\mu_j \in p(\vk^*)$ if and only if $\mu_j \in p^*(\vk^*)$ and $j \leq k^*_i$. \label{obs:on-path}
	\item Let $\alpha, \beta \leq \vidx^{\vk^2}$ be two nodes. If $a_{\x(\alpha)} = a_{\x(\beta)}$ or $\a(\alpha) = \a(\beta) \not= \star_*$, we have $a_{w_{\alpha}} = a_{w_{\beta}}$. \label{obs:collision-imply-duplicate}

	\item Suppose $\x_i(\alpha) = \x_i(\beta)$. $\a_i(\alpha) \not= \a_i(\beta)$ only when $\a_i(\alpha) = \star_*$ or $\a_i(\beta) = \star_*$. \label{obs:same-x-same-a}
	
	\item For any two nodes $\alpha, \beta$ both of level $i$, if $\a_i(\alpha) = \a_i(\beta)$, we have $\x_i(\rig(\alpha)) = \x_i(\rig(\beta))$. \label{obs:same-x}
	
	\item There is no collision between $p(\vk^1)$ and $p^*(\vk^2)$. \label{obs:no-collision-k1-k2}
	
	\item For a node $\alpha$ on level $i$, if $g_i(\rig(\alpha)) = 1$, then $\rig(\alpha)$ is also on level $i$. \label{obs:right-alpha-level}
	
	\item For a node $\alpha$ on level $i$ with index $\vk$, let $j \in [k_i - 1]$ and $\beta = \mu^{\vk}_{i,j}$. If $\a_i(\beta) = \star_*$, then $\a_i(\alpha) = \star_*$. Consequently, if $\a_i(\alpha) \ne \star_*$, then $\a_i(\beta) \ne \star_*$. 
	 \label{obs:if-star-then-all-stars}

	 \item In $\walk^{\vk^1, \vk^2}(s', i, \vk)$, suppose the condition at Line \ref{line:when-inherit-rem} is met. we know $\mu_j \in p^*(\vk^2)$ if $\level(\mu_j) = i$. \label{obs:if-inherit-in-p2} 

	 \item For a node $\alpha \not\in p^*(\vk^2)$ on level $i$, if $\a_i(\alpha) \not= \star_*$ and $\a_i(\rig(\alpha)) = \star_t$, there must be a node $\eta \in p(\alpha)$ such that $\a_i(\eta) = a_{\x_i(\rig(\alpha))}$ and $\level(\eta) = i$. Moreover, we must have $t = 0$. 
	 \label{obs:first-star}

	 \item For a node $\beta \in \tilde{p}(\vk^1, \vk^2)$ on level $i$, if $\a_i(\beta) \neq \star_*$ and $\a_i(\rig(\beta)) = \star_*$, there must be a node $\eta \in \tilde{p}(\vk^1, \vk^2)$ such that $a_{\x_i(\eta)} = a_{\x_i(\rig(\beta))}$ and $\level(\eta) = i$. 

	  \label{obs:first-star-beta}

	 \item For a node $\alpha \not\in p^*(\vk^2)$ on level $i$, if $\a_i(\alpha) = \star_*$ or $\a_i(\rig(\alpha)) = \star_t$ for $t > 0$, there must be a level $i$ node $\eta$ on $p(\alpha)$ such that $\a_i(\eta) = \star_0$. 
	 \label{obs:if-star-must-first}

	 \item For a node $\beta \in \tilde{p}(\vk^1, \vk^2)$ on level $i$, if $\a_i(\beta) = \star_*$ or $\a_i(\rig(\beta)) = \star_t$ for $t > 0$, there must be a level $i$ node $\eta$ on $\tilde{p}(\vk^1, \vk^2)$ such that $\a_i(\eta) = \star_0$. 
	 \label{obs:if-star-must-first-2}

\end{enumerate}
\end{fact}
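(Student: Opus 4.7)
Every item in Fact \ref{observations} is a structural consequence of the pseudocode of Algorithm \ref{algo:relaxedextwalk-rem}, and the entire proof proceeds by careful inspection of a small set of lines: the index assignment (Line \ref{line:assign index}), the $y$-assignment (Line \ref{line:assign y-rem}), the $C_j$ update (Line \ref{line:assign cs-rem}), the level-assignment branch (Line \ref{line:if-rem}), and the two guards at Line \ref{line:when-inherit-rem} and Line \ref{line:C-k1-rem} governing when $C_0$ inherits from $C^{i,\vk^1}$. I would organize the items into four groups.

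First, items (a), (f), (h) are pure bookkeeping. For (a), I would read the assignment at Line \ref{line:assign index} which shows that inside $\walk^{\vk^1, \vk^2}(\cdot, i, \vk)$ every level-$i$ node $\mu_j$ is indexed by $(0, \dots, 0, j, k_{i+1}, \dots, k_\ell)$, and compare with Definition~\ref{def:ancestor} and Definition~\ref{def:extention}. Item (f) is immediate since the \textbf{if}-test at Line \ref{line:if-rem} is exactly $g_i(y) = 1$ with $y = \a_i(\rig(\alpha))$; item (h) combines (a) with the guard at Line \ref{line:when-inherit-rem}.

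Second, items (b), (c), (d) relate $\a$, $\x$, and $w$ to each other. I would invoke \Cref{lem:out-k2} to replace $w_\mu$ by $\x(\mu)$ for $\mu \le \mu^{\vk^2}$, and then argue via Line \ref{line:assign y-rem}: once $x_{j+1}$ is fixed, the only freedom in the assignment of $y$ is whether $a_{x_{j+1}} \in C_j$ or $\mathsf{star}$ is already true, in which case $y$ switches to some $\star_*$. Hence equal $\x$ values force either equal $\a$ values or a $\star_*$, which gives (c), and conversely $\a(\alpha) = \a(\beta) \neq \star_*$ forces $a_{\x(\alpha)} = a_{\x(\beta)}$, which gives (b). For (d), once $\a_i(\alpha) = \a_i(\beta)$ the update at Line \ref{line:assign cs-rem} produces $s_j = r_i(\a_i(\alpha)) = r_i(\a_i(\beta))$, and \Cref{lem:existence-2}\ref{item:xi-2} then yields $\x_i(\rig(\alpha)) = \last(s_j, i{-}1) = \x_i(\rig(\beta))$.

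Third, item (e) follows quickly from the already-proved \Cref{obs:no-colision-k1-k2}, which rules out collisions between $p(\vk^1)$ and $p(\vk^2)$. To extend this to $p^*(\vk^2)$, I would use the inheritance at Line \ref{line:C_0-1-rem}: any level-$i$ node $\mu_j$ lying in $p^*(\vk^2)\setminus p(\vk^2)$ is generated inside a function call with $C_0 = C^{i,\vk^1}$, so by Line \ref{line:assign y-rem} its $\a_i$-value is forced to some $\star_*$ whenever it would otherwise coincide with an $\a_i$-value of a level-$i$ node on $p(\vk^1)$.

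The main technical obstacle is in items (g), (i), (j), (k), (l), which track when the flag $\mathsf{star}$ gets set inside the repeat loop and how this propagates. The key invariant is that once $\mathsf{star}$ flips from false to true at some iteration $j_0$ of a function call, Line \ref{line:assign y-rem} forces $\a_i$ to take values $\star_{t_0}, \star_{t_0+1}, \dots$ on the level-$i$ nodes generated from iteration $j_0$ onward, where $t_0 = |\{t \in \N : \star_t \in C_0\}|$. This monotonicity immediately yields (g). For (i), the transition from a non-$\star_*$ value at $\alpha$ to $\star_*$ at $\rig(\alpha)$ can only occur when $a_{\x_i(\rig(\alpha))} \in C_{j_0-1}$; since $\alpha \notin p^*(\vk^2)$ forces $C_0 = \emptyset$ via Line \ref{line:C_0-2-rem}, the witnessing element of $C_{j_0-1}$ must be the $\a_i$-value of an earlier level-$i$ ancestor $\eta \in p(\alpha)$, and because $C_0=\emptyset$ the newly introduced $\star$ is necessarily $\star_0$. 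Item (j) is analogous but allows the duplicate to come from the inherited $C^{i,\vk^1}$, hence the witness $\eta$ may lie on $p^*(\vk^1)$ inside $\tilde p(\vk^1,\vk^2)$. Finally, (k) and (l) follow by noting that any $\star_{t>0}$ on $\rig(\cdot)$ or any $\star_*$ on the node itself must, by the monotonicity of $\mathsf{star}$ and the indexing of $\star_t$ introduced at Line \ref{line:assign y-rem}, be preceded along the same function call (or within the inherited $C^{i,\vk^1}$) by the first occurrence of $\star_0$, which is exactly the required $\eta$.
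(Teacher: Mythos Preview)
Your proposal is correct and follows essentially the same approach as the paper: both proceed by direct inspection of the pseudocode of Algorithm~\ref{algo:relaxedextwalk-rem}, with the key ingredients being the index assignment at Line~\ref{line:assign index} for (a) and (h), the monotonicity of the $\mathsf{star}$ flag for (g), (i), (k), (l), the observation that $C_0=\emptyset$ exactly when the node lies outside $p^*(\vk^2)$, and the fact that the inherited set $C^{i,\vk^1}$ consists precisely of the $\a_i$-values of the level-$i$ nodes on $p(\vk^1)$. The paper's treatment of item~(l) is slightly more explicit in separating the two sub-cases (whether or not some level-$i$ node of $p(\vk^1)\setminus p^*(\vk^2)$ already carries a $\star_*$, in which case one recurses via item~(k)), but your sketch captures the same mechanism.
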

\begin{proof}\item
\highlight{Proof of \ref{obs:on-path}.}
If $k^*_i > 0$, since the first level $i$ ancestors of $\mu^{\vk^*}$ has index $(\vk^*)^{i,1} = (0, \dots, 0, 1, k^*_{i + 1}, k^*_{\ell})$, we know level $i$ nodes of $p^*(\vk^*)$ are $(0, \dots, 0, j, k^*_{i + 1}, k^*_{\ell})$ with $j \geq 1$. If $\level(\mu_j) = i$, by how its index is assigned (Line \ref{line:assign index}), we know it is in $p^*(\vk^*)$ when $\forall t \in [i + 1, \ell], [k_t = k^*_t]$. 

Specifically, if $k^*_i = 0$, there is no level $i$ node on $p(\vk^*)$. By definition of $p^*(\vk^*)$, there is also no level $i$ node on it. The moreover part follows from the fact that all level $i$ ancestors of $\mu^{\vk^*}$ have indices $(\vk^*)^{i,j} = (0, \dots, 0, j, k^*_{i + 1}, k^*_{\ell})$ for $1 \leq j \leq k^*_i$. 

\highlight{Proof of \ref{obs:collision-imply-duplicate}.} Since $\alpha, \beta \leq \vidx^{\vk^2}$, by Lemma \ref{lem:out-k2}, we have $\x(\alpha) = w_{\alpha}$ and $\x(\beta) = w_{\beta}$. Therefore $a_{\x(\alpha)} = a_{\x(\beta)}$ implies $a_{w_{\alpha}} = a_{w_{\beta}}$. For any node $\mu$, if $\a(\mu) \neq \star_*$, then $\a(\mu) = a_{\x(\mu)}$ holds by Line \ref{line:assign y-rem}, \ref{line:assign xa-rem}. Thus $\a(\alpha) = \a(\beta) \not= \star_*$ also suffice. 

\highlight{Proof of \ref{obs:same-x-same-a}.}
By line \ref{line:assign y-rem}, \ref{line:assign xa-rem}, we know $\a_i(\alpha) \not= a_{\x_i(\alpha)}$ only when $\a_i(\alpha) = \star_*$. The same also holds for $\beta$. Therefore since $a_{\x_i(\alpha)} = a_{\x_i(\beta)}$, the observation holds.

\highlight{Proof of \ref{obs:same-x}.} For each node $\mu_j$ of level $i$ at Line \ref{line:setjinkprime-rem} - \ref{line:assign rig-rem}, we know $\rig(\mu_j) = \mu_{j + 1}$ and $\x_i(\mu_{j + 1}) = x_{j + 1} = \last(s_j, i - 1)$(Line \ref{line:assign xa-rem}, \ref{line:assign-x-rem}). Here $s_j = r_i(\a_i(\mu_j))$ (Line \ref{line:assign xa-rem}, \ref{line:assign cs-rem}). 

Hence for each node $\mu$ of level $i$, $\x_i(\rig(\mu))$ is the return value of $\last(r_i(\a_i(\mu)), i - 1)$. From $\a_i(\alpha) = \a_i(\beta)$, we get $\x_i(\rig(\alpha)) = \x_i(\rig(\beta))$ directly. 

\highlight{Proof of \ref{obs:no-collision-k1-k2}.} By Line \ref{line:assign xa-rem}, \ref{line:assign cs-rem}, we know $C_j = \{\a_i(\mu_1), \a_i(\mu_2), \dots, \a_i(\mu_j)\}$. Therefore by Line \ref{line:C-k1-rem} and \ref{obs:on-path}, we know $C^{i, \vk^1} =\{\a_i(\alpha) \vert \alpha \in p(\vk^1) \land \level(\alpha) = i\}$. 

Let the level $i$ nodes on $p^*(\vk^2)$ be $\beta_1, \beta_2, \dots, \beta_j$. Initially, $C_0 = C^{i,\vk^1}$. For each $t \in [j]$, $\a_i(\beta_t)$ is chosen in a way so that $\a_i(\beta_t) \not\in C_{t - 1}$. Then $C_t = C_{t - 1} \cup \{\a_i(\beta_t)\}$. Since $C^{i, \vk^1} \subseteq C_t$ holds for all $t \in [j]$, we know $\a_i(\alpha) \not= \a_i(\beta_t), \forall t \in [j]$. 

Hence there is no collision between $p(\vk^1)$ and $p^*(\vk^2)$. 

\highlight{Proof of \ref{obs:right-alpha-level}.} 
When $\mu_j = \alpha$, by Line \ref{line:assign rig-rem}, $\rig(\alpha) = \mu_{j + 1}$. By Line \ref{line:assign xa-rem} and Line \ref{line:if-rem}, $\level(\mu_{j + 1}) = i$ if and only if $g_i(\a_i(\mu_{j + 1})) = 1$. Thus the statement holds. 

\highlight{Proof of \ref{obs:if-star-then-all-stars}.} By Line \ref{line:assign cs-rem}, once $\mathsf{star}$ switches from $\text{false}$ to true, $y$ is always $\star_*$, and $\mathsf{star}$ is always true. Therefore, since $\a_i(\mu_j) = y$ (Line \ref{line:assign xa-rem}), if $\a_i(\mu_j) = \star_*$ for $j \leq k_i$, then we must have $\a_i(\mu_{k_i}) = \star_*$. This proves that once $\a_i(\beta) = \a_i(\vidx^{\vk}_{i,j}) = \star_*$, we must have $\a_i(\alpha) = \a_i(\vidx^{\vk}_{i,k_i}) = \star_*$. 

\highlight{Proof of \ref{obs:if-inherit-in-p2}.} If the condition at Line \ref{line:when-inherit-rem} met, and $\level(\mu_j) = i$, we know $\index(\mu_j) = (0, \dots, 0,j + 1, k_{i +1}, \dots, k_{\ell})$ by Line \ref{line:assign index} where $k_j = k^2_j$ for $j \in [i + 1, \ell]$ and $k^2_i > 0$. Thus we can see that $\index(\mu_1)$ is an ancestor of $\vk^2$. Therefore, by definition of $p^*(\vk^2)$, $\mu_j \in p^*(\vk^2)$. 

\highlight{Proof of \ref{obs:first-star}.} Consider the function call that assigns $\a_i(\alpha)$. Since $\alpha \not\in p^*(\vk^2)$, by the contrapositive of \ref{obs:if-inherit-in-p2}, we know that $C_0 = \emptyset$. Suppose $\alpha = \mu_j$ and $\rig(\alpha) = \mu_{j + 1}$. Then by Line \ref{line:assign cs-rem}, $C_j = \{\a_i(\mu_1), \a_i(\mu_2), \dots, \a_i(\mu_j)\}$. Since $\a_i(\alpha) \not= \star_*$, $\mathsf{star} = \text{false}$, and there is no $\star_*$ in $C_j$. Hence if $\a_i(\rig(\alpha)) = \star_t$, we must have $t = 0$. Moreover, by Line \ref{line:assign y-rem}, this happens only when $a_{\x_i(\rig(\alpha))} \in C_j$ which means there is $1 \leq j' \leq j$ such that $\a_i(\mu_{j'}) = a_{\x_i(\rig(\alpha))}$. Let $\eta = \mu_{j'}$. This concludes the proof. 

\highlight{Proof of \ref{obs:first-star-beta}.} Also consider the function call that assigns $\a_i(\beta)$. The difference with \ref{obs:first-star} is that now $C_0$ may not be $\emptyset$. By Line \ref{line:assign cs-rem}, $C_j = C_0 \cup \{\a_i(\mu_1), \a_i(\mu_2), \dots, \a_i(\mu_j)\}$. Suppose $\beta = \mu_j$, $\rig(\beta) = \mu_{j + 1}$. Since $\a_i(\beta) \not= \star_*$, $\mathsf{star} = \text{false}$. By Line \ref{line:assign y-rem}, $\a_i(\mu_{j + 1}) = \star_*$ only when $a_{\x_i(\rig(\beta))} \in C_j$. If $a_{\x_i(\rig(\beta))} \in C_j \setminus C_0$, there is $1 \leq j' \leq j$ such that $\a_i(\mu_{j'}) = a_{\x_i(\rig(\beta))}$. We simply let $\eta = \mu_{j'}$. $\eta \in \tilde{p}(\vk^1, \vk^2)$ since $\beta = \mu_j \in \tilde{p}(\vk^1, \vk^2)$ and $j' < j$. 

If $a_{\x_i(\rig(\beta))} \in C_0$, by Line \ref{line:C_0-1-rem}, $C_0 = C^{i,\vk^1}$ which equals $\{\a_i(\alpha) \ \vert \ \alpha \in p(\vk^1) \land \level(\alpha) = i\}$ by Line \ref{line:C-k1-rem}. Hence there is $\eta \in p(\vk^1) \subseteq \tilde{p}(\vk^1, \vk^2)$ that satisfies the requirement. In either case, we can find such $\eta$. This concludes the proof. 

\highlight{Proof of \ref{obs:if-star-must-first}.} Similar as \ref{obs:first-star}. Also consider the function call that assigns $\a_i(\alpha)$. Since $\alpha \not\in p^*(\vk^2)$, by contrapositive of \ref{obs:if-inherit-in-p2}, we know that $C_0 = \emptyset$. Suppose $\alpha = \mu_j$ and $\rig(\alpha) = \mu_{j + 1}$. 

By $C_0 = \emptyset$ and Line \ref{line:assign y-rem}, we know $\a_i(\mu_1) \not= \star_*$. Then because $\a_i(\alpha) = \star_*$ or $\a_i(\rig(\alpha)) = \star_t$ for $t > 0$, there must exist $1 \leq j' \leq j + 1$ such that $\a_i(\mu_{j'}) = \star_*$ and $\a_i(\mu_{j' - 1}) \not= \star_*$. By \ref{obs:first-star}, we know $\a_i(\mu_j') = \star_0$. Since $t > 0$, we know $j' \not= j + 1$. Thus $1 \leq j' \leq j$, and we let $\eta = \mu_j$ which is of level $i$. $\eta \in p(\alpha)$ since $j' \leq j$. 

\highlight{Proof of \ref{obs:if-star-must-first-2}.} If there is a node $\alpha \in p(\vk^1) \setminus p^*(\vk^2)$ such that $\a_i(\alpha) = \star_*$ and $\level(\alpha) = i$. Then by \ref{obs:if-star-must-first} such node $\eta \in p(\alpha) \subseteq \tilde{p}(\vk^1, \vk^2)$ exists. 

Then suppose there is no such $\alpha$. Consider the function call that assigns $\a_i(\beta)$. We will prove there is no $\star_*$ in $C_0$. Let $S_1$ be the set of level $i$ nodes on $p(\vk^1)$, and $S_2$ be the set of level $i$ nodes on $p^*(\vk^2)$. By the tree structure, $S_1$ is either a subset of $S_2$ or disjoint with $S_2$. 

If $S_1$ is a subset of $S_2$, we know $C_0 = \emptyset$ by Line \ref{line:when-inherit-rem}, since there is no $t \in [i + 1, \ell]$ that $k^1_t < k^2_t$ (otherwise $S_1 \cap S_2 = \emptyset$). If $S_1$ and $S_2$ are disjoint, $C_0 = C^{i,\vk^1}$, and by Line \ref{line:C-k1-rem}, $C^{i,\vk^1} = \{\a_i(\mu) \vert \mu \in S_1\}$. We prove by contradiction. Suppose there is $\mu \in S_1$ such thhat $\a_i(\mu) = \star_*$, namely $\star_* \in C_0$. We let $\alpha = \mu$. Then $\alpha \in p(\vk^1) \setminus p^*(\vk^2) = S_1 \setminus S_2 = S_1$ and $\level(\alpha) = i$. This contradicts with the fact there is no such $\alpha$. 

In either case, $\star_* \not\in C_0$. Thus let $\mu_{j'}$ be the first node among $\mu_1, \dots, \mu_j, \mu_{j + 1}$ such that $\a_i(\mu_{j'}) = \star_*$. By $C_{j' - 1} = C_0 \cup \{\a_i(\mu_1), \dots, \a_i(\mu_{j' - 1})\}$, we know $\star_* \not\in C_{j' - 1}$. Hene by Line \ref{line:assign y-rem}, we have $\a_i(\mu_{j'}) = \star_0$. Thus we know $j' \leq j$ and we simply let $\eta = \mu_j$. In such case, $\eta \in p^*(\vk^2) \subseteq \tilde{p}(\vk^1, \vk^2)$ and $\level(\eta) = i$. 
\end{proof}

\subsection{Some Structure Lemmas}\label{sec:important-lemmas}

Similar as before, we shall first extend Lemma \ref{lem:level_i_collision}, Corollary \ref{cor:moving-1-path} and \ref{cor:moving-2-paths}. To do so, we first need an extra definition. See Figure \ref{fig:p_tilde}. 

\begin{figure}
    \centering
	\begin{tikzpicture}[scale = 0.6, thick]
  \draw (0,0) -- (1, -3);
  \draw (1, -3) -- (8, -3);
  \draw (5, -3) -- (6, -6);
  \node at (6,-6.5) {\small $p^*({\vec{k}}^1)$};

  \node [draw,circle,fill,minimum size=2.5,inner sep=0pt, outer sep=0pt] at (6,-3) {};
  \node at (6,-3.7) {\small $\gamma_i$};

  \node [draw,circle,fill=red,red,minimum size=2.5,inner sep=0pt, outer sep=0pt] at (7,-3) {};
  \node [red] at (7,-3.7) {\small $\alpha$};

  \draw (9,0) -- (10, -3);
  \draw (10,-3) -- (17, -3);
  \draw (14, -3) --  (15, -6);
  \node at (15,-6.5) {\small $p^*({\vec{k}}^2)$};

  \node [draw,circle,fill=red,red,minimum size=2.5,inner sep=0pt, outer sep=0pt] at (16,-3) {};
  \node [red] at (16,-3.7) {\small $\beta$};

  \node [red] at (10,-5) {\small $\mathsf{a}_i(\alpha) = \mathsf{a}_i(\beta)$};
  
  \draw [decorate,decoration={brace,amplitude=4pt},xshift=0pt,yshift=7pt, thick, blue] (1.2,-3) -- (5.8,-3);
  \node [blue] at (3.5,-1.7) {\small $\tilde{p}(\vec{k}^1, \vec{k}^2)$};

    \draw [decorate,decoration={brace,amplitude=4pt},xshift=0pt,yshift=7pt, thick, blue] (10.2,-3) -- (16.8,-3);
  \node [blue] at (13.5,-1.7) {\small $\tilde{p}(\vec{k}^1, \vec{k}^2)$};

\end{tikzpicture}\textbf{}
    \caption{The definition of $\gamma_i$ and the level $i$ nodes in $\tilde{p}(\vk^1, \vk^2)$}
    \label{fig:p_tilde}
\end{figure}
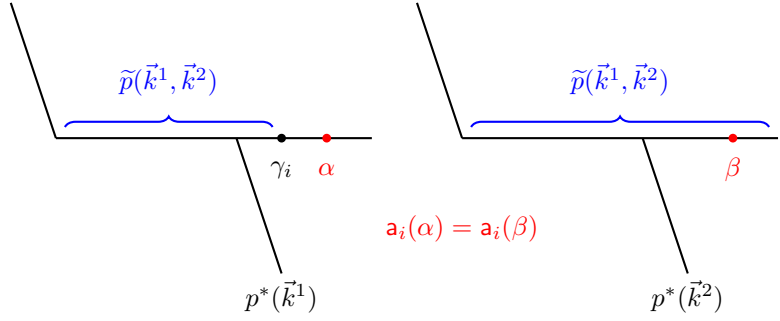

\begin{definition} \label{def:p_tilde}
Assuming that $\vidx^{\vk^1}$ and $\vidx^{\vk^2}$ exist and $\vk^1 < \vk^2$. 

Define $\gamma_i:=\max \{\gamma \in p^*(\vk^1) \ \vert \ \level(\gamma) = i \ \land \ p(\gamma) \text{ has no collision wtih } p^*(\vk^2) \}$, namely, the last level $i$ node on $p^*(\vk^1)$ such that its path has no collision with $p^*(\vk^2)$. We define $\tilde{p}(\vk^1, \vk^2)$ to be $$\tilde{p}(\vk^1, \vk^2) = \left(\bigcup_{\substack{i=1\\\gamma_i \text{exists}}}^{\ell} p(\gamma_i)\right) \bigcup p^*(\vk^2).$$

By  \cref{observations}\ref{obs:no-collision-k1-k2}, if $k^1_i > 0$, $\gamma_i$ must exist, and $p(\vk^1) \subseteq \tilde{p}(\vk^1, \vk^2)$.
\end{definition}

\newcommand{\Findmainalgo}{\textsf{Find}}
\newcommand{\FindalgoI}{\Findmainalgo\textsf{-Case-I}}
\newcommand{\FindalgoII}{\Findmainalgo\textsf{-Case-II}}

\begin{algorithm} \label{algo:find-pi3-2}
	$\vk = \index(\sigma)$\\
	\eIf{there is a collision between $p(\sigma)$ and $\tilde{p}(\vk^1, \vk^2)$}{
		Let $i$ be the level of the lowest such collision\; \label{line:def-i}
		Let $\pi_1, \pi_2$ be such a collision on level $i$; \tcc{break ties by picking the lexicographically first pair} \label{line:defi-pi12}
		\eIf{$\a_i(\vidx^{\vk}_{i,k_i}) \not= \star_*$}{
			\Return $\FindalgoI(\pi_1,\pi_2,\vk,T)$; \tcc{see Algorithm~\ref{algo:find-pi3-2-1}}
		}{
			\Return $\FindalgoII(\pi_1,\pi_2,\vk,T)$;  \tcc{see Algorithm~\ref{algo:find-pi3-2-2}}
		}
	}{
		Let  $\pi_3$ be the last node on $p(\sigma)$ that is also on $\tilde{p}(\vk^1, \vk^2)$\;
		\Return $\pi_3$\;
	}
\caption{$\Findmainalgo(\sigma,T)$}
\end{algorithm}

\begin{lemma}\label{lem:level_i_collision-2}
Fix $\vk^1, \vk^2 \in \N^\ell$ such that $\vk^1 < \vk^2$ and $(w, T) \in \supp(\bw^{\vk^1, \vk^2}, \bT)$. Assuming that $\vidx^{\vk^1}$ and $\vidx^{\vk^2}$ exist. Suppose $\sigma = \vidx^{\vk}$ is a node in $T$. Let $\pi_3 = \Findmainalgo(\sigma,T)$. The following hold:
\begin{enumerate}
	\item $\pi_3 \in \tilde{p}(\vk^1, \vk^2)$.
	
	\item If there is a collision between $p(\sigma)$ and $\tilde{p}(\vk^1, \vk^2)$, then letting $i$ be the level of the lowest such collision, it holds that $(\a_i(\pi_3), \level(\pi_3)) = (\a_i(\vidx^{\vk}_{i, k_i}), i)$.
	
	\item There is no collision between $p(\sigma)$ and $\tilde{p}(\vk^1, \vk^2)$ below $\level(\pi_3)$.
\end{enumerate}

\end{lemma}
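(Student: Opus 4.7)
My approach follows the template of Lemma~\ref{lem:level_i_collision}, adapted to accommodate $\tilde{p}(\vk^1,\vk^2)$ in place of $p^*(\vk^1)$ and the $\star_*$ markers of the relaxed walk. The easy case is no collision between $p(\sigma)$ and $\tilde{p}(\vk^1,\vk^2)$: $\Findmainalgo$ returns the last common node on the two paths, which exists since the root is common to both; claims~(1) and~(3) are immediate and~(2) is vacuous.

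For the collision case, let $i$ be the level selected at Line~\ref{line:def-i} and $(\pi_1,\pi_2)$ the level-$i$ collision from Line~\ref{line:defi-pi12}. The plan, mirroring Lemma~\ref{lem:level_i_collision}, is to maintain in parallel two pointers $\alpha_t\in p(\sigma)$ and $\beta_t\in\tilde{p}(\vk^1,\vk^2)$, both of level $i$, starting at $(\pi_1,\pi_2)$ and preserving the invariant $\a_i(\alpha_t)=\a_i(\beta_t)$ by setting $\alpha_{t+1}\gets\rig(\alpha_t)$ and letting $\beta_{t+1}$ be either $\rig(\beta_t)$ or a replacement level-$i$ node in $\tilde{p}(\vk^1,\vk^2)$ supplied by \cref{observations}\ref{obs:first-star}/\ref{obs:first-star-beta}. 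After finitely many steps, $\alpha_t=\mu^{\vk}_{i,k_i}$, and the final $\beta_t$ is the desired $\pi_3$. The level of $\alpha_t$ stays at $i$ because $\mu^{\vk}_{i,k_i}$ exists on $p(\sigma)$, and $\beta_t$ stays at level $i$ by \cref{observations}\ref{obs:right-alpha-level}, using $g_i(\a_i(\beta_t))=g_i(\a_i(\alpha_t))=1$ from the invariant.

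The two algorithmic branches reflect whether $\star_*$ ever intervenes. In Case~I, $\a_i(\mu^{\vk}_{i,k_i})\neq\star_*$, so by \cref{observations}\ref{obs:if-star-then-all-stars} no $\a_i(\alpha_t)$ equals $\star_*$; the invariant yields $\nex(\alpha_t)=r_i(\a_i(\alpha_t))=r_i(\a_i(\beta_t))=\nex(\beta_t)$ and hence $\x_i(\rig(\alpha_t))=\x_i(\rig(\beta_t))$ by \cref{observations}\ref{obs:same-x}; by \cref{observations}\ref{obs:same-x-same-a}, the only way $\a_i(\rig(\alpha_t))$ and $\a_i(\rig(\beta_t))$ can differ is if one side produces $\star_*$, and in that situation the level-$i$ node $\eta$ furnished by \cref{observations}\ref{obs:first-star} (when the transition happens on the $\alpha$-side, which is outside $p^*(\vk^2)$) or \cref{observations}\ref{obs:first-star-beta} (when it happens on the $\beta$-side) gives a replacement $\beta_{t+1}\gets\eta$ inside $\tilde{p}(\vk^1,\vk^2)$ that restores the invariant. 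In Case~II, $\a_i(\mu^{\vk}_{i,k_i})=\star_*$; by \cref{observations}\ref{obs:if-star-must-first}/\ref{obs:if-star-must-first-2}, level-$i$ nodes carrying $\star_0$ must exist on both $p(\sigma)$ and $\tilde{p}(\vk^1,\vk^2)$, so we run the Case~I loop up to the first $\star_0$ appearance on the $\alpha$-side, relocate $\beta_t$ to the matching $\star_0$-node inside $\tilde{p}(\vk^1,\vk^2)$, and then track $\star_0,\star_1,\dots$ in lock-step until $\alpha$ reaches $\mu^{\vk}_{i,k_i}$.

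Once the walk terminates, claim~(2) follows from the terminal invariant $(\a_i(\pi_3),\level(\pi_3))=(\a_i(\mu^{\vk}_{i,k_i}),i)$, claim~(1) follows since $\beta_t$ is kept inside $\tilde{p}(\vk^1,\vk^2)$, and claim~(3) follows from the minimality of $i$ chosen at Line~\ref{line:def-i}, because $\level(\pi_3)=i$ and any collision strictly below level $i$ between $p(\sigma)$ and $\tilde{p}(\vk^1,\vk^2)$ would contradict that minimality. The main obstacle will be Case~II: verifying that a $\beta$-jump never escapes $\tilde{p}(\vk^1,\vk^2)$ into $p^*(\vk^1)\setminus\tilde{p}(\vk^1,\vk^2)$, and that the matching $\star_0$-node supplied by \cref{observations}\ref{obs:if-star-must-first-2} can be chosen inside the correct region. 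This will require carefully distinguishing the three subregions $p^*(\vk^2)$, $p(\vk^1)\cap\tilde{p}(\vk^1,\vk^2)$, and $p^*(\vk^1)\setminus\tilde{p}(\vk^1,\vk^2)$ when applying the star-creation observations, and invoking the minimality of the pair $(\pi_1,\pi_2)$ from Line~\ref{line:defi-pi12} in the same spirit as the ``Case~2'' argument in the proof of Lemma~\ref{lem:level_i_collision}.
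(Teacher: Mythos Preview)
Your high-level outline matches the paper's template, and the no-collision case, claims~(1)--(3), and the invariant $\a_i(\alpha_t)=\a_i(\beta_t)$ are all handled as in the paper. There are, however, two genuine gaps.

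\textbf{Case~II: the $\alpha$-pointer does not only move forward.} You commit to $\alpha_{t+1}\gets\rig(\alpha_t)$ throughout, and in Case~II you want to jump $\beta$ to a $\star_0$-node in $\tilde p(\vk^1,\vk^2)$ once $\alpha$ first produces $\star_0$. Your justification cites \cref{observations}\ref{obs:if-star-must-first-2}, but that fact has a \emph{premise}: it needs some level-$i$ node $\beta'\in\tilde p(\vk^1,\vk^2)$ with $\a_i(\beta')=\star_*$ (or $\a_i(\rig(\beta'))=\star_t$, $t>0$). Nothing you have shown guarantees such a node exists a~priori; it is entirely possible that all level-$i$ nodes in $\tilde p(\vk^1,\vk^2)$ carry non-$\star$ labels while $p(\sigma)$ does produce stars. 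The paper avoids this trap by \emph{not} forcing $\alpha$ forward. In the scenario $\a_i(\rig(\alpha))=\star_*$, $\a_i(\rig(\beta))\neq\star_*$, Algorithm~\ref{algo:find-pi3-2-2} instead relocates $\alpha$ \emph{backward} to the node $\eta_\alpha\in p(\sigma)$ with $\a_i(\eta_\alpha)=\a_i(\rig(\beta))$ (supplied by \cref{observations}\ref{obs:first-star}), keeping $\beta\gets\rig(\beta)$. Only when the $\beta$-side itself produces a star does the paper jump both pointers to $\star_0$-nodes, and then the existence of $\eta_\beta$ is justified precisely because $\a_i(\rig(\beta))=\star_*$. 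This asymmetric treatment (and the resulting non-monotone movement of $\alpha$) is the missing idea; the termination argument also becomes more delicate and is handled separately in the paper.

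\textbf{The $\gamma_i$-boundary escape already occurs in Case~I.} You flag ``$\beta$ escaping $\tilde p(\vk^1,\vk^2)$'' as a Case~II obstacle, but it is equally present in Case~I: after $\beta\gets\rig(\beta)$ at Line~\ref{code:eq-2}, $\beta$ may become $\rig(\gamma_i)\notin\tilde p(\vk^1,\vk^2)$. The paper handles this via a separate relocation step (Line~\ref{code:exist-2-2}), and the existence of the target $\eta\in p^*(\vk^2)$ with $\a_i(\eta)=\a_i(\rig(\gamma_i))$ follows from the \emph{maximality} in the definition of $\gamma_i$: were there no such $\eta$, then $\rig(\gamma_i)$ itself would have no collision with $p^*(\vk^2)$, contradicting $\gamma_i$'s choice. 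Your sketch does not invoke this property, and none of \cref{observations}\ref{obs:first-star}/\ref{obs:first-star-beta} supplies it.
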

\begin{proof}

Let $\pi_3 = \Findmainalgo(\sigma,T)$. First, if there is no collision between $p(\sigma)$ and $\tilde{p}(\vk^1, \vk^2)$, then one can straightforwardly verify $\pi_3$ satisfy all the required conditions. From now on we assume that there is a collision between $p(\sigma)$ and $\tilde{p}(\vk^1, \vk^2)$.

Let $i,\pi_1,\pi_2$ be as defined in Line~\ref{line:def-i} and Line~\ref{line:defi-pi12} in Algorithm~\ref{algo:find-pi3-2}. Formally, $\pi_1, \pi_2, i$ are defined as following. $\pi_1 = \vidx^{\vk}_{i,j}$ is the node on $p(\sigma) \setminus \tilde{p}(\vk^1, \vk^2)$ with the smallest $i$ such that $\exists \pi_2 \in \tilde{p}(\vk^1, \vk^2) \setminus p(\sigma)$ satisfying $(\a_i(\pi_2), \level(\pi_2)) = (\a_i(\pi_1), i)$. If there are multiple such pairs, we pick the lexicographically first pair. 

Depending on whether $\a_i(\vidx^{\vk}_{i,k_i}) \not= \star_*$, we divide the proof into two cases.

\highlight{The case when $\a_i(\vidx^{\vk}_{i,k_i}) \not= \star_*$.} In this case, $\pi_3$ is found by Algorithm~\ref{algo:find-pi3-2-1}.
\begin{algorithm}[h] \label{algo:find-pi3-2-1}
\tcc{Finding $\pi_3$ when $\a_i(\mu^{\vk}_{i,k_i}) \not= \star_*$}
$\alpha \gets \pi_1, \beta \gets \pi_2$\;

\While{$\alpha \not= \vidx^{\vk}_{i,k_i}$}{
	\eIf{$\a_i(\rig(\alpha)) = \a_i(\rig(\beta))$} {
		$\alpha \gets \rig(\alpha), \beta \gets \rig(\beta)$\; \label{code:eq-2}

		\If{$\beta \not\in \tilde{p}(\vk^1, \vk^2)$}
		{
			\text{Let $\eta \in p^*(\vk^2)$ be the node such that $\a_i(\eta) = \a_i(\beta)$ and $\level(\eta) = i$}\; \label{code:exist-2-2}
			$\beta \gets \eta$; \tcc{If $\eta$ does not exist, the algorithm aborts.}
		}
	}
	{
	  	Let $\eta \in \tilde{p}(\vk^1, \vk^2)$ be the node such that $\a_i(\eta) = \a_i(\rig(\alpha))$ and $\level(\eta) = i$\; \label{code:exist-2-1}	  	
		$\alpha \gets \rig(\alpha), \beta \gets \eta$;	 \tcc{If $\eta$ does not exist, the algorithm aborts.}
	}
}
$\pi_3 \gets \beta$\;
\Return $\pi_3$\;
\caption{$\FindalgoI(\pi_1,\pi_2,\vk,T)$}
\end{algorithm}
We will prove the following three facts about Algorithm~\ref{algo:find-pi3-2-1}:

\begin{enumerate}
	\item Throughout Algorithm \ref{algo:find-pi3-2-1}, $\level(\alpha) = \level(\beta) = i$.  \label{fact1-1}
	\item Algorithm \ref{algo:find-pi3-2-1} terminates. \label{fact1-2}
	\item At Line \ref{code:exist-2-2} and \ref{code:exist-2-1}, the node $\eta$ always exists.\label{fact1-3}
\end{enumerate}

We first show that these three facts are sufficient. First observe that the invariant $\a_i(\alpha) = \a_i(\beta)$ is preserved during algorithm. Suppose these facts are true. When the algorithm terminates, we have $\alpha = \vidx^{\vk}_{i,k_i}$, $(\a(\beta), \level(\beta)) = \left(\a(\alpha), i)\right)$. Moreover by Line \ref{code:exist-2-2} and Line~\ref{code:exist-2-1}, we know that it always holds that $\beta \in \tilde{p}(\vk^1, \vk^2)$. Hence, $\pi_3$ satisfies the requirements of the lemma by Fact~1, the invariant $\a_i(\alpha) = \a_i(\beta)$, the definition of $i$, and Fact~\ref{fact1-2}.

Now we prove these three facts. 

\highlight{Proof of Fact \ref{fact1-1}.} Observe that initially $\alpha = \pi_1$ which is on $p(\vk)$ and of level $i$. Then in each iteration, $\alpha$ always moves to $\rig(\alpha)$ until $\alpha = \vidx^{\vk}_{i,k_i}$. From the assumption that $\sigma = \vidx^{\vk}$ exists, we know $\vidx^{\vk}_{i,k_i}$ exists. Thus, $\level(\alpha) = i$ holds throughout Algorithm~\ref{algo:find-pi3-2-1}. 

For $\beta$, each time it either (1) moves to a level-$i$ node $\eta$, or (2) moves to $\rig(\beta)$. For Case~(1) clearly we still have $\level(\beta) = i$. For Case(2), from $\a_i(\rig(\alpha)) = \a_i(\rig(\beta))$, we have $g_i(\a_i(\rig(\beta))) = g_i(\a_i(\rig(\alpha))) = 1$, implying that $\rig(\beta)$ must also be of level $i$ by \cref{observations}\ref{obs:right-alpha-level}. Hence, $\level(\beta) = i$ always holds. This finishes the proof of Fact~\ref{fact1-1}.

\highlight{Proof of Fact~\ref{fact1-2}.} Fact \ref{fact1-2} follows from the observation that after each iteration we have $\alpha \gets \rig(\alpha)$ and $\alpha$ never moves to its left. So eventually the algorithm must stop. 

\highlight{Proof of Fact~\ref{fact1-3}.} We first prove that $\eta$ always exists at Line \ref{code:exist-2-1}.

Base on Fact \ref{fact1-1} and $\a_i(\alpha) = \a_i(\beta)$, it follows from \cref{observations}\ref{obs:same-x} that $\x_i(\rig(\alpha)) = \x_i(\rig(\beta))$. Therefore by \cref{observations}\ref{obs:same-x-same-a}, the only possibility of entering Line \ref{code:exist-2-1} is when at least one of $\a_i(\rig(\alpha)) = \star_*$ and $\a_i(\rig(\beta)) = \star_*$ is true. Since $\a_i(\mu^{\vk}_{i,k_i}) \not= \star_*$, from \cref{observations}\ref{obs:if-star-then-all-stars}, we know $\a_i(\rig(\alpha)) \not= \star_*$. Thus here $\a_i(\rig(\beta)) = \star_*$. 

If $\a_i(\beta) = \star_*$, by $\a_i(\alpha) = \a_i(\beta)$, we would have $\a_i(\rig(\alpha)) = \a_i(\rig(\beta))$ and would not enter Line \ref{code:exist-2-1}. Therefore $\a_i(\beta) \not= \star_*$. Since by Line \ref{code:exist-2-2}, we always have $\beta \in \tilde{p}(\vk^1, \vk^2)$, by \cref{observations}\ref{obs:first-star-beta}, there must exist $\eta \in \tilde{p}(\vk^1, \vk^2)$ such that $(\a_i(\eta), \level(\eta)) = (a_{\x_i(\rig(\beta))}, i)$. On the other side, since $\a(\rig(\alpha)) \not= \star_0$, $\a_i(\rig(\alpha)) = a_{x_i(\rig(\alpha))} = a_{x_i(\rig(\beta))} = \a_i(\eta)$. Therefore such $\eta$ exists. 

Then we prove that $\eta$ always exists at Line \ref{code:exist-2-2}. Since each time $\beta$ either move to $\rig(\beta)$ or to $\eta \in \tilde{p}(\vk^1, \vk^2)$ (note $p^*(\vk^2) \subseteq \tilde{p}(\vk^1, \vk^2)$), the only possibility of $\beta \not\in \tilde{p}(\vk^1, \vk^2)$ is when $\beta = \rig(\gamma_i)$ where $\gamma_i$ is defined as in \Cref{def:p_tilde}. Note by Fact \ref{fact1-1}, $\beta = \rig(\gamma_i)$ is of level $i$. Then suppose there is no $\eta \in p^*(\vk^2)$ such that $\a_i(\eta) = \a_i(\rig(\gamma_i))$. Instead of $\gamma_i$, $\rig(\gamma_i)$ should be the last level $i$ node on $p^*(\vk^1)$ whose path $p(\rig(\gamma_i))$ has no collision with $p^*(\vk^2)$. This contradicts the definition of $\gamma_i$. Thus such node $\eta$ must exist. 

\highlight{The case when $\a_i(\vidx^{\vk}_{i,k_i}) = \star_*$.} In this case, $\pi_3$ is found by Algorithm~\ref{algo:find-pi3-2-2}.

\begin{algorithm}[h] \label{algo:find-pi3-2-2}
\tcc{Finding $\pi_3$ when $\a_i(\mu^{\vk}_{i,k_i}) = \star_*$}
$\alpha \gets \pi_1, \beta \gets \pi_2$\;
\While{$\alpha \not= \vidx^{\vk}_{i,k_i}$}{
	\eIf{$\a_i(\rig(\alpha)) = \a_i(\rig(\beta))$} {
		$\alpha \gets \rig(\alpha), \beta \gets \rig(\beta)$\; \label{code:eq-3}
	}
	{
		\eIf{$\a_i(\rig(\beta)) = \star_*$ \label{code:neq-3}} {
		\text{Let $\eta_\alpha \in p(\sigma)$ be the node that $\a_i(\eta_\alpha) = \star_0$ and $\level(\eta_\alpha) = i$}\;
		\text{Let $\eta_\beta \in \tilde{p}(\vk^1, \vk^2)$ be the node that $\a_i(\eta_\beta) = \star_0$ and $\level(\eta_\beta) = i$}\; \label{code:exist-3-3}
		$\alpha \gets \eta_\alpha, \beta \gets \eta_\beta$\;
		}{
		\tcc{now it must hold that $\a_i(\rig(\alpha))=\star_*$}
		\text{Let $\eta_\alpha \in p(\sigma)$ be the node that $\a_i(\eta_\alpha) = \a_i(\rig(\beta))$ and $\level(\eta_\alpha) = i$}\;	\label{code:exist-3-1}
		$\alpha \gets \eta_\alpha, \beta \gets \rig(\beta)$\;
		}
	}
	\If{$\beta \not\in \tilde{p}(\vk^1, \vk^2)$}
	{
		\text{Let $\eta_\beta \in p^*(\vk^2)$ be the node such that $\a_i(\eta_\beta) = \a_i(\beta)$ and $\level(\eta_\beta) = i$}\; \label{code:exist-3-2}
		$\beta \gets \eta_\beta$\;
	}
}
$\pi_3 \leftarrow \beta$\;
\Return $\beta$\;
\caption{$\FindalgoII(\pi_1,\pi_2,\vk,T)$}
\end{algorithm}

Similar to the previous case, we will prove the following three facts about Algorithm \ref{algo:find-pi3-2-2}:
\begin{enumerate}
	\item Throughout Algorithm \ref{algo:find-pi3-2-2}, $\level(\alpha) = \level(\beta) = i$.  \label{fact2-1}
	\item Algorithm \ref{algo:find-pi3-2-2} terminates. \label{fact2-2}
	\item At Line \ref{code:exist-3-3} both $\eta_\alpha$ and $\eta_\beta$ always exist; At Line \ref{code:exist-3-1}, $\eta_\alpha$ always exists; at Line~\ref{code:exist-3-2}, $\eta_\beta$ always exists. \label{fact2-3}
\end{enumerate}

These three facts above are enough to imply that the found $\pi_3$ satisfies the requirements of the lemma, by the same argument as that of the previous case.

\highlight{Proof of Fact \ref{fact2-1}.} Initially $\alpha = \pi_1 \in p(\sigma)$. Then in each iteration, $\alpha$ may move to either $\rig(\alpha)$ or a node $\eta_\alpha \in p(\sigma)$ of level $i$ until we reach $\alpha = \vidx^{\vk}_{i,k_i}$. Note $\alpha$ moves to $\eta_{\alpha}$ either at Line \ref{code:exist-3-3} where $(\a_i(\eta_{\alpha}), \level(\eta_{\alpha})) = (\star_0, i)$ or at Line \ref{code:exist-3-1} where $\a_i(\eta_{\alpha}) \neq \star_*$ and $\level(\eta_{\alpha}) = i$. Since $\a_i(\vidx^{\vk}_{i,k_i}) = \star_*$, in both cases, such $\eta_{\alpha} \in p(\sigma) = p(\vidx^{\vk})$ is always a node before or equal $\vidx^{\vk}_{i,k_i}$. Therefore, $\alpha$ is always on path $p(\vidx^{\vk}_{i,k_i})$ and of level $i$. 

For $\beta$, same as that of Algorithm \ref{algo:find-pi3-2-1}, in each iteration it move to either (1) a level $i$ node $\eta_\beta$, or (2) move to $\rig(\beta)$. The fact clearly holds in Case~(1). For Case~(2), it moves to $\rig(\beta)$ either at Line \ref{code:eq-3} or Line \ref{code:exist-3-1}. For Line \ref{code:eq-3}, since $\a_i(\rig(\alpha)) = \a_i(\rig(\beta))$ holds, we know $g_i(\a_i(\rig(\beta))) = g_i(\a_i(\rig(\alpha))) = 1$, and $\rig(\beta)$ must also be of level $i$ by \cref{observations}\ref{obs:right-alpha-level}. For Line \ref{code:exist-3-1}, since $\a_i(\eta_{\alpha}) = \a_i(\rig(\beta))$ holds, and $\eta_{\alpha}$ is of level $i$, we know $g_i(\a_i(\rig(\beta))) = g_i(\a_i(\eta_{\alpha})) = 1$, and $\rig(\beta)$ must also be of level $i$ by \cref{observations}\ref{obs:right-alpha-level}.

\highlight{Proof of Fact~\ref{fact2-3}.} Here we prove Fact \ref{fact2-3} before Fact \ref{fact2-2}. For Line \ref{code:exist-3-2}, the analysis is the same as that of Line \ref{code:exist-2-2} of Algorithm \ref{algo:find-pi3-2-1}.

For Line \ref{code:exist-3-3} and \ref{code:exist-3-1}, similar as before, given Fact \ref{fact2-1} and $\a_i(\alpha) = \a_i(\beta)$, it follows from \cref{observations}\ref{obs:same-x} that $\x_i(\rig(\alpha)) = \x_i(\rig(\beta))$. Therefore, by \cref{observations}\ref{obs:same-x-same-a}, the only possibility of entering Line \ref{code:neq-3} is when at least one of $\a_i(\rig(\alpha)) = \star_*$ and $\a_i(\rig(\beta)) = \star_*$ happens.

Note since $\pi_1 \not\in \tilde{p}(\vk^1, \vk^2)$, we know $\alpha \not\in p^*(\vk^2)$. Similarly, $\vidx^{\vk}_{i,k_i} \not\in p^*(\vk^2)$. We will need this fact in following case analysis. 

Suppose $\a_i(\rig(\alpha)) = \star_{t_1}$ and $\a_i(\rig(\beta)) = \star_{t_2}$. Note $t_1 \neq t_2$, or otherwise we would not have entered Line \ref{code:neq-3}. If $t_1 > t_2$, from $\alpha \not\in p^*(\vk^2)$, we have $\a_i(\alpha) = \star_{t_1 - 1}$. Then by $\a_i(\alpha) = \a_i(\beta) = \star_{t_1 - 1}$, we must have $\a_i(\rig(\beta)) = \star_{t_1}$. This contradicts with the assumption $t_1 > t_2$. Since same thing holds for $t_2 > t_1$, we know exactly one of $\a_i(\rig(\alpha)) = \star_*$ and $\a_i(\rig(\beta)) = \star_*$ happens. Also we must have $\a_i(\alpha) = \a_i(\beta) \neq \star_*$.

\begin{itemize}
	\item If $\a_i(\rig(\alpha)) = \star_*$ and $\a_i(\rig(\beta)) \not= \star_*$, since $\a_i(\alpha) \neq \star_*$ and $\alpha \not\in p^*(\vk^2)$, by \cref{observations}\ref{obs:first-star}, there must be a node $\eta \in p(\sigma)$ such that $(\a_i(\eta), \level(\eta)) = (a_{\x_i(\rig(\alpha))}, i)$.

	Since $\a_i(\rig(\beta)) \not= \star_*$, $\a_i(\rig(\beta)) = a_{\x_i(\rig(\beta))} = a_{\x_i(\rig(\alpha))} = \a_i(\eta)$. Here $\x_i(\rig(\alpha)) = \x_i(\rig(\beta))$ follows from Fact \ref{fact2-1}, $\a_i(\alpha) = \a_i(\beta)$, and \cref{observations}\ref{obs:same-x}. This proves the existence of $\eta_{\alpha}$ at Line \ref{code:exist-3-1}.
	
	\item If $\a_i(\rig(\alpha)) \not= \star_*$ and $\a_i(\rig(\beta)) = \star_t$, by $\a_i(\vidx^{\vk}_{i,k_i}) = \star_*$, $\vidx^{\vk}_{i,k_i} \not\in p^*(\vk^2)$ and \cref{observations}\ref{obs:if-star-must-first}, we know there must exist such $\eta_\alpha = \vidx^{\vk}_{i,j}$ such that $j \leq k$ and $\a_i(\eta_{\alpha}) = \star_0$. This proves the existence of $\eta_{\alpha}$ at Line \ref{code:exist-3-3}. 

	Similarly, since $\a_i(\rig(\beta)) = \star_t$, by \cref{observations}\ref{obs:if-star-must-first-2}, if $t > 0$, such $\eta_\beta$ must also exist. If $t = 0$, since $g_i(\rig(\beta)) = g_i(\eta_\alpha) = 1$, by \cref{observations}\ref{obs:right-alpha-level}, we know $\level(\rig(\beta)) = i$. Then we can simply let $\eta_\beta = \rig(\beta)$. This proves the existence of $\eta_{\beta}$ at Line \ref{code:exist-3-3}.
\end{itemize}

\highlight{Proof of Fact~\ref{fact2-2}.} We will consider the following two cases depending on whether Algorithm~\ref{algo:find-pi3-2-2} enters Line~\ref{code:exist-3-3} during the execution.

We first observe that Algorithm~\ref{algo:find-pi3-2-2} enters Line~\ref{code:exist-3-3} at most once. Once Algorithm~\ref{algo:find-pi3-2-2} enters Line~\ref{code:exist-3-3} during the execution, after $\alpha \gets \eta_{\alpha}, \beta \gets \eta_{\beta}$, we have $\a_i(\alpha) = \a_i(\beta) = \star_0$. 

We prove that the algorithm then stops within $T$ steps without entering Line~\ref{code:exist-3-3} again, where $T$ is the integer such that $\a_i(\vidx^{\vk}_{i,k_i}) = \star_T$. This follows from a simple induction. Suppose after $t$ ($t \geq 0$) steps, $\a_i(\alpha) = \a_i(\beta) = \star_t$, we always have $\a_i(\rig(\alpha)) = \a_i(\rig(\beta)) = \star_{t + 1}$. Thus in the $t + 1$ step, we would enter Line~\ref{code:eq-3} and have $\alpha \gets \rig(\alpha), \beta \gets \rig(\beta)$. Noticing Line~\ref{code:exist-3-2} preserves $\a_i(\alpha)$ and $\a_i(\beta)$, this finishes the inductive step. The base case follows from the fact that $\a_i(\alpha) = \a_i(\beta) = \star_0$ before the first step. 

Otherwise, Algorithm~\ref{algo:find-pi3-2-2} never enters Line~\ref{code:exist-3-3}. When it enters Line~\ref{code:eq-3} and Line~\ref{code:exist-3-1}, $\beta$ always moves to $\rig(\beta)$. For Line \ref{code:exist-3-2}, $\beta \not\in \tilde{p}(\vk^1, \vk^2)$ can only happen if it equals $\rig(\gamma_i)$, and $\beta$ then moves to a node $\eta \in p^*(\vk^2)$ which is after $\rig(\gamma_i) \in p^*(\vk^1) \setminus \tilde{p}(\vk^1, \vk^2)$. Putting these together, since $\beta$ always moves to $\rig(\beta)$  or a node $\eta$ after it, the algorithm must eventually stop. 

\end{proof}

The following remark will be useful for later proofs.

\begin{remark}\label{rem:same-pi-3}
	Fix $\vk^1, \vk^2 \in \N^\ell$ such that $\vk^1 < \vk^2$ and $(w, T) \in \supp(\bw^{\vk^1, \vk^2}, \bT)$. Assuming that $\vidx^{\vk^1}$ and $\vidx^{\vk^2}$ exist. Let $\sigma,\eta$ be two nodes in $T$, and let $\pi_3^\sigma = \textsf{Find($\sigma,T$)}$ and $\pi_3^\eta = \textsf{Find($\eta,T$)}$. Let $i$ be the level of the lowest common ancestor of $\sigma$ and $\eta$. If $\level(\pi_3^\sigma) > i$ and $\level(\pi_3^\eta) > i$, then $\pi_3^\sigma = \pi_3^\eta$.
\end{remark}
\begin{proof}
Let $\pi_1^{\sigma}, \pi_2^{\sigma}$ be the nodes $\pi_1, \pi_2$ in $\textsf{Find($\sigma,T$)}$. Let $\pi_1^{\eta}, \pi_2^{\eta}$ be the nodes $\pi_1, \pi_2$ in $\textsf{Find($\eta,T$)}$. Let $i_0 = \min(\level(\pi_1^\sigma), \level(\pi_1^\eta))$.  Since $\level(\pi_1^\sigma) = \level(\pi_3^\sigma) > i$ and $\level(\pi_1^\eta) = \level(\pi_3^\eta) > i$, we know $i_0 > i$. Therefore, $p(\sigma)$ and $p(\eta)$ contains exactly the same level $i_0$ nodes. Let $\mu$ be the last level $i_0$ node on $p(\sigma)$. It is also the last level $i_0$ node on $p(\eta)$. 

Then by Line \ref{line:defi-pi12}, we know $\level(\pi^{\sigma}_1) = i_0$ (resp. $\level(\pi^{\eta}_1) = i_0$) since it is the lowest level that contains collision between $p(\sigma)$ (resp. $p(\eta)$) and $\tilde{p}(\vk^1, \vk^2)$.

By \Cref{lem:level_i_collision-2}, we know $(\a(\pi_3^\sigma), \level(\pi_3^\sigma)) = (\a(\mu), i_0) = (\a(\pi_3^\eta), \level(\pi_3^\eta))$ and $\pi_3^\sigma, \pi_3^\eta \in \tilde{p}(\vk^1, \vk^2)$. By \Cref{observations} \ref{obs:no-collision-k1-k2}, there is no collision between $p(\vk^1)$ and $p^*(\vk^2)$. Together with \Cref{def:p_tilde}, this implies that all $\alpha \not= \beta \in \tilde{p}(\vk^1, \vk^2)$ such that $\level(\alpha) = \level(\beta)$ must have $\a_i(\alpha) \not= \a_i(\beta)$. 

Hence $(\a(\pi_3^\sigma), \level(\pi_3^\sigma)) = (\a(\pi_3^\eta), \level(\pi_3^\eta))$ implies $\pi_3^\sigma = \pi_3^\eta$. 
\end{proof}

Next we recall the definition of two paths being the same below level $i$.

\begin{reminder}{Definition~\ref{def:same-below}}
We say two paths $p(\vk^1)$ and $p(\vk^2)$ are the same below level $i$ if
\begin{itemize}
	\item $\forall$ $1 \leq j < i$, $k^1_j = k^2_j$. 
	\item $\forall$ $1 \leq j < i, 1 \leq t \leq k^1_j$, $\a_j(\vidx^{\vk^1}_{j,t}) = \a_j(\vidx^{\vk^2}_{j,t})$.
\end{itemize}
\end{reminder}

The follow lemma is similar to \Cref{cor:moving-1-path}. However, since the initial value of $C_0$ in $\walk^{\vk^1, \vk^2}$ may be non-empty. It is more complicated. 

\begin{lemma} \label{lem:moving-1-path-2}
Fix $\vk^1, \vk^2 \in \N^\ell$ such that $\vk^1 < \vk^2$ and $(w, T) \in \supp(\bw^{\vk^1, \vk^2}, \bT)$. Assuming that $\vidx^{\vk^1}$ and $\vidx^{\vk^2}$ exist. Fix a node $\sigma \leq \vidx^{\vk^2}$, and let $\pi_3 = \Findmainalgo(\sigma,T)$.

Then, there must be a descendant $\sigma'$ of $\pi_3$ such that the following hold:
\begin{enumerate}
	\item $(\a(\sigma'), \level(\sigma')) = (\a(\sigma), \level(\sigma))$.
	\item There is no collision between $p(\sigma')$ and $\tilde{p}(\vk^1, \vk^2)$. 
	\item $p(\sigma')$ is the same as $p(\sigma)$ below level $\level(\pi_3)$. 
\end{enumerate}
\end{lemma}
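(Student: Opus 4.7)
The proof will mirror the template of \Cref{cor:moving-1-path} (the unrelaxed, single-vertex analogue), with the new wrinkle that recursive calls of $\walk^{\vk^1,\vk^2}$ can initialize $C_0$ to the inherited set $C^{i',\vk^1}$ rather than $\emptyset$, so subwalks from the same starting vertex need not be identical when the index parameters differ. I first dispose of the easy case in which there is no collision between $p(\sigma)$ and $\tilde p(\vk^1,\vk^2)$: here $\Findmainalgo(\sigma,T)$ just returns the last common node of the two paths, so setting $\sigma' = \sigma$ makes it a descendant of $\pi_3$ and trivially satisfies all three conclusions. The remainder of the plan handles the case when a collision exists.

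In that case, let $i = \level(\pi_3)$, let $\vk$ denote the index of $\sigma$, and let $\vk^3$ denote the index of $\pi_3$. I will define $\vk' \in \N^\ell$ by $k'_j = k^3_j$ for $j \ge i$ and $k'_j = k_j$ for $j < i$, and take $\sigma' := \mu^{\vk'}$. The reason for this choice is that, once existence of $\sigma'$ is established, $\pi_3$ is automatically the last level-$\ge i$ ancestor of $\sigma'$, and $p(\sigma')$ agrees with $p(\sigma)$ at every coordinate strictly below level $i$, so properties~(1) and~(3) of the lemma follow for free. The linchpin for existence is the identity $\nex(\pi_3) = \nex(\mu^{\vk}_{i,k_i})$, obtained by applying \Cref{lem:level_i_collision-2} to get $\a_i(\pi_3) = \a_i(\mu^{\vk}_{i,k_i})$ and then invoking $\nex(\cdot) = r_i(\a_i(\cdot))$ on both sides; so both subwalks begin from the same vertex at level $i-1$.

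The main obstacle is the claim that the two subwalks
\[
\walk^{\vk^1,\vk^2}(\nex(\pi_3),\, i-1,\, \cdot,\, \vk^3) \quad \text{and} \quad \walk^{\vk^1,\vk^2}(\nex(\mu^{\vk}_{i,k_i}),\, i-1,\, \cdot,\, (0,\dots,0,k_i,k_{i+1},\dots,k_\ell))
\]
produce identical subtree structures despite being invoked with different index parameters. In the unrelaxed setting this is automatic because $\walk$ depends only on the starting vertex, but in the relaxed setting one must verify, at every recursive level $i' < i$ reached during either subwalk, that the inheritance condition at Line~\ref{line:when-inherit-rem} of Algorithm~\ref{algo:relaxedextwalk} evaluates to the same answer on both sides. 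My plan is an induction on the recursion depth: the inheritance test at level $i'$ depends only on the coordinates at positions $\ge i'+1$ of the passed-down index, so it suffices to track how these coordinates evolve in the two subwalks. Using $\sigma \le \mu^{\vk^2}$ together with \Cref{lem:out-k2} to identify $\x$-values with $w$-entries, and using $\pi_3 \in \tilde p(\vk^1,\vk^2)$ (with $\level(\pi_3) = i$) to pin down the above-level-$i$ portion of $\vk^3$, I expect to show that at every matched recursive step the two index parameters are either both consistent with $\vk^2$ but not $\vk^1$ above the current level, or neither is. This forces identical $C_0$ initializations at every recursive level and hence identical subtree outputs, so $\sigma'$ exists with $(\a(\sigma'),\level(\sigma')) = (\a(\sigma),\level(\sigma))$.

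Finally, property~(2) follows from the subwalk-matching just established. Any putative collision between $p(\sigma')$ and $\tilde p(\vk^1,\vk^2)$ must lie strictly below level $i$, since the level-$\ge i$ portion of $p(\sigma')$ consists of ancestors of $\pi_3$, all of which belong to $\tilde p(\vk^1,\vk^2)$. But by the matching of the two subwalks, any such low-level collision in $p(\sigma')$ would transport back to a collision of the same level in $p(\sigma)$, contradicting the minimality of $i$ built into $\Findmainalgo$ at Line~\ref{line:def-i} of Algorithm~\ref{algo:find-pi3-2}.
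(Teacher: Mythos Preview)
Your overall template matches the paper's: dispose of the no-collision case by taking $\sigma'=\sigma$, otherwise define $\vk'$ by splicing $\vk^3$ above level $i$ with $\vk$ below, set $\sigma'=\mu^{\vk'}$, and reduce everything to showing that $p(\sigma)$ and $p(\sigma')$ agree below level $i=\level(\pi_3)$. The identity $\nex(\pi_3)=\nex(\mu^{\vk}_{i,k_i})$ via \Cref{lem:level_i_collision-2} is also the right starting point.

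The gap is in your ``main obstacle'' step. Your plan is to show that at every recursive level $j<i$ the inheritance condition at Line~\ref{line:when-inherit-rem} evaluates identically on the $\sigma$-side and the $\sigma'$-side, forcing identical $C_0$'s and hence identical subwalks. This claim is false. On the $\sigma$-side one always has $C_0=\emptyset$: since $\pi_1\in p(\sigma)\setminus\tilde p(\vk^1,\vk^2)$ is at level $i$, there is some $t^*>i$ with $k_{t^*}\neq k^2_{t^*}$, so the inheritance test fails at every level $j<i$. But on the $\sigma'$-side the coordinates at positions $\ge i$ are $k^3_t$, not $k_t$, and these can very well equal $k^2_t$. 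Concretely, take $\pi_3=\mu^{\vk^2}_{i,k^2_i}$ (which \textsf{Find} can return, e.g.\ when the collision pair is $(\mu^{\vk}_{i,k_i},\mu^{\vk^2}_{i,k^2_i})$ already). Then $k^3_t=k^2_t$ for all $t\ge i$, so if additionally $k_t=k^2_t$ for $t\in[j+1,i-1]$, $k^2_j>0$, and $k^1_t<k^2_t$ for some $t\ge j+1$, the $\sigma'$-side inherits $C'_0=C^{j,\vk^1}\neq\emptyset$ while the $\sigma$-side has $C_0=\emptyset$.

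The paper does \emph{not} try to match the $C_0$'s. Instead it proves directly, by a two-level induction (Claims~\ref{claim:out-induction} and~\ref{claim:inner-induction}), that $\a_j(\mu^{\vk}_{j,t})=\a_j(\mu^{\vk'}_{j,t})$ for all $j<i$ and $t\le k_j$, \emph{despite} the possibly different initial sets. The argument is by contradiction: if the $\a_j$-values first diverge at step $t$, then since $\x_j(\mu^{\vk}_{j,t})=\x_j(\rig(\mu^{\vk'}_{j,t-1}))$ (same $\last(\cdot,j-1)$ from the same $\nex$), \Cref{observations}\ref{obs:same-x-same-a} forces one side to be $\star_*$. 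A short case analysis using \Cref{observations}\ref{obs:first-star},\ref{obs:first-star-beta},\ref{obs:if-star-must-first-2} then produces a node $\eta\in\tilde p(\vk^1,\vk^2)$ at level $j$ with $\a_j(\eta)=\a_j(\mu^{\vk}_{j,t})$, i.e.\ a collision between $p(\sigma)$ and $\tilde p(\vk^1,\vk^2)$ at level $j<i$, contradicting the minimality of $i$. That contradiction argument, not a matching of $C_0$'s, is what makes the proof go through.
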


\begin{proof}
	
When there is no collision between $p(\sigma)$ and $\tilde{p}(\vk^1, \vk^2)$, we simply take $\sigma' = \sigma$. One can verify that $\sigma'$ satisfies all the required conditions.

In the rest of the proof we assume there there is a collision between $p(\sigma)$ and $\tilde{p}(\vk^1, \vk^2)$. Let $i$ be the level of the lowest such collision. By Lemma~\ref{lem:level_i_collision-2}, we have that $\pi_3 \in \tilde{p}(\vk^1, \vk^2)$ and $(\a_i(\pi_3), \level(\pi_3)) = (\a_i(\vidx^{\vk}_{i, k_i}), i)$. We also let $\vk$ be the index of $\sigma$ and $\vk^3$ be the index of $\pi_3$. 

We decompose the proof into two claims. \Cref{claim:out-induction} is an outer induction step between levels, and \Cref{claim:inner-induction} is an inner induction step within a single level. The lemma is proved by an outer induction that repeatedly applies \Cref{claim:out-induction}, while \Cref{claim:out-induction} itself is proved by an inner induction that repeatedly applies \Cref{claim:inner-induction}.   

\highlight{The node $\sigma'$.} We define $\vk'$ and $\sigma'$ as follows:
\[
k'_j = \begin{cases} k^3_j & j \geq \level(\pi_3) \\ k_j & j < \level(\pi_3) \end{cases} \quad\text{for $j \in \zeroTon{\ell}$}, \quad\text{and}\quad \sigma' = \vidx^{\vk'}.
\] Note that here a priori the node $\sigma'$ may not exist. If it exists, since $\forall j \geq \level(\pi_3), k'_j = k^3_j$, it must be a descendant of $\pi_3$. 

In the rest of the proof we will prove that the node $\sigma'$ always exists, and it satisfies the requirement of the lemma.

Recall that we use $\index(\mu)$ to denote the index of a node $\mu$. Let $i = \level(\pi_3)$. The node $\sigma$ is generated by $\walk^{\vk^1, \vk^2}(\nex(\vidx^{\vk}_{i,k_i}),i - 1, \index(\vidx^{\vk}_{i,k_i}))$ while $\sigma'$ (if exists) is generated by $w^{\vk^1, \vk^2}$ by $\walk^{\vk^1, \vk^2}(\nex(\pi_3),i - 1, \index(\pi_3))$. The difference with Corollary \ref{cor:moving-1-path} is that now although $\nex(\vidx^{\vk}_{i,k_i}) = \nex(\pi_3)$, these two walks could still be different since $\index(\vidx^{\vk}_{i,k_i}) \not= \index(\pi_3)$.\footnote{As they could affect the initial value of $C_0$ at Line~\ref{line:C_0-1-rem} and~\ref{line:C_0-2-rem} in Algorithm~\ref{algo:relaxedextwalk-rem}.} But still, we are going to prove that since there is no collision strictly below level $i$ between $p(\sigma)$ and $\tilde{p}(\vk^1, \vk^2)$, we have that $p(\sigma')$ is the same as $p(\sigma)$ below level $i$. 

\newcommand{\barj}{J}

We need the following claim.

\begin{claim}\label{claim:out-induction}
	For $j \in [i-1]$, suppose the following holds
	\vadj
	\begin{flalign}\label{prop:out-ind-hypo}
	\qquad\bullet\qquad\text{$\vidx^{\vk'}_{\barj, k'_\barj}$ exists and $\a(\vidx^{\vk}_{\barj, k_\barj}) = \a(\vidx^{\vk'}_{\barj, k'_\barj})$ for every $\barj \in \{j+1,\dotsc,i\}$.}&&
	\end{flalign}
	\vadj
	Then,
	\vadj
	\begin{flalign}\label{prop:out-ind-step}
	\qquad\bullet\qquad\text{$\vidx^{\vk'}_{j, t}$ exists and $\a(\vidx^{\vk}_{j, t}) = \a(\vidx^{\vk'}_{j, t})$ for every $t \in \zeroTon{k_j}$.}&&
	\end{flalign}
\end{claim}

Before proving Claim~\ref{claim:out-induction}, we show that it implies our lemma.

\highlight{The Outer Induction.} Note that $\pi_3 = \vidx^{\vk'}_{i, k'_i}$ (by definition of $\vk'$) exists and $\a_i(\vidx^{\vk}_{i, k_i}) = \a_i(\pi_3)$ (by the guarantee on $\pi_3$).~\eqref{prop:out-ind-hypo} holds for $j = i - 1$. From Claim~\ref{claim:out-induction}, it further implies that \eqref{prop:out-ind-step} holds for $j = i - 1$ as well. Then we can apply Claim~\ref{claim:out-induction} repetitively, to show that~\eqref{prop:out-ind-step} holds for every $j \in [\level(\sigma), i - 1]$. By Definition~\ref{def:same-below}, it follows that $p(\sigma')$ is the same as $p(\sigma)$ below level $i$.
\end{proof}

Note \eqref{prop:out-ind-hypo} implies the special case of \eqref{prop:out-ind-step} when $t = 0$. By our definition, $\vidx^{\vk}_{j, 0}$ and $\vidx^{\vk'}_{j, 0}$ equals $\vidx^{\vk}_{p, k_p}$ and $\vidx^{\vk'}_{q, k_q}$ for $p = \min\{p \in [j + 1, \ell] \ \vert \ k_p > 0 \}$ and $q = \min\{q \in [j + 1, \ell] \ \vert \ k'_q > 0 \}$ respectively. As $j \in [i - 1]$, we have $p \leq i$ and $q \leq i$ because node $\pi_3 \in p(\vk')$ and $\mu^{\vk}_{i, k_i} \in p(\vk)$ are both of level $i$. Furthermore, since $k_{j'} = k'_{j'}$ for $j' \in [i - 1]$, we know $p = q$ always holds. Hence by \eqref{prop:out-ind-hypo}, $\a(\vidx^{\vk}_{j,0}) = \a(\vidx^{\vk}_{p, k_p}) = \a(\vidx^{\vk}_{q, k_q}) = \a(\vidx^{\vk'}_{j,0})$. This will be the base case of the inner induction. 

Specifically, \eqref{prop:out-ind-hypo} implies the following:
\begin{flalign}\label{prop:base-case}
\bullet\qquad\text{$\vidx^{\vk'}_{j, 0}$ exists and $\a(\vidx^{\vk}_{j, 0}) = \a(\vidx^{\vk'}_{j, 0})$. Moreover, we also know that $\level(\vidx^{\vk}_{j, 0}) = \level(\vidx^{\vk'}_{j, 0})$}.&&
\end{flalign}

Now we prove Claim~\ref{claim:out-induction}.

\begin{proofof}{Claim~\ref{claim:out-induction}}
	
Fix $j \in [i-1]$. Note that if $k_j = 0$ the claim holds immediately. Thus from now on we assume $k_j > 0$.

Assuming~\eqref{prop:out-ind-hypo} holds, we will establish~\eqref{prop:out-ind-step} by proving the following claim. For ease of notation, for $t \in \zeroTon{k_j}$, we let $\zeta_t = \vidx^{\vk}_{j, t}$ and $\zeta'_t = \vidx^{\vk'}_{j, t}$. 

\begin{claim}\label{claim:inner-induction}
	For $t \in [k_j]$, suppose 
	\vadj
	\begin{flalign}\label{prop:inner-ind-hypo}
	\qquad\bullet\qquad\text{$\zeta'_{t'}$ exists and $\a(\zeta_{t'}) = \a(\zeta'_{t'})$ for $t' \in \zeroTon{t-1}$. Specifically, $\level(\zeta_0) = \level(\zeta'_0)$. }&&
	\end{flalign}
	\vadj
	Then
	\vadj
	\begin{flalign}\label{prop:inner-ind-step}
	\qquad\bullet\qquad\text{$\zeta'_{t}$ exists and $\a(\zeta_{t}) = \a(\zeta'_{t})$}.&&
	\end{flalign}
\end{claim}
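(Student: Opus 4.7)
The plan is to match, step by step, the two level-$j$ relaxed-walk computations that produce $\zeta_1,\dots,\zeta_t$ under $\vk$ and $\zeta'_1,\dots,\zeta'_t$ under $\vk'$. From \eqref{prop:inner-ind-hypo} (with \eqref{prop:base-case} handling the corner case $t'=0$) we have $\a(\zeta_{t-1})=\a(\zeta'_{t-1})$ and $\level(\zeta_{t-1})=\level(\zeta'_{t-1})$, so
\[
\nex(\zeta_{t-1}) = r_{\level(\zeta_{t-1})}(\a(\zeta_{t-1})) = \nex(\zeta'_{t-1}).
\]
Because $\last(\cdot,j-1)$ depends only on $(r_{\le j-1},g_{\le j-1})$ by Lemma~\ref{lem:only-depend}, Lemma~\ref{lem:existence-2}\ref{item:xi-2} then yields $\x_j(\zeta_{t'})=\x_j(\zeta'_{t'})$ for every $t'\in[t-1]$ and also $\x_j(\rig_j(\zeta_{t-1}))=\x_j(\rig_j(\zeta'_{t-1}))$.

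Feeding these equal $\x_j$-values into Observation~\ref{obs:determine-a-from-xi-2}, both $\a_j(\rig_j(\zeta_{t-1}))$ and $\a_j(\rig_j(\zeta'_{t-1}))$ are determined from the same list of entries together with the respective initial sets $\bar C_0$. Once the two $\bar C_0$'s are shown to agree, the two $\a_j$-values coincide; Lemma~\ref{lem:existence-2}\ref{item:exist-2} then transfers the existence of $\zeta_t$ (guaranteed since $\zeta_t$ is an ancestor of $\sigma$) to that of $\zeta'_t$; and $\zeta_t=\rig_j(\zeta_{t-1})$, $\zeta'_t=\rig_j(\zeta'_{t-1})$ give $\a(\zeta_t)=\a(\zeta'_t)$, establishing~\eqref{prop:inner-ind-step}.

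The main obstacle is the $\bar C_0$ comparison. By Observation~\ref{obs:determine-a-from-xi-2} (equivalently, Line~\ref{line:when-inherit-rem} of Algorithm~\ref{algo:relaxedextwalk-rem}), $\bar C_0\in\{C^{j,\vk^1},\emptyset\}$ depending on whether the high-level part of the index is consistent with $\vk^2$ but not $\vk^1$ above level $j$ (Definition~\ref{def:consistent_vk2}). Since $\vk$ and $\vk'$ coincide at every level below~$i$, any disagreement in this test must come from levels $\ge i$. The plan is to rule out such a disagreement by a case analysis on the location of $\pi_3$: if $\pi_3\in p^*(\vk^2)$, the coordinates of $\vk^3$ above level $i$ inherit from $\vk^2$ and the test evaluates the same way for $\vk$ and $\vk'$; otherwise $\pi_3\in p(\gamma_{i'})\setminus p^*(\vk^2)$ for some $i'\ge i$, whence $\pi_3\in p^*(\vk^1)$, and combining Fact~\ref{observations}\ref{obs:no-collision-k1-k2} with $\sigma\le\mu^{\vk^2}$ shows the test fails on both sides, giving $\bar C_0=\emptyset$ in both cases.
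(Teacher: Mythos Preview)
Your reduction to comparing the two initial sets $\bar C_0$ is natural, but the claim that they agree is false in your Case~1. Take $\pi_3 = \mu^{\vk^2}_{i,k^2_i}$ (nothing prevents this: $\pi_3$ is just whichever level-$i$ node of $\tilde p(\vk^1,\vk^2)$ has $\a_i$-value equal to $\a_i(\mu^{\vk}_{i,k_i})$). Then $k^3_t = k^2_t$ for all $t\ge i$, so if $\sigma$ branches away from $p(\vk^2)$ only at some level $\ge i$ (i.e.\ $k_t=k^2_t$ for $t\in[j{+}1,i{-}1]$), the index $\vk'$ agrees with $\vk^2$ at every coordinate above level $j$, placing $\zeta'_t \in p^*(\vk^2)$ and giving $C'_0 = C^{j,\vk^1}$. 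Meanwhile on the $\vk$ side one always has $\zeta_t \notin p^*(\vk^2)$ (this is~\eqref{prop:zeta_t}), so $C_0 = \emptyset$. The two sets genuinely differ, and the extra elements of $C'_0$ can force $\a_j(\rig(\zeta'_{t-1}))$ to become $\star_*$ while $\a_j(\zeta_t)$ remains a non-star value.

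The paper does not try to match the $C_0$'s. It assumes $\a_j(\zeta_t)\ne\a_j(\rig(\zeta'_{t-1}))$ and derives a contradiction: since $\x_j(\zeta_t)=\x_j(\rig(\zeta'_{t-1}))$, Fact~\ref{observations}\ref{obs:same-x-same-a} forces one side to be $\star_*$; the asymmetry $C_0=\emptyset$ then pins this down to $\a_j(\rig(\zeta'_{t-1}))=\star_*$ with $\a_j(\zeta_t)\in\{a_{\x_j(\zeta_t)},\star_0\}$. In either subcase, Fact~\ref{observations}\ref{obs:first-star-beta} (respectively \ref{obs:if-star-must-first-2}) produces a node $\eta\in\tilde p(\vk^1,\vk^2)$ at level~$j$ with $\a_j(\eta)=\a_j(\zeta_t)$, i.e.\ a level-$j$ collision between $p(\sigma)$ and $\tilde p(\vk^1,\vk^2)$, contradicting that $i$ was the lowest collision level. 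The real lever is the minimality of $i$, not an equality of initial sets.
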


Clearly~\eqref{prop:out-ind-step} follows from Claim~\ref{claim:inner-induction} by a simple induction. Here the base case (\ie,\eqref{prop:inner-ind-hypo} with $t = 1$)  of the induction follows from \eqref{prop:base-case}. 

Before proving Claim~\ref{claim:inner-induction}, we first inspect how the existence of $\zeta'_t$ and the values of $\a_j(\zeta_t)$ and $\a_j(\zeta'_t)$ are determined in Algorithm~\ref{algo:relaxedextwalk}.

\highlight{How $\a_j(\zeta_t)$ and $\a_j(\zeta'_t)$ are determined.} For $\a_j(\zeta_t)$ ($t \in [k_j]$), it is determined by function call $$\walk^{\vk^1, \vk^2}(\nex(\vidx^{\vk}_{j, 0}), j, \index(\vidx^{\vk}_{j, 0})).$$ Initially, $C_0 = \emptyset$. This is because by definition, $i = \level(\pi_3)$ is the lowest level such that there is a collision between $p(\sigma)$ and $\tilde{p}(\vk^1, \vk^2)$. Therefore, $i$ must be lower than the level of least common ancestor of $p(\sigma)$ and $p^*(\vk^2)$. Together with $j < i$, we know $\zeta_t \not\in p^*(\vk^2)$. Hence in such function call, the initial value of $C_0$ has to be the empty set. The the function call determines each $\x_i(\mu_j)$ and $\a_i(\mu_j)$ in order. 

Note here we have proved
\begin{flalign}\label{prop:zeta_t}
\bullet\qquad\text{$\zeta_{t} \not\in p^*(\vk^2)$}.&&
\end{flalign}

For $\a_j(\zeta'_t)$, it is determined by function call $$\walk^{\vk^1, \vk^2}(\nex(\vidx^{\vk'}_{j, 0}), j, \index(\vidx^{\vk'}_{j, 0})).$$ 

If $\index(\vidx^{\vk'}_{j, 0})$ satisfies the condition at Line \ref{line:when-inherit-rem}, Algorithm \ref{algo:relaxedextwalk-rem}, $C'_0$ will be $(C')^{j,\vk^1}$. Otherwise, $C'_0 \gets \emptyset$. Then the function call determined each $\x_i(\mu'_j)$ and $\a_i(\mu'_j)$ in order, and $C'_{j + 1}$ will be $C'_j \cup \{\a_i(\mu'_{j + 1})\}$. \\

Now we are ready to prove Claim~\ref{claim:inner-induction}.
\begin{proofof}{Claim~\ref{claim:inner-induction}}
We will first show that it suffices to prove $\a_j(\zeta_t) = \a_j(\rig(\zeta'_{t - 1}))$, and then prove $\a_j(\zeta_t) = \a_j(\rig(\zeta'_{t - 1}))$ via a proof by contradiction.

Assuming $\a_j(\zeta_{t}) = \a_j(\rig(\zeta'_{t - 1}))$, we have $g_j(\a_j(\rig(\zeta'_{t - 1}))) = g_j(\a_j(\zeta_{t})) = 1$ and $\zeta'_t$ must exist.

From~\eqref{prop:inner-ind-hypo} we have that for $t' \in \zeroTon{t-1}$, $\zeta'_{t'}$ exists and $\a(\zeta_{t'}) = \a(\zeta'_{t'})$. Moreover, for $t' \in [t-1]$, by definition, we know $\level(\zeta_{t'}) = \level(\zeta'_{t'}) = j$. For $t' = 0$, we also know $\level(\zeta_0) = \level(\zeta'_0)$. 

Since $\nex(\zeta_{t - 1}) = r_{\level(\zeta_{t - 1})}(\a(\zeta_{t - 1})) =  r_{\level(\zeta'_{t - 1})}(\a(\zeta'_{t - 1})) = \nex(\zeta'_{t - 1})$, we have 
\[
\x_j(\zeta_t) = \last(\nex(\zeta_{t - 1}), j - 1) = \last(\nex(\zeta'_{t - 1}), j - 1) = \x_j(\zeta'_t).
\]

As $\zeta'_{t} = \rig(\zeta'_{t - 1})$, we have $\a_j(\zeta_{t}) = \a_j(\zeta'_{t})$, which proves the claim.

Now it remains to prove $\a_j(\zeta_{t}) = \a_j(\rig(\zeta'_{t - 1}))$. Suppose that $\a_j(\zeta_t) \not= \a_j(\rig(\zeta'_{t - 1}))$ for the sake of contradiction. By~\Cref{observations}\ref{obs:same-x-same-a}, we know that 
\begin{flalign}\label{prop:at-least-one-true}
\bullet\qquad\text{at least one of $\a_j(\zeta_t) = \star_*$ and $\a_j(\rig(\zeta'_{t - 1})) = \star_*$ is true.}&&
\end{flalign}
Below we first prove under our assumption $\a_j(\zeta_t) \neq \a_j(\rig(\zeta'_{t - 1}))$, the following hold: 

\twoprops{$\a_j(\rig(\zeta'_{t - 1})) = \star_*$, and }{a-j-star}{either $\a_j(\zeta_t) \neq \star_*$ or $\a_j(\zeta_t) = \star_0$.}{either-nstar-or-starz}

\highlight{Proving~\eqref{a-j-star} and~\eqref{either-nstar-or-starz}.} We first consider the case $t = 1$. Since $C_0 = \emptyset$, it follows that $\a_j(\zeta_t) \not= \star_*$. Hence  $\a_j(\rig(\zeta'_{t - 1})) = \star_*$ by~\eqref{prop:at-least-one-true}. Therefore, both of~\eqref{a-j-star} and~\eqref{either-nstar-or-starz} hold when $t = 1$.

Now consider the case when $t > 1$. Suppose $\a_j(\zeta_{t - 1}) = \star_x$, by $\a_j(\zeta_{t - 1}) = \a_j(\zeta'_{t - 1})$, we must have $\a_j(\zeta_t) = \a_j(\rig(\zeta'_{t-1}) = \star_{x + 1}$. This contradicts with our assumption that $\a_j(\zeta_t) \neq \a_j(\rig(\zeta'_{t-1})$. The same thing holds for $\zeta'_{t - 1}$. Suppose $\a(\zeta'_{t-1}) = \star_{x}$. By $\a_j(\zeta_{t - 1}) = a_j(\zeta'_{t - 1})$, we reach the same contradiction. So we have
\begin{flalign}\label{prop:previous_no_star}
\bullet\qquad\text{$\a_j(\zeta_{t - 1}) \ne \star_*$ and $\a_j(\zeta'_{t - 1}) \not= \star_*$.}&&
\end{flalign}

In addition, by \eqref{prop:zeta_t}, $\zeta_t \not\in p^*(\vk^2)$. Thus by \cref{observations}~\ref{obs:first-star}, we know either $\a_j(\zeta_t) = \star_0$ or $\a_j(\zeta_t) \not= \star_*$. 

If $\a_j(\zeta_t) = \star_0$, by \cref{observations}~\ref{obs:first-star}, there must be a node $\zeta_{t'}$ with $t' < t$ such that $\a_j(\zeta_{t'}) = a_{\x_j(\zeta_{t})}$. Since $\a_j(\zeta'_{t'}) = \a_j(\zeta_{t'}) = a_{\x_j(\zeta_{t})} = a_{\x_j(\zeta'_{t})}$, in such case, we must have $a_{\x_j(\zeta'_{t'})} \in C'_{t - 1}$ and $\a_j(\rig(\zeta'_{t - 1})) = \star_*$. Besides, by \eqref{prop:at-least-one-true}, if $\a_j(\zeta_t) \neq \star_*$, we must have $\a_j(\rig(\zeta'_{t - 1})) = \star_*$. 

Now, given~\eqref{a-j-star} and~\eqref{either-nstar-or-starz}, we consider the following two cases, and show that both of them lead to contradictions.

\highlight{Case 1: $\a_j(\rig(\zeta'_{t - 1})) = \star_*$ and $\a_j(\zeta_t) \not= \star_*$.} Since $\a_j(\zeta'_{t - 1}) \not= \star_*$ \eqref{prop:previous_no_star}, by \Cref{observations} \ref{obs:first-star-beta}, there must be $\eta \in \tilde{p}(\vk^1, \vk^2)$ with $(\a_j(\eta), \level(\eta)) = (\a_j(\zeta_t), j)$. Since $j < i$, this contradicts the fact that there is no collision between $p(\sigma)$ and $\tilde{p}(\vk^1, \vk^2)$ below level $i$. 

\highlight{Case 2: $\a_j(\rig(\zeta'_{t - 1})) = \star_d$ and $\a_j(\zeta_t) = \star_*$.} By \eqref{either-nstar-or-starz}, we have $\a_j(\zeta_t) = \star_0$ here. Moreover, since we assumed $\a_j(\rig(\zeta'_{t - 1})) \neq \a_j(\zeta_t)$ for contradiction, we know $d > 0$ in this case. By \Cref{observations} \ref{obs:if-star-must-first-2}, there must be $\eta \in \tilde{p}(\vk^1, \vk^2)$ with $(\a_j(\eta), \level(\eta)) = (\star_0, j) = (\a_j(\zeta_t), \level(\zeta_t))$. This leads to the same contradiction. 

\end{proofof}

\end{proofof}

Next we need the following corollary of~\cref{lem:moving-1-path-2}.

\begin{cor} \label{cor:moving-2-paths-2}
	Fix $\vk^1, \vk^2 \in \N^\ell$ such that $\vk^1 < \vk^2$ and $(w, T) \in \supp(\bw^{\vk^1, \vk^2}, \bT)$. Assuming that $\vidx^{\vk^1}$ and $\vidx^{\vk^2}$ exist. Suppose there are two nodes $\mu$ and $\eta$ such that
\twoprops{$\mu \ne \eta$ and $\mu,\eta \leq \vidx^{\vk^2}$, and}{asp-1}{there is no collision between $p(\mu)$ and $p(\eta)$.}{asp-2}
Then there exist two nodes $\mu', \eta'$ such that 
\twoprops{$(\a(\mu), \level(\mu)) = (\a(\mu'), \level(\mu'))$ and $(\a(\eta), \level(\eta)) = (\a(\eta'), \level(\eta'))$, and}{conc-1}{there is no collision between $p(\mu'), p(\eta')$, $p(\vk^1)$ and $p(\vk^2)$.}{conc-2}
\end{cor}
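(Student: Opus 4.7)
The plan is to mirror the structure of the proof of Corollary~\ref{cor:moving-2-paths}, but with Corollary~\ref{cor:moving-1-path} replaced by \Cref{lem:moving-1-path-2}, and with the new target paths $\tilde{p}(\vk^1,\vk^2) \supseteq p(\vk^1) \cup p(\vk^2)$ in place of $p^*(\vk^1)$. First, apply \Cref{lem:moving-1-path-2} to $\sigma = \mu$ (which is valid since $\mu \le \mu^{\vk^2}$ by \eqref{asp-1}) to obtain $\mu'$ and a corresponding $\pi_3^\mu = \Findmainalgo(\mu,T)$; analogously get $\eta'$ and $\pi_3^\eta = \Findmainalgo(\eta,T)$. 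Conclusion \eqref{conc-1} is then immediate from clause~(1) of \Cref{lem:moving-1-path-2}. Moreover, since $p(\vk^1) \subseteq \tilde{p}(\vk^1,\vk^2)$ and $p(\vk^2) \subseteq p^*(\vk^2) \subseteq \tilde{p}(\vk^1,\vk^2)$, clause~(2) of \Cref{lem:moving-1-path-2} yields no collision between $p(\mu')$ and either $p(\vk^1)$ or $p(\vk^2)$, and likewise for $p(\eta')$. So the only remaining task for \eqref{conc-2} is to rule out collisions between $p(\mu')$ and $p(\eta')$.

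Suppose for contradiction that $(\alpha',\beta')$ is such a collision, with $\alpha' \in p(\mu') \setminus p(\eta')$, $\beta' \in p(\eta') \setminus p(\mu')$, and $(\a(\alpha'),\level(\alpha')) = (\a(\beta'),\level(\beta'))$. Let $i_\mu = \level(\pi_3^\mu)$ and $i_\eta = \level(\pi_3^\eta)$. If $\level(\alpha') \ge i_\mu$, then by clause~(3) of \Cref{lem:moving-1-path-2} applied to $\mu$ (combined with the fact that $\pi_3^\mu \in \tilde{p}(\vk^1,\vk^2)$ and the non-increasing-level structure of root-to-node paths), $\alpha'$ lies on $p(\pi_3^\mu)$ and thus on $\tilde{p}(\vk^1,\vk^2)$; but then $(\alpha',\beta')$ would be a collision between $p(\eta')$ and $\tilde{p}(\vk^1,\vk^2)$, contradicting the output guarantee of \Cref{lem:moving-1-path-2} for $\eta$. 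The symmetric case $\level(\beta') \ge i_\eta$ gives the same contradiction.

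Thus we may assume $\level(\alpha') < i_\mu$ and $\level(\beta') < i_\eta$. By clause~(3) of \Cref{lem:moving-1-path-2}, $p(\mu')$ agrees with $p(\mu)$ below level $i_\mu$ (in the sense of \Cref{def:same-below}), and $p(\eta')$ agrees with $p(\eta)$ below level $i_\eta$. Hence there exist nodes $\alpha \in p(\mu)$ and $\beta \in p(\eta)$ with $(\a(\alpha),\level(\alpha)) = (\a(\alpha'),\level(\alpha'))$ and $(\a(\beta),\level(\beta)) = (\a(\beta'),\level(\beta'))$. If $\alpha \ne \beta$ then $(\alpha,\beta)$ would witness a collision between $p(\mu)$ and $p(\eta)$, contradicting \eqref{asp-2}. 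So $\alpha = \beta$. This means $\alpha \in p(\mu) \cap p(\eta)$, so $\alpha$ lies on the common prefix of $p(\mu)$ and $p(\eta)$; in particular $\level(\alpha)$ is at most the level $i_0$ of the lowest common ancestor of $\mu$ and $\eta$, yet $\level(\alpha) = \level(\alpha') < i_\mu$ and analogously $< i_\eta$, so both $i_\mu, i_\eta > i_0$.

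The main obstacle is closing this last case: we must argue $\alpha' = \beta'$, which would contradict $\alpha' \in p(\mu') \setminus p(\eta')$. The plan is to invoke \Cref{rem:same-pi-3}: with $i_\mu, i_\eta > i_0$, the remark yields $\pi_3^\mu = \pi_3^\eta$, call this common node $\pi_3$, and in particular $i_\mu = i_\eta$. Since $p(\mu')$ and $p(\eta')$ are constructed as in the proof of \Cref{lem:moving-1-path-2} by ``grafting'' onto $\pi_3$ the same below-$i_\mu$ structure of $p(\mu)$ and $p(\eta)$ respectively, and since $\alpha = \beta$ lies below the common level $i_\mu$ on this shared prefix, the corresponding nodes $\alpha'$ and $\beta'$ must coincide as nodes in $T$ (they are produced by identical sequences of $\a$-values along the shared descending spine below $\pi_3$). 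This yields $\alpha' = \beta'$, the desired contradiction, completing the proof.
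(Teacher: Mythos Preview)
Your proposal is essentially correct and uses the same key tools as the paper (Lemma~\ref{lem:moving-1-path-2} and Remark~\ref{rem:same-pi-3}), but organizes the argument by contradiction in the style of the earlier Corollary~\ref{cor:moving-2-paths}, whereas the paper's own proof of Corollary~\ref{cor:moving-2-paths-2} does a direct two-case split on whether $i_\mu,i_\eta$ both exceed the level $i$ of the lowest common ancestor of $\mu$ and $\eta$. In the paper's Case~I ($i_\mu,i_\eta>i$), Remark~\ref{rem:same-pi-3} gives $\pi_3^\mu=\pi_3^\eta$, and then clause~(3) of Lemma~\ref{lem:moving-1-path-2} plus~\eqref{asp-2} directly rule out collisions; in Case~II (say $i_\mu\le i$), the part of $p(\mu')$ at levels $\ge i_\mu$ is $p(\pi_3^\mu)\subseteq\tilde p(\vk^1,\vk^2)$, so clause~(2) for $\eta'$ handles those levels, and clause~(3) plus~\eqref{asp-2} handle the rest. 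This avoids the index-matching computation in your final paragraph.

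Two small points to tighten in your version. First, the inequality is reversed: since levels are non-increasing along root-to-node paths and $\alpha=\beta$ lies on the common prefix (hence is an ancestor of the LCA), you get $\level(\alpha)\ge i_0$, not $\le i_0$; together with $\level(\alpha)<i_\mu,i_\eta$ this yields $i_\mu,i_\eta>i_0$, which is what you need for Remark~\ref{rem:same-pi-3}. Second, your final ``grafting'' step is correct but informal; the precise reason $\alpha'=\beta'$ is that once $\pi_3^\mu=\pi_3^\eta$ (hence $i_\mu=i_\eta$ and the indices of $\mu'$ and $\eta'$ agree at levels $\ge i_\mu$), and since $\alpha=\beta$ forces $\index(\mu)_j=\index(\eta)_j$ for all $j>\level(\alpha)$, the level-$\ge\level(\alpha)$ coordinates of $\index(\alpha')$ and $\index(\beta')$ coincide, giving $\alpha'=\beta'$.
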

\begin{proof}
We first apply Lemma~\ref{lem:moving-1-path-2} twice: (1) with node $\sigma = \mu$ to get $\pi_3^{\mu} = \Findmainalgo(\mu,T)$ and its descendant $\mu'$ and (2) with node $\sigma = \eta$ to get $\pi_3^{\eta} = \Findmainalgo(\eta,T)$ and its descendant $\eta'$. 

For convenience we let $i_{\mu} = \level(\pi_3^{\mu})$ and $i_{\eta} = \level(\pi_3^{\eta})$. By Lemma~\ref{lem:moving-1-path-2}, we have
\threeprops{$(\a(\mu'), \level(\mu')) = (\a(\mu), \level(\mu))$ and $(\a(\eta'), \level(\eta')) = (\a(\eta), \level(\eta))$.}{cpp-2}
{For $\sigma' \in \{ \mu',\eta' \}$, there is no collision $p(\sigma')$ and $\tilde{p}(\vk^1, \vk^2)$.}{cpp-3}
{$p(\mu')$ is the same as $p(\mu)$ below level $i_\mu$; $p(\eta')$ is the same as $p(\eta)$ below level $i_\eta$.}{cpp-4}

Note that~\eqref{conc-1} follows immediately from~\eqref{cpp-2}. So we only need to show there is no collision between $p(\mu'), p(\eta')$, $p(\vk^1)$ and $p(\vk^2)$ (\ie,~\eqref{conc-2}). Note that $p(\vk^1), p(\vk^2) \subseteq \tilde{p}(\vk^1,\vk^2)$,~\eqref{cpp-3} and \cref{observations}\ref{obs:no-collision-k1-k2} further imply that we only need to show there is no collision between $p(\mu')$ and $p(\eta')$.

Let $i$ be the level of the lowest common ancestor of $\mu$ and $\eta$. We will consider two cases below.

\highlight{Case I: $i_\mu > i$ and $i_\eta > i$.} By Remark~\ref{rem:same-pi-3}, it follows that $\pi_3^\eta = \pi_3^\mu$.

From~\eqref{cpp-4}, $\min(i_\mu,i_\eta) > i$ and the fact that $\mu$ and $\eta$ have lowest common ancestor at level $i$, it follows that $p(\mu')$ and $p(\eta')$ also have lowest common ancestor at level $i$. Therefore, it suffices to check $p(\mu')$ and $p(\eta')$ have no collision below level $i$.

Applying~\eqref{cpp-4} together with~\eqref{asp-2}, it follows that there is no collision between $p(\mu')$ and $p(\eta')$. 

\highlight{Case II: $i_{\mu} \leq i$ or $i_{\eta} \leq i$.} Without loss of generality, we assume that $i_{\mu} \leq i_{\eta}$. Consequently, it follows that $i_{\mu} \leq i$. In this case, we have that $p(\mu')$ and $p(\eta')$ are the same with $p(\mu)$ and $p(\eta)$ below level $i_{\mu}$ respectively (from~\eqref{cpp-4} and note $i_{\mu} \leq i_{\eta}$). From~\eqref{asp-2}, it follows that $p(\mu')$ and $p(\eta')$ has no collision below $i_{\mu}$. 

The part of $p(\mu')$ with level above or equal to $i_{\mu}$ is exactly $p(\pi^{\mu}_3) \subseteq \tilde{p}(\vk^1, \vk^2)$. Therefore since $p(\eta')$ has no collision with $\tilde{p}(\vk^1, \vk^2)$ (from~\eqref{cpp-3}), there is no collision between $p(\mu')$ and $p(\eta')$.
\end{proof}

\subsection{Proof of~\cref{lem:structure-2}}\label{sec:bad-structure-final-proof}

\highlight{Notation.} To prove~\cref{lem:structure-2}, we first recall the following notations. We use $\a(\mu)$ as a shorthand for $\a_{\level(\mu)}(\mu)$ and $\x(\mu)$ as a shorthand for $\x_{\level(\mu)}(\mu)$. We always fix $\vk^1, \vk^2 \in \N^\ell$ such that $\vk^1 < \vk^2$ and $(w^{\vk^1, \vk^2}, T) \in \supp(\bw^{\vk^1, \vk^2}, \bT)$. We let $\i = \vidx^{\vk^1} + 1$ and $\j = \vidx^{\vk^2} + 1$. Recall that we denote $w^{\vk^1, \vk^2}$ by $w$ to simplify the notation. We also need the definition of pairs of good duplicates.

\begin{definition} \label{def:earliest-duplicate}
We call $(\bar \alpha, \bar \beta)$ a pair of good duplicates if $\bar \alpha < \bar \beta < \j$ and $a_{w_{\bar \alpha}} = a_{w_{\bar \beta}}$. A pair of good duplicates is said to be non-dominated if there is no other good duplicate $(\alpha', \beta') \not= (\alpha, \beta)$ such that $\alpha' \leq \alpha, \beta' \leq \beta$.
\end{definition}

\begin{lemma} \label{lem:no-star}
The following hold:
\begin{itemize}
    \item Suppose $(\alpha, \beta)$ is a non-dominated pair of good duplicates. For every $\mu \in [\alpha]$, $\a(\mu) \not= \star_*$. 
    \item If $(\alpha, \beta)$ is the pair of good duplicates with the minimum $\beta$, then for every $\mu \in [\beta - 1]$, $\a(\mu) \not= \star_*$. 
\end{itemize}
\end{lemma}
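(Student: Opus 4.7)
\textbf{Plan for proving Lemma~\ref{lem:no-star}.} The strategy is to trace a $\star_*$ assignment back to an actual value-collision in the walk $w$, thereby producing a pair of good duplicates that contradicts either the minimality of $\beta$ (for the second claim) or the non-dominated property of $(\alpha,\beta)$ (for the first). Both claims will be derived from a single key statement: if $M$ denotes the smallest node in $[\j-1]$ with $\a(M)=\star_*$, then there exists $\eta < M$ with $a_{w_\eta}=a_{w_M}$, so $(\eta,M)$ is a pair of good duplicates ending at $M$.

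To establish this key statement, I would inspect the function call in Algorithm~\ref{algo:relaxedextwalk-rem} that assigns $\a(M) = \star_*$. Letting $i = \level(M)$, this is a level-$i$ call in which $M = \mu_{j+1}$ for some iteration index $j$. By the minimality of $M$, every earlier level-$i$ node $\mu_k$ ($k\le j$) in this subwalk satisfies $\a_i(\mu_k)\neq \star_*$; hence $\mathsf{star}$ was still \emph{false} upon entering iteration $j+1$, so by Line~\ref{line:assign y-rem} the assignment $\a_i(M)=\star_*$ was triggered by $a_{x_{j+1}} \in C_j$. Moreover $M\le \vidx^{\vk^2}$, so \Cref{lem:out-k2} gives $x_{j+1} = \x_i(M) = w_M$, and for each $\mu_k$ with $k\le j$ we likewise get $\a_i(\mu_k)=a_{\x_i(\mu_k)}=a_{w_{\mu_k}}$ from Line~\ref{line:assign y-rem} together with the fact $\a_i(\mu_k)\neq \star_*$.

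It remains to control $C_0$. If $C_0=\emptyset$ there is nothing to show; otherwise $C_0 = C^{i,\vk^1}$, in which case the condition at Line~\ref{line:when-inherit-rem} is met, which by Observation~\ref{obs:conditions}\ref{item:cond-k2} means $M$ lies on a branch of $p^*(\vk^2)\setminus p^*(\vk^1)$. Because $\vk^1<\vk^2$, the level-$i$ function call that set $C^{i,\vk^1}$ at Line~\ref{line:C-k1-rem} was fully executed \emph{before} the current call, so every level-$i$ node $\alpha\in p(\vk^1)$ (whose $\a_i(\alpha)$-values constitute $C^{i,\vk^1}$) satisfies $\alpha<M$. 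The minimality of $M$ then forces $\a_i(\alpha)\ne \star_*$, hence $\a_i(\alpha)=a_{w_\alpha}$ by the same argument as above. Thus every element of $C_j = C_0 \cup \{\a_i(\mu_1),\dots,\a_i(\mu_j)\}$ is of the form $a_{w_\eta}$ for some node $\eta<M$, and $a_{w_M}\in C_j$ yields the desired $\eta<M$ with $a_{w_\eta}=a_{w_M}$, so $(\eta,M)$ is a pair of good duplicates (since $M\le \j-1<\j$).

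Finally the two claims follow immediately. For Claim~2, if some $\mu\le \beta-1$ has $\a(\mu)=\star_*$ then $M\le \mu\le \beta-1 < \beta$, and the good duplicate $(\eta, M)$ has $M<\beta$, contradicting the minimality of $\beta$. For Claim~1, if some $\mu\le \alpha$ has $\a(\mu)=\star_*$ then $M\le \alpha<\beta$, so the pair $(\eta,M)$ with $\eta<M\le \alpha$ and $M\le \alpha<\beta$ satisfies $\eta<\alpha$ and $M<\beta$, strictly dominating $(\alpha,\beta)$ and contradicting the non-dominated hypothesis. The main subtlety (and the only place where the two-path structure matters) will be the control of $C_0=C^{i,\vk^1}$ in the last paragraph; everything else is a direct unrolling of the algorithm.
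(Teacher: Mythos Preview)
Your proposal is correct and follows essentially the same approach as the paper: trace a $\star_*$ back to an actual value collision $a_{w_\eta}=a_{w_M}$ with $\eta<M<\j$, then derive both items from this single contradiction. The paper phrases it slightly differently—it first reduces Item~(1) to Item~(2) (any non-dominated pair has $\alpha'\le\alpha\le\beta-1$ where $(\alpha,\beta)$ is the min-$\beta$ pair), then for Item~(2) takes any $\gamma$ with $\a(\gamma)=\star_0$ rather than the minimal $M$ with $\a(M)=\star_*$—but the core argument is identical. Your treatment of the $C_0=C^{i,\vk^1}$ case is in fact more explicit than the paper's, which simply asserts ``there must be a node $\gamma'<\gamma$ such that $a_{\x(\gamma')}=a_{\x(\gamma)}$'' and then invokes \cref{observations}\ref{obs:collision-imply-duplicate}.
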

\begin{proof}
Let $(\alpha, \beta)$ be the pair of good duplicates with the minimum $\beta$. We first show that Item~(2) implies Item~(1). To see it, note that for every non-dominated pair of good duplicates $(\alpha', \beta')$, since $\beta \leq \beta'$, by its non-dominated property, we must have $\mu \leq \alpha' \leq \alpha \leq \beta - 1$. 

So now it suffices to prove Item~(2). It suffices to prove that there is no node $\gamma \in [\beta - 1]$ such that $\a(\gamma) = \star_0$. Suppose there is such $\gamma$, then there must be a node $\gamma' < \gamma$ such that $a_{\x(\gamma')} = a_{\x(\gamma)}$. Since we also know $\gamma, \gamma' < \j$ by \Cref{def:earliest-duplicate}, from \cref{observations}~\ref{obs:collision-imply-duplicate}, we know $a_{w_{\gamma}} = a_{w_{\gamma'}}$. This contradicts the assumption that $(\alpha, \beta)$ is the pair of good duplicates  with minimum $\beta$, since $\gamma < \beta$.
\end{proof}

We also need the following corollary.

\begin{cor} \label{cor:min-alpha-1} 
Suppose $(\alpha, \beta)$ is the pair of good duplicates with minimum $\alpha$. For every $\mu \in [\j] \setminus \{\alpha\}$, it holds that $\a(\alpha - 1) \not= \a(\mu - 1)$. 
\end{cor}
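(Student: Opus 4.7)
The plan is to argue by contradiction: suppose some $\mu \in [\j] \setminus \{\alpha\}$ satisfies $\a(\alpha-1) = \a(\mu-1)$. Since $\alpha < \beta < \j = \vidx^{\vk^2} + 1$ and $\mu \le \j$, both $\alpha - 1$ and $\mu - 1$ are nodes with index at most $\vidx^{\vk^2}$, which puts us in a position to apply \cref{observations}\ref{obs:collision-imply-duplicate} once the $\star_*$ case is ruled out.

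To rule out $\star_*$, I would tighten the choice of $(\alpha, \beta)$ by a secondary minimization: among good-duplicate pairs achieving the minimum first coordinate $\alpha$, pick one with minimum $\beta$. The resulting pair is non-dominated in the sense of \cref{def:earliest-duplicate}, since any dominating pair $(\alpha', \beta') \ne (\alpha, \beta)$ would either have $\alpha' < \alpha$ (violating the primary minimization) or $\alpha' = \alpha$ with $\beta' < \beta$ (violating the secondary). Then the first item of \cref{lem:no-star} applied to $\nu = \alpha - 1$ forces $\a(\alpha-1) \ne \star_*$ (the edge case $\alpha = 1$ is vacuous, since $\a(0)$ is undefined), whence $\a(\mu - 1) = \a(\alpha - 1) \ne \star_*$ as well.

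With $\a(\alpha-1) = \a(\mu-1) \ne \star_*$, \cref{observations}\ref{obs:collision-imply-duplicate} yields $a_{w_{\alpha-1}} = a_{w_{\mu-1}}$. A case split on the order finishes the argument: if $\alpha - 1 < \mu - 1$ then $(\alpha - 1, \mu - 1)$ is itself a pair of good duplicates whose first coordinate $\alpha - 1$ is strictly smaller than $\alpha$; if $\mu - 1 < \alpha - 1$ then $(\mu - 1, \alpha - 1)$ is such a pair with first coordinate $\mu - 1 < \alpha$. Both options contradict the minimality of $\alpha$, while the case $\alpha - 1 = \mu - 1$ is ruled out by $\mu \ne \alpha$. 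The argument is short; the one delicate point is using the secondary minimization on $\beta$ to make \cref{lem:no-star} available.
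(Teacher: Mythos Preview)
Your proof is correct and follows essentially the same route as the paper: assume $\a(\alpha-1)=\a(\mu-1)$, use \cref{lem:no-star} to rule out $\star_*$, apply \cref{observations}\ref{obs:collision-imply-duplicate} to get $a_{w_{\alpha-1}}=a_{w_{\mu-1}}$, and contradict the minimality of $\alpha$. You are in fact more careful than the paper in two places: you explain how to make \cref{lem:no-star} applicable (the paper simply writes ``By Lemma~\ref{lem:no-star}'' without noting that the min-$\alpha$ pair need not itself be non-dominated), and you do the order case split explicitly where the paper just calls $(\alpha-1,\mu-1)$ a good duplicate pair. An alternative to your secondary minimization is to invoke item~(2) of \cref{lem:no-star} directly, since the min-$\alpha$ value satisfies $\alpha-1<\beta_0$ for the min-$\beta$ pair $(\alpha_0,\beta_0)$; either route works.
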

\begin{proof}
Suppose $\a(\alpha - 1) = \a(\mu - 1)$. By Lemma \ref{lem:no-star}, we know $\a(\alpha - 1) \not= \star_*$. Since $\alpha - 1, \mu - 1 < \j$, by \ref{obs:collision-imply-duplicate} of \cref{observations}, we have $a_{w_{\alpha - 1}} = a_{w_{\mu - 1}}$. Since $\mu \leq \j$ and $\mu \not= \alpha$, $(\alpha - 1, \mu - 1)$ is a pair of good duplicates. This contradicts the minimality of $\alpha$.  
\end{proof}

The following lemma is crucial for the proof of~\cref{lem:structure-2}.

\begin{lemma} \label{lem:any-earliest-dup} 
Let $(\bar \alpha, \bar \beta)$ be any non-dominated pair of good duplicates. There is no collision between $p(\bar \alpha)$ and $p(\bar \beta)$. 

Moreover, if $a_{w_{\bar \alpha}} = a_{w_{\bar \beta}} \not= a_{w_\i} = a_{w_\j}$, we can always find $\alpha, \beta$ such that $\alpha \not= \beta$, $a_{w_\alpha} = a_{w_\beta}$, $\{\alpha, \beta \} \not\subseteq \{\i, \j\}$, and there is no collision between $p(\i - 1), p(\j - 1), p(\alpha - 1), p(\beta - 1)$.
\end{lemma}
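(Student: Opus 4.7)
The plan is to prove the two parts of the lemma separately: the first part (no collision between $p(\bar\alpha)$ and $p(\bar\beta)$) by a direct non-domination argument, and the moreover part by invoking \Cref{cor:moving-2-paths-2} after establishing an analogous no-collision property for $p(\bar\alpha - 1)$ and $p(\bar\beta - 1)$. For the first part, I would suppose for contradiction that there exists a collision $(\pi_1, \pi_2)$ with $\pi_1 \in p(\bar\alpha) \setminus p(\bar\beta)$ and $\pi_2 \in p(\bar\beta) \setminus p(\bar\alpha)$. Since $\pi_1 \le \bar\alpha$, \Cref{lem:no-star} gives $\a(\pi_1) \ne \star_*$, hence $\a(\pi_2) = \a(\pi_1) \ne \star_*$, and \Cref{observations}~\ref{obs:collision-imply-duplicate} yields $a_{w_{\pi_1}} = a_{w_{\pi_2}}$. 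The pre-order labeling of the dependency tree, with $\bar\alpha < \bar\beta$ lying on distinct branches below their lowest common ancestor, forces $\pi_1 < \pi_2$. Thus $(\pi_1, \pi_2)$ is a good duplicate satisfying $\pi_1 \le \bar\alpha$ and $\pi_2 \le \bar\beta$, dominating $(\bar\alpha, \bar\beta)$; non-domination forces $(\pi_1, \pi_2) = (\bar\alpha, \bar\beta)$, leaving the edge case where $(\bar\alpha, \bar\beta)$ itself forms a collision (discussed below as the main obstacle).

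For the moreover part, the same argument adapted to $p(\bar\alpha - 1), p(\bar\beta - 1)$ gives no collision between these two paths without any edge case: any candidate collision $(\pi_1, \pi_2)$ would have $\pi_1 < \bar\alpha$ and $\pi_2 < \bar\beta$, yielding a good duplicate strictly dominating $(\bar\alpha, \bar\beta)$. I then apply \Cref{cor:moving-2-paths-2} with $\mu = \bar\alpha - 1$ and $\eta = \bar\beta - 1$ (observing $\mu, \eta \le \vidx^{\vk^2}$ because $\bar\beta < \j = \vidx^{\vk^2} + 1$) to obtain $\mu', \eta'$ with matching $(\a, \level)$ values such that $p(\mu'), p(\eta'), p(\vk^1), p(\vk^2)$ are pairwise collision-free. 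Setting $\alpha = \mu' + 1$ and $\beta = \eta' + 1$, the matching $(\a, \level)$ pair gives $w_\alpha = \nex(\mu') = \nex(\bar\alpha - 1) = w_{\bar\alpha}$ via \Cref{obs:w-nex-2}, and analogously $w_\beta = w_{\bar\beta}$, so $a_{w_\alpha} = a_{w_\beta}$. The hypothesis $a_{w_{\bar\alpha}} \ne a_{w_\i}$ ensures $\alpha, \beta \notin \{\i, \j\}$, and the identifications $p(\alpha - 1) = p(\mu'), p(\beta - 1) = p(\eta'), p(\i - 1) = p(\vk^1), p(\j - 1) = p(\vk^2)$ transfer the collision-free guarantee into the required form.

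The main obstacle is the edge case in the first part: ruling out that $(\bar\alpha, \bar\beta)$ themselves form a collision, i.e., $\level(\bar\alpha) = \level(\bar\beta)$ and $\a(\bar\alpha) = \a(\bar\beta) \ne \star_*$. Such an equality would force $\nex(\bar\alpha) = \nex(\bar\beta)$ and thus $w_{\bar\alpha + 1} = w_{\bar\beta + 1}$. The resolution will likely require a case analysis on whether $\bar\alpha, \bar\beta$ lie in $p(\vk^1)$ or $p^*(\vk^2)$: when one lies in $p(\vk^1)$ and the other in $p^*(\vk^2)$, \Cref{observations}~\ref{obs:no-collision-k1-k2} directly forbids such a collision; otherwise, tracing through the $C_j$ update mechanism in \Cref{algo:relaxedextwalk-rem} should extract a contradiction, since if $\bar\alpha, \bar\beta$ were in the same level-$i$ function call then $\a(\bar\alpha) \in C$ at $\bar\beta$'s processing would force $\a(\bar\beta) = \star_*$.
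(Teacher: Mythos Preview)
Your approach to the ``moreover'' part is correct and essentially identical to the paper's: prove no collision between $p(\bar\alpha-1)$ and $p(\bar\beta-1)$ via the strict-domination argument, apply \Cref{cor:moving-2-paths-2}, and use $a_{w_{\bar\alpha}}\ne a_{w_\i}$ to exclude $\alpha,\beta\in\{\i,\j\}$.

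The ``main obstacle'' you identify is not genuine but an artifact of reading the lemma's first sentence literally. The paper's own proof does \emph{not} establish no collision between $p(\bar\alpha)$ and $p(\bar\beta)$; it proves (and subsequently only uses, in \Cref{lem:structure-2}) the version for $p(\bar\alpha-1)$ and $p(\bar\beta-1)$ --- exactly the claim you already derive in your second paragraph. With that shift, any putative collision $(\pi_1,\pi_2)$ satisfies $\pi_1<\bar\alpha$ and $\pi_2<\bar\beta$, so the resulting good duplicate strictly dominates $(\bar\alpha,\bar\beta)$ and the edge case $(\pi_1,\pi_2)=(\bar\alpha,\bar\beta)$ cannot arise. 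Your proposed case analysis on the $C_j$ mechanism is therefore unnecessary; indeed, the literal first sentence appears to be a misstatement rather than a claim awaiting proof, since nothing in the hypotheses excludes $\level(\bar\alpha)=\level(\bar\beta)$ with $\a(\bar\beta)\ne\star_*$, in which case $(\bar\alpha,\bar\beta)$ itself would be a collision between $p(\bar\alpha)$ and $p(\bar\beta)$.

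One small gap (shared with the paper's proof of this lemma): you do not verify $\alpha\ne\beta$. This follows because $\bar\alpha-1\ne\bar\beta-1$ together with the no-collision property forces $(\a(\bar\alpha-1),\level(\bar\alpha-1))\ne(\a(\bar\beta-1),\level(\bar\beta-1))$, and the matching in \Cref{cor:moving-2-paths-2} then gives $\mu'\ne\eta'$.
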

\begin{proof}
Suppose there is $\pi_1 \in p(\bar \alpha - 1) \setminus p(\bar \beta - 1)$ and $\pi_2 \in p(\bar \beta - 1) \setminus p(\alpha - 1)$ such that $\a(\pi_1) = \a(\pi_2)$. By \cref{lem:no-star}, we know $\a(\pi_1)$ and $\a(\pi_2)$ cannot be $\star_*$. Moreover, $\pi_1, \pi_2 \leq \bar \beta - 1 < \j$ (\ie,$\pi_1, \pi_2 \leq \vidx^{\vk^2}$ as $\j = \vidx^{\vk^2} + 1$), by \cref{observations}~\ref{obs:collision-imply-duplicate}, $a_{w_{\pi_1}} = a_{w_{\pi_2}}$. Again since $\pi_1 < \bar \alpha, \pi_2 < \bar \beta$, this contradicts the fact that $(\bar \alpha, \bar \beta)$ is a non-dominated pair of good duplicates. Hence, there is no collision between $p(\bar \alpha - 1)$ and $p(\bar \beta - 1)$.

Applying Corollary \ref{cor:moving-2-paths-2} to $\bar \alpha - 1$ and $\bar \beta - 1$, we can get $\alpha - 1$ and $\beta - 1$ such that there is no collision between $p(\i - 1), p(\j - 1), p(\alpha - 1)$, and $p(\beta - 1)$. However, this does not guarantee that $\{\alpha, \beta\} \not\subseteq \{\i, \j\}$. 

We now show $\{\alpha, \beta\} \not\subseteq \{\i, \j\}$ via a proof by contradiction. Assume $\{\alpha, \beta\} \not\subseteq \{\i, \j\}$. Then we know $a_{w_{\alpha}} = a_{w_{\beta}} = a_{w_\i} = a_{w_\j}$.

Since $(\a(\alpha - 1), \level(\alpha - 1)) = (\a(\bar \alpha - 1), \level(\bar \alpha - 1))$, we know $\nex(\alpha - 1) = r_{\level(\alpha - 1)}(\a(\alpha - 1)) = r_{\level(\bar \alpha - 1)}(\a(\bar \alpha - 1)) = \nex(\bar \alpha - 1)$. Therefore, $w_{\alpha} = \nex(\alpha - 1) = \nex(\bar \alpha - 1) = w_{\bar \alpha}$. The same also holds for $\beta$, and we know $w_{\beta} = w_{\bar \beta}$. 

Then $a_{w_{\alpha}} = a_{w_{\bar \alpha}}$, $a_{w_{\beta}} = a_{w_{\bar \beta}}$. Recall we also know $a_{w_{\alpha}} = a_{w_{\beta}} = a_{w_\i} = a_{w_\j}$. This contradicts with the assumption that $a_{w_{\bar \alpha}} = a_{w_{\bar \beta}} \not= a_{w_\i} = a_{w_\j}$. Hence $\{\alpha, \beta \} \not\subseteq \{\i, \j\}$. 
\end{proof}

Finally, we are ready to prove~\cref{lem:structure-2}, which is restated below.

\begin{reminder}{\cref{lem:structure-2}}
For every $u,v\in [n]$ such that $u \not= v$ and $a_u = a_v$, fix $(w,T) \in \supp(\bw^{\vk^1, \vk^2},\bT)$. Let $(\i - 1, \j - 1)$ be the pair of nodes $(\vidx^{\vk^1}, \vidx^{\vk^2})$ and assume $w_\i = u, w_\j = v$. Let $(\tilde{\alpha},\tilde{\beta})$ be any pair such that $0 < \tilde{\alpha} < \tilde{\beta} < \j, a_{w_{\tilde{\alpha}}} = a_{w_{\tilde{\beta}}}$. 

If such pair $(\tilde{\alpha},\tilde{\beta})$ exists, the following must hold:

\begin{itemize}
	\item There are two nodes $\alpha$ and $\beta$ such that $\alpha \not= \beta$, $a_{w_\alpha} = a_{w_\beta}$, $\{\alpha, \beta \} \neq \{\i, \j\}$, and there is no collision between $p(\i - 1), p(\j - 1), p(\alpha - 1), p(\beta - 1)$.
\end{itemize} 
\end{reminder}

\begin{proofof}{Lemma~\ref{lem:structure-2}}

Since $(\tilde{\alpha},\tilde{\beta})$ is a pair of good duplicates, we know there exists pairs of good duplicates. By Lemma \ref{lem:any-earliest-dup}, the only case left is when $a_{w_{\bar \alpha'}} = a_{w_{\bar \beta'}} = a_{w_\i} = a_{w_\j}$ holds for all non-dominated pairs of good duplicates $(\bar \alpha', \bar \beta')$. Below we take $\bar \alpha$ to be the first $\bar \alpha$ such that $a_{w_{\bar \alpha}} = a_{w_\i}$ and take $\bar \beta$ to be the second one. This implies $(\bar \alpha, \bar \beta)$ dominates all other non-dominated pairs of good duplicates (if there are other such pairs). Thus it must be the unique non-dominated pair of good duplicates. Therefore, $(\bar \alpha, \bar \beta)$ is not only the pair with the minimum $\bar \alpha$, but also the pair with the minimum $\bar \beta$.  

By Lemma \ref{lem:any-earliest-dup}, there is no collision between $p(\bar \alpha - 1)$ and $p(\bar \beta - 1)$. Thus we can apply Corollary \ref{cor:moving-2-paths-2} to $\bar \alpha - 1$ and $\bar \beta - 1$ and get $\alpha$ and $\beta$ such that:
\twopropsshort{there is no collision between $p(\i - 1), p(\j - 1), p(\alpha - 1), p(\beta - 1)$, and}{no-col-ijab}{$(\a(\alpha-1), \level(\alpha-1)) = (\a(\bar\alpha - 1), \level(\bar\alpha - 1))$ and $(\a(\beta-1), \level(\beta-1)) = (\a(\bar \beta - 1), \level(\bar \beta - 1))$.}{same-alpha-beta}

We now consider the following three cases.

\highlight{Case 1: $\bar \alpha \not= \i$.} Since $\bar \alpha < \bar \beta \leq \j$, we also know that $\bar \alpha \not= \j$. By Corollary \ref{cor:min-alpha-1} and $\bar \alpha \notin \{\i,\j\}$, we have $\a(\bar \alpha - 1) \not= \a(\i - 1)$ and $\a(\bar \alpha - 1) \not= \a(\j - 1)$, meaning that $\a(\bar\alpha - 1) \notin \{\a(\i-1),\a(\j-1)\}$. Since $\a(\alpha - 1) = \a(\bar \alpha - 1)$ from~\eqref{same-alpha-beta}, it also follows that $\a(\alpha - 1) \notin \{\a(\i-1),\a(\j-1)\}$, and consequently $\alpha \not\in \{\i, \j\}$. Thus, we have $\{\alpha, \beta\} \not= \{\i,\j\}$.

\highlight{Case 2: $\bar \alpha = \i$ and $\beta \not= \j$.} In this case, we will prove that $\beta \not\in \{\i, \j\}$. 

Since $\bar \alpha \not= \bar \beta$ and there is no collision between $p(\bar \alpha - 1)$ and $p(\bar \beta - 1)$ from~\eqref{no-col-ijab}, we know that $(\a(\bar \alpha - 1),\level(\bar\alpha - 1)) \not= (\a(\bar \beta - 1),\level(\bar \beta - 1))$.

Then we also know $\bar \alpha \not= \beta$ since $(\a(\bar \alpha - 1),\level(\bar \alpha - 1)) \not= (\a(\bar \beta - 1),\level(\bar \beta -1)) = (\a(\beta - 1),\level(\beta - 1))$, where the last equality follows from~\eqref{same-alpha-beta}. Therefore, $\beta \not= \i = \bar\alpha$. By our assumption of such case, $\beta \not= \j$. Thus $\{\alpha, \beta\} \neq \{\i,\j\}$. 

\highlight{Case 3: $\bar \alpha = \i$ and $\beta = \j$.} This is the trickiest case. We will prove that we can still find two nodes $\gamma'_0$ and $\gamma'_1$ to satisfy the requirements of this lemma. 

Let $\tau$ be the node $\arg \max_\tau \{(\level(\tau), -\tau) \vert \tau \in [\bar \beta, \j - 2]\}$. Intuitively, $\tau$ is the node separating $\bar \beta - 1$ from $p(\j - 1)$. Note here $\bar \beta < \j$ by the definition of a pair of good duplicates.

We first show the existence of $\tau$ and it has higher level than $\j - 1$.
\begin{claim}
Node $\tau$ exists and $\level(\tau) > \level(\j - 1)$.
\end{claim}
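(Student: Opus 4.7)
The claim has two parts: (a) the interval $[\bar\beta, \j-2]$ is nonempty, so $\tau$ is well defined, and (b) $\level(\tau) > i$, where $i := \level(\j-1) = \level(\bar\beta-1)$ (the equality of these two levels comes from~\eqref{same-alpha-beta} specialised to Case 3, where $\beta = \j$). Both parts will follow from one structural observation about Algorithm~\ref{algo:relaxedextwalk-rem}: within a single execution of the \textbf{repeat} loop of any call $\walk^{\vk^1,\vk^2}(\cdot, i, \cdot)$, the values $\a_i(\mu_1), \a_i(\mu_2),\dots$ attached to the successive level-$i$ nodes are pairwise distinct, because each new $y$ produced at Line~\ref{line:assign y-rem} is chosen explicitly to avoid the growing set $C_j$ (and the $\star_t$ branch consumes indices $t$ in strict order). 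Consequently, if two level-$i$ nodes $\mu < \mu'$ satisfy $\a_i(\mu) = \a_i(\mu')$, they cannot both be produced by the same execution of that loop, so they must lie in different level-$i$ chains in $T$.

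For part (a), I will argue by contradiction: suppose $\bar\beta = \j-1$, so $\bar\beta-1 = \j-2$. Then the Case-3 hypothesis gives $\a_i(\j-2) = \a_i(\j-1)$ with $\level(\j-2) = \level(\j-1) = i$. Because $\j-1$ is the immediate successor of $\j-2$ in the walk and has level $\geq i$, we have $\rig(\j-2) = \j-1$, so $\j-2$ and $\j-1$ are two consecutive level-$i$ nodes in the same repeat loop, and the core observation above yields $\a_i(\j-2) \neq \a_i(\j-1)$, a contradiction. Hence $\bar\beta \leq \j-2$.

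For part (b), I will again argue by contradiction: assume every $\mu \in [\bar\beta, \j-2]$ has $\level(\mu) \leq i$. Since the two endpoints $\bar\beta-1$ and $\j-1$ also have level exactly $i$, no node in the entire segment $[\bar\beta-1, \j-1]$ has level $>i$. I then want to upgrade this to ``$\bar\beta-1$ and $\j-1$ lie in the same level-$i$ chain.'' The mechanism is that in order to traverse from one level-$i$ chain to another in the pre-order of $T$, Algorithm~\ref{algo:relaxedextwalk-rem} must return from an inner call $\walk^{\vk^1,\vk^2}(\cdot, i, \cdot)$ to its parent call at level $\geq i+1$, and that parent, on its next iteration (Lines~\ref{line:setjinkprime-rem}--\ref{line:if-rem}), appends a node $\mu_{j+1}$ of level $\geq i+1$ before entering any subsequent recursive level-$i$ call. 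Therefore two distinct level-$i$ chains are always separated in the walk by at least one node of level $>i$; the absence of such a node in $[\bar\beta-1, \j-1]$ forces $\bar\beta-1$ and $\j-1$ to be in a common chain, and the core observation then gives $\a_i(\bar\beta-1) \neq \a_i(\j-1)$, contradicting the Case-3 hypothesis. This establishes the existence of some $\mu \in [\bar\beta, \j-2]$ with $\level(\mu) > i$, and hence $\level(\tau) > \level(\j-1)$.

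The main obstacle is the ``chain-separation'' lemma invoked in part (b): formally justifying that two distinct level-$i$ chains in the pre-order of $T$ are always separated by a node of level $>i$ requires a brief induction unwinding the recursive structure of Algorithm~\ref{algo:relaxedextwalk-rem}---tracking how each call at level $i' > i$ inserts a level-$\geq i'$ node between any two of its recursive subcalls that can touch level $i$. Once that fact is laid down explicitly, the remainder of the argument is a direct application of the distinctness of $\a_i$ within a single repeat loop.
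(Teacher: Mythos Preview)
Your proposal is correct and takes essentially the same approach as the paper. The paper merges your parts (a) and (b) into a single contradiction: if the interval $[\bar\beta,\j-2]$ is empty or has maximum level $\le \level(\j-1)$, then by the definition of $\parent(\cdot)$ one gets $\bar\beta-1 \in p(\j-1)$ (equivalently, in your language, $\bar\beta-1$ and $\j-1$ lie in the same level-$i$ chain), whence the distinctness of $\a_i$ along a root-to-node path yields $(\a(\bar\beta-1),\level(\bar\beta-1)) \neq (\a(\j-1),\level(\j-1))$, contradicting~\eqref{same-alpha-beta} with $\beta=\j$.
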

\begin{proof}
Since by \eqref{same-alpha-beta} and the assumption of this case, $\level(\bar \beta - 1) = \level(\beta - 1)$   and $\beta = \j$, we know $\level(\bar \beta - 1) = \level(\j - 1)$. For the sake of contradiction, suppose that $\tau$ does not exist or $\level(\tau) \leq \level(\j - 1)$, we would have $\bar \beta - 1 \in p(\j - 1)$. Then by the definition of our extended walk, $(\a(\bar \beta - 1), \level(\bar \beta - 1)) \not= (\a(\j - 1), \level(\j - 1))$. 

On the other hand, the assumption $\beta = \j$ and~\eqref{same-alpha-beta} imply that $(\a(\bar \beta - 1), \level(\bar \beta - 1)) = (\a(\beta - 1), \level(\beta - 1)) = (\a(\j - 1), \level(\j - 1))$, a contradiction. This proves the claim.
\end{proof}

By $\beta = \j$ and~\eqref{same-alpha-beta}, we have $\level(\bar \beta - 1) = \level(\beta - 1) = \level(\j - 1)$. Since $\level(\tau) > \level(\j - 1) = \level(\bar \beta - 1)$, the parent of $\tau$ must be before $\bar \beta - 1$, namely $\pa(\tau) < \bar \beta - 1$. 

Let $\gamma_0 \in [\pa(\tau) + 1, \bar \beta - 1]$ and $\gamma_1 \in [\tau + 1, \j - 1]$ be the pair of good duplicates (\ie, $a_{w_{\gamma_0}} = a_{w_{\gamma_1}}$) within such range  that minimizes $\gamma_1$. See Figure \ref{fig:tau_and_gamma}. 

\begin{figure}
    \centering
\scalebox{0.9}{
\begin{tikzpicture}[very thick]
  \node [draw,circle,minimum size=32,inner sep=0pt, outer sep=0pt] (pt) { $\mathsf{par}(\tau)$};

  \node [left of = pt, yshift = 0,inner sep=0pt, outer sep=-0.3pt] (dd) {} edge [-] (pt);

  \node [right of = pt, yshift = 0,xshift=90,inner sep=0pt, outer sep=-0.3pt] (d1) {};

  \node [below of = d1, yshift = -20,inner sep=0pt, outer sep=-0.7pt] (d2) {} edge [-] (d1);

  \node [draw, circle, minimum size=32,inner sep=1pt, right of = d2] (d3) {$\bar{\beta}-1$} edge [-] (d2);

  \node [draw,circle,minimum size=32,inner sep=1pt, outer sep=0pt, right of = pt, xshift = 160] (t) { $\tau$}edge [-] (pt);

  \node [right of = pt, yshift = 0,xshift=18,inner sep=0pt, outer sep=-0.3pt] (t1) {};

  \node [below of = t1, yshift = -20,inner sep=0pt, outer sep=-0.7pt] (t2) {} edge [-, blue] (t1);

  \node [draw, circle, minimum size=32,inner sep=1pt, right of = t2, blue] (t3) {$\gamma_0-1$} edge [-, blue] (t2);

  \node [right of = pt, yshift = 0,xshift=90,inner sep=0pt, outer sep=-0.3pt] (d1) {};

  \node [below of = d1, yshift = -20,inner sep=0pt, outer sep=-0.7pt] (d2) {} edge [-] (d1);

  \node [draw, circle, minimum size=32,inner sep=1pt, right of = d2] (d3) {$\bar{\beta}-1$} edge [-] (d2);

  \node [right of = t, yshift = 0,xshift=90,inner sep=0pt, outer sep=-0.3pt] (tt) {} edge [-] (t);

  \node [right of = pt, yshift = 0,xshift=190,inner sep=0pt, outer sep=-0.3pt] (c1) {};

  \node [below of = c1, yshift = -20,inner sep=0pt, outer sep=-0.7pt] (c2) {} edge [-, blue] (c1);

  \node [draw, circle, minimum size=32,inner sep=1pt, right of = c2, blue] (c3) {$\gamma_1-1$} edge [-, blue] (c2);

  \node [right of = pt, yshift = 0,xshift=265,inner sep=0pt, outer sep=-0.3pt] (d1) {};

  \node [below of = d1, yshift = -20,inner sep=0pt, outer sep=-0.7pt] (d2) {} edge [-] (d1);

  \node [draw, circle, minimum size=32,inner sep=1pt, right of = d2] (d3) {$\j-1$} edge [-] (d2);

\end{tikzpicture}
}
    \caption{The structure of $\tau$, $\pa(\tau)$, and $\gamma_0$, $\gamma_1$}
    \label{fig:tau_and_gamma}
\end{figure}
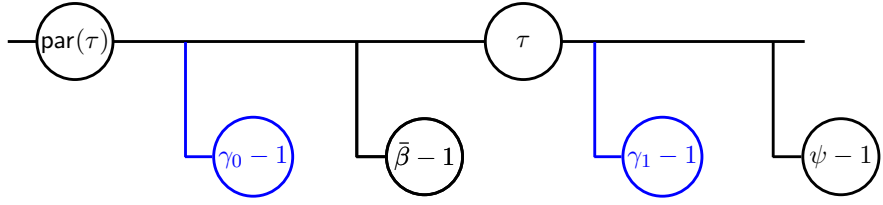

We prove the following two claims about $\gamma_0$ and $\gamma_1$.

\begin{claim} \label{claim:gamma-exist}
$\gamma_0$ and $\gamma_1$ exist.
\end{claim}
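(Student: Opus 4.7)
\textbf{Proof plan for Claim \ref{claim:gamma-exist}.} The plan is to exhibit the concrete pair $(\gamma_0, \gamma_1) = (\bar\beta - 1, \psi - 1)$ and verify that it satisfies the required properties. The key observation driving this choice is that in Case 3, the fact that the Corollary application yielded $\beta = \j$ forces $\a(\bar\beta-1)$ and $\a(\j-1)$ to coincide, which (once we rule out $\star_*$) translates directly into an equality of $a$-values.

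First, I would extract from the Case 3 hypothesis $\beta = \j$ and the property \eqref{same-alpha-beta} established by the Corollary application—namely $(\a(\beta - 1), \level(\beta - 1)) = (\a(\bar\beta - 1), \level(\bar\beta - 1))$—that
\[
(\a(\bar\beta - 1), \level(\bar\beta - 1)) = (\a(\j - 1), \level(\j - 1)) = (\a(\j-1), i_0).
\]
Next, I would invoke the second half of Lemma \ref{lem:no-star}: by our choice in Case 3, $(\bar\alpha, \bar\beta)$ is the pair of good duplicates with the minimum $\bar\beta$ (since $\bar\beta$ was taken to be the second-earliest position in the walk whose $a$-value equals $a_u$, and we have ruled out earlier pairs of good duplicates with a different $a$-value in the preceding cases). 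Therefore $\a(\mu) \neq \star_*$ for every $\mu \in [\bar\beta - 1]$, and in particular $\a(\bar\beta - 1) \neq \star_*$; combined with Step 1, this also gives $\a(\j-1) \neq \star_*$.

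Then I would apply \Cref{observations}\ref{obs:collision-imply-duplicate} to the two nodes $\bar\beta - 1$ and $\j - 1$. Both lie weakly below $\vidx^{\vk^2}$ (indeed $\j - 1 = \vidx^{\vk^2}$ and $\bar\beta - 1 < \j - 1$), and we just established $\a(\bar\beta - 1) = \a(\j-1) \neq \star_*$; hence $a_{w_{\bar\beta - 1}} = a_{w_{\j - 1}}$. Finally I would check the range conditions: $\pa(\tau) < \bar\beta - 1$ (already shown in the setup right after the definition of $\tau$) gives $\gamma_0 = \bar\beta - 1 \in [\pa(\tau) + 1, \bar\beta - 1]$; and $\tau \leq \j - 2$ (from $\tau \in [\bar\beta, \j - 2]$) gives $\tau + 1 \leq \j - 1$, so $\gamma_1 = \j - 1 \in [\tau + 1, \j - 1]$. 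Thus $(\gamma_0,\gamma_1) = (\bar\beta - 1, \j - 1)$ is a pair of good duplicates of the required form, proving existence.

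The main thing to be careful about is the invocation of Lemma \ref{lem:no-star} in the second step: I need $(\bar\alpha, \bar\beta)$ to genuinely be the pair of good duplicates with minimum $\bar\beta$, which requires a brief justification that in Case 3 all non-dominated pairs of good duplicates share the common $a$-value $a_u$, so that choosing $\bar\alpha$ to be the earliest such position and $\bar\beta$ the next one yields the globally $\bar\beta$-minimum pair. This is already spelled out in the paragraph immediately before the claim, so no new argument is needed. Once this is in place, every remaining step is a direct invocation of a previously established lemma, and no genuinely new technical obstacle arises.
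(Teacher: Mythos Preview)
Your proposal is correct and follows essentially the same approach as the paper: derive $\a(\bar\beta-1)=\a(\j-1)$ from $\beta=\j$ and \eqref{same-alpha-beta}, use Lemma~\ref{lem:no-star} (Item~2, via the minimality of $\bar\beta$) to rule out $\star_*$, and then apply Fact~\ref{observations}\ref{obs:collision-imply-duplicate} to conclude $a_{w_{\bar\beta-1}}=a_{w_{\j-1}}$, witnessing the existence of $(\gamma_0,\gamma_1)$. The only differences are cosmetic: you spell out the range checks explicitly (which the paper leaves implicit), and your stray ``$i_0$'' should simply be $\level(\j-1)$.
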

\begin{proof}
From $\beta = \j$ and~\eqref{same-alpha-beta}, we know $\a(\bar \beta - 1) = \a(\beta - 1) = \a(\j - 1)$. By Lemma \ref{lem:no-star} and our choice of $(\bar\alpha,\bar\beta)$, we know $\a(\bar \beta - 1) \not= \star_*$. Together with $\bar \beta - 1, \j - 1 < \j$, we can apply \cref{observations}~\ref{obs:collision-imply-duplicate}, and get $a_{w_{\bar \beta - 1}} = a_{w_{\j - 1}}$. Therefore, $\gamma_0$ and $\gamma_1$ must exists
\end{proof}

\begin{claim}
There is no collision between $p(\gamma_0 - 1)$ and $p(\gamma_1 - 1)$. 
\end{claim}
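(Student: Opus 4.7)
The plan is to argue by contradiction. Suppose a collision exists: pick $\pi_1 \in p(\gamma_0 - 1) \setminus p(\gamma_1 - 1)$ and $\pi_2 \in p(\gamma_1 - 1) \setminus p(\gamma_0 - 1)$ with $(\a(\pi_1), \level(\pi_1)) = (\a(\pi_2), \level(\pi_2))$. Write $i = \level(\tau)$; by definition $i$ is the maximum level over $[\bar\beta, \j-2]$, which by the maximality-of-$\pa(\tau)$ clause gives $\level(\mu) < i$ for every $\mu \in (\pa(\tau), \tau)$, and forces $\pa(\tau) < \bar\beta$.

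I would first collect the ``no-$\star$'' and ``duplicate'' consequences. Since $\pi_1 \le \gamma_0 - 1 \le \bar\beta - 2$, Lemma~\ref{lem:no-star} applied to $(\bar\alpha, \bar\beta)$ (which is the good-duplicate pair with minimum second coordinate in Case~3) yields $\a(\pi_1) \neq \star_*$, and then $\a(\pi_2) = \a(\pi_1) \neq \star_*$ as well. Both $\pi_1, \pi_2 \le \vidx^{\vk^2}$, so \cref{observations}\ref{obs:collision-imply-duplicate} yields $a_{w_{\pi_1}} = a_{w_{\pi_2}}$, i.e.\ $(\pi_1, \pi_2)$ is a duplicate pair of positions.

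Next, I would locate $\pi_1$ and $\pi_2$ combinatorially in the tree. Because nodes in $(\tau, \gamma_1 - 1]$ have level $< i$ (using maximality of $\level(\tau)$ over $[\bar\beta, \j-2]$), the chain of parents from $\gamma_1 - 1$ must pass through $\tau$; hence $\tau, \pa(\tau) \in p(\gamma_1 - 1)$. On the other hand $\gamma_0 - 1 \in [\pa(\tau), \bar\beta - 2] \subsetneq [\pa(\tau), \tau - 1]$ is a descendant of $\pa(\tau)$ but not of $\tau$, so $\pa(\tau) \in p(\gamma_0 - 1)$ and $\tau \notin p(\gamma_0 - 1)$. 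Thus $p(\gamma_0 - 1) \cap p(\gamma_1 - 1) = p(\pa(\tau))$, which pins $\pi_1 \in (\pa(\tau), \gamma_0 - 1]$ and $\pi_2 \in [\tau, \gamma_1 - 1] \cap p(\gamma_1 - 1)$; in particular $\pi_2 \ge \tau$.

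The decisive step is then a level comparison. Since $\pi_1 \in (\pa(\tau), \gamma_0 - 1] \subseteq (\pa(\tau), \tau)$, the maximality of $\pa(\tau)$ forces $\level(\pi_1) < i$. As $\level(\pi_2) = \level(\pi_1) < i = \level(\tau)$, we must have $\pi_2 \ne \tau$, hence $\pi_2 \ge \tau + 1$. But now $(\pi_1, \pi_2)$ is a good-duplicate pair with $\pi_1 \in [\pa(\tau) + 1, \bar\beta - 1]$ and $\pi_2 \in [\tau + 1, \gamma_1 - 1]$, i.e.\ a pair in the same range used to define $(\gamma_0, \gamma_1)$ but with strictly smaller second coordinate $\pi_2 < \gamma_1$. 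This contradicts the minimality of $\gamma_1$, completing the argument. The one place where care is needed is the structural derivation of $\pi_2 \ge \tau$ and of $\level(\pi_1) < i$; these rest entirely on the extremal choice of $\tau$, and once they are in place, the rest of the proof is immediate from the minimality built into $(\gamma_0, \gamma_1)$.
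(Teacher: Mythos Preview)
Your argument is correct and follows essentially the same route as the paper's proof: assume a collision $(\pi_1,\pi_2)$, use Lemma~\ref{lem:no-star} together with Fact~\ref{observations}\ref{obs:collision-imply-duplicate} to turn it into a duplicate $a_{w_{\pi_1}}=a_{w_{\pi_2}}$, then use the tree structure around $\pa(\tau)$ and $\tau$ to force $\pi_1\in[\pa(\tau)+1,\bar\beta-1]$ and $\pi_2\in[\tau+1,\gamma_1-1]$, contradicting the minimality of $\gamma_1$. One small inaccuracy: maximality of $\level(\tau)$ over $[\bar\beta,\j-2]$ only gives level $\le i$ (not $<i$) for nodes in $(\tau,\gamma_1-1]$, but that weaker bound is already enough to place $\tau\in p(\gamma_1-1)$, so your conclusion stands.
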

\begin{proof}
Here the proof idea is similar to that of Lemma \ref{lem:any-earliest-dup}. 

Suppose there is $\pi_0 \in p(\gamma_0 - 1) \setminus p(\gamma_1 - 1)$ and $\pi_1 \in p(\gamma_1 - 1) \setminus p(\gamma_0 - 1)$ such that $(\a(\pi_0), \level(\pi_0)) = (\a(\pi_1), \level(\pi_1))$. Since $\pi_0 \leq \gamma_0 - 1 \leq \bar \beta - 1$, by Lemma \ref{lem:no-star}, we know $\a(\pi_0) \not= \star_*$. Then together with $\pi_0, \pi_1 \leq \gamma_1 \leq \j - 1 = \vidx^{\vk^2}$, we can apply \cref{observations}~\ref{obs:collision-imply-duplicate} to get $a_{w_{\pi_0}} = a_{w_{\pi_1}}$. 

We then prove $\pi_0 \in [\pa(\tau) + 1, \bar \beta - 1]$ and $\pi_1 \in [\tau + 1, \j - 1]$. Together with $a_{w_{\pi_0}} = a_{w_{\pi_1}}$ and $\pi_1 \leq \gamma_1 - 1$, this contradicts the minimality of $\gamma_1$. 

Since $\gamma_0 - 1 \in [\pa(\tau), \tau]$, we know $\pa(\tau) \in p(\gamma_0 - 1)$. Also we have $\pa(\tau) \in p(\tau) \subseteq p(\gamma_1 - 1)$ because by the definition of $\tau$ it has the maximum level among nodes in $[\tau, \j - 1]$. (Note $\level(\tau) > \level(\j - 1)$ by \Cref{claim:gamma-exist}.) Thus we have $\pi_0 \geq \pa(\tau) + 1$ by $\pi_0 \in p(\gamma_0 - 1) \setminus p(\gamma_1 - 1)$. This implies $\level(\pi_0) < \level(\tau)$ since otherwise $\pa(\tau)$ would have been $\pi_0$. Hence by $\level(\pi_1) = \level(\pi_0)$ and $\pi_1 \in p(\gamma_1 - 1) \setminus p(\gamma_0 - 1)$, we know $\pi_1 \geq \tau + 1$.
\end{proof}

Since there is no collision between $p(\gamma_0 - 1)$ and $p(\gamma_1 - 1)$, we can apply \cref{cor:moving-2-paths-2} to $\gamma_0 - 1$ and $\gamma_1 - 1$ to get $\gamma'_0$ and $\gamma'_1$ such that:
\twoprops{there is no collision between $p(\gamma_0' - 1), p(\gamma_1' - 1), p(\i - 1), p(\j - 1)$, and}{no-col-gamma}{$(\a(\gamma_j-1), \level(\gamma_j-1)) = (\a(\gamma'_j - 1), \level(\gamma'_j - 1))$ for every $j \in \{0,1\}$.}{same-gamma} 
From~\cref{obs:w-nex-2} and~\eqref{same-gamma}, for $j \in \{0,1\}$, we have
\begin{equation}\label{eq:same-w-gamma}
w_{\gamma'_j} = r_{\level(\gamma'_j - 1)}(\a(\gamma'_j - 1)) = r_{\level(\gamma_j - 1)}(\a(\gamma_j - 1)) = w_{\gamma_j}.
\end{equation}

So from $a_{w_{\gamma_0}} = a_{w_{\gamma_1}}$ and~\eqref{eq:same-w-gamma}, we also have $a_{w_{\gamma'_0}} = a_{w_{\gamma'_1}}$. 

Finally, we show that the pair $(\gamma_0',\gamma_1')$ satisfies the requirements of the lemma.

\begin{claim}
$\gamma'_1 \not\in \{\i, \j\}$.
\end{claim}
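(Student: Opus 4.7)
The plan is to rule out each of $\gamma'_1 = \i$ and $\gamma'_1 = \j$ separately, both by contradiction.

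For $\gamma'_1 = \i$ (the easy case), I would invoke Corollary~\ref{cor:min-alpha-1}. Since $(\bar\alpha, \bar\beta) = (\i, \bar\beta)$ is the pair of good duplicates with minimum first coordinate, the corollary gives $\a(\i - 1) \ne \a(\mu - 1)$ for every $\mu \in [\j] \setminus \{\i\}$. If $\gamma'_1 = \i$, property~\eqref{same-gamma} forces $\a(\gamma_1 - 1) = \a(\gamma'_1 - 1) = \a(\i - 1)$. But $\gamma_1 \in [\tau + 1, \j - 1]$ satisfies $\gamma_1 > \tau \ge \bar\beta > \i$, so $\gamma_1 \in [\j] \setminus \{\i\}$, a contradiction.

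For $\gamma'_1 = \j$ (the main obstacle), the plan is to produce a pair $(\bar\beta - 1,\, \gamma_1 - 1)$ that is a valid good-duplicate pair in the same ranges $[\pa(\tau) + 1, \bar\beta - 1]$ and $[\tau + 1, \j - 1]$ but whose second coordinate is strictly smaller than $\gamma_1$, thereby contradicting the choice of $(\gamma_0, \gamma_1)$ as minimizer of $\gamma_1$. Suppose $\gamma'_1 = \j$. Then~\eqref{same-gamma} gives $(\a(\gamma_1 - 1), \level(\gamma_1 - 1)) = (\a(\j - 1), \level(\j - 1))$, and~\eqref{same-alpha-beta} together with $\beta = \j$ gives $(\a(\bar\beta - 1), \level(\bar\beta - 1)) = (\a(\j - 1), \level(\j - 1))$. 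Because $(\bar\alpha, \bar\beta)$ has minimum $\bar\beta$ among good duplicates, Item~(2) of Lemma~\ref{lem:no-star} yields $\a(\bar\beta - 1) \ne \star_*$; hence $\a(\j - 1)$ and $\a(\gamma_1 - 1)$ are also non-$\star_*$. Since $\bar\beta - 1,\, \gamma_1 - 1,\, \j - 1$ are all at most $\vidx^{\vk^2}$, \cref{observations}~\ref{obs:collision-imply-duplicate} applied to these non-$\star_*$ equal $\a$-values yields
\[
  a_{w_{\bar\beta - 1}} = a_{w_{\gamma_1 - 1}} = a_{w_{\j - 1}}.
\]

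It remains to verify that $(\bar\beta - 1,\, \gamma_1 - 1)$ fits all the requirements. The left coordinate lies in $[\pa(\tau) + 1, \bar\beta - 1]$ because $\pa(\tau) < \bar\beta - 1$ was already established. For the right coordinate, $\level(\gamma_1 - 1) = \level(\j - 1) < \level(\tau)$ rules out $\gamma_1 - 1 = \tau$, and combined with $\gamma_1 - 1 \ge \tau$ (from $\gamma_1 \ge \tau + 1$) gives $\gamma_1 - 1 \ge \tau + 1$; the upper bound $\gamma_1 - 1 \le \j - 2 \le \j - 1$ is immediate. The ordering $\bar\beta - 1 < \gamma_1 - 1$ follows from $\gamma_1 - 1 \ge \tau + 1 \ge \bar\beta + 1$, the bound $\gamma_1 - 1 < \j$ is immediate, and the shared $a$-value is part of the displayed equation. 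Thus $(\bar\beta - 1,\, \gamma_1 - 1)$ is a good duplicate in the prescribed ranges with second coordinate $\gamma_1 - 1 < \gamma_1$, contradicting the minimality in the choice of $\gamma_1$. The subtle point of the argument is the chain of $\a$-value equalities, which requires ruling out $\star_*$ via Lemma~\ref{lem:no-star}; everything else reduces to straightforward range checks.
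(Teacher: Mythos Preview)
Your proof is correct and follows essentially the same approach as the paper's proof. Both cases are handled identically: for $\gamma'_1 \neq \i$ you invoke Corollary~\ref{cor:min-alpha-1} applied to $\gamma_1 \in [\j]\setminus\{\i\}$, and for $\gamma'_1 \neq \j$ you chain the $\a$-value equalities via~\eqref{same-gamma} and~\eqref{same-alpha-beta}, use Lemma~\ref{lem:no-star} to rule out $\star_*$, then apply \cref{observations}\ref{obs:collision-imply-duplicate} to obtain $a_{w_{\bar\beta-1}} = a_{w_{\gamma_1-1}}$ and contradict the minimality of $\gamma_1$ via the pair $(\bar\beta-1,\gamma_1-1)$---exactly what the paper does, with your version spelling out a few range checks more explicitly.
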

\begin{proof}
Since $\gamma_1 - 1 \geq \tau \geq \bar \beta$, and $\bar \beta > \bar \alpha$. We know $\gamma_1 - 1 \not= \bar \alpha - 1$. By~\cref{cor:min-alpha-1} and our choice of $(\bar\alpha,\bar\beta)$, we have $\a(\bar \alpha - 1) \not= \a(\gamma_1 - 1)$. From our assumption $\i = \bar \alpha$ and~\eqref{same-gamma}, it follows that $\a(\i - 1) \not= \a(\gamma_1 - 1) = \a(\gamma'_1 - 1)$. Therefore, $\gamma'_1 \not= \i$. 

If $\gamma'_1 = \j$, we would have $(\a(\gamma_1 - 1), \level(\gamma_1 - 1)) = (\a(\gamma'_1 - 1), \level(\gamma'_1 - 1)) = (\a(\j - 1), \level(\j - 1))$ from~\eqref{same-gamma}. On the other hand, from our assumption $\beta = \j$ and~\eqref{same-alpha-beta}, we also know $(\a(\j - 1), \level(\j - 1)) = (\a(\beta - 1), \level(\beta - 1)) = (\a(\bar \beta - 1), \level(\bar \beta - 1))$. 

Thus $\a(\gamma_1 - 1) = \a(\bar \beta - 1)$. By Lemma \ref{lem:no-star}, we have $\a(\bar \beta - 1) \not= \star_*$. Together with $\gamma_1 - 1, \bar \beta - 1 \leq \j$ and \cref{observations}~\ref{obs:collision-imply-duplicate}, it follows that $a_{w_{\gamma_1 - 1}} = a_{w_{\bar \beta - 1}}$. 

Moreover, in this case, $\gamma_1 - 1\not= \tau$ since $\level(\tau) > \level(\j - 1)$ while $\level(\gamma_1 - 1) = \level(\j - 1)$. Thus, $\gamma_1 - 1 \in [\tau + 1, \j - 1]$. Together with $a_{w_{\gamma_1 - 1}} = a_{w_{\bar \beta - 1}}$, this contradicts the minimality of $\gamma_1$. Hence $\gamma_1' \ne \j$ and consequently $\gamma'_1 \not\in \{\i, \j\}$.
\end{proof}

Therefore, we found $(\gamma'_0, \gamma'_1) \not= (\i,\j)$ such that (1) there is no collision between $p(\gamma'_0 - 1), p(\gamma'_1 - 1), p(\i - 1), p(\j - 1)$, (2) $a_{w_{\gamma'_0}} = a_{w_{\gamma'_1}}$, and (3) $\{\gamma'_0, \gamma'_1\} \neq \{\i,\j\}$. This completes the whole proof.
\end{proofof}
\section{Proof of Lemma~\ref{lem:elong-small-2}}  \label{sec:appendix-elong}
\begin{reminder}{Lemma~\ref{lem:elong-small-2}}
	\underline{In probability space $(\bw^{\vk^1, \vk^2}, \bT)$}, it holds that
    \[
    \Pr[\elong] \le n^2 \ell/2^{\tau/4}.
    \]
\end{reminder}
\vspace{-1.5em}
\begin{proof}
	For every $i \in [\ell]$, we define event $\elong^i$ as
	\[
	\elong^i \coloneqq \left[ \text{$\exists k \in \N^{\ell}$ s.t. $ k_i > \tau/4$ and $\vidx^{\vk}$ exists} \right].
	\]
	Then we can see $\elong = \bigcup_{i=1}^{\ell} \elong^{i}$.
	
	In the following, we will show that for each $i \in [\ell]$, $\Pr[ \elong^i]$ is small. Now we fix $i \in [\ell]$, suppose there exists $\vk \in \N^{\ell}$ such that $\vidx^{\vk}$ exists and $k_i > \tau / 4$. We are going to fix $(r_{\le i -1},g_{\le i -1}) \in \supp((\br_{\le i -1},\bg_{\le i -1}))$ and conditioning on the event $r_{\le i -1} \land g_{\le i -1}$.
	
	Now, $\vidx^{\vk}$ exists and $k_i > \tau / 4$. Let $\vk' = (0,\dots,0,0,k_{i + 1},\dots,k_{\ell})$. This implies, there exists a starting point $s_0 \in [n]$, such that consider the walk $\walk^{\vk^1, \vk^2}(s_0,i,\vk')$, it visits at least $\tau/4$ level-$i$ nodes with corresponding vertices $\{x_j\}_{j \in [\tau/4]}$, which can be determined by $x_{j + 1} = \walk(s_j,i-1)$ based on $r_{\le i -1},g_{\le i -1}$. 

	\highlight{Case 1: $[\exists j \in [i + 1, \ell], k_j \not= k^2_j] \lor [\forall j \in [i + 1, \ell], k^1_j = k^2_j] \lor [k^2_i = 0]$.} In this case, we know $C_0 \gets \emptyset$. Then $\walk^{\vk^1, \vk^2}(s_0, i, \vk')$ visits the same vertices $x_j$ as $\walk(s_0, i - 1)$. With the same argument as Lemma~\ref{lem:elong-small}, for each $i \in [\ell]$, we know probability of existing such $\vidx^{\vk}$ is bounded by $n /2^{\tau/4}$. 

	\highlight{Case 2: $[\forall j \in [i + 1, \ell], k_j = k^2_j] \land [\exists j \in [i + 1, \ell], k^1_j \not= k^2_j] \land [k^2_i > 0]$.} In this case, $C_0 = C^{i, \vk^1}$ where $C^{i, \vk^1}$ is $\{\a_i(\mu)\}$ for all level $i$ nodes $\mu^{\vk^1}_{i, j}$ ($1 \leq j \leq k^1_i$) visited by $\walk(\bar s, i)$ for some starting point $\bar s$. Note here we use $\walk(\bar s, i)$ since these nodes are not on $p^*(\vk^2)$ and therefore falls into the previous case where $\walk^{\vk^1, \vk^2}$ visits the same vertices as $\walk$. 

	Suppose the vertices corresponds to the first $k^1_i$ level $i$ nodes $\walk(\bar s, i)$ visits are $\bar \mu_1, \bar \mu_2, \dots, \bar \mu_{k^1_i}$. Each $\x(\bar \mu_j)$ is determined by $\x(\bar \mu_j) = \last(\nex(\bar \mu_{j - 1}), i - 1)$ where $\nex(\bar \mu_0) = \bar s$. Fixing $r_{\leq i -1}, g_{\leq i - 1}$, from $\bar s$, we can uniquely determine $C^{i,\vk^1}$ by making $k^1_i$ adaptive queries to $g_i, r_i$.

	Then after determine $C^{i, \vk^1}$, for $\walk^{\vk^1, \vk^2}(s_0, i, \vk')$, we can determine each $\a_i(\mu^{\vk}_{i, j})$, $j \in [\tau/4]$ similarly by making $\tau / 4$ adaptive queries to $g_i, r_i$, and by definition they are distinct from those in $C^{i, \vk^1}$. 

	Note that $\bg_i,\br_i$ is independent of $(\bg_{\le i -1},\br_{\le i-1})$. Hence fixing $\bar s_0 \in [n]$ and $s_0 \in [n]$, we have that
	\[
	\Pr[g_i(a_i(\mu^{\vk}_{i,j})) = 1 \text{ for all $j \in [\tau/4]$}] \le 2^{-\tau/4}.
	\]
	
	Hence, by a union bound, we have that
	\[
	\Pr[\elong^i] \le n^2/2^{\tau/4}.
	\]	
	The lemma follows from another union bound over $i \in [\ell]$.
\end{proof}
\end{document}